\newtheorem{theorem}{Theorem}[section]
\newtheorem{corollary}[theorem]{Corollary}
\newtheorem{lemma}[theorem]{Lemma}
\newtheorem{observation}[theorem]{Observation}
\newtheorem{proposition}[theorem]{Proposition}
\theoremstyle{definition}
\newtheorem{definition}[theorem]{Definition}
\newtheorem{conjecture}[theorem]{Conjecture}
\theoremstyle{remark}
\newtheorem{remark}[theorem]{Remark}
\definecolor{mypink}{RGB}{199, 21 133}
\let\oldabstract\abstract
\let\oldendabstract\endabstract
\renewenvironment{abstract}
{%
  {\list{}{\addtolength{\leftmargin}{-0.4em}%
    \listparindent 1.5em%
     \setstretch{0.95}
     \itemindent    \listparindent%
     \rightmargin   \leftmargin%
     \parsep        \z@ \@plus\p@}%
     \item\relax}%
  {\endlist}%
\oldabstract}
{\oldendabstract}
\newcommand{\emptystring}{\varepsilon}
\newcommand{\revstr}[1]{#1^{R}}
\newcommand{\dol}{{\rm \$}}
\newcommand{\hash}{\text{\#}}
\newcommand{\Occ}{{\rm Occ}}
\newcommand{\match}[1]{\overline{#1}}
\newcommand{\rhs}[1]{{\rm rhs}(#1)}
\newcommand{\rhsgen}[2]{{\rm rhs}_{#1}(#2)}
\renewcommand{\exp}[1]{{\rm exp}(#1)}
\newcommand{\expgen}[2]{{\rm exp}_{#1}(#2)}
\newcommand{\langgen}[2]{L_{#1}(#2)}
\newcommand{\bigO}{\mathcal{O}}
\newcommand{\dd}{\mathinner{.\,.}}
\newcommand{\algname}[1]{\text{\sc #1}}
\newcommand{\Zp}{\mathbb{Z}_{>0}}
\newcommand{\Zn}{\mathbb{Z}_{\geq 0}}
\newcommand{\Pts}{\mathcal{P}}
\newcommand{\RNA}{\mathrm{RNA}}
\newcommand{\WRNA}{\mathrm{WRNA}}
\DeclareMathOperator{\polylog}{polylog}
\DeclareMathOperator{\poly}{poly}
\begin{document}

\title{Grammar Boosting: A New Technique for Proving Lower Bounds\\
  for Computation over Compressed Data
}
\author{
  \normalsize Rajat De\\[-0.2ex]
  \normalsize Stony Brook University\\[-0.2ex]
  \normalsize \texttt{rde@cs.stonybrook.edu}
  \and
  \normalsize Dominik Kempa\\[-0.2ex]
  \normalsize Stony Brook University\\[-0.2ex]
  \normalsize \texttt{kempa@cs.stonybrook.edu}
}

\date{\vspace{-0.5cm}}
\maketitle

\begin{abstract}
  Computation over compressed data is a new paradigm in the design of
  algorithms and data structures, that can reduce the space usage
  and speed up the computation by orders of magnitude. One of the
  most frequently employed compression frameworks, capturing many
  practical compression methods (such as the Lempel--Ziv family,
  dictionary methods, and others) is grammar compression. In
  this framework, a string $T$ of length $N$ is represented as a
  context-free grammar of size $n$ whose language contains only
  the string $T$. In this paper, we focus on studying the limitations of
  these techniques.
  Previous work focused on proving lower bounds for algorithms
  and data structures operating over grammars constructed using
  algorithms that achieve the approximation ratio $\rho =
  \bigO(\polylog N)$ (since finding the smallest such grammar is
  NP-hard, every polynomial time grammar compressor can be viewed as
  an approximation algorithm).  Unfortunately, for the majority of
  grammar compressors, $\rho$ is either unknown or satisfies $\rho =
  \omega(\polylog N)$: In their seminal paper, Charikar et al. [IEEE
  Trans. Inf. Theory 2005] studied seven popular grammar compression
  algorithms: \algname{RePair}, \algname{Greedy},
  \algname{LongestMatch}, \algname{Sequential}, \algname{Bisection},
  \algname{LZ78}, and \algname{$\alpha$-Balanced}. Only one of them
  (\algname{$\alpha$-Balanced}) is known to achieve $\rho =
  \bigO(\polylog N)$.

  In this paper, we develop the first technique for proving lower
  bounds for data structures and algorithms on grammars
  that is fully general and does not depend on the approximation
  ratio $\rho$ of the used grammar compressor.
  Our first set of results concerns \emph{compressed data
  structures}. In 2013, Verbin and Yu proved that implementing
  random access to $T$ using a grammar constructed by an algorithm
  with $\rho = \bigO(\polylog N)$ requires $\Omega(\log N / \log \log
  N)$ time in the worst case. This lower bound applies to any
  structure using $\bigO(n \polylog N)$ space and matches the existing
  upper bounds. We prove that this lower bound holds also for
  \algname{RePair}, \algname{Greedy}, \algname{LongestMatch},
  \algname{Sequential}, and \algname{Bisection}, while $\Omega(\log
  \log N)$ time is required for random access to \algname{LZ78}. Our
  lower bounds apply to any structure using $\bigO(n \polylog N)$
  space and match the existing upper bounds. Moreover, our technique
  generalizes to
  classes of algorithms. Most notably, we tackle the notoriously hard
  to analyze class of \emph{global} algorithms (that includes, e.g.,
  the \algname{RePair} algorithm) and show that the lower bound
  $\Omega(\log N / \log \log N)$ applies to the whole class. This
  makes a significant step forward in a long-standing open problem of
  analyzing global algorithms; in the words of Charikar et al.:
  ``Because they are so natural and our understanding is so
  incomplete, global algorithms are one of the most interesting topics
  related to the smallest grammar problem that deserves further
  investigation.''

  Our second set of results concerns \emph{compressed computation},
  i.e., computation that runs in time that depends on the size of the
  input in compressed form. Recently, Abboud, Backurs, Bringmann, and
  Künnemann [FOCS 2017 and NeurIPS 2020] proved numerous limitations
  of compressed computation under popular conjectures (such as SETH,
  $k$-Clique, $k$-OV, and $k$-SUM). Similarly as above, however, their
  framework also displays a dependence on $\rho$. For example, their
  results imply that, assuming the Combinatorial $k$-Clique
  Conjecture, there is no combinatorial algorithm to solve CFG Parsing
  (for which the best classical algorithm has a time complexity of
  $\bigO(N^3)$) on grammars constructed using \algname{Bisection}
  (which satisfies $\rho = \widetilde{\Theta}(N^{1/2})$) that runs in
  $\bigO(n^3 \cdot N)$ or $\bigO(n^{3/2} \cdot N^2)$ time. The same
  is not known, however, for an algorithm running in $\bigO(n^5 \cdot N)$
  or $\bigO(n^3 \cdot N^2)$ time. Using our new techniques, we improve
  these and other conditional lower bounds. For example, for the CFG
  parsing on \algname{Bisection}, we rule out an algorithms with runtime
  $\bigO(n^c \cdot N^{3-\epsilon})$ for \emph{all}
  constants~$c>0$~and~$\epsilon>0$.
\end{abstract}

\thispagestyle{empty}
\clearpage
\pagenumbering{arabic}

\section{Introduction}\label{sec:intro}

Modern applications produce textual data at a rate not seen before.
During 2004--2015, the cost of sequencing the DNA of a single person
has decreased from \$20 million to around \$1000, i.e., by a factor of
$2 \times 10^{5}$~\cite{hernaez2019genomic}. This resulted in projects
like 100,000 Genome Project~\cite{100k}, that during 2013-2018,
produced around 75 terabytes of text. The efforts to sequence even
larger populations are now underway, e.g., in 2018, 26 countries
started the ongoing 1+ Million Genomes Initiative~\cite{mg}.
It is predicted that genomics research will generate between 2 and 40
exabytes of data within the next decade~\cite{estimate,stephens2015big}.
Other sources of massive textual datasets include versioned text documents
(such as Wikipedia) and source code
repositories~(such~as~Github)~\cite{NavarroIndexes,resolution}.

This explosion of data has not been matched by the corresponding
increase in computational power. One ray of hope in being able to
handle such massive datasets is that they are highly
repetitive~\cite{Przeworski2000,AGC,BergerDY16,
GreenfieldWH2019,NavarroIndexes}. This has been the driving force
behind the development of \emph{compressed algorithms and data
structures}~\cite{navarrobook,NavarroMeasures,NavarroIndexes}, which
combine aspects of information theory, lossless data compression, and
combinatorial pattern matching, to perform various queries or even run
complex computation directly on data in compressed
form~\cite{Gagie2020,resolution,attractors,AbboudBBK17}.

One of the most general frameworks for storing highly repetitive
strings is \emph{grammar compression}~\cite{KiefferY00,Charikar05,Rytter03}, in
which we represent a string using a \emph{straight-line program
(SLP)}, i.e., a context-free grammar, whose language contains only
the input string. On the one hand, this framework is easy to work
with and can succinctly encode even complex structure of
repetitions. On the other hand, it comes with
solid mathematical foundations -- as shown
in~\cite{Charikar05,Rytter03,GNPlatin18,attractors,resolution,%
delta,RaskhodnikovaRRS13,KempaS22}, grammar compression is up to logarithmic factors
equivalent to LZ77~\cite{LZ77}, LZ-End~\cite{kreft2010navarro},
RLBWT~\cite{BWT}, macro schemes~\cite{storer1978macro}, collage
systems~\cite{collage}, string attractors~\cite{attractors}, and
substring complexity~\cite{RaskhodnikovaRRS13,delta}, and it is at least as powerful as
automata~\cite{blumer1987complete}, Byte-Pair~\cite{gage1994new}, and
several other LZ-type compressors~\cite{LZ78,LZW,LZSS,LZFG}. For this
reason, grammar compression has been a very popular framework in
numerous previous studies. This includes pattern matching~\cite{Jez2015,GanardiG22,AbboudBBK17,Gawrychowski11,%
  BringmannWK19,CKW20,AbboudBBK17},
sequence similarity~\cite{Tiskin15,HermelinLLW13,AbboudBBK17,GaneshKLS22},
context-free grammar (CFG) parsing, RNA folding, disjointness~\cite{AbboudBBK17}, and
compressed linear algebra~\cite{AbboudBBK20,FerraginaMGKNST22}.
Grammars are also the
key component in algorithms converting between different compressed
representations~\cite{resolution,avl}. We refer to
surveys in~\cite{NavarroIndexes,NavarroMeasures,Lohrey12} and
discussion~in~\cite{AbboudBBK17,Gagie2020,attractors,resolution}~for~more~details.

The central component in many of the above applications, and a useful
structure on its own, is a \emph{compressed index} -- a data structure
requiring small space (close to the size of SLP representing the text)
that supports various queries over the underlying (uncompressed)
text. Nowadays, SLP-based indexes supporting
random-access~\cite{BLRSRW15,balancing,blocktree,attractors},
rank/select~\cite{Prezza19,PereiraNB17,DCC2015,blocktree},
LCE~\cite{tomohiro-lce,dynstr,NishimotoMFCS}, pattern
matching~\cite{ClaudeN11,ClaudeN12a,ClaudeNP21,
GagieGKNP14,GagieGKNP12,Diaz-DominguezN21,%
ChristiansenEKN21,KociumakaNO22,dynstr}, as well as various
spatio-temporal geometric queries~\cite{BrisaboaGNP16,BrisaboaGMP18}
are available. Despite these advances, our understanding of lower bounds
on SLP-compressed indexes remains an open problem, \ul{even for random
access} (the most basic query and a building block for more complex
queries).
\begin{itemize}[itemsep=0pt,parsep=0pt]
\item On the one hand, Bille et al.~\cite{BLRSRW15} proved that for
  any grammar compression algorithm that reduces a length-$N$ string
  $T$ into a representation of size $n$, we can build a structure of
  size $\bigO(n)$ allowing decoding of any symbol of $T$ in
  $\bigO(\log N)$ time. The latter result has recently been
  generalized by Ganardi et al.~\cite{balancing}, who proved that for
  any size-$n$ SLP encoding a length-$N$ string, there exists a
  size-$\bigO(n)$ SLP encoding the same string but with height
  $\bigO(\log N)$.  At the cost of increasing the space by a
  $\bigO(\log^{\epsilon} n)$ factor, it is possible to reduce the
  query time to $\bigO(\log N / \log \log N)$~\cite{blocktree}. To
  complement this, Verbin and Yu, proved that for every algorithm that
  achieves an $\bigO(\polylog N)$ approximation ratio\footnote{Finding
  the smallest grammar encoding of a given string is
  NP-hard~\cite{Charikar05}. Thus, every polynomial time grammar compression
  algorithm can be viewed as the approximation algorithm for the
  \emph{smallest grammar problem}~\cite{Charikar05}.}, one cannot
  access symbols of $T$ faster than $\bigO(\log N / \log \log N)$ time
  using a representation of size $\bigO(n \polylog
  N)$~\cite{VerbinY13}.\footnote{Verbin and Yu~\cite{VerbinY13}
  formulate this equivalently as stating that for any \emph{universal}
  data structure (i.e., working for every grammar compressor),
  one cannot achieve $o(\log N / \log \log N)$ query time
  in $\bigO(n \polylog N)$ space.}
\item On the other hand, for grammar compression algorithms like
  \algname{LZ78}~\cite{LZ78}, in $\bigO(n)$ space it is possible to
  implement random access in $\bigO(\log \log N)$~\cite{DuttaLRR13}.
  Consistent with the bound of Verbin and Yu, LZ78 achieves an
  approximation ratio of
  $\widetilde{\Omega}(n^{2/3})$~\cite{Charikar05,BannaiHHIJLR21}.
\end{itemize}

The above situation suggests that there exists a trade-off between the
approximation ratio and the time required for random access. This,
however, leads to two very serious issues:
\begin{enumerate}[itemsep=0pt,parsep=0pt]
\item
  The above techniques do not say anything about lower bounds on
  random access to grammars computed using algorithms with
  $\omega(\polylog N)$ approximation factor. This is problematic,
  since the majority of
  practical grammar compressors are in this category: Charikar et
  al.~\cite{Charikar05} prove that
  \algname{Sequitur}~\cite{sequitur},
  \algname{Sequential}~\cite{sequential},
  \algname{Bisection}~\cite{bisection1,bisection2},
  \algname{LZ78}~\cite{LZ78},
  \algname{LZW}~\cite{LZW} all achieve $\Omega(N^{\epsilon})$ ratio
  (for some constant $\epsilon > 0$). Badkobeh et
  al.~\cite{BadkobehGIKKP17}
  prove~analogous~bound~for~\algname{LZD}~\cite{lzd}.
\item
  Even worse, for many grammar compressors, we do not know their
  approximation ratio. This includes \algname{Greedy}~\cite{greedy1,greedy2,greedy3},
  \algname{LongestMatch}~\cite{longestmatch}, and 
  \algname{RePair}~\cite{repair} -- the last being one of the most practical and
  widely studied compressors~\cite{BilleGP17,%
  OchoaN19,GagieIMNST19,LohreyMM13,MienoIH22,FuruyaTNIBK19,GanczorzJ17}
  which ``($\dots$) consistently outperforms other grammar-based
  compressors, including those that offer theoretical guarantees of
  approximation.''~\cite{NavarroMeasures}
\end{enumerate}

With current techniques, proving lower bounds for algorithms like
\algname{RePair} appears to be a hopeless task, since \algname{RePair}
has resisted all attempts to prove an upper bound on its approximation
ratio for over 20 years. Given this situation, we ask:

\vspace{-0.0ex}
\setlength{\FrameSep}{1.7ex}
\begin{framed}
\noindent
\textbf{Problem 1.} \emph{Can we prove lower bounds for data structures based on
  grammar compressors without first establishing their approximation ratio?}
\end{framed}
\vspace{-0.0ex}

Another application of data compression in the design of algorithm is
\emph{compressed computation}, where the goal is to develop algorithms
whose runtime depends on the size of the input in the compressed
form. For example, Tiskin~\cite{Tiskin15} developed an algorithm that,
given two strings $S_1 \in \Sigma^{N}$ and $S_2 \in \Sigma^{N}$, both
in grammar-compressed form of total size $n$, computes the longest
common subsequence $LCS(S_1, S_2)$ in $\widetilde{\bigO}(n \cdot N)$
time.  For highly compressible strings (e.g., when $n =
\bigO(N^{1/2})$), this is a significant improvement over the currently
best general algorithm for LCS that runs in $\bigO(N^2)$
time (and is unlikely to be improved to
$\bigO(N^{2-\epsilon})$ due to the recent conditional lower
bound~\cite{AbboudBW15} based on the Strong Exponential Time
Hypothesis (SETH)).

Abboud, Backurs, Bringmann, and
Künnemann~\cite{AbboudBBK17,AbboudBBK20} recently asked whether
algorithms like the above LCS algorithm can be improved, i.e., is the
algorithm running in $\bigO((n \cdot N)^{1-\epsilon})$ time
achievable. They proved that under popular
hardness assumptions such as SETH, $k$-OV, $k$-Clique, or $k$-SUM,
the currently best compressed algorithms for several problems are optimal.
In particular, they showed that unless SETH fails, there is no algorithm for LCS that runs
in $\bigO((n \cdot N)^{1 - \epsilon})$ time, for any constant
$\epsilon > 0$. They proved similar conditional lower bounds for CFG
parsing, RNA folding, matrix-vector multiplication, inner product,
and~several~other~problems.

Similarly as the lower bound of Verbin and Yu, however, the techniques
in~\cite{AbboudBBK17,AbboudBBK20} exhibit a dependence on the
approximation ratio $\rho$ of the grammar compressor used to obtain
the input grammar. For example, for the CFG parsing problem, where
given a CFG $\Gamma$ and a string $S \in \Sigma^{N}$, the goal is the
check if $S \in L(\Gamma)$, the currently best combinatorial algorithm
runs in $\widetilde{\bigO}(N^3)$
time~\cite{cocke1969programming,younger1967recognition,kasami1966efficient}
(for simplicity, we assume $|\Gamma| = \widetilde{\bigO}(1)$).
Abboud, Backurs, Bringmann, and Künnemann~\cite{AbboudBBK17} proved
that unless the Combinatorial $k$-Clique Conjecture fails,
for $S$ constructed using grammar compressors with $\rho =
\bigO(\polylog N)$, there is no algorithm running
$\bigO(\poly(n) \cdot N^{3 - \epsilon})$ time. However, for larger
$\rho$, e.g., $\rho = \Theta(N^{\alpha})$, this technique
excludes only the algorithms running in $\bigO(n^c \cdot N^{3 -
  \epsilon})$, where $c < \epsilon / \alpha$
(\cref{cor:cfg-lower-bound-simple-generalization}). For example, if
$\alpha = 1/2$, then there is no combinatorial algorithm running in
$\bigO(n^3 \cdot N)$ or $\bigO(n^{3/2} \cdot N^2)$, but it leaves open
whether there is an algorithm running in
$\bigO(n^5 \cdot N)$ of $\bigO(n^3 \cdot N^2)$ time.
We~thus~ask:

\vspace{-1.0ex}
\setlength{\FrameSep}{1.7ex}
\begin{framed}
\noindent
\textbf{Problem 2.}
  \emph{What are the limitations for compressed computation on grammars
  obtained using algorithms with large or unknown approximation
  ratios? Can we prove such lower bounds without first establishing
  those approximation ratios?}
\end{framed}
\vspace{-3.0ex}

\paragraph{Our Results}

We present a new technique for proving lower bound on
grammar-compressed strings called \emph{Grammar Boosting}, that does
not require any knowledge about the approximation ratio of the
algorithm, and lets us answer both of the above questions.

\vspace{1ex}
\noindent
\emph{New Lower Bounds for Data Structures.} We prove
that the lower bound of $\Omega(\log N / \log \log N)$ applies to
nearly all of the classical and commonly used grammar compressors,
including: \algname{RePair}~\cite{repair},
\algname{Greedy}~\cite{greedy3,greedy1,greedy2},
\algname{LongestMatch}~\cite{longestmatch},
\algname{Sequitur}~\cite{sequitur},
\algname{Sequential}~\cite{sequential},
\algname{Bisection}~\cite{bisection1,bisection2}, and
\algname{LZD}~\cite{lzd}. No lower bounds for random access on either
of these grammars were known before. Our bound applies to any structure whose
space is $\bigO(n \polylog N)$ (where $n$ is the output size of any of
the above algorithms), i.e., it is always as strong as
the bound of Verbin and Yu~\cite{VerbinY13}.
This proves that there exist
algorithms with $\omega(\polylog N)$ approximation ratio that require
$\Omega(\log N / \log \log N)$ time for random access. This establishes
the first query separation between algorithms like \algname{Sequitur}
or \algname{LZD}, and \algname{LZ78} (which admits a random access
solution with $\bigO(\log \log N)$ query time~\cite{DuttaLRR13}). As an
auxiliary result, we show (via a reduction from the \emph{colored predecessor
problem}~\cite{PatrascuT06}) that random access to \algname{LZ78} in
$\bigO(\log \log N)$ time is in fact optimal within near-linear
space (i.e., $\bigO(n \polylog N)$), which is the case
in~\cite{DuttaLRR13}.

Our technique applies not only to individual algorithms, but is able
to capture an entire class. Specifically, we show that the lower bound
$\Omega(\log N / \log \log N)$ holds for all \emph{global
algorithms}~\cite{Charikar05} (which includes \algname{RePair},
\algname{Greedy}, and \algname{LongestMatch}).
This makes a significant step forward in a long-standing open problem of
postulated by Charikar et al.~\cite{Charikar05}:
``Because they are so natural and our understanding is so
incomplete, global algorithms are one of the most interesting topics
related to the smallest grammar problem
that deserves further investigation.''

The key idea in the framework and Verbin and Yu~\cite{VerbinY13} is to
prove that given any collection $\Pts$ of $n$ points on an $n \times
n$ grid, we can construct a string $A(\Pts)$ (called the
\emph{answer string}; see \cref{def:answer-string}) of length
$|A(\Pts)| = \Theta(n^2)$
that encodes
answers to all possible parity range counting queries~\cite{Patrascu07} on
$\Pts$, and has a grammar of size $\bigO(n \polylog n)$.  Since
answering %
such
queries in $\bigO(n \polylog n)$ space
requires $\Omega(\log n / \log \log n)$ time~\cite{Patrascu07}
(\cref{th:parity-range-counting-lower-bound}), any universal %
structure that for a grammar $G$ encoding $T \in \Sigma^N$ takes
$\bigO(n \polylog N)$ space, must thus take $\Omega(\log N / \log \log
N)$ time~for~access.

Let \algname{Alg} be some fixed grammar compression algorithm. The
issue with applying the above idea to \algname{Alg}
is that it would require proving a
bound the size of the output of \algname{Alg} on $A(\Pts)$, which for,
e.g., \algname{RePair} can be very difficult. We instead prove
that for any string $T$ for which there \emph{exists} an SLP of size
$n$ that encodes $T$, we can construct a string $T'$ such that:
\begin{enumerate}[label=(\alph*),itemsep=0pt,parsep=0pt]
\item\label{intro:it-1}
  The length of $T'$ is polynomial in the length of $T$,
\item\label{intro:it-2}
  $T$ can be quickly identified within $T'$, e.g., $T[j] =
  T'[\alpha + \beta \cdot j]$ for some $\alpha \geq 0$ and $\beta > 0$,
\item \algname{Alg} compresses $T'$ into size $\bigO(n)$. Note that
  this does not require proving anything about the compression of $T$
  by \algname{Alg}.
\end{enumerate}

In other words, we take a string $T$ having a small grammar $G$
and ``boost'' the performance of \algname{Alg} by presenting $T$ in a
well-structured form of $T'$ so that $\algname{Alg}$ compresses the
string $T'$ into size $\bigO(|G|)$.  This still lets us utilize the
reduction from parity range counting queries~\cite{Patrascu07}, since
by \cref{intro:it-1}, $\log |T'| = \Theta(\log |T|)$, and by
\cref{intro:it-2}, accessing symbols of $T$ via $T'$ does not incur
any time penalty. To construct $T'$, we typically first define an
auxiliary grammar $G'$ with the set of nonterminals similar to $G$,
but including special sentinel symbols identifying the nonterminals.
The string $T'$ is then defined by listing expansions of all nonterminals
of $G'$ in the order of nondecreasing length, repeating each expansion
twice, optionally separating with additional sentinel symbols
(\cref{def:alpha,def:beta}). The crux of the analysis is to show that for any
of the algorithms we studied, such structuring forces the algorithm to
compress the string in a specific way. For global algorithms
(\cref{sec:global}) this is particularly hard, since
their behavior is not very well understood.
We manage, however, to
fully characterize a class of all intermediate grammars that global
algorithms can reach during processing of $T'$
(\cref{def:bexp,def:bigG}), and in a series of lemmas prove that there
is only one possible final grammar (\cref{lm:exp-bigG,lm:dollars,%
lm:one-dollar,lm:dollar-on-rhs,lm:bexp-occ,lm:maximal-string,lm:nonov-s,%
lm:sub,lm:maximal-substring-existence,lm:one-step,lm:iso,%
lm:global-output-grammar,lm:global-output-size}). We present more
details in the Technical Overview (\cref{sec:technical-overview}).  As
a result, we obtain a series of lower bounds stated in
\cref{th:global,th:sequitur,th:sequential,th:lzd,th:bisection}.  In a
single theorem, we can summarize it as follows.

\begin{theorem}\label{th:main}
  Let \algname{Alg} be any global algorithm (e.g., \algname{RePair},
  \algname{Greedy}, or \algname{LongestMatch}), or one of the
  following algorithms: \algname{Sequitur}, \algname{Sequential},
  \algname{Bisection}, or \algname{LZD}. For any string $T$, let
  $\algname{Alg}(T)$ denote the output of \algname{Alg} on $T$.  In
  the cell-probe model, there is no data structure that, for every
  string $T$ of length $N$, achieves $\bigO(|\algname{Alg}(T)|
  \log^{c} N)$ space $($where $c = \bigO(1))$ and implements random
  access queries to $T$ in $o(\log N / \log \log N)$ time.
\end{theorem}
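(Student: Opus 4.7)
The plan is to derive the theorem by combining two ingredients: the Verbin--Yu style reduction from parity range counting to random access on a grammar-compressed string, and a new ``grammar boosting'' construction tailored to each of the algorithms in the statement. For the first ingredient, recall from the framework of Verbin and Yu~\cite{VerbinY13} that given $n$ points $\Pts$ on an $n \times n$ grid, the answer string $A(\Pts)$ of length $\Theta(n^2)$ encodes the answers to all parity range counting queries on $\Pts$ and admits an SLP of size $\bigO(n \polylog n)$. Combined with the $\Omega(\log n / \log \log n)$ cell-probe lower bound for parity range counting in near-linear space, any random-access structure on $A(\Pts)$ using $\bigO(n \polylog N)$ space must require $\Omega(\log N / \log \log N)$ query time, provided we can realize such a small grammar on an input that the specific algorithm \algname{Alg} happens to compress well.

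The core of the argument -- the grammar boosting step -- is to show that for each \algname{Alg} listed in the theorem, given any string $T$ of length $N$ for which there \emph{exists} an SLP $G$ of size $n$, one can construct a longer boosted string $T'$ satisfying properties~\ref{intro:it-1}--\ref{intro:it-2} together with $|\algname{Alg}(T')| = \bigO(n)$. Given such a construction, I would feed $T := A(\Pts)$ into the boosting procedure, obtain $T'$ on which \algname{Alg} outputs a grammar of size $\bigO(n \polylog n)$, and observe that any random-access data structure on $\algname{Alg}(T')$ of size $\bigO(|\algname{Alg}(T')| \polylog |T'|)$ also solves access to $A(\Pts)$ within the same space and asymptotic time, since by~\ref{intro:it-2} an access to $T$ reduces to a single access to $T'$ at a precomputed address $\alpha + \beta j$. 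Because~\ref{intro:it-1} gives $|T'| = \poly(N) = \poly(n)$, logarithmic factors in $|T'|$ are absorbed into $\polylog n$, and the parity range counting lower bound transfers verbatim, yielding the desired $\Omega(\log N / \log \log N)$ bound for every \algname{Alg} in the statement.

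The main technical obstacle, and the heart of the paper, will be establishing the compressibility condition $|\algname{Alg}(T')| = \bigO(n)$ separately for each algorithm. The template I would follow is to define $T'$ from an auxiliary grammar $G'$ whose nonterminals mirror those of $G$ but are augmented with carefully chosen sentinel separators, and then list the expansions of these nonterminals in order of nondecreasing length with each expansion repeated twice, in the spirit of \cref{def:alpha,def:beta}. For the algorithms with a relatively clean local or hierarchical behavior (\algname{Sequitur}, \algname{Sequential}, \algname{Bisection}, \algname{LZD}) I expect that the sentinels force the algorithm to factor out exactly the intended repeated blocks, so an induction over the nonterminal length layers yields the $\bigO(n)$ bound and the statements \cref{th:sequitur,th:sequential,th:bisection,th:lzd} follow.

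The genuinely hard case is the class of global algorithms such as \algname{RePair}, \algname{Greedy}, and \algname{LongestMatch}, whose intermediate behavior is notoriously difficult to pin down and whose approximation ratio has resisted analysis for over two decades. Here the plan is to precisely characterize, at each step, the set of intermediate grammars reachable by \emph{any} global algorithm when started from $T'$ (captured by \cref{def:bexp,def:bigG}), and then to prove through a sequence of structural lemmas (\cref{lm:exp-bigG,lm:dollars,lm:one-dollar,lm:dollar-on-rhs,lm:bexp-occ,lm:maximal-string,lm:nonov-s,lm:sub,lm:maximal-substring-existence,lm:one-step,lm:iso,lm:global-output-grammar,lm:global-output-size}) that only one final grammar can emerge, and that it has size $\bigO(n)$; the sentinel structure of $T'$ must be shown to make every locally optimal replacement choice globally consistent with this unique target. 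Once this step is completed, \cref{th:global} follows, and combined with the per-algorithm arguments of the previous paragraph it gives \cref{th:main} uniformly over all algorithms listed.
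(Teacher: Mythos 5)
Your plan is essentially the paper's own approach for the global algorithms, \algname{Sequitur}, \algname{Sequential}, and \algname{LZD}: apply P\u{a}tra\c{s}cu's parity range counting lower bound via the Verbin--Yu answer string, construct $T'$ from $T$ via the sentinel-decorated expansion listing, show $\algname{Alg}$ collapses $T'$ to size $\bigO(n)$, and observe that the boosting preserves $\log N$ up to constants and gives $\bigO(1)$-overhead access into $T$. You have also correctly identified the main difficulty: characterizing the intermediate grammars $\mathbb{G}$ reachable by any global algorithm and proving the final grammar is unique.

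However, your plan misfires on \algname{Bisection}. You group it with the sentinel-based argument, expecting ``the sentinels force the algorithm to factor out exactly the intended repeated blocks.'' But \algname{Bisection} splits the input purely by position at power-of-two boundaries, with no regard to content or repetition; the sentinel-decorated strings $w \in \alpha(G)$ or $w \in \beta(G)$ have nonterminal expansions of odd length $2|\expgen{G}{X}|-1$ (resp.\ $3|\expgen{G}{X}|-2$), so the dyadic cuts would not align with the block structure and there is no reason to expect a small output. The paper avoids this by \emph{not} using grammar boosting for \algname{Bisection} at all: it observes (\cref{ob:bisection}, \cref{lm:opt-dyadic}) that \algname{Bisection} is optimal among dyadic SLGs, and that the Verbin--Yu grammar $G_{\Pts}$ for $A(\Pts)$ is already dyadic, so $|\algname{Bisection}(A(\Pts))| \leq |G_{\Pts}| = \bigO(m \log m)$ directly on $A(\Pts)$ without any boosted intermediary. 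You would need to replace your \algname{Bisection} argument with this structural observation about the Verbin--Yu grammar, or show separately that a variant boosting construction respects power-of-two alignment.
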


\noindent
\emph{New Lower Bounds for Compressed Computation.} Our second result
is to demonstrate that the grammar boosting technique complements the
framework of Abboud, Backurs, Bringmann, and
Künnemann~\cite{AbboudBBK17,AbboudBBK20} for proving conditional lower
bounds for compressed computation.  We consider two problems: CFG
Parsing (defined above), and Weighted RNA Folding
(\cref{sec:rna-problem}). Among other results, the authors
of~\cite{AbboudBBK17} prove that unless the \emph{Combinatorial
$k$-Clique Conjecture} (asserting that a $k$-clique cannot be
combinatorially found in $\bigO(|V|^{k(1-\epsilon)})$ time for any
$\epsilon>0$) fails, these problems essentially require $\Omega(N^3)$
time, even for highly compressible inputs.  Here we extend these
results to numerous grammar compressors with large or unknown
approximation ratio $\rho$.  In particular, this includes
\algname{Sequential}, \algname{LZD}, and the class of global
algorithms.

Consider the CFG Parsing problem.  The key idea in the hardness proof
presented in~\cite{AbboudBBK17} is, given $k \geq 3$ and an undirected
graph $G = (V, E)$, to construct a small CFG $\Gamma$ and a highly
compressible string $S$ of length $|S| = \bigO(|V|^{k+2})$
(specifically, $S$ has a grammar of size $\bigO(|V|^3)$), such that
$G$ has $3k$-clique if and only if $S \in L(\Gamma)$. To adapt this
reduction to a grammar compression algorithm \algname{Alg}, we
construct a ``well structured'' string $S'$ and a CFG $\Gamma'$ such
that:
\begin{enumerate}[label=(\alph*),itemsep=0pt,parsep=0pt]
\item $|S'| = |V|^{k+\bigO(1)}$,
\item $S' \in L(\Gamma')$ if and only if $S \in L(\Gamma)$,
\item \algname{Alg} compresses $S'$ into size $|V|^{\bigO(1)}$.
\end{enumerate}
The construction of $S'$ for \algname{Sequential} is similar as for
the random-access problem above (the construction for
RNA folding is slightly more involved). The CFG $\Gamma'$ is obtained
from $\Gamma$ by ensuring that it ignores every second symbol as well as some
sufficiently long prefix of $S'$.  The details of these reductions are
presented in \cref{sec:cfg-parsing,sec:rna-folding}.  The nice
property of these extensions of~\cite{AbboudBBK17} is that they can
augment a conditional hardness proof essentially in a black-box manner
(in the case of CFG parsing, we only need
\cref{lm:kclique-small-slg}).

\vspace{-1ex}
\paragraph{Related Work}

While LZ77 and grammar-based compressed indexes are capable of
supporting fundamental queries such as random access, longest common
extension (LCE), rank/select, or even pattern matching, often required
is the more powerful functionality of suffix arrays~\cite{sa} and
suffix trees~\cite{st}. The underlying compression method that
supports such powerful queries is Run-Length Burrows--Wheeler
Transform (RLBWT)~\cite{BWT}. Gagie et al.~\cite{Gagie2020} recently
demonstrated that it is possible to efficiently support all these
queries using $\bigO(r \log n)$ space, where $r$ is the size of
RLBWT. On the other hand, Kempa and Kociumaka proved that for all
strings, it always holds $r = \bigO(\delta \log \delta \max(1,
\log \tfrac{n}{\delta \log \delta}))$~\cite{resolution} (where
$\delta$ is the \emph{substring complexity}~\cite{delta},
a measure closely related to Lempel-Ziv and grammar compression),
establishing a link between RLBWT-based
indexes, and LZ and grammar-based indexes. Nishimoto et
al.~\cite{nishimototabei} showed how to reduce the space
of~\cite{Gagie2020} for $LF$ and $\Phi^{-1}$ queries to
$\bigO(r)$. Significant efforts also went into the construction of
compressed indexes~\cite{OhnoSTIS18,nishimototabei2,resolution,%
breaking,PolicritiP17,dcc2017,Kempa19}, as well as making text
indexes dynamic~\cite{dynstr,dynsa,NishimotoDAM,NishimotoMFCS}.

\section{Preliminaries}\label{sec:prelim}

Let $w \in \Sigma^n$ be a \emph{string} (or \emph{text}) of length $n$
over \emph{alphabet} $\Sigma$. Denote $\sigma = |\Sigma|$. We index
strings starting from 1, i.e., $w = w[1]w[2] \cdots w[n]$. For $1 \leq
i \leq j \leq n$ we denote substrings of $w$ as $w[i \dd j]$ and by
$[i \dd j)$ we mean $[i \dd j-1]$. We denote the length of string $w$
as $|w|$. The concatenation of strings $u$ and $v$ is written as
$u\cdot v$ or $uv$, and the empty string is denoted $\emptystring$.
For every $u, v \in \Sigma^{*}$, we denote $\Occ(u, v) = \{i \in [1
\dd |v|] : i + |u| \leq |v| + 1 \text{ and }v[i \dd i + |u|) = u\}$.

A \emph{context-free grammar} (CFG) is a tuple $G = (V, \Sigma, R,
S)$, where $V$ is a finite set of \emph{nonterminals} (or
\emph{variables}), $\Sigma$ is a finite set of \emph{terminals}, and
$R \subseteq V \times (V \cup \Sigma)^*$ is a set of
\emph{productions} (or \emph{rules}). We assume $V \cap \Sigma =
\emptyset$ and $S \in V$. The nonterminal $S$ is called the
\emph{starting nonterminal}. Nonterminals in $V \setminus \{S\}$ are
called \emph{secondary}. If $(N, \gamma) \in R$ then we write $N
\rightarrow \gamma$. For $u, v \in (V \cup \Sigma)^*$ we write $u
\Rightarrow v$ if there exist $u_1, u_2 \in (V\cup \Sigma)^*$ and a
rule $N \rightarrow \gamma$ such that $u=u_1 N u_2$ and $v = u_1
\gamma u_2$. We say that $u$ \emph{derives} $v$ and write $u
\Rightarrow^* v$, if there exists a sequence $u_1, \ldots, u_k$, $k
\geq 1$ such that $u = u_1$, $v = u_k$, and $u_i \Rightarrow u_{i+1}$
for $1 \leq i < k$.  The \emph{language} of grammar $G$ is the set
$L(G) := \{w \in \Sigma^* \mid S \Rightarrow^* w\}$.

A grammar $G = (V, \Sigma, R, S)$ is called a \emph{straight-line
grammar} (SLG) if for any $N \in V$ there is exactly one production
with $N$ on the left side, and there exists a linear order $\prec$ on $V$
such that for every $X, Y \in V$, $Y$ occurring in $\rhsgen{G}{X}$
implies $X \prec Y$.
The unique
$\gamma$ such that $N \rightarrow \gamma$ is called the
\emph{definition} of $N$ and denoted $\rhsgen{G}{N}$. If $G$ is clear
from the context, we simply write $\rhs{N}$.  In any SLG, for any $u
\in (V \cup \Sigma)^*$ there exists exactly one $w \in \Sigma^*$ such
that $u \Rightarrow^* w$. We call such $w$ the \emph{expansion} of
$u$, and denote $\expgen{G}{u}$ (or simply $\exp{u}$ when $G$ is
clear).
Note that for any
SLG $G$, $L(G) = \{\expgen{G}{S}\}$.

We say that two SLGs $G_1 = (V_1, \Sigma, R_1, S_1)$ and $G_2 = (V_2,
\Sigma, R_2, S_2)$ are \emph{isomorphic}, if there exists a bijection
$f: V_1 \cup \Sigma \rightarrow V_2 \cup \Sigma$ such that
\vspace{-0.5ex}
\begin{itemize}[itemsep=0pt,parsep=0pt]
\item $f(S_1) = S_2$,
\item For every $c \in \Sigma$, $f(c) = c$,
\item For every $N_1 \in V_1$, letting $N_2 = f(N_1)$, $S_1 =
  \rhsgen{G_1}{N_1}$, and $S_2 = \rhsgen{G_2}{N_2}$, it holds that
  $|S_1| = |S_2|$, and for every $j \in [1 \dd |S_1|]$,
  $S_2[j] = f(S_1[j])$.
\end{itemize}
If $G_1$ is isomorphic to $G_2$, then $|V_1| = |V_2|$.
Moreover, for every $N_1 \in V_1$,
$\expgen{G_1}{N_1} = \expgen{G_2}{f(N_1)}$. In particular, $L(G_1) =
\{\expgen{G_1}{S_1}\} = \{\expgen{G_2}{f(S_1)}\} =
\{\expgen{G_2}{S_2}\} = L(G_2)$.

Let $G = (V, \Sigma, R, S)$ be an SLG. We define the \emph{parse tree}
of $A \in V \cup \Sigma$ as a rooted ordered tree $\mathcal{T}_{G}(A)$
(we omit $G$, whenever it is clear from the context), where each node
$v$ is associated to a symbol $s(v) \in V \cup \Sigma$. The root of
$\mathcal{T}(A)$ is a node $\rho$ such that $s(\rho) = A$. If $A \in
\Sigma$ then $\rho$ has no children. If $A \in V$ and $\rhs{A} = B_1
\cdots B_k$, then $\rho$ has $k$ children and the subtree rooted at
the $i$th child is (a copy of) $\mathcal{T}(B_i)$. The parse tree
$\mathcal{T}(G)$ of $G$ is defined as the parse tree $\mathcal{T}(S)$
of $S$.

The idea of \emph{grammar compression} is, given a string $w$, to
compute a small SLG $G$ such that $L(G) = \{w\}$. The \emph{size} of
the grammar is defined as $|G| := \sum_{N \in V}|\rhs{N}|$.  Clearly,
it is easy to encode any $G$ in $\bigO(|G|)$ space: pick an ordering
of nonterminals and write down the definitions of all variables with
nonterminals replaced by their number in the order. For any grammar
compression algorithm \algname{Alg}, we denote the output of
\algname{Alg} on a string $u$ by $\algname{Alg}(u)$.

The size of the smallest SLG generating $w$ is denoted $g^*(w)$. The
decision problem \textsc{SmallestGrammar} of determining whether for a
given string $w$ it holds $g^*(w) \leq t$ is NP-hard~\cite{Charikar05}
(or even APX-hard~\cite{Charikar05}), but $\bigO(\log
(n/g^*))$-approximations are known~\cite{Rytter03, Charikar05, Jez16}.

\begin{definition}\label{def:admissible}
  An SLG $G = (V, \Sigma, R, S)$ is \emph{admissible} if for
  every $X \in V$, it holds $|\rhsgen{G}{X}| = 2$ and $X$ occurs in
  the parse tree $\mathcal{T}(S)$.
\end{definition}

\subsection{Hardness Assumptions}

\begin{conjecture}[$k$-Clique]
  For every $k \geq 3$ and $\epsilon > 0$,
  there is no algorithm checking if an undirected graph $G = (V, E)$ has
  a $k$-clique
  that runs
  in $\bigO(|V|^{(k\omega/3)(1 - \epsilon)})$ time.
\end{conjecture}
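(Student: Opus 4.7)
The stated object is a hardness \emph{conjecture}, not a theorem, and no genuine proof exists in the literature: unconditionally ruling out $\bigO(|V|^{(k\omega/3)(1-\epsilon)})$-time algorithms for $k$-clique would be a breakthrough in fine-grained complexity far beyond the current state of the art. Accordingly, a real proof sketch is out of reach; what I can describe is the shape any attempt would have to take and the evidence that motivates the hypothesis.

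The tightness of the exponent reflects the Ne\v{s}et\v{r}il--Poljak upper bound: for $k$ divisible by $3$ one detects a $k$-clique in $\bigO(|V|^{k\omega/3})$ time by forming an auxiliary graph whose vertices are the $(k/3)$-cliques of the original and reducing to triangle detection via fast matrix multiplication; for general $k$ one obtains $\bigO(|V|^{\lceil k/3\rceil\omega + (k \bmod 3)})$. The conjecture asserts that this algorithm is essentially optimal up to $|V|^{o(1)}$ factors. The plan, were one to attempt a proof, would have to either (i) establish an unconditional lower bound in a computational model rich enough to capture matrix multiplication, or (ii) derive the bound via a fine-grained reduction from a stronger (but still believable) hypothesis.

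The main obstacle is precisely the appearance of $\omega$ in the exponent. Existing reductions from $3$SUM, OV, or SETH yield only weaker bounds (for instance, the Combinatorial $k$-Clique Conjecture rules out combinatorial $\bigO(|V|^{k(1-\epsilon)})$ algorithms, but says nothing about $\omega$-sensitive ones), and there is no known reduction framework whose conclusion scales with the true exponent of matrix multiplication. Any proof would therefore have to introduce a genuinely new fine-grained assumption calibrated to $\omega$ or make unconditional progress on matrix multiplication lower bounds themselves. For this reason the statement is posited as a working hypothesis rather than derived, and the present paper (like essentially all fine-grained lower-bound work) uses it as an axiomatic assumption.
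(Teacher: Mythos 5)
You are correct that this is a hardness conjecture, stated as an axiom rather than proved; the paper likewise gives no proof and simply adopts it (alongside the Combinatorial $k$-Clique Conjecture) as a hypothesis from which conditional lower bounds are derived. Your discussion of the Ne\v{s}et\v{r}il--Poljak upper bound motivating the exponent and the reasons an unconditional or reduction-based proof is out of reach is accurate and matches the standard understanding in fine-grained complexity.
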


\begin{conjecture}[Combinatorial $k$-Clique]
  For every $k \geq 3$ and $\epsilon > 0$,
  there is no combinatorial algorithm checking if an undirected graph $G = (V, E)$ has
  a $k$-clique
  that runs
  in $\bigO(|V|^{k(1 - \epsilon)})$ time.
\end{conjecture}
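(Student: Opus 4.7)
The statement is a \emph{conjecture}, not a theorem: it is one of the standard hardness hypotheses used in fine-grained complexity (alongside SETH, $3$-SUM, APSP, and the (non-combinatorial) $k$-Clique Conjecture stated above). As is customary, the paper does not prove it—an unconditional proof would imply strong circuit lower bounds far beyond present techniques. The conjecture is invoked only to derive conditional lower bounds on compressed computation (e.g., for CFG Parsing via the reduction of Abboud, Backurs, Bringmann, and Künnemann~\cite{AbboudBBK17}, extended in this paper through Grammar Boosting). The only ``proof proposal'' I can offer, then, is an outline of the evidence that motivates the hypothesis and the obstacles to turning it into a theorem.

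The plan for articulating such evidence has two strands. First I would cite the algorithmic status quo: the best algorithm for $k$-Clique uses fast matrix multiplication and runs in $\bigO(|V|^{k\omega/3})$ time, matching the (non-combinatorial) $k$-Clique Conjecture; in contrast, no combinatorial procedure—meaning, informally, one avoiding Strassen-style algebraic cancellations and working with explicit $\{0,1\}$ entries—is known that improves upon the $\bigO(|V|^k)$ brute-force search by more than polylogarithmic factors, despite decades of attempts. Second I would point to the fine-grained reductions linking combinatorial $k$-Clique to combinatorial Boolean matrix multiplication (the case $k=3$): a combinatorial $\bigO(|V|^{3-\epsilon})$ triangle algorithm would yield a combinatorial truly-subcubic BMM algorithm, breaking the widely believed ``combinatorial BMM barrier.'' Higher $k$ can then be tied to lower $k$ by standard tensor-product / Nesetril--Poljak-style reductions, providing internal consistency for the whole family of bounds.

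The principal obstacle to promoting this conjecture to a theorem is the absence of any technique for proving super-linear unconditional lower bounds on general models of computation. Even the informal ``combinatorial'' restriction is not a formal model—it is a community convention—so one cannot hope to rule out $\bigO(|V|^{k(1-\epsilon)})$ combinatorial algorithms by a direct diagonalization or communication-complexity argument. A plausible conditional route would be to reduce from an even more foundational hypothesis, e.g., to show that an $\bigO(|V|^{k(1-\epsilon)})$ combinatorial algorithm for $k$-Clique would refute SETH or yield a truly subcubic combinatorial algorithm for Boolean matrix multiplication for all $k\ge 3$ (and not merely $k=3$). Such a reduction is itself open and would already be a substantial contribution. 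For the purposes of the present paper I would therefore not attempt any formal derivation, but adopt the conjecture verbatim as an assumption, exactly as is done in the line of work~\cite{AbboudBBK17,AbboudBBK20} that our compressed-computation lower bounds build upon.
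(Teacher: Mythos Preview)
Your assessment is correct: this is a conjecture, not a theorem, and the paper simply states it as a hardness assumption without any proof or even the supporting discussion you provide. Your write-up is more expansive than the paper's bare statement, but the substantive point—that no proof is offered or expected—matches exactly.
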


\section{Grammar Compression Algorithms}\label{sec:algorithms}

\subsection{Global Algorithms}\label{sec:algs-global}

\begin{definition}\label{def:maximal-string}
  Let $G = (V, \Sigma, R, S)$ be an SLG. A string $s \in (\Sigma \cup
  V)^{+}$ is called \emph{maximal} (with respect to $G$) if it
  satisfies the following conditions:
  \begin{enumerate}[itemsep=0pt,parsep=0pt]
  \item $|s| \geq 2$,
  \item The string $s$ has at least two non-overlapping occurrences on
    the right-hand side of $G$ (i.e., in the definitions of
    nonterminals of $G$),
  \item There is no string $s' \in (\Sigma \cup V)^{+}$ such that
    $|s'| > |s|$ and $s'$ has at least as many non-overlapping
    occurrences on the right-hand side of $G$ as $s$.
  \end{enumerate}
\end{definition}

\begin{remark}\label{rm:nonov}
  The number of non-overlapping occurrences of $u$ on the
  right-hand side of the grammar is defined with a greedy search,
  i.e., we scan the definition of each nonterminal left-to-right, and
  once an occurrence of $u$ is found at position $i$, we restart the
  search at position~$i+|u|$.
\end{remark}

Charikar et al.~\cite{Charikar05} defines the class of ``global
algorithms'' as all grammar compression algorithms operating according
to the following principle. Begin with a grammar having a single
starting nonterminal $S$ whose
definition is $w$. Each iteration of the algorithm: (1) chooses a
maximal substring $s$ (\cref{def:maximal-string}), (2) creates a new
non-terminal $N$ and sets $s$ as its definition, and (3) scans (left to
right) the definition of every nonterminal, replacing every encountered
occurrence of $s$ with $N$. Note that all the affected occurrences of
$s$ are nonoverlapping. Global algorithms differ only in the choice of
the maximal substring $s$ at each step.

Charikar et al.~\cite{Charikar05} lists the following global algorithms:

\begin{description}[style=sameline,itemsep=0.5ex]
\item[\algname{RePair}]\cite{repair}: In each round, the algorithm
  selects the maximal string $s$ with the highest number of
  non-overlapping occurrences on the right-hand side of the current
  grammar. We remark that the original
  formulation of \algname{RePair}~\cite{repair} additionally requires
  $|s| = 2$.

\item[\algname{Greedy}]\cite{greedy3,greedy1,greedy2}: In each round,
  the algorithm selects the maximal string that results in the largest
  reduction in the grammar size.

\item[\algname{LongestMatch}]\cite{longestmatch}: In each round, the
  algorithm selects the longest maximal string.
\end{description}

\subsection{Nonglobal Algorithms}\label{sec:algs-nonglobal}

\begin{description}[style=sameline,itemsep=0.2ex]
\item[\algname{Sequential}]\cite{sequential}: Process the input
  left-to-right.  In each step, first compute the longest prefix of the
  remaining suffix of the input
  that is equal to $\exp{N}$ for
  some secondary nonterminal $N$ existing in the grammar, and append
  $N$ to the definition of the start rule.  If there is no such
  prefix, append the next symbol from the input. If now there
  exists a pair of symbols $AB$ on the right-hand side of the grammar
  with two non-overlapping occurrences (in~\cite{sequential}, it is proved
  that there cannot be more such occurrences; see also
  \cref{lm:seq-irreducible}), create a new nonterminal $M$
  with $\rhs{M} = AB$, and replace both the occurrences with
  $M$. Finally, if after this update there exists a nonterminal that
  is only used once on the right-hand side of the grammar, remove it
  from the grammar, replacing its occurrence with its definition.

\item[\algname{Sequitur}]\cite{sequitur}:
  Process input string left-to-right.  In each step, we first append the
  next symbol from the input into the definition of the start rule. We
  then apply the following reductions to the grammar as long as
  possible, each time choosing the reduction earliest in the list:
  \begin{enumerate}[itemsep=0pt,parsep=0pt]
  \item If the length-2 suffix $AB$ of the definition of the starting
    nonterminal $S$ is equal to the definition of some other
    nonterminal $N$, replace this length-$2$ suffix with $N$.
  \item If the length-2 suffix $AB$ of the definition of the starting
    nonterminal has another non-overlapping occurrence on the
    right-hand side of the grammar (in~\cite{sequitur}, it is proved that
    there cannot be more than one such occurrences), create a new
    nonterminal $M$ with $\rhs{M} = AB$, and replace both occurrences
    with $M$.
  \item If there exists a nonterminal that is only used once on the
    right-hand side of the grammar, remove it from the grammar,
    replacing its only occurrence with its definition.
  \end{enumerate}

\item[\algname{Bisection}]\cite{bisection1,bisection2}: Let $w \in
  \Sigma^{n}$ be the input string. The first step of the algorithm
  computes a set $\mathcal{S}$ of substrings of $w$ as follows.
  First, we insert the string $w$ itself into the set. If $n > 1$, we
  then compute the largest $k \geq 0$ such that $2^k < n$ and
  recursively insert into $\mathcal{S}$ substrings of $w[1 \dd 2^k]$
  and $w(2^k \dd n]$. After the enumeration is complete, we create a
  nonterminal for every element $s$ of $\mathcal{S}$. The definition
  for a nonterminal corresponding to every $s \in \mathcal{S}$ such
  that $|s| > 1$ is set to be the two nonterminals corresponding to
  the initial two substrings of $s$ computed during the enumeration
  phase.

\item[\algname{LZ78}]\cite{LZ78}: Let $w \in \Sigma^{n}$ be the input
  string. The \algname{LZ78} algorithm computes the factorization $w =
  f_1 f_2 \dots f_{z_{78}}$ (the elements of which %
  are called \emph{phrases}) such that for every $i \in [1 \dd z_{78}]$,
  it holds either that $f_{i} \in \Sigma$ (if $w[|f_1 f_2 \dots
  f_{i-1}| + 1]$ is the leftmost occurrence of that symbol in $w$), or
  $f_i$ is the longest prefix of $f_i \dots f_{z_{78}}$ such that
  there exists $i' \in [1 \dd i)$ satisfying $f_{i'} c = f_i$ for some
  $c \in \Sigma$. This parsing can be easily encoded as an SLG of
  size $3z_{78}$.

\item[\algname{LZD}]\cite{lzd}: Let $w \in \Sigma^n$ be the input
  string. The \algname{LZD} algorithm factorizes $w$ into $f_1 f_2
  \cdots f_{m}$ such that $f_0 = \emptystring$, and for $1 \leq i \leq
  m$, $f_i = f_{i_1}f_{i_2}$ where $f_{i_1}$ is the longest prefix of
  $w[k \dd n]$ with $f_{i_1} \in \{f_j : 1 \leq j < i\} \cup \Sigma$,
  $f_{i_2}$ is the longest prefix of $w[k + |f_{i_1}| \dd n]$ with
  $f_{i_2} \in \{f_j : 0 \leq j < i\} \cup \Sigma$, and $k = |f_1
  \cdots f_{i-1}| + 1$. Intuitively, at step $i$ , $1 \leq i \leq m$,
  \algname{LZD} computes $f_{i_1}$ as the longest prefix of the
  unprocessed string among $f_1,...,f_{i-1}$ or $\Sigma$. It then
  analogously computes $f_{i_2}$ for remaining suffix of $w$ (or sets
  $f_{i_2} = \emptystring$ if the remaining suffix is empty). The
  $i$th phrase is then defined as $f_{i} = f_{i_1} f_{i_2}$. Note,
  that we can represent this factorization as an SLG by creating a
  nonterminal $N_i$ for each factor $f_i$, and then creating the
  starting nonterminal $S$ with $N_1 \cdots N_{m}$ as the definition.
  The size of this SLG is $3m$.
\end{description}

\section{Technical Overview}\label{sec:technical-overview}

\vspace{-1.0ex}
Due to space constraints, here we present the overview of the
basic grammar boosting (for data structures), and defer
further generalizations to
\cref{sec:cfg-prior,sec:cfg-challenge,sec:rna-prior,sec:rna-challenge}.

\subsection{The Framework of Verbin and Yu}\label{sec:overview-verbin-yu}

The study of data structure lower bounds on grammar-compressed strings
was pioneered by Verbin and Yu~\cite{VerbinY13}. Below we provide the
summary of their techniques.

\begin{definition}[Verbin and Yu~\cite{VerbinY13}]\label{def:answer-string}
  Let $\Pts \subseteq [1 \dd m]^2$ be a set of $|\Pts| = m$ points on
  an $m \times m$ grid. By $A(\Pts)$ we denote a binary string of
  length $m^2$ defined such that for every $x, y \in [1 \dd m]$,
  \[
    A(\Pts)[x + (y-1)m] = |\{(x',y') \in \Pts : x' \leq x\text{ and
    }y' \leq y\}| \bmod 2.
  \]
\end{definition}

Verbin and Yu called $A(\Pts)$ the \emph{answer string}, as it encodes
the answers for all possible parity range counting queries on the set
$\Pts$. Any such query, given $(x,y) \in [1 \dd m]^2$, returns the
parity of the number of points from $\Pts$ in the range $[1 \dd x]
\times [1 \dd y]$ (note that rows are enumerated bottom to top).
Verbin and
Yu proved the following result.

\begin{lemma}[Verbin and Yu~\cite{VerbinY13}]\label{lm:answer-string-grammar-size}
  Assume that $m$ is a power of two.  Let $\Pts \subseteq [1 \dd m]^2$
  be a set of $|\Pts| = m$ points on an $m \times m$ grid. There
  exists an admissible SLG $G = (V, \Sigma, R, S)$
  (\cref{def:admissible}) of height $\bigO(\log m)$ such that
  $L(G) = \{A(\Pts)\}$ and $|G| = \bigO(m \log m)$.
\end{lemma}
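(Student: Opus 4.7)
The plan is to construct $G$ explicitly as a two-level balanced binary tree that exploits the fact that, as $y$ grows, the rows of $A(\Pts)$ change slowly. View $A(\Pts)$ as an $m \times m$ matrix read in row-major order, so $A(\Pts) = r_1 r_2 \cdots r_m$ with $r_y \in \{0,1\}^m$. I would fix a balanced column segment tree $T_c$ of depth $\log m$ over $[1 \dd m]$ and, for each row $y$, create a nonterminal $N(y, v)$ for every node $v$ of $T_c$ so that $N(y, v)$ expands to the substring of $r_y$ indexed by the interval of $v$; call the root $R_y$. The start symbol $S$ is then defined by a balanced row tree $T_r$ of depth $\log m$ combining $R_1, \ldots, R_m$. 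All productions have length $2$ (so $G$ is admissible), the height is $\bigO(\log m)$, and $L(G) = \{A(\Pts)\}$ by construction. The entire challenge is bounding the number of \emph{distinct} nonterminals under content-based sharing.

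For the counting step, I would write $r_y[x] = r_{y-1}[x] \oplus b_y[x]$, where $b_y[x] = |\{p \in \Pts : y_p = y,\ x_p \leq x\}| \bmod 2$ is a step function that flips exactly at the $k_y$ points of row $y$, with $\sum_y k_y = m$. For a node $v$ of $T_c$ with interval $[a \dd b]$: if $[a \dd b]$ contains no row-$y$ point then $b_y$ is constant on $[a \dd b]$, so $r_y[a \dd b]$ equals either $r_{y-1}[a \dd b]$ or its bitwise complement. I would therefore \emph{also} store, for every created column nonterminal, a ``complement companion'' (doubling the count but preserving admissibility because $\overline{AB} = \overline{A}\,\overline{B}$). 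Then $N(y, v)$ can reuse $N(y-1, v)$ or its complement whenever $[a \dd b]$ contains no row-$y$ point, so the only genuinely new nodes in row $y$ are the $\bigO(\log m)$ ancestors in $T_c$ of each row-$y$ point. This yields $\bigO(k_y \log m)$ new column nonterminals for row $y$; summing, $\sum_y \bigO(k_y \log m) = \bigO(m \log m)$. Row $1$ contributes $\bigO(m)$ column nonterminals, and $T_r$ contributes $\bigO(m)$ more, giving $|G| = \bigO(m \log m)$ overall.

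The main obstacle I expect is propagating the complement bookkeeping consistently. A node $v$ with no row-$y$ point in its interval might still have $r_y[a \dd b] \neq r_{y-1}[a \dd b]$ because the parity of row-$y$ points in $[1 \dd a-1]$ can be odd; $R_y$ therefore has to select between a nonterminal and its complement depending on this prefix parity, and the choice must be threaded coherently through $T_c$ so that the complement at a higher level genuinely is the concatenation of the complements at its children. Once this bookkeeping is in place, charging each point $p = (x_p, y_p)$ to the $\bigO(\log m)$ ancestors of the leaf $x_p$ in $T_c$ at row $y_p$ uniformly gives the $\bigO(m \log m)$ size bound, and the balanced depth of both $T_c$ and $T_r$ gives height $\bigO(\log m)$.
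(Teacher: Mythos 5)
Your proposal reproduces the paper's construction: a per-row balanced column tree, reuse of the previous row's nonterminals (or their complements) at nodes whose interval contains no point from the current row, complement companions maintained alongside every column nonterminal, a balanced row tree on top for $S$, and a charging argument that attributes $\bigO(\log m)$ new nonterminals to each point. The ``prefix-parity bookkeeping'' you flag as the main obstacle is exactly what the paper resolves with its leaf-to-root traversal (negating the right child whenever the traversal ascends from a left child), and your counting of $\sum_y k_y\log m = \bigO(m\log m)$ plus $\bigO(m)$ for row~1 and the row tree matches the paper's bound.
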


The main idea in the above proof is as follows. Let $M$ be such that
for $x, y \in [1 \dd m]$, it holds $M[y,x] = A(\Pts)[(y-1)m + x]$.
Let $\mathcal{T}_y$ be a perfect binary tree on $m$ leafs, such that
the $x$th leftmost leaf of $\mathcal{T}_y$ is associated with the
symbol $M[y,x]$, and each internal node is associated with a substring
obtained by concatenating the substrings of its left and right child.
We thus have $m$ trees, each corresponding to a row in $M$. The
grammar for $A(\Pts)$ is constructed as follows.

We process the rows of $M$ bottom-up, maintaining the invariant that
after the first $k$ rows are processed, for every nonterminal $X$ in
the current grammar $G$ it holds $\expgen{G}{X} = s$ for some $s$ that
is either a substring or a negation of a substring represented by one
of the internal nodes in $\{\mathcal{T}_1, \dots, \mathcal{T}_k\}$.
Conversely, for every internal node $v$ in one of the trees in
$\{\mathcal{T}_1, \ldots, \mathcal{T}_k\}$, letting $s$ be the
substring associated with $v$, there exist nonterminals in the current
grammar expanding to $s$ and a negation of $s$.  With such structure
we can clearly ensure that $|\rhsgen{G}{X}| = 2$.  The $(k+1)$st row
of $M$ is processed as follows. Suppose that there exist $q$ points
$(x_1, y_1), \dots, (x_q, y_q)$ in $\Pts$ with $y$-coordinate $k +
1$. Assume $x_1 < \dots < x_q$. Observe that $M[k\,{+}\,1,\cdot\ ]$
can be obtained from $M[k,\cdot\ ]$ by first negating all bits in
$M[k,x_1 \dd m]$, then negating all bits in $M[k,x_2 \dd m]$, and so
on.  Consider updating $\mathcal{T}_{k+1}$ to represent $M[k+1,\dots]$
as we add $(x_1,y_1), \ldots, (x_q,y_q)$ to $\Pts$. Let $i \in [1 \dd
q]$. First, we negate the symbol in the $x_i$th leaf $v$ of
$\mathcal{T}_{k+1}$.  We then perform a traversal from $v$ to the
root.  Whenever we arrive at node $v$ from its right child, we check
if the current grammar contains a nonterminal expanding to the
substring represented by $v$. If not, we introduce two new
nonterminals: one for the substring and one for its negation. If we
reach $v$ from its left child, we proceed analogously, except we first
negate the right child of $v$. We add at most $2\log m$
nonterminals. Over all rows of $M$, this amounts to $2m \log m$
nonterminals. The height of every nonterminal in the grammar is at
most $\log m$.

We now ensure that there exists a single nonterminal $S$ whose
expansion is equal to the entire string $A(\Pts)$. For $i \in [m \dd
2m)$, let $R_i$ be a nonterminal whose expansion is the $i$th lowest
row of $M$. For $i = m - 1, \dots, 1$, we add a nonterminal $R_i$ to a
grammar with a definition $\rhs{R_i} = R_{2i} R_{2i + 1}$. It is easy
to see that the nonterminal $R_1$ then satisfies $\exp{R_1} =
A(\Pts)$.  The nonterminals created in this process again form a
perfect binary tree of height $\log m$. We add exactly $m - 1$
nonterminals, and hence the total number of nonterminals in the output
grammar is $\bigO(m \log m)$. The above step increases the height of
the grammar by $\log m$. Thus, the final height is still $\bigO(\log
m)$.

\begin{theorem}[P\u{a}tra\c{s}cu~\cite{Patrascu07}]\label{th:parity-range-counting-lower-bound}
  In the cell-probe model, there is no data structure that, for every
  set $\Pts$ of $|\Pts| = m$ points on an $m \times m$ grid, achieves
  $\bigO(m \log^{c} m)$ space $($where $c = \bigO(1))$ and implements
  parity range counting queries on $\Pts$ in $o(\log m / \log \log m)$
  time.
\end{theorem}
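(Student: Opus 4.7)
The plan is to derive this cell-probe lower bound via a reduction from an asymmetric communication-complexity problem, which is P\u{a}tra\c{s}cu's standard route to $\Omega(\log m / \log \log m)$ bounds in the near-linear-space regime. The intermediate result I would invoke is the lower bound for Lopsided Set Disjointness (LSD): in a game where Alice holds one marked element from each of $k$ blocks of size $B$ in $[k]\times[B]$ and Bob holds $k$ candidate elements (one per block), deciding whether Bob's set intersects Alice's requires either Alice to send $\Omega(k \log B)$ bits or Bob to send $\Omega(k)$ bits (Miltersen--Nisan--Safra--Wigderson, refined by P\u{a}tra\c{s}cu--Thorup).

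The first step is the cell-probe-to-communication simulation. Suppose for contradiction that some structure uses $s = \bigO(m \log^{c} m)$ cells of word-length $w = \Theta(\log m)$ bits and answers parity range counting queries in $t = o(\log m / \log \log m)$ probes. Alice, who holds the point set $\Pts$, precomputes the structure's memory and plays the role of the memory; Bob, who holds the query, plays the role of the query algorithm. Each probe costs Bob $\bigO(\log s) = \bigO(\log m)$ bits (to send the address) and Alice $\bigO(\log m)$ bits (to return the contents), so a single query is answered with $\bigO(t \log m)$ bits of communication in each direction.

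The second step is the encoding: an LSD instance with $k$ blocks of size $B$ is embedded in $\Pts$ by placing Alice's $k$ marked elements as $k$ points on a $k \times B$ rectangular subregion of the $m \times m$ grid, with one axis indexing blocks and the other indexing the in-block position. For each LSD coordinate, the bit ``does Bob's $i$th element equal Alice's?'' is recoverable as the XOR of four parity range counting answers on the four corner prefix rectangles of a single grid cell, so $\bigO(k)$ parity queries solve the whole LSD instance at total communication cost $\bigO(k t \log m)$ in each direction. To amplify the resulting constant bound into the desired $\Omega(\log m / \log \log m)$ bound, I would compose this reduction with a direct-sum argument, running $\Theta(\log m / \log \log m)$ independent LSD instances on disjoint column strips of the grid and appealing to the direct-sum theorem for LSD to multiply the Alice-side lower bound correspondingly.

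With parameters tuned so that $\log B = \Theta(\log m)$ and the total point count stays $\bigO(m)$, comparing the simulation cost against the amplified LSD lower bound would force $t = \Omega(\log m / \log \log m)$, contradicting the assumption. The main obstacle is executing the direct-sum step cleanly --- laying out the $\Theta(\log m / \log \log m)$ independent LSD instances within the $m$-point, $m \times m$-grid budget so that their queries (prefix rectangles) do not interfere, and verifying the direct-sum lower bound in the asymmetric communication setting. Everything else --- the simulation, the XOR-of-four-corners reduction, and the final parameter balancing --- is routine once the amplification is in place.
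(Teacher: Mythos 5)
The paper does not actually prove \cref{th:parity-range-counting-lower-bound}; it is cited verbatim from P\u{a}tra\c{s}cu~\cite{Patrascu07}. So the question is whether your proposed re-derivation would go through, and as written it would not.

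The central problem is the LSD lower bound you invoke. You state it as ``either Alice sends $\Omega(k\log B)$ bits or Bob sends $\Omega(k)$ bits.'' But that disjunction is satisfied trivially in the cell-probe simulation: Bob asks $k$ queries, each costing $t\log S = \Theta(t\log m)$ bits, so Bob's total communication is $\Theta(kt\log m) \gg k$ for any $t \geq 1$. The ``or Bob sends $\Omega(k)$'' clause is always met, and the disjunction never forces anything about Alice or about $t$. What is actually needed is the richness-style trade-off due to P\u{a}tra\c{s}cu and Thorup: for every $\delta > 0$, either Alice sends at least $\delta k\log B$ bits or Bob sends at least $k\cdot B^{1-O(\delta)}$ bits. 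The polynomial-in-$B$ growth on Bob's side is what makes the argument non-vacuous, and your sketch drops it.

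The direct-sum amplification also does not do what you want. Running $r = \Theta(\log m/\log\log m)$ independent LSD instances on disjoint column strips multiplies both Alice's lower bound and Alice's simulation cost by the same factor $r$: she holds $rk$ points and serves $rk$ queries, so the per-query accounting is unchanged. The same is true on Bob's side. Independent instances laid side by side scale the numerator and denominator of the trade-off identically and the derived bound on $t$ is still a constant. In P\u{a}tra\c{s}cu's actual argument the $\log m/\log\log m$ factor does not come from $r$ independent instances sitting in parallel; it comes from the \emph{depth} of a composed structure. He reduces LSD to reachability oracles in a butterfly graph of degree $\Theta(\log^{O(1)} m)$ and depth $d = \Theta(\log m/\log\log m)$, shows that answering a reachability query forces the probe sequence to ``spend'' $\Omega(1)$ probes per level, and then embeds the butterfly (via a bit-reversal permutation) into the two-dimensional grid so that a range-counting query simulates a reachability query. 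The depth $d$ is exactly the $\log m/\log\log m$ you are aiming for, and it arises because a butterfly of polylogarithmic degree needs that many levels to reach $m$ leaves, not because of a direct sum over disjoint strips. The XOR-of-four-corners trick you mention is fine for recovering a point-membership bit, but it is the butterfly composition, together with the $B^{1-O(\delta)}$ form of the LSD bound, that you would still need in order to push the bound past $O(1)$.
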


\begin{theorem}[Verbin and Yu~\cite{VerbinY13}]\label{th:slp-random-access-lower-bound}
  In the cell-probe model, there is no data structure that, for every
  string $T$ of length $N$ and every SLG $G$ of $T$ such that $L(G) =
  \{T\}$, achieves $\bigO(|G| \log^c N)$ space $($where $c =
  \bigO(1))$ and implements random access queries to $T$ in $o(\log N
  / \log \log N)$ time.
\end{theorem}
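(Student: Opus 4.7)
The plan is to prove the theorem by a direct reduction from parity range counting: supposing such a random-access structure exists, we use the answer-string construction of Verbin and Yu to turn it into a structure for parity range counting that violates \cref{th:parity-range-counting-lower-bound}.

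First, suppose for contradiction that there exists a data structure $\mathcal{D}$ that, for every length-$N$ string $T$ and every SLG $G$ with $L(G) = \{T\}$, uses $\bigO(|G| \log^c N)$ cells of space and supports random access to $T$ in $t(N) = o(\log N / \log \log N)$ time. Fix an arbitrary set $\Pts \subseteq [1 \dd m]^2$ of $|\Pts| = m$ points, where $m$ is a power of two, and let $N = m^2$. By \cref{def:answer-string}, each bit $A(\Pts)[x + (y-1)m]$ equals the parity of the number of points in $[1 \dd x] \times [1 \dd y]$, so a single random access query to $A(\Pts)$ at position $x + (y-1)m$ returns the answer to the parity range counting query $(x,y)$.

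Second, I would apply \cref{lm:answer-string-grammar-size}, which (since $m$ is a power of two) produces an admissible SLG $G$ of size $|G| = \bigO(m \log m)$ with $L(G) = \{A(\Pts)\}$. I then instantiate the hypothetical structure $\mathcal{D}$ on the string $T = A(\Pts)$ of length $N = m^2$ with this grammar $G$. The space usage is
\[
  \bigO(|G| \log^c N) \;=\; \bigO\bigl(m \log m \cdot \log^c(m^2)\bigr) \;=\; \bigO(m \log^{c+1} m),
\]
and each random access, and hence each parity range counting query, is answered in time $t(N) = o(\log N / \log \log N) = o(\log m / \log \log m)$, since $N = m^2$ implies $\log N = \Theta(\log m)$ and $\log \log N = \Theta(\log \log m)$.

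Finally, this yields, for every power-of-two $m$ and every $m$-point set on the $m \times m$ grid, a data structure for parity range counting using $\bigO(m \log^{c+1} m)$ space and $o(\log m / \log \log m)$ query time, contradicting \cref{th:parity-range-counting-lower-bound} (which forbids this already on the sub-family of powers of two, since the asymptotic lower bound holds for infinitely many $m$). The heart of the argument is thus entirely contained in \cref{lm:answer-string-grammar-size} and \cref{th:parity-range-counting-lower-bound}; there is no serious obstacle. The only mild care needed is the power-of-two restriction in \cref{lm:answer-string-grammar-size}, which is harmless because the parity range counting lower bound continues to apply when one restricts the adversary's grid sizes to powers of two.
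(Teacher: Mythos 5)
Your proof is correct and follows essentially the same route as the paper: assume such a structure exists, build the answer string $A(\Pts)$ and its size-$\bigO(m\log m)$ SLG via \cref{lm:answer-string-grammar-size}, instantiate the hypothetical structure on $A(\Pts)$, and derive a contradiction with \cref{th:parity-range-counting-lower-bound} by answering parity range queries through random access at position $x+(y-1)m$. The only cosmetic difference is the handling of the power-of-two restriction: you restrict attention to power-of-two $m$ and argue the asymptotic lower bound survives that restriction, whereas the paper pads an arbitrary $\Pts$ with diagonal points $\{(p,p)\}_{p\in(m\dd m']}$ to reach the next power of two $m'$ so that the reduction literally covers every $m$; both resolutions are fine.
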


The key idea in the proof of the above fact is as follows. Suppose
that there exists a data structure $D$ that, given any SLG $G$ for a
string $T \in \Sigma^N$, uses $\bigO(|G| \log^c N)$ space (where $c =
\bigO(1)$) and answers random access queries on $T$ in $o(\log N /
\log \log N)$ time. Let $\Pts \subseteq [1 \dd m]^2$ be any set of
$|\Pts| = m$ points on an $m \times m$ grid. Assume for simplicity
that $m$ is a power of two (otherwise, letting $m'$ be the smallest
power of two satisfying $m' \geq m$, we apply the proof for $\Pts' =
\Pts \cup \{(p, p)\}_{p \in (m \dd m']}$; note that the answer to any
range counting query on $\Pts$ is equal to the answer on $\Pts'$). By
\cref{lm:answer-string-grammar-size}, there exists an admissible SLG
$G_{\Pts} = (V_{\Pts}, \{0, 1\}, R_{\Pts}, S_{\Pts})$ such that
$L(G_{\Pts}) = \{A(\Pts)\}$ is an answer string for $\Pts$
(\cref{def:answer-string}), and it holds $|G_{\Pts}| = \bigO(m \log
m)$. Recall that $|A(\Pts)| = m^2$. Let $D'$ denote the structure $D$
for $G_{\Pts}$. By $|G_{\Pts}| = \bigO(m \log m)$ and the assumption,
$D$ uses $\bigO(|G_{\Pts}| \log^c |A(\Pts)|) = \bigO(m \log m
\log^c(m^2)) = \bigO(m \log^{1 + c} m)$ space, and implements random
access to $A(\Pts)$ in $o(\log(m^2) / \log \log(m^2)) = o(\log m /
\log \log m)$ time. Given $D'$ and any $(x, y) \in [1 \dd m]^2$, we
can thus answer in $o(\log m / \log \log m)$ the parity range query on
$\Pts$ with arguments $(x, y)$ by issuing a random access query on
$A(\Pts)$ with position $j = x + (y-1)m$. Thus, the existence of $D'$
contradicts \cref{th:parity-range-counting-lower-bound}.

\subsection{Grammar Boosting}\label{sec:overview-boosting}

\paragraph{The Main Idea}

We now describe our new technique.  Consider any $T \in \Sigma^N$ and
assume that there exists a grammar $G = (V, \Sigma, R, S)$ such that
$L(G) = \{T\}$ and $|G| = n$.  Assume that all nonterminals in $V$
occur in the parse tree of $G$, and that $G$ is \emph{admissible}
(\cref{def:admissible}).  All grammars that we start with will
satisfy these properties, but it is easy to see that any grammar can
be transformed into an admissible grammar generating the same string
without asymptotically increasing its size, and all nonterminals that
do not appear in the parse tree can be removed. Let $\Sigma' = \Sigma
\cup \{\dol_{i} : i \in [1 \dd |V|]\}$ and $G' = (V, \Sigma', R', S)$ be a grammar
with the same set of
nonterminals and the starting nonterminal as $G$, but with a unique
sentinel symbol in every definition, i.e., such that for every
$i \in [1 \dd |V|]$, it holds $\rhsgen{G'}{N_i} = A \dol_i B$, where
$A, B \in V \cup \Sigma$ are such that $\rhsgen{G}{N_i} = AB$.
Consider any ordering $N_1, \dots, N_{|V|}$ of nonterminals in $V$ such that
$|\expgen{G}{N_1}| \leq \dots \leq |\expgen{G}{N_{|V|}}|$ and let $T'
= \bigodot_{j = 1, \dots, |V|} \expgen{G'}{N_j} \cdot \hash_{2j-1}
\cdot \expgen{G'}{N_j} \cdot \hash_{2j}$.

\begin{description}[style=sameline,itemsep=1ex,font={\normalfont\itshape}]
\item[Observation: $|T'| = \bigO(|T|^2)$ and $T$ is an easily
  identifiable subsequence of $T'$.]  Note that every $X \in V$, we
  have $|\expgen{G'}{X}| = 2|\expgen{G}{X}| - 1$ and, letting $m =
  |\expgen{G}{X}|$, it holds $\expgen{G'}{X}[2j - 1] =
  \expgen{G}{X}[j]$ for every $j \in [1 \dd m]$. By definition of
  $T'$, we therefore have $|T'| = 4 \sum_{X \in V}|\expgen{G}{X}|$
  (\cref{lm:alpha-length}). Consequently, since for every $X \in V$ it
  holds $|\expgen{G}{X}| \leq |T|$, and $|V| \leq |T|$, we obtain
  $|T'| \leq 4|T|^2$. For the second claim, note that since there
  exists $j \in [1 \dd |V|]$ such that $S = N_j$, it follows by the
  above that for some $\delta \geq 0$, we have $T[j] = T'[\delta + 2j
  - 1]$, where $j \in [1 \dd |T|]$ (\cref{lm:delta}).
\end{description}

By the above observation, $T'$ is plain enough that we can use it to
access symbols of $T$ without incurring any penalty in the runtime.
We now outline how to prove that $T'$ is simultaneously structured
strongly enough, so that the algorithms studied in the paper
compress it to size $\bigO(n)$.

\vspace{-1ex}
\paragraph{Analysis of Nonglobal Algorithms}

As an illustration, we consider the processing of $T'$ using
\algname{Sequential} (see \cref{sec:algs-nonglobal}).  Denote
$\Sigma'' = \Sigma' \cup \{\hash_{j} : j \in [1 \dd 2|V|]\}$. For
every $k \in [1 \dd |V|]$, let $G_k = (V_k, \Sigma'', R_k, S_k)$ be
such that $V_k = \{N_1, \dots, N_k, S_k\}$, for every $i \in [1 \dd
k]$, $\rhsgen{G_k}{N_i} = \rhsgen{G'}{N_i}$, and $\rhsgen{G_k}{S_k}
= \bigodot_{i=1,\dots,k} N_i \cdot \hash_{2i-1} \cdot N_i \cdot
\hash_{2i}$.  We claim that after $8k$ steps of \algname{Sequential},
the algorithm has processed the prefix $\bigodot_{j = 1, \dots, k}
\expgen{G'}{N_j} \cdot \hash_{2j-1} \cdot \expgen{G'}{N_j} \cdot
\hash_{2j}$ of $T'$, and the resulting grammar is isomorphic to
$G_k$. We proceed by induction on $k$. The inductive base is easily
verified. To show the inductive step, let $A, B \in V \cup \Sigma$ be
such that $\rhsgen{G'}{N_k} = A \cdot \dol_{k} \cdot B$ and assume
that \algname{Sequential} processed $\bigodot_{j=1,\dots,k-1}
\expgen{G'}{N_j} \cdot \hash_{2j-1} \cdot \expgen{G'}{N_j} \cdot
\hash_{2j}$.  Thus, $\expgen{G'}{A} \cdot \dol_{k} \cdot
\expgen{G'}{B} \cdot \hash_{2k-1} \cdot \expgen{G'}{A} \cdot
\dol_{k} \cdot \expgen{G'}{B} \cdot \hash_{2k}$ is the prefix of the
remaining suffix. Note that during the next five steps, we process
$\expgen{G'}{A} \cdot \dol_{k} \cdot \expgen{G'}{B} \cdot
\hash_{2k-1} \cdot \expgen{G'}{A}$, and simply append five symbols $A'
\dol_{k} B' \hash_{2k-1} A'$ (where $A'$ and $B'$, respectively,
correspond to $\expgen{G'}{A}$ and $\expgen{G'}{B}$) to the definition
of the starting nonterminal.  Next, we create a new nonterminal $X$
capturing the repetition of $A'\dol_{k}$. In the seventh step, we
again create a new nonterminal $X'$ corresponding to the repetition of
$XB'$, and then remove $X$ (it now occurs only once). Finally, we
append $\hash_{2k}$. The result is isomorphic to $G_{k}$. We refer to
\cref{lm:sequential} for details.  The high-level analysis of
\algname{Sequitur} (\cref{sec:sequitur}) is similar, except each step
involves many smaller substeps (in which intermediate
grammars are partially completed versions of $G_k$).

\vspace{-1ex}
\paragraph{Analysis of Global Algorithms}

We now outline the %
proof that all global
algorithms on $T'$ output the same grammar as nonglobal algorithms
(\cref{sec:global}). The key difficulty in the analysis,
compared to nonglobal algorithms, is that replacements
leading up to the grammar isomorphic with $G_{|V|}$ do not occur in
order. We thus need to generalize the class of intermediate
grammars. We show that it suffices to consider $2^{|V|}$ grammars. We
define them as follows.
\begin{itemize}[itemsep=0pt,parsep=0pt]
\item Let $\{M_i : i \in [1 \dd |V|]\}$ be a set of fresh
  variables, i.e., such that $\{M_i\}_{i \in [1 \dd |V|]} \cap
  \{N_i\}_{i \in [1 \dd |V|]} = \emptyset$.  For every $X \in V \cup
  \Sigma$ and $\mathcal{I} \subseteq \{1, \dots, |V|\}$, by ${\rm
  bexp}_{\mathcal{I}}(X)$ we denote a string obtained by starting
  with $X$ and repeatedly expanding the nonterminals
  (according to their definition in $G'$) until only symbols in
  $\Sigma' \cup \{N_i\}_{i \in \mathcal{I}}$ are left.  Each
  occurrence of the remaining nonterminal from $\{N_i\}_{i \in
  \mathcal{I}}$ is then replaced with the matching symbol from
  $\{M_i\}_{i \in \mathcal{I}}$ (see \cref{def:bexp}).
\item For every $\mathcal{I} \subseteq \{1, \dots, |V|\}$, we let
  $G_{\mathcal{I}} = (V_{\mathcal{I}}, \Sigma'', R_{\mathcal{I}}, S)$,
  where $V_{\mathcal{I}} = \{S\} \cup \{M_i\}_{i \in
  \mathcal{I}}$. For any $i \in \mathcal{I}$, we let
  $\rhsgen{G_{\mathcal{I}}}{M_i} = {\rm bexp}_{\mathcal{I}}(X) \cdot
  \dol_{i} \cdot {\rm bexp}_{\mathcal{I}}(Y)$, where $X, Y \in V
  \cup \Sigma$ are such that $\rhsgen{G}{N_i} = XY$.  We also set
  $\rhsgen{G_{\mathcal{I}}}{S} = \bigodot_{i=1,\dots,|V|} {\rm
  bexp}_{\mathcal{I}}(N_i) \cdot \hash_{2i-1} \cdot {\rm
  bexp}_{\mathcal{I}}(N_i) \cdot \hash_{2i}$.  We denote $\mathbb{G}
  = \{G_{\mathcal{I}} : \mathcal{I} \subseteq \{1, \dots,
  |V|\}\}$. Note that $|\mathbb{G}| = 2^{|V|}$ and $L(G_{\mathcal{I}})
  = \{T'\}$ holds for every $\mathcal{I} \subseteq \{1, \dots, |V|\}$
  (\cref{lm:exp-bigG}).  Observe also that the initial grammar that
  every global algorithm starts with when processing $T'$ (see
  \cref{sec:algs-global}), is isomorphic with $G_{\emptyset}$.
\end{itemize}

\begin{description}[style=sameline,itemsep=0.8ex,font={\normalfont\itshape}]
\item[Observation 1: If $s$ is maximal with respect to
  $G_{\mathcal{I}}$, then $s = {\rm bexp}_{\mathcal{I}}(N_i)$ for some
  $i \in \{1, \dots, |V|\} \setminus \mathcal{I}$.]  Each symbol in
  $\{\hash_i\}_{i \in [1 \dd 2|V|]}$ occurs once on the right-hand
  side of $G_{\mathcal{I}}$. On the other hand, every second symbol
  in the remaining substrings of $G_{\mathcal{I}}$ belongs to
  $\{\dol_{i}\}_{i \in [1 \dd |V|]}$. Thus, by $|s| \geq 2$, one of
  them occurs in $s$. It is easy to check that every symbol in
  $\{\dol_{i}\}_{i \in \mathcal{I}}$ occurs once on the right-hand
  side of $G_{\mathcal{I}}$ (\cref{lm:dollar-on-rhs}). Thus, $s$
  contains $\dol_{i}$ for some $i \in \{1, \dots, |V|\} \setminus
  \mathcal{I}$.  An inductive argument shows that every occurrence of
  $\dol_{i}$ for such $i$ can be extended into an occurrence of
  ${\rm bexp}_{\mathcal{I}}(N_i)$ (\cref{lm:bexp-occ}). String ${\rm
  bexp}_{\mathcal{I}}(N_i)$ must therefore be a substring of
  $s$. Choosing as $i$ the maximal $t \in \{1, \dots, |V|\} \setminus
  \mathcal{I}$ such that $\dol_{t}$ occurs in $s$, we thus have $s =
  {\rm bexp}_{\mathcal{I}}(N_i)$, since every occurrence of ${\rm
  bexp}_{\mathcal{I}}(N_i)$ on the right-hand side of
  $G_{\mathcal{I}}$ is surrounded by either $\dol_{i'}$ with $i' >
  i$, or a symbol from $\{\hash_{i}\}_{i \in [1 \dd 2|V|]}$ (see
  \cref{lm:maximal-string} for details).

\item[Observation 2: The output of one step of every global algorithm
  on $G_{\mathcal{I}}$ with ${\rm bexp}_{\mathcal{I}}(N_i)$ as a
  maximal string is isomorphic to $G_{\mathcal{I} \cup \{i\}}$.]
  Denote $\mathcal{I}' = \mathcal{I} \cup \{i\}$ and $s = {\rm
  bexp}_{\mathcal{I}}(N_i)$.  First, we observe that by an inductive
  argument it follows that replacing every occurrence of $M_i$ on the
  right-hand size of $G_{\mathcal{I}'}$ with $s$, and removing
  nonterminal $M_i$, results in $G_{\mathcal{I}}$ (see
  \cref{cor:sub}). On the other hand, no two
  occurrences of $s$ on the right-hand side of $G_{\mathcal{I}}$ are
  overlapping (\cref{lm:bexp-occ}). These two together imply
  the claim, since the output of a single step of a global algorithm is
  then not determined by the order of replacements; see \cref{rm:nonov}
  and \cref{lm:nonov-s}.

\item[Observation 3: There exists a maximal string with respect to
  $G_{\mathcal{I}}$ if and only if $\mathcal{I} \neq \{1, \dots,
  |V|\}$.]  The first implication follows from above. For the second
  implication, observe that if $\mathcal{I} \neq \{1, \dots, |V|\}$,
  then, letting $i \in \{1, \dots, |V|\} \setminus \mathcal{I}$ and $s
  = {\rm bexp}_{\mathcal{I}}(N_i)$, we have $|s| \geq 2$, and $s$ has
  at least two non-overlapping occurrences on the right-hand side of
  $G_{\mathcal{I}}$ (in the definition of $S$).  Thus, either $s$ is
  maximal, or it can be extended into a maximal string
  (\cref{def:maximal-string}); see
  \cref{lm:maximal-substring-existence}.
\end{description}

By the above, the intermediate grammars computed by every global
algorithm on $T'$ are isomorphic to a chain $G_{\emptyset},
G_{\mathcal{I}_1}, \dots, G_{\mathcal{I}_{|V|}}$ such that
$\mathcal{I}_1 \subsetneq \mathcal{I}_2 \subsetneq \dots \subsetneq
\mathcal{I}_{|V|}$.  Thus, %
$\mathcal{I}_{|V|} = \{1,
\dots, |V|\}$, and hence the final grammar is isomorphic to $G_{\{1,
\dots, |V|\}}$, which has size $\bigO(|V|) = \bigO(n)$.

\vspace{-1ex}
\paragraph{Putting Everything Together}

\cref{th:main} follows from the above analysis as follows.  Suppose
that for some \algname{Alg} as in \cref{th:main}, there exists a
structure $D$ that for any $T \in \Sigma^N$ uses
$\bigO(|\algname{Alg}(T)|\log^{c} N)$ space (where $c = \bigO(1)$) and
answers random access queries on $T$ in $o(\log N / \log \log N)$
time.  Let $\Pts \subseteq [1 \dd m]^2$ be any set of $|\Pts| = m$
points on an $m \times m$ grid.  By
\cref{lm:answer-string-grammar-size}, there exists an admissible
grammar $G_{\Pts} = (V_{\Pts}, \{0, 1\}, R_{\Pts}, S_{\Pts})$ such
that $L(G_{\Pts}) = \{A(\Pts)\}$ is the answer string for $\Pts$
(\cref{def:answer-string}), and it holds $|G_{\Pts}| = \bigO(m \log
m)$. Let us now consider the string $T'$ (defined as in the beginning
of \cref{sec:overview-boosting}) for $T = A(\Pts)$ and $G = G_{\Pts}$.
As noted in the
initial observation of \cref{sec:overview-boosting}, it holds $|T'| =
\bigO(|T|^2) = \bigO(m^4)$, and there exists $\delta \geq 0$, such
that $T[j] = T'[\delta + 2j - 1]$, for every $j \in [1 \dd |T|]$.  Let
$G_{T'} = \algname{Alg}(T')$ be the output of \algname{Alg} on
$T'$. By the above discussion, we have $|G_{T'}| = \bigO(|G_{\Pts}|) =
\bigO(m \log m)$. Let $D'$ denote a data structure consisting of the
following two components:
\begin{enumerate}[itemsep=0pt,parsep=0pt]
\item The structure $D$ for string $T'$. By $|G_{T'}| = \bigO(m
  \log m)$ and the above assumption, $D$ uses
  $\bigO(|\algname{Alg}(T')| \log^c |T'|) = \bigO(m \log m \log^c
  (m^4)) = \bigO(m \log^{1 + c} m)$ space, and implements access to
  $T'$ in $o(\log |T'| / \log \log |T'|) \,{=}\, o(\log (m^4) / \log
  \log (m^4)) \allowbreak = o(\log m / \log \log m)$ time,
\item The position $\delta \geq 0$, as defined above.
\end{enumerate}
Observe that given the structure $D'$ and any $(x, y) \in [1 \dd
m]^2$, we can answer in $o(\log m / \log \log m)$ the parity range
query on $\Pts$ with arguments $(x, y)$ by %
a random access
query to $T'$ with position $j = \delta + 2j' - 1$, where $j' = x +
(y-1)m$. Thus, the existence of $D'$ contradicts
\cref{th:parity-range-counting-lower-bound}.

\section{Random Access}\label{sec:random-access}

\subsection{Analysis of Global Algorithms}\label{sec:global}

\begin{definition}\label{def:alpha}
  Let $G = (V, \Sigma, R, S)$ be an admissible SLG. Assume that the
  sets $\Sigma$, $\{\dol_i : i \in [1 \dd |V|]\}$, and $\{\hash_i : i
  \in [1 \dd 2|V|]\}$ are pairwise disjoint. Denote $\Sigma' = \Sigma
  \cup \{\dol_i : i \in [1 \dd |V|]\}$ and $\Sigma'' = \Sigma' \cup
  \{\hash_i : i \in [1 \dd 2|V|]\}$.  By $\alpha(G)$, we denote the
  subset of $\Sigma''^{*}$ such that for every $w \in \Sigma''^{*}$,
  $w \in \alpha(G)$ holds if and only if there exists a sequence
  $(N_i)_{i \in [1 \dd |V|]}$ such that:
  \begin{itemize}[itemsep=0pt,parsep=0pt]
    \item $\{N_i : i \in [1 \dd |V|]\} = V$,
    \item $|\expgen{G}{N_i}| \leq |\expgen{G}{N_{i+1}}|$ holds for $i
      \in [1 \dd |V|)$, and
    \item $w = \bigodot_{i=1,\dots,|V|} \expgen{G'}{N_i} \cdot
      \hash_{2i-1} \cdot \expgen{G'}{N_i} \cdot \hash_{2i}$,
  \end{itemize}
  where $G' = (V, \Sigma', R', S)$ is defined so that for every $i \in
  [1 \dd |V|]$, it holds $\rhsgen{G'}{N_i} = A \cdot \dol_{i} \cdot B$
  (where $A, B \in V \cup \Sigma$ are such that $\rhsgen{G}{N_i} =
  AB$).
\end{definition}

Informally, in the above construction, given an admissible SLG $G$, we
first create an auxiliary SLG $G'$ which augments each nonterminal so
that its expansion contains a sentinel symbol unique to that
nonterminal. We then let $\alpha(G)$ be the set of all strings
obtained by first ordering all nonterminals of $G'$ according to the
length of their expansion (resolving the ties arbitrarily), and then
concatenating their expansions (each repeated twice) in this order,
with additional sentinel symbols inserted in between.

\begin{observation}\label{obs:exp}
  Let $G$ be an admissible SLG and let $G'$ be as in \cref{def:alpha}.
  For every $i \in \{1, \dots, |V|\}$, it holds
  $$\expgen{G'}{N_i} = \expgen{G'}{X} \cdot \dol_{i} \cdot
  \expgen{G'}{Y},$$ where $X, Y \in V \cup \Sigma$ are such
  that $\rhsgen{G}{N_i} = XY$.
\end{observation}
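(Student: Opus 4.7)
The plan is to unwind the definitions; this observation is an immediate factoring of $\expgen{G'}{N_i}$ through the first derivation step, and is singled out as a named fact primarily so that later arguments (about $\alpha(G)$, maximal strings in the intermediate grammars $G_{\mathcal{I}}$, and the inductive characterization of $\mathrm{bexp}_{\mathcal{I}}$) can cite it cleanly.

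First I would invoke admissibility. Since $G$ is admissible (Definition \ref{def:admissible}), we have $|\rhsgen{G}{N_i}| = 2$, so we may fix $X, Y \in V \cup \Sigma$ with $\rhsgen{G}{N_i} = XY$. By Definition \ref{def:alpha}, the grammar $G' = (V, \Sigma', R', S)$ shares its nonterminal set with $G$ and its rule for $N_i$ is precisely $\rhsgen{G'}{N_i} = X \cdot \dol_i \cdot Y$, where $\dol_i \in \Sigma'$ is a terminal of $G'$.

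Next I would invoke the standard compositionality of expansion in an SLG: for any $u, v \in (V \cup \Sigma')^{*}$, $\expgen{G'}{uv} = \expgen{G'}{u} \cdot \expgen{G'}{v}$. This is the usual consequence of the facts that (a) each derivation step rewrites a single occurrence of a nonterminal at a fixed position, so derivations of $u$ and of $v$ can be concatenated into a derivation of $uv$, and (b) in an SLG the expansion is uniquely determined (noted just after the definition of $\exp{}$ in Section \ref{sec:prelim}). Applying this to the single-step derivation $N_i \Rightarrow X \cdot \dol_i \cdot Y$, and using that $\dol_i$ is a terminal of $G'$ and hence $\expgen{G'}{\dol_i} = \dol_i$, I obtain
\[
  \expgen{G'}{N_i} \;=\; \expgen{G'}{X \cdot \dol_i \cdot Y} \;=\; \expgen{G'}{X} \cdot \dol_i \cdot \expgen{G'}{Y},
\]
which is the desired identity.

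There is no real obstacle here; the entire content of the observation is that the fresh sentinel $\dol_i$, being a terminal of $G'$, survives the expansion unchanged and therefore separates the expansions of the two children in the unique rule for $N_i$.
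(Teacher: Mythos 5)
Your proof is correct and matches the intended reasoning: the paper states this as an Observation with no accompanying proof precisely because it is the immediate unwinding of admissibility, the construction of $G'$ in Definition \ref{def:alpha}, compositionality of $\exp{\cdot}$ over concatenation, and the fact that $\dol_i$ is a terminal so it expands to itself. Nothing to add.
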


For the duration of this section, let us fix some admissible SLG $G =
(V, \Sigma, R, S)$ and some ordering $(N_i)_{i \in [1 \dd |V|]}$ of
$V$ satisfying $|\expgen{G}{N_i}| \leq |\expgen{G}{N_{i+1}}|$ for $i
\in [1 \dd |V|)$.  We let $G' = (V, \Sigma', R', S)$ be the
corresponding SLG defined as in \cref{def:alpha}, i.e., we have
$\Sigma' = \Sigma \cup \{\dol_{i} : i \in [1 \dd |V|]\}$, and the set
of rules $R'$ is defined so that for every $i \in [1 \dd |V|]$,
$\rhsgen{G'}{N_i} = A \cdot \dol_i \cdot B$, where $A, B \in V \cup
\Sigma$ are such that $\rhsgen{G}{N_i} = AB$.  We also denote
$\Sigma'' = \Sigma' \cup \{\hash_i : i \in [1 \dd 2|V|]\}$ and let $w
\in \alpha(G)$ be the string corresponding to the above ordering
$(N_i)_{i \in [1 \dd |V|]}$, i.e., $w = \bigodot_{i=1,\dots,|V|}
\expgen{G'}{N_i} \cdot \hash_{2i-1} \cdot \expgen{G'}{N_i} \cdot
\hash_{2i}$. Let $\{M_1, M_2, \dots, M_{|V|}\}$ be a set of $|V|$
elements such that $\{M_1, \dots, M_{|V|}\} \cap (V \cup \Sigma'') =
\emptyset$.

\begin{observation}\label{ob:interleave}
  Let $X \in \Sigma \cup V$ and $m = |\expgen{G}{X}|$. Then,
  $|\expgen{G'}{X}| = 2m - 1$, and for every $j \in [1 \dd m]$, it
  holds $\expgen{G}{X}[j] = \expgen{G'}{X}[2j - 1]$.
\end{observation}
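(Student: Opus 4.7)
The plan is to prove both claims simultaneously by induction on the size of the parse tree $\mathcal{T}_G(X)$, equivalently on $m = |\expgen{G}{X}|$. The only nontrivial structural ingredient is the admissibility of $G$ (\cref{def:admissible}), which guarantees that every nonterminal has a right-hand side of length exactly $2$; this is what keeps the arithmetic in the inductive step clean.

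For the base case $m = 1$, we necessarily have $X \in \Sigma$ (an admissible grammar cannot produce a single-symbol expansion from a nonterminal). In that case $\expgen{G}{X} = X = \expgen{G'}{X}$ since $X$ is a terminal shared by both grammars, and so $|\expgen{G'}{X}| = 1 = 2m - 1$ and $\expgen{G'}{X}[2\cdot 1 - 1] = X = \expgen{G}{X}[1]$.

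For the inductive step, suppose $m \geq 2$, so $X \in V$. By admissibility, $\rhsgen{G}{X} = AB$ for some $A, B \in V \cup \Sigma$, hence by \cref{def:alpha} we have $\rhsgen{G'}{X} = A \cdot \dol_i \cdot B$ where $i$ is the index with $X = N_i$. Set $m_A = |\expgen{G}{A}|$ and $m_B = |\expgen{G}{B}|$, so $m = m_A + m_B$, and both $m_A, m_B < m$, allowing the inductive hypothesis to apply to $A$ and $B$ (for terminals, the base case handles them directly). Then
\[
  |\expgen{G'}{X}| \;=\; |\expgen{G'}{A}| + 1 + |\expgen{G'}{B}| \;=\; (2m_A - 1) + 1 + (2m_B - 1) \;=\; 2m - 1,
\]
which is the first claim. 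For the interleaving claim, fix $j \in [1 \dd m]$. If $j \leq m_A$, then $\expgen{G}{X}[j] = \expgen{G}{A}[j]$ and $2j - 1 \leq 2m_A - 1 = |\expgen{G'}{A}|$, so $\expgen{G'}{X}[2j-1] = \expgen{G'}{A}[2j-1]$, which equals $\expgen{G}{A}[j]$ by the inductive hypothesis. If $j > m_A$, let $j' = j - m_A \in [1 \dd m_B]$; then $\expgen{G}{X}[j] = \expgen{G}{B}[j']$, and
\[
  \expgen{G'}{X}[2j - 1] \;=\; \expgen{G'}{B}\bigl[2j - 1 - (2m_A - 1) - 1\bigr] \;=\; \expgen{G'}{B}[2j' - 1],
\]
which equals $\expgen{G}{B}[j']$ by the inductive hypothesis. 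In both subcases $\expgen{G}{X}[j] = \expgen{G'}{X}[2j-1]$, completing the induction.

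The only place a subtle issue could arise is ensuring that the induction is well-founded: expanding nonterminals in $G$ and in $G'$ uses the same linear order $\prec$ on $V$ (the sets of nonterminals and the right-hand-side nonterminal occurrences coincide by \cref{def:alpha}), so $\expgen{G'}{X}$ is well-defined and the recursive decomposition $\expgen{G'}{X} = \expgen{G'}{A} \cdot \dol_i \cdot \expgen{G'}{B}$ is valid. This observation also matches the content of \cref{obs:exp} stated just above. Beyond this, the proof is just index bookkeeping.
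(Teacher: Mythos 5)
Your proof is correct and is exactly the ``easy induction'' that the paper treats as self-evident (the paper states \cref{ob:interleave} without proof, and later, in the proof of \cref{ob:alpha}, refers to precisely this fact as following ``by an easy induction''). The decomposition $\expgen{G'}{X} = \expgen{G'}{A}\cdot\dol_i\cdot\expgen{G'}{B}$ plus the index bookkeeping you perform is the intended argument; the only thing you add is spelling out the details the authors chose to omit.
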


\begin{lemma}\label{lm:delta}
  Let $L(G) = \{u\}$ and $m = |u|$.  There exists $\delta \geq 0$ such
  that for every $j \in [1 \dd m]$, it holds $u[j] = w[\delta + 2j -
  1]$.
\end{lemma}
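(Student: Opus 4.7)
The plan is to take $\delta$ to be the position immediately preceding the first copy of $\expgen{G'}{S}$ inside $w$, and then reduce the claim to \cref{ob:interleave}, which already provides the $j \mapsto 2j-1$ correspondence between positions of $\expgen{G}{X}$ and $\expgen{G'}{X}$ for any symbol $X$.

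First I would note that $L(G) = \{u\}$ means $u = \expgen{G}{S}$, and since $G$ is admissible we have $S \in V$. The requirement $\{N_i : i \in [1 \dd |V|]\} = V$ from \cref{def:alpha} therefore guarantees a (unique) index $i^{*} \in [1 \dd |V|]$ with $N_{i^{*}} = S$. Reading off $w$ from its defining concatenation, the first copy of $\expgen{G'}{N_{i^{*}}} = \expgen{G'}{S}$ occupies a contiguous interval of length $|\expgen{G'}{S}|$ starting at position $p := 1 + \sum_{i < i^{*}}\bigl(2|\expgen{G'}{N_i}| + 2\bigr)$. I would then set $\delta := p - 1$, which gives the identity $w[\delta + k] = \expgen{G'}{S}[k]$ for every $k \in [1 \dd |\expgen{G'}{S}|]$.

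To finish, I would apply \cref{ob:interleave} with $X = S$, which provides both $|\expgen{G'}{S}| = 2m - 1$ (using $m = |\expgen{G}{S}| = |u|$) and $u[j] = \expgen{G}{S}[j] = \expgen{G'}{S}[2j-1]$ for every $j \in [1 \dd m]$. Since $2j - 1 \in [1 \dd 2m - 1]$ for such $j$, combining with the previous identity yields $w[\delta + 2j - 1] = \expgen{G'}{S}[2j-1] = u[j]$, as claimed. The argument is almost entirely mechanical — the only real ingredients are the existence of $i^{*}$ with $N_{i^{*}} = S$ and the interleaving property of \cref{ob:interleave} — so I do not expect any nontrivial obstacle; the only point worth being careful about is ensuring that $2j-1$ stays within the valid index range of $\expgen{G'}{S}$, which follows immediately from $j \leq m$.
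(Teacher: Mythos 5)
Your proof is correct and follows essentially the same route as the paper: both locate a copy of $\expgen{G'}{S}$ inside $w$ (the paper merely asserts its existence from \cref{def:alpha}, while you compute the explicit offset $p$) and then invoke \cref{ob:interleave} with $X = S$ to transfer the $j \mapsto 2j-1$ correspondence.
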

\begin{proof}
  Recall that $u = \expgen{G}{S}$.  By \cref{ob:interleave},
  $|\expgen{G'}{S}| = 2m - 1$, and for every $j \in [1 \dd m]$, it
  holds $u[j] = \expgen{G'}{S}[2j - 1]$. Observe now that by
  \cref{def:alpha}, the string $\expgen{G'}{S}$ occurs in $w$, i.e.,
  there exists $\delta \geq 0$ such that $w[\delta + 1 \dd \delta + 2m
  - 1] = \expgen{G'}{S}$.  By the earlier observation we thus have,
  for every $j \in [1 \dd m]$, $u[j] = \expgen{G'}{S}[2j - 1] =
  w[\delta + 2j - 1]$.
\end{proof}

\begin{lemma}\label{lm:alpha-length}
  It holds $|w| = 4 \sum_{N \in V}|\expgen{G}{N}|$.
\end{lemma}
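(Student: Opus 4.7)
The plan is a direct computation using the definition of $w$ fixed just above the lemma and \cref{ob:interleave}. By construction,
\[
  w = \bigodot_{i=1,\dots,|V|} \expgen{G'}{N_i} \cdot \hash_{2i-1} \cdot \expgen{G'}{N_i} \cdot \hash_{2i},
\]
so the $i$th block in this concatenation contributes exactly $2|\expgen{G'}{N_i}| + 2$ symbols to $|w|$ (two copies of $\expgen{G'}{N_i}$ and the two sentinels $\hash_{2i-1},\hash_{2i}$, which are single symbols). Summing over $i$, I get $|w| = \sum_{i=1}^{|V|}\bigl(2|\expgen{G'}{N_i}| + 2\bigr)$.

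Next, I would invoke \cref{ob:interleave} applied to each $N_i \in V$, which gives $|\expgen{G'}{N_i}| = 2|\expgen{G}{N_i}| - 1$. Substituting this into the previous expression, each summand becomes $2(2|\expgen{G}{N_i}| - 1) + 2 = 4|\expgen{G}{N_i}|$, and the $-2$ coming from the expansion identity cancels exactly with the $+2$ coming from the two $\hash$ sentinels. Therefore
\[
  |w| \;=\; \sum_{i=1}^{|V|} 4|\expgen{G}{N_i}| \;=\; 4 \sum_{N \in V} |\expgen{G}{N}|,
\]
where in the last equality I use that $(N_i)_{i \in [1 \dd |V|]}$ enumerates $V$ without repetition (as guaranteed by the first bullet of \cref{def:alpha}).

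There is no real obstacle here: the statement is essentially bookkeeping. The only thing to be careful about is that each $\hash_j$ is a single symbol of $\Sigma''$ (so contributes $1$ to the length, not more), and that the ordering $(N_i)$ is a bijection with $V$ so the final sum over $i$ may be rewritten as a sum over $N \in V$. Both points are immediate from \cref{def:alpha}.
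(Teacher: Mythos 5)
Your proof is correct and follows essentially the same route as the paper's: expand $w$ by its definition into blocks of length $2|\expgen{G'}{N_i}| + 2$, substitute $|\expgen{G'}{N_i}| = 2|\expgen{G}{N_i}| - 1$ from \cref{ob:interleave}, and simplify. The bookkeeping and cancellation are identical.
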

\begin{proof}
  By \cref{ob:interleave}, for every $N \in V$, it holds
  $|\expgen{G'}{N}| = 2|\expgen{G}{N}| - 1$. Thus,
  \begin{align*}
    |w|
      &= |\textstyle\bigodot_{i=1,\dots,|V|}
         \expgen{G'}{N_i} \cdot \hash_{2i-1} \cdot
         \expgen{G'}{N_i} \cdot \hash_{2i}|\\
      &= \textstyle\sum_{N \in V} 2 + 2|\expgen{G'}{N}|\\
      &= \textstyle\sum_{N \in V} 2 + 2(2|\expgen{G}{N}| - 1)\\
      &= 4\textstyle\sum_{N \in V} |\expgen{G}{N}|.
      \qedhere
  \end{align*}
\end{proof}

\begin{definition}\label{def:bexp}
  Consider any $\mathcal{I} \subseteq \{1, \dots, |V|\}$.  Let ${\rm
  bexp}_{\mathcal{I}} : \Sigma \cup V \rightarrow (\Sigma' \cup
  \{M_i\}_{i \in \mathcal{I}})^{+}$ be an auxiliary function defined
  so that for every $c \in \Sigma$, it holds ${\rm
  bexp}_{\mathcal{I}}(c) = c$, and for every $i \in \{1, \dots,
  |V|\}$, we have
  \vspace{2ex}
  \[
    {\rm bexp}_{\mathcal{I}}(N_i) =
    \begin{cases}
      M_i & \text{if }i \in \mathcal{I}, \\
      {\rm bexp}_{\mathcal{I}}(X) \cdot \dol_{i}
      \cdot {\rm bexp}_{\mathcal{I}}(Y) & \text{otherwise},
    \end{cases}
    \vspace{2ex}
  \]
  where $X, Y \in V \cup \Sigma$ are such that $\rhsgen{G}{N_i} = XY$.
\end{definition}

\begin{remark}\label{rm:bexp}
  To see the motivation for its name, observe that the function ${\rm
  bexp}_{\mathcal{I}}$ performs a ``bounded expansion'' of any
  element of $V \cup \Sigma$ (compare to \cref{obs:exp}).  More
  precisely, for every $i \in [1 \dd |V|]$,
  ${\rm bexp}_{\mathcal{I}}(N_i)$ returns a string obtained by
  initializing the output string to $N_i$, and then
  performing a minimal number of operations that replace a nonterminal
  with its definition in $G'$ so that the resulting string contains only
  symbols in $\Sigma' \cup \{N_i\}_{i \in \mathcal{I}}$.  Each
  occurrence of the remaining nonterminal from $\{N_i\}_{i \in
  \mathcal{I}}$ is then replaced with the matching symbol from
  $\{M_i\}_{i \in \mathcal{I}}$.  For example, if $\mathcal{I} =
  \emptyset$, then for every $i \in \{1, \dots, |V|\}$, we have ${\rm
  bexp}_{\mathcal{I}}(N_i) = \expgen{G'}{N_i}$. On the other hand,
  if $\mathcal{I} = \{1, 2, \dots, |V|\}$, then for every $i \in \{1,
  \dots, |V|\}$, it holds ${\rm bexp}_{\mathcal{I}}(N_i) = M_i$.
\end{remark}

\begin{definition}\label{def:bigG}
  For any $\mathcal{I} \subseteq \{1, \dots, |V|\}$, we define a
  grammar $G_{\mathcal{I}} = (V_{\mathcal{I}}, \Sigma'',
  R_{\mathcal{I}}, S)$, where $V_{\mathcal{I}} = \{S\} \cup \{M_i\}_{i
  \in \mathcal{I}}$.  For any $i \in \mathcal{I}$, we let
  \[
    \rhsgen{G_{\mathcal{I}}}{M_i} = {\rm bexp}_{\mathcal{I}}(X) \cdot
    \dol_{i} \cdot {\rm bexp}_{\mathcal{I}}(Y),
  \]
  where $X, Y \in V \cup \Sigma$ are such that $\rhsgen{G}{N_i} = XY$.
  We also set
  \[
    \rhsgen{G_{\mathcal{I}}}{S} = \bigodot_{i=1,\dots,|V|} {\rm
    bexp}_{\mathcal{I}}(N_i) \cdot \hash_{2i-1} \cdot {\rm
    bexp}_{\mathcal{I}}(N_i) \cdot \hash_{2i}.
  \]
  We then denote $\mathbb{G} = \{G_{\mathcal{I}} : \mathcal{I}
  \subseteq \{1, \dots, |V|\}\}$.
\end{definition}

\begin{remark}
  In a series of lemmas we will now establish that every intermediate
  grammar occurring during the processing of $w$ using a global
  algorithm is isomorphic to one of the grammars in
  $\mathbb{G}$. To this end, we will establish the characterization of
  maximal strings in elements of $\mathbb{G}$; in particular, that
  there exists precisely one element of $\mathbb{G}$ that does not
  contain any maximal strings.  This will yield the unique grammar in
  $\mathbb{G}$ that is isomorphic to the output of every global
  algorithm on $w$.
\end{remark}

\begin{lemma}\label{lm:exp-bigG}
  For every $\mathcal{I} \subseteq \{1, \dots, |V|\}$, it holds
  $L(G_{\mathcal{I}}) = \{w\}$.
\end{lemma}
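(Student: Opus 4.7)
The plan is to prove the stronger claim that for every $X \in V \cup \Sigma$,
\[
\expgen{G_{\mathcal{I}}}{{\rm bexp}_{\mathcal{I}}(X)} = \expgen{G'}{X},
\]
from which the lemma follows by unfolding the definition of $\rhsgen{G_{\mathcal{I}}}{S}$. To get the claim, I would do structural induction using the linear order $\prec$ on $V$ guaranteed by the SLG definition (so whenever $Y$ occurs in $\rhsgen{G}{X}$, we have $X \prec Y$ and hence $|\expgen{G}{Y}| < |\expgen{G}{X}|$, since $G$ is admissible and every rule has length $2$).

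For the base case, if $X = c \in \Sigma$, then ${\rm bexp}_{\mathcal{I}}(c) = c$ by \cref{def:bexp}, and both expansions equal $c$. For the inductive step, suppose $X = N_i$ and the claim holds for all $Y \in V \cup \Sigma$ with $N_i \prec Y$. Let $A, B \in V \cup \Sigma$ be such that $\rhsgen{G}{N_i} = AB$. There are two cases depending on membership in $\mathcal{I}$:
\begin{itemize}[itemsep=0pt,parsep=0pt]
\item If $i \in \mathcal{I}$, then ${\rm bexp}_{\mathcal{I}}(N_i) = M_i$, and \cref{def:bigG} gives $\rhsgen{G_{\mathcal{I}}}{M_i} = {\rm bexp}_{\mathcal{I}}(A) \cdot \dol_i \cdot {\rm bexp}_{\mathcal{I}}(B)$. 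Applying the inductive hypothesis to $A$ and $B$ and then \cref{obs:exp}, we get $\expgen{G_{\mathcal{I}}}{M_i} = \expgen{G'}{A} \cdot \dol_i \cdot \expgen{G'}{B} = \expgen{G'}{N_i}$.
\item If $i \notin \mathcal{I}$, then ${\rm bexp}_{\mathcal{I}}(N_i) = {\rm bexp}_{\mathcal{I}}(A) \cdot \dol_i \cdot {\rm bexp}_{\mathcal{I}}(B)$, and the expansion under $G_{\mathcal{I}}$ factors as $\expgen{G_{\mathcal{I}}}{{\rm bexp}_{\mathcal{I}}(A)} \cdot \dol_i \cdot \expgen{G_{\mathcal{I}}}{{\rm bexp}_{\mathcal{I}}(B)}$, which by the inductive hypothesis and \cref{obs:exp} again equals $\expgen{G'}{N_i}$.
\end{itemize}

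Once the claim is established, applying it to each $N_i$ appearing in $\rhsgen{G_{\mathcal{I}}}{S} = \bigodot_{i=1,\dots,|V|} {\rm bexp}_{\mathcal{I}}(N_i) \cdot \hash_{2i-1} \cdot {\rm bexp}_{\mathcal{I}}(N_i) \cdot \hash_{2i}$ yields
\[
\expgen{G_{\mathcal{I}}}{S} = \bigodot_{i=1,\dots,|V|} \expgen{G'}{N_i} \cdot \hash_{2i-1} \cdot \expgen{G'}{N_i} \cdot \hash_{2i} = w,
\]
so $L(G_{\mathcal{I}}) = \{\expgen{G_{\mathcal{I}}}{S}\} = \{w\}$. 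The only subtlety worth checking is that $G_{\mathcal{I}}$ is a well-defined SLG at all, i.e., that there is no cyclic dependency among the $M_i$'s: this follows because each $M_i$'s definition only contains $M_j$ with $N_i \prec N_j$ (any such $M_j$ was introduced by expanding some descendant of $N_i$ in the parse tree of $G$), so the order $\prec$ restricted to $\{N_i : i \in \mathcal{I}\}$ transfers to a valid order on $\{M_i : i \in \mathcal{I}\}$. Beyond this observation, the proof should be entirely mechanical; the main obstacle is just verifying the case analysis cleanly, since no further combinatorial insight is required.
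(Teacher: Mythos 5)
Your proof is correct and takes essentially the same route as the paper's: both establish the key claim $\expgen{G_{\mathcal{I}}}{{\rm bexp}_{\mathcal{I}}(X)} = \expgen{G'}{X}$ by induction (the paper inducts on the index $i$ in the ordering $N_1, \dots, N_{|V|}$, you induct on the SLG order $\prec$; these are interchangeable here since that ordering is compatible with $\prec$), with the same case split on $i \in \mathcal{I}$ vs.\ $i \notin \mathcal{I}$. The closing remark on well-definedness of $G_{\mathcal{I}}$ is a nice touch that the paper leaves implicit.
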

\begin{proof}

  First, we prove by induction on $i$, that for every $i \in \{1,
  \dots, |V|\}$, it holds $\expgen{G_{\mathcal{I}}}{{\rm
  bexp}_{\mathcal{I}}(N_i)} \allowbreak = \expgen{G'}{N_i}$.  Note that
  $\expgen{G_{\mathcal{I}}}{{\rm bexp}_{\mathcal{I}}(N_i)}$ is well
  defined by ${\rm bexp}_{\mathcal{I}}(N_i) \,{\in}\, (\Sigma' \cup
  \{M_i\}_{i \in \mathcal{I}})^{+} \,{\subseteq}\, (\Sigma'' \cup
  V_{\mathcal{I}})^{+}$.

  We first prove the induction base. Let $X, Y \in V \cup \Sigma$ be
  such that $\rhsgen{G}{N_1} = XY$. The condition $|\expgen{G}{N_i}|
  \leq \dots \leq |\expgen{G}{N_{|V|}}|$ implies that $X, Y \in
  \Sigma$.  By $\rhsgen{G'}{N_1} = X \cdot \dol_{1} \cdot Y$, we
  then have $\expgen{G'}{N_1} = \expgen{G'}{X} \cdot
  \expgen{G'}{\dol_{1}} \cdot \expgen{G'}{Y} = X \cdot \dol_{1}
  \cdot Y$. Consider now two cases:
  \begin{itemize}[itemsep=0pt,parsep=0pt]
  \item First, assume that $1 \not\in \mathcal{I}$. By
    \cref{def:bexp}, we then immediately obtain ${\rm
    bexp}_{\mathcal{I}}(N_1) = {\rm bexp}_{\mathcal{I}}(X) \cdot
    \dol_{1} \cdot {\rm bexp}_{\mathcal{I}}(Y) = X \cdot \dol_{1}
    \cdot Y$.
  \item Let us now assume that $1 \in \mathcal{I}$. Applying
    \cref{def:bexp} yields ${\rm bexp}_{\mathcal{I}}(N_1) = M_1$. On
    the other hand, by \cref{def:bigG}, $\rhsgen{G_{\mathcal{I}}}{M_1}
    = {\rm bexp}_{\mathcal{I}}(X) \cdot \dol_{1} \cdot {\rm
    bexp}_{\mathcal{I}}(Y) = X \cdot \dol_{1} \cdot Y$.
    Consequently, $\expgen{G_{\mathcal{I}}}{{\rm
    bexp}_{\mathcal{I}}(N_1)} = \expgen{G_{\mathcal{I}}}{M_1} =
    \expgen{G_{\mathcal{I}}}{X} \cdot
    \expgen{G_{\mathcal{I}}}{\dol_{1}} \cdot
    \expgen{G_{\mathcal{I}}}{Y} = X \cdot \dol_{1} \cdot Y$.
  \end{itemize}

  We now prove the induction step. Consider $i > 1$.  Let $X, Y \in V
  \cup \Sigma$ be such that $\rhsgen{G}{N_i} = XY$. Then,
  $\rhsgen{G'}{N_i} = X \cdot \dol_{i} \cdot Y$, and hence
  $\expgen{G'}{N_i} = \expgen{G'}{X} \cdot \dol_{i} \cdot
  \expgen{G'}{Y}$. Observe also that we then have
  $\expgen{G_{\mathcal{I}}}{{\rm bexp}_{\mathcal{I}}(X)} =
  \expgen{G'}{X}$.  To see this, note that if $X \in V$, then
  $\rhsgen{G'}{N_i} = X \cdot \dol_{i} \cdot Y$ implies that
  $|\expgen{G}{X}| < |\expgen{G}{N_i}|$. Hence, there exists $j \in
  \{1, \dots, i - 1\}$ such that $X = N_j$. By the inductive
  assumption, we then have $\expgen{G_{\mathcal{I}}}{{\rm
  bexp}_{\mathcal{I}}(X)} = \expgen{G_{\mathcal{I}}}{{\rm
  bexp}_{\mathcal{I}}(N_j)} = \expgen{G'}{N_j} =
  \expgen{G'}{X}$. Otherwise (i.e., if $X \in \Sigma$), it follows by
  definition that $\expgen{G_{\mathcal{I}}}{{\rm
  bexp}_{\mathcal{I}}(X)} = \expgen{G_{\mathcal{I}}}{X} = X =
  \expgen{G'}{X}$. We have thus proved $\expgen{G_{\mathcal{I}}}{{\rm
  bexp}_{\mathcal{I}}(X)} = \expgen{G'}{X}$.  Analogously, it
  holds $\expgen{G_{\mathcal{I}}}{{\rm bexp}_{\mathcal{I}}(Y)} =
  \expgen{G'}{Y}$.  We are now ready to prove the induction
  step. Consider two cases:
  \begin{itemize}[itemsep=0pt,parsep=0pt]
  \item First, assume $i \not\in \mathcal{I}$. By \cref{def:bexp}, we
    then have ${\rm bexp}_{\mathcal{I}}(N_i) = {\rm
    bexp}_{\mathcal{I}}(X) \cdot \dol_{i} \cdot {\rm
    bexp}_{\mathcal{I}}(Y)$. By combining this with the above
    properties of $X$ and $Y$, we thus have
    $\expgen{G_{\mathcal{I}}}{{\rm bexp}_{\mathcal{I}}(N_i)} =
    \expgen{G_{\mathcal{I}}}{{\rm bexp}_{\mathcal{I}}(X)} \cdot
    \expgen{G_{\mathcal{I}}}{\dol_{i}} \cdot
    \expgen{G_{\mathcal{I}}}{{\rm bexp}_{\mathcal{I}}(Y)} =
    \expgen{G'}{X} \cdot \dol_{i} \cdot \expgen{G'}{Y} =
    \expgen{G'}{N_i}$.
  \item Let us now assume $i \in \mathcal{I}$. Applying
    \cref{def:bexp} yields ${\rm bexp}_{\mathcal{I}}(N_i) = M_i$. On
    the other hand, by \cref{def:bigG}, $\rhsgen{G_{\mathcal{I}}}{M_i}
    = {\rm bexp}_{\mathcal{I}}(X) \cdot \dol_{i} \cdot {\rm
    bexp}_{\mathcal{I}}(Y)$.  Combining again with the above
    observations about $X$ and $Y$ we thus obtain
    $\expgen{G_{\mathcal{I}}}{{\rm bexp}_{\mathcal{I}}(N_i)} =
    \expgen{G_{\mathcal{I}}}{M_i} = \expgen{G_{\mathcal{I}}}{{\rm
    bexp}_{\mathcal{I}}(X)} \cdot
    \expgen{G_{\mathcal{I}}}{\dol_{1}} \cdot
    \expgen{G_{\mathcal{I}}}{{\rm bexp}_{\mathcal{I}}(Y)} =
    \expgen{G'}{X} \cdot \dol_{i} \cdot \expgen{G'}{Y} =
    \expgen{G'}{N_i}$.
  \end{itemize}

  Utilizing the above property, and applying
  \cref{def:bigG,def:alpha}, we thus obtain:
  \begin{align*}
    L(G_{\mathcal{I}})
      &= \{\expgen{G_{\mathcal{I}}}{S}\}\\
      &= \{\textstyle\bigodot_{i=1,\dots,|V|}
         \expgen{G_{\mathcal{I}}}{{\rm bexp}_{\mathcal{I}}(N_i)} \cdot
         \expgen{G_{\mathcal{I}}}{\hash_{2i-1}} \cdot 
         \expgen{G_{\mathcal{I}}}{{\rm bexp}_{\mathcal{I}}(N_i)} \cdot
         \expgen{G_{\mathcal{I}}}{\hash_{2i}}\}\\
      &= \{\textstyle\bigodot_{i=1,\dots,|V|}
         \expgen{G'}{N_i} \cdot
         \hash_{2i-1} \cdot 
         \expgen{G'}{N_i} \cdot
         \hash_{2i}\} = \{w\}. \qedhere
  \end{align*}
\end{proof}

\begin{lemma}\label{lm:dollars}
  Let $\mathcal{I} \subseteq \{1, \dots, |V|\}$, $i \in \{1, \dots,
  |V|\}$, and $s = {\rm bexp}_{\mathcal{I}}(N_i)$.  Then, $|s|$ is odd
  and:
  \begin{itemize}[itemsep=0pt,parsep=0.0ex]
  \item For every $p \in [1 \dd \lceil \tfrac{|s|}{2} \rceil)$, it
    holds $s[2p] \in \{\dol_{j}\}_{j \in \{1, \dots, i\} \setminus
    \mathcal{I}}$, and
  \item For every $p \in [1 \dd \lceil \tfrac{|s|}{2} \rceil]$, it
    holds $s[2p - 1] \in \Sigma \cup \{M_j\}_{j \in \{1,\dots,i\} \cap
    \mathcal{I}}$.
  \end{itemize} 
\end{lemma}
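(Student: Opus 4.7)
The plan is to prove both assertions simultaneously by strong induction on $i$, following the recursive structure of ${\rm bexp}_{\mathcal{I}}$ from \cref{def:bexp} and exploiting the ordering of nonterminals fixed for this section. The enabling observation is that if $\rhsgen{G}{N_i} = XY$ and $X = N_{j_X} \in V$, then admissibility ($|\rhsgen{G}{N_i}| = 2$) together with $|\expgen{G}{Y}| \ge 1$ forces $|\expgen{G}{X}| < |\expgen{G}{N_i}|$ strictly; since the ordering is by nondecreasing expansion length, this gives $j_X < i$, and symmetrically $j_Y < i$. Hence the inductive hypothesis is available for both children whenever they are nonterminals.

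For the base case $i = 1$, the ordering forces both $X$ and $Y$ to lie in $\Sigma$. If $1 \in \mathcal{I}$, then $s = M_1$ has odd length 1 with $s[1] = M_1$ placed correctly; otherwise $s = X \cdot \dol_1 \cdot Y$ has length 3, with $s[1], s[3] \in \Sigma$ and $s[2] = \dol_1$ where $1 \in \{1\} \setminus \mathcal{I}$. For the inductive step $i > 1$, I would split on whether $i \in \mathcal{I}$: the positive case is immediate from $s = M_i$; in the negative case we have $s = {\rm bexp}_{\mathcal{I}}(X) \cdot \dol_i \cdot {\rm bexp}_{\mathcal{I}}(Y)$, and we apply the IH to each child (or, when a child $c$ lies in $\Sigma$, simply use ${\rm bexp}_{\mathcal{I}}(c) = c$, which has odd length 1 and trivially satisfies the alternation at the odd position).

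The assembly is where the parity argument pays off: because $|{\rm bexp}_{\mathcal{I}}(X)|$ is odd by induction, the inserted $\dol_i$ falls at an even position of $s$, and ${\rm bexp}_{\mathcal{I}}(Y)$ resumes at an odd position, preserving the alternation ``odd position in $\Sigma \cup \{M_j\}$, even position in $\{\dol_j\}$''. The total length $|{\rm bexp}_{\mathcal{I}}(X)| + 1 + |{\rm bexp}_{\mathcal{I}}(Y)|$ is a sum of two odd numbers plus 1, hence odd. All $\dol$- and $M$-indices appearing inside ${\rm bexp}_{\mathcal{I}}(X)$ and ${\rm bexp}_{\mathcal{I}}(Y)$ are bounded by $j_X, j_Y < i$ by the IH, and the newly inserted $\dol_i$ contributes index $i \in \{1, \dots, i\} \setminus \mathcal{I}$, so the index-set conditions tighten from $\{1,\dots,j_X\} \cup \{1,\dots,j_Y\}$ up to exactly $\{1,\dots,i\}$ as required.

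The main obstacle I anticipate is the strict inequality $j_X, j_Y < i$, which crucially combines admissibility with the sorted ordering of nonterminals chosen in \cref{def:alpha}; without strictness, the IH would not apply to the children of $N_i$ and the induction would collapse. Beyond this, the remaining work is positional bookkeeping: verifying that ``odd $+$ even $+$ odd position'' patterns concatenate into a single odd-length alternating sequence, and tracking the monotone growth of the index set.
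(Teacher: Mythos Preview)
Your proposal is correct and follows essentially the same approach as the paper's proof: strong induction on $i$, with the same case split on whether $i \in \mathcal{I}$, the same use of the strict inequality $|\expgen{G}{X}| < |\expgen{G}{N_i}|$ (hence $j_X < i$) to invoke the inductive hypothesis on the children, and the same parity bookkeeping to show that concatenating two odd-length alternating strings with a $\dol_i$ in between preserves the odd/even pattern. The paper spells out the positional case analysis (three cases for the even positions, two for the odd) more explicitly than your sketch, but the argument is the same.
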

\begin{proof}

  We proceed by induction on $i$. Let $i = 1$.  If $1 \in
  \mathcal{I}$, then by \cref{def:bexp}, ${\rm
  bexp}_{\mathcal{I}}(N_1) = M_1$.  The first claim is vacuously
  true, while the second holds by $\{1, \dots, i\} \cap \mathcal{I} =
  \{1\}$. Let us thus assume $1 \not\in \mathcal{I}$. The assumption
  $|\expgen{G}{N_1}| \leq \dots \leq |\expgen{G}{N_{|V|}}|$ then
  yields $\rhsgen{G}{N_1} \in \Sigma^{*}$. Let $X, Y \in \Sigma$ be
  such that $\rhsgen{G}{N_1} = XY$.  Applying \cref{def:bexp} then
  yields ${\rm bexp}_{\mathcal{I}}(N_1) = {\rm bexp}_{\mathcal{I}}(X)
  \cdot \dol_{1} \cdot {\rm bexp}_{\mathcal{I}}(Y) = X \cdot
  \dol_{1} \cdot Y$.  The first claim is then satisfied since,
  $\{\dol_{j}\}_{j \in \{1, \dots, i\} \setminus \mathcal{I}} =
  \{\dol_{1}\}$.  The second claim also holds since, letting $s =
  {\rm bexp}_{\mathcal{I}}(N_1)$, we then have $\{s[1], s[3]\} = \{X,
  Y\} \subseteq \Sigma$.

  Let us now assume $i > 1$. If $i \in \mathcal{I}$, then by
  \cref{def:bexp}, ${\rm bexp}_{\mathcal{I}}(N_i) = M_i$. As above,
  the first claim then holds vacuously, while the second claim holds
  by $i \in \{1, \dots, i\} \cap \mathcal{I}$. Let us thus assume $i
  \not\in \mathcal{I}$. Let $X, Y \in \Sigma \cup V$ be such that
  $\rhsgen{G}{N_i} = XY$. Applying \cref{def:bexp} then yields ${\rm
  bexp}_{\mathcal{I}}(N_i) = s_1 \cdot \dol_{i} \cdot s_2$, where
  $s_1 = {\rm bexp}_{\mathcal{I}}(X)$ and $s_2 = {\rm
  bexp}_{\mathcal{I}}(Y)$. Denote $n_1 = |s_1|$ and $n_2 = |s_2|$.
  Observe now that if $X \in \Sigma$, then $s_1 = X$, and thus $n_1$
  is odd.  Otherwise (i.e., $X \not\in\Sigma$), $|\expgen{G}{X}| <
  |\expgen{G}{N_i}|$ implies that there exists $l \in [1 \dd i)$ such
  that $X = N_l$.  By the inductive assumption, $|{\rm
  bexp}_{\mathcal{I}}(N_l)| = |s_1| = n_1$ is odd. Thus, in both cases
  $n_1$ is odd.  Analogously, $n_2$ is odd. Consequently, $|s| = n_1 +
  n_2 + 1$ is odd too. We now show the second claim.
  Let us now take $p \in [1 \dd \lceil \tfrac{|s|}{2} \rceil)$.
  Consider three cases:
  \begin{itemize}[itemsep=0pt,parsep=0pt]
  \item First, assume $p < \lceil \tfrac{n_1}{2} \rceil$. Note that
    then we must have $n_1 > 1$, which in turn implies $X \in V$. By
    $|\expgen{G}{X}| < |\expgen{G}{N_i}|$, there exists $l \in [1 \dd
    i)$ such that $X = N_l$. By the inductive assumption, we thus have
    $s[2p] = s_1[2p] \in \{\dol_{j}\}_{j \in \{1,\dots,l\} \setminus
    \mathcal{I}} \subseteq \{\dol_{j}\}_{j \in \{1, \dots, i\}
    \setminus \mathcal{I}}$.
  \item Next, assume $p = \lceil \tfrac{n_1}{2} \rceil$.  Since $n_1$
    is odd, we have $\lceil \tfrac{n_2}{2} \rceil = \tfrac{n_1 +
    1}{2}$. Thus, $2p = n_1 + 1$ and hence $s[2p] = s[n_1 + 1] =
    \dol_{i}$. We thus have $s[2p] \in \{\dol_{j}\}_{j \in \{1,
    \dots, i\} \setminus \mathcal{I}}$ (note that we used $i \not\in
    \mathcal{I}$).
  \item Finally, assume $p > \lceil \tfrac{n_1}{2} \rceil$.  Note that
    this implies $n_2 > 1$, since otherwise we cannot have $p \in
    (\lceil \tfrac{n_1}{2} \rceil \dd \lceil \tfrac{|s|}{2} \rceil)$.
    This implies $Y \in V$. By $|\expgen{G}{Y}| < |\expgen{G}{N_i}|$,
    there exists $r \in [1 \dd i)$ such that $X = N_r$. Denote $p' =
    \lceil \tfrac{n_1}{2} \rceil$ and $p'' = p - p'$.  As noted
    above, $n_1$ being odd implies $p' = \tfrac{n_1 + 1}{2}$ and hence
    $2p' = n_1 + 1$. Consequently, $s[2p] = s[2p' + 2p''] =
    s_2[2p'']$.  By the inductive assumption we thus have $s[2p] =
    s_2[2p''] \in \{\dol_{j}\}_{j \in \{1, \dots, r\} \setminus
    \mathcal{I}} \subseteq \{\dol_{j}\}_{j \in \{1, \dots, i\}
    \setminus \mathcal{I}}$.
  \end{itemize}
  This concludes the proof of the second claim. To show the third claim,
  let $p \in [1 \dd \lceil \tfrac{|s|}{2} \rceil]$. Then:
  \begin{itemize}[itemsep=0pt,parsep=0pt]
  \item First, assume $p \leq \lceil \tfrac{n_1}{2} \rceil$.  If $X
    \in \Sigma$, then $n_1 = 1$ and $p = 1$. We then indeed have $s[2p
    - 1] = s[1] = X \in \Sigma \subseteq \Sigma \cup \{M_j\}_{j \in
    \{1, \dots, i\} \cap \mathcal{I}}$.  Let us now assume $X \in
    V$. Then, by $|\expgen{G}{X}| < |\expgen{G}{N_i}|$ there exists $l
    \in [1 \dd i)$ such that $X = N_l$. By the inductive assumption,
    we thus have $s[2p - 1] = s_1[2p - 1] \in \Sigma \cup \{M_j\}_{j
    \in \{1, \dots, l\} \cap \mathcal{I}} \subseteq \Sigma \cup
    \{M_j\}_{j \in \{1, \dots, i\} \cap \mathcal{I}}$.
  \item Let us now assume $p > \lceil \tfrac{n_1}{2} \rceil$. Denote
    $p' = \lceil \tfrac{n_1}{2} \rceil$ and $p'' = p - p'$.  Since
    $n_1$ is odd, it holds $p' = \tfrac{n_1 + 1}{2}$. Thus, $2p' = n_1
    + 1$ and hence $s[2p - 1] = s[2p' + 2p'' - 1] = s[n_1 + 1 + 2p'' -
    1] = s_2[2p'' - 1]$.  Consider now two cases. If $Y \in \Sigma$,
    then $n_2 = 1$ and $p'' = 1$. Thus, we indeed have $s[2p - 1] =
    s_2[2p'' - 1] = s_2[1] = Y \in \Sigma \subseteq \Sigma \cup
    \{M_j\}_{j \in \{1, \dots, i\} \cap \mathcal{I}}$. Let us thus
    assume $Y \in V$. By $|\expgen{G}{Y}| < |\expgen{G}{N_i}|$ there
    exists $r \in [1 \dd i)$ such that $Y = N_r$. By the inductive
    assumption we then have $s[2p - 1] = s_2[2p'' - 1] \in \Sigma \cup
    \{M_j\}_{j \in \{1, \dots, r\} \cap \mathcal{I}} \subseteq \Sigma
    \cup \{M_j\}_{j \in \{1, \dots, i\} \cap \mathcal{I}}$. \qedhere
  \end{itemize}
\end{proof}

\begin{lemma}\label{lm:one-dollar}
  Let $\mathcal{I} \subseteq \{1, \dots, |V|\}$ and $i \in \{1, \dots,
  |V|\} \setminus \mathcal{I}$. Then, $|\Occ(\dol_{i}, {\rm
  bexp}_{\mathcal{I}}(N_i))| = 1$.
\end{lemma}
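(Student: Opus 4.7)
The plan is to use the recursive definition of ${\rm bexp}_{\mathcal{I}}$ (\cref{def:bexp}) together with the characterization in \cref{lm:dollars} of which symbols may appear at which positions inside a ${\rm bexp}$ string.

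First, since $i \notin \mathcal{I}$, \cref{def:bexp} gives the explicit factorization
\[
  {\rm bexp}_{\mathcal{I}}(N_i) = s_1 \cdot \dol_i \cdot s_2,
\]
where $s_1 = {\rm bexp}_{\mathcal{I}}(X)$, $s_2 = {\rm bexp}_{\mathcal{I}}(Y)$, and $X, Y \in V \cup \Sigma$ satisfy $\rhsgen{G}{N_i} = XY$. This already contributes one occurrence of $\dol_i$, so it suffices to show that neither $s_1$ nor $s_2$ contains $\dol_i$.

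I would handle $s_1$ (and symmetrically $s_2$) by cases on whether $X$ is terminal. If $X \in \Sigma$, then $s_1 = X \in \Sigma$ and contains no symbol from $\{\dol_j\}_{j\in[1\dd|V|]}$. If $X \in V$, then because $G$ is admissible, $|\rhsgen{G}{N_i}| = 2$ forces $|\expgen{G}{X}| < |\expgen{G}{N_i}|$, so by the chosen ordering of $V$ there exists $j \in [1 \dd i)$ with $X = N_j$. Applying \cref{lm:dollars} to $s_1 = {\rm bexp}_{\mathcal{I}}(N_j)$, every odd position of $s_1$ lies in $\Sigma \cup \{M_l\}_{l \in \{1,\dots,j\}\cap \mathcal{I}}$, and every even position lies in $\{\dol_l\}_{l \in \{1,\dots,j\}\setminus \mathcal{I}}$. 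In particular, every dollar occurring in $s_1$ has subscript at most $j < i$, so $\dol_i$ does not appear. The argument for $s_2$ is identical with $Y$ in place of $X$ and some $r < i$ in place of $j$.

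Combining the three observations, the only occurrence of $\dol_i$ in ${\rm bexp}_{\mathcal{I}}(N_i)$ is the explicit middle symbol from the factorization, giving $|\Occ(\dol_i, {\rm bexp}_{\mathcal{I}}(N_i))| = 1$. I do not anticipate any real obstacle: the slight subtlety is just remembering that a single-symbol pattern cannot have an occurrence ``straddling'' the factorization boundaries, so counting positions inside each factor suffices.
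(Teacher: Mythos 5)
Your proposal is correct and follows essentially the same approach as the paper's own proof: expand ${\rm bexp}_{\mathcal{I}}(N_i) = {\rm bexp}_{\mathcal{I}}(X)\cdot\dol_i\cdot{\rm bexp}_{\mathcal{I}}(Y)$ via \cref{def:bexp}, note that this accounts for one occurrence, then rule out occurrences in the two flanking factors by cases on whether $X$ (resp.\ $Y$) is terminal, invoking \cref{lm:dollars} in the nonterminal case to observe that all dollar subscripts there are strictly smaller than $i$.
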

\begin{proof}
  Let $X, Y \in \Sigma \cup V$ be such that $\rhsgen{G}{N_i} = XY$.
  By $j \not\in \mathcal{I}$ and \cref{def:bexp}, it holds ${\rm
  bexp}_{\mathcal{I}}(N_i) = {\rm bexp}_{\mathcal{I}}(X) \cdot
  \dol_{i} \cdot {\rm bexp}_{\mathcal{I}}(Y)$.  Thus,
  $|\Occ(\dol_{i}, {\rm bexp}_{\mathcal{I}}(N_i))| \geq 1$. It
  remains to show that $\Occ(\dol_{i}, {\rm bexp}_{\mathcal{I}}(X))
  = \emptyset$ and $\Occ(\dol_{i}, {\rm bexp}_{\mathcal{I}}(Y)) =
  \emptyset$.  We only show the first equation (the other follows
  analogously). If $X \in \Sigma$, then ${\rm bexp}_{\mathcal{I}}(X) =
  X$, and hence we immediately obtain $\Occ(\dol_{i}, {\rm
  bexp}_{\mathcal{I}}(X)) = \emptyset$.  Let us thus assume $X \in
  V$. By $|\expgen{G}{X}| < |\expgen{G}{N_i}|$, there exists $l \in [1
  \dd i)$ such that $X = N_l$.  By \cref{lm:dollars}, ${\rm
  bexp}_{\mathcal{I}}(N_l) \in (\Sigma \cup \{M_j\}_{j \in \{1, \dots,
  l\} \cap \mathcal{I}} \cup \{\dol_j\}_{j \in \{1, \dots, l\}
  \setminus \mathcal{I}})^{*}$. Since $l < i$, $\dol_{i}$ thus does
  not occur in ${\rm bexp}_{\mathcal{I}}(N_l) = {\rm
  bexp}_{\mathcal{I}}(X)$, i.e., $\Occ(\dol_{i}, {\rm
  bexp}_{\mathcal{I}}(X)) = \emptyset$.
\end{proof}

\begin{lemma}\label{lm:dollar-on-rhs}
  Let $\mathcal{I} \subseteq \{1, \dots, |V|\}$. If $i \in
  \mathcal{I}$, then $\dol_{i}$ occurs exactly once on the
  right-hand side of $G_{\mathcal{I}}$.
\end{lemma}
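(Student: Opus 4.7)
The plan is to enumerate all right-hand sides of $G_{\mathcal{I}}$ and apply \cref{lm:dollars} to rule out occurrences of $\dol_i$ everywhere except inside one specific definition. By \cref{def:bigG}, the right-hand sides of $G_{\mathcal{I}}$ are the single string $\rhsgen{G_{\mathcal{I}}}{S} = \bigodot_{j=1,\dots,|V|} {\rm bexp}_{\mathcal{I}}(N_j) \cdot \hash_{2j-1} \cdot {\rm bexp}_{\mathcal{I}}(N_j) \cdot \hash_{2j}$ together with, for every $j \in \mathcal{I}$, the string $\rhsgen{G_{\mathcal{I}}}{M_j} = {\rm bexp}_{\mathcal{I}}(X_j) \cdot \dol_{j} \cdot {\rm bexp}_{\mathcal{I}}(Y_j)$, where $X_j, Y_j \in V \cup \Sigma$ satisfy $\rhsgen{G}{N_j} = X_j Y_j$.

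First, I would observe that $\dol_i$ does not occur in ${\rm bexp}_{\mathcal{I}}(N_j)$ for any $j \in \{1,\dots,|V|\}$. Indeed, by \cref{lm:dollars}, every symbol of ${\rm bexp}_{\mathcal{I}}(N_j)$ lies in $\Sigma \cup \{M_k\}_{k \in \{1,\dots,j\} \cap \mathcal{I}} \cup \{\dol_k\}_{k \in \{1,\dots,j\} \setminus \mathcal{I}}$, and since $i \in \mathcal{I}$ we have $i \notin \{1,\dots,j\} \setminus \mathcal{I}$. I would extend this to ${\rm bexp}_{\mathcal{I}}(X)$ for arbitrary $X \in V \cup \Sigma$: if $X \in \Sigma$ then ${\rm bexp}_{\mathcal{I}}(X) = X \in \Sigma$ clearly contains no $\dol_i$; if $X = N_j$ for some $j$, the previous sentence applies.

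Applying this to $\rhsgen{G_{\mathcal{I}}}{S}$: since the $\hash$-symbols are all distinct from $\dol_i$ and none of the ${\rm bexp}_{\mathcal{I}}(N_j)$ factors contains $\dol_i$, the string $\rhsgen{G_{\mathcal{I}}}{S}$ contains no occurrence of $\dol_i$. Applying it to $\rhsgen{G_{\mathcal{I}}}{M_j}$ for $j \in \mathcal{I}$: the flanking factors ${\rm bexp}_{\mathcal{I}}(X_j)$ and ${\rm bexp}_{\mathcal{I}}(Y_j)$ contain no $\dol_i$, so the only possible occurrence of $\dol_i$ in $\rhsgen{G_{\mathcal{I}}}{M_j}$ is the literal middle symbol $\dol_j$, which equals $\dol_i$ exactly when $j = i$.

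Summing these contributions, $\dol_i$ occurs zero times on the right-hand side of $S$, once on the right-hand side of $M_i$ (which exists as a nonterminal because $i \in \mathcal{I}$), and zero times on every other right-hand side. Hence $\dol_i$ occurs exactly once on the right-hand side of $G_{\mathcal{I}}$. There is no real obstacle here; the entire argument is a direct bookkeeping application of \cref{lm:dollars}, and the only thing to be careful about is explicitly handling the case $X_j$ or $Y_j \in \Sigma$ when invoking \cref{lm:dollars}, which requires just one extra line.
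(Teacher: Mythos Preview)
Your proposal is correct and follows essentially the same approach as the paper: both enumerate the right-hand sides of $G_{\mathcal{I}}$, invoke \cref{lm:dollars} to conclude that no ${\rm bexp}_{\mathcal{I}}(\cdot)$ factor contains $\dol_i$ (handling the $\Sigma$-case separately), and identify the single literal $\dol_j$ in $\rhsgen{G_{\mathcal{I}}}{M_j}$ as the only candidate occurrence, which matches $\dol_i$ precisely when $j=i$. The organization differs only cosmetically (you isolate the general claim about ${\rm bexp}_{\mathcal{I}}(X)$ first, whereas the paper treats each nonterminal in turn), but the substance is identical.
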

\begin{proof}
  Recall, that $V_{\mathcal{I}} = \{S\} \cup \{M_i\} \cup \{M_j\}_{j
  \in \mathcal{I} \setminus \{i\}}$ (\cref{def:bigG}).  We consider
  each of the elements in this partition separately:
  \begin{itemize}[itemsep=0pt,parsep=0pt]
  \item By $\rhsgen{G_{\mathcal{I}}}{S} = \bigodot_{t=1,\dots,|V|}
    {\rm bexp}_{\mathcal{I}}(N_t) \cdot \hash_{2t-1} \cdot {\rm
    bexp}_{\mathcal{I}}(N_t) \cdot \hash_{2t}$ and
    \cref{lm:dollars}, it follows that $\rhsgen{G_{\mathcal{I}}}{S}
    \in (\Sigma \cup \{M_t\}_{t \in \mathcal{I}} \cup
    \{\dol_{t}\}_{t \in \{1, \dots, |V|\} \setminus
    \mathcal{I}})^{*}$.  In particular, $\dol_{i}$ does not occur
    in $\rhsgen{G_{\mathcal{I}}}{S}$.
  \item Let $j \in \mathcal{I} \setminus \{i\}$ and let $X, Y \in V
    \cup \Sigma$ be such that $\rhsgen{G}{N_j} = XY$. By
    \cref{def:bigG}, we then have $\rhsgen{G_{\mathcal{I}}}{M_j} =
    {\rm bexp}_{\mathcal{I}}(X) \cdot \dol_{j} \cdot {\rm
    bexp}_{\mathcal{I}}(Y)$. If $X \in \Sigma$, then ${\rm
    bexp}_{\mathcal{I}}(X) \in \Sigma$. Otherwise (i.e., $X \in V$),
    it follows by \cref{lm:dollars}, that ${\rm bexp}_{\mathcal{I}}(X)
    \in (\Sigma \cup \{M_t\}_{t \in \mathcal{I}} \cup
    \{\dol_{t}\}_{t \in \{1, \dots, |V|\} \setminus
    \mathcal{I}})^{*}$.  In both cases, we thus obtain that
    $\dol_{i}$ does not occur in ${\rm bexp}_{\mathcal{I}}(X)$.
    Analogously, $\dol_{i}$ does not occur in ${\rm
    bexp}_{\mathcal{I}}(Y)$. Finally, note that $j \in \mathcal{I}
    \setminus \{i\}$ implies $j \neq i$. Thus, $\dol_{j} \neq
    \dol_{i}$.  Consequently, $\dol_{i}$ does not occur in
    $\rhsgen{G_{\mathcal{I}}}{M_j}$.
  \item Finally, we consider $\rhsgen{G_{\mathcal{I}}}{M_i}$. Let $X,
    Y \in V \cup \Sigma$ be such that $\rhsgen{G}{N_i} = XY$. Then, it
    holds $\rhsgen{G_{\mathcal{I}}}{M_i} = {\rm bexp}_{\mathcal{I}}(X)
    \cdot \dol_{i} \cdot {\rm bexp}_{\mathcal{I}}(Y)$. By the same
    argument as above, it follows that $\dol_{i}$ does not occur in
    ${\rm bexp}_{\mathcal{I}}(X)$ and ${\rm
    bexp}_{\mathcal{I}}(Y)$. We thus obtain that $\dol_{i}$ has
    exactly one occurrence in $\rhsgen{G_{\mathcal{I}}}{M_i}$.
    \qedhere
  \end{itemize}
\end{proof}

\begin{lemma}\label{lm:bexp-occ}
  Let $\mathcal{I} \subseteq \{1, \dots, |V|\}$ and $i \in \{1, \dots,
  |V|\} \setminus \mathcal{I}$.  Then, for every $j \in \{1, \dots,
  |V|\}$, it holds $|\Occ(\dol_{i}, {\rm bexp}_{\mathcal{I}}(N_j))|
  = |\Occ({\rm bexp}_{\mathcal{I}}(N_i), {\rm
  bexp}_{\mathcal{I}}(N_j))|$. Moreover, any two occurrences
  of ${\rm bexp}_{\mathcal{I}}(N_i)$ in ${\rm bexp}_{\mathcal{I}}(N_j)$
  are non-overlapping.
\end{lemma}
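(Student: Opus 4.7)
The plan is to prove both statements simultaneously by induction on $j$ (with $\mathcal{I}$ and $i \in \{1,\dots,|V|\}\setminus\mathcal{I}$ fixed), exploiting the recursive structure of ${\rm bexp}_{\mathcal{I}}$ from \cref{def:bexp}. The key enabler is that $X, Y \in V \cup \Sigma$ appearing in $\rhsgen{G}{N_j} = XY$ satisfy $|\expgen{G}{X}|, |\expgen{G}{Y}| < |\expgen{G}{N_j}|$, so they are either terminals or equal to some $N_{j'}$ with $j' < j$, making the inductive hypothesis directly applicable to ${\rm bexp}_{\mathcal{I}}(X)$ and ${\rm bexp}_{\mathcal{I}}(Y)$.

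First I would dispose of the easy cases. If $j \in \mathcal{I}$, then ${\rm bexp}_{\mathcal{I}}(N_j) = M_j$ is a single symbol; since $i \notin \mathcal{I}$, we have $i \neq j$, and neither $\dol_i$ nor ${\rm bexp}_{\mathcal{I}}(N_i)$ (whose length is at least $1$ and which contains $\dol_i$ by \cref{lm:one-dollar}) can match $M_j$, so both counts are $0$. If $j \notin \mathcal{I}$ but $j < i$, then by \cref{lm:dollars} only symbols $\dol_t$ with $t \leq j < i$ appear in ${\rm bexp}_{\mathcal{I}}(N_j)$, so $\dol_i$ does not occur, and ${\rm bexp}_{\mathcal{I}}(N_i)$ (which contains $\dol_i$) cannot occur either; both counts are $0$.

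The substantive case is $j \notin \mathcal{I}$ and $j \geq i$, where \cref{def:bexp} gives ${\rm bexp}_{\mathcal{I}}(N_j) = {\rm bexp}_{\mathcal{I}}(X) \cdot \dol_j \cdot {\rm bexp}_{\mathcal{I}}(Y)$. The central claim is that each count decomposes additively across the two subexpansions. For $\dol_i$, this is immediate since $\dol_i$ is a single symbol distinct from $\dol_j$ when $j > i$ (and when $j = i$, the central $\dol_i$ is the sole occurrence, because \cref{lm:dollars} applied to $N_{j'}$ for $j' < i$ shows $\dol_i$ does not occur in ${\rm bexp}_{\mathcal{I}}(X)$ or ${\rm bexp}_{\mathcal{I}}(Y)$). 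For ${\rm bexp}_{\mathcal{I}}(N_i)$, I would argue that no occurrence can straddle the middle $\dol_j$: when $j > i$, \cref{lm:dollars} guarantees $\dol_j$ does not appear in ${\rm bexp}_{\mathcal{I}}(N_i)$, ruling out any straddling occurrence; when $j = i$, the only occurrence is the whole string itself (by length), matching the unique central $\dol_i$ count of $1$. Thus applying the inductive hypothesis to ${\rm bexp}_{\mathcal{I}}(X)$ and ${\rm bexp}_{\mathcal{I}}(Y)$ yields the counting equality. Non-overlap is inherited: by induction, occurrences within ${\rm bexp}_{\mathcal{I}}(X)$ and within ${\rm bexp}_{\mathcal{I}}(Y)$ are pairwise non-overlapping, and the two groups are separated by the central $\dol_j$ symbol, so all occurrences in ${\rm bexp}_{\mathcal{I}}(N_j)$ are pairwise non-overlapping.

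The main obstacle I anticipate is the delicate handling of the diagonal case $j = i$, where the string being searched for and the string being searched in coincide (up to the inductive decomposition). Specifically, one must verify that ${\rm bexp}_{\mathcal{I}}(X)$ and ${\rm bexp}_{\mathcal{I}}(Y)$ contain neither $\dol_i$ nor any occurrence of the full string ${\rm bexp}_{\mathcal{I}}(N_i)$ (the latter is forced by length, since each has expansion strictly shorter than $N_i$), so that both counts reduce cleanly to the single central match. Once this is handled, the inductive machinery driven by \cref{def:bexp,lm:dollars,lm:one-dollar} delivers both claims uniformly.
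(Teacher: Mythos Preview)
Your proposal is correct and follows essentially the same route as the paper's proof: induction on $j$ with the same case split (whether $j\in\mathcal{I}$, and then $j<i$, $j=i$, $j>i$), the same use of \cref{lm:dollars} to rule out straddling occurrences across the central $\dol_j$, and \cref{lm:one-dollar} for the diagonal case. The only cosmetic difference is that the paper treats $j=1$ as an explicit base case, whereas you absorb it into the ``easy cases'' $j\in\mathcal{I}$ and $j<i$ together with the diagonal case $j=i$; both organizations are fine.
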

\begin{proof}

  Let $X, Y \in \Sigma \cup V$ be such that $\rhsgen{G}{N_i} = XY$.
  By $i \not\in \mathcal{I}$, we then have ${\rm bexp}_{\mathcal{I}}(N_i)
  = {\rm bexp}_{\mathcal{I}}(X) \cdot \dol_{i} \cdot {\rm
  bexp}_{\mathcal{I}}(Y)$. In particular, $\dol_{i}$ occurs in
  ${\rm bexp}_{\mathcal{I}}(N_i)$. We proceed by induction on $j$.

  To prove the induction base, let $j = 1$. If $1 \in \mathcal{I}$,
  then ${\rm bexp}_{\mathcal{I}}(N_1) = M_1$.  In this case, by $M_1
  \not\in V \cup \Sigma''$ we immediately obtain $|\Occ(\dol_{i},
  {\rm bexp}_{\mathcal{I}}(N_1))| = 0 \leq |\Occ({\rm
  bexp}_{\mathcal{I}}(N_i), {\rm bexp}_{\mathcal{I}}(N_1))|$. Let us
  now assume $1 \not\in \mathcal{I}$.  The assumption
  $|\expgen{G}{N_1}| \leq \dots \leq |\expgen{G}{N_{|V|}}|$ then
  implies that there exists $X', Y' \in \Sigma$ such that
  $\rhsgen{G}{N_1} = X' Y'$.  Then, ${\rm bexp}_{\mathcal{I}}(N_1) =
  {\rm bexp}_{\mathcal{I}}(X') \cdot \dol_{1} \cdot {\rm
  bexp}_{\mathcal{I}}(Y') = X' \cdot \dol_{1} \cdot Y'$. We then
  consider two subcases. If $i \,{=}\, 1$, then $|\Occ({\rm
  bexp}_{\mathcal{I}}(N_i), {\rm bexp}_{\mathcal{I}}(N_j)| =
  |\Occ({\rm bexp}_{\mathcal{I}}(N_1), {\rm bexp}_{\mathcal{I}}(N_1))|
  = 1$.  On the other hand, by the fact that $X', Y' \in \Sigma$, we
  then obtain $|\Occ(\dol_{i}, {\rm bexp}_{\mathcal{I}}(N_j))| =
  |\Occ(\dol_{1}, X' \cdot \dol_{1} \cdot Y')| = 1$.  Thus, the
  claim holds. Let us now consider the second subcase, i.e., $i \neq
  1$.  Then, $\dol_{i} \neq \dol_{1}$. Combining with $X', Y' \in
  \Sigma$, we thus obtain $\dol_{i} \not\in \{X', \dol_{1}, Y'\}$,
  and hence $|\Occ(\dol_{i}, {\rm bexp}_{\mathcal{I}}(N_j))| =
  |\Occ(\dol_{i}, X' \cdot \dol_{1} \cdot Y')| = 0$. Since
  $\dol_{i}$ occurs in ${\rm bexp}_{\mathcal{I}}(N_i)$ (see above),
  we thus also have $|\Occ({\rm
  bexp}_{\mathcal{I}}(N_i), {\rm bexp}_{\mathcal{I}}(N_j))| = 0$. Since
  for $j = 1$, there is always at most one occurrence of
  ${\rm bexp}_{\mathcal{I}}(N_i)$ in ${\rm bexp}_{\mathcal{I}}(N_j)$,
  the claim about disjoint occurrences therefore holds vacuously.

  We now prove the induction step. Let $j > 1$. If $j \in
  \mathcal{I}$, then ${\rm bexp}_{\mathcal{I}}(N_j) = M_j$.  By $M_j
  \not\in V \cup \Sigma''$, we thus immediately obtain
  $|\Occ(\dol_{i}, {\rm bexp}_{\mathcal{I}}(N_j))| = 0 \leq
  |\Occ(\dol_{i}, {\rm bexp}_{\mathcal{I}}(N_j))|$.  Let us thus
  assume $j \not\in \mathcal{I}$.  Let $X', Y' \in \Sigma \cup V$ be
  such that $\rhsgen{G}{N_j} = X'Y'$. Then, ${\rm
  bexp}_{\mathcal{I}}(N_j) = {\rm bexp}_{\mathcal{I}}(X') \cdot
  \dol_{j} \cdot {\rm bexp}_{\mathcal{I}}(Y')$. We now consider three
  subcases:
  \begin{itemize}[itemsep=0pt,parsep=0pt]
  \item First, assume $j = i$. Then, ${\rm bexp}_{\mathcal{I}}(N_i) =
    {\rm bexp}_{\mathcal{I}}(N_j)$, and hence we immediately obtain
    $|\Occ({\rm bexp}_{\mathcal{I}}(N_i), {\rm
    bexp}_{\mathcal{I}}(N_j))| = 1$. On the other hand, by
    \cref{lm:one-dollar}, we then obtain $|\Occ(\dol_{i}, {\rm
    bexp}_{\mathcal{I}}(N_j))| = |\Occ(\dol_{i}, {\rm
    bexp}_{\mathcal{I}}(N_i))| = 1$. Thus, the claim holds. Note
    that the claim about disjoint occurrences again holds vacuously
    here.
  \item Next, assume $j < i$. By \cref{lm:dollars}, ${\rm
    bexp}_{\mathcal{I}}(N_j) \in (\Sigma \cup \{M_t\}_{t \in \{1,
    \dots, j\} \cap \mathcal{I}} \cup \{\dol_t\}_{t \in \{1, \dots,
    j\} \setminus \mathcal{I}})^{*}$. Since $j < i$, $\dol_{i}$ thus
    does not occur in ${\rm bexp}_{\mathcal{I}}(N_j)$, and hence
    $|\Occ(\dol_{i}, {\rm bexp}_{\mathcal{I}}(N_j))| = 0$. Since
    $\dol_{i}$ occurs in ${\rm bexp}_{\mathcal{I}}(N_i)$, this also
    implies $|\Occ({\rm bexp}_{\mathcal{I}}(N_i), {\rm
    bexp}_{\mathcal{I}}(N_j))| = 0$.
  \item Finally, let us assume $j > i$. Recall, that above we observed
    that ${\rm bexp}_{\mathcal{I}}(N_j) = {\rm bexp}_{\mathcal{I}}(X')
    \cdot \dol_{j} \cdot {\rm bexp}_{\mathcal{I}}(Y')$. Note,
    however, that \cref{lm:dollars} implies that ${\rm
    bexp}_{\mathcal{I}}(N_i) \in (\Sigma \cup \{M_t\}_{t \in \{1,
    \dots, i\} \cap \mathcal{I}} \cup \{\dol_t\}_{t \in \{1, \dots,
    i\} \setminus \mathcal{I}})^{*}$. Thus, by $i < j$, every
    occurrence of ${\rm bexp}_{\mathcal{I}}(N_i)$ in ${\rm
    bexp}_{\mathcal{I}}(N_j)$ is entirely contained in either ${\rm
    bexp}_{\mathcal{I}}(X')$ or ${\rm bexp}_{\mathcal{I}}(Y')$. The
    same holds for any occurrence of $\dol_{i}$. Consequently,
    \begin{align*}
      |\Occ(\dol_{i}, {\rm bexp}_{\mathcal{I}}(N_j))|
         =&\ |\Occ(\dol_{i}, {\rm bexp}_{\mathcal{I}}(X'))| +
             |\Occ(\dol_{i}, {\rm bexp}_{\mathcal{I}}(Y'))|,\\
      |\Occ({\rm bexp}_{\mathcal{I}}(N_i), {\rm bexp}_{\mathcal{I}}(N_j))|
         =&\ |\Occ({\rm bexp}_{\mathcal{I}}(N_i), {\rm bexp}_{\mathcal{I}}(X'))| +\\
          &\ |\Occ({\rm bexp}_{\mathcal{I}}(N_i), {\rm bexp}_{\mathcal{I}}(Y'))|.
    \end{align*}
    If $X' \in \Sigma$, then ${\rm bexp}_{\mathcal{I}}(X') = X'$, and
    hence we immediately obtain $|\Occ(\dol_{i}, {\rm
    bexp}_{\mathcal{I}}(X'))| = |\Occ({\rm bexp}_{\mathcal{I}}(N_i),
    {\rm bexp}_{\mathcal{I}}(X'))| = 0$, since $\dol_{i}$ occurs
    both in $\dol_{i}$ and ${\rm bexp}_{\mathcal{I}}(N_i)$.  Let us
    thus assume $X' \in V$. By $|\expgen{G}{X'}| < |\expgen{G}{N_j}|$,
    there exists $l \in [1 \dd j)$ such that $X' = N_l$. By the
    inductive assumption, we thus have $|\Occ(\dol_{i}, {\rm
    bexp}_{\mathcal{I}}(X'))| = |\Occ(\dol_{i}, {\rm
    bexp}_{\mathcal{I}}(N_l))| = |\Occ({\rm
    bexp}_{\mathcal{I}}(N_i), {\rm bexp}_{\mathcal{I}}(N_l))| =
    |\Occ({\rm bexp}_{\mathcal{I}}(N_i), {\rm
    bexp}_{\mathcal{I}}(X'))|$. We also obtain that any two
    occurrences of ${\rm bexp}_{\mathcal{I}}(N_i)$ in ${\rm
    bexp}_{\mathcal{I}}(X')$ are non-overlapping.  In both cases, we
    have thus proved that it holds $|\Occ(\dol_{i}, {\rm
    bexp}_{\mathcal{I}}(X'))| = |\Occ({\rm bexp}_{\mathcal{I}}(N_i),
    {\rm bexp}_{\mathcal{I}}(X'))|$, and any two occurrences of ${\rm
    bexp}_{\mathcal{I}}(N_i)$ in ${\rm bexp}_{\mathcal{I}}(X')$ are
    non-overlapping.  Analogously, it holds $|\Occ(\dol_{i}, {\rm
    bexp}_{\mathcal{I}}(Y'))| = |\Occ({\rm bexp}_{\mathcal{I}}(N_i),
    {\rm bexp}_{\mathcal{I}}(Y'))|$, and any two occurrences of ${\rm
    bexp}_{\mathcal{I}}(N_i)$ in ${\rm bexp}_{\mathcal{I}}(Y')$ are
    non-overlapping.  Putting these equalities with the earlier two,
    we thus obtain
    \begin{align*}
      |\Occ(\dol_{i}, {\rm bexp}_{\mathcal{I}}(N_j))|
         &= |\Occ(\dol_{i}, {\rm bexp}_{\mathcal{I}}(X'))| +
            |\Occ(\dol_{i}, {\rm bexp}_{\mathcal{I}}(Y'))|,\\ 
         &= |\Occ({\rm bexp}_{\mathcal{I}}(N_i), {\rm bexp}_{\mathcal{I}}(X'))| +
            |\Occ({\rm bexp}_{\mathcal{I}}(N_i), {\rm bexp}_{\mathcal{I}}(Y'))|,\\
         &= |\Occ({\rm bexp}_{\mathcal{I}}(N_i), {\rm bexp}_{\mathcal{I}}(N_j))|.
    \end{align*}
    It remains to observe that above we have also ruled out the
    possibility of two occurrences of ${\rm bexp}_{\mathcal{I}}(N_i)$
    in ${\rm bexp}_{\mathcal{I}}(N_j)$ overlapping each other.
    \qedhere
  \end{itemize}
\end{proof}

\begin{lemma}\label{lm:maximal-string}
  Let $\mathcal{I} \subseteq \{1, \dots, |V|\}$ and let $s$ be a
  maximal string with respect to $G_{\mathcal{I}}$. Then, there exists
  $i \in \{1, \dots, |V|\} \setminus \mathcal{I}$ such that $s = {\rm
  bexp}_{\mathcal{I}}(N_i)$.
\end{lemma}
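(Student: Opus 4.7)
The plan is to follow the three-step outline given as Observation~1 in \cref{sec:overview-boosting}, fleshing out each step with the lemmas already proved. First I would look at which symbols can possibly appear inside $s$. Every $\hash_{k}$ occurs exactly once on the right-hand side of $G_{\mathcal{I}}$ (only inside $\rhsgen{G_{\mathcal{I}}}{S}$), so $\hash_{k}$ cannot appear in $s$, since otherwise $s$ could have at most one occurrence on the right-hand side, contradicting the definition of maximality. By \cref{lm:dollars} applied inside each $\rhsgen{G_{\mathcal{I}}}{M_j}$ and $\rhsgen{G_{\mathcal{I}}}{S}$, at least one of every two consecutive symbols belongs to $\{\dol_{t}\}_{t \in [1 \dd |V|]}$; together with $|s| \geq 2$ this forces some $\dol_{t}$ to occur in $s$. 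Finally, \cref{lm:dollar-on-rhs} shows that for $t \in \mathcal{I}$ the symbol $\dol_{t}$ occurs only once on the right-hand side of $G_{\mathcal{I}}$, so such a $t$ cannot be the one sitting in $s$. Hence every $\dol_{t}$ appearing in $s$ satisfies $t \in \{1, \dots, |V|\} \setminus \mathcal{I}$.

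Next I would pick $i$ to be the largest $t \in \{1, \dots, |V|\} \setminus \mathcal{I}$ with $\dol_{t}$ occurring in $s$, and I would argue that each occurrence of $\dol_{i}$ inside the right-hand side of $G_{\mathcal{I}}$ lies inside an occurrence of ${\rm bexp}_{\mathcal{I}}(N_{i})$. This is where \cref{lm:bexp-occ} enters: it says that occurrences of $\dol_{i}$ inside ${\rm bexp}_{\mathcal{I}}(N_{j})$ are in bijection with occurrences of ${\rm bexp}_{\mathcal{I}}(N_{i})$, and the same follows for occurrences inside $\rhsgen{G_{\mathcal{I}}}{S}$ by inspection of \cref{def:bigG}. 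Consequently every occurrence of $s$ in the grammar extends to an occurrence of ${\rm bexp}_{\mathcal{I}}(N_{i})$ around the $\dol_{i}$ it contains, which immediately gives $|s| \leq |{\rm bexp}_{\mathcal{I}}(N_{i})|$ and shows $s$ is a substring of ${\rm bexp}_{\mathcal{I}}(N_{i})$.

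The remaining work is to rule out the possibility that $s$ is a proper substring of ${\rm bexp}_{\mathcal{I}}(N_{i})$. Here I would use maximality of $i$ and \cref{lm:dollars}: the symbol $\dol_{i}$ occupies the middle position of ${\rm bexp}_{\mathcal{I}}(N_{i})$ (by \cref{lm:one-dollar} it appears only once), and the two halves surrounding it use only symbols from $\Sigma \cup \{M_{j}\}_{j \in \mathcal{I}} \cup \{\dol_{t}\}_{t < i,\, t \notin \mathcal{I}}$. In particular any occurrence of ${\rm bexp}_{\mathcal{I}}(N_{i})$ on the right-hand side of $G_{\mathcal{I}}$ is immediately flanked by either a symbol from $\{\hash_{k}\}$, or some $\dol_{i'}$ with $i' > i$ (coming from an enclosing ${\rm bexp}_{\mathcal{I}}(N_{i'})$). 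By the choice of $i$ as the largest such index, $s$ cannot contain any such flanking symbol, so $s$ cannot be extended beyond ${\rm bexp}_{\mathcal{I}}(N_{i})$. Combining this upper bound with the fact that ${\rm bexp}_{\mathcal{I}}(N_{i})$ itself has at least two non-overlapping occurrences (it appears twice in the definition of $S$ via \cref{def:bigG}, and the occurrences are non-overlapping by \cref{lm:bexp-occ}), maximality of $s$ forces $s = {\rm bexp}_{\mathcal{I}}(N_{i})$.

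The main obstacle is the last step: I need to be careful that ``$s$ cannot be extended'' is argued on \emph{all} occurrences of $s$, not just one, since maximality is a global property. The cleanest way to handle this is to fix one occurrence of $s$ inside some ${\rm bexp}_{\mathcal{I}}(N_{i})$-occurrence and argue purely about local neighborhoods using \cref{lm:dollars} plus the choice of $i$; any further extension would pull in either a $\hash_{k}$ (which kills the count of occurrences) or a $\dol_{i'}$ with $i' > i$ (which would have been picked instead of $i$). Once the bound $|s| = |{\rm bexp}_{\mathcal{I}}(N_{i})|$ is established, matching up the $\dol_{i}$ positions forces $s$ and ${\rm bexp}_{\mathcal{I}}(N_{i})$ to coincide symbol by symbol, completing the proof.
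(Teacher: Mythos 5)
Your proposal follows the same overall approach as the paper: you rule out $\hash_k$ and $\dol_t$ with $t \in \mathcal{I}$ from $s$ via their single occurrences, you use \cref{lm:dollars} to force some $\dol_t$ with $t \notin \mathcal{I}$ into $s$, you choose $i$ to be the largest such $t$, and you use \cref{lm:bexp-occ} to pin $s$ inside ${\rm bexp}_{\mathcal{I}}(N_i)$. That matches the paper's structure exactly.

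There is, however, a gap in how you close the final step. To conclude $s = {\rm bexp}_{\mathcal{I}}(N_i)$ from the third condition of \cref{def:maximal-string}, what you need is that the \emph{candidate} $s' := {\rm bexp}_{\mathcal{I}}(N_i)$ has at least as many non-overlapping occurrences on the right-hand side of $G_{\mathcal{I}}$ as $s$ does — that is the witness that would violate maximality if $|s'| > |s|$. Your last paragraph only asserts that $s'$ ``has at least two non-overlapping occurrences,'' which is a much weaker statement (it is merely condition 2, not a comparison to $s$'s count), and your ``cannot be extended'' / ``local neighborhoods'' framing never actually compares occurrence counts. The needed inequality is in fact available to you — your second paragraph already observes that each occurrence of $s$ extends around its $\dol_i$ to an occurrence of $s'$, and \cref{lm:one-dollar,lm:bexp-occ} make this assignment injective and the $s'$-occurrences pairwise non-overlapping — but you would have to carry that count through explicitly rather than switch to a structural flanking argument. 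The paper does precisely this: it shows $f^{\rm nonov}_{s'} \geq \sum_j \alpha_j\,|\Occ(\dol_i, {\rm bexp}_{\mathcal{I}}(N_j))| \geq f_s \geq f^{\rm nonov}_s$, then invokes condition 3. Without restoring that chain, your third paragraph does not in fact deliver the contradiction you want.

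A secondary, minor point: when you say any occurrence of ${\rm bexp}_{\mathcal{I}}(N_i)$ is ``flanked by $\hash_k$ or some $\dol_{i'}$ with $i' > i$,'' the flanking $\dol_{i'}$ may have $i' \in \mathcal{I}$ (e.g.\ inside $\rhsgen{G_{\mathcal{I}}}{M_{i'}}$ when $X_{i'} = N_i$). In that case $i' > i$ holds but the reason $s$ cannot absorb $\dol_{i'}$ is \cref{lm:dollar-on-rhs} (single occurrence), not the maximality of your chosen $i$, so you should split this into the two cases $i' \in \mathcal{I}$ and $i' \notin \mathcal{I}$ if you keep the flanking argument at all.
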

\begin{proof}

  For any $j \in \{1, \dots, |V|\}$, let $X_j, Y_j \in \Sigma \cup V$
  be such that $\rhsgen{G}{N_j} = X_j Y_j$. Recall that
  $V_{\mathcal{I}} = \{S\} \cup \{M_j\}_{j \in \mathcal{I}}$ and hence
  the right-hand side of $G_{\mathcal{I}}$ contains the following
  strings:
  \begin{itemize}
  \item $\rhsgen{G_{\mathcal{I}}}{S} = \bigodot_{j=1,\dots,|V|}
    {\rm bexp}_{\mathcal{I}}(N_j) \cdot \hash_{2j-1} \cdot
    {\rm bexp}_{\mathcal{I}}(N_j) \cdot \hash_{2j}$,
  \item $\rhsgen{G_{\mathcal{I}}}{M_j} =
    {\rm bexp}_{\mathcal{I}}(X_j) \cdot \dol_{j} \cdot
    {\rm bexp}_{\mathcal{I}}(Y_j)$, where $j \in \mathcal{I}$.
  \end{itemize}
  By \cref{lm:dollars}, $\{{\rm bexp}_{\mathcal{I}}(N_t)\}_{t \in \{1,
  \dots, |V|\}} \subseteq (\Sigma \cup \{M_t\}_{t \in \mathcal{I}}
  \cup \{\dol_{t}\}_{t \in \{1, \dots, |V|\} \setminus
  \mathcal{I}})^{*}$. In particular, none of the strings contain any
  of the symbols in the set $\{\dol_{t}\}_{t \in \mathcal{I}} \cup
  \{\hash_t\}_{t \in [1 \dd 2|V|]}$. Moreover, note that $\{X_t,
  Y_t\}_{t \in \{1, \dots, |V|\}} \subseteq \Sigma \cup V$.  For any
  $c \in \Sigma$, ${\rm bexp}_{\mathcal{I}}(c) \in \Sigma$.
  Consequently, we also have $\{{\rm bexp}_{\mathcal{I}}(X_t), {\rm
  bexp}_{\mathcal{I}}(Y_t)\} \subseteq (\Sigma \cup \{M_t\}_{t \in
  \mathcal{I}} \cup \{\dol_{t}\}_{t \in \{1, \dots, |V|\} \setminus
  \mathcal{I}})^{*}$.  This implies that each of the characters in the
  set $\{\hash_t\}_{t \in [1 \dd 2|V|]} \cup \{\dol_{t}\}_{t \in
  \mathcal{I}}$ occurs on the right-hand side of $G_{\mathcal{I}}$
  exactly once. Thus, since by \cref{def:maximal-string}, $s$
  occurs on the right-hand size of $G_{\mathcal{I}}$ at least twice,
  it does not contain any of these symbols. By the above
  characterization of the right-hand side of $G_{\mathcal{I}}$, $s$ is
  therefore a substring of one of the strings in the collection
  $\{{\rm bexp}_{\mathcal{I}}(N_j)\}_{j \in \{1, \dots, |V|\}} \cup
  \{{\rm bexp}_{\mathcal{I}}(X_j), {\rm bexp}_{\mathcal{I}}(Y_j)\}_{j
  \in \mathcal{I}} \subseteq \{{\rm bexp}_{\mathcal{I}}(N_j)\}_{j \in
  \{1, \dots, |V|\}} \cup \{{\rm bexp}_{\mathcal{I}}(c)\}_{c \in
  \Sigma}$.  Note, however, that \cref{def:maximal-string} requires
  that $|s| \geq 2$.  Since $j \in \mathcal{I}$ implies that $|{\rm
  bexp}_{\mathcal{I}}(N_j)| = 1$, and for every $c \in \Sigma$, $|{\rm
  bexp}_{\mathcal{I}}(c)| = 1$, we thus obtain that $s$ is a substring
  of one of the elements in $\{{\rm bexp}_{\mathcal{I}}(N_j)\}_{j \in
  \{1, \dots, |V|\} \setminus \mathcal{I}}$.  This also implies that
  letting $\alpha_j = 2 + |\{t \in \mathcal{I} : X_t = N_j\}| +
  |\{t \in \mathcal{I} : Y_t = N_j\}|$ for $j \in \{1, \dots, |V|\}
  \setminus \mathcal{I}$, the number $f_s$ of occurrences of $s$
  on the right-hand side of $G_{\mathcal{I}}$ satisfies:
  \[
    f_s = \sum_{j \in \{1, \dots, |V|\} \setminus \mathcal{I}}
    \alpha_j \cdot |\Occ(s, {\rm bexp}_{\mathcal{I}}(N_j))|
  \]
  Note that the number $f^{\rm nonov}_{s}$ of non-overlapping
  occurrences (see \cref{rm:nonov}) of $s$ on the right-hand side of
  $G_{\mathcal{I}}$ then satisfies $f^{\rm nonov}_s \leq f_s$.

  Observe now that by \cref{lm:dollars}, every second symbol in each
  of the strings in the collection $\{{\rm
  bexp}_{\mathcal{I}}(N_j)\}_{j \in \{1, \dots, |V|\}}$ belongs to
  the set $\{\dol_{j}\}_{j \in \{1, \dots, |V|\} \setminus
  \mathcal{I}}$. Since $|s| \geq 2$, the string $s$ thus contains
  one of those symbols (note that this in particular implies that
  $\{1, \dots, |V|\} \setminus \mathcal{I} \neq \emptyset$).  Let
  \[
    i = \max\{t \in \{1, \dots, |V|\} \setminus \mathcal{I} :
    \dol_{t}\text{ occurs in }s\}.
  \]
  In three steps, we will prove that $s = {\rm bexp}_{\mathcal{I}}(N_i)$:

  \begin{enumerate}
  \item First, we show that $s$ is a substring of
    ${\rm bexp}_{\mathcal{I}}(N_i)$. Let
    \[
      i' = \min \{t \in \{1, \dots, |V|\} \setminus \mathcal{I} :
      s\text{ is a substring of }{\rm bexp}_{\mathcal{I}}(N_t)\}.
    \]
    Note, that by the above discussion, $i'$ is well defined.  Observe
    that we cannot have $i' < i$, since $\dol_{i}$ occurs in $s$,
    and by \cref{lm:dollars}, ${\rm bexp}_{\mathcal{I}}(N_{i'}) \in
    (\Sigma \cup \{M_t\}_{t \in \{1, \dots, i'\}} \cup
    \{\dol_{t}\}_{t \in \{1, \dots, i'\}})^{*}$. Thus, $i' \geq
    i$. Suppose that $i' > i$. Since $i' \not\in \mathcal{I}$, by
    \cref{def:bexp}, we have ${\rm bexp}_{\mathcal{I}}(N_{i'}) = {\rm
    bexp}_{\mathcal{I}}(X) \cdot \dol_{i'} \cdot {\rm
    bexp}_{\mathcal{I}}(Y)$, where $X, Y \in \Sigma \cup V$ are such
    that $\rhsgen{G}{N_{i'}} = XY$. By definition of $i$,
    $\dol_{i'}$ does not occur in $s$.  Thus, $s$ is a substring
    of either ${\rm bexp}_{\mathcal{I}}(X)$ or ${\rm
    bexp}_{\mathcal{I}}(Y)$. Assume without the loss of generality
    that it is a substring of ${\rm bexp}_{\mathcal{I}}(X)$. By $|s|
    \geq 2$, we then must have $X \in V$. By $|\expgen{G}{X}| <
    |\expgen{G}{N_{i'}}|$, there exists $l \in [1 \dd i')$ such that
    $X = N_{l}$.  In other words, $s$ is a substring of ${\rm
    bexp}_{\mathcal{I}}(N_{l})$. By $l < i'$, this contradicts the
    definition of $i'$. We have thus proved that $i = i'$. Hence,
    $s$ is a substring of ${\rm bexp}_{\mathcal{I}}(N_{i})$.

  \item Denote $s' = {\rm bexp}_{\mathcal{I}}(N_i)$. In this step, we
    prove that the number $f^{\rm nonov}_{s'}$ of non-overlapping
    occurrences of $s'$ on the right-hand side of $G_{\mathcal{I}}$
    satisfies $f^{\rm nonov}_{s'} \geq f^{\rm nonov}_{s}$. Consider
    some $j \in \{1, \dots, |V|\} \setminus \mathcal{I}$. By
    definition of $\alpha_j$ and the above explicit listing of strings
    occurring on the right-hand side of $G_{\mathcal{I}}$, there exist
    $\alpha_j$ non-overlapping occurrences of the string ${\rm
    bexp}_{\mathcal{I}}(N_j)$ on the right-hand side of $G_{\mathcal{I}}$.
    Observe now that, on the one hand, since $s$ contains symbol
    $\dol_{i}$, it holds $|\Occ(s, {\rm bexp}_{\mathcal{I}}(N_j))|
    \leq |\Occ(\dol_{i}, {\rm bexp}_{\mathcal{I}}(N_j))|$.  On the
    other hand, by \cref{lm:bexp-occ}, we obtain $|\Occ(\dol_{i},
    {\rm bexp}_{\mathcal{I}}(N_j))| = |\Occ({\rm
    bexp}_{\mathcal{I}}(N_i), {\rm bexp}_{\mathcal{I}}(N_j))| =
    |\Occ(s', {\rm bexp}_{\mathcal{I}}(N_j))|$. Moreover, by
    \cref{lm:bexp-occ}, all occurrences of $s'$ in ${\rm
    bexp}_{\mathcal{I}}(N_j)$ are non-overlapping.  This implies
    that the number $f^{\rm nonov}_{s'}$ of non-overlapping
    occurrences of $s'$ on the right-hand side of $G_{\mathcal{I}}$
    satisfies
    \begin{align*}
      f^{\rm nonov}_{s'}
        &\geq
          \sum_{j \in \{1, \dots, |V|\} \setminus \mathcal{I}}
          \alpha_j \cdot |\Occ(s', {\rm bexp}_{\mathcal{I}}(N_j))|
        =
          \sum_{j \in \{1, \dots, |V|\} \setminus \mathcal{I}}
          \alpha_j \cdot |\Occ(\dol_{i}, {\rm bexp}_{\mathcal{I}}(N_j))|\\
        &\geq
          \sum_{j \in \{1, \dots, |v|\} \setminus \mathcal{I}}
          \alpha_j \cdot |\Occ(s, {\rm bexp}_{\mathcal{I}}(N_j))|
        = f_{s} \geq f^{\rm nonov}_{s}.
    \end{align*}

    \item In the first step, we proved that $s$ is a substring of $s'$.
      Suppose that $s \neq s'$. Note that then $|s'| > |s|$ and, by
      the second step, $s'$ has at least as many non-overlapping
      occurrences on the right-hand side of $G_{\mathcal{I}}$ as $s$.
      By the third condition in \cref{def:maximal-string}, this
      contradicts that $s$ is maximal. Thus, we must have
      $s = s' = {\rm bexp}_{\mathcal{I}}(N_i)$. \qedhere
  \end{enumerate}
\end{proof}

\newcommand{\substitute}[3]{{\rm sub}(#1,#2,#3)}

\begin{definition}
  Let $u, v \in \Sigma^{+}$ and $c \in \Sigma$. We define
  $\substitute{u}{c}{v}$ as a string obtained by replacing all occurrences
  of $c$ in $u$ with $v$. Formally, $\substitute{u}{c}{v} =
  \bigodot_{i=1,\dots,|u|} f(u[i])$, where $f: \Sigma \rightarrow
  \Sigma^{+}$ is such that for every $a \in \Sigma$:
  \vspace{2ex}
  \[
    f(a) =
      \begin{cases}
        a & \text{if }a \neq c, \\
        v & \text{otherwise}.
      \end{cases}
    \vspace{2ex}
  \]
\end{definition}

\newcommand{\Gin}{G_{\rm in}}
\newcommand{\Rulesin}{R_{\rm in}}
\newcommand{\Startin}{S_{\rm in}}
\newcommand{\Vin}{V_{\rm in}}
\newcommand{\Sigmain}{\Sigma_{\rm in}}
\newcommand{\Gaux}{G_{\rm aux}}
\newcommand{\Xaux}{X_{\rm aux}}
\newcommand{\Vaux}{V_{\rm aux}}
\newcommand{\Rulesaux}{R_{\rm aux}}
\newcommand{\taux}{t_{\rm aux}}
\newcommand{\Gout}{G_{\rm out}}
\newcommand{\Vout}{V_{\rm out}}
\newcommand{\Rulesout}{R_{\rm out}}
\newcommand{\Xout}{X_{\rm out}}
\newcommand{\tout}{t_{\rm out}}

\begin{lemma}\label{lm:nonov-s}
  Consider an SLG $\Gin = (\Vin, \Sigmain, \Rulesin, \Startin)$.
  Let $s \in (\Vin \cup \Sigmain)^{+}$ be maximal with
  respect to $\Gin$ (\cref{def:maximal-string}).  Assume that
  any two occurrences of $s$ on the right-hand side of $\Gin$
  are non-overlapping.  Let $\Gaux = (\Vaux, \Sigmain, \Rulesaux,
  \Startin)$, where $\Vaux = \Vin \cup \{\Xaux\}$,
  $\Xaux \not\in \Vin \cup \Sigmain$, and:
  \begin{enumerate}[itemsep=0pt,parsep=0pt]
  \item It holds $\rhsgen{\Gaux}{\Xaux} = s$, and
  \item For every $Y \in \Vin$, it holds
  $\substitute{\rhsgen{\Gaux}{Y}}{\Xaux}{s} =
  \rhsgen{\Gin}{Y}$ and $\Occ(s,\rhsgen{\Gaux}{Y}) = \emptyset$.
  \end{enumerate}
  Then, the output of a single step of any global algorithm on $\Gin$,
  using $s$ as a maximal substring, is isomorphic to $\Gaux$.
\end{lemma}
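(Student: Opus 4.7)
The plan is to show that the output $\Gout = (\Vout, \Sigmain, \Rulesout, \Startin)$ of one step of any global algorithm satisfies conditions (1) and (2) of $\Gaux$ verbatim (after identifying the fresh nonterminal the algorithm introduces with $\Xaux$), and then to invoke a uniqueness argument so that the isomorphism is forced. I would proceed in four steps.

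First, I would unfold what one step of a global algorithm does (see Section~\ref{sec:algs-global}): it introduces a single new nonterminal $\Xout \notin \Vin \cup \Sigmain$, sets $\rhsgen{\Gout}{\Xout} = s$, and for every $Y \in \Vin$ obtains $\rhsgen{\Gout}{Y}$ by scanning $\rhsgen{\Gin}{Y}$ left to right and, per Remark~\ref{rm:nonov}, replacing each encountered occurrence of $s$ by $\Xout$, restarting the scan right after each replacement. The hypothesis that all occurrences of $s$ on the right-hand side of $\Gin$ are pairwise non-overlapping implies that this greedy scan actually identifies and replaces every occurrence of $s$ in $\rhsgen{\Gin}{Y}$, not just a subset: the greedy set equals the full set of occurrences exactly when no two occurrences overlap. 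Consequently, $\substitute{\rhsgen{\Gout}{Y}}{\Xout}{s} = \rhsgen{\Gin}{Y}$ by construction.

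Second, I would show that $\Occ(s, \rhsgen{\Gout}{Y}) = \emptyset$. Because $\Xout$ is a fresh symbol and $s \in (\Vin \cup \Sigmain)^{+}$, the symbol $\Xout$ cannot occur inside $s$; hence any hypothetical new occurrence of $s$ in $\rhsgen{\Gout}{Y}$ would have to lie entirely within one of the ``gap'' substrings between consecutive $\Xout$ symbols (and the endpoints of $\rhsgen{\Gout}{Y}$). Each such gap is a substring of $\rhsgen{\Gin}{Y}$ that, by the greedy procedure and the non-overlapping hypothesis, contains no occurrence of $s$. This rules out any new occurrence of $s$ in $\rhsgen{\Gout}{Y}$.

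Third, I would set up the candidate isomorphism $f : \Vout \cup \Sigmain \to \Vaux \cup \Sigmain$ defined by $f(\Xout) = \Xaux$, $f(Y) = Y$ for $Y \in \Vin$, and $f(c) = c$ for $c \in \Sigmain$. To verify the conditions of isomorphism it remains to check that $\rhsgen{\Gaux}{f(X)}$ equals the symbol-wise image of $\rhsgen{\Gout}{X}$ under $f$ for every $X \in \Vout$. For $X = \Xout$ this is immediate, since $\rhsgen{\Gaux}{\Xaux} = s$, $\rhsgen{\Gout}{\Xout} = s$, and $f$ fixes $s$ pointwise. For $X = Y \in \Vin$, both the image of $\rhsgen{\Gout}{Y}$ under $f$ (namely $\rhsgen{\Gout}{Y}$ with $\Xout$ renamed to $\Xaux$) and $\rhsgen{\Gaux}{Y}$ satisfy the two conditions: substituting back yields $\rhsgen{\Gin}{Y}$, and neither contains $s$ as a substring.

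Finally, I would close with a uniqueness argument: any string $u$ over $(\Vin \cup \Sigmain \cup \{\Xaux\})^{*}$ with $\substitute{u}{\Xaux}{s} = \rhsgen{\Gin}{Y}$ and $\Occ(s, u) = \emptyset$ is uniquely determined. Indeed, the positions of $\Xaux$ in such a $u$ correspond, after substitution, to a set of pairwise non-overlapping occurrences of $s$ in $\rhsgen{\Gin}{Y}$, and the condition $\Occ(s, u) = \emptyset$ together with the freshness of $\Xaux$ forces this set to be the full set of occurrences of $s$ in $\rhsgen{\Gin}{Y}$. Hence $f$ is the desired isomorphism. The most delicate step is the second one -- verifying that greedy replacement cannot inadvertently create a fresh occurrence of $s$ -- but the fact that $\Xout$ is fresh and the non-overlapping hypothesis makes the argument transparent once the gap decomposition is made explicit.
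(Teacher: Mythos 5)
Your proposal is correct and follows essentially the same route as the paper's proof. The one organizational difference is cosmetic but worth noting: the paper proves the isomorphism by directly matching the two explicit factorizations $\tout = y_0\,\Xout\,y_1\,\Xout\cdots\Xout\,y_k$ and $\taux = z_0\,\Xaux\,z_1\,\Xaux\cdots\Xaux\,z_{k'}$ and then showing $k'=k$ and $y_j = z_j$ for each $j$ by contradiction; you instead first establish that $\Gout$ itself satisfies exactly the two defining conditions imposed on $\Gaux$ (after renaming $\Xout\mapsto\Xaux$) and then invoke a standalone uniqueness observation -- given $t$ whose occurrences of $s$ are pairwise non-overlapping, a string $u$ with $\substitute{u}{\Xaux}{s}=t$, $\Occ(s,u)=\emptyset$, and $\Xaux$ fresh is uniquely determined. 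Both hinge on the same combinatorial fact; yours factors it out cleanly as a lemma, which buys a little modularity, while the paper's in-line contradiction argument is self-contained but more verbose. The delicate point you correctly flag (that any unreplaced occurrence of $s$ must survive intact inside one of the gap strings, because all occurrences are pairwise disjoint) is exactly the crux in both versions, and your gap-decomposition justification handles it.
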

\begin{proof}
  Let $\Gout = (\Vout, \Sigmain, \Rulesout, \Startin)$ be
  the output of a single step of a global algorithm with SLG $\Gin$ as
  input and $s$ as the maximal string. By definition, we
  then have $\Vout = \Vin \cup \{\Xout\}$, where $\Xout \not\in
  \Vin \cup \Sigmain$ is a new nonterminal.  Consider $g : \Vout
  \cup \Sigmain \rightarrow \Vaux \cup \Sigmain$ defined so that
  for every $c \in \Sigmain$, it holds $g(c) = c$, for every $Y \in
  \Vin$, it holds $g(Y) = Y$, and finally, that $g(\Xout) = \Xaux$.
  The function $g$ is clearly a bijection.

  By definition, it holds $\rhsgen{\Gout}{\Xout} = s$. This
  immediately implies that $|\rhsgen{\Gaux}{g(\Xout)}| =
  |\rhsgen{\Gaux}{\Xaux}| = |s| = |\rhsgen{\Gout}{\Xout}|$, and that
  for every $j \in [1 \dd |s|]$, $\rhsgen{\Gaux}{g(\Xout)}[j] =
  \rhsgen{\Gaux}{\Xaux}[j] = s[j] = g(s[j]) =
  g(\rhsgen{\Gout}{\Xout}[j])$.

  Let us now consider $Y \in \Vin$. Denote $t = \rhsgen{\Gin}{Y}$,
  $\tout = \rhsgen{\Gout}{Y}$, and $\taux = \rhsgen{\Gaux}{g(Y)}
  \allowbreak = \rhsgen{\Gaux}{Y}$. Let also $k = |\Occ(s,
  t)|$.  By the assumption about non-overlapping occurrences of $s$ on
  the right-hand side of $\Gin$, we can uniquely write $t = y_0
  s y_1 s \dots s y_{k-1} s y_{k}$, where for every $j \in [0 \dd k]$,
  we have $y_j \in (\Vin \cup \Sigmain)^{*}$.  By definition of the
  global algorithm, this implies that $\tout = y_0 \Xout y_1 \Xout \dots
  \Xout y_{k-1} \Xout y_{k}$.  Let $\taux = z_0 \Xaux z_1 \Xaux \dots
  \Xaux z_{k'-1} \Xaux z_{k'}$, where $k' \geq 0$ and for every $j \in
  [0 \dd k']$, it holds $z_j \in (\Vin \cup \Sigmain)^{*}$, i.e.,
  $\Xaux$ does not occur in any of the strings in $\{z_j\}_{j \in [0
  \dd k']}$. Note that we then have $\substitute{\taux}{\Xaux}{s} = z_0 s
  z_1 s \dots s z_{k-1} s z_{k} = t$.  We will prove that $k' = k$ and
  for every $j \in [0 \dd k]$, $z_j = y_j$. First, observe that $k'
  \leq k$, since otherwise the assumption $\substitute{\taux}{\Xaux}{s} =
  t$ implies $|\Occ(s, t)| = k' > k$. Suppose $k' < k$.  By the
  assumption about non-overlapping occurrences of $s$ in $\taux$,
  $\substitute{\taux}{\Xaux}{s} = t$ then implies that $s$ is a substring of
  one of the elements of $\{z_j\}_{j \in [0 \dd k']}$.  But then
  $\Occ(s, \taux) \neq \emptyset$, contradicting the assumption. We thus
  have $k' = k$.  Observe now that by definition of global algorithms,
  we have $\substitute{\tout}{\Xout}{s} = t$. Thus, $|t| = |\tout| + |s|
  \cdot (k-1)$. On the other hand, we also have
  $\substitute{\taux}{\Xaux}{s} = t$, and hence $|t| = |\taux| + |s| (k-1)$.
  We thus obtain $|\tout| = |t| - |s| (k-1) = |\taux|$.
  Suppose that there exists $j \in [0 \dd
  k]$ such that $y_j \neq z_j$.  Let us take the smallest such
  $j$. Observe, that we cannot have $j = k$, since then we would have
  $|z_k| = |\taux| - k - \sum_{q\in[0 \dd k)} |z_q| = |\tout| - k -
  \sum_{q \in [0 \dd k)} |y_q| = |y_k|$, and both $y_k$ and $z_k$ are
  suffixes of $t$. Thus, $j \in [0 \dd k)$. Consider now two cases:
  \begin{itemize}[itemsep=0pt,parsep=0pt]
  \item First, assume $|y_j| \neq |z_j|$. Denote $i_y = |y_0 s y_1 s
    \dots s y_j|$ and $i_z = |z_0 s z_1 s \dots s z_j|$.  Since $|y_0
    s \dots s y_{j-1} s| = |z_0 s \dots s z_{j-1} s|$, we have $i_y
    \neq i_z$. Recall now that $y_0 s \dots s y_k = z_0 s \dots s z_k
    \allowbreak = t$. Thus, by $j < k$, we have $i_y, i_z \in \Occ(s, t)$.
    By the assumption, there are no two occurrences of $s$ in $t$ that
    overlap each other. Thus, $|i_y - i_z| \geq |s|$. Consider now two
    subcases.  If $i_y < i_z$, then $s$ is a substring of
    $z_j$. Since, however, $z_j$ is a substring of $\taux$, this
    contradicts the assumption $\Occ(s, \taux) = \emptyset$. If $i_z <
    i_y$, then $s$ is a substring of $y_j$.  By $t = y_0 s y_1 s \dots
    s y_{k-1} s y_k$, this implies $|\Occ(s, t)| > k$, which
    contradicts the definition of $k$.
  \item Let us now assume $|y_j| = |z_j|$. By $|y_0 s \dots s y_{j-1}
    s| = |z_0 s \dots s z_{j-1} s|$, this implies $y_0 s \dots s y_{k}
    \allowbreak \neq z_0 s \dots s z_{k}$, which contradicts the assumption
    $\substitute{\taux}{\Xaux}{s} = t$, since $\substitute{\taux}{\Xaux}{s}
    = z_0 s \dots s z_k$ and $t = y_0 s \dots s y_{k}$.
  \end{itemize}
  We have thus proved that $k' = k$ and that for every $j \in [0 \dd
  k]$, it holds $y_j = z_j$.  Consequently, $\taux = y_0 \Xaux y_1
  \Xaux \dots \Xaux y_{k-1} \Xaux y_{k}$.  By $g(\Xout) = \Xaux$ and
  $\{y_j\}_{j \in [0 \dd k]} \subseteq (\Vin \cup \Sigmain)^{*}$, this
  implies that for every $j \in [1 \dd |\taux|]$, it holds $\taux[j] =
  g(\tout[j])$, which concludes the proof that $\Gout$ is isomorphic to
  $\Gaux$.
\end{proof}

\begin{remark}
  To see an example, where the assumption about $s$ not having two
  overlapping occurrences on the right-hand side of $\Gin$ is
  needed in \cref{lm:nonov-s}, let $\Gin = (\{\Startin\},
  \{\texttt{a}\}, \Rulesin, \Startin)$, where
  $\rhsgen{\Gin}{\Startin} = \texttt{aaaa}$. Let also $s = \texttt{aa}$
  and $\Gaux = (\{\Startin, \Xaux\}, \{\texttt{a}\}, \Rulesaux,
  \Startin)$ be such that $\rhsgen{\Gaux}{\Startin}
  = a \Xaux a$ and $\rhsgen{\Gaux}{\Xaux} = \texttt{aa}$.  Then,
  the conditions in \cref{lm:nonov-s} are satisfied for $\Gaux$,
  but the output $\Gout$ of a global algorithm on $\Gin$
  using $s$ as the maximal substring, is not isomorphic to $\Gaux$.
\end{remark}

\begin{lemma}\label{lm:sub}
  Let $\mathcal{I} \subseteq \{1, \dots, |V|\}$ and $i \in \{1, \dots,
  |V|\} \setminus \mathcal{I}$. Denote $\mathcal{I}' = \mathcal{I}
  \cup \{i\}$ and $s = {\rm bexp}_{\mathcal{I}}(N_i)$.  Then, for
  every $X \in V \cup \Sigma$, it holds
  \[
    {\rm bexp}_{\mathcal{I}}(X) = \substitute{{\rm
    bexp}_{\mathcal{I}'}(X)}{M_i}{s}.
  \]
\end{lemma}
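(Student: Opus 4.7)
The plan is to proceed by induction on the index $j$ when $X = N_j$, using the ordering $|\expgen{G}{N_1}| \leq \cdots \leq |\expgen{G}{N_{|V|}}|$ so that whenever $\rhsgen{G}{N_j} = YZ$ with $Y$ or $Z$ in $V$, the corresponding index is strictly less than $j$. The terminal case $X \in \Sigma$ is disposed of at the start: by \cref{def:bexp} both sides reduce to ${\rm bexp}_{\mathcal{I}}(X) = X = {\rm bexp}_{\mathcal{I}'}(X)$, and since $M_i \neq X$ (as $\Sigma$ and $\{M_i\}_{i \in [1 \dd |V|]}$ are disjoint) the substitution leaves $X$ unchanged.

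For the inductive step on $X = N_j$, I would split into three cases. If $j \in \mathcal{I}$, then $\mathcal{I} \subseteq \mathcal{I}'$ gives $j \in \mathcal{I}'$, so both ${\rm bexp}_{\mathcal{I}}(N_j)$ and ${\rm bexp}_{\mathcal{I}'}(N_j)$ equal $M_j$; and since $i \notin \mathcal{I}$ implies $j \neq i$, we have $M_j \neq M_i$, so $\substitute{M_j}{M_i}{s} = M_j$. If $j = i$, then $i \in \mathcal{I}'$ gives ${\rm bexp}_{\mathcal{I}'}(N_i) = M_i$, hence $\substitute{M_i}{M_i}{s} = s = {\rm bexp}_{\mathcal{I}}(N_i)$ by definition of $s$. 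If $j \notin \mathcal{I}'$ (so $j \neq i$ and $j \notin \mathcal{I}$), let $Y, Z \in V \cup \Sigma$ be such that $\rhsgen{G}{N_j} = YZ$; \cref{def:bexp} yields
\[
  {\rm bexp}_{\mathcal{I}}(N_j) = {\rm bexp}_{\mathcal{I}}(Y) \cdot \dol_{j} \cdot {\rm bexp}_{\mathcal{I}}(Z),
  \quad
  {\rm bexp}_{\mathcal{I}'}(N_j) = {\rm bexp}_{\mathcal{I}'}(Y) \cdot \dol_{j} \cdot {\rm bexp}_{\mathcal{I}'}(Z).
\]
Each of $Y, Z$ is either in $\Sigma$ (handled by the base case) or equal to $N_l$ with $l < j$ (handled by the inductive hypothesis), so in either case ${\rm bexp}_{\mathcal{I}}(Y) = \substitute{{\rm bexp}_{\mathcal{I}'}(Y)}{M_i}{s}$ and similarly for $Z$.

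To finish I would invoke the straightforward fact that the substitution operator ${\rm sub}(\cdot, M_i, s)$ is a monoid homomorphism on strings, i.e., $\substitute{uv}{M_i}{s} = \substitute{u}{M_i}{s} \cdot \substitute{v}{M_i}{s}$, and on single symbols leaves every character other than $M_i$ fixed. Since $\dol_{j}$ lies in $\Sigma''$ and is therefore distinct from $M_i$, we get $\substitute{\dol_{j}}{M_i}{s} = \dol_{j}$, and distributing the substitution over the two concatenations in ${\rm bexp}_{\mathcal{I}'}(N_j)$ yields exactly ${\rm bexp}_{\mathcal{I}}(N_j)$. The only subtle ingredient is the disjointness of the fresh variable set $\{M_i\}$ from $V \cup \Sigma''$, which is guaranteed in the setup preceding the lemma; there is no substantive obstacle beyond keeping the case analysis clean.
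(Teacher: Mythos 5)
Your proposal is correct and follows essentially the same inductive argument as the paper: induction on $j$ with $X = N_j$, the same three-way case split ($j \in \mathcal{I}$, $j = i$, $j \notin \mathcal{I}'$), and the same use of the substitution operator's homomorphism property to distribute over the concatenation. The only cosmetic difference is that the paper spells out a separate base case $j=1$ with its own sub-analysis, whereas you fold it into the general inductive step by noting that both children of $N_1$ are terminals, which is slightly cleaner but not a different route.
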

\begin{proof}
  If $X \in \Sigma$ then, by definition, for every $\mathcal{J}
  \subseteq \{1, \dots, |V|\}$, it holds ${\rm bexp}_{\mathcal{J}}(X)
  = X$. Thus, by $M_i \not\in \Sigma$, we obtain ${\rm
  bexp}_{\mathcal{I}}(X) = X = \substitute{X}{M_i}{s} =
  \substitute{{\rm bexp}_{\mathcal{I}'}(X)}{M_i}{s}$.

  Let us thus assume that $X \in V$, i.e., that for some $j \in \{1,
  \dots, |V|\}$, it holds $X = N_j$. We prove the claim by induction
  on $j$. To prove the induction base, let $j = 1$.  Let us first
  assume that $i = j = 1$. On the one hand, ${\rm
  bexp}_{\mathcal{I}}(X) = {\rm bexp}_{\mathcal{I}}(N_1) = s$.  On
  the other hand, by $1 \in \mathcal{I}'$ we have ${\rm
  bexp}_{\mathcal{I}'}(X) = {\rm bexp}_{\mathcal{I}}(N_i) = M_1$.
  Consequently, $\substitute{{\rm bexp}_{\mathcal{I}'}(X)}{M_1}{s} =
  \substitute{M_1}{M_1}{s} = s$. Let us now assume $i \neq 1$. If $1
  \in \mathcal{I}$, then $1 \in \mathcal{I}'$ and hence ${\rm
  bexp}_{\mathcal{I}}(X) = {\rm bexp}_{\mathcal{I}'}(X) = M_1$.  By
  $i \neq 1$, we then have $\substitute{{\rm
  bexp}_{\mathcal{I}'}(X)}{M_i}{s} = \substitute{M_1}{M_i}{s} =
  M_1$. Thus, we proved the claim. It remains to consider the case $1
  \not\in \mathcal{I}$. Note that by $1 \neq i$, we then also have $1
  \not\in \mathcal{I}'$. By the assumption $|\expgen{G}{N_1}| \leq
  |\expgen{G}{N_1}| \leq \dots \leq |\expgen{G}{N_{|V|}}|$, it follows
  that there exist $A, B \in \Sigma$ such that $\rhsgen{G}{N_1} = AB$.
  We then have ${\rm bexp}_{\mathcal{I}}(X) = {\rm
  bexp}_{\mathcal{I}}(A) \cdot \dol_{j} \cdot {\rm
  bexp}_{\mathcal{I}}(B) = A \cdot \dol_{1} \cdot B$ and ${\rm
  bexp}_{\mathcal{I}'}(X) = {\rm bexp}_{\mathcal{I}'}(A) \cdot
  \dol_{1} \cdot {\rm bexp}_{\mathcal{I}'}(B) = A \cdot \dol_{1}
  \cdot B$.  By $M_i \not\in \Sigma$, it thus follows that
  $\substitute{{\rm bexp}_{\mathcal{I}'}(X)}{M_i}{s} = \substitute{A
  \cdot \dol_{1} \cdot B}{M_i}{s} = A \cdot \dol_{1} \cdot
  B$. Thus, we proved the claim.

  Let us now assume $j > 1$.  Let us first assume that $i = j$. On the
  one hand, ${\rm bexp}_{\mathcal{I}}(X) = {\rm
  bexp}_{\mathcal{I}}(N_i) = s$.  On the other hand, by $i \in
  \mathcal{I}'$ we have ${\rm bexp}_{\mathcal{I}'}(X) = {\rm
  bexp}_{\mathcal{I}}(N_i) = M_i$.  Consequently, $\substitute{{\rm
  bexp}_{\mathcal{I}'}(X)}{M_i}{s} = \substitute{M_i}{M_i}{s} =
  s$.  Let us now assume $i \neq j$. Consider two subcases.  If $j \in
  \mathcal{I}$, then $j \in \mathcal{I}'$ and hence ${\rm
  bexp}_{\mathcal{I}}(X) = {\rm bexp}_{\mathcal{I}'}(X) = M_j$.  By
  $i \neq j$, we then have $\substitute{{\rm
  bexp}_{\mathcal{I}'}(X)}{M_i}{s} = \substitute{M_j}{M_i}{s} =
  M_j$. Thus, we proved the claim.  It remains to consider the case $j
  \not\in \mathcal{I}$. Note that by $j \neq i$, we then also have $j
  \not\in \mathcal{I}'$. Thus, letting $A, B \in V \cup \Sigma$ be
  such that $\rhsgen{G}{N_j} = AB$, we have ${\rm
  bexp}_{\mathcal{I}}(X) = {\rm bexp}_{\mathcal{I}}(A) \cdot
  \dol_{j} \cdot {\rm bexp}_{\mathcal{I}}(B)$ and ${\rm
  bexp}_{\mathcal{I}'}(X) = {\rm bexp}_{\mathcal{I}'}(A) \cdot
  \dol_{j} \cdot {\rm bexp}_{\mathcal{I}'}(B)$.  Consider now two
  cases:
  \begin{itemize}[itemsep=0pt,parsep=0pt]
  \item First, assume $A \in \Sigma$. On the one hand, we then have
    ${\rm bexp}_{\mathcal{I}}(A) = A$. On the other hand, ${\rm
    bexp}_{\mathcal{I}'}(A) = A$ and by $M_i \not\in \Sigma$, we have
    $\substitute{{\rm bexp}_{\mathcal{I}'}(A)}{M_i}{s} =
    \substitute{A}{M_i}{s} = A$. Putting these together, we thus
    obtain ${\rm bexp}_{\mathcal{I}}(A) = \substitute{{\rm
    bexp}_{\mathcal{I}'}(A)}{M_i}{s}$.
  \item Next, assume $A \in V$. By $|\expgen{G}{A}| <
    |\expgen{G}{N_j}|$, there exists $l \in [1 \dd j)$ such that $A =
    N_l$. By the inductive assumption we then have ${\rm
    bexp}_{\mathcal{I}}(A) = {\rm bexp}_{\mathcal{I}}(N_l) =
    \substitute{{\rm bexp}_{\mathcal{I}'}(N_l)}{M_i}{s} \allowbreak =
    \substitute{{\rm bexp}_{\mathcal{I}'}(A)}{M_i}{s}$.
  \end{itemize}
  In both cases, we thus proved that ${\rm bexp}_{\mathcal{I}}(A) =
  \substitute{{\rm bexp}_{\mathcal{I}'}(A)}{M_i}{s}$.  Analogously, it
  holds ${\rm bexp}_{\mathcal{I}}(B) = \substitute{{\rm
  bexp}_{\mathcal{I}'}(B)}{M_i}{s}$. Noting that $\dol_{j} =
  \substitute{\dol_{j}}{M_i}{s}$, we thus have
  \begin{align*}
    {\rm bexp}_{\mathcal{I}}(X)
      &= {\rm bexp}_{\mathcal{I}}(A) \cdot \dol_{j} \cdot
         {\rm bexp}_{\mathcal{I}}(B)\\
      &= \substitute{{\rm bexp}_{\mathcal{I}'}(A)}{M_i}{s} \cdot
         \substitute{\dol_{j}}{M_i}{s} \cdot
         \substitute{{\rm bexp}_{\mathcal{I}'}(B)}{M_i}{s}\\
      &= \substitute{
           {\rm bexp}_{\mathcal{I}'}(A) \cdot \dol_{j} \cdot
           {\rm bexp}_{\mathcal{I}'}(B)}{M_i}{s}\\
      &= \substitute{{\rm bexp}_{\mathcal{I'}}(X)}{M_i}{s}.
      \qedhere
  \end{align*}
\end{proof}

\begin{corollary}\label{cor:sub}
  Let $\mathcal{I} \subseteq \{1, \dots, |V|\}$ and $i \in \{1, \dots,
  |V|\} \setminus \mathcal{I}$. Denote $\mathcal{I}' = \mathcal{I}
  \cup \{i\}$ and
  $s = {\rm bexp}_{\mathcal{I}}(N_i)$.
  Then, for
  every $X \in V_{\mathcal{I}}$, it holds
  \[
    \rhsgen{G_{\mathcal{I}}}{X} =
    \substitute{\rhsgen{G_{\mathcal{I'}}}{X}}{M_i}{s}.
  \]
\end{corollary}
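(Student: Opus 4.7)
The plan is to reduce the corollary to \cref{lm:sub} by a direct case analysis on the shape of $X \in V_{\mathcal{I}}$. Recall from \cref{def:bigG} that $V_{\mathcal{I}} = \{S\} \cup \{M_j\}_{j \in \mathcal{I}}$, so there are only two kinds of nonterminal on the left-hand side to consider: the start symbol $S$, and a variable $M_j$ with $j \in \mathcal{I}$. Since $i \notin \mathcal{I}$, in the second case we automatically have $j \neq i$, which will matter because it ensures $M_j \neq M_i$ and $\dol_{j} \neq M_i$. In both grammars $G_{\mathcal{I}}$ and $G_{\mathcal{I}'}$ the right-hand sides are given by the same template from \cref{def:bigG}, just with $\mathcal{I}$ replaced by $\mathcal{I}'$; so the task reduces to showing that substituting $M_i$ by $s$ in the $\mathcal{I}'$-version of each right-hand side gives back the $\mathcal{I}$-version.

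First I would record the key routine observation that $\substitute{\cdot}{M_i}{s}$ distributes over concatenation (immediate from the definition of ${\rm sub}$) and fixes every symbol distinct from $M_i$. In particular, $\substitute{\hash_{t}}{M_i}{s} = \hash_{t}$ for every $t \in [1 \dd 2|V|]$, and $\substitute{\dol_{j}}{M_i}{s} = \dol_{j}$ whenever $j \neq i$. Applying \cref{lm:sub} to each symbol in $V \cup \Sigma$ that appears in the template yields ${\rm bexp}_{\mathcal{I}}(Y) = \substitute{{\rm bexp}_{\mathcal{I}'}(Y)}{M_i}{s}$ for every such $Y$, and this is the only nontrivial input we will need.

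For the case $X = S$, I would unfold $\rhsgen{G_{\mathcal{I}'}}{S} = \bigodot_{j=1,\dots,|V|} {\rm bexp}_{\mathcal{I}'}(N_j) \cdot \hash_{2j-1} \cdot {\rm bexp}_{\mathcal{I}'}(N_j) \cdot \hash_{2j}$, push the substitution inside the concatenation using the observation above, replace each ${\rm bexp}_{\mathcal{I}'}(N_j)$ via \cref{lm:sub}, and note that the $\hash$ symbols pass through unchanged; the result matches $\rhsgen{G_{\mathcal{I}}}{S}$ verbatim. For the case $X = M_j$ with $j \in \mathcal{I}$, I would write $\rhsgen{G_{\mathcal{I}'}}{M_j} = {\rm bexp}_{\mathcal{I}'}(A) \cdot \dol_{j} \cdot {\rm bexp}_{\mathcal{I}'}(B)$, where $A, B$ are determined by $\rhsgen{G}{N_j} = AB$, and apply the same three moves: distribute the substitution, rewrite the two ${\rm bexp}_{\mathcal{I}'}$ factors via \cref{lm:sub}, and use $j \neq i$ to keep $\dol_{j}$ unchanged. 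This delivers $\rhsgen{G_{\mathcal{I}}}{M_j}$.

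There is no real obstacle here; this is genuinely a bookkeeping corollary whose only content beyond \cref{lm:sub} is (i) that the substitution is a monoid homomorphism on $(\Sigma' \cup V_{\mathcal{I}'} \cup \{\hash_t\})^{*}$, and (ii) that the extra symbols $\hash_{2j-1}, \hash_{2j}, \dol_{j}$ that appear in $G_{\mathcal{I}}$ and $G_{\mathcal{I}'}$ but not in the range of ${\rm bexp}$ are all different from $M_i$. The mild subtlety to flag explicitly is that $j \in \mathcal{I}$ forces $j \neq i$ (since $i \notin \mathcal{I}$), which is precisely what legitimizes treating $\dol_{j}$ as inert under the substitution.
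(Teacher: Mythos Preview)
Your proposal is correct and follows essentially the same approach as the paper: a case split on $X \in V_{\mathcal{I}} = \{S\} \cup \{M_j\}_{j \in \mathcal{I}}$, distributing the substitution over the concatenation template from \cref{def:bigG}, and invoking \cref{lm:sub} on each ${\rm bexp}$ factor while the $\hash$ and $\dol$ symbols pass through unchanged. One small remark: your caveat that $j \neq i$ is needed so that $\dol_j$ is inert is harmless but unnecessary, since $\dol_j \in \Sigma''$ and $M_i \notin \Sigma''$ are distinct symbols regardless of whether $j = i$.
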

\begin{proof}
  Recall that $V_{\mathcal{I}} = \{S\} \cup \{M_j\}_{j \in \mathcal{I}}$.
  By \cref{lm:sub} and \cref{def:bigG}, it holds:
  \begin{align*}
    \rhsgen{G_{\mathcal{I}}}{S}
      &= \textstyle\bigodot_{i=1,\dots,|V|}
         {\rm bexp}_{\mathcal{I}}(N_i) \cdot \hash_{2i-1} \cdot
         {\rm bexp}_{\mathcal{I}}(N_i) \cdot \hash_{2i}\\
      &= \textstyle\bigodot_{i=1,\dots,|V|}
         \substitute{{\rm bexp}_{\mathcal{I}'}(N_i)}{M_i}{s} \cdot
         \substitute{\hash_{2i-1}}{M_i}{s} \cdot\\
      &  \hspace{2.35cm}
         \substitute{{\rm bexp}_{\mathcal{I}'}(N_i)}{M_i}{s} \cdot
         \substitute{\hash_{2i}}{M_i}{s}\\
      &= \substitute{\textstyle\bigodot_{i=1,\dots,|V|}
         {\rm bexp}_{\mathcal{I}'}(N_i) \cdot \hash_{2i-1} \cdot
         {\rm bexp}_{\mathcal{I}'}(N_i) \cdot \hash_{2i}}{M_i}{s}\\
      &= \substitute{\rhsgen{G_{\mathcal{I}'}}{S}}{M_i}{s}.
  \end{align*}
  Let us now consider $j \in \mathcal{I}$. Let $X, Y \in V \cup \Sigma$
  be such that $\rhsgen{G}{N_j} = XY$. By \cref{lm:sub} and
  \cref{def:bigG} it then follows that:
  \begin{align*}
    \rhsgen{G_{\mathcal{I}}}{M_j}
      &= {\rm bexp}_{\mathcal{I}}(X) \cdot \dol_{j} \cdot
         {\rm bexp}_{\mathcal{I}}(Y)\\
      &= \substitute{{\rm bexp}_{\mathcal{I}'}(X)}{M_i}{s} \cdot
         \substitute{\dol_{j}}{M_i}{s} \cdot
         \substitute{{\rm bexp}_{\mathcal{I}'}(Y)}{M_i}{s}\\
      &= \substitute{{\rm bexp}_{\mathcal{I}'}(X) \cdot \dol_{j} \cdot
                     {\rm bexp}_{\mathcal{I}'}(Y)}{M_i}{s}\\
      &= \substitute{\rhsgen{G_{\mathcal{I}'}}{M_j}}{M_i}{s}.
      \qedhere
  \end{align*}
\end{proof}

\begin{lemma}\label{lm:maximal-substring-existence}
  Let $\mathcal{I} \subseteq \{1, \dots, |V|\}$. There exists a
  maximal string with respect to $G_{\mathcal{I}}$ if and only if
  $\mathcal{I} \neq \{1, \dots, |V|\}$.
\end{lemma}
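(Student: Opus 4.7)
My plan is to handle the two implications separately, leveraging the characterization already established in \cref{lm:maximal-string}.

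For the forward direction, the work is already done: if some $s$ is maximal with respect to $G_{\mathcal{I}}$, then \cref{lm:maximal-string} gives $s = {\rm bexp}_{\mathcal{I}}(N_i)$ for some $i \in \{1,\dots,|V|\} \setminus \mathcal{I}$, so $\{1,\dots,|V|\} \setminus \mathcal{I}$ is nonempty, i.e., $\mathcal{I} \neq \{1,\dots,|V|\}$.

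For the reverse direction, suppose $\mathcal{I} \neq \{1,\dots,|V|\}$ and fix any $i \in \{1,\dots,|V|\} \setminus \mathcal{I}$. I would start by producing an explicit candidate $s := {\rm bexp}_{\mathcal{I}}(N_i)$. From \cref{def:bexp} and $i \notin \mathcal{I}$, $s$ decomposes as ${\rm bexp}_{\mathcal{I}}(X) \cdot \dol_{i} \cdot {\rm bexp}_{\mathcal{I}}(Y)$, so in particular $|s| \geq 3 \geq 2$ (condition~1 of \cref{def:maximal-string}). Moreover, by \cref{def:bigG}, the starting rule contains the block $s \cdot \hash_{2i-1} \cdot s \cdot \hash_{2i}$, which supplies two non-overlapping occurrences of $s$ on the right-hand side of $G_{\mathcal{I}}$ (condition~2).

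The catch is that $s$ itself might fail condition~3: there could be a longer string with at least as many non-overlapping occurrences. I handle this by passing to a longest witness. Let
\[
  \mathcal{T} = \{t \in (\Sigma'' \cup V_{\mathcal{I}})^{+} : |t| \geq 2 \text{ and } t \text{ has at least two non-overlapping occurrences on the rhs of } G_{\mathcal{I}}\}.
\]
The set $\mathcal{T}$ is nonempty (it contains $s$) and the lengths of its elements are bounded by the longest right-hand side in $G_{\mathcal{I}}$, so a length-maximal element $t^{*} \in \mathcal{T}$ exists. Conditions~1 and~2 hold for $t^{*}$ by membership in $\mathcal{T}$. For condition~3, any $s'$ with $|s'| > |t^{*}|$ lies outside $\mathcal{T}$ by maximality of $|t^{*}|$, so $s'$ has strictly fewer than two non-overlapping occurrences, which is strictly less than the number for $t^{*}$ (which is at least two). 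Hence $t^{*}$ is maximal with respect to $G_{\mathcal{I}}$.

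The only subtlety worth flagging is the comparison in condition~3: it is phrased relative to the candidate's own occurrence count, not against a fixed threshold of two. The argument works precisely because $t^{*} \in \mathcal{T}$ enjoys at least two non-overlapping occurrences while every strictly longer string has fewer, so the strict inequality needed comes for free.
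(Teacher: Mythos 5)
Your proof is correct, and the reverse direction takes a genuinely different route from the paper. The paper's proof picks $i = \arg\max_{j \in \{1,\dots,|V|\} \setminus \mathcal{I}}|{\rm bexp}_{\mathcal{I}}(N_j)|$ and argues directly that $s = {\rm bexp}_{\mathcal{I}}(N_i)$ satisfies condition~3 of \cref{def:maximal-string}, using the structural fact (established via \cref{lm:dollars}) that \emph{every} string with two non-overlapping occurrences on the right-hand side of $G_{\mathcal{I}}$ must be a substring of some ${\rm bexp}_{\mathcal{I}}(N_j)$ with $j \notin \mathcal{I}$ -- so the longest such ${\rm bexp}$ is automatically a longest repeated substring. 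You instead use ${\rm bexp}_{\mathcal{I}}(N_i)$ (for an arbitrary $i \notin \mathcal{I}$) only as evidence that the set $\mathcal{T}$ of length-$\geq 2$ strings with two non-overlapping occurrences is nonempty, and then pass to a length-maximal element of $\mathcal{T}$; the key observation that any strictly longer string has at most one non-overlapping occurrence, while the chosen element has at least two, forces condition~3. Your argument is more elementary and more general -- it effectively proves the abstract fact that in any SLG, a maximal string exists as soon as some length-$\geq 2$ substring repeats -- and in particular it does not need the substring-containment structure of $G_{\mathcal{I}}$ beyond exhibiting a single witness in $\mathcal{T}$. The paper's route buys slightly more: it identifies \emph{which} string is maximal (one of the ${\rm bexp}_{\mathcal{I}}(N_j)$'s), but that extra information is already supplied by \cref{lm:maximal-string} and is not needed for the statement at hand. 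One small inaccuracy you flagged and handled well: condition~3 compares against the candidate's own occurrence count rather than a fixed threshold, and your strict inequality (at most one versus at least two) is exactly what closes that gap.
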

\begin{proof}

  Let us first assume that there exists $s$ that is maximal with
  respect to $G_{\mathcal{I}}$. By \cref{lm:maximal-string}, it
  follows that there exists $i \in \{1, \dots, |V|\} \setminus
  \mathcal{I}$ such that $s = {\rm bexp}_{\mathcal{I}}(N_i)$. Thus,
  $\mathcal{I} \neq \{1, \dots, |V|\}$.

  Let us now assume that $\mathcal{I} \neq \{1, \dots, |V|\}$. Observe
  (see \cref{def:bigG}) that each of the symbols in the set
  $\{\hash_j\}_{j \in [1 \dd 2|V|]} \cup \{\dol_{j}\}_{j \in
  \mathcal{I}}$ has only one occurrence on the right-hand side of
  $G_{\mathcal{I}}$; see also the proof of
  \cref{lm:maximal-string}. This implies that any substring with at
  least two non-overlapping occurrences on the right-hand side of
  $G_{\mathcal{I}}$ must be a substring of one of the string in
  $\{{\rm bexp}_{\mathcal{I}}(N_j)\}_{j \in \{1, \dots, |V|\}
  \setminus \mathcal{I}}$. On the other hand, each of the strings in
  $\{{\rm bexp}_{\mathcal{I}}(N_j)\}_{j \in \{1, \dots, |V|\}
  \setminus \mathcal{I}}$ has at least two occurrences on the
  right-hand side of $G_{\mathcal{I}}$ (in the definition of
  $S$). Thus, a string has at least two occurrences on the right-hand
  side of $G_{\mathcal{I}}$ if and only if it is a substring of one of
  the elements in $\{{\rm bexp}_{\mathcal{I}}(N_j)\}_{j \in \{1,
  \dots, |V|\} \setminus \mathcal{I}}$. Let $i = \arg\max_{j \in \{1,
  \dots, |V|\} \setminus \mathcal{I}} \{|{\rm
  bexp}_{\mathcal{I}}(N_j)|\}$ (with ties resolved arbitrarily) and
  $s = {\rm bexp}_{\mathcal{I}}(N_i)$. By the above discussion, $s$
  has at least two non-overlapping occurrences on the right-hand side
  of $G_{\mathcal{I}}$, and by definition of $i$, there are no longer
  strings with the above property. Since $i \not\in \mathcal{I}$, it
  also holds by \cref{def:bexp}, that $|s| \geq 2$. Thus, by
  \cref{def:maximal-string}, $s$ is maximal with respect to
  $G_{\mathcal{I}}$.
\end{proof}

\begin{lemma}\label{lm:one-step}
  Let $\mathcal{I} \subsetneq \{1, \dots, |V|\}$. Let $i \in
  \{1, \dots, |V|\} \setminus \mathcal{I}$ be such that
  ${\rm bexp}_{\mathcal{I}}(N_i)$ is maximal with respect to
  $G_{\mathcal{I}}$. Denote $s = {\rm bexp}_{\mathcal{I}}(N_i)$.
  Let $\Gout$ be the output of a single step of a global algorithm
  on $G_{\mathcal{I}}$ with $s$ as the maximal string. Then, $\Gout$
  is isomorphic with $G_{\mathcal{I} \cup \{i\}}$.
\end{lemma}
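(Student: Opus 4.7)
The plan is to invoke Lemma~\ref{lm:nonov-s} with $\Gin = G_{\mathcal{I}}$, $\Gaux = G_{\mathcal{I} \cup \{i\}}$, and the fresh nonterminal $\Xaux$ identified with $M_i$. Since Lemma~\ref{lm:nonov-s} yields that $\Gout$ is isomorphic to $\Gaux$, the claim follows once we verify its two preconditions: that any two occurrences of $s$ on the right-hand side of $G_{\mathcal{I}}$ are non-overlapping, and that $G_{\mathcal{I} \cup \{i\}}$ satisfies the three bullet conditions on $\Gaux$ listed in Lemma~\ref{lm:nonov-s}.

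For the non-overlap precondition, I would combine two facts. First, by Lemma~\ref{lm:dollars} the string $s = {\rm bexp}_{\mathcal{I}}(N_i)$ contains $\dol_i$ but none of the symbols in $\{\hash_t\}_{t \in [1 \dd 2|V|]} \cup \{\dol_t : t \in \mathcal{I}\}$. Second, inspecting $\rhsgen{G_{\mathcal{I}}}{S}$ and each $\rhsgen{G_{\mathcal{I}}}{M_j}$ for $j \in \mathcal{I}$ shows that these definitions consist of substrings of the form ${\rm bexp}_{\mathcal{I}}(N_l)$ (or singletons from $\Sigma$) glued together by single occurrences of symbols in $\{\hash_t\} \cup \{\dol_t : t \in \mathcal{I}\}$. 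Consequently every occurrence of $s$ is confined to a single ${\rm bexp}_{\mathcal{I}}(N_l)$ block, and Lemma~\ref{lm:bexp-occ} guarantees that inside each such block the occurrences of $s = {\rm bexp}_{\mathcal{I}}(N_i)$ are pairwise non-overlapping; occurrences in different blocks are separated by at least one of the unique separator symbols and are therefore trivially disjoint.

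To confirm that $G_{\mathcal{I} \cup \{i\}}$ meets the three bullet conditions of Lemma~\ref{lm:nonov-s} with $M_i$ as $\Xaux$, I need to check (i) $\rhsgen{G_{\mathcal{I} \cup \{i\}}}{M_i} = s$; (ii) $\substitute{\rhsgen{G_{\mathcal{I} \cup \{i\}}}{Y}}{M_i}{s} = \rhsgen{G_{\mathcal{I}}}{Y}$ for every $Y \in V_{\mathcal{I}}$; and (iii) $\Occ(s, \rhsgen{G_{\mathcal{I} \cup \{i\}}}{Y}) = \emptyset$ for every such $Y$. Item (ii) is exactly Corollary~\ref{cor:sub}. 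For item (iii), $s$ contains $\dol_i$, and Lemma~\ref{lm:dollar-on-rhs} applied to the set $\mathcal{I} \cup \{i\}$ (which now contains $i$) shows that the unique occurrence of $\dol_i$ on the right-hand side of $G_{\mathcal{I} \cup \{i\}}$ lies inside $\rhsgen{G_{\mathcal{I} \cup \{i\}}}{M_i}$; since $M_i \notin V_{\mathcal{I}}$, neither $\dol_i$ nor $s$ can appear in $\rhsgen{G_{\mathcal{I} \cup \{i\}}}{Y}$ for any $Y \in V_{\mathcal{I}}$.

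The main obstacle is item (i). Writing $\rhsgen{G}{N_i} = XY$, Definition~\ref{def:bigG} gives $\rhsgen{G_{\mathcal{I} \cup \{i\}}}{M_i} = {\rm bexp}_{\mathcal{I} \cup \{i\}}(X) \cdot \dol_i \cdot {\rm bexp}_{\mathcal{I} \cup \{i\}}(Y)$, whereas $i \notin \mathcal{I}$ yields $s = {\rm bexp}_{\mathcal{I}}(X) \cdot \dol_i \cdot {\rm bexp}_{\mathcal{I}}(Y)$. It therefore suffices to show that ${\rm bexp}_{\mathcal{I}}(Z) = {\rm bexp}_{\mathcal{I} \cup \{i\}}(Z)$ for each $Z \in \{X, Y\}$. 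The crucial observation is that, by the length ordering $|\expgen{G}{N_1}| \le \dots \le |\expgen{G}{N_{|V|}}|$, if $Z = N_l$ then $l < i$, so the recursive unfolding defining ${\rm bexp}(\cdot)$ through $G'$ starting from $N_l$ never reaches $N_i$; a short induction on $l$ (case-splitting on whether $l \in \mathcal{I}$) then shows that the two bounded-expansion functions coincide on every such $N_l$, and they agree trivially on $\Sigma$. This completes the verification, and Lemma~\ref{lm:nonov-s} delivers that $\Gout$ is isomorphic to $G_{\mathcal{I} \cup \{i\}}$.
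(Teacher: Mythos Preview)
Your proposal is correct and follows essentially the same route as the paper: set $\Gin = G_{\mathcal{I}}$, $\Gaux = G_{\mathcal{I}\cup\{i\}}$, $\Xaux = M_i$, and invoke Lemma~\ref{lm:nonov-s}, using Corollary~\ref{cor:sub} for the substitution identity and Lemma~\ref{lm:dollar-on-rhs} for $\Occ(s,\rhsgen{\Gaux}{Y})=\emptyset$. You are in fact more careful than the paper in two places: the paper asserts $\rhsgen{G_{\mathcal{I}\cup\{i\}}}{M_i} = {\rm bexp}_{\mathcal{I}}(N_i)$ directly ``by Definition~\ref{def:bigG}'' without spelling out why ${\rm bexp}_{\mathcal{I}\cup\{i\}}(Z) = {\rm bexp}_{\mathcal{I}}(Z)$ for $Z\in\{X,Y\}$, and it does not explicitly verify the non-overlap hypothesis of Lemma~\ref{lm:nonov-s} (which you handle via Lemma~\ref{lm:bexp-occ} and the separator structure).
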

\begin{proof}
  Let $\mathcal{I}' = \mathcal{I} \cup
  \{i\}$.  Denote $\Gin = (\Vin, \Sigmain, \Rulesin, \Startin) =
  G_{\mathcal{I}}$ and $\Gaux = (\Vaux, \Sigmain, \Rulesaux, \Startin)
  \allowbreak = G_{\mathcal{I}'}$. Observe that letting $\Xaux = M_i$,
  it holds $\Vaux = \Vin \cup \{\Xaux\}$ and $\Xaux \not\in \Vin \cup
  \Sigmain$. By \cref{def:bigG}, we also have $\rhsgen{\Gaux}{\Xaux} =
  {\rm bexp}_{\mathcal{I}}(N_i) = s$.  Lastly, by \cref{cor:sub}, for
  every $Y \in \Vin$, it holds
  $\substitute{\rhsgen{\Gaux}{Y}}{\Xaux}{s} = \rhsgen{\Gin}{Y}$.
  Observe that we also have $\Occ(s, \rhsgen{\Gaux}{Y}) = \emptyset$,
  since by $i \in \mathcal{I}'$ and \cref{lm:dollar-on-rhs},
  $\dol_{i}$ has only one occurrence on the right-hand side of
  $\Gaux$. Since by definition it occurs in $\rhsgen{\Gaux}{\Xaux}$,
  it thus cannot occur in the definition of any other nonterminal
  (i.e., any nonterminal in $\Vin$).  Consequently, it follows by
  \cref{lm:nonov-s}, that $\Gout$ is isomorphic to $\Gaux =
  G_{\mathcal{I'}'} = G_{\mathcal{I} \cup \{i\}}$.
\end{proof}

\begin{lemma}\label{lm:iso}
  Let $G_{\rm in} = (V_{\rm in}, \Sigma, R_{\rm in}, S_{\rm in})$ and
  $G'_{\rm in} = (V'_{\rm in}, \Sigma, R'_{\rm in}, S'_{\rm in})$ be
  isomorphic SLGs via bijection $g : V_{\rm in} \cup \Sigma
  \rightarrow V'_{\rm in} \cup \Sigma$.  Let $s \in (V_{\rm in} \cup
  \Sigma)^{+}$ and $s' = \bigodot_{i=1,\dots,|s|}g(s[i])$. Then:
  \begin{enumerate}[itemsep=0pt,parsep=0pt]
  \item $s$ is maximal with respect to $G_{\rm in}$ if and only if
    $s'$ is maximal with respect to $G'_{\rm in}$,
  \item If $s$ is maximal, then the output of a single step of a
    global algorithm on $G_{\rm in}$ and $s$ is isomorphic to that on
    $G'_{\rm in}$ and $s'$.
  \end{enumerate}
\end{lemma}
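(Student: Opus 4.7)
The plan is to lift $g$ to a length-preserving bijection $\bar{g}$ on strings and observe that $\bar{g}$ preserves every structural quantity that enters \cref{def:maximal-string} as well as the greedy left-to-right replacement convention of \cref{rm:nonov}. Once this is in hand, both conclusions translate symbol by symbol.

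First I would extend $g$ to $\bar{g} : (V_{\rm in} \cup \Sigma)^{*} \rightarrow (V'_{\rm in} \cup \Sigma)^{*}$ by $\bar{g}(u) = \bigodot_{i=1,\dots,|u|} g(u[i])$, so that $s' = \bar{g}(s)$ and $|s'| = |s|$. Since $g$ is a bijection fixing $\Sigma$, $\bar{g}$ is a length-preserving bijection, and $u = v$ iff $\bar{g}(u) = \bar{g}(v)$; consequently $\Occ(u, v) = \Occ(\bar{g}(u), \bar{g}(v))$ for all $u, v$. The isomorphism hypothesis from \cref{sec:prelim} gives $\rhsgen{G'_{\rm in}}{g(N)} = \bar{g}(\rhsgen{G_{\rm in}}{N})$ for every $N \in V_{\rm in}$, so the (left-to-right greedy, non-overlapping) rhs-occurrence count of $s$ in $G_{\rm in}$ equals that of $s'$ in $G'_{\rm in}$, and more generally the count of $\bar{g}^{-1}(t')$ in $G_{\rm in}$ equals that of $t'$ in $G'_{\rm in}$ for every $t' \in (V'_{\rm in} \cup \Sigma)^{+}$.

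For part 1, the three clauses of \cref{def:maximal-string} transfer as follows: $|s| \geq 2$ is equivalent to $|s'| \geq 2$; having at least two non-overlapping rhs-occurrences in $G_{\rm in}$ is equivalent to the same in $G'_{\rm in}$ by the equality above; and any hypothetical longer witness $t'$ violating the third clause for $s'$ pulls back via $\bar{g}^{-1}$ to a witness $t$ of length $|t| > |s|$ with matching rhs-occurrence count violating the third clause for $s$, and symmetrically.

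For part 2, suppose $s$ is maximal. One step of a global algorithm on $(G_{\rm in}, s)$ introduces a fresh nonterminal $X$ with definition $s$ and replaces every greedy non-overlapping rhs-occurrence of $s$ by $X$; similarly one obtains the output on $(G'_{\rm in}, s')$ with a fresh $X'$ and definition $s'$. I would extend $g$ to $\hat{g}$ by $\hat{g}(X) = X'$, which remains a bijection. Because the greedy convention of \cref{rm:nonov} depends only on positional data preserved by $\bar{g}$ (lengths and first-occurrence indices are invariant under $\bar{g}$), the set of positions replaced in each rhs of $G_{\rm in}$ matches exactly the set replaced in the $\bar{g}$-image rhs of $G'_{\rm in}$. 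Verifying the isomorphism conditions from \cref{sec:prelim} for $\hat{g}$ then reduces to the new rule, where $\hat{g}(s) = s'$ by construction, and each inherited rule, which is checked symbolwise. The only mildly delicate point is this commutation with the greedy rule; the rest is pure structural transport along $g$.
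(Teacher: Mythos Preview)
Your proposal is correct and follows essentially the same approach as the paper: both arguments extend $g$ symbolwise to strings, observe that this preserves $\Occ$ sets (and hence greedy non-overlapping counts) between corresponding right-hand sides, transfer the three clauses of \cref{def:maximal-string} directly, and for part~2 extend the bijection by mapping the fresh nonterminal to its counterpart while noting that the greedy left-to-right factorizations $t = y_0 s y_1 s \cdots s y_k$ and $t' = y'_0 s' y'_1 s' \cdots s' y'_k$ correspond under $\bar g$. The paper spells out the factorization correspondence a bit more explicitly, but the content is the same.
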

\begin{proof}
  1. Let $X \in V_{\rm in}$ and $w \in (V_{\rm in} \cup \Sigma)^{*}$.
  Denote $t = \rhsgen{G_{\rm in}}{X}$, $X' = g(X)$, $t' =
  \rhsgen{G'_{\rm in}}{X'}$, and $w' = \bigodot_{i=1,\dots,|w|}g(w[i])$.
  By definition of $g$, it then holds $|t| = |t'|$ and $|w| = |w'|$.
  Moreover, for every $i \in [1 \dd |t|]$, $i \in \Occ(w, t)$ holds if
  and only if $i \in \Occ(w', t')$. Hence, the number of
  (non-overlapping) occurrences of $w$ on the right-hand side of
  $G_{\rm in}$ is equal to the number of (non-overlapping) occurrences
  of $w'$ on the right-hand side of $G'_{\rm in}$. This implies that
  $|s'| = |s| \geq 2$, and that the second and third condition in
  \cref{def:maximal-string} holds for $G_{\rm in}$ and $s$ if and
  only if it holds for $G'_{\rm in}$ and $s'$.

  2. Let $G_{\rm new}$ (resp.\ $G'_{\rm new}$) be the output of a
  single step of global algorithm with $G_{\rm in}$ (resp.\ $G'_{\rm
  in}$) as input and $s$ (resp. $s'$) as a maximal string.  Let
  $N_{\rm new}$ (resp.\ $N'_{\rm new}$) be the newly introduced
  nonterminal.  We then have $\rhsgen{G_{\rm in}}{N_{\rm new}} = s$
  (resp.\ $\rhsgen{G'_{\rm in}}{N'_{\rm in}} = s'$). Let $g' : V_{\rm
  in} \cup \{N_{\rm new}\} \cup \Sigma \rightarrow V'_{\rm in} \cup
  \{N'_{\rm new}\} \cup \Sigma$ be defined so that $g'(N_{\rm new}) =
  N'_{\rm new}$, and for the remaining arguments, $g'$ matches $g$.
  Clearly, $g'$ is a bijection and it holds $s' =
  \bigodot_{i=1,\dots,|s|} g'(s[i])$, since $s' \in (V'_{\rm in} \cup
  \Sigma)^{*}$.  To see that the condition for isomorphism is
  satisfied for the remaining nonterminals, consider $X \in V_{\rm
  in}$ and let $t, t'$, and $X'$ be defined as above.  We observe
  that by the above characterization of $\Occ(s, t)$ and $\Occ(s',
  t')$, it follows that if $t = y_0 s y_1 s \dots s y_{k-1} s y_{k}$
  is a factorization including all non-overlapping occurrences of $s$
  in $t$ obtained by a left-to-right greedy search (see
  \cref{rm:nonov}), then the corresponding factorization for $t'$ and
  $s'$ is $t' = y'_0 s' y'_1 s' \dots s' y'_{k-1} s' y'_{k}$, where
  $y'_i = \bigodot_{j \in [1 \dd |y_i|]}g(y_i[j])$. This implies
  that, letting $z = \rhsgen{G_{\rm new}}{X}$ and $z' =
  \rhsgen{G'_{\rm new}}{X'}$, it holds that $z = y_0 N_{\rm new} y_1
  N_{\rm new} \dots N_{\rm new} y_{k-1} N_{\rm new} y_{k}$ and $z' =
  y'_0 N'_{\rm new} y'_1 N'_{\rm new} \dots N'_{\rm new} y'_{k-1} N'_{\rm
  new} y'_{k}$.  By definition of $g'$, we thus immediately obtain
  that for every $j \in [1 \dd |z|]$, $z'[j] = g'(z[j])$.
\end{proof}

\begin{lemma}\label{lm:global-output-grammar}
  The output of every global algorithm on $w$ is isomorphic with
  $G_{\{1, \dots, |V|\}}$.
\end{lemma}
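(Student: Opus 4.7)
The plan is to proceed by induction on the number of steps performed by the global algorithm, maintaining the invariant that after $k$ steps the current intermediate grammar is isomorphic to $G_{\mathcal{I}_k}$ for some $\mathcal{I}_k \subseteq \{1, \dots, |V|\}$ with $|\mathcal{I}_k| = k$. The base case $k = 0$ is immediate: by \cref{def:bigG,rm:bexp}, the grammar $G_{\emptyset}$ has a single nonterminal $S$ with $\rhsgen{G_\emptyset}{S} = \bigodot_{i=1,\dots,|V|} \expgen{G'}{N_i} \cdot \hash_{2i-1} \cdot \expgen{G'}{N_i} \cdot \hash_{2i} = w$, which matches the initial grammar of every global algorithm on input $w$ (see \cref{sec:algs-global}).

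For the inductive step, suppose that after $k$ steps the current grammar $G_{\rm cur}$ is isomorphic to $G_{\mathcal{I}_k}$ via some bijection $g$. If $\mathcal{I}_k = \{1, \dots, |V|\}$, then by \cref{lm:maximal-substring-existence} there is no maximal string with respect to $G_{\mathcal{I}_k}$, and by part~1 of \cref{lm:iso} the same holds for $G_{\rm cur}$; the global algorithm therefore terminates. Otherwise, the algorithm selects some maximal string $s_{\rm cur}$ in $G_{\rm cur}$. By part~1 of \cref{lm:iso}, the corresponding string $s = \bigodot_{j=1,\dots,|s_{\rm cur}|} g(s_{\rm cur}[j])$ is maximal with respect to $G_{\mathcal{I}_k}$, and by \cref{lm:maximal-string}, $s = {\rm bexp}_{\mathcal{I}_k}(N_i)$ for some $i \in \{1, \dots, |V|\} \setminus \mathcal{I}_k$. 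Applying \cref{lm:one-step}, the output of a single global-algorithm step on $G_{\mathcal{I}_k}$ with $s$ as the maximal string is isomorphic to $G_{\mathcal{I}_k \cup \{i\}}$; applying part~2 of \cref{lm:iso} then lets me transfer this conclusion to $G_{\rm cur}$ and $s_{\rm cur}$. Hence the grammar after $k+1$ steps is isomorphic to $G_{\mathcal{I}_{k+1}}$ where $\mathcal{I}_{k+1} := \mathcal{I}_k \cup \{i\}$, completing the inductive step.

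Since $|\mathcal{I}_k|$ strictly increases with each step, the process terminates after exactly $|V|$ steps with $\mathcal{I}_{|V|} = \{1, \dots, |V|\}$, and the final grammar is isomorphic to $G_{\{1, \dots, |V|\}}$. There is no real obstacle here beyond careful bookkeeping: all the heavy combinatorial content (characterizing maximal substrings, showing they fall inside some ${\rm bexp}_{\mathcal{I}}(N_i)$, and showing that one replacement step transitions $G_{\mathcal{I}}$ to $G_{\mathcal{I} \cup \{i\}}$) has already been absorbed by \cref{lm:maximal-string,lm:one-step,lm:maximal-substring-existence,lm:iso}. The only subtlety to be careful about is that a global algorithm operates on the actual intermediate grammar rather than on $G_{\mathcal{I}_k}$ itself, which is why \cref{lm:iso} is invoked twice: once to transport the chosen maximal string from $G_{\rm cur}$ to $G_{\mathcal{I}_k}$, and once to transport the result of the step back.
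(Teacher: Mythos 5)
Your proof is correct and follows essentially the same route as the paper's: the same induction on the number of steps maintaining the invariant that the intermediate grammar is isomorphic to some $G_{\mathcal{I}_k}$ with $|\mathcal{I}_k| = k$, invoking \cref{lm:iso} twice to transport information between the actual intermediate grammar and $G_{\mathcal{I}_k}$, and applying \cref{lm:maximal-string,lm:one-step,lm:maximal-substring-existence} exactly as the paper does. The only cosmetic difference is that you make the termination argument slightly more explicit (the algorithm halts precisely when $\mathcal{I}_k = \{1,\dots,|V|\}$ because no maximal string remains), whereas the paper leaves this implicit in the ``for $k = |V|$ this yields the claim'' remark.
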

\begin{proof}
  Recall, that $w \in \alpha(G)$ has been fixed for the duration of
  this section.  Let $k \in [0 \dd |V|]$. We prove by induction on $k$
  that after $k$ steps of a global algorithms on $w$, there exists
  $\mathcal{I} \subseteq \{1, \dots, |V|\}$ such that $|\mathcal{I}| =
  k$ and the resulting grammar is isomorphic with
  $G_{\mathcal{I}}$. For $k = |V|$ this yields the claim, since then
  we must have $\mathcal{I} = \{1, \dots, |V|\}$.

  To show the induction base, observe that every global algorithm
  given the string $w$ as input starts with a grammar $G_{\rm beg} =
  (V_{\rm beg}, \Sigma'', R_{\rm beg}, S_{\rm beg})$ such that $V_{\rm
  beg} = \{S_{\rm beg}\}$ and $\rhsgen{G_{\rm beg}}{S_{\rm beg}} = w
  = \bigodot_{i=1,\dots,|V|} \expgen{G'}{N_i} \cdot \hash_{2i-1} \cdot
  \expgen{G'}{N_i} \cdot \hash_{2i}$.  Observe now that by
  \cref{def:bigG}, it holds $V_{\emptyset} = \{S\}$ and
  $\rhsgen{G_{\emptyset}}{S} = \bigodot_{i=1,\dots,|V|} {\rm
  bexp}_{\emptyset}(N_i) \cdot \hash_{2i-1} \cdot {\rm
  bexp}_{\emptyset}(N_i) \cdot \hash_{2i}$. It remains to observe
  that when $\mathcal{I} = \emptyset$, the definition of ${\rm
  bexp}_{\mathcal{I}}(N_i)$ matches that of $\expgen{G'}{N_i}$ (see
  also \cref{rm:bexp}). For every $i \in \{1, \dots, |V|\}$, we thus
  have ${\rm bexp}_{\emptyset}(N_i) = \expgen{G'}{N_i}$, and hence
  $\rhsgen{G_{\emptyset}}{S} = \rhsgen{G_{\rm beg}}{S_{\rm beg}}$.
  Consequently, the bijection $g : \Sigma'' \cup \{S_{\rm beg}\}
  \rightarrow \Sigma'' \cup \{S\}$ such that $g(c) = c$ holds for $c
  \in \Sigma''$, and $g(S_{\rm beg}) = S$, establishes the isomorphism
  of $G_{\rm beg}$ and $G_{\emptyset}$.

  Let us now consider $k > 0$. By the inductive assumption, there
  exists $\mathcal{I} \subseteq \{1, \dots, |V|\}$ such that
  $|\mathcal{I}| = k - 1$ and the grammar $G_{\rm in} = (V_{\rm in},
  \Sigma'', R_{\rm in}, S_{\rm in})$ resulting from the first $k-1$
  steps of the algorithm is isomorphic with $G_{\mathcal{I}}$.  Let $g
  : V_{\rm in} \cup \Sigma'' \rightarrow V_{\mathcal{I}} \cup
  \Sigma''$ be the corresponding bijection.  Observe now that $k - 1 <
  |V|$ implies $\mathcal{I} \neq \{1, \dots, |V|\}$. Thus, by
  \cref{lm:maximal-substring-existence}, there exists a maximal string
  with respect to $G_{\mathcal{I}}$. By \cref{lm:iso}, this implies
  there also exists a maximal string with respect to $G_{\rm in}$. Let
  $s$ be the maximal string with respect to $G_{\rm in}$ that was
  chosen by the algorithm, and let $G_{\rm out} = (V_{\rm out},
  \Sigma'', R_{\rm out}, S_{\rm out})$ be the result of one step of
  the algorithm one $G_{\rm in}$ and $s$. Let $s' =
  \bigodot_{i=1,\dots,|s|}g(s[i])$. By \cref{lm:iso}, $s'$ is maximal
  with respect to $G_{\mathcal{I}}$.  By \cref{lm:maximal-string},
  there exists $i \in \{1, \dots, |V|\} \setminus \mathcal{I}$ such
  that $s' = {\rm bexp}_{\mathcal{I}}(N_i)$. Let $G'_{\rm out} =
  (V'_{\rm out}, \Sigma'', R'_{\rm out}, S)$ be the result of one step
  of the global algorithm on $G_{\mathcal{I}}$ as input with $s'$ as
  the maximal string. By \cref{lm:one-step}, $G'_{\rm out}$ is
  isomorphic to $G_{\mathcal{I} \cup \{i\}}$. On the other hand, by
  \cref{lm:iso}, $G_{\rm out}$ is isomorphic with $G'_{\rm
  out}$. Since isomorphism is transitive, it thus follows that
  $G_{\rm out}$ is isomorphic with $G_{\mathcal{I} \cup \{i\}}$. This
  concludes the proof of the inductive step.
\end{proof}

\begin{lemma}\label{lm:global-output-size}
  Every global algorithm with $w$ as input outputs a grammar of size
  $\tfrac{7}{2}|G|$.
\end{lemma}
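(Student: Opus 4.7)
The plan is to combine \cref{lm:global-output-grammar} with a direct size calculation for the particular grammar $G_{\{1,\dots,|V|\}}$. By \cref{lm:global-output-grammar}, the output of every global algorithm on $w$ is isomorphic to $G_{\mathcal{I}}$ for $\mathcal{I} = \{1,\dots,|V|\}$. Isomorphism preserves the size of the grammar (the bijection on symbols preserves the length of every right-hand side), so it suffices to compute $|G_{\{1,\dots,|V|\}}|$.

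First I would evaluate ${\rm bexp}_{\mathcal{I}}$ on every symbol relevant to the definition of $G_{\mathcal{I}}$. Since $\mathcal{I} = \{1,\dots,|V|\}$, \cref{def:bexp} gives ${\rm bexp}_{\mathcal{I}}(N_i) = M_i$ for every $i \in \{1,\dots,|V|\}$, and ${\rm bexp}_{\mathcal{I}}(c) = c$ for every $c \in \Sigma$. In particular, for any $X \in V \cup \Sigma$ one has $|{\rm bexp}_{\mathcal{I}}(X)| = 1$.

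Next I would read off the lengths of the right-hand sides in $G_{\mathcal{I}}$ using \cref{def:bigG}. For each $i \in \mathcal{I}$, letting $X,Y$ be such that $\rhsgen{G}{N_i}=XY$, we get
\[
  \rhsgen{G_{\mathcal{I}}}{M_i} = {\rm bexp}_{\mathcal{I}}(X)\cdot\dol_i\cdot{\rm bexp}_{\mathcal{I}}(Y),
\]
which has length $3$. For the start symbol,
\[
  \rhsgen{G_{\mathcal{I}}}{S} = \bigodot_{i=1,\dots,|V|} M_i\cdot\hash_{2i-1}\cdot M_i\cdot\hash_{2i},
\]
which has length $4|V|$. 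Summing over all nonterminals in $V_{\mathcal{I}} = \{S\}\cup\{M_i\}_{i\in\mathcal{I}}$ gives $|G_{\mathcal{I}}| = 4|V| + 3|V| = 7|V|$.

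Finally, since $G$ is admissible, $|\rhsgen{G}{N}| = 2$ for every $N \in V$, so $|G| = 2|V|$ and therefore $|G_{\mathcal{I}}| = 7|V| = \tfrac{7}{2}|G|$. There is no real obstacle here: the only thing to be careful about is invoking isomorphism-invariance of size and noting that $\mathcal{I} = \{1,\dots,|V|\}$ forces every ${\rm bexp}_{\mathcal{I}}(X)$ appearing on a right-hand side to collapse to a single symbol, which is exactly what makes each secondary rule have length $3$ and the start rule have length $4|V|$.
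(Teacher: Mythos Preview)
Your proposal is correct and follows essentially the same route as the paper: invoke \cref{lm:global-output-grammar} for the isomorphism with $G_{\{1,\dots,|V|\}}$, then compute $|G_{\{1,\dots,|V|\}}|=4|V|+3|V|=7|V|$ directly from \cref{def:bexp,def:bigG}, and finish with $|G|=2|V|$ by admissibility. If anything, your version is slightly more explicit about why each right-hand side has the stated length.
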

\begin{proof}
  Denote $\mathcal{I} = \{1, \dots, |V|\}$.  By
  \cref{lm:global-output-grammar}, the output of every global
  algorithm on $w$ is isomorphic with $G_{\mathcal{I}}$, and hence
  also of equal size (see \cref{sec:prelim}).  By \cref{def:alpha}
  (see also \cref{rm:bexp}) it follows that $|G_{\mathcal{I}}| =
  |\rhsgen{G_{\mathcal{I}}}{S}| +
  \sum_{i=1}^{|V|}|\rhsgen{G_{\mathcal{I}}}{N_i}| = 4|V| + 3|V| =
  7|V|$. On the other hand, since $G$ is admissible, it holds $|G| =
  2|V|$. Consequently, the size of $G_{\mathcal{I}}$ is $7|V| =
  \tfrac{7}{2}|G|$.
\end{proof}

\begin{theorem}\label{th:global}
  Let \algname{Global} be any global grammar compression algorithm
  (\cref{sec:global}). For any $u \in \Sigma^{*}$, let
  $\algname{Global}(u)$ denote the output of \algname{Global} on
  $u$. In the cell-probe model, every static data structure that for
  $u \in \Sigma^n$ uses $\bigO(|\algname{Global}(u)| \log^{c} n)$
  space $($where $c = \bigO(1))$, requires $\Omega(\log n / \log \log
  n)$ time to answer random access queries on $u$.
\end{theorem}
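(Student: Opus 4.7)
The plan is to prove the lower bound by reduction from parity range counting on an $m\times m$ grid, using grammar boosting as a black box to convert a succinct SLG for the Verbin--Yu answer string into a string that \algname{Global} is forced to compress well. Suppose for contradiction that a data structure $D$ exists that, for every $u\in\Sigma^n$, uses $\bigO(|\algname{Global}(u)|\log^c n)$ space and answers random access in $o(\log n/\log\log n)$ time. Given any point set $\Pts\subseteq[1\dd m]^2$ with $|\Pts|=m$ (WLOG $m$ a power of two, by the padding argument from \cref{sec:overview-verbin-yu}), I would invoke \cref{lm:answer-string-grammar-size} to obtain an admissible SLG $G_{\Pts}=(V_{\Pts},\{0,1\},R_{\Pts},S_{\Pts})$ with $L(G_{\Pts})=\{A(\Pts)\}$ and $|G_{\Pts}|=\bigO(m\log m)$.

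Next I would instantiate the grammar boosting construction from \cref{sec:global} with $G=G_{\Pts}$: fix an ordering $(N_i)_{i\in[1\dd|V_{\Pts}|]}$ of nonterminals by nondecreasing expansion length, form the auxiliary grammar $G'$ with sentinels $\dol_i$, and define the string
\[
  w \;=\; \bigodot_{i=1,\dots,|V_{\Pts}|}
    \expgen{G'}{N_i}\cdot\hash_{2i-1}\cdot\expgen{G'}{N_i}\cdot\hash_{2i}
  \;\in\;\alpha(G_{\Pts}).
\]
By \cref{lm:alpha-length} and $|\expgen{G_{\Pts}}{N}|\leq|A(\Pts)|=m^2$, we have $|w|=\bigO(|V_{\Pts}|\cdot m^2)=\bigO(m^3\log m)$, in particular $\log|w|=\Theta(\log m)$. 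By \cref{lm:delta}, there exists $\delta\geq 0$ such that $A(\Pts)[j]=w[\delta+2j-1]$ for every $j\in[1\dd m^2]$, so a single random access to $w$ simulates a single random access to $A(\Pts)$ without asymptotic overhead.

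Now I would feed $w$ to \algname{Global}. The key structural input is \cref{lm:global-output-size}, which asserts that $|\algname{Global}(w)|=\tfrac72|G_{\Pts}|=\bigO(m\log m)$, independently of which maximal string is chosen at each step. Plugging this into the hypothetical structure $D$ on input $w$ gives a structure using
\[
  \bigO\bigl(|\algname{Global}(w)|\log^c|w|\bigr)
  \;=\;\bigO(m\log m\cdot\log^c m)
  \;=\;\bigO(m\log^{1+c}m)
\]
space and answering random access on $w$ in $o(\log|w|/\log\log|w|)=o(\log m/\log\log m)$ time. Storing the integer $\delta$ alongside $D$, we answer any parity range counting query $(x,y)\in[1\dd m]^2$ on $\Pts$ by setting $j'=x+(y-1)m$ and querying position $\delta+2j'-1$ in $w$; by \cref{def:answer-string} and \cref{lm:delta} this returns the correct parity in $o(\log m/\log\log m)$ time within $\bigO(m\log^{1+c}m)$ space, contradicting \cref{th:parity-range-counting-lower-bound}. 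Since $|w|$ is polynomial in $m$ and the structure above is obtained uniformly, the contradiction holds for arbitrarily large $m$, proving the theorem.

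The routine parts of this plan are the size and logarithm bookkeeping; the genuinely delicate step is already folded into \cref{lm:global-output-size}, whose own proof (via \cref{lm:maximal-string,lm:maximal-substring-existence,lm:one-step,lm:iso,lm:global-output-grammar}) fully characterizes the intermediate grammars of every global algorithm on $w$ as isomorphic copies of some $G_{\mathcal I}\in\mathbb{G}$. So from the theorem's viewpoint, the main obstacle is conceptual rather than computational: one must trust and correctly invoke the chain $\mathcal{I}_0=\emptyset\subsetneq\mathcal{I}_1\subsetneq\dots\subsetneq\mathcal{I}_{|V_{\Pts}|}=\{1,\dots,|V_{\Pts}|\}$ of intermediate index sets to guarantee that the final grammar has size $\bigO(|V_{\Pts}|)=\bigO(m\log m)$, regardless of the nondeterministic tie-breaking inherent in the definition of a global algorithm.
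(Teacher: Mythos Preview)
Your proposal is correct and follows essentially the same approach as the paper's proof: reduce from parity range counting via the Verbin--Yu answer string, boost it through $\alpha(G_{\Pts})$, and invoke \cref{lm:global-output-size} and \cref{lm:delta} to obtain a contradiction with \cref{th:parity-range-counting-lower-bound}. The only minor difference is that the paper uses the $\bigO(\log m)$ height of $G_{\Pts}$ to get the sharper bound $|w|=\bigO(m^2\log m)$ (via $\sum_X|\expgen{G_{\Pts}}{X}|\leq |A(\Pts)|\cdot(h+1)$) instead of your $\bigO(m^3\log m)$, but since only $\log|w|=\Theta(\log m)$ is needed, this distinction is immaterial.
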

\begin{proof}
  Suppose that there exists a structure $D$ that for any $u \in
  \Sigma^n$ uses $\bigO(|\algname{Global}(u)|\log^{c} n)$ space
  (where $c = \bigO(1)$) and answers random access queries on $u$ in
  $o(\log n / \log \log n)$ time.  Let $\Pts \subseteq [1 \dd m]^2$ be
  any set of $|\Pts| = m$ points on an $m \times m$ grid. Assume for
  simplicity that $m$ is a power of two (otherwise, letting $m'$ be
  the smallest power of two satisfying $m' \geq m$, we apply the proof
  for $\Pts' = \Pts \cup \{(p, p)\}_{p \in (m \dd m']}$; note that the
  answer to any range counting query on $\Pts$ is equal to the answer
  on $\Pts'$). By
  \cref{lm:answer-string-grammar-size}, there exists an admissible SLG
  $G_{\Pts} = (V_{\Pts}, \{0, 1\}, R_{\Pts}, S_{\Pts})$ of height $h = \bigO(\log m)$
  such that $L(G_{\Pts}) = \{A(\Pts)\}$ is the answer string for $\Pts$
  (\cref{def:answer-string}), and it holds $|G_{\Pts}| = \bigO(m \log
  m)$. Note that since $G_{\Pts}$ is admissible, it holds
  $\sum_{X \in V_{\Pts}}|\expgen{G_{\Pts}}{X}| \leq |A(\Pts)| \cdot (h + 1)
  = \bigO(m^2 \log m)$. Let $w \in \alpha(G_{\Pts})$ (\cref{def:alpha}).
  Note that $w$ has the following two properties:
  \begin{itemize}
  \item By \cref{lm:alpha-length} and the above observation, $|w| =
    4\sum_{X \in V_{\Pts}}|\expgen{G_{\Pts}}{X}| = \bigO(m^2 \log m)$,
  \item By \cref{lm:delta}, there exists $\delta \geq 0$, such that
    for every $j \in [1 \dd m^2]$, it holds $A(\Pts)[j] = w[\delta +
    2j - 1]$.
  \end{itemize}
  Let $G_w = \algname{Global}(w)$ be the output of \algname{Global}
  on $w$. By \cref{lm:global-output-size}, we have $|G_w| =
  \tfrac{7}{2}|G_{\Pts}| = \bigO(m \log m)$. Let $D'$ denote a data
  structure consisting of the following two components:
  \begin{enumerate}
  \item The data structure $D$ for string $w$. By $|G_w| = \bigO(m
    \log m)$ and the above assumption, $D$ uses
    $\bigO(|\algname{Global}(w)| \log^c |w|) = \bigO(m \log m \log^c
    (m^2 \log m)) = \bigO(m \log^{1 + c} m)$ space, and implements random
    access to $w$ in $o(\log |w| / \log \log |w|) \,{=}\, o(\log
    (m^2 \log m) / \log \log (m^2 \log m)) \allowbreak =
    o(\log m / \log \log m)$ time,
  \item The position $\delta \geq 0$, as defined above.
  \end{enumerate}
  Observe that given the structure $D'$ and any $(x, y) \in [1 \dd
  m]^2$, we can answer in $o(\log m / \log \log m)$ the parity range
  query on $\Pts$ with arguments $(x, y)$ by issuing a random access
  query to $w$ with position $j = \delta + 2j' - 1$, where $j' = x +
  (y-1)m$. Thus, the existence of $D'$ contradicts
  \cref{th:parity-range-counting-lower-bound}.
\end{proof}

\subsection{Analysis of \algname{Sequential}}\label{sec:sequential}

\begin{lemma}\label{lm:sequential}
  Let $G$ be an admissible SLG. For every $w \in \alpha(G)$,
  \algname{Sequential} outputs a grammar of size
  $\tfrac{7}{2}|G|$.
\end{lemma}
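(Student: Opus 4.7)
The plan is to prove by induction on $k \in [0 \dd |V|]$ that after exactly $8k$ iterations of \algname{Sequential} processing $w$, the algorithm has consumed the prefix $\bigodot_{i=1,\dots,k} \expgen{G'}{N_i} \cdot \hash_{2i-1} \cdot \expgen{G'}{N_i} \cdot \hash_{2i}$ of $w$ and the current grammar is isomorphic to the grammar $G_k = (V_k, \Sigma'', R_k, S_k)$ where $V_k = \{S_k, N_1, \dots, N_k\}$, $\rhsgen{G_k}{N_i} = \rhsgen{G'}{N_i}$ for each $i \in [1 \dd k]$, and $\rhsgen{G_k}{S_k} = \bigodot_{i=1,\dots,k} N_i \cdot \hash_{2i-1} \cdot N_i \cdot \hash_{2i}$. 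Taking $k = |V|$ then gives an output of total size $|\rhsgen{G_{|V|}}{S_{|V|}}| + \sum_{i=1}^{|V|}|\rhsgen{G_{|V|}}{N_i}| = 4|V| + 3|V| = 7|V|$; since $G$ is admissible, $|G| = 2|V|$, so the output has size $\tfrac{7}{2}|G|$, as claimed.

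The base case $k=0$ is the starting grammar $S \to \emptystring$, trivially isomorphic to $G_0$. For the inductive step, let $A, B \in V \cup \Sigma$ be such that $\rhsgen{G}{N_k} = AB$, and write $A'$ for the symbol in the current grammar that corresponds under the inductive isomorphism to $A$ (i.e., $A'= A$ when $A \in \Sigma$, otherwise the nonterminal matched to $N_j$ when $A = N_j$ with $j < k$); define $B'$ analogously. By the inductive hypothesis, the next portion of $w$ to be processed is exactly $\expgen{G'}{A} \cdot \dol_k \cdot \expgen{G'}{B} \cdot \hash_{2k-1} \cdot \expgen{G'}{A} \cdot \dol_k \cdot \expgen{G'}{B} \cdot \hash_{2k}$. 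I would trace through the next eight iterations, showing that they append the symbols $A', \dol_k, B', \hash_{2k-1}, A', \dol_k, B', \hash_{2k}$ to the start rule in order. The pair-reduction step of \algname{Sequential} fires exactly twice: at the sixth iteration the newly appended $\dol_k$ produces two non-overlapping occurrences of the pair $A' \dol_k$, prompting the introduction of a fresh nonterminal $X$ with $\rhs{X} = A' \dol_k$; at the seventh iteration appending $B'$ produces two non-overlapping occurrences of $X B'$, prompting the introduction of $X'$ with $\rhs{X'} = X B'$, after which $X$ occurs only once on the right-hand side of the grammar and is inlined by the final clean-up step, leaving $\rhs{X'} = A' \dol_k B'$. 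Extending the isomorphism by pairing $N_k$ with $X'$ yields a grammar isomorphic to $G_k$, with $\rhsgen{G_k}{N_k} = \rhsgen{G'}{N_k} = A \cdot \dol_k \cdot B$ matching $A' \cdot \dol_k \cdot B'$ as required.

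The main delicate obligation at each of the eight iterations is to verify two claims: first, that the ``longest prefix of the remaining suffix equal to $\exp{N}$ for some existing secondary nonterminal'' is indeed the single symbol (or single nonterminal) we claim to append; second, that no spurious pair repetition is triggered before the sixth and seventh iterations. Both follow from \cref{lm:dollars}, which implies that for every $i < k$ the expansion $\expgen{G'}{N_i}$ contains only sentinels $\dol_t$ with $t \leq i$, hence contains neither $\dol_k$ nor $\hash_{2k-1}$ nor $\hash_{2k}$. Consequently, no existing secondary expansion can reach past the first $\dol_k$ in the remaining input, which pins the longest match to be either $\expgen{G'}{A}$ (matched by the nonterminal corresponding to $A = N_j$) or nothing at all (when $A \in \Sigma$, in which case the length-$\geq 3$ odd-length structure of every existing $\exp{N'_i}$ rules out any match of length one). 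The same freshness of $\dol_k, \hash_{2k-1}, \hash_{2k}$ (and, by the invariant preserved inductively from $G_{k-1}$, the absence of any repeated pair in the grammar at the start of the block) guarantees that every new pair formed in iterations one through five is unique on the right-hand side, so no reduction occurs until the sixth iteration; the analogous check at the eighth iteration, after all replacements, shows that $X' \hash_{2k}$ is unique and that $X'$ is used exactly twice.
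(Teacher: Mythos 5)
Your proof is correct and follows essentially the same route as the paper: the same auxiliary grammar $G_k$, the same induction on $k$ claiming isomorphism after $8k$ iterations, and the same eight-step trace in which the two pair reductions fire at iterations six and seven, with the temporary nonterminal $X$ getting inlined. The only cosmetic difference is that you justify the longest-prefix and no-spurious-repetition claims by citing \cref{lm:dollars}, whereas the paper argues directly from the observation that $\dol_k$ occurs only in $\rhsgen{G'}{N_k}$ and hence not yet in the processed prefix; the two arguments are equivalent.
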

\begin{proof}

  Denote $G = (V, \Sigma, R, S)$ and let $G' = (V, \Sigma', R', S)$ be
  the auxiliary grammar constructed as in \cref{def:alpha}.  Let also
  $(N_i)_{i \in [1 \dd |V|]}$ be the corresponding sequence from
  \cref{def:alpha}, i.e., a sequence of all nonterminals of $G'$
  ordered by the expansion length in $G'$ and such that $w =
  \bigodot_{i=1,\dots,|V|} \expgen{G'}{N_i} \cdot \hash_{2i-1} \cdot
  \expgen{G'}{N_i} \cdot \hash_{2i}$. Finally, let $\Sigma'' = \Sigma'
  \cup \{\hash_i : i \in [1 \dd 2|V|]\}$ be as in \cref{def:alpha}. We
  prove by induction that for every $k \geq 0$, the grammar produced
  by \algname{Sequential} after $8k$ steps is isomorphic with $G_k$,
  where $G_k = (V_k, \Sigma'', R_k, S_k)$ is such that:
  \begin{itemize}[itemsep=0pt,parsep=0pt]
  \item $V_k = \{S_k, N_1, \dots, N_k\}$,
  \item For every $i \in [1 \dd k]$, $\rhsgen{G_k}{N_i} =
    \rhsgen{G'}{N_i}$,
  \item $\rhsgen{G_k}{S_k} = \bigodot_{i=1,\dots,k}
    N_i \cdot \hash_{2i-1} \cdot N_i \cdot \hash_{2i}$.
  \end{itemize}

  The base case follows immediately, since $G_0$ is the grammar
  containing only the starting nonterminal $S_0$ whose definition
  is the empty string. Such grammar is isomorphic with the starting
  grammar of \algname{Sequential}.

  Let us now consider the algorithm after the first $8(k-1)$ steps.
  By the inductive assumption, the current grammar $G_{\rm cur} =
  (V_{\rm cur}, \Sigma'', R_{\rm cur}, S_{\rm cur})$ is isomorphic
  with $G_{k-1}$. Let $f: V_{\rm cur} \cup \Sigma'' \rightarrow
  V_{k-1} \cup \Sigma''$ be the corresponding bijection (see
  \cref{sec:prelim}).  By $|\expgen{G_{k-1}}{S_{k-1}}| =
  2\sum_{i=1}^{k-1} (|\expgen{G'}{N_i}| + 1)$, the remaining
  unprocessed suffix of $w$ is thus $w' := \bigodot_{i=k,\dots,|V|}
  \expgen{G'}{N_i} \cdot \hash_{2i-1} \cdot \expgen{G'}{N_i} \cdot
  \hash_{2i}$.  Let $A, B \in V \cup \Sigma$ be such that
  $\rhsgen{G'}{N_k} = A \cdot \dol_{k} \cdot B$.  Consider the next
  8 steps of \algname{Sequential}:
  \begin{enumerate}
  \item\label{it:lm-sequential-step-1} Since $\dol_{k}$ occurs only
    on the right-hand side of $N_k$, the occurrence of $\dol_{k}$ at
    position $|\expgen{G'}{A}| + 1$ in $w'$ is thus the leftmost in
    $w$.  Therefore, the longest prefix of $w'$ equal to the expansion
    of some nonterminal in $G_{\rm cur}$ is not longer than
    $|\expgen{G'}{A}|$. If $A \in \Sigma$, then by definition we
    append $A$ to $\rhsgen{G_{\rm cur}}{S_{\rm cur}}$. Otherwise, by
    $|\expgen{G'}{A}| < |\expgen{G'}{A} \dol_{k} \expgen{G'}{B}| =
    |\expgen{G'}{N_k}|$, there exists $k' < k$ such that $A =
    N_{k'}$. By the inductive assumption, we thus have $\expgen{G'}{A}
    = \expgen{G_{\rm cur}}{f^{-1}(N_{k'})}$, and hence we append
    $f^{-1}(N_{k'})$ to $\rhsgen{G_{\rm cur}}{S_{\rm cur}}$.
    Appending a symbol to $\rhsgen{G_{\rm cur}}{S_{\rm cur}}$ cannot
    create a repeating pair, since the last symbol of
    $\rhsgen{G_{k-1}}{S_{k-1}}$ is $\hash_{2(k-1)}$.
  \item\label{it:lm-sequential-step-2} The next unprocessed symbol of
    $w'$ is $\dol_{k}$. Since as noted above this is its leftmost
    occurrence in $w$, in this step we append $\dol_{k}$ to
    $\rhsgen{G_{\rm cur}}{S_{\rm cur}}$, and cannot create a repeating
    pair.
  \item\label{it:lm-sequential-step-3} Similarly as in
    Step~\ref{it:lm-sequential-step-1}, if $B \in
    \Sigma$ then we append $B$ to $\rhsgen{G_{\rm cur}}{S_{\rm
    cur}}$. Otherwise, by $|\expgen{G'}{B}| < |\expgen{G'}{N_k}|$,
    there exists $k'' < k$ such that $B = N_{k''}$. In this case, we
    append $f^{-1}(N_{k''})$ to $\rhsgen{G_{\rm cur}}{S_{\rm cur}}$.
    Since we previously appended $\dol_{k}$, we do not create a
    repeating pair.
  \item\label{it:lm-sequential-step-4} The next unprocessed symbol
    of $w'$ is $\hash_{2k - 1}$, which is handled as in
    Step~\ref{it:lm-sequential-step-2}.
  \item\label{it:lm-sequential-step-5} The remaining unprocessed
    suffix starts with $\expgen{G'}{N_k} \cdot \hash_{2k} =
    \expgen{G'}{A} \cdot \dol_{k} \cdot \expgen{G'}{B} \cdot
    \hash_{2k}$. The only nonterminal in $G_{\rm cur}$ whose expansion
    contain $\dol_{k}$ is $S_{\rm cur}$. However, since
    $\expgen{G_{\rm cur}}{S_{\rm cur}}$ also contains $\hash_{2k-1}$,
    we cannot match its expansion with any prefix. Consequently, the
    length of the longest prefix matching the expansion of some
    existing nonterminal does not contain $\dol_{k}$. On the other
    hand, as observed in Step~\ref{it:lm-sequential-step-1}, if $A
    \not\in \Sigma$, then we have $\expgen{G'}{A} = \expgen{G_{\rm
    cur}}{f^{-1}(N_{k'})}$, where $k' < k$. We thus append
    $f^{-1}(N_{k'})$ to $\rhsgen{G_{\rm cur}}{S_{\rm cur}}$. Otherwise
    ($A \in \Sigma$), we append $A$. Since in both cases we append a
    symbol right after $\hash_{2k - 1}$, we do not create a repeating
    pair.
  \item Next, by a similar argument as above, we append $\dol_{k}$
    to $\rhsgen{G_{\rm cur}}{S_{\rm cur}}$. Note that after this,
    letting $A'$ be the symbol appended to $\rhsgen{G_{\rm
    cur}}{S_{\rm cur}}$ in Steps~\ref{it:lm-sequential-step-1}
    and~\ref{it:lm-sequential-step-5}, we have a repetition of a pair
    $A' \cdot \dol_{k}$ in $\rhsgen{G_{\rm cur}}{S_{\rm cur}}$. We
    thus add a new nonterminal $N_{\rm tmp}$ with $\rhsgen{G_{\rm
    cur}}{N_{\rm tmp}} = A' \cdot \dol_{k}$ to $G_{\rm cur}$,
    and replace both occurrences of $A' \cdot \dol_{k}$ in
    $\rhsgen{G_{\rm cur}}{S_{\rm cur}}$ with $N_{\rm tmp}$. Observe
    that after this replacement, $N_{\rm tmp}$ occurs in
    $\rhsgen{G_{\rm cur}}{S_{\rm cur}}$ twice.  Note also that if $A'
    \not\in \Sigma$ then, in addition to the occurrence in
    $\rhsgen{G_{\rm cur}}{N_{\rm tmp}}$, the nonterminal $A'$ has at
    least two more occurrences (by the inductive assumption) in
    $\rhsgen{G_{\rm cur}}{S_{\rm cur}}$.
  \item Next, using the analogous argument as in
    Step~\ref{it:lm-sequential-step-3}, we append either $B$ or
    $f^{-1}(N_{k''})$ to $\rhsgen{G_{\rm cur}}{S_{\rm cur}}$. Denote
    the appended symbol by $B'$. Note that after this, we have a
    repetition of a pair $N_{\rm tmp} \cdot B'$ in
    $\rhsgen{G_{\rm cur}}{S_{\rm cur}}$. Thus, we add a new
    nonterminal $N'_{\rm tmp}$ with
    $\rhsgen{G_{\rm cur}}{N'_{\rm tmp}} = N_{\rm tmp} \cdot B'$ to
    $G_{\rm cur}$, and replace both occurrences of $N_{\rm tmp} \cdot B'$
    in $\rhsgen{G_{\rm cur}}{S_{\rm cur}}$ with $N'_{\rm tmp}$. Observe
    that after this replacement, the nonterminal $N_{\rm tmp}$ occurs
    only once on the right-hand side of $G_{\rm cur}$. We thus
    replace its only occurrence in $\rhsgen{G_{\rm cur}}{N'_{\rm tmp}}$
    with its definition (i.e., with $A' \cdot \dol_{k}$). This
    results in $\rhsgen{G_{\rm cur}}{N'_{\rm tmp}} = A' \cdot
    \dol_{k} \cdot B'$. The nonterminal $N'_{\rm tmp}$ occurs twice
    in $\rhsgen{G_{\rm cur}}{S_{\rm cur}}$, and hence no further
    modifications are needed.
  \item The next unprocessed symbol of $w'$ is $\hash_{2k}$, which
    is handled as in Step~\ref{it:lm-sequential-step-4}.
  \end{enumerate}
  Observe now that the only new nonterminal created during the above
  eight steps that was not subsequently deleted is $N'_{\rm tmp}$, and
  $\rhsgen{G_{\rm cur}}{S_{\rm cur}}$ was updated by appending the
  string $N'_{\rm tmp} \cdot \hash_{2k-1} \cdot N'_{\rm tmp} \cdot
  \hash_{2k}$. Let us now extend $f$ so that $f(N'_{\rm tmp}) =
  N_k$. Note that if $A' \in \Sigma$ then $A = A'$ and hence the first
  symbols in $\rhsgen{G_{k}}{N_k}$ and $\rhsgen{G_{\rm cur}}{N'_{\rm
  tmp}}$ are equal. On the other hand, if $A \not\in \Sigma$, then
  $A = N_{k'} \in V_{k-1}$ and $A' = f^{-1}(N_{k'})$. Consequently,
  $f(A') = N_{k'}$, and hence letting $S = \rhsgen{G_{k}}{N_k}$ and
  $S' = \rhsgen{G_{\rm cur}}{N'_{\rm tmp}}$, we have $S[1] = N_{k'} =
  f(f^{-1}(N_{k'})) = f(A') = f(S'[1])$. Analogously, we either have
  $S[3] = S'[3]$ or $S[3] = f(S'[3])$. We also have $S[2] =
  \dol_{k} = S'[2]$. The grammar $G_{\rm cur}$ is thus isomorphic
  with $G_{k}$. This concludes the proof of the inductive step.

  By the above, the final grammar computed by \algname{Sequential} is
  isomorphic to $G_{|V|}$. As noted in \cref{sec:prelim}, this implies
  that its size is equal to $|G_{|V|}| = \sum_{N \in
  V}|\rhsgen{G_{|V|}}{N}| = \sum_{i=1}^{|V|} |\rhsgen{G_{|V|}}{N_i}|
  + |\rhsgen{G_{|V|}}{S_{|V|}}| = 3|V| + 4|V| = 7|V|$. On the other
  hand, since $G$ is admissible, we have $|G| = 2|V|$. Consequently,
  the final grammar computed by \algname{Sequential} has size $7|V| =
  \tfrac{7}{2}|G|$.
\end{proof}

\begin{theorem}\label{th:sequential} 
  For any $u \in \Sigma^{*}$, let $\algname{Sequential}(u)$ denote the
  output of the \algname{Sequential} algorithm on $u$. In the cell-probe
  model, every static data structure that for a string $u \in \Sigma^n$ uses
  $\bigO(|\algname{Sequential}(u)| \log^{c} n)$ space $($where $c =
  \bigO(1))$, requires $\Omega(\log n / \log \log n)$ time to answer
  random access queries on $u$.
\end{theorem}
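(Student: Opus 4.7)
The plan is to mirror the proof of \cref{th:global} almost verbatim, substituting \cref{lm:sequential} for \cref{lm:global-output-size}. All the heavy lifting for this theorem has already been done: \cref{lm:sequential} tells us exactly how \algname{Sequential} behaves on any string in $\alpha(G)$, and \cref{lm:answer-string-grammar-size,lm:delta,lm:alpha-length} together with \cref{th:parity-range-counting-lower-bound} supply the usual reduction from parity range counting.

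Concretely, I would proceed by contradiction. Suppose there is a data structure $D$ that, for every $u \in \Sigma^n$, occupies $\bigO(|\algname{Sequential}(u)| \log^c n)$ space and answers random access in $o(\log n / \log \log n)$ time. Given any set $\Pts \subseteq [1 \dd m]^2$ with $|\Pts| = m$, I would reduce to the case where $m$ is a power of two by padding $\Pts$ with diagonal points in $(m \dd m']^2$ for the next power of two $m'$ (this preserves all parity range counts on the original grid). Then by \cref{lm:answer-string-grammar-size}, there is an admissible SLG $G_{\Pts}$ of height $\bigO(\log m)$ and size $\bigO(m \log m)$ with $L(G_{\Pts}) = \{A(\Pts)\}$; admissibility forces $\sum_{X \in V_{\Pts}} |\expgen{G_{\Pts}}{X}| = \bigO(m^2 \log m)$.

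Next, I would fix any $w \in \alpha(G_{\Pts})$. By \cref{lm:alpha-length}, $|w| = 4 \sum_{X \in V_{\Pts}} |\expgen{G_{\Pts}}{X}| = \bigO(m^2 \log m)$; by \cref{lm:delta}, there is a shift $\delta \geq 0$ with $A(\Pts)[j] = w[\delta + 2j - 1]$ for every $j \in [1 \dd m^2]$; and by \cref{lm:sequential}, $|\algname{Sequential}(w)| = \tfrac{7}{2}|G_{\Pts}| = \bigO(m \log m)$. Feeding $w$ to $D$ therefore yields a structure of size $\bigO(|\algname{Sequential}(w)| \log^c |w|) = \bigO(m \log m \cdot \log^c(m^2 \log m)) = \bigO(m \log^{1+c} m)$ that supports access to $w$ in $o(\log|w| / \log\log|w|) = o(\log m / \log\log m)$ time. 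Bundling this with the single integer $\delta$ produces a data structure $D'$ that, on input $(x, y) \in [1 \dd m]^2$, returns the parity range count at $(x, y)$ in $o(\log m / \log\log m)$ time by reading $w[\delta + 2j' - 1]$ for $j' = x + (y-1)m$. This contradicts \cref{th:parity-range-counting-lower-bound}, completing the proof.

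There is no real obstacle at this stage: the genuine work was isolated into \cref{lm:sequential}, whose inductive analysis of \algname{Sequential}'s behavior over eight-step blocks already matches the output size bound of $\tfrac{7}{2}|G|$ that global algorithms achieve. The only minor point to get right is bookkeeping the polylogarithmic factors so that the space and query-time budgets absorb the $\log|w| = \Theta(\log m)$ blow-up from $|w| = \bigO(m^2 \log m)$; this is routine and identical to what is done in the proof of \cref{th:global}.
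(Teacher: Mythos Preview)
Your proposal is correct and matches the paper's own proof exactly: the paper simply states that the argument proceeds analogously to \cref{th:global}, with \cref{lm:sequential} supplying the bound $|\algname{Sequential}(w)| = \tfrac{7}{2}|G_{\Pts}|$ in place of \cref{lm:global-output-size}. You have spelled out that analogy in full, with the same padding trick, the same use of \cref{lm:alpha-length} and \cref{lm:delta}, and the same contradiction via \cref{th:parity-range-counting-lower-bound}.
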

\begin{proof}
  The proof proceeds analogously as in \cref{th:global}, except we
  observe that the grammar $G_w = \algname{Sequential}(w)$ satisfies
  $|G_w| = \tfrac{7}{2}|G_{\Pts}|$ by \cref{lm:sequential}.
\end{proof}

\subsection{Analysis of \algname{Sequitur}}\label{sec:sequitur}

\begin{lemma}\label{lm:sequitur}
  Let $G$ be an admissible SLG. For every $w \in \alpha(G)$
  (\cref{def:alpha}), \algname{Sequitur} outputs a grammar of
  size $\tfrac{7}{2}|G|$.
\end{lemma}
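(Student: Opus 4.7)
I would follow the same inductive template as the proof of \cref{lm:sequential}, adapted to \algname{Sequitur}'s character-level granularity. Reusing the auxiliary grammar $G' = (V, \Sigma', R', S)$, the length-ordered sequence $(N_i)_{i \in [1 \dd |V|]}$, and the grammars $G_k$ with $V_k = \{S_k, N_1, \dots, N_k\}$, $\rhsgen{G_k}{N_i} = \rhsgen{G'}{N_i}$, and $\rhsgen{G_k}{S_k} = \bigodot_{i=1,\dots,k} N_i \hash_{2i-1} N_i \hash_{2i}$ from that proof, I would prove by induction on $k \in [0 \dd |V|]$ that after \algname{Sequitur} has consumed the prefix $\bigodot_{i=1,\dots,k} \expgen{G'}{N_i} \hash_{2i-1} \expgen{G'}{N_i} \hash_{2i}$ of $w$, the current grammar is isomorphic to $G_k$. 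Taking $k = |V|$, the size computation $|G_{|V|}| = 7|V| = \tfrac{7}{2}|G|$ then transfers verbatim from \cref{lm:sequential}.

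The technical core is an auxiliary ``round-trip'' sub-lemma: whenever $X \in V \cup \Sigma$ is already represented in the current grammar, feeding the characters of $\expgen{G'}{X}$ into \algname{Sequitur} has the net effect of appending a single symbol $X^*$ (namely $X$ if $X \in \Sigma$, or the image of $N_{k'}$ under the current isomorphism if $X = N_{k'}$) to the definition of the start rule, leaving every other definition unchanged. I would prove this by secondary induction on $|\expgen{G}{X}|$. For $X = N_{k'}$ with $\rhsgen{G}{N_{k'}} = X' Y'$ the argument runs in four phases: (i) recursively process $\expgen{G'}{X'}$, appending a single symbol $X'^*$; (ii) append $\dol_{k'}$, whose unique prior occurrence lies inside $\rhsgen{}{N_{k'}} = X'^* \dol_{k'} Y'^*$ (by the analogue of \cref{lm:one-dollar,lm:dollars} for the current grammar), triggering rule~(b) and producing a temporary $M$ with $\rhsgen{}{M} = X'^* \dol_{k'}$ while rewriting $\rhsgen{}{N_{k'}}$ to $M Y'^*$; (iii) recursively process $\expgen{G'}{Y'}$, so that the suffix of $\rhsgen{}{S}$ becomes $M Y'^*$, which matches the rewritten $\rhsgen{}{N_{k'}}$ and triggers rule~(a), collapsing the suffix to $N_{k'}$; (iv) rule~(c) dissolves the now-unused $M$, restoring $\rhsgen{}{N_{k'}}$ to its original form.

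With the sub-lemma in hand, the inductive step on $k$ splits the $k$th chunk into two symmetric halves. Letting $\rhsgen{G}{N_k} = AB$, the first copy invokes the sub-lemma on $\expgen{G'}{A}$, then appends $\dol_k$ (no reduction, as $\dol_k$ is fresh), invokes the sub-lemma on $\expgen{G'}{B}$, and finally appends $\hash_{2k-1}$. The second copy re-applies the sub-lemma to $\expgen{G'}{A}$; appending $\dol_k$ now duplicates the digram $(A^*, \dol_k)$ left behind by the first copy, so rule~(b) manufactures a temporary $M$ with $\rhsgen{}{M} = A^* \dol_k$; the sub-lemma applied to $\expgen{G'}{B}$ then appends $B^*$, and the duplicated digram $(M, B^*)$ triggers rule~(b) once more, creating a new nonterminal $N_k^\dagger$ with $\rhsgen{}{N_k^\dagger} = M B^*$; rule~(c) immediately eliminates $M$ (now used only inside $\rhsgen{}{N_k^\dagger}$), leaving $\rhsgen{}{N_k^\dagger} = A^* \dol_k B^* = \rhsgen{G'}{N_k}$. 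Appending $\hash_{2k}$ triggers nothing further, and extending the isomorphism by $N_k^\dagger \mapsto N_k$ exhibits the resulting grammar as isomorphic to $G_k$.

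The main obstacle is bookkeeping rather than algebra: the sub-lemma must remain valid during its own recursive call in phase~(iii) and during the second-copy invocation on $\expgen{G'}{B}$, because in both contexts a temporary $M$ with a modified $\rhsgen{}{N_{k'}}$ (or a modified tail of $\rhsgen{}{S}$) coexists with the grammar being fed. I would phrase the sub-lemma purely in terms of the unique-occurrence invariants it actually uses -- each dollar symbol relevant to $X$ appears at most once on the right-hand side, and $M$'s single digram involves a sentinel ($\dol_{k'}$ or $\dol_k$) that cannot occur in $\expgen{G'}{X}$ by the structural analogue of \cref{lm:dollars} -- and verify these invariants are preserved across every recursive call, which ensures no spurious duplicate digrams form and the intended reduction pattern plays out unambiguously.
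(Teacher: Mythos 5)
Your proposal is correct and establishes the same conclusion, but it reorganizes the within-chunk analysis in a genuinely different way. The paper also does an outer induction on $k$ with the same target family $G_k$, but its inner engine is an \emph{iterative} one: it introduces an explicit family of intermediate grammars $G_{k-1,\ell}$, parametrized by how many characters of the current chunk have been consumed, and it needs the parse-tree bookkeeping ${\rm seq}_\ell(A)$ / ${\rm ids}_\ell(A)$ to describe the chain of temporaries $T_{a_1},\dots,T_{a_q}$ that sit at the tail of the start rule at each checkpoint. You replace this with a \emph{recursive} ``round-trip'' sub-lemma proved by structural induction on $|\expgen{G}{X}|$. These are the same dynamics viewed two ways: your recursion stack at any moment consists precisely of the ancestors of the current leaf at which the parse-tree path turns right, and the temporaries $M$ your phases (ii)--(iv) create and dissolve are exactly the paper's $T_{a_i}$; so your invariant and the paper's $G_{k-1,\ell}$ describe the same state. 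Your formulation is more compositional and arguably cleaner to read, at the cost that the sub-lemma statement needs more care than ``appends a single symbol, leaving every other definition unchanged'': in phase (iii) the recursive call on $\expgen{G'}{Y'}$ finishes by itself triggering rule (a) against the modified $\rhsgen{}{N_{k'}} = M Y'^*$ and then rule (c), so the clean invariant to carry is rather ``processing $\expgen{G'}{X}$ reaches the same grammar state as pushing the single symbol $X^*$ directly,'' after which whatever cascade $X^*$ would provoke in its surrounding context (including the $M\,Y'^*$ collapse) is handled uniformly. The paper's iterative $G_{k-1,\ell}$ picture makes this cascade explicit in its phase-2 case analysis on whether the current leaf is a left or right child, whereas your sub-lemma absorbs it into the recursion; both routes require essentially the same uniqueness observations (dollars and hashes occur once, $|s| \geq 2$ pairs with shared positions must straddle a sentinel) to rule out spurious reductions, which you correctly identify as the only real hurdle.
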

\begin{proof}

  We start similarly as in the proof of \cref{lm:sequential}.  Denote
  $G = (V, \Sigma, R, S)$ and let $G' = (V, \Sigma', R', S)$ be the
  auxiliary grammar constructed as in \cref{def:alpha}.  Let $(N_i)_{i
  \in [1 \dd |V|]}$ be the corresponding sequence from
  \cref{def:alpha}.  For any $i \in [1 \dd |V|]$, let $L_i, R_i \in V
  \cup \Sigma$ be such that $\rhsgen{G}{N_i} = L_i R_i$.  Finally, let
  $\Sigma'' = \Sigma' \cup \{\hash_i : i \in [1 \dd 2|V|]\}$ be as in
  \cref{def:alpha}.  For any $k \in [0 \dd |V|]$, let $s_k =
  \sum_{i=1}^{k} |\expgen{G'}{N_i \cdot \hash_{2i-1} \cdot N_i \cdot
  \hash_{2i}}|$.  We prove by induction that for every $k \in [0 \dd
  |V|]$, the grammar produced by \algname{Sequitur} after $s_k$
  steps~\footnote{Here by a \emph{step} we consider the execution of
  the algorithm between the processing of two input symbols. Note that
  unlike for \algname{Sequential}, this may take more than $\bigO(1)$
  time, since in \algname{Sequitur} appending a single symbol may
  spawn a chain of $\omega(1)$ reductions, as defined in
  \cref{sec:algorithms}.} is isomorphic with $G_k = (V_k, \Sigma'',
  R_k, S_k)$ defined as in the proof of \cref{lm:sequential}, i.e.,
  such that $V_k = \{S_k, N_1, \dots, N_k\}$, for every $i \in [1
  \dd k]$, $\rhsgen{G_k}{N_i} = \rhsgen{G'}{N_i}$, and
  $\rhsgen{G_k}{S_k} = \bigodot_{i=1,\dots,k} N_i \cdot \hash_{2i-1}
  \cdot N_i \cdot \hash_{2i}$.

  We introduce the following notation:
  \begin{itemize}
  \item For any $A \in V \cup \Sigma$ and $\ell \in [1 \dd
    |\expgen{G}{A}|]$, by ${\rm seq}_{\ell}(A)$ denote a sequence of
    nonterminals from $V$ (i.e., a string over alphabet $V$) defined
    recursively as follows.  If $A \in \Sigma$, then ${\rm
    seq}_{\ell}(A) = \emptystring$.  Otherwise, letting $X, Y \in V
    \cup \Sigma$ be such that $\rhsgen{G}{A} = XY$ and $\ell' =
    |\expgen{G}{X}|$, we define
    \vspace{2ex}
    \[
      {\rm seq}_{\ell}(A) =
      \begin{dcases}
        {\rm seq}_{\ell}(X) & \text{if }\ell \leq \ell',\\
        A \cdot {\rm seq}_{\ell - \ell'}(Y) & \text{otherwise}.
      \end{dcases}
      \vspace{2ex}
    \]
    In other words, ${\rm seq}_{\ell}(A)$ is the sequence of
    nonterminals obtained by traversing the parse tree
    $\mathcal{T}_G(A)$ from the root to the leftmost $\ell$th leaf and
    including every nonterminal from which the path goes to the right
    child.
  \item For any $A \in V \cup \Sigma$ and $\ell \in [1 \dd
    |\expgen{G}{A}|]$, we define ${\rm ids}_{\ell}(A) = (a_1, \dots,
    a_{q})$ such that, letting $s = {\rm seq}_{\ell}(A)$, we have $q =
    |s|$, and for every $i \in [1 \dd q]$ it holds $N_{a_i} =
    s[i]$. In other words, the sequence ${\rm ids}_{\ell}(A)$ contains
    the indices of the consecutive variables from the sequence ${\rm
    seq}_{\ell}(A)$.
  \end{itemize}

  To prove the base case, note that $s_0 = 0$, and $G_0$ contains
  only the starting nonterminal $S_0$ whose definition is the empty
  string. Such SLG is isomorphic with the starting grammar of
  \algname{Sequitur}.

  The proof of the inductive step for \algname{Sequitur} is more
  involved than for \algname{Sequential}. By the inductive assumption,
  grammar $G_{\rm cur} = (V_{\rm cur}, \Sigma'', R_{\rm cur}, S_{\rm
  cur})$ after $s_{k-1}$ steps of \algname{Sequitur} is isomorphic
  with $G_{k-1}$. Let $f: V_{\rm cur} \cup \Sigma'' \rightarrow
  V_{k-1} \cup \Sigma''$ be the corresponding bijection (see
  \cref{sec:prelim}).  By $|\expgen{G_{k-1}}{S_{k-1}}| =
  2\sum_{i=1}^{k-1} (|\expgen{G'}{N_i}| + 1)$, the remaining
  unprocessed suffix of $w$ is thus $w' := \bigodot_{i=k,\dots,|V|}
  \expgen{G'}{N_i} \cdot \hash_{2i-1} \cdot \expgen{G'}{N_i} \cdot
  \hash_{2i}$.  Let $A, B \in V \cup \Sigma$ be such that
  $\rhsgen{G'}{N_k} = A \cdot \dol_{k} \cdot B$.  We split the next
  $s_{k} - s_{k-1}$ steps, i.e., the processing of the substring
  $\expgen{G'}{N_k} \cdot \hash_{2k-1} \cdot \expgen{G'}{N_k} \cdot
  \hash_{2k} = \expgen{G'}{A} \cdot \dol_{k} \cdot \expgen{G'}{B}
  \cdot \hash_{2k-1} \cdot \expgen{G'}{A} \cdot \dol_{k} \cdot
  \expgen{G'}{B} \cdot \hash_{2k}$, into 8 \emph{phases} (where each
  phase consists of some number of consecutive steps), and analyze
  each phase separately:

  \vspace{2ex}
  \noindent
  \emph{Phase 1}
  Denote $\ell_{A} = |\expgen{G}{A}|$ and note that
  $|\expgen{G'}{A}| = 2\ell_{A} - 1$. We define the first phase as
  the processing of the leftmost $2(\ell_{A} - 1)$ symbols of $w'$.
  For any $\ell \in [1 \dd \ell_{A})$, we define the SLG
  $G_{k-1,\ell} = (V_{k-1,\ell}, \Sigma'', R_{k-1,\ell}, S_{k-1})$,
  where, letting $(a_1, \dots, a_q) = {\rm ids}_{\ell + 1}(A)$, it
  holds $V_{k-1,\ell} = V_{k-1} \cup \{T_{a_1}, \dots, T_{a_q}\} =
  \{N_1, \dots, N_{k-1}, S_{k-1}, T_{a_1}, \dots, T_{a_q}\}$
  (nonterminals $T_i$ are named so, since they are temporary) and
  the rules of the grammar are defined as follows:
  \begin{itemize}
  \item For $i \in \{a_1, \dots, a_q\}$,
    $\rhsgen{G_{k-1,\ell}}{N_{i}} = T_{i} R_{i}$ and
    $\rhsgen{G_{k-1,\ell}}{T_{i}} = L_{i} \dol_{i}$,
  \item For $i \in [1 \dd |V|] \setminus \{a_1, \dots, a_q\}$,
    $\rhsgen{G_{k-1,\ell}}{N_i} = \rhsgen{G'}{N_i}$,
  \item $\rhsgen{G_{k-1,\ell}}{S_{k-1}} = (\bigodot_{i=1,\dots,k-1}
    N_i \cdot \hash_{2i-1} \cdot N_i \cdot \hash_{2i}) \cdot T_{a_1}
    \cdot T_{a_2} \cdot \ldots \cdot T_{a_q}$.
  \end{itemize}
  In other words, the grammar $G_{k-1,\ell}$ is obtained by modifying
  $G_{k-1}$ in two ways. The first modification is
  rather cosmetic, and it is to alter the definition of the rules
  $\{N_i : i \in \mathcal{I}\}$ for $\mathcal{I}$ containing
  identifiers of all nonterminals occurring in ${\rm seq}_{\ell +
  1}(A)$. The alteration is to change each nonterminal $N_i$
  (where $i \in \mathcal{I}$) so that rather than $L_i \cdot
  \dol_{i} \cdot R_i$, its definition is of length two, i.e., $T_i
  \cdot R_i$. We then set the definition of the auxiliary variable
  $T_i$ to $L_i \cdot \dol_{i}$.  Therefore, the expansions of
  nonterminals $\{N_1, \dots, N_{k-1}\}$ in both grammars
  $G_{k-1,\ell}$ and $G_{k-1}$ are equal.  The second difference of
  $G_{k-1,\ell}$ compared to $G_{k-1}$ is that the definition of the
  starting nonterminal contains all the temporary variables at the
  end, i.e., $\rhsgen{G_{k-1,\ell}}{S_{k-1}} = \rhsgen{G_{k-1}}{S_{k-1}}
  \cdot T_{a_1} \cdots \ldots \cdot T_{a_{q}}$, where $q =
  |\mathcal{I}|$. We now prove that the grammars $G_{k-1,\ell}$ (where
  $\ell \in [1 \dd \ell_A)$) describe the behavior of the
  \algname{Sequitur} algorithm as it processes the prefix of $w'$ of
  length $2(\ell_A - 1)$. More precisely, we show that for every $\ell
  \in [1 \dd \ell_A)$, the grammar computed by \algname{Sequitur} after
  processing the leftmost $2\ell$ symbols of $w'$ is isomorphic with
  $G_{k-1,\ell}$. The proof is by induction on $\ell$. To prove the
  base case, consider the execution of \algname{Sequitur} on the
  first two symbols on $w'$.  Let $a_1 \in [1 \dd |V|]$ be such that
  $N_{a_1}$ is the symbol of the penultimate node on the leftmost
  root-to-leaf path in the parse tree $\mathcal{T}_G(A)$. On the one
  hand, since $\expgen{G'}{A}$ is a prefix of $w'$, it follows by
  $\rhsgen{G'}{N_{a_1}} = L_{a_1} \cdot \dol_{a_1} \cdot
  R_{a_1}$ that $w'[1] = \expgen{G'}{A}[1] = \expgen{G'}{N_{a_1}}[1]
  = L_{a_1}$ and $w'[2] = \expgen{G'}{A}[2] =
  \expgen{G'}{N_{a_1}}[2] = \dol_{a_1}$. On the other hand, note
  that since $G$ is admissible, $N_{a_1}$ is the only node in which
  the path from the root of $\mathcal{T}_G(A)$ to its 2nd leftmost
  leaf turns right. Consequently, ${\rm seq}_{2}(A) = (N_{a_1})$ and
  ${\rm ids}_{2}(A) = a_1$.  Consider now the execution of
  \algname{Sequitur} when processing the first two symbols of $w'$:
  \begin{itemize}
  \item We begin by appending $w'[1] = L_{a_1}$ to the definition of
    the starting nonterminal of $G_{\rm cur} = (V_{\rm cur},
    \Sigma'', R_{\rm cur}, S_{\rm cur})$.  Let $G'_{\rm cur} =
    (V'_{\rm cur}, \Sigma'', R'_{\rm cur}, S'_{\rm cur})$ denote the
    resulting grammar. We have $V'_{\rm cur} = V_{\rm cur}$,
    $S'_{\rm cur} = S_{\rm cur}$.  Definitions of all nonterminals
    in $G'_{\rm cur}$ are the same as in $G_{\rm cur}$, except that
    $\rhsgen{G'_{\rm cur}}{S'_{\rm cur}} = \rhsgen{G_{\rm
    cur}}{S_{\rm cur}} \cdot L_{a_1}$.  By the inductive
    assumption, $G_{\rm cur}$ is isomorphic to $G_{k-1}$.
    Consequently, the last symbol of $\rhsgen{G_{\rm cur}}{S_{k-1}}$ is
    $\hash_{2(k-1)}$. Thus, since $\hash_{2(k-1)}$ occurs only once
    on the right-hand side of $G_{k-1}$, \algname{Sequitur} invokes
    no reductions after this step.
  \item We next process the symbol $w'[2] = \dol_{a_1}$.  Let
    $G''_{\rm cur} = (V''_{\rm cur}, \Sigma'', R''_{\rm cur}, S''_{\rm
    cur})$ denote the initial new grammar obtained by appending the
    symbol.  We have $V''_{\rm cur} = V'_{\rm cur} = V_{\rm cur}$ and
    $S''_{\rm cur} = S'_{\rm cur} = S_{\rm cur}$. We also have
    $\rhsgen{G''_{\rm cur}}{S''_{\rm cur}} = \rhsgen{G'_{\rm
    cur}}{S'_{\rm cur}} \cdot \dol_{a_1} = \rhsgen{G_{\rm
    cur}}{S_{\rm cur}} \cdot L_{a_1} \dol_{a_1}$. Definitions
    of all other nonterminals in $V''_{\rm cur}$ are the same as in
    $G_{\rm cur}$.  Recall now that $G_{\rm cur}$ is isomorphic with
    $G_{k-1}$.  Thus, letting $X = f^{-1}(N_{a_1})$, since the height
    of the parse-tree $\mathcal{T}_G(N_{a_1})$ is one, we have
    $\rhsgen{G''_{\rm cur}}{X} = \rhsgen{G_{k-1}}{N_{a_1}} = L_{a_1}
    \dol_{a_1} R_{a_1}$.  Thus, \algname{Sequitur} invokes the
    second reduction in the list, which creates a new nonterminal
    $Q_{a_1}$ with the definition $L_{a_1} \dol_{a_1}$ and
    replaces both occurrences of $L_{a_1} \dol_{a_1}$ on the
    right-hand side of $G''_{\rm cur}$ with $Q_{a_1}$. In other words,
    letting $G'''_{\rm cur} = (V'''_{\rm cur}, \Sigma'', R'''_{\rm
    cur}, S'''_{\rm cur})$ be the resulting grammar, we have
    $V'''_{\rm cur} = V''_{\rm cur} \cup \{Q_{a_1}\}$, $S'''_{\rm cur}
    = S''_{\rm cur}$, $\rhsgen{G'''_{\rm cur}}{Q_{a_1}} = L_{a_1}
    \dol_{a_1}$, $\rhsgen{G'''_{\rm cur}}{f^{-1}(N_{a_1})} =
    Q_{a_1} R_{a_1}$, and $\rhsgen{G'''_{\rm cur}}{S'''_{\rm cur}} =
    \rhsgen{G_{\rm cur}}{S_{\rm cur}} \cdot Q_{a_1}$.
  \item It remains to observe that $G'''_{\rm cur}$ is isomorphic with
    $G_{k-1,1}$ (the bijection is obtained by extending the bijection
    $f : V_{k-1} \cup \Sigma'' \rightarrow V_{\rm cur} \cup \Sigma''$
    so that $f(T_{a_1}) = Q_{a_1}$). This concludes the proof of the
    induction base case.
  \end{itemize}

  Let us now assume that the algorithm has processed the leftmost
  $2(\ell - 1)$ symbols of $w'$, where $\ell \geq 2$, and let $G_{\rm
  cur} = (V_{\rm cur}, \Sigma'', R_{\rm cur}, S_{\rm cur})$ be the
  resulting grammar. By the inductive assumption, $G_{\rm cur}$ is
  isomorphic to $G_{k-1,\ell-1}$. Let $g : V_{k-1,\ell-1} \cup
  \Sigma'' \rightarrow V_{\rm cur} \cup \Sigma''$ denote the
  corresponding bijection. Denote $(v_1, \dots, v_q) = {\rm
  seq}_{\ell}(A)$ and $(a_1, \dots, a_q) = {\rm ids}_{\ell}(A)$.
  Using this notation, we thus have $V_{\rm cur} = \{g(N_1), \dots,
  g(N_{k-1}), g(S_{k-1}), g(T_{a_1}), \dots, g(T_{a_q})\}$ and
  \begin{itemize}
  \item For $i \in \{a_1, \dots, a_q\}$, $\rhsgen{G_{\rm
    cur}}{g(N_{i})} = g(T_{i}) \cdot g(R_{i})$ and $\rhsgen{G_{\rm
    cur}}{g(T_{i})} = g(L_{i}) \cdot \dol_{i}$,
  \item For $i \in [1 \dd |V|] \setminus \{a_1, \dots, a_q\}$,
    $\rhsgen{G_{\rm cur}}{g(N_i)} = g(L_i) \cdot \dol_{i} \cdot
    g(R_i)$,
  \item $\rhsgen{G_{\rm cur}}{S_{\rm cur}} = (\bigodot_{i=1,\dots,k-1}
    g(N_i) \cdot \hash_{2i-1} \cdot g(N_i) \cdot \hash_{2i}) \cdot
    g(T_{a_1}) \cdot \ldots \cdot g(T_{a_q})$.
  \end{itemize}

  Note that by $\ell \geq 2$, we have $q \geq 1$, since unless the
  root-to-leaf path traverses to the leftmost leaf in the tree, it
  must have a node in which we turn right.  Denote the leftmost
  $\ell$th leaf in $\mathcal{T}_G(A)$ by $v$, and let $v'$ be the parent
  of $v$ in $\mathcal{T}_G(A)$.  We have $v' \in V = \{N_1, \dots,
  N_{|V|}\}$. Let thus $j \in [1 \dd |V|]$ be such that $s(v') =
  N_j$. We consider two cases:
  \begin{enumerate}[label=(\roman*)]
  \item First, assume that $v$ is the left child of $v'$. Since
    $\rhsgen{G_{k-1}}{N_{j}} = L_{j} R_{j}$, this implies that $s(v) =
    L_{j}$.  Thus, since the symbol $w'[2\ell - 1]$ is equal to the
    symbol of the $\ell$th leftmost leaf in $\mathcal{T}_G(A)$, we have
    $w'[2\ell - 1] = s(v) = L_{N_j}$. We then also have $w'[2\ell] =
    \rhsgen{G'}{N_j}[2] = \dol_{j}$. We now analyze the
    \algname{Sequitur} algorithm as it processes the symbols
    $w'[2\ell - 1] = L_{j}$ and $w'[2\ell] = \dol_{j}$.
    \begin{itemize}
    \item Let $G'_{\rm cur} = (V'_{\rm cur}, \Sigma'', R'_{\rm cur},
      S'_{\rm cur})$ denote the initial grammar right after appending
      $w'[2\ell - 1]$ to the definition of the start rule of $G_{\rm
      cur}$. We have $V'_{\rm cur} = V_{\rm cur}$ and $S'_{\rm cur} =
      S_{\rm cur}$. We also have $\rhsgen{G'_{\rm cur}}{S'_{\rm cur}}
      = \rhsgen{G_{\rm cur}}{S_{\rm cur}} \cdot L_{j}$.  The
      definitions of other variables are as in $G_{\rm cur}$. Recall
      now that by the inductive assumption, $G_{\rm cur}$ is
      isomorphic with $G_{k-1,\ell-1}$. Consequently, the last two
      symbols of $\rhsgen{G'_{\rm cur}}{S'_{\rm cur}}$ are $g(T_{a_q})
      \cdot L_{j}$. We claim that this pair does not occur anywhere
      else on the right-hand side of $G'_{\rm cur}$. To see this,
      observe that $T_{a_q}$ occurs only twice on the right-hand side
      of $G_{k-1,\ell-1}$: first as the last symbol of
      $\rhsgen{G_{k-1,\ell-1}}{S_{k-1}}$, and second in the
      definition of $N_{a_q}$ (recall that
      $\rhsgen{G_{k-1,\ell-1}}{N_{a_q}} = T_{a_q} \cdot R_{a_q}$).  Thus,
      $g(T_{a_q})$ occurs only twice on the right-hand side of
      $G'_{\rm cur}$, with the second occurrence in the definition of
      $g(N_{a_q})$ (since $\rhsgen{G'_{\rm cur}}{g(N_{a_q})) =
      g(T_{a_q}} \cdot g(R_{a_q})$). Consequently, to show that
      $g(T_{a_q}) L_j$ does not occur twice on the right-hand side of
      $G'_{\rm cur}$, it suffices to prove $g(R_{a_q}) \neq L_j$.  To
      show this, observe that $v_q$ is an ancestor of $v'$ since both
      are on the path from the root of $\mathcal{T}_G(A)$ to its
      leftmost $\ell$th leaf.  Since, however, $v$ is the left child
      of $v'$, the node $v'$ does not occur in ${\rm
      seq}_{\ell}(A)$. Consequently, $s(v') \neq N_{a_q}$, and hence
      $v' \neq v_q$.  By definition of ${\rm seq}_{\ell}(A)$, to reach
      $v'$ from $v_q$, we have to first descend to the right child of
      $v_q$, and then keep following the left child until we reach
      $v'$.  Note that this implies that the height of the right child
      of $v_q$ is equal to at least the height of $v'$. This implies
      $R_{a_q} \not\in \Sigma$. Consequently, we also have $g(R_{a_q})
      \not\in \Sigma$.  By $L_j \in \Sigma$, we thus obtain
      $g(R_{a_q}) \neq L_j$. We have thus proved that $g(T_{a_q}) L_j$
      occurs only once on the right-hand side of $G'_{\rm
      cur}$. Consequently, \algname{Sequitur} performs no reductions.
    \item Let $G''_{\rm cur} = (V''_{\rm cur}, \Sigma'', R''_{\rm
      cur}, S''_{\rm cur})$ denote the grammar immediately after
      appending $w'[2\ell] = \dol_{j}$ to the definition of
      $\rhsgen{G'_{\rm cur}}{S'_{\rm cur}}$. We have $V''_{\rm cur} =
      V'_{\rm cur}$, $S''_{\rm cur} = S'_{\rm cur}$, $\rhsgen{G''_{\rm
      cur}}{S''_{\rm cur}} = \rhsgen{G'_{\rm cur}}{S'_{\rm cur}} \cdot
      \dol_{j} = \rhsgen{G_{\rm cur}}{S_{\rm cur}} \cdot L_{j} \cdot
      \dol_{j}$.  Observe now that since $j$ does not occur in
      $\{a_1, \dots, a_q\}$ (since as noted above $v'$ does not occur
      in $\{v_1, \dots, v_q\}$ and $s(v') = N_j$), the substring $L_j
      \cdot \dol_{j}$ has another occurrence on the right-hand side
      of $G_{k-1,\ell-1}$, namely, in the definition of $N_j$ (recall
      that $\rhsgen{G_{k-1,\ell-1}}{N_j} = L_j \dol_{j} R_j$).  By
      $L_j, \dol_{j} \in \Sigma''$, the substring $L_j \cdot
      \dol_{j}$ also has another occurrence on the right-hand side
      of $G''_{\rm cur}$ in the definition of $g(N_j)$. The algorithm
      thus applies the second reduction in the list. Let $X$ denotes
      the newly created nonterminal. Letting $G'''_{\rm cur} =
      (V'''_{\rm cur}, \Sigma'', R'''_{\rm cur}, S'''_{\rm cur})$ be
      the resulting grammar, we have $V'''_{\rm cur} = V''_{\rm cur}
      \cup \{X\} = V_{\rm cur} \cup \{X\}$, $S'''_{\rm cur} = S''_{\rm
      cur}$, $\rhsgen{G'''_{\rm cur}}{X} = L_j \dol_{j}$,
      $\rhsgen{G'''_{\rm cur}}{g(N_j)} = X \cdot g(R_j)$,
      $\rhsgen{G'''_{\rm cur}}{S'''_{\rm cur}} = \rhsgen{G_{\rm
      cur}}{S_{\rm cur}} \cdot X$, and definitions for all other
      elements of $V'''_{\rm cur}$ are as in $G''_{\rm cur}$.  Since
      $X$ was a newly created nonterminal, the last two symbols of
      $\rhsgen{G'''_{\rm cur}}{S'''_{\rm cur}}$ must only occur once
      on the right-hand side of $G'''_{\rm cur}$. Note also that
      by replacing $L_j \cdot \dol_{j}$ with $X$, we removed
      some occurrences of $L_j$, but since $L_j \in \Sigma$, this
      does not cause any nonterminal to only have a single occurrence
      on the right-hand side of the grammar. Thus, the algorithm
      performs no further reductions.
    \item It remains to show that $G'''_{\rm cur}$ is isomorphic to
      $G_{k-1,\ell}$.  First, we observe that ${\rm ids}_{\ell + 1}(A)
      = (a_1, \dots, a_q, j)$.  This follows since the $\ell$th
      leftmost leaf in $\mathcal{T}_G(A)$ (i.e., $v$) is the left child
      of its parent $v'$. This implies that the path from the root to
      the ($\ell+1$)st leftmost leaf in $\mathcal{T}_G(A)$ first reaches
      $v'$ and then goes right. Thus, to obtain ${\rm
      seq}_{\ell+1}(A)$ we only need to append $s(v') = N_j$ to
      ${\rm seq}_{\ell}(A)$, and hence ${\rm ids}_{\ell+1}(A) = (a_1,
      \dots, a_q, j)$.  By definition $G_{k-1,\ell}$, we thus have
      $\rhsgen{G_{k-1,\ell}}{S_{k-1}} = \rhsgen{G_{k-1}}{S_{k-1}}
      \cdot T_{a_1} \cdot \ldots \cdot T_{a_q} \cdot T_{j} =
      \rhsgen{G_{k-1,\ell-1}}{S_{k-1}} \cdot T_j$.  This immediately
      implies that $G'''_{\rm cur}$ is isomorphic with $G_{k-1,\ell}$,
      as it suffices to take the bijection $g$ and extend it by
      setting $g(T_j) = X$.
    \end{itemize}
  \item Let us now assume that $v$ is a right child of $v'$, i.e.,
    $s(v) = R_j$. Note that this implies that $v_q = v'$, i.e., $a_q =
    j$.  Let $q' \in [1 \dd q]$ be the smallest integer such for for
    every $t \in [q' \dd q)$, $v_{t+1}$ is the right child of $v_t$ in
    $\mathcal{T}_G(A)$. Observe also that for every $t \in [q' \dd q)$
    we then have $R_{a_t} = N_{a_{t+1}}$. Note that also $R_{a_q} =
    R_j = s(v)$. By the above, we have $w'[2\ell - 1] = s(v) = R_j$.
    Let $\widehat{v}$ be the parent of $v_{q'}$ in $\mathcal{T}_G(A)$
    (the node $\widehat{v}$ is well-defined by $q' > 1$) and let
    $\widehat{a} \in [1 \dd |V|]$ be such that $s(\widehat{v}) =
    N_{\widehat{a}}$. Observe that if for some node $u$ of
    $\mathcal{T}_G(A)$ such that $s(u) = N_h$ (where $h \in [1 \dd
    |V|]$) it holds $\max {\rm int}(u') = p$, where $u'$ is the left
    child of $u$, then $w'[2p] = \dol_{h}$.  By applying this
    observation for $u = \widehat{v}$ and $u' = v_{q'}$, we obtain
    from $\max {\rm int}(u') = \max {\rm int}(v_{q'}) = \max {\rm
    int}(v_{q'+1}) = \dots = \max {\rm int}(v_q) = \max {\rm int}(v')
    = \max {\rm int}(v) = \ell$ (the equalities follow since they
    corresponding to the path in $\mathcal{T}_G(A)$ that always descend
    right) that $w'[2\ell] = \dol_{\widehat{a}}$. We now analyze
    the \algname{Sequitur} algorithm as it processes the symbols
    $w'[2\ell - 1] = R_{j}$ and $w'[2\ell] = \dol_{\widehat{a}}$.
    \begin{itemize}
    \item Let $G'_{\rm cur} = (V'_{\rm cur}, \Sigma'', R'_{\rm cur},
      S'_{\rm cur})$ denote the initial grammar after appending
      $w'[2\ell - 1] = R_{j}$ to the definition of the start rule of
      $G_{\rm cur}$. We have $V'_{\rm cur} = V_{\rm cur}$ and $S'_{\rm
      cur} = S_{\rm cur}$. We also have $\rhsgen{G'_{\rm
      cur}}{S'_{\rm cur}} = \rhsgen{G_{\rm cur}}{S_{\rm cur}}
      \cdot R_{j}$.  The definitions of other variables are as in
      $G_{\rm cur}$. Recall now that by the inductive assumption,
      $G_{\rm cur}$ is isomorphic with $G_{k-1,\ell-1}$. Consequently,
      the last $q$ symbols of $\rhsgen{G'_{\rm cur}}{S'_{\rm cur}}$
      are $g(T_{a_1}) \cdot \ldots \cdot g(T_{a_{q-1}}) \cdot
      g(T_{a_q}) \cdot R_j = g(T_{a_1}) \cdot \ldots \cdot
      g(T_{a_{q-1}}) \cdot g(T_{a_q}) \cdot R_{a_q} = g(T_{a_1}) \cdot
      \ldots \cdot g(T_{a_{q-1}}) \cdot g(T_{a_q}) \cdot g(R_{a_q})$.
      The last equality follows by $R_{a_q} \in \Sigma$, and the fact
      that for every $c \in \Sigma$, we have $g(c) = c$.  Recall now
      that $G_{\rm cur}$ is isomorphic with $G_{k-1,\ell-1}$.  Thus,
      $g(N_{a_q}) \in V_{\rm cur} = V'_{\rm cur}$ and $\rhsgen{G'_{\rm
      cur}}{g(N_{a_q})} = \rhsgen{G_{\rm cur}}{g(N_{a_q})} =
      g(T_{a_q}) \cdot g(R_{a_q})$, i.e., the length-2 suffix
      $g(T_{a_q}) \cdot g(R_{a_q})$ of $\rhsgen{G'_{\rm cur}}{S'_{\rm
      cur}}$ is equal to the definition of the nonterminal
      $g(N_{a_q})$ from $G'_{\rm cur}$.  Consequently,
      \algname{Sequitur} invokes the first reduction, replacing the
      length-2 suffix with $g(N_{a_q})$.  After this replacement, the
      right-hand side of the starting rule of the resulting grammar
      has $g(T_{a_1}) \cdot \ldots \cdot g(T_{a_{q-1}}) \cdot
      g(N_{a_q})$ as a suffix. Recall now that above we observed that
      for every $t \in [q' \dd q)$, it holds $R_{a_t} = N_{a_{t+1}}$.
      Consequently, if $q' < q$, we have $g(N_{a_q}) = g(R_{a_{q-1}})$
      and the suffix of the current starting rule can be rewritten as
      $g(T_{a_1}) \cdot \ldots \cdot g(T_{a_{q-1}}) \cdot
      g(R_{a_{q-1}})$.  Using the same argument as above, this means
      that \algname{Sequitur} will invoke the first reduction rule,
      replacing the length-2 suffix $g(T_{a_{q-1}}) \cdot
      g(R_{a_{q-1}})$ of the current start rule with
      $g(N_{a_{q-1}})$. Observe that this process will repeat exactly
      $q-q'+1$ times, reducing the initial length-$(q+1)$ suffix of
      the starting rule to the length-$q'$ suffix $g(T_{a_1}) \cdot
      \ldots \cdot g(T_{a_{q'-1}}) \cdot g(N_{a_{q'}})$. Observe now
      that if $q' > 1$, then the pair $g(T_{a_{q'-1}}) \cdot
      g(N_{a_{q'}})$ has only a single occurrence on the right
      hand-side of the current grammar. To see this, note that
      $g(T_{a_{q'-1}})$ has only one other occurrence in addition to
      the one in the start rule, and this occurrence is in the
      definition $\rhsgen{G'_{\rm cur}}{g(N_{a_{q'-1}})} =
      g(T_{a_{q'-1}}) \cdot g(R_{a_{q'-1}})$.  Thus, it suffices to
      prove that $g(N_{a_{q'}}) \neq g(R_{a_{q'-1}})$ which is
      equivalent to $N_{a_{q'}} \neq R_{a_{q'-1}}$. To show this, let
      $v''$ be the right child of $v_{q'-1}$. By definition, we have
      $s(v'') = R_{a_{q'-1}}$. Observe that $v''$ does not occur in
      $\{v_1, \dots, v_q\}$ since this would mean $v'' = v_{q'}$ which
      would contradict the definition of $q'$ (since then $v_{q'-1},
      v_{q'}, \dots, v_{q}$ forms the path where we always turn right
      and end at $v$).  Since, however, $v_{q'-1}$ occurs in $\{v_1,
      \dots, v_{q}\}$, the node $v''$ lies on the path from the root
      of $\mathcal{T}_G(A)$ to $v$. Consequently, $v''$ is an ancestor
      of $v_{q'}$. Since, however, to reach $v$ from $v''$ we have to
      first turn left, this implies that $v'' \neq v_{q'}$, i.e.,
      $v''$ is a proper ancestor of $v_{q'}$. This implies
      $R_{a_{q'-1}} = s(v'') \neq s(v_{q'}) = N_{a_{q'}}$. We have
      thus proved that if $q' > 1$, then the pair $g(T_{a_{q'-1}})
      \cdot g(N_{a_{q'}})$ has only a single occurrence on the right
      hand-side of the current grammar. Thus, \algname{Sequitur} will
      not invoke either of the first two reductions. Observe, however,
      that after performing $q-q'+1$ reductions according to the first
      rule, we have removed one occurrence of each variable in the set
      $\{g(T_{a_{q'}}), \dots, g(T_{a_q})\}$.  Since they all
      initially only had two occurrences on the right-hand side of
      $G'_{\rm cur}$, the \algname{Sequitur} now invokes the third
      reduction rule for all of them, replacing their only occurrence
      with their definition, removing the variables from the
      grammar. Consequently, for every $i \in \{a_{q'}, \dots,
      a_{q}\}$, the definition $\rhsgen{G'_{\rm cur}}{g(N_{i})} =
      g(T_{i}) \cdot g(R_{i})$ becomes $\rhsgen{G'_{\rm
      cur}}{g(N_{i})} = g(L_{i}) \cdot \dol_{i} \cdot
      g(R_{i})$. Observe that this introduces a new pair $\dol_{i}
      \cdot g(R_{i})$ on the right-hand side of the grammar. However,
      this pair does not repeat, since the grammar before the
      modification contains the only occurrence of $\dol_{i}$ on
      the right-hand side. Consequently, during or after the
      application of the third reduction rule, the algorithm will not
      apply either of the first two.  Recall now that $v_{q'}$ is the
      left child of its parent $\widehat{v}$, and we denoted
      $s(\widehat{v}) = N_{\widehat{a}}$.  Therefore, by
      $\rhsgen{G_{k-1}}{N_{\widehat{a}}} = L_{\widehat{a}}
      R_{\widehat{a}}$, we have $N_{a_{q'}} = s(v_{q'}) =
      L_{\widehat{a}}$. Consequently, $g(T_{a_1}) \cdot \ldots \cdot
      g(T_{a_{q'-1}}) \cdot g(N_{a_{q'}}) = g(T_{a_1}) \cdot \ldots
      \cdot g(T_{a_{q'-1}}) \cdot g(L_{\widehat{a}})$.  To sum up,
      letting $G''_{\rm cur} = (V''_{\rm cur}, \Sigma'', R''_{\rm
        cur}, S''_{\rm cur})$ be the grammar resulting from the above
      $2(q-q'+1)$ reductions, we have $V''_{\rm cur} = V'_{\rm cur}
      \setminus \{g(T_{a_{q'}}), \dots, g(T_{a_q})\} = V_{\rm cur}
      \setminus \{g(T_{a_{q'}}), \dots, g(T_{a_q})\} = \{g(N_1),
      \dots, g(N_{k-1}), g(S_{k-1}), g(T_{a_1}), \dots,
      g(T_{a_{q'-1}})\}$, $S''_{\rm cur} = S'_{\rm cur} = S_{\rm
      cur}$, and
      \begin{itemize}
      \item For $i \in \{a_1, \dots, a_{q'-1}\}$, $\rhsgen{G''_{\rm
        cur}}{g(N_{i})} = g(T_{i}) \cdot g(R_{i})$ and
        $\rhsgen{G''_{\rm cur}}{g(T_{i})} = g(L_{i}) \cdot
        \dol_{i}$,
      \item For $i \in [1 \dd |V|] \setminus \{a_1, \dots,
        a_{q'-1}\}$, $\rhsgen{G''_{\rm cur}}{g(N_i)} = g(L_i) \cdot
        \dol_{i} \cdot g(R_i)$,
      \item $\rhsgen{G''_{\rm cur}}{S''_{\rm cur}} =
        (\bigodot_{i=1,\dots,k-1} g(N_i) \cdot \hash_{2i-1} \cdot
        g(N_i) \cdot \hash_{2i}) \cdot g(T_{a_1}) \cdot \ldots \cdot
        g(T_{a_{q'-1}}) \cdot g(L_{\widehat{a}})$.
      \end{itemize}
    \item Let us now consider the grammar obtained from $G''_{\rm
      cur}$ by appending $w'[2\ell] = \dol_{\widehat{a}}$ to the
      definition of the start rule. By $\widehat{v} \not\in \{v_1,
      \dots, v_q\}$, we have $\widehat{a} \not\in \{a_1, \dots,
      a_{q}\}$. Thus, there exist two occurrences of
      $g(L_{\widehat{a}}) \cdot \dol_{\widehat{a}}$ on the
      right-hand side of the current grammar: at the end of the
      definition of the start rule, and in the definition
      $\rhsgen{G''_{\rm cur}}{g(N_{\widehat{a}})} = g(L_{\widehat{a}})
      \cdot \dol_{\widehat{a}} \cdot g(R_{\widehat{a}})$.
      Consequently, \algname{Sequitur} invokes the second reduction
      rule, creating a new nonterminal $X$ with the definition
      $g(L_{\widehat{a}}) \cdot \dol_{\widehat{a}}$, and replaces
      both occurrences of this definition with $X$.  Let $G'''_{\rm
      cur} = (V'''_{\rm cur}, \Sigma'', R'''_{\rm cur}, S'''_{\rm
      cur})$ be the resulting grammar. We have $V'''_{\rm cur} =
      V''_{\rm cur} \cup \{X\}$, $S'''_{\rm cur} = S''_{\rm cur}$.  As
      for the definitions, we have $\rhsgen{G'''_{\rm
      cur}}{N_{\widehat{a}}} = X \cdot g(R_{\widehat{a}})$,
      $\rhsgen{G'''_{\rm cur}}{X} = g(L_{\widehat{a}}) \cdot
      \dol_{\widehat{a}}$, and $\rhsgen{G'''_{\rm cur}}{S'''_{\rm
      cur}} = (\bigodot_{i=1,\dots,k-1} g(N_i) \cdot \hash_{2i-1}
      \cdot g(N_i) \cdot \hash_{2i}) \cdot g(T_{a_1}) \cdot \ldots
      \cdot g(T_{a_{q'-1}}) \cdot X$.  The remaining definitions are
      as in $G''_{\rm cur}$. Since $X$ was a newly created
      nonterminal, the last two symbols of $\rhsgen{G'''_{\rm
      cur}}{S'''_{\rm cur}}$ occur only once on the right-hand
      side of $G'''_{\rm cur}$, and hence the algorithm does not
      invoke any of the first two reduction rules.  Note also that by
      replacing $g(L_{\widehat{a}}) \cdot \dol_{\widehat{a}}$ with
      $X$, we removed some occurrences of $g(L_{\widehat{a}})$, but
      since $g(L_{\widehat{a}}) \in \{g(N_1), \dots, g(N_{k-1})\}$, it
      follows that $g(L_{\widehat{a}})$ has at least two remaining
      occurrences in the prefix $\bigodot_{i=1,\dots,k-1} g(N_i) \cdot
      \hash_{2i-1} \cdot g(N_i) \cdot \hash_{2i}$ of
      $\rhsgen{G'''_{\rm cur}}{S'''_{\rm cur}}$. Thus,
      \algname{Sequitur} does not invoke the third reduction rule.
    \item It remains to show that $G'''_{\rm cur}$ is isomorphic to
      $G_{k-1,\ell}$.  Let $v_{\min}$ denote the leftmost leaf in the
      subtree rooted in the right child of $\widehat{v}$. Since $v$ is
      the rightmost leaf in the subtree rooted in the left child of
      $\widehat{v}$ (i.e., $v_{q'}$), we thus obtain that $v_{\min}$
      is the ($\ell+1$)st leftmost leaf in $\mathcal{T}_G(A)$.  This
      implies that ${\rm seq}_{\ell+1}(A)$ contains vertices $\{v_1,
      \dots, v_{q'}\}$ and then the vertex $\widehat{v}$ as the last
      vertex where the path to $v_{\min}$ turns right, i.e., ${\rm
      seq}_{\ell+1}(A) = (v_1, \dots, v_{q'-1}, \widehat{v})$.
      Consequently, ${\rm ids}_{\ell+1}(A) = (a_1, \dots, a_{q'-1},
      \widehat{a})$ and $\rhsgen{G_{k-1,\ell}}{S_{k-1}} =
      (\bigodot_{i=1,\dots,k-1} N_i \cdot \hash_{2i-1} \cdot N_i \cdot
      \hash_{2i}) \cdot T_{a_1} \cdot \ldots \cdot T_{a_{q'-1}} \cdot
      T_{\widehat{a}}$.  It immediately follows that by extending $g$
      so that $g(T_{\widehat{a}}) = X$, we obtain a bijection from
      $G_{k-1,\ell}$ to $G'''_{\rm cur}$. By
      $\rhsgen{G_{k-1,\ell}}{T_{\widehat{a}}} = L_{\widehat{a}} \cdot
      \dol_{\widehat{a}}$ and $\rhsgen{G'''_{\rm
      cur}}{g(T_{\widehat{a}})} = \rhsgen{G'''_{\rm cur}}{X} =
      g(L_{\widehat{a}}) \cdot g(\dol_{\widehat{a}})$ we thus
      obtain that $G'''_{\rm cur}$ is isomorphic to $G_{k-1,\ell}$.
      Thus concludes the proof of the induction step.
    \end{itemize}
  \end{enumerate}
  We have thus proved that for every $\ell \in [1 \dd \ell_A)$, the
  grammar computed by \algname{Sequitur} after processing the leftmost
  $2\ell$ symbols of $w'$ is isomorphic with $G_{k-1,\ell}$. Let us
  denote the grammar computed after $2(\ell_A - 1)$ steps by $G^{a} =
  (V^a, \Sigma'', R^a, S^a)$. By the above, $G^a$ is isomorphic with
  $G_{k-1,\ell_A-1}$.

  \vspace{2ex}
  \noindent
  \emph{Phase 2}
  The second phase is the processing of the next two symbols of $w'$,
  i.e., $w'[2\ell_A - 1]$ and $w'[2\ell_A]$.  Let $(v_1, \dots, v_q) =
  {\rm seq}_{\ell_A}(A)$ and $(a_1, \dots, a_q) = {\rm
  ids}_{\ell_A}(A)$.  By the analysis of phase 1, we have
  $g(S_{k-1}) = S^a$, $V^a = \{g(N_1), \dots, g(N_{k-1}), g(S_{k-1}),
  g(T_{a_1}), \dots, g(T_{a_q})\}$ and $\rhsgen{G^a}{g(S_{k-1})} =
  (\bigodot_{i=1,\dots,k-1} g(N_i) \cdot \hash_{2i-1} \cdot g(N_i)
  \cdot \hash_{2i}) \cdot g(T_{a_1}) \cdot \ldots \cdot g(T_{a_q})$,
  where $g$ is the bijection establishing isomorphism of $G^a$ and
  $G_{k-1,\ell_A-1}$. Note that for every $i \in [1 \dd q)$, $v_{i+1}$
  is the right child of $v_i$.  Observe also that by $2\ell_A - 1 =
  |\expgen{G'}{A}|$, $w[2\ell_A - 1]$ is the symbol corresponding to
  the rightmost leaf in $\mathcal{T}_{G}(A)$.  Hence, by the above
  observation, $w'[2\ell_A - 1] = R_{a_q}$.  Consequently, by the same
  argument as in the analysis of phase 1, appending $R_{a_q}$ to
  $\rhsgen{G^a}{S^a}$ causes a chain of $q$ reductions, which replace
  the suffix $g(T_{a_1}) \cdot \ldots \cdot g(T_{a_q})$ of
  $\rhsgen{G^a}{S^a}$ by a single symbol $g(N_{a_1})$, while also
  removing variables $\{g(T_{a_1}), \dots, g(T_{a_q})\}$ from
  $V^a$. Observe now that $v_1$ is the root of
  $\mathcal{T}_{G}(A)$. Thus, $N_{a_1} = s(v_1) = A$, and hence
  $g(N_{a_1}) = g(A)$. After these $q$ reductions, the algorithm does
  not perform any more reductions, since the symbol preceding $g(A)$
  is $\hash_{2k-2}$, which is unique in $w$. In the next step, we
  append $w'[2\ell_A] = \dol_{k}$ to the definition of the current
  start rule. Since this is the leftmost occurrence of this symbol in
  $w$, we do not perform any reductions. Therefore, letting $G^b =
  (V^b, \Sigma'', R^b, S^b)$ be the grammar resulting from the
  operations in the second phase, we have $S^b = S^a$, $V^b =
  \{g(N_1), \dots, g(N_{k-1}), g(S_{k-1})\}$ and
  $\rhsgen{G^b}{g(S_{k-1})} = (\bigodot_{i=1,\dots,k-1} g(N_i) \cdot
  \hash_{2i-1} \cdot g(N_i) \cdot \hash_{2i}) \cdot g(A) \cdot
  \dol_{k}$.

  \vspace{2ex}
  \noindent
  \emph{Phases 3-4}
  We define phase 3 as the processing of the next $2(\ell_B-1)$
  symbols $w'$, where $\ell_B = |\expgen{G}{B}|$, and phase 4 as the
  processing of the following two symbols of $w'$. The analysis is
  nearly identical as for phases 1-2, except the last processed symbol
  is $w'[2\ell_A + 2\ell_B] = \hash_{2k-1}$.  Letting $G^d = (V^d,
  \Sigma'', R^d, S^d)$ be the grammar resulting from operations in
  phase 3 and 4, we have $S^d = S^a$, $V^d = \{g(N_1), \dots,
  g(N_{k-1}), g(S_{k-1})\}$ and $\rhsgen{G^d}{g(S_{k-1})} =
  (\bigodot_{i=1,\dots,k-1} g(N_i) \cdot \hash_{2i-1} \cdot g(N_i)
  \cdot \hash_{2i}) \cdot g(A) \cdot \dol_{k} \cdot g(B) \cdot
  \hash_{2k-1}$.

  \vspace{2ex}
  \noindent
  \emph{Phases 5-6}
  We define phase 5 as the processing of the next $2(\ell_A - 1)$
  symbols of $w'$, and phase 6 as the processing of the next two
  symbols of $w'$. By nearly identical analysis as for phases 1-2,
  this results in appending symbols $g(A) \cdot \dol_{k}$ to the
  definition of the start rule of $G^d$. As a last step, however, we
  eliminate the repeating pair $g(A) \cdot \dol_{k}$ by first
  creating a new nonterminal $T$, and replacing both occurrences of
  $g(A) \cdot \dol_{k}$ with $T$.  Letting $G^f = (V^f, \Sigma'',
  R^f, S^f)$ be the resulting grammar, we have $S^f = S^a$, $V_f =
  \{g(N_1), \dots, g(N_{k-1}), g(S_{k-1}), T\}$, $\rhsgen{G^f}{T} =
  g(A) \cdot \dol_{k}$, and $\rhsgen{G^f}{g(S_{k-1})} =
  (\bigodot_{i=1,\dots,k-1} g(N_i) \cdot \hash_{2i-1} \cdot g(N_i)
  \cdot \hash_{2i}) \cdot T \cdot g(B) \cdot \hash_{2k-1} \cdot T$.

  \vspace{2ex}
  \noindent
  \emph{Phases 7-8}
  We define phase 7 as the processing of the next $2(\ell_B - 1)$
  symbols of $w'$, and phase 8 as the processing of the next two
  symbols of $w'$. By the same analysis as for phase 3, this first
  results in appending the symbol $g(B)$ to the definition of the
  staring rule. This holds since when phase 7 starts, the last symbol
  of the start rule is $T$, and its only other occurrence is followed
  by $g(B)$. Since the expansion length of all intermediate
  nonterminals created during phase 7 is shorter than
  $|\expgen{G'}{B}|$, no repetitions involving $g(B)$ will be
  discovered. Only after the second occurrence of $g(B)$ is
  discovered, \algname{Sequitur} eliminates the repetition of $T \cdot
  g(B)$ by introducing a new nonterminal $T'$ and replacing both
  occurrences of $T \cdot g(B)$ with $T'$. After that, $T$ has only a
  single occurrence on the right-hand side, and hence its is deleted
  and both its occurrences are replaced with its definition, i.e.,
  $g(A) \cdot \dol_{k}$.  Finally, we append $\hash_{2k}$ to the
  definition of the start rule, which does not invoke any
  reductions. Letting $G^h = (V^h, \Sigma'', R^h, S^h)$ be the final
  grammar, we have $S^h = S^a$, $V_h = \{g(N_1), \dots, g(N_{k-1}),
  g(S_{k-1}), T'\}$, $\rhsgen{G^h}{T'} = g(A) \cdot \dol_{k} \cdot
  g(B)$, and $\rhsgen{G^h}{g(G_{k-1})} = (\bigodot_{i=1,\dots,k-1}
  g(N_i) \cdot \hash_{2i-1} \cdot g(N_i) \cdot \hash_{2i}) \cdot T'
  \cdot \hash_{2k-1} \cdot T' \cdot \hash_{2k}$. Consequently,
  extending the bijection by setting $g(N_{k}) = T'$ establishes that
  $G^h$ is isomorphic with $G_{k}$. This concludes the proof of the
  induction step.

  \vspace{2ex}
  \noindent
  \emph{Summary}
  We have thus proved that for every $k \in [0 \dd |V|]$, the grammar
  produced by \algname{Sequitur} after $s_k$ steps is isomorphic with
  $G_k = (V_k, \Sigma'', R_k, S_k)$.  In articular, the final grammar
  is isomorphic to $G_{|V|}$. By the same analysis as at the end of
  the proof of \cref{lm:sequential}, this implies that the final
  grammar computed by \algname{Sequitur} has size $7|V| =
  \tfrac{7}{2}|G|$.
\end{proof}

\begin{theorem}\label{th:sequitur}
  For any $u \in \Sigma^{*}$, let $\algname{Sequitur}(u)$ denote the
  output of the \algname{Sequitur} algorithm on $u$. In the cell-probe
  model, every static data structure that for $u \in \Sigma^n$ uses
  $\bigO(|\algname{Sequitur}(u)| \log^{c} n)$ space $($where $c =
  \bigO(1))$, requires $\Omega(\log n / \log \log n)$ time to answer
  random access queries on $u$.
\end{theorem}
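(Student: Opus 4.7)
The plan is to follow essentially verbatim the template used for \cref{th:global} and \cref{th:sequential}, replacing the size bound on the output grammar with the one established in \cref{lm:sequitur}. Concretely, I would argue by contradiction: suppose a data structure $D$ exists that, for every $u \in \Sigma^n$, occupies $\bigO(|\algname{Sequitur}(u)| \log^{c} n)$ space and answers random access in $o(\log n / \log \log n)$ time. I then reduce the parity range counting problem to this, and invoke \cref{th:parity-range-counting-lower-bound} to derive a contradiction.

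The reduction goes as follows. Given an arbitrary set $\Pts \subseteq [1 \dd m]^2$ with $|\Pts| = m$ (padded to a power of two exactly as in the proof of \cref{th:global}), \cref{lm:answer-string-grammar-size} yields an admissible SLG $G_{\Pts}$ of height $\bigO(\log m)$ with $L(G_{\Pts}) = \{A(\Pts)\}$ and $|G_{\Pts}| = \bigO(m \log m)$. Admissibility together with the height bound gives $\sum_{X \in V_{\Pts}} |\expgen{G_{\Pts}}{X}| = \bigO(m^2 \log m)$. Pick $w \in \alpha(G_{\Pts})$. Then \cref{lm:alpha-length} yields $|w| = \bigO(m^2 \log m)$, and \cref{lm:delta} supplies a fixed offset $\delta \geq 0$ with $A(\Pts)[j] = w[\delta + 2j - 1]$ for all $j \in [1 \dd m^2]$.

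Now apply \cref{lm:sequitur} to $G_{\Pts}$ and $w \in \alpha(G_{\Pts})$: the output $G_w = \algname{Sequitur}(w)$ satisfies $|G_w| = \tfrac{7}{2}|G_{\Pts}| = \bigO(m \log m)$. Instantiating the hypothesized $D$ on $w$, and storing additionally the scalar $\delta$, gives a data structure of size $\bigO(|G_w| \log^c |w|) = \bigO(m \log m \cdot \log^c(m^2 \log m)) = \bigO(m \log^{1+c} m)$ that, using the identity $A(\Pts)[j] = w[\delta + 2j - 1]$, answers any parity range counting query $(x,y) \in [1 \dd m]^2$ in $o(\log |w| / \log \log |w|) = o(\log m / \log \log m)$ time. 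This contradicts \cref{th:parity-range-counting-lower-bound}.

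The proof is essentially mechanical once \cref{lm:sequitur} is in hand; the only steps that require any thought are verifying that the polynomial blow-up $|w| = \poly(m)$ keeps both the space factor $\log^c |w|$ and the time bound $\log |w| / \log \log |w|$ in the right asymptotic class (both are routine since $|w| = \bigO(m^2 \log m)$ implies $\log |w| = \Theta(\log m)$), and checking that the $+\delta$ offset in the reduction does not affect query time. There is no genuine obstacle here: the heavy lifting has already been done in \cref{lm:sequitur}, whose multi-phase case analysis of \algname{Sequitur}'s reduction rules is precisely what lets us treat the algorithm as a black box at this stage.
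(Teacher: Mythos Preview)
Your proposal is correct and follows the same approach as the paper, which simply says the proof proceeds analogously to \cref{th:global} with \cref{lm:sequitur} replacing the size bound. You have in fact spelled out more of the routine verification than the paper does.
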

\begin{proof}
  The proof proceeds analogously as in \cref{th:global}, except we
  observe that the grammar $G_w = \algname{Sequitur}(w)$ satisfies
  $|G_w| = \tfrac{7}{2}|G_{\Pts}|$ by \cref{lm:sequitur}.
\end{proof}

\subsection{Analysis of \algname{LZD}}\label{sec:lzd}

\begin{definition}\label{def:beta}
  Let $G = (V, \Sigma, R, S)$ be an admissible SLG. Assume $\Sigma
  \cap \{\dol_i : i \in [1 \dd 2|V|]\} = \emptyset$ and let $\Sigma' =
  \Sigma \cup \{\dol_i : i \in [1 \dd 2|V|]\}$. By $\beta(G)$ we
  denote the subset of $\Sigma'^{*}$ such that for every $w \in
  \Sigma'^{*}$, $w \in \beta(G)$ holds if and only if there exists a
  sequence $(N_i)_{i \in [1 \dd |V|]}$ such that:
  \begin{itemize}[itemsep=0pt,parsep=0pt]
    \item $\{N_i : i \in [1 \dd |V|]\} = V$,
    \item $|\expgen{G}{N_i}| \leq |\expgen{G}{N_{i+1}}|$ holds for $i
      \in [1 \dd |V|)$,
    \item $w = \bigodot_{i=1,\dots,|V|} \expgen{G'}{N_{i,0}}
      \expgen{G'}{N_{i,0}}$,
  \end{itemize}
  where $G' = (V', \Sigma', R', S')$ is defined so that:
  \begin{itemize}[itemsep=0pt,parsep=0pt]
  \item $V' = \{S'\} \cup \bigcup_{i \in [1 \dd |V|]} \{N_{i,0},
    N_{i,1}, N_{i,2}\}$ is a set of $|V'| = 3|V| + 1$ variables,
  \item For every $i \in [1 \dd |V|]$,
    \begin{align*}
      \rhsgen{G'}{N_{i,1}} &=
        \begin{cases}
          N_{j,0} \dol_{2i-1}
            & \text{if }A = N_j\text{ for }j \in [1 \dd |V|],\\
          A \dol_{2i-1}
            & \text{otherwise},\\
        \end{cases}\\
      \rhsgen{G'}{N_{i,2}} &=
        \begin{cases}
          N_{k,0} \dol_{2i}
           & \text{if }B = N_k\text{ for }k \in [1 \dd |V|],\\
          B \dol_{2i}
           & \text{otherwise},\\
        \end{cases}\\[3ex]
      \rhsgen{G'}{N_{i,0}} &= N_{i,1} N_{i,2},
    \end{align*}
    where $A, B \in V \cup \Sigma$ are such that $\rhsgen{G}{N_i} =
    AB$,
  \item $\rhsgen{G'}{S'} = \bigodot_{i=1,\dots,|V|} N_{i,1} N_{i,2}
    N_{i,0}$.
  \end{itemize}
\end{definition}

\begin{lemma}\label{lm:beta-length}
  Let $G = (V, \Sigma, R, S)$ be an admissible SLG and $w \in
  \beta(G)$.  Let $(N_i)_{i \in [1 \dd |V|]}$ be the sequence and $G'
  = (V', \Sigma', R', S')$ be the SLG corresponding to $w$ in
  \cref{def:beta}. Then:
  \begin{enumerate}[itemsep=0pt,parsep=0pt]
  \item\label{lm:beta-length-it-1}
    For every $i \in [1 \dd |V|]$, letting $A, B \in V \cup \Sigma$ be
    such that $\rhsgen{G}{N_i} = AB$, it holds:
    \begin{itemize}[itemsep=0pt,parsep=0pt]
    \item $|\expgen{G'}{N_{i,1}}| = 3|\expgen{G}{A}| - 1$,
    \item $|\expgen{G'}{N_{i,2}}| = 3|\expgen{G}{B}| - 1$,
    \item $|\expgen{G'}{N_{i,0}}| = 3|\expgen{G}{N_i}| - 2$,
    \end{itemize}
  \item\label{lm:beta-length-it-2}
    $|w| = 6\sum_{X \in V} |\expgen{G}{X}| - 4|V|$.
  \end{enumerate}
\end{lemma}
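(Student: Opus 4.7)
The plan is to prove Part~\ref{lm:beta-length-it-1} by induction on $i$, using the fact that the ordering $(N_i)_{i \in [1 \dd |V|]}$ is by nondecreasing expansion length in $G$, so any nonterminal $N_j$ that appears in $\rhsgen{G}{N_i}$ must satisfy $j < i$. Part~\ref{lm:beta-length-it-2} then follows by a direct calculation from Part~\ref{lm:beta-length-it-1}.

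\medskip

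For the induction base $i = 1$, the ordering assumption forces $A, B \in \Sigma$. I would then compute directly from \cref{def:beta} that $\rhsgen{G'}{N_{1,1}} = A \dol_{1}$ and $\rhsgen{G'}{N_{1,2}} = B \dol_{2}$, both of which have expansion length $2$; since $|\expgen{G}{A}| = |\expgen{G}{B}| = 1$, this matches $3 \cdot 1 - 1$. For $N_{1,0}$, by $\rhsgen{G'}{N_{1,0}} = N_{1,1} N_{1,2}$, the expansion length is $4 = 3 \cdot 2 - 2 = 3|\expgen{G}{N_1}| - 2$.

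\medskip

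For the inductive step with $i > 1$, let $A, B \in V \cup \Sigma$ be such that $\rhsgen{G}{N_i} = AB$. I would split into cases for $A$ (the argument for $B$ is symmetric). If $A \in \Sigma$, then by \cref{def:beta}, $\rhsgen{G'}{N_{i,1}} = A \dol_{2i-1}$, so $|\expgen{G'}{N_{i,1}}| = 2 = 3|\expgen{G}{A}| - 1$. If $A = N_j$ for some $j$, then the ordering assumption and $|\expgen{G}{A}| < |\expgen{G}{N_i}|$ give $j < i$, and I can invoke the inductive hypothesis on $N_{j,0}$ to obtain $|\expgen{G'}{N_{j,0}}| = 3|\expgen{G}{N_j}| - 2 = 3|\expgen{G}{A}| - 2$. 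Since $\rhsgen{G'}{N_{i,1}} = N_{j,0} \dol_{2i-1}$, it follows that $|\expgen{G'}{N_{i,1}}| = |\expgen{G'}{N_{j,0}}| + 1 = 3|\expgen{G}{A}| - 1$. The analogous computation yields $|\expgen{G'}{N_{i,2}}| = 3|\expgen{G}{B}| - 1$. Finally, from $\rhsgen{G'}{N_{i,0}} = N_{i,1} N_{i,2}$ and $|\expgen{G}{N_i}| = |\expgen{G}{A}| + |\expgen{G}{B}|$, I obtain $|\expgen{G'}{N_{i,0}}| = (3|\expgen{G}{A}| - 1) + (3|\expgen{G}{B}| - 1) = 3|\expgen{G}{N_i}| - 2$.

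\medskip

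For Part~\ref{lm:beta-length-it-2}, I would apply Part~\ref{lm:beta-length-it-1} to the formula $w = \bigodot_{i=1,\dots,|V|} \expgen{G'}{N_{i,0}} \expgen{G'}{N_{i,0}}$ from \cref{def:beta}:
\[
|w| = \sum_{i=1}^{|V|} 2|\expgen{G'}{N_{i,0}}| = \sum_{i=1}^{|V|} 2(3|\expgen{G}{N_i}| - 2) = 6\sum_{X \in V}|\expgen{G}{X}| - 4|V|,
\]
where the last equality uses $\{N_i : i \in [1 \dd |V|]\} = V$. The proof is mostly bookkeeping; the only subtlety is correctly invoking the inductive hypothesis on $N_{j,0}$ (rather than on $N_{j,1}$ or $N_{j,2}$ directly), which is why establishing all three length identities simultaneously in the inductive step is convenient.
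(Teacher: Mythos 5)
Your proof is correct and follows essentially the same approach as the paper: both use the nondecreasing-length ordering of $(N_i)$ to justify an inductive argument (the paper phrases it as induction on $|\expgen{G}{N_i}|$, you as induction on the index $i$; these are interchangeable here since $A = N_j$ with $|\expgen{G}{A}| < |\expgen{G}{N_i}|$ forces $j < i$), with the same case split on whether $A$ (resp.\ $B$) is a terminal or a smaller nonterminal, the same use of $|\expgen{G'}{N_{j,0}}| = 3|\expgen{G}{A}| - 2$ in the inductive step, and the same summation for Part~2.
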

\begin{proof}
  1. We proceed by induction on $|\expgen{G}{N_i}|$.  Let us thus
  assume $|\expgen{G}{N_i}| = 2$. Note that then $A, B \in \Sigma$.
  By \cref{def:beta}, we then have
  \begin{itemize}[itemsep=0pt,parsep=0pt]
    \item $\expgen{G'}{N_{i,1}} = a\dol_{2i-1}$,
    \item $\expgen{G'}{N_{i,2}} = b\dol_{2i}$,
    \item $\expgen{G'}{N_{i,0}} = a\dol_{2i-1}b\dol_{2i}$.
  \end{itemize}
  Thus, the claim holds. Let us now assume $|\expgen{G}{N_i}| >
  2$. Consider two cases. If $A \in \Sigma$, then
  $|\expgen{G'}{N_{i,1}}| = 2 = 3|\expgen{G}{A}| - 1$ follows as in
  the induction base. Let us thus assume $A \in V$. By
  $\rhsgen{G}{N_i} = AB$, it follows that $|\expgen{G}{A}| <
  |\expgen{G}{N_i}|$. This implies that $A = N_j$ for some $j \in [1
  \dd i)$.  By the inductive assumption $|\expgen{G'}{N_{j,0}}| =
  3|\expgen{G}{N_j}| - 2 = 3|\expgen{G}{A}| - 2$. On the other hand,
  by \cref{def:beta}, $\rhsgen{G'}{N_{i,1}} =
  N_{j,0}\dol_{2i-1}$. Thus, $|\expgen{G'}{N_{i,1}}| =
  |\expgen{G'}{N_{j,0}}| + 1 = 3|\expgen{G}{A}| - 1$.  The proof of
  $|\expgen{G'}{N_{i,2}}| = 3|\expgen{G}{B}| - 1$ is analogous. By
  $\rhsgen{G'}{N_{i,0}} = N_{i,1}N_{i,2}$, we thus obtain
  \begin{align*}
    |\expgen{G'}{N_{i,0}}|
      &= |\expgen{G'}{N_{i,1}}| + |\expgen{G'}{N_{i,2}}|\\
      &= 3|\expgen{G}{A}| - 1 + |\expgen{G}{B}| - 1\\
      &= 3|\expgen{G}{A}\expgen{G}{B}| - 2\\
      &= 3|\expgen{G}{N_i}| - 2.
  \end{align*}

  2. By \cref{lm:beta-length}\eqref{lm:beta-length-it-2},
  \begin{align*}
    |w|
      &= |\textstyle\bigodot_{i=1,\dots,|V|}
        \expgen{G'}{N_{i,0}}\expgen{G'}{N_{i,0}}\\
      &= 2\textstyle\sum_{i \in [1 \dd |V|]}(3|\expgen{G}{N_i}| - 2)\\
      &= 6\textstyle\sum_{X \in V}|\expgen{G}{X}| - 4|V|.
      \qedhere
  \end{align*}
\end{proof}

\begin{lemma}\label{lm:lzd}
  Let $G$ be an admissible SLG. For every string $w \in \beta(G)$,
  \algname{LZD} outputs a grammar of size $\tfrac{9}{2}|G|$.
\end{lemma}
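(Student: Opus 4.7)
The approach is to show that \algname{LZD} on $w$ produces exactly $m = 3|V|$ phrases; since the resulting SLG has size $3m$ and $|G| = 2|V|$ by admissibility of $G$, this gives output size $9|V| = \tfrac{9}{2}|G|$. Denote $G = (V, \Sigma, R, S)$ and fix the ordering $(N_i)_{i \in [1 \dd |V|]}$ together with the auxiliary SLG $G' = (V', \Sigma', R', S')$ from \cref{def:beta} witnessing $w \in \beta(G)$; all position tracking will rely on \cref{lm:beta-length}.

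The central claim, to be proved by induction on $k \in [0 \dd |V|]$, is that after \algname{LZD} has emitted its first $3k$ phrases it has consumed exactly the prefix $\bigodot_{i=1,\dots,k} \expgen{G'}{N_{i,0}} \cdot \expgen{G'}{N_{i,0}}$ of $w$, and for every $i \in [1 \dd k]$ the phrase triple $(f_{3i-2}, f_{3i-1}, f_{3i})$ equals $(\expgen{G'}{N_{i,1}}, \expgen{G'}{N_{i,2}}, \expgen{G'}{N_{i,0}})$. The base case $k=0$ is vacuous, and at $k = |V|$ the algorithm has consumed all of $w$ and output exactly $m = 3|V|$ phrases, as required.

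For the inductive step, let $A, B \in V \cup \Sigma$ with $\rhsgen{G}{N_{k+1}} = AB$, and by \cref{def:beta} write $\expgen{G'}{N_{k+1,1}} = X\dol_{2k+1}$ and $\expgen{G'}{N_{k+1,2}} = Y\dol_{2k+2}$, where $X$ is either the character $A$ (when $A \in \Sigma$) or the expansion $\expgen{G'}{N_{j,0}}$ (when $A = N_j$, in which case $j \leq k$ by the length ordering), and analogously for $Y$. The remaining suffix then begins with $X\dol_{2k+1} Y\dol_{2k+2} X\dol_{2k+1} Y\dol_{2k+2}$, while all subsequent blocks only introduce $\dol$-symbols of index $\geq 2k+3$. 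The driving observation is that the sentinels $\dol_{2k+1}$ and $\dol_{2k+2}$ are \emph{fresh}: by the induction hypothesis every existing phrase has the form $\expgen{G'}{N_{i,t}}$ with $i \leq k$ and $t \in \{0, 1, 2\}$, and by \cref{def:beta} such strings only contain $\dol$-symbols of index $\leq 2k$. Using this, I will force the next three iterations of \algname{LZD}: phrase $3k+1$ takes its first part equal to $X$ (either a character or the existing phrase $\expgen{G'}{N_{j,0}}$) and its second part equal to $\dol_{2k+1}$, yielding $\expgen{G'}{N_{k+1,1}}$; phrase $3k+2$ gives $\expgen{G'}{N_{k+1,2}}$ by the symmetric argument; and phrase $3k+3$ greedily matches the newly-created $\expgen{G'}{N_{k+1,1}} = X\dol_{2k+1}$, followed by $\expgen{G'}{N_{k+1,2}} = Y\dol_{2k+2}$, assembling $\expgen{G'}{N_{k+1,0}}$.

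The main obstacle will be the careful verification, at each of these three greedy matches, that no \emph{longer} match is possible. This reduces to cataloguing which sentinels can occur in each existing phrase and invoking the freshness of $\dol_{2k+1}$ and $\dol_{2k+2}$ to rule out any match that crosses these positions. Most subtly, one must check that a previously created phrase of the form $\expgen{G'}{N_{i,1}}$ or $\expgen{G'}{N_{i,2}}$ (ending in $\dol_{2i-1}$ or $\dol_{2i}$, respectively, for some $i \leq k$) cannot match a prefix of the remainder longer than $X$: its terminal sentinel has index at most $2k < 2k+1$, so it cannot align with the symbol $\dol_{2k+1}$ at position $|X|+1$. A second side check is that in phrase $3k+3$ no extension past $X\dol_{2k+1}$ is possible, since such an extension would require a phrase containing both $\dol_{2k+1}$ and the first symbol of $Y$, and $\expgen{G'}{N_{k+1,1}}$ is the unique phrase containing $\dol_{2k+1}$. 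Once these checks are carried out, the induction closes and the output grammar has size $3m = 9|V| = \tfrac{9}{2}|G|$.
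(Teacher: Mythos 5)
Your proposal is correct and follows essentially the same approach as the paper's proof: an induction showing that after $3k$ LZD steps the algorithm has consumed exactly $\bigodot_{i=1,\dots,k} \expgen{G'}{N_{i,0}}\expgen{G'}{N_{i,0}}$ with the resulting grammar matching $G'$ restricted to the first $k$ blocks, driven by the freshness of the sentinels $\dol_{2k+1}$ and $\dol_{2k+2}$ and the longest-prefix greediness of LZD. The only cosmetic differences are that the paper phrases the inductive invariant in terms of isomorphism to an explicitly defined intermediate grammar $G'_i$ rather than in terms of the phrase triple, and handles $i=1$ as an explicit base case instead of making $k=0$ vacuous; the key observations and the final size count $3m = 9|V| = \tfrac{9}{2}|G|$ are identical.
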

\begin{proof}

  Denote $G = (V, \Sigma, R, S)$. Let $\Sigma'$ be as in
  \cref{def:beta}.  Let also $(N_i)_{i \in [1 \dd |V|]}$ and $G' =
  (V', \Sigma', R', S')$ be the sequence and the SLG corresponding to
  $w$ in \cref{def:beta}. Recall that $V' = \{S'\} \cup \bigcup_{i \in
  [1 \dd |V|]} \{N_{i,0}, N_{i,1}, N_{i,2}\}$.  We prove by
  induction that for every $i \in [1 \dd |V|]$, after $3i$ steps,
  \algname{LZD} processed the prefix $\bigodot_{j=1,\dots,i}
  \expgen{G'}{N_{j,0}} \expgen{G'}{N_{j,0}}$ of $w$ and the produced
  grammar is isomorphic to $G'_i$, where $G'_i$ is defined as $G'$
  restricted to the first $3i + 1$ variables, i.e., $G'_i = (V'_i,
  \Sigma', R'_i, S'_i)$, where:
  \begin{itemize}
  \item $V'_i = \{S'_i\} \cup \bigcup_{i \in [1 \dd i]} \{N_{i,0},
    N_{i,1}, N_{i,2}\}$,
  \item For every $j \in [1 \dd i]$ and $b \in \{0, 1, 2\}$, it holds
    $\rhsgen{G'_i}{N_{j,b}} = \rhsgen{G'}{N_{j,b}}$,
  \item $\rhsgen{G'_i}{S'_i} = \bigodot_{j=1,\dots,i} N_{j,1} N_{j,2}
    N_{j,0}$.
  \end{itemize}

  To prove the base case of $i = 1$, observe that $\rhsgen{G}{N_1} =
  ab$, where $a, b \in \Sigma$.  Thus, $a \dol_1 b \dol_2 a \dol_1 b
  \dol_2$ is a prefix of $w$. When running \algname{LZD} on $w$, the
  first step creates $f_1 = a \dol_1$, the second step introduces $f_2
  = a \dol_2$, and in the third step, we obtain $f_3 = f_1 f_2$, since
  $f_1 = a \dol_1$ is the longest prefix of the remaining string equal
  to one of the earlier phrases. Similarly, $b \dol_2$ is the longest
  prefix after that with the corresponding phrase.  This parsing
  corresponds to an SLG $G_{\rm cur} = (V_{\rm cur}, \Sigma', R_{\rm
  cur}, S_{\rm cur})$ such that $V_{\rm cur} = \{S_{\rm cur},
  M_{1,0}, M_{1,1}, M_{1,2}\}$, and it holds $\rhsgen{G_{\rm
  cur}}{M_{1,1}} = a \dol_1$, $\rhsgen{G_{\rm cur}}{M_{1,2}} = b
  \dol_2$, $\rhsgen{G_{\rm cur}}{M_{1,0}} = M_{1,1} M_{1,2}$, and
  $\rhsgen{G_{\rm cur}}{S_{\rm cur}} = M_{1,1} M_{1,2}
  M_{1,0}$. Recall now that $G'_1 = (V'_1, \Sigma', R'_1, S'_1)$,
  where $V'_1 = \{S'_1, N_{1,0}, N_{1,1}, N_{1,2}\}$,
  $\rhsgen{G'_1}{N_{1,1}} = \rhsgen{G'}{N_{1,1}} = a \dol_1$,
  $\rhsgen{G'_1}{N_{1,2}} = \rhsgen{G'}{N_{1,2}} = b \dol_2$,
  $\rhsgen{G'_1}{N_{1,0}} = \rhsgen{G'}{N_{1,0}} = N_{1,1} N_{1,2}$,
  and $\rhsgen{G'_1}{S'_1} = N_{1,1} N_{1,2} N_{1,0}$. Thus, $G_{\rm
  cur}$ is clearly isomorphic with $G'_1$, concluding the proof of
  the induction base.

  We now prove the induction step. Assume $i > 1$. Let $G_{\rm cur} =
  (V_{\rm cur}, \Sigma', R_{\rm cur}, S_{\rm cur})$ be the grammar
  corresponding to the parsing computed by \algname{LZD} after the
  first $3(i-1)$ steps.  By the inductive assumption, \algname{LZD}
  has processed the prefix $w' := \bigodot_{j=1,\dots,i-1}
  \expgen{G'}{N_{j,0}} \expgen{G'}{N_{j,0}}$ of $w$, resulting in the
  parsing $f_1 \cdots f_{3(i-1)}$, and $G_{\rm cur}$ is isomorphic to
  $G'_{i-1}$.  Let $g : V'_{i-1} \cup \Sigma' \rightarrow
  V_{\rm cur} \cup \Sigma'$ be the corresponding bijection (see
  \cref{sec:prelim}). Let $w''$ be such that $w' w'' = w$.
  Then, $\expgen{G'}{N_{i,0}}
  \expgen{G'}{N_{i,0}} = \expgen{G'}{N_{i,1}} \expgen{G'}{N_{i,2}}
  \expgen{G'}{N_{i,1}} \expgen{G'}{N_{i,2}}$ is a prefix of $w''$.
  First, observe that
  $\dol_{2i-1}$ and $\dol_{2i}$ do not occur in $w'$, since they do
  not appear in $\rhsgen{G'}{N_{j,b}}$ for $j \in [1 \dd i)$ and $b
  \in \{0, 1, 2\}$.  Let $A, B \in V \cup \Sigma$ be such that
  $\rhsgen{G}{N_i} = AB$.  Consider two cases:
  \begin{itemize}
  \item First, assume that there exists $j \in [1 \dd |V|]$ such that
    $A = N_j$. Then, $\rhsgen{G'}{N_{i,1}} = N_{j,0} \dol_{2i-1}$, and
    hence $\expgen{G'}{N_{i,1}} = \expgen{G'}{N_{j,0}} \dol_{2i-1}$ is
    a prefix of $w''$.  Note that $\rhsgen{G}{N_i} = AB$ implies
    $|\expgen{G}{A}| < |\expgen{G}{N_i}|$. Thus, $j < i$. Recall now
    that $\expgen{G'_{i-1}}{N_{j,0}} = \expgen{G'}{N_{j,0}}$. Thus,
    letting $X = g(N_{j,0})$, it holds $\expgen{G_{\rm cur}}{X} =
    \expgen{G'}{N_{j,0}}$. Since nonterminals $V_{\rm cur} \setminus
    \{S_{\rm cur}\}$ of $G_{\rm cur}$ correspond to the phrases in the
    current parsing $f_1 \cdots f_{3(i-1)}$, we thus obtain that there
    exists $j' \in [1 \dd 3(i-1)]$ such that $f_{j'} = \expgen{G_{\rm
    cur}}{X} = \expgen{G'}{N_{j,0}}$. Since, as noted above,
    $\dol_{2i-1}$ does not occur in $w'$, by definition of
    \algname{LZD} (see \cref{sec:algs-nonglobal}) it follows that the
    next phrase is $f_{3i - 2} = f_{j'} \dol_{2i-1} = \expgen{G_{\rm
    cur}}{X} \dol_{2i-1} = \expgen{G'}{N_{j,0}} \dol_{2i-1} =
    \expgen{G'}{N_{i,1}}$.  This corresponds to adding a nonterminal
    $X_1$ into the current grammar with the definition $g(N_{j,0})
    \dol_{2i-1} = g(\rhsgen{G'}{N_{i,1}}[1]) \cdot
    g(\rhsgen{G'}{N_{i,1}}[2]) = g(\rhsgen{G'_i}{N_{i,1}}[1]) \cdot
    g(\rhsgen{G'_i}{N_{i,1}}[2])$.
  \item Let us now assume that $A \in \Sigma$. Then,
    $\rhsgen{G'}{N_{i,1}} = A \dol_{2i-1}$, and hence
    $\expgen{G'}{N_{i,1}} = A \dol_{2i-1}$ is a prefix of
    $w''$. Since, as noted above, $\dol_{2i-1}$ does not occur in
    $w'$, by definition of \algname{LZD} it follows that the next
    phrase is $f_{3i-2} = A \dol_{2i-1} = \expgen{G'}{N_{i,1}}$.  This
    corresponds to adding a nonterminal $X_1$ into the current grammar
    with the definition $A \dol_{2i-1} = g(\rhsgen{G'}{N_{i,1}}[1])
    \cdot g(\rhsgen{G'}{N_{i,1}}[2]) = g(\rhsgen{G'_i}{N_{i,1}}[1])
    \cdot g(\rhsgen{G'_i}{N_{i,1}}[2])$.
  \end{itemize}
  In both cases, we obtain that $f_{3i-2} = \expgen{G'}{N_{i,1}}$, and
  adding this factor corresponds to adding a nonterminal $X_1$ with
  the definition $g(\rhsgen{G'_i}{N_{i,1}}[1]) \cdot
  g(\rhsgen{G'_i}{N_{i,1}}[2])$ to the current grammar. Analogously,
  since $\dol_{2i}$ does not occur in $w'$ or in
  $\expgen{G'}{N_{i,1}}$, it holds $f_{3i-1} = \expgen{G'}{N_{i,2}}$,
  and adding this factor corresponds to adding a nonterminal $X_2$
  with the definition $g(\rhsgen{G'_i}{N_{i,2}}[1]) \cdot
  g(\rhsgen{G'_i}{N_{i,2}}[2])$ to the current grammar.  The remaining
  unprocessed suffix of $w$ starts with
  $\expgen{G'}{N_{i,1}}\expgen{G'}{N_{i,2}}$. Since $\dol_{2i-1}$
  (resp.\ $\dol_{2i}$) is the last symbol in $\expgen{G'}{N_{i,1}}$
  (resp.\ $\expgen{G'}{N_{i,2}}$), and the current grammar has exactly
  one nonterminal containing $\dol_{2i-1}$ (resp.\ $\dol_{2i}$), i.e.,
  $X_1$ (resp.\ $X_2$), whose expansion is $\expgen{G'}{N_{i,1}} =
  f_{3i-2}$ (resp.\ $\expgen{G'}{N_{i,2}} = f_{3i-1}$), it follows
  that $f_{3i} = f_{3i-2}f_{3i-1}$. Adding this factor corresponds to
  adding a nonterminal $X_0$ with the definition $X_1 X_2$ to the
  current grammar. We have thus proved that after $3i$ steps, the
  algorithm processed the prefix $\bigodot_{j=1,\dots,i}
  \expgen{G'}{N_{j,0}}\expgen{G'}{N_{j,0}}$ of $w$, i.e., the first
  part of the induction claim. To show the second part, let $G_{\rm
  new} = (V_{\rm new}, \Sigma', R_{\rm new}, S_{\rm new})$ where
  $V_{\rm new} = V_{\rm cur} \cup \{X_0, X_1, X_2\}$, $S_{\rm new} =
  S_{\rm cur}$, for every $X \in V_{\rm cur}$, $\rhsgen{G_{\rm
  new}}{X} = \rhsgen{G_{\rm cur}}{X}$, and the definitions of
  $X_0$, $X_1$, and $X_2$ in $G_{\rm new}$ are as above.  Letting $g'
  : V_{\rm new} \cup \Sigma' \rightarrow G'_i \cup \Sigma'$ be defined
  in the same way as $g$ on $V_{\rm cur} \cup \Sigma'$, and $g'(X_0) =
  N_{i,0}$, $g'(X_1) = N_{i,1}$, and $g'(X_2) = N_{i,2}$, it
  immediately follows by the above that $g'$ is a bijection
  establishing the isomorphism of $G_{\rm new}$ and $G'_i$. This
  concludes the proof of the inductive step.

  By the above, the final grammar $G_{\rm final}$ computed by
  \algname{LZD} on $w$ is isomorphic to $G'_{|V|}$. As noted in
  \cref{sec:prelim}, this implies that $|G_{\rm final}| = |G'_{|V|}| =
  \sum_{i=1}^{|V|} \sum_{b \in \{0,1,2\}} |\rhsgen{G'_{|V|}}{N_{i,b}}|
  + |\rhsgen{G'_{|V|}}{S'_{|V|}}| = 6|V| + 3|V| = 9|V|$.  On the other
  hand, since $G$ is admissible, we have $|G| = 2|V|$. Consequently,
  $|G_{\rm final}| = 9|V| = \tfrac{9}{2}|G|$.
\end{proof}

\begin{theorem}\label{th:lzd}
  For any $u \in \Sigma^{*}$, let $\algname{LZD}(u)$ denote the output
  of the \algname{LZD} algorithm on $u$. In the cell-probe model,
  every static data structure that for $u \in \Sigma^n$ uses
  $\bigO(|\algname{LZD}(u)| \log^{c} n)$ space $($where $c =
  \bigO(1))$, requires $\Omega(\log n / \log \log n)$ time to answer
  random access queries on $u$.
\end{theorem}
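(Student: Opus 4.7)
The plan is to mirror the proofs of \cref{th:global,th:sequential,th:sequitur} --- derive a contradiction with \cref{th:parity-range-counting-lower-bound} --- but replacing the $\alpha$-construction of \cref{def:alpha} by the $\beta$-construction of \cref{def:beta} and invoking \cref{lm:lzd} in place of \cref{lm:global-output-size} for the output size. Assume for contradiction that a data structure $D$ satisfying the stated bounds exists. Given $|\Pts| = m$ points on an $m \times m$ grid (I take $m$ to be a power of two by padding $\Pts$ with diagonal points exactly as in the proof of \cref{th:global}), I apply \cref{lm:answer-string-grammar-size} to obtain the admissible SLG $G_{\Pts}$ of height $h = \bigO(\log m)$ and size $\bigO(m \log m)$ with $L(G_{\Pts}) = \{A(\Pts)\}$, and I set $w \in \beta(G_{\Pts})$. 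Combining the standard bound $\sum_{X \in V_{\Pts}} |\expgen{G_{\Pts}}{X}| \leq |A(\Pts)|(h+1) = \bigO(m^2 \log m)$ (valid for every admissible height-$h$ SLG) with \cref{lm:beta-length}, we get $|w| = \bigO(m^2 \log m)$. By \cref{lm:lzd}, $|\algname{LZD}(w)| = \tfrac{9}{2}|G_{\Pts}| = \bigO(m \log m)$, so $D$ on $w$ occupies $\bigO(m \log^{1+c} m)$ space and supports random access to $w$ in $o(\log m / \log \log m)$ time.

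The substantive new ingredient, compared with the $\alpha$-based theorems, is a replacement for \cref{lm:delta}. In the $\beta$-construction every internal parse-tree node contributes two sentinels instead of one, so the positions of the original bit-leaves inside $\expgen{G'}{N_{i,0}}$ are not evenly spaced and depend on the shape of the parse tree rooted at $N_i$. To obtain a clean $\bigO(1)$-computable position formula, I exploit the perfect structure of the Verbin--Yu grammar: its construction combines a top-level perfect binary tree of depth $\log m$ (the row-combining nonterminals $R_1, \dots, R_{2m-1}$) with per-row perfect binary trees $\mathcal{T}_y$ of depth $\log m$, so the unfolded parse tree of the start nonterminal is a perfect binary tree of depth $d = 2\log m$ whose $m^2$ leaves carry the bits of $A(\Pts)$ in left-to-right order. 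Using the recursion $\expgen{G'}{N_{i,0}} = \expgen{G'}{N_{i,1}}\cdot\expgen{G'}{N_{i,2}}$ with the children coming from $\rhsgen{G_{\Pts}}{N_i}$, together with the identity $|\expgen{G'}{N_{i,0}}| = 3|\expgen{G_{\Pts}}{N_i}| - 2$ from \cref{lm:beta-length}, a short induction on $d$ shows that the $j$-th leftmost bit-leaf of the $\beta$-expansion of the start nonterminal sits at position $p(j) := 3j - 2 - s_2(j-1)$, where $s_2(\cdot)$ is the binary digit sum. Because \cref{def:beta} places this $\beta$-expansion as the (doubled) final block of $w$ --- the start nonterminal having the longest expansion, hence being last in the length-ordering --- there exists $\delta \geq 0$ with $A(\Pts)[j] = w[\delta + p(j)]$ for every $j \in [1 \dd m^2]$.

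The data structure $D'$ then stores $D$ on $w$ together with the single integer $\delta$. Given a query $(x, y) \in [1 \dd m]^2$, $D'$ computes $j = x + (y-1)m$ and $p(j) = 3j - 2 - s_2(j-1)$ using no cell probes (intervening computation is free in the cell-probe model), then issues one random access query $w[\delta + p(j)]$ to $D$ in $o(\log m / \log \log m)$ time, returning the parity range count at $(x, y)$ and contradicting \cref{th:parity-range-counting-lower-bound}. The hard part of the argument is the position-formula step above: unlike the $\alpha$-setting, where \cref{lm:delta} gives a uniform $2j - 1$ offset that is oblivious to the parse tree, the $\beta$-setting requires the starting grammar to have an unfolded parse tree that is truly \emph{perfect}, and one must verify carefully that the Verbin--Yu construction supplies this stronger structure rather than merely admissibility plus logarithmic height.
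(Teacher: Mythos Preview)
Your proposal is correct and follows the same overall template as the paper, but the two proofs diverge in how they locate the target bit inside $w$.

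The paper does not work through the $\beta$-expansion of the start nonterminal. Instead it chooses the ordering $(N_i)$ so that the $m$ row nonterminals (one per row of the matrix $M$) appear consecutively at indices $i_0+1,\dots,i_0+m$, uses that all row parse trees have the \emph{same} shape (perfect binary trees of height $\log m$), and therefore that the position of the $x$-th leaf inside $\expgen{G'_{\Pts}}{N_{i_0+y,0}}$ is a function $f(x)$ independent of $y$. It then stores the lookup array $F[1\dd m]$ with $F[x]=f(x)$, plus $\delta$, and answers a query at position $\delta+2(y-1)(3m-2)+F[x]$. You instead look one level higher: the start nonterminal's unfolded parse tree is a perfect binary tree of depth $2\log m$, and you derive the closed form $p(j)=3j-2-s_2(j-1)$ for the position of the $j$-th leaf in its $\beta$-expansion, computed at query time with zero cell probes. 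Your route trades the $\bigO(m)$-word auxiliary table for a short induction on tree depth (your recurrence $p_d(j)=3\cdot 2^{d-1}-1+p_{d-1}(j-2^{d-1})$ for $j>2^{d-1}$ indeed yields the digit-sum formula). The paper's route sidesteps any explicit formula at the cost of negligible extra space. Both rest on the same structural fact about the Verbin--Yu grammar---that every parse tree $\mathcal{T}_{G_{\Pts}}(X)$ is a perfect binary tree---which the paper's proof of \cref{th:lzd} states explicitly, so your ``stronger structure'' verification is indeed available.
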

\begin{proof}
  Suppose that there exists a structure $D$ that for any $u \in
  \Sigma^n$ uses $\bigO(|\algname{LZD}(u)| \log^c n)$ space (where $c
  = \bigO(1)$) and answers random access queries on $u$ in $o(\log n /
  \log \log n)$ time.  Let $\Pts \subseteq [1 \dd m]^2$ be any set of
  $|\Pts| = m$ points on an $m \times m$ grid. Assume for simplicity
  that $m = 2^k$ (otherwise, we pad $\Pts$ as in the proof of
  \cref{th:global}). Observe that then $m^2$ is a power of two too. By
  \cref{lm:answer-string-grammar-size}, there exists an admissible SLG
  $G_{\Pts} = (V_{\Pts}, \{0, 1\}, R_{\Pts}, S_{\Pts})$ such that
  $L(G_{\Pts}) = \{A(\Pts)\}$ is an answer string for $\Pts$
  (\cref{def:answer-string}), and it holds $|G_{\Pts}| = \bigO(m \log
  m)$ and $\sum_{X \in V_{\Pts}}|\expgen{G_{\Pts}}{X}| = \bigO(m^2
  \log m)$.  Moreover, observe that by the construction of $G_{\Pts}$
  (see \cref{sec:overview-verbin-yu}):
  \begin{itemize}[itemsep=0pt,parsep=0pt]
  \item For every $X \in V_{\Pts}$, $|\expgen{G_{\Pts}}{X}|$ is a
    power of two, and the parse tree $\mathcal{T}_{G_{\Pts}}(X)$ is a
    perfect binary tree of height $\log |\expgen{G_{\Pts}}{X}|$,
  \item For every $y \in [1 \dd m]$, there exists $X \in
    V_{\Pts}$ such that $\expgen{G_{\Pts}}{X} = A(\Pts)((y-1)m \dd
    ym]$. In other words, for every row in the matrix $M$ (see
    \cref{sec:overview-verbin-yu}), there exists a nonterminal in
    $V_{\Pts}$ with the corresponding expansion.
  \end{itemize}
  
  Observe now that there exists a sequence $(N_i)_{i \in [1 \dd
  |V_{\Pts}|]}$ that simultaneously satisfies the following
  conditions:
  \begin{itemize}[itemsep=0pt,parsep=0pt]
  \item $\{N_i : i \in [1 \dd |V_{\Pts}|]\} = V_{\Pts}$,
  \item $|\expgen{G_{\Pts}}{N_i}| \leq |\expgen{G_{\Pts}}{N_{i+1}}|$
    holds for every $i \in [1 \dd |V_{\Pts}|)$,
  \item Nonterminals expanding to consecutive rows of $M$
    (\cref{sec:overview-verbin-yu}) occur consecutively in the
    sequence $(N_i)_{i \in [1 \dd |V_{\Pts}|]}$, i.e., there exists
    $i_0 \in [0 \dd |V_{\Pts}|)$ such that for every $y \in [1 \dd
    m]$, $\expgen{G_{\Pts}}{N_{i_0+y}} = A(\Pts)((y-1)m \dd ym]$.
  \end{itemize}

  Let $w \in \beta(G_{\Pts})$ and $G'_{\Pts} = (V'_{\Pts}, \Sigma',
  R'_{\Pts}, S'_{\Pts})$ be the string and the SLG corresponding to
  $(N_i)_{i \in [1 \dd |V_{\Pts}|]}$ in \cref{def:beta}.  By
  \cref{lm:beta-length}\eqref{lm:beta-length-it-2}, it holds $|w| \leq
  6\sum_{X \in V_{\Pts}} |\expgen{G_{\Pts}}{X}| = \bigO(m^2 \log m)$.
  Observe that by \cref{def:beta}, for every $i \in [1 \dd
  |V_{\Pts}|]$, $\expgen{G_{\Pts}}{N_i}$ is a subsequence of
  $\expgen{G'_{\Pts}}{N_{i,0}}$. Observe also that the positions of symbols
  from $\expgen{G_{\Pts}}{N_i}$ in $\expgen{G'_{\Pts}}{N_{i,0}}$
  depend only on the shape of the parse tree
  $\mathcal{T}_{G_{\Pts}}(N_i)$, and not on the string
  $\expgen{G_{\Pts}}{N_i}$ itself.  Since for every $y_1, y_2 \in [1
  \dd m]$, the parse trees $\mathcal{T}_{G_{\Pts}}(N_{i_0 + y_1})$
  and $\mathcal{T}_{G_{\Pts}}(N_{i_0 + y_2})$ are both perfect binary
  trees of height $\log m$, it follows that there exists a mapping $f
  : [1 \dd m] \rightarrow \Zp$ such that for every $x,y \in [1 \dd
  m]$, it holds $A(\Pts)[(y-1)m + x] = \expgen{G_{\Pts}}{N_{i_0 +
  y}}[x] = \expgen{G'_{\Pts}}{N_{i_0 + y,0}}[f(x)]$. By combining
  this with \cref{lm:beta-length}\eqref{lm:beta-length-it-1}, and letting
  $\delta = |\bigodot_{i=1,\dots,i_0} \expgen{G'_{\Pts}}{N_{i,0}}
  \expgen{G'_{\Pts}}{N_{i,0}}|$, for every $x,y \in [1 \dd m]$, it
  holds:
  \[
    A(\Pts)[(y-1)m + x] = w[\delta + 2(y-1)(3m-2) + f(x)].
  \]

  Let $G_w = \algname{LZD}(w)$ be the output of \algname{LZD} on
  $w$. By \cref{lm:lzd}, we have $|G_w| = \frac{9}{2}|G_{\Pts}| =
  \bigO(m \log m)$.  Let $D'$ denote a data structure consisting of
  the following three components:
  \begin{enumerate}[itemsep=0pt,parsep=0pt]
  \item The data structure $D$ for string $w$. By $|G_w| = \bigO(m
    \log m)$ and the above assumption, $D$ uses
    $\bigO(|\algname{LZD}(w)| \log^c |w|) = \bigO(m \log m \log^c (m^2
    \log m)) = \bigO(m \log^{1 + c} m)$ space, and implements random
    access to $w$ in $o(\log |w| / \log \log |w|) \,{=}\, o(\log (m^2
    \log m) / \log \log (m^2 \log m)) \allowbreak = o(\log m / \log
    \log m)$ time,
  \item The array $F[1 \dd m]$ defined by $F[i] = f(i)$,
  \item The position $\delta \geq 0$, as defined above.
  \end{enumerate}
  In total, $D'$ needs $\bigO(m \log^{1 + c} m)$ space.  Observe that
  given the structure $D'$ and any $(x, y) \in [1 \dd m]^2$, we can
  answer in $o(\log m / \log \log m)$ the parity range query on $\Pts$
  with arguments $(x, y)$ by issuing a random access query to $w$ with
  position $j = \delta + 2(y-1)(3m-2) + F[x]$.  Thus, the existence of
  $D'$ contradicts
  \cref{th:parity-range-counting-lower-bound}.
\end{proof}

\subsection{Analysis of \algname{Bisection}}\label{sec:bisection}

\begin{definition}\label{def:dyadic-interval}
  Let $a, b \in \Zn$ be such that $a < b$. We call the interval $(a
  \dd b]$ \emph{dyadic} if there exists $k \geq 0$ such that $b - a =
  2^k$ and $a$ is a multiple of $2^k$ (in particular, if $a = 0$).
\end{definition}

The following observation follows directly from the definition of
\algname{Bisection} (see \cref{sec:algs-nonglobal}).

\begin{observation}\label{ob:bisection}
  Let $u \in \Sigma^n$, where $n$ is a power of two. Then,
  \algname{Bisection} applied to $u$ outputs an SLG of size $2 \cdot
  |\{u(a \dd b] : (a \dd b]\text{ is dyadic}, (a \dd b] \subseteq (0
  \dd n],\text{ and }b-a > 1\}|$.
\end{observation}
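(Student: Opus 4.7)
The plan is to prove the observation in two steps: first characterize the set $\mathcal{S}$ produced by the enumeration phase of \algname{Bisection}, and then use this characterization to compute the output grammar size.

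For the first step, I will show by induction on the recursion depth that, when $n$ is a power of two, $\mathcal{S}$ is exactly the set of substrings $\{u(a \dd b] : (a \dd b]\text{ is dyadic and } (a \dd b] \subseteq (0 \dd n]\}$. The initial interval $(0 \dd n]$ is dyadic, since $n$ is a power of two and $0$ is a multiple of $n$. Moreover, if the algorithm processes a substring whose index interval $(a \dd b] \subseteq (0 \dd n]$ is dyadic with length $b - a = 2^{\ell}$ for some $\ell \geq 1$, then the algorithm picks the largest $k$ with $2^k < b - a$, namely $k = \ell - 1$, and recurses on the intervals $(a \dd a + 2^{\ell-1}]$ and $(a + 2^{\ell-1} \dd b]$; both of these are dyadic and contained in $(0 \dd n]$. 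By induction on $\ell$, every interval arising in the recursion is dyadic, which gives $\mathcal{S} \subseteq \{u(a \dd b] : (a \dd b]\text{ dyadic}, \subseteq (0 \dd n]\}$. Conversely, a straightforward top-down traversal of the binary tree of dyadic subintervals of $(0 \dd n]$ shows that every dyadic subinterval is reached at some recursion depth, which gives the reverse inclusion.

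For the second step, I will apply the construction of the grammar from $\mathcal{S}$. For each $s \in \mathcal{S}$ with $|s| > 1$, \algname{Bisection} introduces a nonterminal whose production has length exactly $2$ (the two nonterminals corresponding to the two halves of $s$ recorded during the enumeration), while length-one elements of $\mathcal{S}$ are identified with the underlying terminal symbols and contribute no productions to the grammar size. Summing over all nonterminals, $|G| = 2 \cdot |\{s \in \mathcal{S} : |s| > 1\}|$. Combining this with the characterization of $\mathcal{S}$ from the first step yields the claimed formula.

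I do not anticipate any real obstacle here; the only subtlety worth flagging is that $\mathcal{S}$ is a set of distinct substrings, so multiple dyadic index intervals that yield the same substring of $u$ collapse to a single element of $\mathcal{S}$ — which matches the set-builder notation on the right-hand side of the claim, where the cardinality is taken over distinct values of $u(a \dd b]$ rather than over pairs $(a, b)$.
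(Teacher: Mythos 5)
The paper states this observation without a proof, remarking only that it "follows directly from the definition of \algname{Bisection}"; your argument — first showing by a two-sided induction that $\mathcal{S}$ is exactly the set of substrings of $u$ over dyadic subintervals of $(0\dd n]$, then noting that each element of length $>1$ contributes a length-$2$ production while length-$1$ elements are identified with terminals — is exactly the verification the paper has in mind, and it is correct. You also correctly flag the one genuine subtlety, namely that $\mathcal{S}$ is a \emph{set} of substrings so repeated dyadic occurrences collapse, which matches the set-builder cardinality in the statement.
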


\begin{theorem}\label{th:bisection}
  For any $u \in \Sigma^{*}$, let $\algname{Bisection}(u)$ denote the
  output of the \algname{Bisection} algorithm on $u$. In the
  cell-probe model, every static data structure that for $u \in
  \Sigma^n$ uses $\bigO(|\algname{Bisection}(u)| \log^{c} n)$ space
  $($where $c = \bigO(1))$, requires $\Omega(\log n / \log \log n)$
  time to answer random access queries on $u$.
\end{theorem}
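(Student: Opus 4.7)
The proof will follow the same overall template as \cref{th:global,th:sequential,th:sequitur,th:lzd}, but the construction is considerably simpler: since \algname{Bisection} operates directly on dyadic substrings, I do not need to boost the string via $\alpha(G)$ or $\beta(G)$. The plan is to feed the Verbin--Yu answer string $A(\Pts)$ (with the same padding trick as in \cref{th:global} to make $m$ a power of two, so $N := m^2$ is a power of two) directly to \algname{Bisection} and argue that the output already has size $\bigO(m \log m)$. Combined with \cref{lm:delta} (trivially here, since $A(\Pts)$ is exactly the target) and \cref{th:parity-range-counting-lower-bound}, this yields the claimed lower bound.

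The core task is therefore to establish $|\algname{Bisection}(A(\Pts))| = \bigO(m \log m)$. By \cref{ob:bisection}, this reduces to bounding the number of distinct substrings $A(\Pts)(a \dd b]$ over dyadic intervals $(a \dd b] \subseteq (0 \dd N]$ with $b - a > 1$. I will partition these by length $2^k = b - a$. First, for $k \in [1 \dd \log m]$, the interval lies inside a single row of the $m \times m$ matrix $M$ with $M[y,x] = A(\Pts)[(y-1)m + x]$. Revisiting the construction in \cref{lm:answer-string-grammar-size}, every such dyadic substring equals the substring associated with some internal node of the per-row perfect binary tree $\mathcal{T}_y$, and for each such node there is a nonterminal in $G_{\Pts}$ whose expansion equals that substring or its negation. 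Since $|V_{\Pts}| = \bigO(m \log m)$, the number of distinct within-row dyadic substrings is $\bigO(m \log m)$. Second, for $k \in (\log m \dd 2 \log m]$, alignment forces the interval to begin at a multiple of $m$ and span $2^{k - \log m}$ complete rows; the number of distinct such substrings for fixed $k$ is at most $N / 2^k$, and summing geometrically over $k$ gives $\bigO(m)$ additional substrings. Combining the two cases yields $\bigO(m \log m)$ distinct dyadic substrings, hence $|\algname{Bisection}(A(\Pts))| = \bigO(m \log m)$ by \cref{ob:bisection}.

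With the compression bound in hand, the contradiction is routine. Suppose the claimed data structure $D$ exists; let $D'$ be $D$ instantiated on $u = A(\Pts)$. Then $D'$ uses $\bigO(|\algname{Bisection}(A(\Pts))| \log^c N) = \bigO(m \log m \cdot \log^c(m^2)) = \bigO(m \log^{1+c} m)$ space and answers random access queries on $A(\Pts)$ in $o(\log N / \log \log N) = o(\log m / \log \log m)$ time. Since any parity range counting query on $\Pts$ at $(x, y) \in [1 \dd m]^2$ is literally the value $A(\Pts)[x + (y-1)m]$, $D'$ answers parity range counting queries on $\Pts$ within the forbidden space-time bound of \cref{th:parity-range-counting-lower-bound}, contradiction.

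The main obstacle is the first case of the counting argument: verifying that every dyadic substring of length in $[2 \dd m]$ inside one row is realized as the expansion (up to negation) of some nonterminal of $G_{\Pts}$. This is essentially bookkeeping in the construction of \cref{lm:answer-string-grammar-size}, but it is where all of the work lies, since the second case (multi-row dyadic intervals) is a trivial geometric sum. Everything else -- the padding to a power of two, the indexing correspondence between $(x,y)$ and the position $x + (y-1)m$, and the final contradiction -- is identical to the corresponding steps in \cref{th:global}.
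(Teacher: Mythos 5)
Your proposal is correct and follows essentially the same route as the paper: no grammar boosting is needed, you feed $A(\Pts)$ directly to \algname{Bisection}, bound the output size via \cref{ob:bisection} and the dyadic structure of the Verbin--Yu grammar $G_{\Pts}$, and close with \cref{th:parity-range-counting-lower-bound}. The one cosmetic difference is in the multi-row case: the paper observes that the $R_i$ nonterminals of $G_{\Pts}$ already expand to every dyadic substring spanning whole rows (so all dyadic intervals, not just within-row ones, are covered by nonterminals and the bound is immediate from $|G_{\Pts}|$), whereas you instead count those intervals directly with the geometric sum $\sum_{k > \log m} N/2^k = \bigO(m)$ -- both are fine and give the same $\bigO(m\log m)$ bound.
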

\begin{proof}
  Suppose that there exists a structure $D$ that for any $u \in
  \Sigma^n$ uses $\bigO(|\algname{Bisection}(u)| \log^c n)$ space
  (where $c = \bigO(1)$) and answers random access queries on $u$ in
  $o(\log n / \log \log n)$ time.  Let $\Pts \subseteq [1 \dd m]^2$ be
  any set of $|\Pts| = m$ points on an $m \times m$ grid. Assume for
  simplicity that $m$ is a power of two (otherwise, we pad $\Pts$ as
  in the proof of \cref{th:global}).
  Observe that then $m^2$ is a power of two too. By
  \cref{lm:answer-string-grammar-size}, there exists an admissible SLG
  $G_{\Pts} = (V_{\Pts}, \{0, 1\}, R_{\Pts}, S_{\Pts})$ such that
  $L(G_{\Pts}) = \{A(\Pts)\}$ is an answer string for $\Pts$
  (\cref{def:answer-string}), and it holds $|G_{\Pts}| = \bigO(m \log
  m)$. Observe that by the construction of $G_{\Pts}$ (see
  \cref{sec:overview-verbin-yu}), for every dyadic interval $(a \dd b]
  \subseteq (0 \dd m^2]$ satisfying $b - a > 1$, there exists a
  nonterminal $X_{a,b} \in V_{\Pts}$ satisfying
  $|\rhsgen{G_{\Pts}}{X_{a,b}}| = 2$ and $\expgen{G_{\Pts}}{X_{a,b}} =
  A(\Pts)(a \dd b]$. By \cref{ob:bisection}, this implies that
  $|\algname{Bisection}(A(\Pts))| \leq |G_{\Pts}| = \bigO(m \log m)$.
  Let $D'$ denote the structure $D$ for string $A(\Pts)$. It needs
  $\bigO(|\algname{Bisection}(A(\Pts))| \log^c |A(\Pts)|) = \bigO(m
  \log m \log^c (m^2)) = \bigO(m \log^{1 + c} m)$ space and implements
  random access to $A(\Pts)$ in $o(\log (m^2) / \log \log (m^2)) =
  o(\log m / \log \log m)$ time.  Given $D'$ and any $(x, y) \in [1
  \dd n]^2$, we can thus answer in $o(\log m / \log \log m)$ the
  parity range query on $\Pts$ with arguments $(x, y)$ by issuing a
  random access query on $A(\Pts)$ with position $j = x +
  (y-1)n$. Thus, the existence of $D'$ contradicts
  \cref{th:parity-range-counting-lower-bound}.
\end{proof}

\subsection{Analysis of \algname{LZ78}}\label{sec:lz78}

The upper bound for the random-access problem on LZ78-compressed text was
established in~\cite{DuttaLRR13}.

\begin{theorem}[Dutta, Levi, Ron, Rubinfeld~\cite{DuttaLRR13}]
  For any $u \in \Sigma^{*}$, let $\algname{LZ78}(u)$ denote the
  output of the \algname{LZ78} algorithm on $u$. There exits a data
  structure that, for any $u \in \Sigma^{n}$, uses
  $\bigO(|\algname{LZ78}(u)|)$ space and answers random access queries
  on $u$ in $\bigO(\log \log n)$ time.
\end{theorem}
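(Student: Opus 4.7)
The plan is to exploit the tree structure underlying the LZ78 factorization. Recall that \algname{LZ78} factorizes $u = f_1 f_2 \cdots f_z$ such that every phrase $f_i$ either belongs to $\Sigma$ or satisfies $f_i = f_{i'} \cdot c$ for some $i' < i$ and $c \in \Sigma$. This yields the \emph{LZ78 trie}: a rooted tree $\mathcal{T}$ on $z+1$ nodes in which every node corresponds to a phrase (with the root representing the empty phrase), the parent of the node for $f_i$ is the node for $f_{i'}$, and the label of the edge from parent to child is $c$. In particular, the depth of the node representing $f_i$ equals $|f_i|$, and the $k$-th character of $f_i$ is the label of the edge entering the depth-$k$ ancestor of that node.

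Answering a random access query $u[j]$ then decomposes into two subproblems: first, determine the phrase $f_i$ containing position $j$ and the offset $k = j - |f_1 \cdots f_{i-1}|$ inside $f_i$; second, extract the $k$-th character of $f_i$. For the first subproblem I would precompute the cumulative lengths $s_i = |f_1 \cdots f_{i}|$ (one integer per phrase) and build a predecessor structure over $\{s_i\}_{i \in [1 \dd z]}$. Using a $y$-fast trie on the universe $[1 \dd n]$ storing $z$ elements achieves $\bigO(z)$ space and $\bigO(\log \log n)$ query time. For the second subproblem I would augment $\mathcal{T}$ with a level-ancestor data structure; the classical Bender--Farach-Colton construction answers level-ancestor queries in $\bigO(1)$ time using $\bigO(z)$ space. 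Storing at each non-root node the character labeling its parent edge then lets us recover the desired symbol in $\bigO(1)$ once the level-ancestor query returns the depth-$k$ ancestor of the node for $f_i$.

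Putting the pieces together, the query proceeds as follows: a predecessor query on $j$ in $\{s_i\}$ yields the index $i$ with $s_{i-1} < j \leq s_i$ in $\bigO(\log \log n)$ time; we compute the offset $k = j - s_{i-1}$; a single $\bigO(1)$-time level-ancestor query at depth $k$ starting from the node for $f_i$ returns a node whose stored edge label is $u[j]$. The total query time is $\bigO(\log \log n)$ and the total space is $\bigO(z) = \bigO(|\algname{LZ78}(u)|)$. The main point that requires care is ensuring the predecessor structure actually uses $\bigO(z)$ words on a universe of size $n$; this is where one must invoke a $y$-fast trie (or an equivalent structure with the same space--time trade-off) rather than a plain van Emde Boas tree, whose space would scale with $n$. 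The level-ancestor component is standard and contributes no bottleneck to either the time or the space bound.
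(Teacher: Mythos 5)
Your proposal is correct and follows essentially the same route as the paper: store phrase boundaries in a near-linear-space predecessor structure to locate the phrase containing position $j$ in $\bigO(\log\log n)$ time, then answer within the phrase via a level-ancestor query on the LZ78 trie in $\bigO(1)$ time. The only cosmetic difference is the choice of predecessor structure (you cite a static $y$-fast trie where the paper invokes P\u{a}tra\c{s}cu--Thorup); both achieve $\bigO(z)$ space and $\bigO(\log\log n)$ queries.
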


The key idea of the above solution is as follows. We store boundaries
of all phrases in the LZ78 parsing in a predecessor data structure.
Since for a length-$n$ string, the number of phrases $z_{78}$
satisfies $z_{78} = \Omega(\sqrt{n})$, using~\cite{PatrascuT06}, we
obtain $\bigO(\log \log n)$ predecessor query time and $\bigO(z_{78})$
space.  We also store all LZ78 phrases in a trie augmented with the
support for level ancestor queries
(using~\cite{BenderF04,BerkmanV94,Dietz91}, we obtain linear space and
$\bigO(1)$ time). At query time, we first locate the phrase containing
the queried position, and then use the level ancestor query to obtain
the symbol.

We prove that the above structure is optimal. More precisely, we show
(more generally) that as long as the space of the structure is
near-linear in the size of the LZ78 parsing, i.e.,
$\bigO(|\algname{LZ78}(u)| \log^{c} n)$, where $c = \bigO(1)$, the
query time must be $\Omega(\log \log n)$.

\begin{lemma}\label{lm:lz78}
  Let $u = c_1^{p_1} c_2^{p_2} \dots c_k^{p_k}$, where for every $i
  \in [1 \dd k]$, it holds $p_i > 0$ and $c_i \in \{{\tt 0}, {\tt 1}\}$,
  and $|u| = \sum_{i=1}^{k} p_i = k^2$. The \algname{LZ78} algorithm
  factorizes $u$ into at most $6k$ phrases.
\end{lemma}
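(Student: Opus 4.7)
The plan is to partition the LZ78 phrases of $u$ into \emph{unary} phrases, whose string form is $c^j$ for a single character $c \in \{{\tt 0},{\tt 1}\}$, and \emph{mixed} phrases, which contain both characters, and to bound each class separately against the budget $|u| = k^2$.

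For the mixed phrases I would argue as follows. Any mixed phrase, viewed as a substring $u[a \dd b]$, must contain two adjacent positions $j,j+1 \in [a \dd b]$ with $u[j] \neq u[j+1]$, i.e., it straddles the boundary between two maximal equal-character runs of $u$. Since $u = c_1^{p_1} c_2^{p_2} \cdots c_k^{p_k}$ has at most $k-1$ such internal run boundaries and the LZ78 factorization partitions $u$ into disjoint phrases, each boundary lies in at most one phrase. Hence the number of mixed phrases is at most $k-1$.

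For the unary phrases, the key observation follows directly from the LZ78 rule $f_i = f_{i'} c$: if $c^m$ is ever produced as a phrase with $m \geq 2$, then $c^{m-1}$ must already be in the dictionary at that step. By induction, the set of unary $c$-phrases ever produced is exactly $\{c, c^2, \ldots, c^{M^{(c)}}\}$ for some $M^{(c)} \geq 0$, with no gaps. The phrase $c^j$ consumed exactly $j$ characters of $u$ when it was created, and different phrases consume disjoint portions of $u$, so
\[
  \sum_{c \in \{{\tt 0},{\tt 1}\}} \frac{M^{(c)}(M^{(c)}+1)}{2} \;\leq\; |u| \;=\; k^2.
\]
This gives $(M^{(0)})^2 + (M^{(1)})^2 \leq 2k^2$, whence $M^{(0)} + M^{(1)} \leq 2k$ by Cauchy--Schwarz (or the QM-AM inequality). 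Thus the total number of unary phrases is at most $2k$.

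Combining the two bounds yields at most $2k + (k-1) \leq 6k$ LZ78 phrases, as claimed (and in fact the argument delivers the sharper bound $3k-1$, so the $6k$ in the lemma leaves comfortable slack for boundary conventions such as a truncated final phrase). I do not anticipate a serious obstacle: the entire proof is a short counting argument. The only conceptual step worth spelling out carefully is the ``no gaps'' structure of the unary part of the LZ78 dictionary, which is what converts the linear $\bigO(k)$ budget for the total length of unary phrases (available because their total \emph{length} is quadratic in their count) into a linear bound on their \emph{number}.
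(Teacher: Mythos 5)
Your proof is correct and takes a genuinely different, and arguably cleaner, route than the paper's. The paper partitions the phrases into \emph{external} phrases (those whose interval, extended by one position to the right, is \emph{not} contained in a single block $(s_{i'-1} \dd s_{i'}]$) and \emph{internal} phrases (the rest), bounds the externals by $2k$ via an argument that no three external phrases can overlap the same block, and bounds internal type-$0$ and type-$1$ phrases by $2k-1$ each by proving strict monotonicity: the $j$th internal phrase of a given type has length at least $j$, which would force total length greater than $k^2$ if there were $2k$ of them. You instead split into \emph{mixed} phrases (which contain a run boundary of $u$, of which there are at most $k-1$, and each lies in at most one phrase since phrases are disjoint) and \emph{unary} phrases, and for the latter use the ``no gaps'' structure of the unary part of the dictionary to get $\sum_c M^{(c)}(M^{(c)}+1)/2 \leq k^2$ and hence $M^{(0)}+M^{(1)} \leq 2k$ via QM--AM. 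Both approaches exploit the same phenomenon (many unary phrases force a quadratic length budget), but your split is simpler than the shifted-by-one internal/external distinction, and it yields a sharper bound. Two minor points of care: (i) LZ78 allows the final phrase to repeat an earlier one, so you should count $M^{(0)}+M^{(1)}+1$ unary phrase occurrences, giving a total of at most $3k$ rather than your stated $3k-1$ — still comfortably $\leq 6k$; and (ii) your ``no gaps'' argument tacitly needs that if $c$ itself is never a phrase (which can happen when its first occurrence falls as the final symbol of a longer phrase), then no $c^m$ with $m\geq 2$ is a phrase either, so $M^{(c)}=0$; this does follow from the LZ78 rule $f_i = f_{i'}a$, but is worth stating.
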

\begin{proof}
  For every $i \in [1 \dd k]$, let $s_i = \sum_{j=1}^{i} p_j$. To
  streamline the formulae, we also set $s_0 = 0$.  Let $z_{78}$ denote
  the number of phrases in the LZ78 parsing of $u$, and let $(e_i)_{i
  \in [0 \dd z_{78}]}$ be a sequence such $e_0 = 0$ and for $i \in [1
  \dd z_{78}]$, $e_i$ is the last position of the $i$th leftmost
  phrase.  For any $i \in [1 \dd z_{78}]$, we call the $i$th phrase
  $u(e_{i-1} \dd e_{i}]$ \emph{internal}, if there exists $i' \in [1
  \dd k]$ such that $(e_{i-1} \dd e_{i} + 1] \subseteq (s_{i'-1} \dd
  s_{i'}]$. Otherwise, the phrase is \emph{external}. Note that every
  internal phrase is either a substring of $\texttt{0}^{\infty}$ or
  $\texttt{1}^{\infty}$. We call those internal phrases \emph{type-0}
  and \emph{type-1}, respectively. We bound the number of phrases of
  each type as follows:
  \begin{itemize}
  \item First, we show that the number of external phrases is at most
    $2k$. To this end, we show that for every $i \in [1 \dd k]$, the
    substring $u(s_{i-1} \dd s_i]$ it overlapped by at most two
    external phrases. Suppose that this does not hold, i.e., there
    exist $i_1, i_2, i_3 \in [1 \dd z_{78}]$ such that phrases
    $u(e_{i_1-1} \dd e_{i_1}]$, $u(e_{i_2-1} \dd e_{i_2}]$, and
    $u(e_{i_3-1} \dd e_{i_3}]$ are external and overlap $u(s_{i-1} \dd
    s_i]$. Assume without loss of generality that $i_1 < i_2 <
    i_3$. Note that then $e_{i_1-1} < e_{i_1} \leq e_{i_2-1} < e_{i_2}
    \leq e_{i_3-1} < e_{i_3}$.  Note also that our assumption about
    overlapping applied to phrases $u(e_{i_1-1} \dd e_{i_1}]$ and
    $u(e_{i_3-1} \dd e_{i_3}]$ then implies $s_{i-1} < e_{i_1}$ and
    $e_{i_3 - 1} + 1 \leq s_{i}$. By $e_{i_1} \leq e_{i_2-1}$ and
    $e_{i_2} \leq e_{i_3-1}$, we thus have $s_{i-1} < e_{i_2-1} <
    e_{i_2} + 1 \leq s_{i}$, which implies $(e_{i_2-1} \dd e_{i_2} +
    1] \subseteq (s_{i-1} \dd s_i]$.  By definition, the phrase
    $u(e_{i_2-1} \dd e_{i_2}]$ is therefore internal, a contradiction.
  \item Next, we show that the number of internal type-$0$ phrases is
    at most $2k - 1$. Denote their number by $q$. Let $\{i_1, i_2,
    \dots, i_q\} \subseteq [1 \dd z_{78}]$ be such that $i_1 < \dots <
    i_q$ and for every $j \in [1 \dd q]$, $u(e_{i_j-1} \dd e_{i_j}]$
    is a type-$0$ internal phrase. We show by induction that for every
    $j \in [1 \dd q]$, it holds $e_{i_j} - e_{i_j-1} \geq j$.  The
    induction base case holds trivially, since every phrase is a
    nonempty string. Let us now assume $j > 1$ and suppose $e_{i_j} -
    e_{i_j-1} = j' < j$.  Let $i' \in [1 \dd k]$ be such that
    $(e_{i_j-1} \dd e_{i_j} + 1] \subseteq (s_{i'-1} \dd s_{i'}]$
    (such $i'$ exists since $u(e_{i_j-1} \dd e_{i_j}]$ is an internal
    phrase). Note that then $u[e_{i_j} + 1] = \texttt{0}$.  By the
    inductive assumption, $\texttt{0}^{j-1}$ is a substring of
    $u(e_{i_{j-1}-1} \dd e_{i_{j-1}}]$.  Since for every phrase in the
    LZ78 parsing, all its proper prefixes must occur earlier as
    phrases, we thus obtain that there exists $i'' \in [1 \dd i_j)$
    such that $u(e_{i''-1} \dd e_{i''}] = \texttt{0}^{j'}$.  By
    $u[e_{i_j} + 1] = \texttt{0}$, this implies $e_{i_j} - e_{i_j-1} >
    j'$, since $u(e_{i_j-1} \dd e_{i_j}+1] = \texttt{0}^{j' + 1}$ is a
    valid candidate for the phrase, a contradiction. We have thus
    proved the inductive step. Suppose now that $q \geq 2k$.  The
    total length of internal type-0 phrases would then be $\geq
    \tfrac{2k(2k+1)}{2} = 2k^2 + k > |u|$.  Thus, $q < 2k$.
  \item Analogously as above, the number of internal type-1 phrases is
    at most $2k - 1$.
  \end{itemize}
  By the above, we thus obtain $z_{78} \leq 6k - 2 \leq 6k$.
\end{proof}

The following lower bound is a special case of the general tight
tradeoff for the colored predecessor problem established by
P\u{a}tra\c{s}cu and Thorup. In the colored predecessor problem, we
are given a collection of integers $Y = \{(x_1, c_1), \dots, (x_m,
c_m)\} \subseteq [0 \dd u) \times \{{\tt 0}, {\tt 1}\}$ (where $u$ is
the size of the universe), each augmented with a bit (a ``color''). We
assume that there exists $j \in [1 \dd m]$ such that $x_j = 0$. Given
any $x \in [0 \dd u)$, the query asks to return the color of its
predecessor in $\{x_1, \dots, x_m\}$, i.e., the value $c_i$, where $i
= \max \{j \in [1 \dd m] : x_j \leq x\}$.

\begin{theorem}[P\u{a}tra\c{s}cu and
    Thorup~\cite{PatrascuT06}]\label{th:predecessor-lower-bound}
  Let $m \geq 1$ and $Y = \{(x_1, c_1), \dots, (x_m, c_m)\} \subseteq
  [0 \dd m^2) \times \{{\tt 0}, {\tt 1}\}$ be such that $0 = x_1 <
  \dots < x_m$.  In the cell-probe model, every static data structure
  that for a set $Y$ uses $\bigO(m \log^{c} m)$ space $($where $c =
  \bigO(1))$, requires $\Omega(\log \log m)$ time to answer colored
  predecessor queries on $Y$.
\end{theorem}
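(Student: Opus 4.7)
The plan is to derive this bound via the asymmetric communication complexity and round-elimination framework of Miltersen, Nisan, Safra, and Wigderson, specialized to the $(n, u) = (m, m^{2})$ regime. I would start by converting a hypothetical cell-probe data structure using $S = \bigO(m \log^{c} m)$ cells of word size $w = \bigO(\log m)$ and answering queries in $t$ probes into a two-party protocol: Alice holds the query $x \in [0 \dd m^{2})$, Bob holds the cells of the structure. Each probe corresponds to one round in which Alice sends a cell address ($\bigO(\log S) = \bigO(\log m)$ bits) and Bob replies with its content ($w = \bigO(\log m)$ bits). A structure with $t = o(\log \log m)$ probe time therefore yields a $t$-round protocol of the same asymptotic complexity, with each side sending $\bigO(\log m)$ bits per round.

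The next step is to fix a hard distribution for which any low-round protocol must err. A natural choice is to encode $Y$ by a tree of depth $d = \Theta(\log \log m)$ with branching factor $B = m^{\Theta(1/d)}$ so that $B^{d} = m$, where the colored predecessor query corresponds to descending one level per round. I would then iteratively apply the round-elimination lemma: if Alice's first message is short relative to the amount of information about $x$ that is genuinely needed, one can simulate the first round by having Bob guess its content, with a controlled loss in success probability, obtaining a $(t-1)$-round protocol for a slightly smaller sub-instance sampled from the same distribution. After $t$ such eliminations the instance collapses to a single level, at which point no zero-round protocol can beat random guessing, contradicting correctness of the original structure. The fact that the colors are drawn from $\{\texttt{0}, \texttt{1}\}$ is essential here: it turns the output into a single bit and makes the base case of the induction an unconditional contradiction.

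To avoid redeveloping this machinery from scratch, I would check that the present regime fits squarely inside the Pătrașcu-Thorup tradeoff and obtain the statement as a black-box corollary. With $\log u = 2 \log m$, $w = \Theta(\log m)$, and $S = m \log^{\bigO(1)} m$, each term of their minimum evaluates to $\Omega(\log \log m)$, and the hypothesis that there exists at least one anchor element at $x_{1} = 0$ is harmless. Thus no new combinatorial content is needed beyond their general theorem applied to $n = m$ elements in a universe of size $m^{2}$.

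The principal obstacle is the round-elimination lemma itself, whose proof is a delicate entropy-compression argument that converts a short first message into a distribution on restricted inputs on which the protocol still succeeds with probability bounded away from $\tfrac{1}{2}$; choosing the parameters $(B, d)$ so that this compression fires exactly $\Theta(\log \log m)$ times before the surviving sub-instance becomes trivial is the technically subtle point. Once this lemma is in hand, the rest of the argument is a mechanical instantiation of known predecessor lower-bound machinery.
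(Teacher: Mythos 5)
The paper itself offers no proof of this statement: it cites it directly as a special case of the general Pătrașcu--Thorup predecessor tradeoff, and the only ``work'' done is implicitly instantiating their parameters at $n = m$, $\ell = 2\log m$, $w = \Theta(\log m)$, $S = m\,\polylog m$. Your third paragraph does exactly this, so in that sense your proposal matches the paper's approach.

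However, your first two paragraphs sketch a from-scratch argument via the MNSW-style round-elimination lemma, and there is a genuine gap there. Classical round elimination of the kind you describe --- shave one round per level of a depth-$d$ branching-$B$ tree, pay an information loss each time --- does not give the sharp $\Omega(\log\log m)$ in this regime; it bottoms out around $\Omega(\log\log m / \log\log\log m)$ (the Beame--Fich-type bound), because the per-round overhead of the elimination lemma accumulates a $\log\log\log$ factor in the denominator. Removing that factor and obtaining the clean $\Omega(\log\log u)$ for polynomial universe and near-linear space is precisely the technical contribution of Pătrașcu--Thorup, which uses a different ``cell-probe elimination'' or ``message compression'' machinery (eliminating a whole probe of $w$ bits, not just a single round's message, while keeping the subproblem's structure). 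So the phrase ``the rest of the argument is a mechanical instantiation of known predecessor lower-bound machinery'' undersells what is actually needed: the known MNSW-era machinery alone does not yield the theorem as stated. If you intend to treat the Pătrașcu--Thorup theorem as a black box, as the paper does and as your third paragraph suggests, the proposal is fine and the parameter arithmetic is correct; if you intend the round-elimination sketch as a genuine proof outline, it would fall short of $\Omega(\log\log m)$.
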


\begin{theorem}\label{th:lz78-lower-bound}
  For any $u \in \Sigma^{*}$, let $\algname{LZ78}(u)$ denote the
  output of the \algname{LZ78} algorithm on $u$. In the cell-probe
  model, every static data structure that for $u \in \Sigma^n$ uses
  $\bigO(|\algname{LZ78}(u)| \log^{c} n)$ space $($where $c =
  \bigO(1))$, requires $\Omega(\log \log n)$ time to answer random
  access queries on $u$.
\end{theorem}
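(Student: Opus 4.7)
The plan is a reduction from the colored predecessor problem of \cref{th:predecessor-lower-bound}. Given an arbitrary instance $Y = \{(x_1, c_1), \dots, (x_m, c_m)\} \subseteq [0 \dd m^2) \times \{{\tt 0}, {\tt 1}\}$ with $0 = x_1 < x_2 < \dots < x_m$, I will build a binary string $u \in \{{\tt 0}, {\tt 1}\}^{m^2}$ by setting, for every $i \in [1 \dd m]$ and every $j \in [x_i \dd x_{i+1})$ (with the convention $x_{m+1} := m^2$), $u[j+1] := c_i$. Equivalently, $u = c_1^{p_1} c_2^{p_2} \cdots c_m^{p_m}$ with $p_i = x_{i+1} - x_i > 0$. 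The key observation is that a random access query to position $j + 1$ of $u$ returns exactly the color of the predecessor of $j$ in $Y$, so any random access structure on $u$ answers colored predecessor queries on $Y$ with no overhead.

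Next, I bound $|\algname{LZ78}(u)|$ in terms of $m$. After collapsing maximal equal-character runs, $u$ has the form $d_1^{q_1} \cdots d_k^{q_k}$ with $d_i \neq d_{i+1}$, $k \leq m$, and $|u| = m^2$. Inspecting the proof of \cref{lm:lz78}, the bound on external phrases depends only on $k$ (at most $2k$), while each class of internal phrases contributes $\bigO(\sqrt{|u|})$ phrases, since the $j$th internal type-$c$ phrase has length at least $j$ and the sum of these lengths is at most $|u|$. Summing the contributions yields $z_{78}(u) = \bigO(k + \sqrt{|u|}) = \bigO(m)$, and hence $|\algname{LZ78}(u)| = 3 z_{78}(u) = \bigO(m)$.

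I then finish by contradiction, following the template of \cref{th:global}. Suppose there were a data structure $D$ using $\bigO(|\algname{LZ78}(u)| \log^{c} |u|)$ space and answering random access queries on every binary string $u$ in $o(\log \log |u|)$ time. Instantiating $D$ on the string built above (so $n = |u| = m^2$) gives a structure that uses $\bigO(m \log^{c}(m^2)) = \bigO(m \log^{c} m)$ space and answers queries in $o(\log \log(m^2)) = o(\log \log m)$ time. Composed with the above reduction, this yields a colored predecessor data structure on $Y$ with the same space and time bounds, contradicting \cref{th:predecessor-lower-bound}.

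The main technical subtlety is that \cref{lm:lz78} is stated for strings satisfying $|u| = k^{2}$ exactly, whereas the reduction produces $u$ with $|u| = m^{2} \geq k^{2}$ (with equality iff the colors in $Y$ alternate). I expect this to be handled by simply replaying the counting in the proof of \cref{lm:lz78} with $|u| = m^{2}$ substituted for $k^{2}$, since that argument only uses $|u|$ as an upper bound on the total length of internal phrases of a given type; alternatively, one can observe that \cref{th:predecessor-lower-bound} already applies to instances $Y$ whose colors alternate, in which case $k = m$ and \cref{lm:lz78} applies verbatim.
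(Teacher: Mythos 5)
Your proof is correct and follows essentially the same route as the paper: construct $w_Y = c_1^{x_2-x_1}\cdots c_m^{m^2-x_m}$, read off colored predecessor answers via random access, bound $|\algname{LZ78}(w_Y)|$ via \cref{lm:lz78}, and contradict \cref{th:predecessor-lower-bound}.

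The ``subtlety'' you raise at the end is self-created. \cref{lm:lz78} does not require the consecutive $c_i$ to be distinct, so you can apply it directly to the decomposition $c_1^{p_1}\cdots c_m^{p_m}$ (with $k=m$ and $|u|=m^2$ exactly) without ever collapsing equal-character runs; this is what the paper does, and it removes the mismatch between $k^2$ and $m^2$ entirely. Your first workaround (replaying the counting with $m^2$ in place of $k^2$) is sound and amounts to re-deriving the lemma for $k=m$; the second (restricting \cref{th:predecessor-lower-bound} to alternating-color instances) would need a separate argument that the colored predecessor lower bound survives that restriction, which you do not supply, so it is best to drop that branch and simply invoke \cref{lm:lz78} on the uncollapsed decomposition.
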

\begin{proof}
  Suppose that there exists a data structure $D$ that for any $u \in
  \Sigma^n$ uses $\bigO(|\algname{LZ78}(u)| \log^c n)$ space (where $c
  = \bigO(1)$) and answers random access queries on $u$ in $o(\log
  \log n)$ time. Let $m \geq 1$ and $Y = \{(x_1, c_1), \dots, (x_m,
  c_m)\} \subseteq [0 \dd m^2) \times \{\texttt{0}, \texttt{1}\}$ be
  such that $0 = x_1 < \dots < x_m$.  Let
  \[
    w_Y = c_1^{x_2-x_1} \cdot c_2^{x_3-x_2} \cdot \ldots \cdot
    c_{m-1}^{x_m - x_{m-1}} \cdot c_m^{m^2 - x_{m}}.
  \]
  Observe that for every $x \in [0 \dd m^2)$, it holds $w_Y[x] =
  c_{i}$, where $i = \max \{j \in [1 \dd m] : x_j \leq x\}$.  In other
  words, $w_Y[x]$ is the answer to the colored predecessor problem on
  $Y$ with argument $x$ (see~\cite{PatrascuT06}). By $|w_Y| = m^2$ and
  \cref{lm:lz78}, it holds $|\algname{LZ78}(w_Y)| \leq 6m$. Let $D'$
  denote the structure $D$ for string $w_Y$. It needs
  $\bigO(|\algname{LZ78}(w_Y)| \log^c |w_Y|) = \bigO(m \log^c (m^2)) =
  \bigO(m \log^c m)$ space, and implements random access queries to
  $w_Y$ in $o(\log \log (m^2)) = o(\log \log m)$ time. Given $D'$ and
  any $x \in [0 \dd m^2)$, we can thus answer in $o(\log \log m)$ time
  a colored predecessor query on $Y$ with argument $x$ by issuing a
  random access query on $w_Y$ with position $x$. Thus, the existence
  of $D'$ contradicts \cref{th:predecessor-lower-bound}.
\end{proof}

\section{Parsing Context-Free Grammars}\label{sec:cfg-parsing}

\subsection{Problem Definition}\label{sec:cfg-problem}

\vspace{-2ex}
\setlength{\FrameSep}{1.7ex}
\begin{framed}
  \noindent
  \textsc{Context-Free Grammar (CFG) Parsing}\\
  \textbf{Input}: A string $u \in \Sigma^n$ and a context-free grammar
  $\Gamma$.\\
  \textbf{Output:} Decide, whether $u \in L(\Gamma)$, i.e., whether
  $u$ is in the language of $\Gamma$.
\end{framed}

\subsection{Prior Work}\label{sec:cfg-prior}

Abboud, Backurs, Bringmann, and K{\"{u}}nnemann
developed a new technique for proving the conditional hardness of
CFG parsing on grammar-compressed strings, establishing the following
result.

\begin{theorem}[\cite{AbboudBBK17}]\label{th:cfg-main-lower-bound}
  Let $\delta \in (0, 1]$. Assuming the $k$-Clique Conjecture
  (resp.\ Combinatorial $k$-Clique Conjecture), there is no algorithm
  (resp.\ combinatorial algorithm) that, given $u \in \Sigma^{n}$ and
  a CFG $\Gamma$ of size $|\Gamma| = \bigO(\polylog n)$ satisfying
  $|\Sigma| = \bigO(1)$ and $g^{*}(u) = \bigO(n^{\delta})$,
  determines if $u \in L(\Gamma)$ in $\bigO(n^{\omega-\epsilon})$
  $($resp.\ $\bigO(n^{3 - \epsilon})$$)$ time, for any $\epsilon > 0$.
\end{theorem}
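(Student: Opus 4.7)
The plan is to prove this via a reduction from the $3k$-clique detection problem, for $k \geq 3$ chosen large enough depending on $\delta$ and $\epsilon$. Given an undirected graph $G = (V,E)$, I will construct in polynomial time a binary string $S$ of length $n = \Theta(|V|^{k+2})$ together with a CFG $\Gamma$ of size $|\Gamma| = \bigO(\polylog n)$ over a constant-size alphabet, such that (i) $g^{*}(S) = \bigO(|V|^{3})$ and (ii) $S \in L(\Gamma)$ iff $G$ contains a $3k$-clique. A hypothetical parsing algorithm with running time $\bigO(n^{\omega - \epsilon})$ would then decide $3k$-clique in time $\bigO(|V|^{(k+2)(\omega - \epsilon)})$; for $k$ large enough that $(k+2)(\omega - \epsilon) < k\omega$, this contradicts the $k$-Clique Conjecture (which, applied to $3k$-cliques, rules out time $\bigO(|V|^{k\omega(1 - \epsilon')})$ for every $\epsilon' > 0$). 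The combinatorial case is entirely analogous, with $\omega$ replaced by $3$ throughout.

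To construct $S$ and $\Gamma$, I would adopt the standard tripartite view of clique: partition $V$ into three equal-sized subsets $V_{1}, V_{2}, V_{3}$, so that a $3k$-clique corresponds to choosing $k$-tuples $T_{i} \subseteq V_{i}$ that are pairwise ``compatible'' under the adjacency relation of $G$. The string $S$ is then organized into a tensor-product-style layout whose bits encode, for every candidate triple of $k$-tuples, the appropriate entries of the adjacency matrix of $G$, interleaved with structural separators. The CFG $\Gamma$ encodes only the quantifier skeleton of the verification (``there exist $T_{1}, T_{2}, T_{3}$ such that all intra- and inter-partition edges are present''), using $\bigO(\log |V|)$ layers of nonterminals to drive the existential choices in a divide-and-conquer fashion, so that $|\Gamma| = \bigO(\polylog n)$ and is independent of $G$. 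Because each bit of $S$ is determined by a single entry of the adjacency matrix read in a predictable order, $S$ admits a recursive grammar of size $\bigO(|V|^{3}) = \bigO(n^{3/(k+2)})$: roughly $\bigO(|V|^{2})$ nonterminals for the matrix data, plus $\bigO(|V|)$ nonterminals per layer to assemble the tensor structure.

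For the parameter choice, I would select $k$ large enough to satisfy both $3/(k+2) \leq \delta$ (so that $g^{*}(S) = \bigO(n^{\delta})$) and $(k+2)\epsilon > 2\omega$ (so that the exponent gap suffices to contradict the $k$-Clique Conjecture for every sufficiently small $\epsilon' > 0$); both are finite constraints linear in $1/\epsilon$ and $1/\delta$, so a valid $k = k(\delta, \epsilon)$ always exists, and the entire reduction runs in time polynomial in $|V|$. The main obstacle will be the CFG $\Gamma$ itself: producing a grammar of size only $\bigO(\polylog n)$ that faithfully enforces the $3k$-clique condition, given that context-free rules cannot express non-local equality of symbols directly. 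This forces the tensor-product layout of $S$ to be chosen so carefully that each existential quantifier in the verification becomes a single production of $\Gamma$ nested at the appropriate level, and verifying that no spurious parse exists in the ``no'' direction is the most delicate part of the argument.
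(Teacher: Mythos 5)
Your proposal matches the reduction-from-$3k$-clique template of \cite{AbboudBBK17} (the cited source), and the same template the paper reuses to prove \cref{th:cfg-lower-bound-simple-generalization}: encode $3k$-clique into a string $S$ of length $\Theta(|V|^{k+2})$ with $g^*(S) = \bigO(|V|^3)$ and a CFG $\Gamma$ of size $\bigO(\polylog n)$, then pick $k$ so that $3/(k+2) \leq \delta$ and $(k+2)(\omega-\epsilon) < k\omega$. The tensor-product/tripartite construction of $S$ and $\Gamma$ you sketch, and the ``no spurious parse'' argument you correctly flag as the delicate step, are exactly what is packaged as \cref{lm:kclique-small-slg}, which the paper (like you) takes as a black box from \cite{AbboudBBK17}.
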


In other words, assuming the $k$-Clique Conjecture
(resp.\ Combinatorial $k$-Clique Conjecture), the CFG parsing for any
length-$n$ string $u$ cannot be performed in
$\bigO(n^{\omega-\epsilon})$ (resp.\ $\bigO(n^{3-\epsilon})$) time,
even restricted to highly compressible $u$. This implies the following
result.

\begin{corollary}[\cite{AbboudBBK17}]
  Assuming the $k$-Clique Conjecture (resp.\ Combinatorial $k$-Clique
  Conjecture), there is no algorithm (resp.\ combinatorial algorithm)
  that, given any SLG $G$ and a CFG $\Gamma$ such that $L(G) = \{u\}$
  for some $u \in \Sigma^n$ $($where $|\Sigma| = \bigO(1)$$)$ and
  $|\Gamma| = \bigO(\polylog n)$, determines if $u \in L(\Gamma)$ in
  $\bigO(\poly(|G|) \cdot n^{\omega-\epsilon})$
  $($resp.\ $\bigO(\poly(|G|) \cdot n^{3 - \epsilon})$$)$ time, for
  any $\epsilon > 0$.
\end{corollary}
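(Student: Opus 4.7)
The plan is to derive the corollary by reducing an arbitrary instance of the CFG parsing problem (as in \cref{th:cfg-main-lower-bound}) to the compressed version stated in the corollary, using a known polynomial-time approximation algorithm for the smallest grammar problem.

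More concretely, suppose toward a contradiction that an algorithm $A$ exists which, given any SLG $G$ and CFG $\Gamma$ with $L(G)=\{u\}$, $u\in\Sigma^n$, $|\Sigma|=\bigO(1)$, and $|\Gamma|=\bigO(\polylog n)$, decides $u\in L(\Gamma)$ in time $\bigO(|G|^{c}\cdot n^{\omega-\epsilon})$ for some constants $c=\bigO(1)$ and $\epsilon>0$ (the combinatorial case is treated identically with $n^{3-\epsilon}$ in place of $n^{\omega-\epsilon}$). Given an instance $(u,\Gamma)$ of the problem in \cref{th:cfg-main-lower-bound} with $g^{*}(u)=\bigO(n^{\delta})$ for a value of $\delta\in(0,1]$ to be fixed, I would first run in polynomial time one of the $\bigO(\log(n/g^{*}))$-approximation algorithms for the smallest grammar problem~\cite{Rytter03,Charikar05,Jez16} on $u$ to obtain an SLG $G$ with $L(G)=\{u\}$ and $|G|=\bigO(g^{*}(u)\cdot\log n)=\bigO(n^{\delta}\log n)$. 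Then I would feed $(G,\Gamma)$ to $A$; the hypotheses $|\Sigma|=\bigO(1)$ and $|\Gamma|=\bigO(\polylog n)$ are inherited from the original instance, so $A$ is applicable.

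The total running time of this procedure is the polynomial preprocessing plus
\[
\bigO(|G|^{c}\cdot n^{\omega-\epsilon})
=\bigO\bigl((n^{\delta}\log n)^{c}\cdot n^{\omega-\epsilon}\bigr)
=\bigO\bigl(n^{c\delta+\omega-\epsilon}\cdot\polylog n\bigr).
\]
Choosing $\delta\in(0,1]$ small enough so that $c\delta<\epsilon/2$ (which is possible since $c$ and $\epsilon$ are fixed constants), this becomes $\bigO(n^{\omega-\epsilon/2}\cdot\polylog n)=\bigO(n^{\omega-\epsilon'})$ for some $\epsilon'>0$. This contradicts \cref{th:cfg-main-lower-bound} applied to the chosen $\delta$, which is the desired contradiction. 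The combinatorial case is entirely analogous, replacing $\omega$ with $3$ throughout.

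There is no substantive obstacle here: the statement is essentially a translation of \cref{th:cfg-main-lower-bound} from the ``smallest grammar parameter'' $g^{*}(u)$ to the ``input grammar size'' $|G|$, and the translation goes through because (i) any polynomial-time grammar approximation produces an SLG of size within a $\polylog n$ factor of $g^{*}(u)$, and (ii) the polynomial dependence $\poly(|G|)$ in the corollary can absorb this extra factor as long as $\delta$ is taken small enough to keep $|G|^{c}$ well below $n^{\epsilon}$. The only minor care point is to verify that the preprocessing time (polynomial in $n$) does not itself exceed $\bigO(n^{\omega-\epsilon'})$, but since the approximation algorithms in~\cite{Rytter03,Charikar05,Jez16} run in, e.g., $\bigO(n\polylog n)$ time, this is immediate.
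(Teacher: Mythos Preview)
Your proposal is correct and follows essentially the same approach as the paper: assume a $\bigO(|G|^c \cdot n^{\omega-\epsilon})$ algorithm, use an $\bigO(\log n)$-approximation grammar compressor to produce an SLG of size $\bigO(n^\delta \log n)$, and choose $\delta$ small enough (the paper takes $\delta = \epsilon/(3c)$) to contradict \cref{th:cfg-main-lower-bound}. The only cosmetic difference is that the paper fixes $\delta$ explicitly rather than stating ``choose $\delta$ small enough,'' but the argument is the same.
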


The key idea in the proof of the above result is as follows.
Suppose that the algorithm in
question exists and runs in $\bigO(|G|^c \cdot n^{\omega - \epsilon})$
(resp.\ $\bigO(|G|^c \cdot n^{3 - \epsilon})$) time, where $\epsilon >
0$ and $c > 0$. Let $\delta = \tfrac{\epsilon}{3c}$. Consider any $u
\in \Sigma^{n}$ such that $|\Sigma| = \bigO(1)$ and $g^{*}(u) =
\bigO(n^{\delta})$, and some CFG $\Gamma$ satisfying $|\Gamma| =
\bigO(\polylog n)$. Run the following algorithm:

\begin{enumerate}
\item First, using any grammar compression algorithm with an
  approximation ratio of $\bigO(\log n)$ (such
  as~\cite{Rytter03,Charikar05,Jez16}), compute an SLG $G$ such that
  $L(G) = \{u\}$ and $|G| = \bigO(g^{*}(u) \log n) =
  \bigO(n^{\delta} \log n)$. Using for example~\cite{Jez16}, this
  takes $\bigO(n)$ time.
\item Second, run the above hypothetical algorithm algorithm for CFG
  parsing. This takes $\bigO(|G|^c \cdot n^{\omega-\epsilon}) =
  \bigO((n^{\delta} \log n)^c \cdot n^{\omega-\epsilon}) =
  \bigO(n^{\omega - 2\epsilon/3} \log^{c} n) = \bigO(n^{\omega -
  \epsilon/3})$ (resp. $\bigO(|G|^{c} \cdot n^{3-\epsilon}) =
  \bigO(n^{3-\epsilon/3})$) time.
\end{enumerate}
In total, we have thus spent $\bigO(n^{\omega - \epsilon/3})$ (resp.\
$\bigO(n^{3-\epsilon/3})$) time checking if $u \in L(\Gamma)$. By
\cref{th:cfg-main-lower-bound}, this implies that the $k$-Clique
Conjecture (resp.\ Combinatorial $k$-Clique Conjecture) is false.

\subsection{The Main Challenge}\label{sec:cfg-challenge}

The above result shows that it is hard to solve CFG parsing in
compressed time assuming the input has been constructed using a
grammar compression algorithm with a small approximation
factor.

This raises the question about the role of the approximation ratio in
the hardness. For the majority of the known grammar compressors, their
approximation ratio is either $\omega(\polylog n)$ or
unknown~\cite{repair,greedy1,longestmatch,bisection1,sequential,sequitur,LZ78}.
In such scenario, the techniques of Abboud, Backurs, Bringmann, and
K{\"{u}}nnemann give weaker lower bounds than above. To illustrate
the problem, let us redo the above analysis for an algorithm with
the approximation factor as a parameter. We first first recall the
crucial technical lemma proved by Abboud, Backurs, Bringmann,
and K{\"{u}}nnemann.

\begin{lemma}[\cite{AbboudBBK17}]\label{lm:kclique-small-slg}
  Let $k \geq 1$ be a constant. For every undirected graph $G = (V,
  E)$, there exists a string $u \in \Sigma^{*}$ of length $|u| =
  \Theta(|V|^{k+2})$ over alphabet $\Sigma = \{{\tt 0}, {\tt 1}, {\rm
  \#}, {\rm \$}, x, y, z\}$, and a CFG $\Gamma$ of size $|\Gamma| =
  \bigO(\log |V|)$ such that $u \in L(\Gamma)$ holds if and only if
  $G$ has a $3k$-clique. Moreover, given $G$, in $\bigO(|V|^3)$ time
  we can compute $\Gamma$ and an SLG $H = (V_H, \Sigma, R_H, S_H)$
  such that $L(H) = \{u\}$, $|H| = \bigO(|V|^3)$, and
  $\sum_{X \in V_H}|\expgen{H}{X}| = \bigO(|V|^{k+5})$.
\end{lemma}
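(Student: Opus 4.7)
The plan is to adapt (essentially, recall) the list-of-triples construction from Abboud, Backurs, Bringmann, and K{\"u}nnemann, while keeping careful track of the SLG parameters promised by the statement. The high-level idea is to encode the decision problem ``does $G$ have a $3k$-clique?'' as CFG membership on a long, highly structured string. Split the $3k$ vertices of a putative clique into three groups $A$, $B$, $C$ of $k$ vertices each, and fix a binary encoding $\langle v\rangle\in\{{\tt 0},{\tt 1}\}^{\lceil\log|V|\rceil}$ of each vertex of $G$. The string $u$ will consist of three successive ``list blocks,'' one per group: the $A$-block enumerates all $|V|^k$ ordered $k$-tuples from $V$, each padded with fixed delimiters $\hash,\dol$ and separated from its neighbors by the marker letters $x,y,z$; the $B$- and $C$-blocks do the same. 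Interleaved with these lists is a fixed ``edge gadget'' of size $\Theta(|V|^2)$ encoding the adjacency matrix of $G$ as $\{{\tt 0},{\tt 1}\}$-symbols. A short calculation shows $|u|=\Theta(|V|^{k+2})$, because one factor of $|V|^k$ comes from a single $k$-tuple list and the remaining $|V|^2$ factor comes from edge gadgets attached per list position.

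First I would design the CFG $\Gamma$. Its starting productions nondeterministically select one position from each of the three lists (this part takes $O(\log|V|)$ rules because a single recursive nonterminal, of height $O(\log|V|)$ in its parse tree, can navigate to any $k$-tuple using the doubling trick over the list indices). Given these three guessed positions, $\Gamma$ must enforce that the $\binom{3k}{2}=O(1)$ pairs of selected vertices correspond to edges in $G$. This is done by a constant number of additional sub-grammars, each of size $O(\log|V|)$, which read a guessed vertex pair and check consistency against the appropriate cell of the edge gadget (again using a doubling navigation through the adjacency matrix). Because $k$ is constant, only $O(1)$ such sub-grammars are needed, and hence $|\Gamma|=O(\log|V|)$. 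By construction, $u\in L(\Gamma)$ iff there is a choice of three $k$-tuples whose union induces a $3k$-clique in $G$.

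Next I would build the SLG $H$. Each list block of $u$ is a Cartesian-product-like string: enumerating $|V|^k$ tuples over an alphabet of size $|V|$ can be realized by $O(k\log|V|)=O(\log|V|)$ doubling rules (producing strings of length $|V|,|V|^2,\ldots,|V|^k$), plus $O(|V|)$ leaf rules to spell out vertex encodings, plus $O(\log|V|)$ rules for the delimiters and markers. The edge gadget has size $\Theta(|V|^2)$ and must be spelled out literally because the adjacency matrix is arbitrary, contributing $O(|V|^2)$ productions. Putting these components together yields $|H|=O(|V|^2+|V|\log|V|)=O(|V|^3)$, and the construction is clearly doable in $O(|V|^3)$ time. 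For the expansion-sum bound, observe that $H$ has $|V_H|=O(|V|^3)$ nonterminals and each expansion has length at most $|u|=O(|V|^{k+2})$, so $\sum_{X\in V_H}|\expgen{H}{X}|=O(|V|^{k+5})$ follows trivially; a sharper analysis by grammar height would give $O(|V|^{k+2}\cdot k\log|V|)$, but the stated bound already suffices.

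The main obstacle I anticipate is keeping $|\Gamma|=O(\log|V|)$ while simultaneously checking \emph{all} $\binom{3k}{2}$ edges of the candidate clique. Naively one would write out a separate sub-grammar per pair, giving $O(k^2\log|V|)$ size, which is still $O(\log|V|)$ since $k$ is a constant, but one has to be careful that the edge-checking sub-grammars do not conflict with the list-navigation productions or force position-tracking that blows the grammar up by a polynomial factor. The standard way around this is to let the CFG's nondeterminism ``guess'' copies of the chosen vertex encodings inside the edge gadget and then verify local consistency only, deferring pair-by-pair checks to parallel copies of a single reusable ``edge-lookup'' sub-grammar; this is exactly the design pattern used in~\cite{AbboudBBK17}, and I would appeal to it (rather than redevelop it) to finish the proof.
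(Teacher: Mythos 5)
The lemma is a citation of \cite{AbboudBBK17}; the paper you were given does not contain its proof, so I am assessing your reconstruction on its own merits. Your high-level picture (three groups of $k$ vertices, enumerate $|V|^k$ tuples per group, a $\Theta(|V|^2)$ edge gadget per list position, CFG nondeterminism to select and verify, SLG compression exploiting the repetitive structure) is the right family of ideas, and the arithmetic $|u| = |V|^k \cdot |V|^2 = \Theta(|V|^{k+2})$ and the trivial bound $\sum_X |\exp{X}| \leq |V_H| \cdot |u| = O(|V|^{k+5})$ are fine.

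However, the SLG size argument contains a genuine gap as written. You assert that ``enumerating $|V|^k$ tuples over an alphabet of size $|V|$ can be realized by $O(k\log|V|)$ doubling rules.'' That is not true for the \emph{flat} enumeration $\bigodot_{(a_1,\dots,a_k)} \mathrm{enc}(a_1)\cdots\mathrm{enc}(a_k)\cdot E$ that your description suggests: that string contains $\Omega(|V|^k)$ distinct factors of length $\Theta(k\log|V|)$ (the tuple encodings themselves, written in counter order), so by the relationship between substring complexity and grammar size, \emph{every} SLG for it has size $\Omega(|V|^k/(k\log|V|))$, which exceeds $O(|V|^3)$ as soon as $k\geq 4$. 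No amount of doubling rules can fix a flat enumeration. What actually makes the \cite{AbboudBBK17} string compressible is a \emph{nested} layout: set $B_1 = \bigodot_{a\in V}\mathrm{enc}(a)\cdot E$ and recursively $B_{j+1} = \bigodot_{a\in V}\mathrm{enc}(a)\cdot B_j$. At each level the right-hand side references a single shared nonterminal $B_j$, so the nesting contributes only $O(k|V|)$ rule mass (not $O(k\log|V|)$, but still tiny); together with $O(|V|\log|V|)$ for the vertex-encoding leaves and $O(|V|^2)$ for the one fixed edge gadget, the grammar size is $O(|V|^2)\subseteq O(|V|^3)$, and the length recurrence $|B_{j+1}| = |V|\cdot(\Theta(\log|V|)+|B_j|)$ yields $|B_k|=\Theta(|V|^{k+2})$. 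The CFG then ``reads'' a tuple by descending through the nesting and matching each $\mathrm{enc}(a_j)$ against a reversed copy elsewhere via palindrome-style rules, which is what lets $|\Gamma|=O(k^2\log|V|)=O(\log|V|)$. The nested-versus-flat distinction is the load-bearing point: without it, the string simply does not have a polynomially bounded grammar, and the rest of your size budget is irrelevant. You should make this nested structure explicit and correct the rule count from $O(k\log|V|)$ to $O(k|V|+|V|\log|V|)$.
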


Using the above lemma, we can prove the following result.

\begin{theorem}[Based
    on~\cite{AbboudBBK17}]\label{th:cfg-lower-bound-simple-generalization}
  Consider a grammar compression algorithm \algname{Alg} with an
  approximation ratio $\bigO(n^{\alpha})$ (where $\alpha \in (0, 1)$).
  Let $\delta \in (\alpha, 1]$.
  Assuming the $k$-Clique Conjecture (resp.\ Combinatorial
  $k$-Clique Conjecture), there is no algorithm (resp.\ combinatorial
  algorithm) that, given $u \in \Sigma^{n}$ and a CFG $\Gamma$
  satisfying $|\Gamma| = \bigO(\polylog n)$, $|\Sigma| = \bigO(1)$,
  and $|\algname{Alg}(u)| = \bigO(n^{\delta})$, determines if $u \in
  L(\Gamma)$ in $\bigO(n^{\omega-\epsilon})$ $($resp.\ $\bigO(n^{3 -
  \epsilon})$$)$ time, for any $\epsilon > 0$.
\end{theorem}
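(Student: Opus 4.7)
The plan is to mirror the derivation of \cref{th:cfg-lower-bound-simple-generalization} sketched in \cref{sec:cfg-prior}, but to push all the slack into the parameter $k$ of \cref{lm:kclique-small-slg} instead of relying on an $\bigO(\log n)$-approximate compressor as a preprocessing step. Suppose for contradiction that, for some $\epsilon>0$, there exists an algorithm (respectively, combinatorial algorithm) $\mathcal{A}$ solving compressed CFG parsing in $\bigO(n^{\omega-\epsilon})$ time (respectively, $\bigO(n^{3-\epsilon})$ time) on all inputs $(u,\Gamma)$ with $|\Gamma|=\bigO(\polylog n)$, $|\Sigma|=\bigO(1)$, and $|\algname{Alg}(u)|=\bigO(n^{\delta})$. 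Given an undirected graph $G=(V,E)$, I would pick a sufficiently large integer constant $k$ (to be chosen at the end) and invoke \cref{lm:kclique-small-slg} to build, in $\bigO(|V|^3)$ time, a string $u$ over a constant-sized alphabet of length $n=\Theta(|V|^{k+2})$, a CFG $\Gamma$ of size $\bigO(\log|V|)=\bigO(\polylog n)$, and (conceptually) an SLG $H$ of size $\bigO(|V|^3)$ with $L(H)=\{u\}$, such that $G$ contains a $3k$-clique iff $u\in L(\Gamma)$.

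The next step is to certify that the pair $(u,\Gamma)$ satisfies the input restrictions of $\mathcal{A}$. From $g^{*}(u)\leq|H|=\bigO(|V|^3)=\bigO(n^{3/(k+2)})$ and the hypothesis that \algname{Alg} has approximation ratio $\bigO(n^{\alpha})$, it follows that $|\algname{Alg}(u)|=\bigO(n^{\alpha+3/(k+2)})$. Since $\delta-\alpha>0$, taking $k$ with $k+2\geq 3/(\delta-\alpha)$ yields $|\algname{Alg}(u)|=\bigO(n^{\delta})$, as required. I can then feed $(u,\Gamma)$ directly to $\mathcal{A}$; note that $\algname{Alg}(u)$ never needs to be computed, since it only appears as a condition on $u$ in the theorem's hypothesis. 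The overall runtime of this $3k$-clique detector is therefore $\bigO(|V|^3)+\bigO(n^{\omega-\epsilon})=\bigO(|V|^{(k+2)(\omega-\epsilon)})$ in the general case, and analogously $\bigO(|V|^{(k+2)(3-\epsilon)})$ in the combinatorial case.

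To obtain a contradiction with the $k$-Clique Conjecture applied to $3k$-cliques, I need $(k+2)(\omega-\epsilon)<k\omega(1-\epsilon')$ for some $\epsilon'>0$; rearranging gives $k\omega\epsilon'<k\epsilon-2\omega+2\epsilon$, which is solvable in $\epsilon'>0$ as soon as $k>2\omega/\epsilon-2$. In the combinatorial case, the analogous inequality $(k+2)(3-\epsilon)<3k(1-\epsilon')$ is solvable once $k>6/\epsilon-2$. Consequently, choosing any integer $k$ exceeding both $3/(\delta-\alpha)-2$ and the relevant threshold for $\epsilon$ yields a contradiction with the appropriate $k$-Clique Conjecture.

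The main obstacle here is conceptual rather than technical: one has to make sure that the approximation loss of $\bigO(n^{\alpha})$ can always be absorbed into $k$. The decisive observation is that \cref{lm:kclique-small-slg} produces a very compressible string whose smallest-grammar exponent $3/(k+2)$ tends to zero as $k\to\infty$, whereas $\alpha$ is a fixed constant strictly below $\delta$; so regardless of how poor the approximation ratio is, taking $k$ large enough leaves a positive gap $\delta-\alpha-3/(k+2)\geq 0$, and simultaneously makes the $3k$-clique runtime $(k+2)(\omega-\epsilon)$ (respectively $(k+2)(3-\epsilon)$) strictly subquadratic in the $k$-Clique exponent. Beyond this choice, the rest of the argument reproduces the reduction of \cref{sec:cfg-prior} essentially verbatim, invoking \cref{lm:kclique-small-slg} as a black box.
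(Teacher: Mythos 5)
Your proof is correct and follows essentially the same reduction as the paper: invoke \cref{lm:kclique-small-slg} to produce a highly compressible string $u$ with exponent $3/(k+2)$, use the approximation-ratio hypothesis to certify $|\algname{Alg}(u)| = \bigO(n^{\delta})$ without ever running \algname{Alg}, and choose $k$ large enough to simultaneously absorb the $n^{\alpha}$ approximation loss and make the $3k$-clique runtime contradict the conjecture. Your thresholds for $k$ are slightly sharper than the paper's $k = \max(\lceil 3/(\delta-\alpha)\rceil, \lceil 12/\epsilon \rceil)$, but both are valid, and your explicit remark that $\algname{Alg}(u)$ need not actually be computed is the same key observation the paper relies on implicitly.
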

\begin{proof}
  Suppose the above theorem is not true, and let $\epsilon > 0$ be
  such that the algorithm in question runs in
  $\bigO(n^{\omega-\epsilon})$ (resp.\ $\bigO(n^{3-\epsilon})$)
  time. Assume that we are given an undirected graph $G = (V, E)$. Let
  $k = \max(\lceil \tfrac{3}{\delta - \alpha} \rceil, \lceil
  \tfrac{12}{\epsilon} \rceil)$.  We execute the following algorithm:
  \begin{enumerate}
  \item First, using \cref{lm:kclique-small-slg}, in $\bigO(|V|^3)$
    time we build a CFG $\Gamma$ and an SLG $H = (V_{H}, \Sigma,
    R_{H}, S_{H})$ such that letting $L(H) = \{u\}$, it holds
    $|\Sigma| = \bigO(1)$, $|H| = \bigO(|V|^3)$, $|u| =
    \Theta(|V|^{k+2})$, $|\Gamma| = \bigO(\log |V|) = \bigO(\log
    |u|)$, and $u \in L(\Gamma)$ holds if and only if $G$ has a
    $3k$-clique. Using $H$, we generate $u$ in $\bigO(|V|^{k+2})$
    time.  Observe that the existence of $H$ implies that $g^{*}(u) =
    \bigO(|V|^3)$. Consequently, $|\algname{Alg}(u)| = \bigO(g^{*}(u)
    \cdot |u|^{\alpha}) = \bigO(|V|^3 \cdot |u|^{\alpha}) =
    \bigO(|u|^{3/(k+2) + \alpha}) = \bigO(|u|^{\delta})$, where we
    used that $|u| = \Theta(|V|^{k+2})$ and $\tfrac{3}{\delta -
    \alpha} \leq k$ (which implies $3/(k+2) \leq \delta - \alpha$).

  \item Second, we apply the above hypothetical CFG
    parsing to $u$ and $\Gamma$.  More precisely, we check if $u \in
    L(\Gamma)$ in $\bigO(|u|^{\omega-\epsilon}) =
    \bigO(|V|^{(k+2)(\omega-\epsilon)}) =
    \bigO(|V|^{k(\omega-\epsilon) + 2\omega}) =\allowbreak
    \bigO(|V|^{\omega k(1 - \epsilon/(2\omega)) - (k\epsilon/2 -
    2\omega)}) = \bigO(|V|^{\omega k(1 - \epsilon/(2\omega))})$
    (resp.\ $\bigO(|u|^{3 - \epsilon}) \,{=}\, \bigO(|V|^{(k+2)(3 -
    \epsilon)}) \allowbreak= \bigO(|V|^{k(3-\epsilon) + 6)})
    \allowbreak= \bigO(|V|^{3k(1 - \epsilon/6)})$) time, where
    the last equality follows by $k \geq \lceil \tfrac{12}{\epsilon}
    \rceil$ (which implies $k\epsilon/2 \geq 6$).
  \end{enumerate}
  We have thus checked if $G$ contains a $3k$-clique in $\bigO(|V|^3 +
  |V|^{k+2} + |V|^{\omega k(1 - \epsilon/(2\omega))}) =
  \bigO(|V|^{\omega k(1 - \epsilon')})$ (resp. $\bigO(|V|^3 +
  |V|^{k+2} + |V|^{3k(1 - \epsilon/6)}) = \bigO(|V|^{3k(1 -
  \epsilon')})$) time, where $\epsilon' > 0$ is some constant (and
  we used that $k \geq 3$). This implies that the $k$-Clique Conjecture
  (resp.\ Combinatorial $k$-Clique Conjecture) is false.
\end{proof}

\begin{corollary}[Based
    on~\cite{AbboudBBK17}]\label{cor:cfg-lower-bound-simple-generalization}
  Consider a grammar compression algorithm \algname{Alg} that runs in
  $\bigO(n^q)$ time, where $q < \omega$ (resp.\ $q < 3$) is a constant,
  and has an approximation
  ratio $\bigO(n^{\alpha})$ (for a constant $\alpha \in (0, 1)$).
  Assuming the $k$-Clique Conjecture (resp.\ Combinatorial $k$-Clique
  Conjecture), there is no algorithm (resp.\ combinatorial algorithm)
  that, given the SLG $G = \algname{Alg}(u)$ of a string $u \in
  \Sigma^{n}$ $($where $|\Sigma| = \bigO(1)$$)$ and a CFG $\Gamma$ such
  that $|\Gamma| = \bigO(\polylog n)$, determines if $u \in L(\Gamma)$ in
  $\bigO(|G|^c \cdot n^{\omega-\epsilon})$ $($resp.\ $\bigO(|G|^c
  \cdot n^{3 - \epsilon})$$)$ time, for any constants $\epsilon > 0$
  and $c \in (0, \epsilon/\alpha)$.
\end{corollary}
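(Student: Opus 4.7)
The plan is to reduce from \cref{th:cfg-lower-bound-simple-generalization} by prepending an invocation of \algname{Alg} to any hypothetical algorithm that operates on \algname{Alg}-compressed input. Suppose toward a contradiction that there exist constants $\epsilon > 0$ and $c \in (0, \epsilon/\alpha)$, together with an algorithm (resp.\ combinatorial algorithm) $A$ that, given an SLG $G = \algname{Alg}(u)$ for some $u \in \Sigma^n$ (with $|\Sigma| = \bigO(1)$) and a CFG $\Gamma$ with $|\Gamma| = \bigO(\polylog n)$, decides whether $u \in L(\Gamma)$ in $\bigO(|G|^c \cdot n^{\omega - \epsilon})$ (resp.\ $\bigO(|G|^c \cdot n^{3 - \epsilon})$) time.

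Since $c < \epsilon/\alpha$ is equivalent to $\alpha < \epsilon/c$, the interval $(\alpha, \min(1, \epsilon/c))$ is non-empty, and we fix some $\delta$ inside it. Now consider any instance $(u, \Gamma)$ satisfying the hypotheses of \cref{th:cfg-lower-bound-simple-generalization} for this $\delta$, namely $u \in \Sigma^{n}$ with $|\Sigma| = \bigO(1)$, a CFG $\Gamma$ with $|\Gamma| = \bigO(\polylog n)$, and $|\algname{Alg}(u)| = \bigO(n^{\delta})$. I would solve CFG parsing on $(u, \Gamma)$ via the two-step procedure: first compute $G := \algname{Alg}(u)$ in $\bigO(n^q)$ time (by the stated running time of \algname{Alg}), and then invoke $A$ on $(G, \Gamma)$. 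Since $|G| = |\algname{Alg}(u)| = \bigO(n^{\delta})$, the second step runs in $\bigO(|G|^c \cdot n^{\omega - \epsilon}) = \bigO(n^{\delta c + \omega - \epsilon})$ (resp.\ $\bigO(n^{\delta c + 3 - \epsilon})$) time, so the overall running time is $\bigO(n^q + n^{\delta c + \omega - \epsilon})$ (resp.\ $\bigO(n^q + n^{\delta c + 3 - \epsilon})$).

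By the choice of $\delta < \epsilon/c$ we have $\delta c - \epsilon < 0$, so the second term equals $\bigO(n^{\omega - \epsilon_1})$ for $\epsilon_1 := \epsilon - \delta c > 0$ (resp.\ $\bigO(n^{3 - \epsilon_1})$); combined with $q < \omega$ (resp.\ $q < 3$), the total running time is $\bigO(n^{\omega - \epsilon'})$ (resp.\ $\bigO(n^{3 - \epsilon'})$) for some constant $\epsilon' > 0$. This contradicts \cref{th:cfg-lower-bound-simple-generalization}, establishing the claim; the combinatorial case is obtained by substituting $\omega \to 3$ and invoking the Combinatorial $k$-Clique Conjecture throughout. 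The only delicate point of the argument is the parameter matching — verifying that the interval $(\alpha, \min(1, \epsilon/c))$ is non-empty so that \cref{th:cfg-lower-bound-simple-generalization} is applicable — but this is precisely what the hypothesis $c < \epsilon/\alpha$ provides, and in fact dictates why the exponent bound $c < \epsilon/\alpha$ (rather than a larger constant) is the strongest consequence this simple reduction can yield.
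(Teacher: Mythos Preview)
Your proof is correct and follows essentially the same approach as the paper: pick a $\delta$ with $\alpha < \delta$ and $c\delta < \epsilon$, run \algname{Alg} to obtain $G$, then invoke the hypothetical parser and contradict \cref{th:cfg-lower-bound-simple-generalization}. The paper makes the concrete choice $\delta = \alpha/2 + \epsilon/(2c)$ (the midpoint of $\alpha$ and $\epsilon/c$), whereas you pick any $\delta \in (\alpha,\min(1,\epsilon/c))$; your formulation is in fact slightly more careful, since it explicitly enforces $\delta \le 1$ as required by \cref{th:cfg-lower-bound-simple-generalization}.
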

\begin{proof}
  Suppose that such algorithm exists. Let $\delta = \alpha/2 +
  \epsilon/(2c)$.  Observe that $c \in (0, \epsilon/\alpha)$ implies
  $\epsilon/(2c) > \alpha/2$.  Thus, $\delta > \alpha$. On the other
  hand, we also have $c \delta = c\alpha/2 + \epsilon/2 < \epsilon$.
  Suppose now that we are given a string $u \in \Sigma$ and a CFG
  $\Gamma$ such that $|\Gamma| = \bigO(\polylog n)$, $|\Sigma| =
  \bigO(1)$, and $|\algname{Alg}(u)| = \bigO(n^{\delta})$.  We execute
  the following algorithm:
  \begin{enumerate}
  \item First, we compute $G = \algname{Alg}(u)$. This takes
    $\bigO(n^q)$ time.  By the assumption, we have $|G| =
    \bigO(n^{\delta})$.
  \item Next, we apply the above hypothetical algorithm to $\Gamma$
    and $G$, i.e., we check if $u \in L(\Gamma)$ in $\bigO(|G|^c
    n^{\omega-\epsilon}) = \bigO(n^{\delta c} n^{\omega-\epsilon}) =
    \bigO(n^{\omega - (\epsilon - \delta c)})$ (resp.\ $\bigO(|G|^c
    n^{3-\epsilon}) = \bigO(n^{\delta c} n^{3-\epsilon}) = \bigO(n^{3
    - (\epsilon - \delta c)})$) time.
  \end{enumerate}
  In total, we have thus checked if $u \in L(\Gamma)$ in $\bigO(n^q +
  n^{\omega - (\epsilon - \delta c)}) = \bigO(n^{\omega - \epsilon'})$
  (resp.\ $\bigO(n^q + n^{3 - (\epsilon - \delta c)}) = \bigO(n^{3 -
  \epsilon'})$) time, where $\epsilon' > 0$ is some constant By
  \cref{th:cfg-lower-bound-simple-generalization}, this implies that
  the $k$-Clique Conjecture (resp.\ Combinatorial $k$-Clique
  Conjecture) is false. Note that
  \cref{th:cfg-lower-bound-simple-generalization} requires that
  $\delta > \alpha$, which we showed above.
\end{proof}

\begin{remark}
  The above result thus shows that applying the techniques
  from~\cite{AbboudBBK17} prevents some compressed algorithms
  (assuming the $k$-Clique Conjecture or the Combinatorial $k$-Clique
  Conjecture), e.g., if $\alpha = 1/3$, then we cannot combinatorially
  solve CFG parsing on grammars produced by \algname{Alg} in
  $\bigO(|G|^5 \cdot n)$ or $\bigO(|G|^2 \cdot n^2)$ time, but the
  lower bound does not prevent an algorithm running in $\bigO(|G|^7
  \cdot n)$ or $\bigO(|G|^4 \cdot n^2)$ time due to the dependence of the
  lower bound on $\alpha$. In the following sections, we describe
  methods for eliminating this dependance from the analysis.
\end{remark}

\subsection{Preliminaries}\label{sec:cfg-prelim}

\begin{lemma}\label{lm:interleave}
  Let $\Gamma = (V, \Sigma, R, S)$ be a CFG. Assume that the sets
  $\Sigma$, $\{\dol_i\}_{i \in [1 \dd |V|]}$, and $\{\hash_i\}_{i \in
  [1 \dd 2|V|]}$ are pairwise disjoint.  Denote $\Sigma'' = \Sigma
  \cup \{\dol_i\}_{i \in [1 \dd |V|]} \cup \{\hash_i\}_{i \in [1 \dd
  2|V|]}$. There exists a CFG $\Gamma'$ such that $L(\Gamma') =
  \{a_1 a_2 \cdots a_{2k} \in \Sigma''^{2k} : k \in \Zp\text{ and
  }a_1a_3a_5 \cdots a_{2k-3}a_{2k-1} \in L(\Gamma)\}$ and $|\Gamma'| =
  \bigO(|\Gamma| + |\Sigma''|)$. Moreover, given $\Gamma$ and $|V|$,
  we can construct $\Gamma'$ in $\bigO(|\Gamma| + |\Sigma''|)$ time.
\end{lemma}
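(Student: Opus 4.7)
The plan is to construct $\Gamma'$ by taking $\Gamma$ and systematically interleaving a ``wildcard'' nonterminal after every terminal symbol. Concretely, I would introduce a single fresh nonterminal $C$, together with the rules $C \to c$ for every $c \in \Sigma''$ (these contribute $|\Sigma''|$ to the grammar size). Then, for every production $N \to X_1 X_2 \cdots X_k$ of $\Gamma$ (where each $X_i \in V \cup \Sigma$), I would add the production $N \to Y_1 Y_2 \cdots Y_k$ to $\Gamma'$, where $Y_i = X_i$ if $X_i \in V$ and $Y_i = X_i \cdot C$ if $X_i \in \Sigma$. The starting nonterminal of $\Gamma'$ is $S$. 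This is clearly computable in $\bigO(|\Gamma| + |\Sigma''|)$ time, and since each terminal occurrence in a rule at most doubles in length (and we add $|\Sigma''|$ length-$1$ rules for $C$), the total size is $\bigO(|\Gamma| + |\Sigma''|)$.

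For the language equality, I would prove by induction on the derivation the following invariant: for every $N \in V$ and every $w \in \Sigma^{*}$, $N \Rightarrow^{*}_{\Gamma} w = w[1] w[2] \cdots w[n]$ if and only if for every choice of $c_1, \dots, c_n \in \Sigma''$, it holds $N \Rightarrow^{*}_{\Gamma'} w[1] c_1 w[2] c_2 \cdots w[n] c_n$. The forward direction follows by mimicking the derivation of $\Gamma$ in $\Gamma'$: each application of $N \to X_1 \cdots X_k$ in $\Gamma$ is matched by the corresponding modified rule in $\Gamma'$, and at every position where a terminal $a$ is produced, the adjacent $C$ can be expanded to any desired symbol of $\Sigma''$. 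The reverse direction follows by observing that any derivation in $\Gamma'$ uses either a modified rule (obtained from $\Gamma$) or a $C$-rule; collapsing out the $C$-expansions by deleting every second symbol yields a valid derivation in $\Gamma$.

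Applying the invariant with $N = S$ then gives that $L(\Gamma')$ is exactly the set of strings $a_1 a_2 \cdots a_{2k}$ with $k \geq 1$ whose odd-indexed characters $a_1 a_3 \cdots a_{2k-1}$ form a string in $L(\Gamma)$: the even-indexed characters come from independent $C$-expansions and can be arbitrary symbols of $\Sigma''$, and the restriction $k \in \Zp$ is automatic since every nontrivial derivation in $\Gamma'$ from $S$ must apply at least one modified rule (and the language of $\Gamma$ read through the interleaving is nonempty whenever $\Gamma'$ produces a string). I do not expect a substantive obstacle here; the main subtlety is purely bookkeeping — making sure that when a nonterminal $X_i$ on the right-hand side of a rule is \emph{not} followed by a $C$ in the modified rule, the trailing wildcard after the last terminal of $X_i$'s expansion is correctly provided by the $C$ sitting inside $X_i$'s own subtree (rather than being double-counted). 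This is handled cleanly by the invariant above, since the $c_i$'s are attached one-per-terminal rather than one-per-symbol-of-the-rule.
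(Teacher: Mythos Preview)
Your proposal is correct and takes essentially the same approach as the paper: add a single ``wildcard'' nonterminal with a rule $C \to c$ for each $c \in \Sigma''$, and splice it into every production of $\Gamma$. The only difference is placement: you insert $C$ after each \emph{terminal} occurrence, whereas the paper inserts its wildcard $X$ after \emph{every} symbol on the right-hand side --- your variant is in fact the careful one, since the paper's literal construction double-counts wildcards whenever a rule has a nonterminal on its right-hand side (e.g., from $S\to AB$, $A\to a$, $B\to b$ it would yield strings of length~$6$ rather than~$4$).
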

\begin{proof}
  Let $\Gamma' = (V \cup \{X\}, \Sigma'', R', S)$. The set $R'$ is
  defined as the smallest set satisfying the following conditions:
  \begin{itemize}[itemsep=0pt,parsep=0pt]
  \item For every $c \in \Sigma''$, $R'$ contains the rule $X
    \rightarrow c$,
  \item For every rule $Y \rightarrow b_1 b_2 \cdots b_q$ from $R$,
    the set $R'$ contains the rule $Y \rightarrow b_1 X b_2 X \cdots
    b_q X$.
  \end{itemize}
  The correctness of this construction follows immediately. The total
  size of the rules is $R'$ is twice the total size of the rules in
  $R$ plus $|\Sigma''|$. Thus, $|\Gamma'| = \bigO(|\Gamma| +
  |\Sigma''|)$.  Given $\Gamma$ and $|V|$, the CFG $\Gamma'$ is easily
  constructed in $\bigO(|\Gamma| + |\Sigma''|)$ time.
\end{proof}

\begin{lemma}\label{lm:add-prefix}
  Let $\Gamma = (V, \Sigma, R, S)$ be a CFG. For every $k \geq 1$,
  there exists a CFG $\Gamma'$ such that $L(\Gamma') = \{xy : x \in
  \Sigma^k\text{ and }y \in L(\Gamma)\}$ and $|\Gamma'| =
  \bigO(|\Gamma| + |\Sigma| + \log k)$. Moreover, given $\Gamma$ and
  $k$, we can construct $\Gamma'$ in $\bigO(|\Gamma| + |\Sigma| + \log
  k)$ time.
\end{lemma}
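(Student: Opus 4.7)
The plan is to construct $\Gamma'$ by augmenting $\Gamma$ with a small ``$\Sigma^k$-generator'' placed in front of the start symbol $S$. The key to keeping the size $\bigO(|\Sigma| + \log k)$ (rather than $\Theta(k)$) is a standard doubling construction: for a nonterminal $A_0$ that derives an arbitrary symbol of $\Sigma$, the nonterminals $A_{i+1} \rightarrow A_i A_i$ derive exactly $\Sigma^{2^i}$, and then any length $k$ can be assembled via the binary expansion of $k$ using $\bigO(\log k)$ concatenations.

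Formally, I would let $m = \lfloor \log_2 k \rfloor$, introduce fresh nonterminals $A_0, A_1, \dots, A_m$ and a new start symbol $S'$, and define $\Gamma' = (V \cup \{S', A_0, \dots, A_m\}, \Sigma, R', S')$ where $R'$ consists of (i) every rule of $R$, (ii) the rule $A_0 \rightarrow c$ for each $c \in \Sigma$, (iii) the rules $A_{i+1} \rightarrow A_i A_i$ for $i \in [0 \dd m)$, and (iv) a single start rule $S' \rightarrow A_{i_1} A_{i_2} \cdots A_{i_t} S$, where $i_1 > i_2 > \dots > i_t$ and $k = 2^{i_1} + 2^{i_2} + \dots + 2^{i_t}$ is the binary expansion of $k$. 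A straightforward induction on $i$ shows that $L_{\Gamma'}(A_i) = \Sigma^{2^i}$, so the start rule derives exactly $\Sigma^{2^{i_1}} \cdots \Sigma^{2^{i_t}} \cdot L(\Gamma) = \Sigma^k \cdot L(\Gamma)$ as required (using that the nonterminals of $V$ are disjoint from the auxiliary ones and their definitions in $R'$ are copied verbatim from $R$, so $L_{\Gamma'}(S) = L(\Gamma)$).

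The size and construction-time bounds follow from summing contributions: the copied rules of $R$ contribute $|\Gamma|$; the $|\Sigma|$ rules $A_0 \rightarrow c$ contribute $|\Sigma|$; the $m$ doubling rules contribute $2m = \bigO(\log k)$; and the single start rule contributes $t + 1 \leq m + 2 = \bigO(\log k)$. The construction can be carried out in time proportional to the total output size, since computing the binary expansion of $k$ takes $\bigO(\log k)$ time and the remaining steps are a straightforward emission of rules. I do not expect any serious obstacle here: the argument is essentially an application of repeated squaring to a CFG, and once the doubling structure is in place, both the language-correctness and the size accounting are routine.
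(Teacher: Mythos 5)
Your proof is correct and uses essentially the same doubling construction as the paper (fresh nonterminals $A_0 \rightarrow c$ for $c \in \Sigma$, $A_{i+1} \rightarrow A_i A_i$, and a start rule concatenating the $A_i$ chosen by the binary expansion of $k$ followed by $S$). In fact, your writeup is slightly cleaner: the paper's start rule is stated with the typo $k = \prod_{i=1}^{q} 2^{b_i}$, whereas the intended (and your) decomposition is the sum $k = \sum_{i=1}^{q} 2^{b_i}$, i.e.\ the binary expansion.
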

\begin{proof}
  Denote $m = \lfloor \log k \rfloor$. Let $\Gamma' = (V \cup \{S',
  X_0, X_1, \dots, X_{m}\}, \Sigma, R', S')$.  The set $R'$ is defined
  as the minimal superset of $R$ satisfying the following conditions:
  \begin{itemize}[itemsep=0pt,parsep=0pt]
  \item For every $c \in \Sigma$, $R'$ contains the rule $X_0
    \rightarrow c$,
  \item For every $i \in [1 \dd m]$, $R'$ contains the rule $X_i
    \rightarrow X_{i-1} X_{i-1}$,
  \item $R'$ contains the rule $S' \rightarrow X_{b_1} \cdot X_{b_2}
    \cdots X_{b_q} \cdot S$, where $b_1, \dots, b_q \in \Zn$ are such
    that $b_1 < \dots < b_q$ and $k = \prod_{i=1}^{q}2^{b_i}$. Note
    that $b_q \leq m$, and hence $X_{b_i}$ is defined for every $i \in
    [1 \dd q]$.
  \end{itemize}
  First, observe that for every $i \in [0 \dd m]$, we have
  $\langgen{\Gamma'}{X_i} = \Sigma^{2^i}$. Thus, prepending the
  starting rule $S$ with $X_{b_1} \cdots X_{b_q}$ ensures that any
  string in $L(\Gamma')$ is of the form $xy$, where $x \in
  \Sigma^{2^{b_1} \cdots 2^{b_q}} = \Sigma^{k}$ and $y \in
  L(\Gamma)$. The total size of the newly introduced rules is
  $\bigO(|\Sigma| + m) = \bigO(|\Sigma| + \log k)$, and hence
  $|\Gamma'| = \bigO(|\Gamma| + |\Sigma| + \log k)$. Given $\Gamma$
  and $k$, adding the new rules takes $\bigO(|\Sigma| + \log k)$ time,
  and thus the construction takes $\bigO(|\Gamma| + |\Sigma| + \log
  k)$ time.
\end{proof}

\begin{lemma}\label{lm:extra-dollars}
  Let $G = (V, \Sigma, R, S)$ be a CFG such that $\Sigma \cap
  \{\dol_i\}_{i \in [1 \dd 2|V|]} = \emptyset$. Denote $\Sigma' =
  \Sigma \cup \{\dol_i\}_{i \in [1 \dd 2|V|]}$.  There
  exists a CFG $\Gamma'$ of size $|\Gamma'| = \bigO(|\Gamma| + |V|)$
  such that $L(\Gamma') = \{u \in \Sigma'^{*} : {\rm
  erase}(u, \{\dol_i\}_{i \in [1 \dd 2|V|]}) \in L(\Gamma)\}$,
  where ${\rm erase}(u, S)$ is the subsequence of $u$ resulting from
  erasing all occurrences of symbols from $S$ in $u$. Moreover, given
  $\Gamma$ and $|V|$, we can construct $\Gamma'$ in $\bigO(|\Gamma|
  + |V|)$ time.
\end{lemma}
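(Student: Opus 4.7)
The plan is to augment $\Gamma$ with a single fresh nonterminal $X$ that generates every string in $\{\dol_i\}_{i \in [1 \dd 2|V|]}^{*}$, and to thread $X$ through every position of every right-hand side of $\Gamma$. Concretely, I would add the productions $X \rightarrow \emptystring$ and, for each $i \in [1 \dd 2|V|]$, $X \rightarrow \dol_{i}\cdot X$, which by a one-line induction give $\langgen{\Gamma'}{X} = \{\dol_i\}_{i \in [1 \dd 2|V|]}^{*}$. Then, for every production $Y \rightarrow b_1 b_2 \cdots b_q$ of $\Gamma$ (with $q \geq 0$), I would put into $\Gamma'$ the production $Y \rightarrow X\,b_1\,X\,b_2\,X\,\cdots\,b_q\,X$ (in the case $q = 0$ this is simply $Y \rightarrow X$). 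The start symbol of $\Gamma'$ is kept as $S$.

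The size and time bounds are then immediate: the $X$-productions contribute at most $2(2|V|+1) = \bigO(|V|)$ symbols, while every rule of length $q$ in $\Gamma$ is replaced by a rule of length $2q+1$, so the total length of the new productions is $\bigO(|\Gamma|)$. Hence $|\Gamma'| = \bigO(|\Gamma| + |V|)$, and because we touch each original symbol a constant number of times and emit $\bigO(|V|)$ fresh productions, the construction runs in $\bigO(|\Gamma| + |V|)$ time.

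For correctness I would argue the two inclusions separately. The $(\subseteq)$ direction is easy: given a derivation $S \Rightarrow^{*}_{\Gamma'} u$, I would project it to $\Gamma$ by swapping every modified production for its original counterpart and deleting every subderivation rooted at $X$; because every such subderivation yields a string in $\{\dol_i\}_i^{*}$, the resulting derivation in $\Gamma$ yields exactly ${\rm erase}(u, \{\dol_i\}_{i \in [1 \dd 2|V|]})$. For the $(\supseteq)$ direction, I would take any $u \in \Sigma'^{*}$ with $v := {\rm erase}(u, \{\dol_i\}_{i \in [1 \dd 2|V|]}) \in L(\Gamma)$, fix a derivation $S \Rightarrow^{*}_{\Gamma} v$, mimic it in $\Gamma'$ by replacing each applied production by its modified counterpart, and then writing $v = v_1 v_2 \cdots v_n$ and $u = u_0 v_1 u_1 v_2 \cdots v_n u_n$ with each $u_j \in \{\dol_i\}_i^{*}$, assign $u_j$ to one of the $X$-occurrences sitting in the gap between $v_j$ and $v_{j+1}$ (and $\emptystring$ to the remaining $X$'s in that gap).

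The delicate point is the claim that every such gap (including the two boundary gaps before $v_1$ and after $v_n$) contains at least one $X$ in the mimicked sentential form. This is true because the outermost modified production applied to $S$ already places an $X$ at every gap and at both ends, and every subsequent expansion of an internal nonterminal via a modified production only inserts additional $X$'s without erasing the ones already present; a short induction on the length of the mimicked derivation makes this precise. Everything else is a routine parse-tree projection argument.
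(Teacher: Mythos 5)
Your construction is essentially the same as the paper's: a fresh nonterminal generating $\{\dol_i\}_{i}^{*}$ is threaded between every two symbols of every right-hand side (and at both ends), and correctness follows by the same projection argument in both directions. The only cosmetic difference is that you generate $\{\dol_i\}_i^{*}$ with a right-linear grammar ($X \rightarrow \dol_i X \mid \emptystring$) while the paper uses $N_{\dol} \rightarrow N_{\dol} N_{\dol} \mid \dol_i \mid \emptystring$; both have size $\bigO(|V|)$ and the remainder of the argument is identical.
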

\begin{proof}
  Let $\Gamma' = (V \cup \{N_{\dol}\}, \Sigma', R', S)$. The set $R'$
  is defined as the smallest set satisfying the following conditions:
  \begin{itemize}[itemsep=0pt,parsep=0pt]
  \item $R'$ contains rules $N_{\dol} \rightarrow N_{\dol} N_{\dol}$
    and $N_{\dol} \rightarrow \emptystring$,
  \item For every $i \in [1 \dd 2|V|]$, $R'$ contains the rule
    $N_{\dol} \rightarrow \dol_i$,
  \item For every rule $Y \rightarrow b_1 b_2 \dots b_q$ in $R$, the
    set $R'$ contains the rule $Y \rightarrow N_{\dol} b_1 N_{\dol}
    b_2 N_{\dol} \dots N_{\dol} b_q N_{\dol}$.
  \end{itemize}
  The correctness of this construction follows immediately. The total
  size of the rules in $R'$ is at most three times the size of rules
  in $R$ plus $2|V| + 2$. Thus, $|\Gamma'| = \bigO(|\Gamma| + |V|)$.
  Given $\Gamma$ and $|V|$, the CFG $\Gamma'$ is easily constructed in
  $\bigO(|\Gamma| + |V|)$ time.
\end{proof}

\subsection{Analysis of \algname{Sequitur}, \algname{Sequential}, and
  Global Algorithms}\label{sec:cfg-sequential}

\begin{observation}\label{ob:alpha}
  Let $G = (V, \Sigma, R, S)$ be an admissible SLG. Let $u \in
  \Sigma^{+}$ be such that $L(G) = \{u\}$ and let $w \in \alpha(G)$
  (\cref{def:alpha}).
  Then, for every $j \in [1 \dd |u|]$, it holds $u[j] = u'[2j - 1]$,
  where $u'$ is the length-$2|u|$ suffix of $w$.
\end{observation}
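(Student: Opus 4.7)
The plan is to trace the position of the answer inside $w$ by locating where the expansion of $S$ sits in the concatenation defining $\alpha(G)$. The high-level route is: (i) show that in every admissible ordering of $V$ compatible with \cref{def:alpha}, $S$ must be the last nonterminal, i.e., $N_{|V|} = S$; (ii) invoke \cref{ob:interleave} to relate symbols of $u$ to odd-indexed symbols of $\expgen{G'}{S}$; (iii) identify the length-$2|u|$ suffix $u'$ of $w$ explicitly and read off the claim.

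For step (i), the idea is to use admissibility in two ways. Since $G$ is admissible, every $X\in V$ satisfies $|\rhsgen{G}{X}|=2$ and occurs in $\mathcal{T}(S)$. The first property forces every internal node of $\mathcal{T}(S)$ to have exactly two children, so each non-root occurrence of a nonterminal $X\ne S$ roots a proper subtree with strictly fewer leaves than $\mathcal{T}(S)$. Combined with $\expgen{G}{S}=u$, this yields $|\expgen{G}{X}|<|u|$ for every $X\ne S$, so in the ordering $(N_i)_{i\in[1\dd|V|]}$ required by \cref{def:alpha}, the unique nonterminal of maximum expansion length is $S$, giving $N_{|V|}=S$.

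For step (ii), \cref{ob:interleave} applied to $S$ gives $|\expgen{G'}{S}|=2|u|-1$ and $u[j]=\expgen{G'}{S}[2j-1]$ for every $j\in[1\dd|u|]$. For step (iii), by \cref{def:alpha} the final block of $w$ contributed by $i=|V|$ is $\expgen{G'}{S}\cdot\hash_{2|V|-1}\cdot\expgen{G'}{S}\cdot\hash_{2|V|}$, whose length is $2(2|u|-1)+2=4|u|$. Hence the length-$2|u|$ suffix is exactly $u'=\expgen{G'}{S}\cdot\hash_{2|V|}$, so $u'[k]=\expgen{G'}{S}[k]$ for all $k\in[1\dd 2|u|-1]$. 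Specializing to $k=2j-1\le 2|u|-1$ and combining with (ii) produces $u'[2j-1]=u[j]$, as required.

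The entire argument is routine once step (i) is in hand; the only step requiring care is verifying that $S$ is strictly the longest, which could be mildly tricky if one forgot that admissibility forces every rule to have exactly two symbols on the right-hand side — without this, one could in principle have a chain-rule ancestor of $S$ with the same expansion length. I do not anticipate any substantive obstacle beyond this bookkeeping.
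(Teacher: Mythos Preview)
Your proposal is correct and follows essentially the same route as the paper: establish $N_{|V|}=S$ via admissibility, use the interleaving property (the paper restates \cref{ob:interleave} inline rather than citing it), and identify $u'=\expgen{G'}{S}\cdot\hash_{2|V|}$ as the length-$2|u|$ suffix. One tiny terminology slip: in your closing remark, a hypothetical chain rule would produce a \emph{descendant} of $S$ with the same expansion length, not an ancestor.
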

\begin{proof}
  Let $G'$ be the SLG and $(N_i)_{i \in [1 \dd |V|]}$ be a sequence of
  nonterminals corresponding to $w$ in \cref{def:alpha}. The
  assumption about $\mathcal{T}(S)$ implies that for every $X \in V
  \setminus \{S\}$, it holds $|\expgen{G}{X}| <
  |\expgen{G}{S}|$. Thus, $N_{|V|} = S$.  By an easy induction, it
  follows that for every $X \in V$, letting $m = |\expgen{G}{X}|$, it
  holds $|\expgen{G'}{X}| = 2m - 1$, and for every $i \in [1 \dd m]$,
  we have $\expgen{G}{X}[i] = \expgen{G'}{X}[2i - 1]$. By $w =
  \bigodot_{i=1,\dots,|V|} \expgen{G'}{N_i} \cdot \hash_{2i-1} \cdot
  \expgen{G'}{N_i} \cdot \hash_{2i}$, for every $j \in [1 \dd |u|]$,
  we thus obtain that $u' := \expgen{G'}{N_{|V|}} \hash_{2|V|} =
  \expgen{G'}{S} \hash_{2|V|}$ is a suffix of $w$. By the above
  discussion, we thus obtain the claim.
\end{proof}

\begin{lemma}\label{lm:construction-alpha}
  Let $G = (V, \Sigma, R, S)$ be an admissible SLG and assume that
  $4\sum_{X \in V}|\expgen{G}{X}|$ fits into a machine word.  Given
  $G$, we can compute some $w \in \alpha(G)$ (\cref{def:alpha}) in
  $\bigO(\sum_{X \in V}|\expgen{G}{X}|)$ time.
\end{lemma}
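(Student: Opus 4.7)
The plan is to produce $w$ directly by streaming out its characters, paying $O(1)$ amortized work per output symbol and observing that $|w| = 4\sum_{X \in V} |\exp_G(X)|$ by \cref{lm:alpha-length}. Concretely, I would proceed in three stages:

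\textbf{Stage 1 (Expansion lengths).} Using the fact that an SLG has a linear order $\prec$ on $V$ with each right-hand side mentioning only later variables, compute $\ell(X) := |\exp_G(X)|$ for every $X \in V$ by a bottom-up sweep: for a terminal $c$, $\ell(c) = 1$, and for $N \in V$ with $\rhs{N} = AB$, $\ell(N) = \ell(A) + \ell(B)$. Since $G$ is admissible, this takes $O(|V|)$ arithmetic operations, all on values bounded by $\sum_X \ell(X)$, which fit in a machine word by assumption.

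\textbf{Stage 2 (Sorting).} Produce an ordering $(N_i)_{i \in [1 \dd |V|]}$ with $\ell(N_1) \le \dots \le \ell(N_{|V|})$ (ties broken arbitrarily). Because $\ell(X) \le \sum_{Y \in V} \ell(Y)$, and this bound fits in a machine word, I can use counting sort (or a two-pass radix sort) over the range $[1 \dd \sum_Y \ell(Y)]$ in $O(|V| + \sum_Y \ell(Y)) = O(\sum_Y \ell(Y))$ time.

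\textbf{Stage 3 (Streaming the expansions).} I do not materialize $G'$; instead, I simulate it. For $i = 1, \ldots, |V|$, I run a DFS over the parse tree $\mathcal{T}_G(N_i)$, but whenever the recursion enters some $N_j \in V$ with children $A,B$ (in $G$), I emit the expansion of $A$, then the sentinel $\$_j$, then the expansion of $B$ — i.e.\ I interpret the tree as $\mathcal{T}_{G'}(N_i)$. After each full expansion I emit the appropriate $\#$ symbol: $\#_{2i-1}$ after the first copy, $\#_{2i}$ after the second. By admissibility, $\mathcal{T}_G(N_i)$ has $\ell(N_i)$ leaves and $\ell(N_i)-1$ internal nodes, so the DFS for $N_i$ takes $O(\ell(N_i))$ time and outputs $2\ell(N_i)-1$ symbols (matching \cref{ob:interleave}). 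Doubling and adding the two sentinels gives the per-index contribution of $4\ell(N_i)$ to $|w|$, and summing over $i$ yields total time and output length $O(\sum_{X \in V} \ell(X))$, as required. The resulting string satisfies every bullet of \cref{def:alpha} by construction, so $w \in \alpha(G)$.

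There is no real obstacle here — the only subtlety is making sure the sort in Stage 2 runs within the target budget, which is why I appeal to counting sort over a range that the machine-word assumption guarantees is manageable, rather than to comparison-based sorting (which would cost an extra $\log |V|$ factor). Everything else is a straightforward traversal.
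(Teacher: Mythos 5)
Your proposal is correct and takes essentially the same approach as the paper's proof: both compute expansion lengths bottom-up, sort by expansion length using a linear-time (radix/counting) sort to stay within the $\bigO(\sum_X|\exp_G(X)|)$ budget, and then emit $w$ by traversal. The only cosmetic difference is that you stream the output directly from $\mathcal{T}_G(N_i)$ with sentinels injected on the fly, while the paper first materializes $G'$ and then expands it — the two are operationally identical.
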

\begin{proof}
  Assume that $G$ is given using an encoding in which nonterminals are
  identified with consecutive positive integers. The construction of
  $w \in \alpha(G)$ consist of three steps:
  \begin{enumerate}[itemsep=0pt,parsep=0pt]
  \item First, in $\bigO(|G|)$ time we sort the nonterminals of the
    implicit grammar DAG (defined so that there is an edge connecting
    nonterminals $X$ and $Y$ when $Y$ appears in $\rhsgen{G}{X}$)
    topologically. In $\bigO(|G|)$ time we then compute
    $|\expgen{G}{X}|$ for every $X \in V$. We then sort all $X \in V$
    using $|\expgen{G}{X}|$ as the key (with ties resolved
    arbitrarily).  Using radix sort, this can be done in
    $\bigO(\sum_{X \in V}|\expgen{G}{X}|)$ time. Let $\{N_i\}_{i \in
    [1 \dd |V|]}$ denote the resulting sequence.
  \item Second, we construct $G'$ defined as in \cref{def:alpha}. This
    is easily done in $\bigO(|G|)$ time.
  \item Lastly, we compute the output string $w =
    \bigodot_{i=1,\dots,|V|} \expgen{G'}{N_i} \cdot \hash_{2i-1} \cdot
    \expgen{G'}{N_i} \cdot \hash_{2i}$.  Using $G'$ this is easily
    done in $\bigO(|w|)$ time, which is $\bigO(\sum_{X \in
    V}|\expgen{G}{X}|)$ by \cref{lm:alpha-length}.  \qedhere
  \end{enumerate}
\end{proof}

\begin{lemma}\label{lm:alpha-construction-of-gamma-prim}
  Let $u \in \Sigma^{+}$ and $G = (V, \Sigma, R, S)$ be an admissible
  SLG such that $L(G) = \{u\}$. Assume that the sets $\Sigma$,
  $\{\dol_i\}_{i \in [1 \dd |V|]}$, and $\{\hash_i\}_{i \in [1 \dd
  2|V|]}$ are pairwise disjoint.  Denote $\Sigma'' = \Sigma \cup
  \{\dol_i\}_{i \in [1 \dd |V|]} \cup \{\hash_i\}_{i \in [1 \dd
  2|V|]}$. Let $\Gamma$ be a CFG such that $L(\Gamma) \subseteq
  \Sigma^{*}$. There exists a CFG $\Gamma'$ such that $|\Gamma'| =
  \bigO(|\Gamma| + |\Sigma''| + \log |u|)$, and for every $w \in
  \alpha(G)$ (\cref{def:alpha}), $u \in L(\Gamma)$ holds if and only if $w \in
  L(\Gamma')$. Moreover, given $G$ and $\Gamma$, and assuming that
  $|u|$ fits in $\bigO(1)$ machine words, we can construct $\Gamma'$
  in $\bigO(|\Gamma| + |\Sigma| + |G|)$ time.
\end{lemma}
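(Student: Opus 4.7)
The plan is to build $\Gamma'$ in two stages by composing the two transformations from the preliminaries (\cref{lm:interleave,lm:add-prefix}) and routing correctness through \cref{ob:alpha}. Recall that \cref{ob:alpha} tells us that for every $w \in \alpha(G)$, the length-$2|u|$ suffix $u'$ of $w$ satisfies $u'[2j-1] = u[j]$ for every $j \in [1 \dd |u|]$. Since all $w \in \alpha(G)$ share the same length $|w| = 4\sum_{X \in V}|\expgen{G}{X}|$ by \cref{lm:alpha-length}, the prefix $w[1 \dd |w| - 2|u|]$ is simply some string of fixed length $k := |w| - 2|u|$ over $\Sigma''$, while the suffix $u'$ encodes $u$ at its odd positions.

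First, I would apply \cref{lm:interleave} to $\Gamma$ (with the alphabet $\Sigma''$ defined as in the statement) to obtain a CFG $\Gamma_1$ with $L(\Gamma_1) = \{a_1 a_2 \cdots a_{2m} \in \Sigma''^{2m} : m \in \Zp\text{ and } a_1 a_3 \cdots a_{2m-1} \in L(\Gamma)\}$ and $|\Gamma_1| = \bigO(|\Gamma| + |\Sigma''|)$. By the characterization of $u'$ above, $u' \in L(\Gamma_1)$ if and only if $u \in L(\Gamma)$. Next, using $G$ I would compute the integer $k = |w| - 2|u|$ via a linear pass that evaluates $|\expgen{G}{X}|$ for every $X \in V$ in $\bigO(|G|)$ time; this fits in $\bigO(1)$ machine words because admissibility of $G$ forces $|V| \leq |u|$ (every nonterminal occurs in $\mathcal{T}(S)$) and hence $|w| = \bigO(|G| \cdot |u|) = \bigO(|u|^2)$. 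Then I would apply \cref{lm:add-prefix} to $\Gamma_1$ with this value of $k$ to obtain the final $\Gamma'$ satisfying $L(\Gamma') = \{xy : x \in \Sigma''^{k},\ y \in L(\Gamma_1)\}$. Correctness follows immediately: $w \in L(\Gamma')$ if and only if the length-$2|u|$ suffix $u'$ of $w$ belongs to $L(\Gamma_1)$, which by the first step is equivalent to $u \in L(\Gamma)$.

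For the size bound, \cref{lm:add-prefix} gives $|\Gamma'| = \bigO(|\Gamma_1| + |\Sigma''| + \log k)$. Using $k \leq |w| = \bigO(|u|^2)$ (from admissibility) yields $\log k = \bigO(\log |u|)$, so $|\Gamma'| = \bigO(|\Gamma| + |\Sigma''| + \log |u|)$ as claimed. For the runtime, the dominant costs are computing expansion lengths over $G$ in $\bigO(|G|)$ time, constructing $\Gamma_1$ in $\bigO(|\Gamma| + |\Sigma''|)$ time, and constructing $\Gamma'$ from $\Gamma_1$ and $k$ in $\bigO(|\Gamma_1| + |\Sigma''| + \log k)$ time. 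Since $|\Sigma''| = |\Sigma| + 3|V| = \bigO(|\Sigma| + |G|)$, the total work is $\bigO(|\Gamma| + |\Sigma| + |G|)$, matching the stated budget.

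The proof is essentially a bookkeeping composition, so there is no serious combinatorial obstacle; the only subtle points are (i) checking that the number $k$ of padding symbols is the same across all $w \in \alpha(G)$, which follows from \cref{lm:alpha-length}, and (ii) ensuring $\log k = \bigO(\log |u|)$ so that the size bound comes out correctly. The latter is where admissibility of $G$ (which implies $|V| \leq |u|$ and hence $|w| = \bigO(|u|^2)$) is genuinely used. Under the word-size assumption in the statement, $k$ is computable in $\bigO(1)$ machine words, so the construction is fully explicit.
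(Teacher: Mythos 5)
Your proof is correct and follows essentially the same route as the paper's: compose \cref{lm:interleave} and \cref{lm:add-prefix} in that order, use \cref{ob:alpha} to route correctness through the length-$2|u|$ suffix, and bound $\log k = \bigO(\log|u|)$ via admissibility ($|V| \leq |u|$, so $|w| = 4\sum_{X \in V}|\expgen{G}{X}| \leq 4|u|^2$). The only cosmetic difference is that the paper names the intermediate languages $L_1, L_2, L_3$ explicitly, but the construction, the lemmas invoked, and the size and time bounds are identical.
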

\begin{proof}
  Let $w \in \alpha(G)$ and $k = |w| - 2|u|$. We define
  \begin{itemize}[itemsep=0pt,parsep=0pt]
  \item $L_1 = L(\Gamma)$,
  \item $L_2 = \{a_1 a_2 \cdots a_{2k} \in \Sigma''^{2k} : k \in \Zp
    \text{ and }a_1a_3a_5 \cdots a_{2k-3}a_{2k-1} \in L_1\}$,
  \item $L_3 = \{xy : x \in \Sigma''^k\text{ and }y \in L_2\}$.
  \end{itemize}

  By definition of $L_2$, for every pair of strings $x \in \Sigma^{k}$
  and $y \in \Sigma''^{2k}$ (where $k \in \Zp$) such that $x =
  y[1]y[3] \cdots y[2k-3]y[2k-1]$, $x \in L_1$ holds if and only if $y
  \in L_2$.  By \cref{ob:alpha}, letting $u'$ be the length-$2|u|$
  suffix of $w$, we thus obtain that $u \in L_1$ if and only if $u'
  \in L_2$. On the other hand, by definition of $L_3$ and $k = |w| -
  2|u|$, $u' \in L_2$ holds if and only if $w \in L_3$. We have thus
  proved that $u \in L(\Gamma)$ holds if and only if $w \in L_3$.

  Our next goal is therefore to prove that there exists an CFG
  $\Gamma'$ of the desired size such that $L(\Gamma') = L_3$.  First,
  by \cref{lm:interleave} applied to $\Gamma$ and $|V|$, there exists
  a CFG $\Gamma_2$ satisfying $|\Gamma_2| = \bigO(|\Gamma| +
  |\Sigma''|)$ such that $L(\Gamma_2) = L_2$.  Second, by
  \cref{lm:add-prefix} applied to $\Gamma_2$ and $k$, there exists a
  CFG $\Gamma_3$ satisfying $|\Gamma_3| = \bigO(|\Gamma_2| +
  |\Sigma''| + \log k) = \bigO(|\Gamma| + |\Sigma''| + \log |u|)$ such
  that $L(\Gamma_3) = L_3$. Letting $\Gamma' = \Gamma_3$, we thus
  obtain the sought CFG. Note that, by \cref{def:alpha}, it holds $|w|
  \geq 2|\expgen{G}{S}| + 2 = 2|u| + 2$, and hence $k \geq 1$.  To
  show that $\log k = \bigO(\log |u|)$, first note that since we
  assumed that $G$ is admissible, it follows that $\sum_{X \in
  V}|\expgen{G}{X}| \leq |V| \cdot |u| \leq |u|^2$. By
  \cref{lm:alpha-length}, we thus obtain $k = |w| - 2|u| = 4\sum_{X \in
  V}|\expgen{G}{X}| - 2|u| \leq 4|u|^2$, and hence $\log k =
  \bigO(\log |u|)$.

  We construct $\Gamma'$ as follows. First, applying
  \cref{lm:interleave} to $\Gamma$ and $|V|$, we construct $\Gamma_2$
  in $\bigO(|\Gamma| + |\Sigma''|) = \bigO(|\Gamma| + |G|)$ time.
  Next, we compute $k$. This requires computing $|w| = 4\sum_{X \in
  V}|\expgen{G}{X}|$.  To this end, we first sort the nonterminals
  of $G$ topologically (this is possible, since $G$ is an SLG). For
  each $X \in V$, we then compute $|\expgen{G}{X}|$. This lets us
  determine $|w|$, and hence also $k$.  Note that since we assumed
  that $|u|$ fits in $\bigO(1)$ machine words and by the above
  discussion, we have $|\expgen{G}{X}| \leq |u|$ for every $X \in V$,
  the computation of $k$ takes $\bigO(|G|)$ time.  Once $k$ is
  computed, using \cref{lm:add-prefix} for $\Gamma_2$ and $k$, we
  compute $\Gamma_3 = \Gamma'$ in $\bigO(|\Gamma_2| + |\Sigma''| +
  \log k) = \bigO(|\Gamma| + |\Sigma''| + \log |u|)$ time. In total,
  we thus spend $\bigO(|\Gamma| + |\Sigma''| + |G| + \log |u|) =
  \bigO(|\Gamma| + |\Sigma| + |G|)$ time.
\end{proof}

\begin{theorem}\label{th:cfg-sequential}
  Let \algname{Alg} be either the \algname{Sequitur}, \algname{Sequential},
  or any of the global algorithms (e.g., \algname{RePair},
  \algname{Greedy}, or \algname{LongestMatch}).
  Let $\delta \in (0, 1]$.  Assuming the $k$-Clique Conjecture
  (resp.\ Combinatorial $k$-Clique Conjecture), there is no algorithm
  (resp.\ combinatorial algorithm) that, given $u \in \Sigma^{n}$ and
  a CFG $\Gamma$ such that $|\Gamma| = \bigO(n^{\delta})$, $|\Sigma| =
  \bigO(n^{\delta})$, and $|\algname{Alg}(u)| =
  \bigO(n^{\delta})$, determines if $u \in L(\Gamma)$ in
  $\bigO(n^{\omega-\epsilon})$ $($resp.\ $\bigO(n^{3 - \epsilon})$$)$
  time, for any $\epsilon > 0$.
\end{theorem}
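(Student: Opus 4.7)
The plan is to proceed by contradiction, following the template of \cref{th:cfg-lower-bound-simple-generalization} but replacing the $\bigO(\log n)$-approximation compressor by grammar boosting. Suppose for some $\epsilon > 0$ there is an algorithm that, under the stated size constraints, decides $u \in L(\Gamma)$ in $\bigO(n^{\omega-\epsilon})$ time (resp.\ $\bigO(n^{3-\epsilon})$ in the combinatorial case). Given an undirected graph $G = (V, E)$ and a constant $k$ to be fixed at the end, the goal is to reduce $3k$-clique detection in $G$ to a CFG-parsing instance meeting the theorem's hypotheses, solve it with the hypothetical algorithm, and conclude a $3k$-clique algorithm running in $\bigO(|V|^{k\omega(1-\epsilon')})$ (resp.\ $\bigO(|V|^{3k(1-\epsilon')})$) time for some $\epsilon' > 0$, contradicting the (Combinatorial) $k$-Clique Conjecture.

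The reduction is the composition of three constructions. First, apply \cref{lm:kclique-small-slg} to obtain a string $u$ with $|u| = \Theta(|V|^{k+2})$, a constant-size alphabet $\Sigma$, a CFG $\Gamma$ of size $\bigO(\log|V|)$, and an SLG $H = (V_H, \Sigma, R_H, S_H)$ with $L(H) = \{u\}$, $|H| = \bigO(|V|^3)$, and $\sum_{X \in V_H}|\expgen{H}{X}| = \bigO(|V|^{k+5})$, so that $u \in L(\Gamma)$ iff $G$ has a $3k$-clique; we may assume $H$ is admissible at no asymptotic cost. Second, use \cref{lm:construction-alpha} to produce $w \in \alpha(H)$ in $\bigO(|V|^{k+5})$ time, whose length by \cref{lm:alpha-length} is $\Theta(|V|^{k+5})$. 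Third, apply \cref{lm:alpha-construction-of-gamma-prim} to construct in $\bigO(|\Gamma| + |H|)$ time a CFG $\Gamma'$ over $\Sigma'' = \Sigma \cup \{\dol_i\}_{i \in [1 \dd |V_H|]} \cup \{\hash_i\}_{i \in [1 \dd 2|V_H|]}$ with $|\Gamma'| = \bigO(|V|^3)$ and $w \in L(\Gamma') \iff u \in L(\Gamma)$.

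The crucial ingredient, and where grammar boosting pays off, is that by \cref{lm:global-output-size}, \cref{lm:sequential}, and \cref{lm:sequitur}, we have $|\algname{Alg}(w)| = \tfrac{7}{2}|H| = \bigO(|V|^3)$ for every algorithm \algname{Alg} covered by the theorem, independent of its approximation ratio. Setting $n = |w|$, each of $|\Gamma'|$, $|\Sigma''|$, and $|\algname{Alg}(w)|$ is $\bigO(n^{3/(k+5)})$, which is $\bigO(n^\delta)$ as soon as $k \geq \lceil 3/\delta \rceil - 5$. Invoking the hypothetical algorithm then decides $3k$-clique in $G$ in $\bigO(|V|^{k+5} + |V|^{(k+5)(\omega-\epsilon)})$ (resp.\ $\bigO(|V|^{k+5} + |V|^{(k+5)(3-\epsilon)})$) time. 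A routine calculation shows that $(k+5)(\omega-\epsilon) < k\omega(1 - \epsilon')$ for $\epsilon' := \epsilon/(2\omega)$ and any $k > 10(\omega-\epsilon)/\epsilon$, and analogously $(k+5)(3-\epsilon) < 3k(1-\epsilon')$ for $\epsilon' := \epsilon/6$ and any $k > 30/\epsilon$; so a sufficiently large constant $k$ (depending on $\delta$ and $\epsilon$) delivers the promised contradiction. The main obstacle is ensuring $|\Gamma'|$, $|\Sigma''|$, and $|\algname{Alg}(w)|$ all fit simultaneously in $\bigO(n^\delta)$ without inflating the reduction itself beyond $\bigO(|V|^{k+5})$; this is exactly what the $\alpha$-boosting machinery provides, by tying both the alphabet of $w$ and the output of \algname{Alg} on $w$ to $|V_H| = \bigO(|H|)$ rather than to the (much larger) total expansion length, while \cref{lm:alpha-construction-of-gamma-prim} wraps $\Gamma$ into a $\Gamma'$ whose size is dominated by $|H|$ as well.
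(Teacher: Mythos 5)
Your proposal is correct and follows essentially the same three-stage reduction as the paper: \cref{lm:kclique-small-slg} to produce $(u, \Gamma, H)$, the $\alpha$-boosting of \cref{lm:construction-alpha} and \cref{lm:alpha-construction-of-gamma-prim} to obtain $(w, \Gamma')$ over the enlarged alphabet $\Sigma''$, and \cref{lm:sequitur,lm:sequential,lm:global-output-size} to certify $|\algname{Alg}(w)| = \bigO(|H|)$ independently of the approximation ratio, followed by the choice of a large enough constant $k$. The only imprecision is the phrase ``we may assume $H$ is admissible at no asymptotic cost'' and the resulting claim $|w| = \Theta(|V|^{k+5})$: the admissibility conversion of \cref{lm:admissible} does preserve $|H|$ up to constants, but it inflates $\sum_{X}|\expgen{H}{X}|$ (and hence $|w|$ via \cref{lm:alpha-length}) by a $\Theta(\log|H|)$ factor, so $|w| = \bigO(|V|^{k+5}\log|V|)$ rather than $\Theta(|V|^{k+5})$; this does not affect the conclusion since the polylogarithmic overhead is absorbed by the polynomial slack when $k$ is chosen large enough, exactly as in the paper's own calculation.
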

\begin{proof}
  We prove the claim by contraposition. Assume that there exists some
  $\epsilon > 0$ such that for every $u \in \Sigma^{n}$ and a CFG
  $\Gamma$ such that $|\Gamma| = \bigO(n^{\delta})$, $|\Sigma| =
  \bigO(n^{\delta})$, and $|\algname{Alg}(u)| =
  \bigO(n^{\delta})$, we can in $\bigO(n^{\omega - \epsilon})$
  (resp.\ $\bigO(n^{3 - \epsilon})$) time determine if $u \in
  L(\Gamma)$. We will prove that this implies that the $k$-Clique
  Conjecture (resp.\ Combinatorial $k$-Clique Conjecture) is false.

  Assume that we are given an undirected graph $G = (V, E)$. Let $k =
  \max(\lceil \tfrac{3}{\delta} \rceil, \lceil \tfrac{30}{\epsilon}
  \rceil)$.  We execute the following algorithm:
  \begin{enumerate}
  \item First, using \cref{lm:kclique-small-slg} in $\bigO(|V|^3)$
    time we build a CFG $\Gamma$ and an SLG $H = (V_H, \Sigma, R_H,
    S_H)$ such that letting $L(H) = \{u\}$, it holds:
    \begin{itemize}[itemsep=0pt,parsep=0pt]
    \item $|\Sigma| = \bigO(1)$,
    \item $|\Gamma| = \bigO(\log |V|)$,
    \item $|H| = \bigO(|V|^3)$,
    \item $|u| = \Theta(|V|^{k+2})$,
    \item $\sum_{X \in V_H}|\expgen{H}{X}| = \bigO(|V|^{k+5})$, and
    \item $u \in L(\Gamma)$ holds if and only if $G$ has a
      $3k$-clique.
    \end{itemize}

  \item Next, using \cref{lm:admissible}, in $\bigO(|H|) =
    \bigO(|V|^3)$ time we build an admissible SLG $H' = (V_{H'},
    \Sigma, R_{H'}, S_{H'})$ such that
    \begin{itemize}[itemsep=0pt,parsep=0pt]
    \item $L(H') = L(H) = \{u\}$,
    \item $|H'| = \bigO(|H|) = \bigO(|V|^3)$,
    \item $\sum_{X \in V_{H'}}|\expgen{H'}{X}| = \bigO(\log |H| \cdot
      \sum_{X \in V_H}|\expgen{H}{X}|) = \bigO(|V|^{k+5} \log |V|)$.
    \end{itemize}

  \item Next, using \cref{lm:construction-alpha} in $\bigO(\sum_{X \in
    V_{H'}}|\expgen{H'}{X}|) = \bigO(|V|^{k+5} \log |V|)$ time we
    construct $w \in \alpha(H')$ (\cref{def:alpha}). Note that by \cref{lm:alpha-length},
    it holds $|w| = \bigO(\sum_{X \in V_{H'}}|\expgen{H'}{X}|) =
    \bigO(|V|^{k+5} \log |V|)$. Note also that $w \in \Sigma''$, where
    $|\Sigma''| = \bigO(|\Sigma| + |V|) = \bigO(|V|) =
    \bigO(|u|^{\delta/3}) = \bigO(|u|^{\delta})$, where we used that
    $|V| = \Theta(|u|^{1/(k+2)})$ and $3/\delta \leq k$ (which implies
    $1/(k+2) \leq \delta/3$).

  \item Next, applying \cref{lm:alpha-construction-of-gamma-prim} to
    $\Gamma$ and $H'$, in $\bigO(|\Gamma| + |\Sigma| + |H'|) =
    \bigO(|V|^3)$ time we construct a CFG $\Gamma'$ such that:
    \begin{itemize}[itemsep=0pt,parsep=0pt]
    \item $|\Gamma'| = \bigO(|\Gamma| + |\Sigma''| + \log |u|) =
      \bigO(|V|) = \bigO(|u|^{\delta})$,
    \item $u \in L(\Gamma)$ holds if and only if $w \in L(\Gamma')$.
    \end{itemize}
    Combining the above with the earlier observation, we thus have
    that $w \in L(\Gamma')$ holds if and only if $G$ has a
    $3k$-clique. Observe also that by
    \cref{lm:sequitur,lm:sequential,lm:global-output-size},
    it holds $|\algname{Alg}(w)| = \bigO(|H'|) = \bigO(|V|^3) =
    \bigO(|u|^{\delta})$, where the last inequality follows by $|V|^3
    = \Theta(|u|^{3/(k+2)})$ and $3/\delta \leq k$ (which implies
    $3/(k+2) \leq \delta$).

  \item Finally, we apply the hypothetical algorithm for CFG parsing
    to $w$ and $\Gamma'$. More precisely, we check if $w \in
    L(\Gamma')$ in $\bigO(|w|^{\omega-\epsilon}) =
    \bigO(|V|^{(k+5)(\omega-\epsilon)} \log^{\omega-\epsilon} |V|) =
    \bigO(|V|^{k(\omega-\epsilon) + 5\omega}) =\allowbreak
    \bigO(|V|^{\omega k(1 - \epsilon/(2\omega)) - (k\epsilon/2 -
    5\omega)}) = \bigO(|V|^{\omega k(1 - \epsilon/(2\omega))})$
    (resp.\ $\bigO(|w|^{3 - \epsilon}) \,{=}\, \bigO(|V|^{(k+5)(3 -
    \epsilon)} \log^{3-\epsilon} |V|) \allowbreak=
    \bigO(|V|^{k(3-\epsilon) + 15)}) \allowbreak=
    \bigO(|V|^{3k(1 - \epsilon/6) - (k\epsilon/2 - 15)}) =
    \bigO(|V|^{3k(1 - \epsilon/6)})$) time, where in the last
    inequality, we used that $k \geq \lceil \tfrac{30}{\epsilon}
    \rceil$ (which implies $k\epsilon/2 \geq 15$).
  \end{enumerate}
  We have thus checked if $G$ contains a $3k$-clique in
  $\bigO(|V|^{k+5} \log |V| + |V|^{\omega k(1 - \epsilon/(2\omega))})
  = \bigO(|V|^{\omega k(1 - \epsilon')})$ (resp.\
  $\bigO(|V|^{k+5} \log |V| + |V|^{3k(1 - \epsilon/6)})
  = \bigO(|V|^{3k(1 - \epsilon')})$) time, where $\epsilon' > 0$ is
  some constant (note that we used that $k \geq 6$). This implies
  that the $k$-Clique Conjecture (resp.\ Combinatorial $k$-Clique
  Conjecture) is false.
\end{proof}

\subsection{Analysis of \algname{LZD}}\label{sec:cfg-lzd}

\begin{lemma}\label{lm:construction-beta}
  Let $G = (V, \Sigma, R, S)$ be an admissible SLG and assume that
  $6\sum_{X \in V}|\expgen{G}{X}|$ fits into a machine word.  Given
  $G$, we can compute some $w \in \beta(G)$ (\cref{def:beta}) in
  $\bigO(\sum_{X \in V}|\expgen{G}{X}|)$ time.
\end{lemma}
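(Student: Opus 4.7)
The plan is to mirror the three-step construction of \cref{lm:construction-alpha}, adapted to the auxiliary grammar $G'$ of \cref{def:beta}. First, I would preprocess $G$ so as to materialize an ordering $(N_i)_{i \in [1 \dd |V|]}$ of the nonterminals of $V$ satisfying $|\expgen{G}{N_1}| \leq \dots \leq |\expgen{G}{N_{|V|}}|$. To do this, I would perform a topological sort of the DAG of nonterminals (where $X$ has an edge to every $Y$ appearing in $\rhsgen{G}{X}$) in $\bigO(|G|)$ time, then compute $|\expgen{G}{X}|$ for every $X \in V$ in $\bigO(|G|)$ time using a single bottom-up pass, and finally sort $V$ by this key. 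Since the keys are integers bounded by $\sum_{X \in V}|\expgen{G}{X}|$, which by assumption fits into a machine word, a radix sort runs in $\bigO(|V| + \sum_{X \in V}|\expgen{G}{X}|) = \bigO(\sum_{X \in V}|\expgen{G}{X}|)$ time.

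Second, I would construct the auxiliary SLG $G' = (V', \Sigma', R', S')$ of \cref{def:beta}, where $V'$ contains $3|V| + 1$ variables $S', N_{i,0}, N_{i,1}, N_{i,2}$ (for $i \in [1 \dd |V|]$). Given the ordering from the previous step, this is a direct translation of the rules in \cref{def:beta}: for each $i$, letting $A, B \in V \cup \Sigma$ be such that $\rhsgen{G}{N_i} = AB$ (these are directly read off from $G$, since $G$ is admissible), we output the three productions defining $N_{i,0}, N_{i,1}, N_{i,2}$, and finally we concatenate $N_{i,1} N_{i,2} N_{i,0}$ into $\rhsgen{G'}{S'}$. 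This phase clearly runs in $\bigO(|G|) = \bigO(|V|)$ time.

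Third, I would materialize the output string $w = \bigodot_{i=1,\dots,|V|} \expgen{G'}{N_{i,0}} \expgen{G'}{N_{i,0}}$ by performing a depth-first traversal of the parse trees of $N_{i,0}$ in $G'$, emitting each terminal in $\Sigma'$ as it is visited, for every $i \in [1 \dd |V|]$, each expansion emitted twice. Because the traversal touches exactly one node per symbol produced, the total running time is $\bigO(|w|)$. By \cref{lm:beta-length}\eqref{lm:beta-length-it-2}, we have $|w| = 6\sum_{X \in V}|\expgen{G}{X}| - 4|V| = \bigO(\sum_{X \in V}|\expgen{G}{X}|)$, and the resulting $w$ belongs to $\beta(G)$ by construction.

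There is no substantive obstacle in this proof: each step is a routine transfer of the technique used for $\alpha(G)$ in \cref{lm:construction-alpha}, with the only arithmetic content being the bound $|w| = \bigO(\sum_{X \in V}|\expgen{G}{X}|)$ that is supplied by \cref{lm:beta-length}. The only mildly delicate point is ensuring that the topological sort and radix sort fit within the target time budget; this is handled by the machine-word assumption on $6\sum_{X \in V}|\expgen{G}{X}|$, which bounds all intermediate numerical values (sizes, indices into $w$, and sort keys) by quantities that fit in $\bigO(1)$ machine words.
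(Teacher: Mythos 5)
Your proof follows the paper's argument step for step: order the nonterminals by expansion length via a topological pass and radix sort, build $G'$ as in \cref{def:beta} in $\bigO(|G|)$ time, and emit $w$ by traversing the parse trees, charging $\bigO(1)$ per output symbol and appealing to \cref{lm:beta-length}\eqref{lm:beta-length-it-2} for the length bound. This is the same construction the paper gives (it simply defers the sorting details to \cref{lm:construction-alpha}), so there is nothing to reconcile.
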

\begin{proof}
  Assume that $G$ is given using an encoding in which nonterminals are
  identified with consecutive positive integers. The construction of
  $w \in \beta(G)$ consist of three steps:
  \begin{enumerate}[itemsep=0pt,parsep=0pt]
  \item First, as in the proof of \cref{lm:construction-alpha}, in
    $\bigO(|G|)$ time we compute $|\expgen{G}{X}|$ for every $X \in V$
    and sort all $X \in V$ using $|\expgen{G}{X}|$ as the key (with
    ties resolved arbitrarily). Let $\{N_i\}_{i \in [1 \dd |V|]}$
    denote the resulting sequence.
  \item Second, we construct $G'$ defined as in \cref{def:beta}. This
    is easily done in $\bigO(|G|)$ time.
  \item Lastly, we compute the output string $w =
    \bigodot_{i=1,\dots,|V|} \expgen{G'}{N_{i,0}}
    \expgen{G'}{N_{i,0}}$.  Using $G'$ this is easily done in
    $\bigO(|w|)$ time, which is $\bigO(\sum_{X \in V}|\expgen{G}{X}|)$
    by \cref{lm:beta-length}\eqref{lm:beta-length-it-2}.  \qedhere
  \end{enumerate}
\end{proof}

\begin{observation}\label{ob:beta}
  Let $G = (V, \Sigma, R, S)$ be an admissible SLG such that $\Sigma
  \cap \{\dol_i\}_{i \in [1 \dd 2|V|]} = \emptyset$. Let $u \in
  \Sigma^{+}$ be such that $L(G) = \{u\}$ and let $w \in \beta(G)$
  (\cref{def:beta}).
  Let $u'$ be a suffix of $w$ of length $3|u| - 2$. Then, erasing all
  occurrences of the symbols from the set $\{\dol_i\}_{i \in [1 \dd
  2|V|]}$ in $u'$ results in the string $u$.
\end{observation}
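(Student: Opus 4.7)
The plan is to first identify exactly which suffix of $w$ has length $3|u|-2$, and then show by induction on the ordering of nonterminals that erasing all sentinel symbols from the corresponding expansion recovers $u$.

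For the first step, I would observe that since $G$ is admissible, every nonterminal of $V$ occurs in the parse tree $\mathcal{T}(S)$, which forces $|\expgen{G}{X}| \le |\expgen{G}{S}|$ for every $X \in V$. Combined with the ordering $|\expgen{G}{N_1}| \le \dots \le |\expgen{G}{N_{|V|}}|$ guaranteed by \cref{def:beta}, this gives $N_{|V|} = S$ (the ties, if any, can be broken so that $S$ is last, since $w \in \beta(G)$ only requires existence of some valid ordering; any ordering will also work because the statement only depends on the length of the suffix). By \cref{lm:beta-length}\eqref{lm:beta-length-it-1}, $|\expgen{G'}{N_{|V|,0}}| = 3|\expgen{G}{S}| - 2 = 3|u| - 2$, so the length-$(3|u|-2)$ suffix $u'$ of $w$ is precisely the final copy of $\expgen{G'}{N_{|V|,0}}$ appearing in the concatenation $w = \bigodot_{i=1,\dots,|V|} \expgen{G'}{N_{i,0}} \expgen{G'}{N_{i,0}}$. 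Hence it suffices to prove $\mathrm{erase}(\expgen{G'}{N_{|V|,0}}, \{\dol_i\}) = u = \expgen{G}{N_{|V|}}$.

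The second step is a routine induction on $i \in [1 \dd |V|]$ establishing the stronger claim: for every $X \in V \cup \Sigma$, erasing all occurrences of symbols in $\{\dol_i\}_{i \in [1 \dd 2|V|]}$ from $\expgen{G'}{X}$ yields $\expgen{G}{X}$. The base case handles $X \in \Sigma$ (where $\expgen{G'}{X} = X = \expgen{G}{X}$, and $X \notin \{\dol_i\}$ by the disjointness assumption on $\Sigma$). For the inductive step on $N_i$, let $A, B \in V \cup \Sigma$ with $\rhsgen{G}{N_i} = AB$. By \cref{def:beta}, $\rhsgen{G'}{N_{i,0}} = N_{i,1} N_{i,2}$, and $\expgen{G'}{N_{i,1}}$ equals either $\expgen{G'}{N_{j,0}} \dol_{2i-1}$ (when $A = N_j$) or $A \dol_{2i-1}$ (when $A \in \Sigma$); in both sub-cases the inductive hypothesis applied to $A$ (which has strictly smaller expansion length and hence corresponds to some $N_j$ with $j < i$, or lies in $\Sigma$) yields that erasing sentinels from $\expgen{G'}{N_{i,1}}$ gives exactly $\expgen{G}{A}$. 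The symmetric argument handles $N_{i,2}$ and produces $\expgen{G}{B}$. Concatenating, erasing from $\expgen{G'}{N_{i,0}}$ yields $\expgen{G}{A} \cdot \expgen{G}{B} = \expgen{G}{N_i}$, as required.

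The hard part is essentially bookkeeping: verifying the disjointness hypothesis ($\Sigma \cap \{\dol_i\} = \emptyset$, assumed in the statement) so that the erase operation does not accidentally delete a terminal of $u$, and confirming that the sentinels appearing inside the expansions $\expgen{G'}{N_{j,0}}$ for $j < i$ are all from $\{\dol_t\}_{t \in [1 \dd 2j]}$ and so are indeed erased when we process $\expgen{G'}{N_{i,b}}$. Once these points are checked, applying the induction at $i = |V|$ and using the suffix identification from the first step completes the proof.
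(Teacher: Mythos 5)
Your proposal is correct and follows essentially the same strategy as the paper: identify the length-$(3|u|-2)$ suffix as $\expgen{G'}{N_{|V|,0}}$ via $N_{|V|}=S$ and \cref{lm:beta-length}, and then argue that erasing the sentinels from $\expgen{G'}{N_{|V|,0}}$ recovers $u$. The only differences are in presentation. First, your handling of $N_{|V|}=S$ goes through a non-strict inequality $|\expgen{G}{X}|\le|\expgen{G}{S}|$ and then a parenthetical about tie-breaking; the paper simply observes that admissibility (every $X \ne S$ occurs as a proper descendant of $S$ in $\mathcal{T}(S)$, and all rules have length $2$) yields the \emph{strict} inequality $|\expgen{G}{X}| < |\expgen{G}{S}|$, which makes $N_{|V|}=S$ unambiguous without any discussion of ties. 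Second, where the paper tersely asserts (citing \cref{def:beta}) that $u$ is a subsequence of $\expgen{G'}{N_{|V|,0}}$ with all extra symbols in $\{\dol_i\}_{i\in[1\dd 2|V|]}$, you flesh this out with an explicit induction on $i$; that induction is correct (including the observation that any $N_j$ appearing in $\rhsgen{G}{N_i}$ has strictly smaller expansion, hence $j<i$), and it is a valid, if more verbose, justification of the same fact.
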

\begin{proof}
  Let $G'$ be the SLG and $(N_i)_{i \in [1 \dd |V|]}$ be a sequence of
  nonterminals corresponding to $w$ in \cref{def:beta}. The assumption
  about $\mathcal{T}(S)$ implies that for every $X \in V \setminus
  \{S\}$, it holds $|\expgen{G}{X}| < |\expgen{G}{S}|$. Thus, $N_{|V|}
  = S$. By \cref{def:beta}, the string $\expgen{G'}{N_{|V|,0}}$ is
  thus a suffix of $w$. Observe now that by
  \cref{lm:beta-length}\eqref{lm:beta-length-it-1}, it holds
  $|\expgen{G'}{N_{|V|,0}}| = 3|\expgen{G}{N_{|V|}}| - 2 =
  3|\expgen{G}{S}| - 2 = 3|u| - 2$. Consequently, $u' =
  \expgen{G'}{N_{|V|,0}}$.  It remains to observe that by
  \cref{def:beta}, $\expgen{G}{S} = u$ is a subsequence of
  $\expgen{G'}{N_{|V|,0}} = u'$. The claim thus follows by noting that
  all the extra symbols in $\expgen{G'}{N_{|V|,0}}$ are from the set
  $\{\dol_i\}_{i \in [1 \dd 2|V|]}$.
\end{proof}

\begin{lemma}\label{lm:beta-construction-of-gamma-prim}
  Let $u \in \Sigma^{+}$ and $G = (V, \Sigma, R, S)$ be an admissible
  SLG such that $L(G) = \{u\}$ and $\Sigma \cap \{\dol_i\}_{i \in [1
  \dd 2|V|]} = \emptyset$. Denote $\Sigma' = \Sigma \cup
  \{\dol_i\}_{i \in [1 \dd 2|V|]}$. Let $\Gamma$ be a CFG such that
  $L(\Gamma) \subseteq \Sigma^{*}$. There exists a CFG $\Gamma'$ such
  that $|\Gamma'| = \bigO(|\Gamma| + |\Sigma'| + \log |u|)$, and for
  every $w \in \beta(G)$ (\cref{def:beta}),
  $u \in L(\Gamma)$ holds if and only if $w
  \in L(\Gamma')$. Moreover, given $G$ and $\Gamma$, and assuming that
  $|u|$ fits in $\bigO(1)$ machine words, we can construct $\Gamma'$
  in $\bigO(|\Gamma| + |\Sigma| + |G|)$ time.
\end{lemma}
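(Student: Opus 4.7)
The plan is to mirror the structure of \cref{lm:alpha-construction-of-gamma-prim}, but using \cref{ob:beta} in place of \cref{ob:alpha} and \cref{lm:extra-dollars} in place of \cref{lm:interleave}. The key point is that for every $w \in \beta(G)$, the length-$(3|u|-2)$ suffix of $w$ differs from $u$ only by interspersed symbols from $\{\dol_i\}_{i \in [1 \dd 2|V|]}$, so checking $u \in L(\Gamma)$ reduces to checking whether $w$ has the right prefix length followed by a string in a ``dollar-ignoring'' variant of $L(\Gamma)$.

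Concretely, fix $w \in \beta(G)$ and let $k = |w| - (3|u|-2)$. Define the languages
\[
  L_1 = L(\Gamma),
  \quad
  L_2 = \{v \in \Sigma'^{*} : {\rm erase}(v, \{\dol_i\}_{i \in [1 \dd 2|V|]}) \in L_1\},
  \quad
  L_3 = \{xy : x \in \Sigma'^{k},\ y \in L_2\}.
\]
By \cref{ob:beta}, the length-$(3|u|-2)$ suffix $u'$ of $w$ satisfies ${\rm erase}(u', \{\dol_i\}) = u$, so $u \in L_1 \Longleftrightarrow u' \in L_2 \Longleftrightarrow w \in L_3$, which is precisely the equivalence we need. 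I then construct a CFG for $L_3$ by first applying \cref{lm:extra-dollars} to $\Gamma$ and $|V|$ to obtain a CFG $\Gamma_2$ with $L(\Gamma_2) = L_2$ and $|\Gamma_2| = \bigO(|\Gamma| + |V|)$, and then applying \cref{lm:add-prefix} to $\Gamma_2$ and $k$ to obtain a CFG $\Gamma' = \Gamma_3$ with $L(\Gamma') = L_3$ and $|\Gamma'| = \bigO(|\Gamma_2| + |\Sigma'| + \log k) = \bigO(|\Gamma| + |\Sigma'| + \log k)$, where I used $|V| \leq |\Sigma'|$.

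To finish the size bound, I argue $\log k = \bigO(\log |u|)$: since $G$ is admissible, $\sum_{X \in V}|\expgen{G}{X}| \leq |V| \cdot |u| \leq |u|^2$, so by \cref{lm:beta-length}\eqref{lm:beta-length-it-2} we have $k \leq |w| = 6\sum_{X \in V}|\expgen{G}{X}| - 4|V| \leq 6|u|^2$.

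For the construction time, I first compute $k$ from $G$. This amounts to computing $|w|$, which by \cref{lm:beta-length}\eqref{lm:beta-length-it-2} requires computing $|\expgen{G}{X}|$ for every $X \in V$; this is done in $\bigO(|G|)$ time via a topological sort of the grammar DAG, and the assumption that $|u|$ fits in $\bigO(1)$ machine words (together with $|\expgen{G}{X}| \leq |u|$ for all $X$) ensures that all arithmetic is $\bigO(1)$. Then \cref{lm:extra-dollars} produces $\Gamma_2$ in $\bigO(|\Gamma| + |V|)$ time, and \cref{lm:add-prefix} produces $\Gamma'$ in $\bigO(|\Gamma_2| + |\Sigma'| + \log k) = \bigO(|\Gamma| + |\Sigma'| + \log |u|)$ time. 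Combining, the overall runtime is $\bigO(|\Gamma| + |\Sigma'| + |G|) = \bigO(|\Gamma| + |\Sigma| + |G|)$, where the last equality uses $|\Sigma'| = |\Sigma| + 2|V| = \bigO(|\Sigma| + |G|)$.

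The main obstacle (which is really only bookkeeping) is verifying that the crude bound $\log k = \bigO(\log |u|)$ suffices; once this is settled, the rest of the argument is a direct combination of \cref{ob:beta}, \cref{lm:extra-dollars}, and \cref{lm:add-prefix}, strictly parallel to the $\alpha$-version in \cref{lm:alpha-construction-of-gamma-prim}.
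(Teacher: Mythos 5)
Your proof follows the same decomposition, the same three-language construction $L_1, L_2, L_3$, the same use of \cref{ob:beta}, \cref{lm:extra-dollars}, and \cref{lm:add-prefix}, and the same bound $\log k = \bigO(\log |u|)$ as the paper's own argument. The only detail you omit is verifying $k \geq 1$ before invoking \cref{lm:add-prefix} (which the paper checks via $|w| \geq 6|u| - 4$ from \cref{lm:beta-length}); this is a one-line addition and does not affect the correctness of the overall approach.
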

\begin{proof}
  Let $w \in \beta(G)$ and $k = |w| - 3|u| + 2$. We define
  \begin{itemize}[itemsep=0pt,parsep=0pt]
  \item $L_1 = L(\Gamma)$,
  \item $L_2 = \{x \in \Sigma'^{*}: {\rm erase}(x, \{\dol_i\}_{i \in
    [1 \dd 2|V|]}) \in L_1\}$ (where ${\rm erase}$ is defined as in
    \cref{lm:extra-dollars}),
  \item $L_3 = \{xy : x \in \Sigma'^k\text{ and }y \in L_2\}$.
  \end{itemize}

  By definition of $L_2$, for every pair of strings $x \in \Sigma^{*}$
  and $y \in \Sigma'^{*}$ such that ${\rm erase}(y, \{\dol_i\}_{i \in
  [1 \dd 2|V|]}) = x$, $x \in L_1$ if and only if $y \in L_2$.  By
  \cref{ob:beta}, letting $u'$ be suffix of $w$ of length $3|u| - 2$,
  we thus obtain that $u \in L_1$ if and only if $u' \in L_2$. On the
  other hand, by definition of $L_3$ and $k = |w| - 3|u| + 2$, $u' \in
  L_2$ holds if and only if $w \in L_3$. We have thus proved that $u
  \in L(\Gamma)$ holds if and only if $w \in L_3$.

  Our next goal is therefore to prove that there exists an CFG
  $\Gamma'$ of the desired size such that $L(\Gamma') = L_3$.  First,
  by \cref{lm:extra-dollars} applied to $\Gamma$ (recall that $L_1 =
  L(\Gamma)$), there exists a CFG $\Gamma_2$ satisfying $|\Gamma_2| =
  \bigO(|\Gamma| + |\Sigma'|)$ such that $L(\Gamma_2) = L_2$.  Second,
  by \cref{lm:add-prefix} applied to $\Gamma_2$ and $k$, there exists
  a CFG $\Gamma_3$ satisfying $|\Gamma_3| = \bigO(|\Gamma_2| +
  |\Sigma'| + \log k) = \bigO(|\Gamma| + |\Sigma'| + \log |u|)$ such
  that $L(\Gamma_3) = L_3$. Letting $\Gamma' = \Gamma_3$, we thus
  obtain the sought CFG. Note that, by
  \cref{lm:beta-length}\eqref{lm:beta-length-it-2}, it holds $|w| \geq
  6|\expgen{G}{S}| - 4 = 6|u| - 4$, and hence $k \geq 1$.  To show
  that $\log k = \bigO(\log |u|)$, first note that since we assumed
  that $G$ is admissible, it follows that $\sum_{X \in
  V}|\expgen{G}{X}| \leq |V| \cdot |u| \leq |u|^2$. By
  \cref{lm:beta-length}\eqref{lm:beta-length-it-2}, we thus obtain $k
  = |w| - 3|u| + 2 = 6\sum_{X \in V}|\expgen{G}{X}| - 4|V| - 3|u| + 2
  \leq 6|u|^2$, and hence $\log k = \bigO(\log |u|)$.

  We construct $\Gamma'$ as follows. First, applying
  \cref{lm:interleave} to $\Gamma$, we construct $\Gamma_2$ in
  $\bigO(|\Gamma| + |\Sigma'|) = \bigO(|\Gamma| + |\Sigma| + |G|)$
  time. Next, we compute $k$. This requires computing $|w| = 6\sum_{X
  \in V}|\expgen{G}{X}| - 4|V|$.  To this end, we first sort the
  nonterminals of $G$ topologically (this is possible, since $G$ is an
  SLG). For each $X \in V$, we then compute $|\expgen{G}{X}|$. This
  lets us determine $|w|$, and hence also $k$.  Note that since we
  assumed that $|u|$ fits in $\bigO(1)$ machine words and by the above
  discussion, we have $|\expgen{G}{X}| \leq |u|$ for every $X \in V$,
  the computation of $k$ takes $\bigO(|G|)$ time. Once $k$ is
  computed, applying \cref{lm:add-prefix} to $\Gamma_2$ and $k$, we
  compute $\Gamma_3 = \Gamma'$ in $\bigO(|\Gamma_2| + |\Sigma'| + \log
  k) = \bigO(|\Gamma| + |\Sigma'| + \log |u|) = \bigO(|\Gamma| +
  |\Sigma| + |G|)$ time. In total, we thus spend $\bigO(|\Gamma| +
  |\Sigma| + |G|)$ time.
\end{proof}

\begin{theorem}\label{th:cfg-lzd}
  Let $\delta \in (0, 1]$.  Assuming the $k$-Clique Conjecture
  (resp.\ Combinatorial $k$-Clique Conjecture), there is no algorithm
  (resp.\ combinatorial algorithm) that, given $u \in \Sigma^{n}$ and
  a CFG $\Gamma$ such that $|\Gamma| = \bigO(n^{\delta})$, $|\Sigma| =
  \bigO(n^{\delta})$, and $|\algname{LZD}(u)| = \bigO(n^{\delta})$,
  determines if $u \in L(\Gamma)$ in $\bigO(n^{\omega-\epsilon})$
  $($resp.\ $\bigO(n^{3 - \epsilon})$$)$ time, for any $\epsilon > 0$.
\end{theorem}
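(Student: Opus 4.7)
The plan is to prove this by contraposition, mirroring the structure of the proof of \cref{th:cfg-sequential}, with the \algname{Sequential}/\algname{Sequitur}/global machinery (based on $\alpha(G)$) replaced throughout by its \algname{LZD}-counterpart (based on $\beta(G)$). Concretely, suppose that for some $\epsilon>0$ there is an algorithm that, for every $u\in\Sigma^n$ and CFG $\Gamma$ with $|\Gamma|,|\Sigma|,|\algname{LZD}(u)|=\bigO(n^\delta)$, decides $u\in L(\Gamma)$ in $\bigO(n^{\omega-\epsilon})$ (resp.\ $\bigO(n^{3-\epsilon})$) time. Given an undirected graph $G=(V,E)$, I would set $k=\max(\lceil 3/\delta\rceil,\lceil 30/\epsilon\rceil)$ and proceed through the following chain of reductions.

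First, invoke \cref{lm:kclique-small-slg} to obtain, in $\bigO(|V|^3)$ time, a CFG $\Gamma$ of size $\bigO(\log|V|)$ and an SLG $H$ of size $\bigO(|V|^3)$ with $L(H)=\{u\}$, $|u|=\Theta(|V|^{k+2})$, $\sum_{X\in V_H}|\expgen{H}{X}|=\bigO(|V|^{k+5})$, and $u\in L(\Gamma)$ iff $G$ has a $3k$-clique. Next, convert $H$ into an admissible SLG $H'$ of size $\bigO(|V|^3)$ via \cref{lm:admissible}, paying at most a $\log|V|$ factor in the total expansion length so that $\sum_{X\in V_{H'}}|\expgen{H'}{X}|=\bigO(|V|^{k+5}\log|V|)$. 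Then apply \cref{lm:construction-beta} to build a string $w\in\beta(H')$ in time $\bigO(\sum_{X\in V_{H'}}|\expgen{H'}{X}|)$; by \cref{lm:beta-length}\eqref{lm:beta-length-it-2}, $|w|=\bigO(|V|^{k+5}\log|V|)$, and its alphabet $\Sigma'=\Sigma\cup\{\dol_i\}_{i\in[1\dd 2|V_{H'}|]}$ has size $\bigO(|V|)=\bigO(|u|^\delta)$ because $|V|=\Theta(|u|^{1/(k+2)})$ and $1/(k+2)\le\delta$.

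Now apply \cref{lm:beta-construction-of-gamma-prim} to $H'$ and $\Gamma$ to obtain, in $\bigO(|\Gamma|+|\Sigma|+|H'|)=\bigO(|V|^3)$ time, a CFG $\Gamma'$ of size $\bigO(|\Gamma|+|\Sigma'|+\log|u|)=\bigO(|V|)=\bigO(|u|^\delta)$ such that $w\in L(\Gamma')$ iff $u\in L(\Gamma)$; hence $w\in L(\Gamma')$ iff $G$ has a $3k$-clique. Crucially, \cref{lm:lzd} gives $|\algname{LZD}(w)|=\tfrac{9}{2}|H'|=\bigO(|V|^3)=\bigO(|u|^\delta)=\bigO(|w|^\delta)$ (using $3/(k+2)\le\delta$), so the instance $(w,\Gamma')$ satisfies the assumptions of the hypothetical algorithm. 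Feeding it in yields a decision in time $\bigO(|w|^{\omega-\epsilon})=\bigO(|V|^{(k+5)(\omega-\epsilon)}\log^{\omega-\epsilon}|V|)=\bigO(|V|^{\omega k(1-\epsilon/(2\omega))})$ (resp.\ $\bigO(|V|^{3k(1-\epsilon/6)})$), where the constant overhead $5\omega$ (resp.\ $15$) is absorbed via $k\ge 30/\epsilon$.

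Combining the stages, $G$ can be tested for a $3k$-clique in $\bigO(|V|^{\omega k(1-\epsilon')})$ (resp.\ $\bigO(|V|^{3k(1-\epsilon')})$) time for some constant $\epsilon'>0$, contradicting the ($k$-Clique or Combinatorial $k$-Clique) conjecture. The only delicate points are (i) checking that the size bounds produced by $\beta$-boosting are slightly weaker than those for $\alpha$-boosting (the factor $\tfrac{9}{2}$ rather than $\tfrac{7}{2}$, and $|w|\le 6\sum_X|\expgen{G}{X}|$ rather than $4\sum_X|\expgen{G}{X}|$) but these only affect constants and are absorbed by choosing $k$ large enough; and (ii) verifying that the symbol class of $\beta(G)$, which lacks the $\hash$-separators used in $\alpha(G)$, is still handled correctly by \cref{lm:beta-construction-of-gamma-prim}, which uses \cref{lm:extra-dollars} to ``erase'' the $\dol_i$ symbols. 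No other obstacle arises; the argument is a direct transplant of the $\alpha$-based proof with the LZD-tailored lemmas substituted in.
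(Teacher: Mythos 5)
Your proof is correct and follows essentially the same route as the paper's own argument: the same choice of $k$, the same chain of reductions through \cref{lm:kclique-small-slg}, \cref{lm:admissible}, \cref{lm:construction-beta}, \cref{lm:beta-construction-of-gamma-prim}, and \cref{lm:lzd}, and the same final contradiction with the ($k$-Clique/Combinatorial $k$-Clique) conjecture. Your closing remarks about the weaker $\beta$-boosting constants being absorbed by a larger $k$, and about \cref{lm:extra-dollars} replacing the odd-position interleaving of the $\alpha$-construction, correctly identify the only points where the LZD case deviates from the \algname{Sequential}/\algname{Sequitur}/global case.
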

\begin{proof}
  We prove the claim by contraposition. Assume that there exists some
  $\epsilon > 0$ such that for every $u \in \Sigma^{n}$ and a CFG
  $\Gamma$ such that $|\Gamma| = \bigO(n^{\delta})$, $|\Sigma| =
  \bigO(n^{\delta})$, and $|\algname{LZD}(u)| = \bigO(n^{\delta})$, we
  can in $\bigO(n^{\omega - \epsilon})$ (resp.\ $\bigO(n^{3 -
  \epsilon})$) time determine if $u \in L(\Gamma)$. We will prove
  that this implies that the $k$-Clique Conjecture
  (resp.\ Combinatorial $k$-Clique Conjecture) is false.

  Assume that we are given an undirected graph $G = (V, E)$. Let $k =
  \max(\lceil \tfrac{3}{\delta} \rceil, \lceil \tfrac{30}{\epsilon}
  \rceil)$.  We execute the following algorithm:

  \begin{enumerate}
  \item First, using \cref{lm:kclique-small-slg} in $\bigO(|V|^3)$
    time we build a CFG $\Gamma$ and an SLG $H = (V_H, \Sigma, R_H,
    S_H)$ such that letting $L(H) = \{u\}$, it holds $|\Sigma| =
    \bigO(1)$, $|\Gamma| = \bigO(\log |V|)$, $|H| = \bigO(|V|^3)$,
    $|u| = \Theta(|V|^{k+2})$, $\sum_{X \in V_H}|\expgen{H}{X}|
    = (|V|^{k+5})$, and $u \in L(\Gamma)$ holds if and only if $G$
    has a $3k$-clique.

  \item Next, using \cref{lm:admissible}, in $\bigO(|H|) =
    \bigO(|V|^3)$ time we build an admissible SLG $H' = (V_{H'},
    \Sigma, R_{H'}, S_{H'})$ that satisfies $L(H') = L(H) = \{u\}$,
    $|H'| = \bigO(|H|) = \bigO(|V|^3)$, and $\sum_{X \in
    V_{H'}}|\expgen{H'}{X}| = \bigO(\log |H| \cdot \sum_{X \in
    V_H}|\expgen{H}{X}|) = \bigO(|V|^{k+5} \log |V|)$.

  \item Next, using \cref{lm:construction-beta} in $\bigO(\sum_{X \in
    V_{H'}}|\expgen{H'}{X}|) = \bigO(|V|^{k+5} \log |V|)$ time we
    construct $w \in \beta(H')$ (\cref{def:beta}).
    Note that by \cref{lm:beta-length},
    it holds $|w| = \bigO(\sum_{X \in V_{H'}}|\expgen{H'}{X}|) =
    \bigO(|V|^{k+5} \log |V|)$. Note also that $w \in \Sigma'$, where
    $|\Sigma'| = \bigO(|\Sigma| + |V|) = \bigO(|V|) =
    \bigO(|u|^{\delta/3}) = \bigO(|u|^{\delta})$, where we used that
    $|V| = \Theta(|u|^{1/(k+2)})$ and $3/\delta \leq k$ (which implies
    $1/(k+2) \leq \delta/3$).

  \item Next, applying \cref{lm:beta-construction-of-gamma-prim} to
    $\Gamma$ and $H'$, in $\bigO(|\Gamma| + |\Sigma| + |H'|) =
    \bigO(|V|^3)$ time we construct a CFG $\Gamma'$ such that
    $|\Gamma'| = \bigO(|\Gamma| + |\Sigma'| + \log |u|) = \bigO(|V|) =
    \bigO(|u|^{\delta})$ and $u \in L(\Gamma)$ holds if and only if $w
    \in L(\Gamma')$.  Combining the above with the earlier
    observation, we thus have that $w \in L(\Gamma')$ holds if and
    only if $G$ has a $3k$-clique. Observe also that by
    \cref{lm:lzd}, it holds $|\algname{LZD}(w)| = \bigO(|H'|)
    = \bigO(|V|^3) = \bigO(|u|^{\delta})$, where the last inequality
    follows by $|V|^3 = \Theta(|u|^{3/(k+2)})$ and $3/\delta \leq k$
    (which implies $3/(k+2) \leq \delta$).

  \item Finally, we apply the hypothetical algorithm for CFG parsing
    to $w$ and $\Gamma'$. More precisely, we check if $w \in
    L(\Gamma')$ in $\bigO(|w|^{\omega-\epsilon}) =
    \bigO(|V|^{(k+5)(\omega-\epsilon)} \log^{\omega-\epsilon} |V|) =
    \bigO(|V|^{k(\omega-\epsilon) + 5\omega}) =\allowbreak
    \bigO(|V|^{\omega k(1 - \epsilon/(2\omega)) - (k\epsilon/2 -
    2\omega)}) = \bigO(|V|^{\omega k(1 - \epsilon/(2\omega))})$
    (resp.\ $\bigO(|w|^{3 - \epsilon}) \,{=}\, \bigO(|V|^{(k+5)(3 -
    \epsilon)} \log^{3-\epsilon} |V|) \allowbreak=
    \bigO(|V|^{k(3-\epsilon) + 15)}) \allowbreak= \bigO(|V|^{3k(1 -
    \epsilon/6)})$) time, where in the last inequality we used
    that $k \geq \lceil \tfrac{30}{\epsilon} \rceil$ (which implies
    $k \epsilon/2 \geq 15$).
  \end{enumerate}
  We have thus checked if $G$ contains a $3k$-clique in
  $\bigO(|V|^{k+5} \log |V| + |V|^{\omega k(1 - \epsilon/(2\omega))})
  = \bigO(|V|^{\omega k(1 - \epsilon')})$ (resp.\ 
  $\bigO(|V|^{k+5} \log |V| + |V|^{3k(1 - \epsilon/6)})
  = \bigO(|V|^{3k(1 - \epsilon')})$) time, where $\epsilon' > 0$ is some
  constant (note that we used that $k \geq 6$). This implies
  that the $k$-Clique Conjecture (resp.\ Combinatorial $k$-Clique
  Conjecture) is false.
\end{proof}

\subsection{Analysis of \algname{Bisection}}\label{sec:cfg-bisection}

\begin{definition}\label{def:dyadic-slg}
  An SLG $G = (V, \Sigma, R, S)$ is \emph{dyadic} if it is admissible
  (\cref{def:admissible}) and for every $X \in V$, it holds:
  \begin{itemize}[itemsep=0pt,parsep=0pt]
  \item $|\expgen{G}{A}| = 2^k$ for some $k \in \Zn$,
  \item $|\expgen{G}{A}| \geq |\expgen{G}{B}|$,
  \end{itemize}
  where $A, B \in V \cup \Sigma$ are such that $\rhsgen{G}{X} = AB$.
\end{definition}

\begin{lemma}\label{lm:opt-dyadic}
  Let $u \in \Sigma^{*}$. For every dyadic SLG $G = (V, \Sigma, R, S)$
  such that $L(G) = \{u\}$, it holds $|\algname{Bisection}(u)| \leq
  |G|$.
\end{lemma}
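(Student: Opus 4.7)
The plan is to exploit the structural correspondence between dyadic SLGs and the bisection recursion: the parse tree $\mathcal{T}_G(S)$ decomposes $u$ in exactly the same way that \algname{Bisection} does, so every distinct substring produced during bisection of length greater than one can be injected into the nonterminal set $V$, giving $|\algname{Bisection}(u)| \leq 2|V| = |G|$. I will first associate, to each node $v$ of $\mathcal{T}_G(S)$, an interval $(a_v \dd b_v] \subseteq (0 \dd |u|]$ defined recursively from the root $(0 \dd |u|]$: if an internal node $v$ with symbol $X \in V$ and $\rhs{X} = AB$ is associated to $(a \dd b]$, its left and right children receive $(a \dd a + |\expgen{G}{A}|]$ and $(a + |\expgen{G}{A}| \dd b]$ respectively. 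A standard induction gives $u(a_v \dd b_v] = \expgen{G}{s(v)}$ for every node $v$.

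The key step is to show by induction on $\mathcal{T}_G(S)$ that the split at every internal node matches the split taken by \algname{Bisection} on the corresponding substring. Concretely, I claim that for each internal node $v$ associated to $(a_v \dd b_v]$, we have $|\expgen{G}{A}| = 2^k$ for the largest $k$ satisfying $2^k < b_v - a_v$. This is precisely what the two conditions of \cref{def:dyadic-slg} guarantee: $|\expgen{G}{A}| = 2^k$ for some $k$, and $|\expgen{G}{B}| \leq |\expgen{G}{A}| = 2^k$, so $b_v - a_v = |\expgen{G}{A}| + |\expgen{G}{B}|$ lies in $(2^k \dd 2^{k+1}]$, forcing $k$ to be the bisection-chosen exponent. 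Hence the sub-intervals $(a_{v_L} \dd b_{v_L}]$ and $(a_{v_R} \dd b_{v_R}]$ are exactly the two halves produced by bisection applied to $u(a_v \dd b_v]$.

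Now let $\mathcal{S}$ be the set of substrings produced by \algname{Bisection} on $u$. Unfolding the recursion, every $s \in \mathcal{S}$ has the form $u(a \dd b]$ for some interval $(a \dd b]$ arising during bisection, and by the matching established above, there is a corresponding node $v$ of $\mathcal{T}_G(S)$ with $(a_v, b_v] = (a, b]$ and $s = \expgen{G}{s(v)}$; when $|s| > 1$, we have $s(v) \in V$ (since leaves carry terminals, whose expansions have length~$1$). Distinct elements of $\mathcal{S}$ have distinct expansions, so this defines an injection from $\{s \in \mathcal{S} : |s|>1\}$ into $V$, yielding $|\{s \in \mathcal{S} : |s|>1\}| \leq |V|$. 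Combining with $|\algname{Bisection}(u)| = 2 \cdot |\{s \in \mathcal{S} : |s| > 1\}|$ (which follows from the definition of \algname{Bisection} for arbitrary $|u|$, generalizing \cref{ob:bisection}) and $|G| = 2|V|$ (admissibility), we conclude $|\algname{Bisection}(u)| \leq 2|V| = |G|$. The only real obstacle is the inductive alignment of parse-tree splits with bisection splits, and this is exactly what the two clauses of the dyadic definition are designed to enforce.
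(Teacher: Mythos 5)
Your proof is correct and follows essentially the same approach as the paper's. Both arguments rest on the observation that the two clauses of the dyadic definition force the split at every internal node of $\mathcal{T}_G(S)$ to coincide with the split Bisection would choose, and then inject the distinct bisection substrings of length greater than one into $V$; you spell out the shape-matching induction explicitly (verifying $|\expgen{G}{A}|=2^k$ for the bisection exponent), whereas the paper compresses it into "parse trees of two dyadic SLGs encoding the same string have the same shape" and defers the induction.
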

\begin{proof}
  Denote $B = \algname{Bisection}(u)$. Observe, that by definition of
  \algname{Bisection} (\cref{sec:algs-nonglobal}), $B$ is dyadic.
  Next, observe that by \cref{def:dyadic-slg}, for every $X \in V$,
  letting $A, B \in V \cup \Sigma$ be such that $\rhsgen{G}{X} = AB$,
  the length of $\expgen{G}{A}$ depends only on
  $|\expgen{G}{X}|$. This implies that the parse trees of two dyadic
  SLGs encoding the same string have the same shape (the proof follows
  by a simple inductive argument on depth of a node in the parse
  tree). In particular, this holds for $\mathcal{T}(B)$ and
  $\mathcal{T}(G)$. Let $\mathcal{S}$ denote the set of substrings
  corresponding to nodes of the common parse tree of $B$ and
  $G$. Observe that:
  \begin{itemize}[itemsep=0pt,parsep=0pt]
  \item By definition of \algname{Bisection}, we have $|B| =
    |\mathcal{S}|$, since this is precisely how the algorithm
    operates: We first create the set of nonterminals corresponding
    the parse tree, and then add one nonterminal for every distinct
    substrings corresponding to some node in this parse tree.
  \item On the other hand, if for non-leaf nodes $v_1$ and $v_2$ in a
    parse tree of some SLG, their corresponding substrings $u_1$ and
    $u_2$ satisfy $u_1 \neq u_2$, the nonterminals $s(v_1)$ and
    $s(v_2)$ (see \cref{sec:prelim}) must satisfy $s(v_1) \neq
    s(v_2)$.  This implies that $|\mathcal{S}| \leq |G|$.
  \end{itemize}
  Putting everything together we thus obtain $|\algname{Bisection}(u)|
  = |B| \leq |\mathcal{S}| \leq |G|$.
\end{proof}

\begin{observation}\label{obs:kclique-small-slg}
  The SLG $H = (V_H, \Sigma, R_H, S_H)$ in \cref{lm:kclique-small-slg}
  satisfies the following property. For every $X \in V_H$, it holds:
  \begin{itemize}[itemsep=0pt,parsep=0pt]
  \item If $|\expgen{H}{X}| > |V|^{k}$, then $\rhsgen{H}{X} = X_1 X_2
    \cdots X_k$, where $|\expgen{H}{X_i}| = |V|^{k}$ for $i \in [1 \dd
    k]$,
  \item Otherwise, there exists $i \in [0 \dd k]$ such that
    $|\expgen{H}{X}| = |V|^i$ and either $\rhsgen{H}{X} = c \in
    \Sigma$ (when $i = 0$), or $\rhsgen{H}{X} = Y_1 Y_2 \cdots
    Y_{|V|}$, where $|\expgen{H}{Y_i}| = |V|^{i-1}$ holds for $i \in
    [1 \dd |V|]$.
  \end{itemize}
\end{observation}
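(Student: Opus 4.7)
My plan is to verify the observation by direct inspection of the construction of $H$ from \cref{lm:kclique-small-slg}, which in turn is the $3k$-clique reduction of Abboud, Backurs, Bringmann, and K\"unnemann~\cite{AbboudBBK17}. The underlying string $u$ is built by a uniform nested-enumeration scheme: for each $i \in [0 \dd k]$, the construction introduces a family of nonterminals whose expansions have length exactly $|V|^i$. At level $0$ these are the terminals of $\Sigma$, which handles the base case $i = 0$ in the second bullet of the claim. At each level $i > 0$ and at most $k$, every such nonterminal $X$ will have $\rhsgen{H}{X} = Y_1 Y_2 \cdots Y_{|V|}$ with each $Y_j$ a level-$(i-1)$ nonterminal, yielding $|\expgen{H}{X}| = |V| \cdot |V|^{i-1} = |V|^i$. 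This matches the second bullet for every nonterminal whose expansion length is at most $|V|^k$.

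For the first bullet, I will argue that above level $k$ the construction always glues level-$k$ blocks together in groups of exactly $k$. Recall that $|u| = \Theta(|V|^{k+2})$ and that $u$ is partitioned into segments of length $|V|^k$ encoding vertex-indexed gadgets for the clique problem; the ``row/column'' encoding used in \cite{AbboudBBK17} combines exactly $k$ such segments at each super-block of length $|V|^{2k-1}$ or $|V|^{2k}$ used to express the clique-verification gadget, and these super-blocks are in turn concatenated hierarchically. Consequently, for every $X \in V_H$ with $|\expgen{H}{X}| > |V|^k$ one will be able to read off a rule of the shape $\rhsgen{H}{X} = X_1 X_2 \cdots X_k$ with $|\expgen{H}{X_i}| = |V|^k$ for each $i \in [1 \dd k]$, as required.

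The main obstacle is that \cref{lm:kclique-small-slg} does not spell out the internal structure of $H$ explicitly, so the verification must be carried out by opening up the construction of \cite{AbboudBBK17} and tracking it in detail. The tasks will be to (i) identify, for each nonterminal created, whether its expansion length is a power of $|V|$ at most $|V|^k$ or is strictly larger; (ii) check that the corresponding rule matches the pattern prescribed by the observation; and (iii) confirm that no other rule shapes are ever generated. Because the construction is uniform and recursive, this reduces to a routine but careful case analysis, and no new idea is required beyond a clean bookkeeping of the hierarchy of expansion lengths produced by the reduction.
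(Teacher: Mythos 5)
The paper gives no written proof of this observation --- it is asserted as something to be read off directly from the construction of $H$ in~\cite{AbboudBBK17} --- so your plan of verifying it by inspecting that construction is the only available route, and your account of the second bullet (a $|V|$-ary hierarchy of levels $0,\dots,k$ with expansion lengths $|V|^i$) is consistent with how the paper later uses the observation in \cref{lm:kclique-small-cfg-bisection}.

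Your treatment of the first bullet, however, does not hold up. You appeal to ``super-blocks of length $|V|^{2k-1}$ or $|V|^{2k}$ \dots\ concatenated hierarchically,'' but no such substrings exist: the whole string has length only $\Theta(|V|^{k+2})$, which for constant $k\geq 3$ is far smaller than $|V|^{2k-1}$. Moreover, the paper's own use of the first bullet contradicts your picture: in \cref{lm:kclique-small-cfg-bisection} it is applied to $S_H$ to extract a rule $\rhsgen{H}{S_H}=X_1\cdots X_s$ with $s=\Theta(|V|^2)$ and $|\expgen{H}{X_i}|=|V|^k$ --- that is, the start symbol is the only nonterminal with expansion length exceeding $|V|^k$, and its rule is a single flat layer of $\Theta(|V|^2)$ level-$k$ children, not a hierarchy of rules each with exactly $k$ children. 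Your concluding ``Consequently \dots'' is therefore a non sequitur: nothing you wrote forces a large nonterminal's rule to split into blocks of expansion length exactly $|V|^k$. To actually establish the observation you would need to open up~\cite{AbboudBBK17} and check that $u$ is produced as a flat concatenation of $\Theta(|V|^2)$ length-$|V|^k$ gadgets under a single production for $S_H$, and that every other nonterminal lives inside a clean $|V|$-ary tower spanning levels $0,\dots,k$.
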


\begin{lemma}\label{lm:kclique-small-cfg-bisection}
  Let $k \geq 1$ be a constant and let $G = (V, E)$ be an undirected
  graph satisfying $|V| = 2^p$, where $p \geq 2$. Then, it holds
  $|\algname{Bisection}(u)| = \bigO(|V|^3)$, where $u \in \Sigma^{*}$
  and $\Sigma = \{{\tt 0}, {\tt 1}, {\rm \#}, {\rm \$}, x, y, z\}$ are
  defined as in \cref{lm:kclique-small-slg}.
\end{lemma}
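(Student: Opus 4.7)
The plan is to apply \cref{lm:opt-dyadic}: I will construct a dyadic SLG $H^{*} = (V^{*}, \Sigma, R^{*}, S^{*})$ (\cref{def:dyadic-slg}) satisfying $L(H^{*}) = \{u\}$ and $|H^{*}| = \bigO(|V|^{3})$. By \cref{lm:opt-dyadic}, this immediately yields $|\algname{Bisection}(u)| \leq |H^{*}| = \bigO(|V|^{3})$.

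To build $H^{*}$, I would start from the SLG $H = (V_{H}, \Sigma, R_{H}, S_{H})$ provided by \cref{lm:kclique-small-slg} and use its structural characterization in \cref{obs:kclique-small-slg}. The key leverage is that, since $|V| = 2^{p}$, every expansion length of the form $|V|^{i}$ is a power of two, which lines up with the dyadic requirement. For each nonterminal $X \in V_{H}$ with $|\expgen{H}{X}| = |V|^{i}$ ($i \in [1 \dd k]$) whose rule has the shape $X \rightarrow Y_{1}Y_{2}\cdots Y_{|V|}$ with children of equal expansion length $|V|^{i-1}$, I would replace the rule by a perfect balanced binary tree of depth $p$ over the $|V| = 2^{p}$ children. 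This introduces $|V|-1$ auxiliary nonterminals per rule, each with two children whose expansions are equal powers of two (namely $2^{p(i-1)+t}$ for some $t \in [1 \dd p]$), so every auxiliary rule is dyadic. Since each such rule contributes $|V|$ symbols to $|H| = \bigO(|V|^{3})$, there are $\bigO(|V|^{2})$ such rules, and this step adds $\bigO(|V|^{3})$ nonterminals in total. For the nonterminals with $|\expgen{H}{X}| > |V|^{k}$, the rule concatenates $k$ subtrees, each of length $|V|^{k}$; since $k$ is constant, a balanced binary tree over these introduces only $\bigO(1)$ auxiliary nonterminals per such rule.

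The main obstacle is ensuring the top of $H^{*}$ is truly dyadic: this requires $|u|$ itself to be a power of two, which in turn requires $k$ to be a power of two in the type-(b) rules. I would address this by assuming, without loss of generality, that $k$ is a power of two — the constant $k$ supplied by \cref{lm:kclique-small-slg} can always be replaced by the smallest power of two at least as large, asymptotically preserving the $3k$-clique reduction and all stated size bounds. With this in place, $|u| = |V|^{\Theta(1)}$ is a power of two, and every rule of $H^{*}$ has two children whose expansion lengths are powers of two with the left one at least as long as the right one. The resulting grammar $H^{*}$ is dyadic, satisfies $L(H^{*}) = L(H) = \{u\}$, and has $|H^{*}| = |H| + \bigO(|V|^{3}) = \bigO(|V|^{3})$, whence \cref{lm:opt-dyadic} concludes the proof.
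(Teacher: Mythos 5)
Your overall plan --- build a dyadic SLG $H^{*}$ of size $\bigO(|V|^{3})$ with $L(H^{*}) = \{u\}$ and invoke \cref{lm:opt-dyadic} --- is exactly the paper's, and the balanced-binary-tree rewrite of each length-$|V|$ rule into binary rules (using $|V| = 2^{p}$) is the right core construction. But two of your supporting arguments are wrong.

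First, the ``main obstacle'' you identify does not exist. \cref{def:dyadic-slg} does not require $|\expgen{G}{S}|$ --- or the expansion length of any nonterminal --- to be a power of two; it only requires that for every rule $X \rightarrow AB$, $|\expgen{G}{A}|$ is a power of two and $|\expgen{G}{A}| \geq |\expgen{G}{B}|$. That is precisely what \algname{Bisection} produces on a string of arbitrary length $n$: the left factor has the largest power-of-two length less than $n$, which is automatically at least $n/2$ and hence at least as long as the right factor. A dyadic SLG for $u$ therefore exists regardless of whether $|u|$ is a power of two, so the proposed rounding of $k$ to a power of two is unnecessary. It is also not a valid ``without loss of generality'' step, since $k$ is a free parameter of the lemma you are asked to prove; you cannot establish it for all constants $k$ by establishing it for some larger constant.

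Second, you read \cref{obs:kclique-small-slg} as saying the rule for the starting nonterminal has $k = \bigO(1)$ children, each expanding to a string of length $|V|^{k}$. That is inconsistent with $|u| = \Theta(|V|^{k+2})$, since $k$ such children would give $|u| = \Theta(|V|^{k})$. What the paper actually derives and uses is $\rhsgen{H}{S_{H}} = X_{1}\cdots X_{s}$ with $s = \Theta(|V|^{2})$ and each $|\expgen{H}{X_{i}}| = |V|^{k}$. So the top-level dyadic tree has $\Theta(|V|^{2})$ leaves and contributes $\Theta(|V|^{2})$ auxiliary nonterminals, not $\bigO(1)$. Your final bound $\bigO(|V|^{3})$ still holds, but the reasoning must be repaired; one should also record that since each $X_{i}$ expands to exactly $|V|^{k} = 2^{pk}$ symbols, a dyadic SLG over the alphabet $\{X_{1}, \dots, X_{s}\}$ composes with the (already dyadic) rules defining the $X_{i}$ to yield a dyadic SLG over $\Sigma$.
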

\begin{proof}
  The key idea of the proof is to modify the SLG $H$ from
  \cref{lm:kclique-small-slg} into a dyadic SLG
  (\cref{def:dyadic-slg}) of size $\bigO(|V|^3)$ whose language is
  $\{u\}$, and then apply \cref{lm:opt-dyadic}.  The SLG is obtained
  in three steps:
  \begin{enumerate}
  \item Let $H = (V_H, \Sigma, R_H, S_H)$ be the grammar defined in
    \cref{lm:kclique-small-slg}.  By $|u| = \Theta(|V|^{k+2})$ (see
    \cref{lm:kclique-small-slg}) and \cref{obs:kclique-small-slg},
    there exists $s = \Theta(|V|^2)$ such that $\rhsgen{H}{S_H} = X_1
    X_2 \cdots X_s$ and for every $i \in [1 \dd s]$, it holds
    $|\expgen{H}{X_i}| = |V|^{k}$.  Denote $\Sigma_{\rm top} = \{X_1,
    \dots, X_s\}$.  We define $H_{\rm top} = (V_{\rm top}, \Sigma_{\rm
    top}, R_{\rm top}, S_{\rm top})$ to be any dyadic SLG such that
    $L(H_{\rm top}) = \rhsgen{H}{S}$. By the trivial upper bound on
    the size of any admissible SLG, we have $|H_{\rm top}| = \bigO(s)
    = \bigO(|V|^2)$.

  \item Let $H' = (V'_H, \Sigma, R'_H, S_H)$ be the SLG obtained from
    $H$ (defined above) by removing all nonterminals $X \in V_H$
    satisfying $|\expgen{H}{X}| = 1$, and replacing all their
    occurrences in the grammar (i.e., in the right-hand sides of
    productions) with their definitions. This does not increase the
    size of the grammar, i.e., by \cref{lm:kclique-small-slg}, $|H'|
    \leq |H| = \bigO(|V|^3)$.  Denote $V'_H \setminus \{S_H\} =
    \{X_{1,1}, X_{2,1}, \dots, X_{q,1}\}$. Let $\{X_{1,2}, X_{1,3},
    \dots, X_{1,|V|-1}, \dots, X_{q,2}, X_{q,3}, \dots, X_{q,|V|-1}\}$
    be a set of fresh $q \cdot (|V| - 2)$ variables.  We define the
    SLG
    \[
      H_{\rm bottom} = (\{S_H\} \cup V_{\rm bottom}, \Sigma, \{(S_H,
        \rhsgen{H'}{S_H})\} \cup R_{\rm bottom}, S_H),
    \]
    where
    \begin{itemize}[itemsep=0pt,parsep=0pt]
    \item $V_{\rm bottom} = \bigcup_{i=1}^{q} \{X_{i,1}, X_{i,2},
      \dots, X_{i,|V|-1}\}$,
    \item $R_{\rm bottom}$ is the smallest set such that for every $i
      \in [1 \dd q]$, denoting $\rhsgen{H'}{X_{i,1}} = X_{i,|V|}
      X_{i,|V|+1} \cdots X_{i,2|V|-1}$, $R_{\rm bottom}$ contains the
      rule $X_{i,j}\rightarrow X_{i,2j} X_{i,2j+1}$ for every $j \in
      [1 \dd |V|)$. By \cref{obs:kclique-small-slg}, for every $j \in
      [|V| \dd 2|V|)$, it holds $|\expgen{H'}{X_{i,j}}| =
      \tfrac{1}{|V|}|\expgen{H'}{X_{i,1}}|$.  Thus, for all $j \in [1
      \dd |V|)$, $|\expgen{H_{\rm bottom}}{X_{i,2j}}| =
      |\expgen{H_{\rm bottom}}{X_{i,2j+1}}|$, and both lengths are
      powers of two.
    \end{itemize}
    Informally, the above construction replaces every length-$|V|$
    definition with $\Theta(|V|)$ length-$2$ definitions of
    nonterminals arranged into a perfect binary tree of height
    $\Theta(\log |V|)$. Observe that $H_{\rm bottom}$ is almost dyadic
    (the only problem is the nonterminal $S_H$). Note also that since in
    $H'$ the definition of every nonterminal in $\{X_{1,1}, \dots,
    X_{q,1}\}$ has length $|V|$ (\cref{obs:kclique-small-slg}), it
    holds $q = \bigO(|V|^2)$. Thus, $|H_{\rm bottom}| = \bigO(q \cdot
    |V|) = \bigO(|V|^3)$.
  \item We define $H_{\rm dyadic} = (V_{\rm top} \cup V_{\rm bottom},
    \Sigma, R_{\rm top} \cup R_{\rm bottom}, S_{\rm top})$.
  \end{enumerate}
  It follows by the construction of $H_{\rm top}$ and $H_{\rm
  bottom}$, that $|H_{\rm dyadic}| = \bigO(|H_{\rm top}| + |H_{\rm
  bottom}|) = \bigO(|V|^3)$. Moreover, by the above discussion,
  $H_{\rm dyadic}$ is dyadic (\cref{def:dyadic-slg}), and $L(H_{\rm
  dyadic}) = \{u\}$.  Thus, by \cref{lm:opt-dyadic}, we obtain
  $|\algname{Bisection}(u)| \leq |H_{\rm dyadic}| = \bigO(|V|^3)$.
\end{proof}

\begin{theorem}\label{th:cfg-bisection}
  Let $\delta \in (0, 1]$.  Assuming the $k$-Clique Conjecture
  (resp.\ Combinatorial $k$-Clique Conjecture), there is no algorithm
  (resp.\ combinatorial algorithm) that, given $u \in \Sigma^{n}$ and
  a CFG $\Gamma$ such that $|\Gamma| = \bigO(\log n)$, $|\Sigma| =
  \bigO(1)$, and $|\algname{Bisection}(u)| = \bigO(n^{\delta})$,
  determines if $u \in L(\Gamma)$ in $\bigO(n^{\omega-\epsilon})$
  $($resp.\ $\bigO(n^{3 - \epsilon})$$)$ time, for any $\epsilon > 0$.
\end{theorem}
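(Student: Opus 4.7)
The plan is to prove the theorem by contraposition, following the same template as \cref{th:cfg-sequential,th:cfg-lzd}, but simplified: because \cref{lm:kclique-small-cfg-bisection} already bounds $|\algname{Bisection}(u)|$ directly on the ``raw'' string $u$ produced by \cref{lm:kclique-small-slg}, we do not need any auxiliary string construction (no analogue of $\alpha(G)$ or $\beta(G)$) nor any modification of the CFG $\Gamma$. Hence the theorem is stated with the sharper parameters $|\Gamma| = \bigO(\log n)$, $|\Sigma| = \bigO(1)$, inherited directly from \cref{lm:kclique-small-slg}.

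Assume for contradiction that, for some $\epsilon > 0$, the hypothetical algorithm exists. Given an undirected graph $G = (V, E)$, choose the constant $k = \max(\lceil 3/\delta \rceil, \lceil 30/\epsilon \rceil)$. By adding isolated vertices we may assume $|V| = 2^p$ with $p \geq 2$ (this at most doubles $|V|$ and preserves the existence of a $3k$-clique). First, invoke \cref{lm:kclique-small-slg} to construct, in $\bigO(|V|^3)$ time, a CFG $\Gamma$ and a string $u \in \Sigma^{n}$ such that $|\Sigma| = \bigO(1)$, $|\Gamma| = \bigO(\log |V|) = \bigO(\log n)$, $n = |u| = \Theta(|V|^{k+2})$, and $u \in L(\Gamma)$ holds if and only if $G$ contains a $3k$-clique. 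Second, apply \cref{lm:kclique-small-cfg-bisection} to obtain $|\algname{Bisection}(u)| = \bigO(|V|^3) = \bigO(n^{3/(k+2)}) = \bigO(n^{\delta})$, where the last equality follows from $k \geq \lceil 3/\delta \rceil$, i.e., $3/(k+2) \leq \delta$.

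Third, feed $u$ and $\Gamma$ into the hypothetical CFG parsing algorithm. By assumption, this decides $u \in L(\Gamma)$ in time
\[
  \bigO(n^{\omega - \epsilon})
  = \bigO(|V|^{(k+2)(\omega-\epsilon)})
  = \bigO(|V|^{\omega k\,(1 - \epsilon/(2\omega))})
\]
(respectively $\bigO(n^{3-\epsilon}) = \bigO(|V|^{3k(1-\epsilon/6)})$), where the last bound uses $k \geq \lceil 30/\epsilon \rceil$ to absorb the additive $\bigO(|V|^{2\omega})$ (resp.\ $\bigO(|V|^{6})$) term into the exponent. Adding the $\bigO(|V|^3)$ preprocessing, we decide the presence of a $3k$-clique in $G$ in $\bigO(|V|^{\omega k(1 - \epsilon')})$ time (resp.\ $\bigO(|V|^{3k(1 - \epsilon')})$) for some constant $\epsilon' > 0$, contradicting the $k$-Clique Conjecture (resp.\ the Combinatorial $k$-Clique Conjecture).

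No step presents a serious obstacle: the content of the argument is already concentrated in \cref{lm:kclique-small-cfg-bisection} (which embeds $u$ into a dyadic SLG and invokes \cref{lm:opt-dyadic}) and in \cref{lm:kclique-small-slg}. The only mildly delicate point is ensuring that $|V|$ can be made a power of two so that the hypothesis of \cref{lm:kclique-small-cfg-bisection} is met; this is handled by the padding remark above. Everything else is routine arithmetic on exponents, identical in spirit to the corresponding computations in the proofs of \cref{th:cfg-sequential,th:cfg-lzd}.
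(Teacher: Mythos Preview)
Your proposal is correct and follows essentially the same approach as the paper's proof: pad $|V|$ to a power of two, invoke \cref{lm:kclique-small-slg} and \cref{lm:kclique-small-cfg-bisection}, then apply the hypothetical algorithm directly to $u$ and $\Gamma$. The only minor imprecisions are that \cref{lm:kclique-small-slg} outputs an SLG $H$ rather than $u$ itself, so generating $u$ costs $\bigO(|V|^{k+2})$ (not $\bigO(|V|^3)$)---still absorbed in the final bound---and that the paper uses $k \geq \lceil 12/\epsilon \rceil$ (your $30/\epsilon$ is harmlessly looser, carried over from the $k{+}5$ setting of \cref{th:cfg-sequential} rather than the $k{+}2$ here).
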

\begin{proof}
  We prove the claim by contraposition. Assume that there exists some
  $\epsilon > 0$ such that for every $u \in \Sigma^{n}$ and a CFG
  $\Gamma$ such that $|\Gamma| = \bigO(\log n)$, $|\Sigma| =
  \bigO(1)$, and $|\algname{Bisection}(u)| = \bigO(n^{\delta})$, we
  can in $\bigO(n^{\omega - \epsilon})$ (resp.\ $\bigO(n^{3 -
  \epsilon})$) time determine if $u \in L(\Gamma)$. We will prove
  that this implies that the $k$-Clique Conjecture
  (resp.\ Combinatorial $k$-Clique Conjecture) is false.

  Assume that we are given an undirected graph $G = (V, E)$.  Let $s =
  \lceil \log |V| \rceil$ and $G' = (V', E)$ be the graph obtained by
  adding $2^s - |V|$ isolated vertices into $G$. Then:
  \begin{itemize}[itemsep=0pt,parsep=0pt]
  \item $|V'| = 2^s$,
  \item For every $k \geq 1$, $G$ has a $3k$-clique if and only if
    $G'$ has a $3k$-clique,
  \item For every constant $t > 0$, $|V'|^t \leq (2|V|)^t \leq 2^t
    |V|^t$.  Thus, $|V'|^t = \Theta(|V|^t)$.
  \end{itemize}
  Let $k = \max(\lceil \tfrac{3}{\delta} \rceil, \lceil
  \tfrac{12}{\epsilon} \rceil)$.  We execute the following algorithm:

  \begin{enumerate}
  \item First, using \cref{lm:kclique-small-slg} in $\bigO(|V'|^3) =
    \bigO(|V|^3)$ time we build a CFG $\Gamma$ and an SLG $H = (V_H,
    \Sigma, R_H, S_H)$ such that letting $L(H) = \{u\}$, it holds
    $|\Sigma| = \bigO(1)$, $|\Gamma| = \bigO(\log |V'|) = \bigO(\log
    |V|) = \bigO(\log |u|)$, $|H| = \bigO(|V'|^3) = \bigO(|V|^3)$,
    $|u| = \Theta(|V'|^{k+2}) = \Theta(|V|^{k+2})$, and $u \in
    L(\Gamma)$ holds if and only if $G'$ has a $3k$-clique. Using $H$,
    we generate $u$ in $\bigO(|V|^{k+2})$ time. Observe that by
    \cref{lm:kclique-small-cfg-bisection}, it holds
    $|\algname{Bisection}(u)| = \bigO(|V'|^3) = \bigO(|V|^3) =
    \bigO(|u|^{\delta})$, where the last inequality follows by $|V|^3
    = \Theta(|u|^{3/(k+2)})$ and $3/\delta \leq k$ (which implies
    $3/(k+2) \leq \delta$).

  \item We apply the hypothetical algorithm for CFG parsing to $u$ and
    $\Gamma$. More precisely, we check if $u \in L(\Gamma)$ in
    $\bigO(|u|^{\omega-\epsilon}) =
    \bigO(|V|^{(k+2)(\omega-\epsilon)}) =
    \bigO(|V|^{k(\omega-\epsilon) + 2\omega}) =\allowbreak
    \bigO(|V|^{\omega k(1 - \epsilon/(2\omega)) - (k\epsilon/2 -
      2\omega)}) = \bigO(|V|^{\omega k(1 - \epsilon/(2\omega))})$
    (resp.\ $\bigO(|u|^{3 - \epsilon}) \,{=}\, \bigO(|V|^{(k+2)(3 -
      \epsilon)}) \allowbreak= \bigO(|V|^{k(3-\epsilon) + 6)})
    \allowbreak= \bigO(|V|^{3k(1 - \epsilon/6)})$) time. Since this
    check is equivalent to checking if $G$ contains a $3k$-clique, we
    thus obtain that the $k$-Clique Conjecture (resp.\ Combinatorial
    $k$-Clique Conjecture) is false.  \qedhere
  \end{enumerate}
\end{proof}

\section{RNA Folding}\label{sec:rna-folding}

\subsection{Problem Definition}\label{sec:rna-problem}

\begin{definition}\label{def:match}
  We say that an alphabet $\Sigma$ is augmented with a \emph{match}
  operation, if for every $a \in \Sigma$, there exists a matching symbol
  $\match{a} \in \Sigma$ such that $\match{a} \neq a$ and
  $\match{\match{a}} = a$.
\end{definition}

\begin{definition}\label{def:crossing}
  For every $(i, j), (i', j') \in \Zp^2$, such that $i < j$ and
  $i' < j'$, we say that $(i,j)$ and $(i',j')$ are \emph{non-crossing}
  if the intervals $[i \dd j]$ and $[i' \dd j']$ are disjoint (i.e., $j
  < i'$ or $j' < i$) or properly nested (i.e., $i < i' < j' < j$ or $i'
  < i < j < j'$).  Otherwise they are \emph{crossing}.
\end{definition}

\setlength{\FrameSep}{1.7ex}
\begin{framed}
  \noindent
  \textsc{RNA Folding}\\
  \textbf{Input}: A string $u \in \Sigma^n$ over an alphabet $\Sigma$
  augmented with a match operation.\\
  \textbf{Output:} The cardinality $|R|$ of the largest set
  $R \subseteq [1 \dd n]^2$ such that:
  \vspace{-1ex}
  \begin{enumerate}[itemsep=0pt,parsep=0pt]
  \item For every $(i,j) \in R$,
    it holds $i < j$ and $\match{u[i]} = u[j]$,
  \item $R$ does not contain crossing pairs.
  \end{enumerate}
  \vspace{-1ex}
  We denote this maximum cardinality by $\RNA(u)$.
\end{framed}

We also define the weighted variant of problem.

\setlength{\FrameSep}{1.7ex}
\begin{framed}
  \noindent
  \textsc{Weighted RNA Folding}\\
  \textbf{Input}: A string $u \in \Sigma^n$ over an alphabet $\Sigma$
  augmented with a match operation and a weight function $w : \Sigma
  \rightarrow \Zp$ such that for every $a \in \Sigma$, it holds
  $w(a) = w(\match{a})$.\\
  \textbf{Output:} The largest value $\sum_{(i,j) \in R} w(u[i])$
  over all $R \subseteq [1 \dd n]^2$ such that:
  \vspace{-1ex}
  \begin{enumerate}[itemsep=0pt,parsep=0pt]
  \item For every $(i,j) \in R$,
    it holds $i < j$ and $\match{u[i]} = u[j]$,
  \item $R$ does not contain crossing pairs.
  \end{enumerate}
  \vspace{-1ex}
  We denote this largest value by $\WRNA(u)$.
\end{framed}

Abboud, Backurs, and Vassilevska Williams proved the
following reduction from the weighted to unweighted variant.

\begin{lemma}[\cite{AbboudBW18}]\label{lm:rna-reduction}
  Let $\Sigma$ be an alphabet augmented with a match operation and let
  $w : \Sigma \rightarrow \Zp$ be a weight function. Then, for every
  $u \in \Sigma^n$, it holds $\WRNA(u) = \RNA(u')$, where
  $u' = u[1]^{w(u[1])} \cdots u[n]^{w(u[n])}$.
\end{lemma}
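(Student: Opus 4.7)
The plan is to prove the two inequalities $\WRNA(u)\le\RNA(u')$ and $\RNA(u')\le\WRNA(u)$ separately. Partition $u'$ into $n$ consecutive \emph{blocks}, where block $i$ consists of the $w(u[i])$ copies of $u[i]$. Note that whenever $\overline{u[i]}=u[j]$ the assumption $w(a)=w(\overline{a})$ forces the blocks $i$ and $j$ to have equal length, so a WRNA pair in $u$ corresponds to two equal-size blocks of complementary characters in $u'$.

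For $\WRNA(u)\le\RNA(u')$, I would take an optimal WRNA solution $R$ and, for each $(i,j)\in R$, create $w(u[i])$ arcs in $u'$ by matching the $k$-th position of block $i$ with the $(w(u[i])-k+1)$-th position of block $j$. Within a single $(i,j)$ these arcs are fully nested; across different $(i,j)\in R$, they inherit the non-crossing property from $R$ because the enclosing intervals of the block-pairs are either disjoint or nested. The total number of arcs is $\sum_{(i,j)\in R}w(u[i])=\WRNA(u)$.

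The harder direction is $\RNA(u')\le\WRNA(u)$. Given an optimal $R'$, for every pair $(p,q)\in R'$ let $B(p),B(q)$ be its enclosing blocks, and note $B(p)<B(q)$ since $\overline{c}\ne c$. The first claim I would prove, directly from non-crossing of $R'$, is that for every block $k$ all arcs of $R'$ starting in block $k$ go to a single block $\rho(k)$, and all arcs of $R'$ ending in block $k$ originate in a single block $\lambda(k)$: two left-endpoints in block $k$ going to different blocks would force one arc's right endpoint to lie strictly between the other arc's endpoints, producing a crossing. Consequently the \emph{block-level arc graph} on $[1\dd n]$ (with an edge for every pair of blocks linked by some arc of $R'$) is a vertex-disjoint union of strictly ascending paths $v_0<v_1<\cdots<v_t$. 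Since consecutive vertices on a path satisfy $\overline{u[v_i]}=u[v_{i+1}]$, every vertex on the same path has a common weight $w$. Writing $a_i$ for the number of $R'$-arcs between blocks $v_i$ and $v_{i+1}$, each interior block contributes $a_{i-1}+a_i\le w$ matched positions.

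The final step extracts a valid WRNA matching by keeping, on each path, every other block-level arc: this selects $\lceil t/2\rceil$ pairwise non-adjacent arcs, contributing total WRNA weight $\lceil t/2\rceil\cdot w$; pairing the constraints $a_{i-1}+a_i\le w$ (and bounding the leftover endpoint $a_i\le w$ when $t$ is odd) gives $\sum_i a_i\le\lceil t/2\rceil\cdot w$, so the kept WRNA pairs account for at least as many $R'$-pairs as lay on that path. Non-crossing across paths is inherited from $R'$: two block-level arcs $(i,j),(i',j')$ on distinct paths with $i<i'<j<j'$ would lift to a genuine crossing of the underlying $R'$-arcs. Summing over paths gives a valid WRNA matching of weight $\ge|R'|=\RNA(u')$. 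The main obstacle throughout is the treatment of blocks that simultaneously act as a ``left partner'' and a ``right partner'' of other blocks: a naive contraction of $R'$ to the block level produces arcs that share endpoints and hence cross in the WRNA sense, and it is precisely the path structure of the block-level graph together with the alternating-selection counting that makes the reverse inequality go through.
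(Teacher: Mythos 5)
Your first direction ($\WRNA(u)\le\RNA(u')$) is correct: the ``onion'' construction inside each block-pair and the observation that nesting/disjointness is inherited from $R$ at the block level is exactly right.

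The second direction has a genuine gap. Your ``first claim''---that for an optimal $R'$ all arcs with a left endpoint in block $k$ share a single destination block $\rho(k)$---is false, and the stated justification confuses nesting with crossing. If $p_1<p_2$ lie in block $k$ with partners $q_1,q_2$, then the non-crossing constraint (the two arcs cannot be disjoint since $p_1,p_2$ are in the same block) forces $p_1<p_2<q_2<q_1$, which is \emph{proper nesting}, not a crossing; nothing prevents $q_1,q_2$ from lying in different blocks. Concretely, take $\Sigma=\{\mathtt a,\mathtt b\}$ with $\match{\mathtt a}=\mathtt b$, $u=\mathtt{ababab}$, $w(\mathtt a)=w(\mathtt b)=2$, so $u'=\mathtt{aabbaabbaabb}$. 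The matching $R'=\{(1,12),(2,3),(4,9),(5,8),(6,7),(10,11)\}$ is non-crossing and has $6$ arcs, which is optimal. Block $1$ (positions $1,2$) sends position $1$ to block $6$ and position $2$ to block $2$---two different blocks. Moreover the resulting block-level arc graph has edges $\{1,6\},\{1,2\},\{2,5\},\{3,4\},\{5,6\}$, which is not a union of ascending paths (vertex $1$ has degree $2$, and $1$--$2$--$5$--$6$--$1$ is even a cycle). Consequently the path decomposition and the alternating-selection counting that your argument rests on do not apply to this $R'$. To make your approach go through you would need an additional exchange argument showing that \emph{some} optimal $R'$ can be put into block-consistent form (e.g., by taking among all optimal solutions one minimizing $\sum_{(p,q)\in R'}(q-p)$ and proving the structure by local swaps); that missing exchange is really the crux of the lemma.
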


\subsection{Prior Work}\label{sec:rna-prior}

Similarly as in \cref{sec:cfg-prior}, we first recall the main idea
of the technique for proving conditional lower bound on the runtime
of compressed algorithms for the RNA folding problem developed by
Abboud, Backurs, Bringmann, and K{\"{u}}nnemann.

\begin{theorem}[\cite{AbboudBBK17}]\label{th:rna-main-lower-bound}
  Let $\delta \in (0, 1]$. Assuming the $k$-Clique Conjecture
  (resp.\ Combinatorial $k$-Clique Conjecture), there is no algorithm
  (resp.\ combinatorial algorithm) that, given $u \in \Sigma^{n}$
  satisfying $g^{*}(u) = \bigO(n^{\delta})$ computes $\RNA(u)$ in
  $\bigO(n^{\omega-\epsilon})$ $($resp.\ $\bigO(n^{3 - \epsilon})$$)$
  time, for any $\epsilon > 0$.
\end{theorem}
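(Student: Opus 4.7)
The plan is to follow the same template used for \cref{th:cfg-main-lower-bound}: produce a reduction from $3k$-Clique that outputs, given an undirected graph $G=(V,E)$, a string $u \in \Sigma^{n}$ of length $n = \Theta(|V|^{k+2})$ together with an SLG certifying $g^{*}(u) = \bigO(|V|^{3})$, such that the value of $\RNA(u)$ (compared against an easily computable threshold $\tau$) decides whether $G$ has a $3k$-clique. By choosing $k$ large enough so that $3/(k+2) \leq \delta$ and so that $\epsilon(k+2) \geq 2\omega$ (resp.\ $\epsilon(k+2) \geq 6$), a hypothetical $\bigO(n^{\omega - \epsilon})$ (resp.\ $\bigO(n^{3-\epsilon})$) algorithm for $\RNA$ combined with the above construction (which takes $\bigO(n)$ time to materialize from the SLG) would yield an $\bigO(|V|^{\omega k(1-\epsilon')})$ (resp.\ $\bigO(|V|^{3k(1-\epsilon')})$) algorithm for $3k$-Clique, refuting the corresponding conjecture.

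The combinatorial core of the reduction is the design of the string $u$. I would introduce a small alphabet $\Sigma$, augmented by the match operation $\match{\cdot}$, split into ``filler'' symbols (always matchable, used to fix a huge base matching score) and ``selector'' symbols (whose matchability is controlled by the edges of $G$). For every vertex $v \in V$ construct a \emph{list gadget} $L_v$ and its reversal $\revstr{L_v}$ so that aligning $L_v$ with $\revstr{L_{v'}}$ in a nested (non-crossing) fashion yields the maximum possible contribution exactly when $(v,v') \in E$. These gadgets are then assembled into three blocks $X, Y, Z$ corresponding to the three parts of the $3k$-tuple in the standard clique-to-triangle reduction, separated by sentinel symbols (playing the role of $\hash$ and $\dol$ in \cref{def:alpha}) that force any optimal non-crossing matching to pair blocks in a prescribed order, thereby enforcing a consistent choice of vertices across all $\binom{3k}{2}$ pairs. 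Following \cite{AbboudBBK17}, I would first prove the statement for \textsc{Weighted RNA Folding} (where the ``bonus'' associated with an edge and the ``penalty'' for a non-edge are easy to encode as integer weights) and then appeal to \cref{lm:rna-reduction} to transfer the bound to the unweighted variant at the cost of a polynomial blow-up in length that is absorbed into the $\Theta(|V|^{k+2})$ bound.

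For the compression step, the plan is to exploit the massive repetition in $u$: the list gadgets $L_v$ are drawn from a pool of only $|V|$ distinct strings, and the outer $(k+2)$-fold concatenation over all tuples $(v_1,\dots,v_{k}) \in V^{k}$ has a balanced recursive structure that can be described by a grammar whose size is polynomial in $|V|$. A recursive construction nearly identical to the one in \cref{lm:kclique-small-slg} then yields an SLG $H$ with $|H| = \bigO(|V|^{3})$ and $L(H) = \{u\}$, so $g^{*}(u) \leq |H| = \bigO(|V|^3) = \bigO(n^{3/(k+2)}) = \bigO(n^{\delta})$, as required.

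The main obstacle is the gadget design, specifically certifying that the non-crossing constraint correctly encodes the conjunction of edge-existence checks across all pairs of the candidate $3k$-clique without admitting spurious matchings that would push the $\RNA$ value above $\tau$ in non-clique instances. Once this combinatorial core is established (the sentinel-separator technique guarantees that any near-optimal matching respects the intended block structure), the size accounting for $|u|$ and $g^{*}(u)$, the construction of the witnessing SLG, and the final reduction to the $k$-Clique (resp.\ Combinatorial $k$-Clique) Conjecture are routine and proceed exactly as in the proof of \cref{th:cfg-lower-bound-simple-generalization}.
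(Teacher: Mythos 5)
The paper does not give its own proof of this theorem --- it cites it from \cite{AbboudBBK17} and instead recalls the key technical ingredient as \cref{lm:rna-small-slg} (the clique-to-$\WRNA$ gadget string plus a size-$\bigO(|V|^3)$ SLG for it), together with the weighted-to-unweighted transfer in \cref{lm:rna-reduction}, and uses these in the proof of the generalization \cref{th:rna-lower-bound-simple-generalization}. Your sketch reconstructs exactly that approach: gadget construction $\to$ small SLG $\to$ transfer to unweighted $\to$ pick $k$ large enough so $3/(k+\bigO(1))\le\delta$ and $\epsilon k$ dominates the constant overhead, so it is essentially the same route. Two small imprecisions worth fixing: (i) after applying \cref{lm:rna-reduction} the string is inflated by the maximum weight $m=\Theta(|V|^2)$, so the unweighted instance has length $\Theta(|V|^{k+4})$ rather than $\Theta(|V|^{k+2})$ --- harmless because $k$ is a free parameter, but the exponent must be tracked; and (ii) the SLG construction you would reuse is the one encapsulated in \cref{lm:rna-small-slg}, not \cref{lm:kclique-small-slg}, since the RNA gadget string is a different object from the CFG-parsing gadget string (even though their recursive structure is similar).
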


In other words, assuming the $k$-Clique Conjecture
(resp.\ Combinatorial $k$-Clique Conjecture), the RNA folding for any
length-$n$ string $u$ cannot be performed in
$\bigO(n^{\omega-\epsilon})$ (resp.\ $\bigO(n^{3-\epsilon})$) time,
even restricted to highly compressible $u$. This implies the following
result.

\begin{corollary}[\cite{AbboudBBK17}]
  Assuming the $k$-Clique Conjecture (resp.\ Combinatorial $k$-Clique
  Conjecture), there is no algorithm (resp.\ combinatorial algorithm)
  that, given any SLG $G$ such that $L(G) = \{u\}$
  for some $u \in \Sigma^n$ $($where $|\Sigma| = \bigO(1)$$)$, computes
  $\RNA(u)$ in $\bigO(\poly(|G|) \cdot n^{\omega-\epsilon})$
  $($resp.\ $\bigO(\poly(|G|) \cdot n^{3 - \epsilon})$$)$ time, for
  any $\epsilon > 0$.
\end{corollary}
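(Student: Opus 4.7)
The plan is to derive this corollary directly from Theorem~\ref{th:rna-main-lower-bound} by a black-box reduction, following exactly the same template that was used earlier to obtain Corollary~\ref{cor:cfg-lower-bound-simple-generalization} from Theorem~\ref{th:cfg-main-lower-bound} in Section~\ref{sec:cfg-prior}. Suppose toward contradiction that the claimed algorithm exists: for some constants $c > 0$ and $\epsilon > 0$, it computes $\RNA(u)$ from any SLG $G$ with $L(G) = \{u\}$ in $\bigO(|G|^{c} \cdot n^{\omega - \epsilon})$ time (resp.\ $\bigO(|G|^{c} \cdot n^{3 - \epsilon})$ time in the combinatorial branch). I will compose it with an approximate grammar compressor to obtain an algorithm that violates Theorem~\ref{th:rna-main-lower-bound}.

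First I would pick $\delta \in (0, 1]$ small enough to guarantee $c \delta < \epsilon$; concretely $\delta = \epsilon / (2c)$ suffices. Given any $u \in \Sigma^{n}$ with $|\Sigma| = \bigO(1)$ and $g^{*}(u) = \bigO(n^{\delta})$, I would first invoke one of the $\bigO(\log n)$-approximation algorithms for the smallest grammar problem, e.g., \cite{Rytter03, Charikar05, Jez16}, which run in $\bigO(n)$ time and are all combinatorial (so the combinatorial branch of the argument remains legitimate). This produces an SLG $G$ with $L(G) = \{u\}$ and
\[
  |G| \;=\; \bigO(g^{*}(u) \cdot \log n) \;=\; \bigO(n^{\delta} \log n).
\]

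Next, I would feed $G$ to the hypothetical compressed RNA folding algorithm and compute $\RNA(u)$ in time
\[
  \bigO(|G|^{c} \cdot n^{\omega - \epsilon})
  \;=\; \bigO\bigl( (n^{\delta} \log n)^{c} \cdot n^{\omega - \epsilon} \bigr)
  \;=\; \bigO\bigl( n^{\omega - (\epsilon - c\delta)} \log^{c} n \bigr)
  \;=\; \bigO(n^{\omega - \epsilon'})
\]
for some constant $\epsilon' > 0$ (by the choice of $\delta$), and analogously $\bigO(n^{3 - \epsilon'})$ in the combinatorial case. Together with the $\bigO(n)$ preprocessing, this gives an algorithm that, on every $u$ with $g^{*}(u) = \bigO(n^{\delta})$, computes $\RNA(u)$ in $\bigO(n^{\omega - \epsilon'})$ (resp.\ $\bigO(n^{3 - \epsilon'})$) time, directly contradicting Theorem~\ref{th:rna-main-lower-bound} instantiated at exponent $\delta$. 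Since the reduction is entirely black-box, there is no real obstacle; the only technical care is in choosing $\delta$ strictly in terms of $c$ and $\epsilon$ so that the $|G|^{c}$ blow-up is fully absorbed into a polynomial saving in the exponent, and in verifying that the approximation step does not break combinatoriality in the stronger branch.
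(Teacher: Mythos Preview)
Your proposal is correct and follows essentially the same approach as the paper's own proof sketch: assume a hypothetical $\bigO(|G|^{c}\cdot n^{\omega-\epsilon})$ (resp.\ $\bigO(|G|^{c}\cdot n^{3-\epsilon})$) algorithm, choose $\delta$ small enough that $c\delta<\epsilon$, compress $u$ with an $\bigO(\log n)$-approximation grammar compressor, and contradict Theorem~\ref{th:rna-main-lower-bound}. The only cosmetic difference is the specific choice of $\delta$ (you take $\epsilon/(2c)$, the paper takes $\epsilon/(3c)$), which is immaterial.
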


The key idea in the proof of the above result is as follows.
Suppose that the algorithm in
question exists and runs in $\bigO(|G|^c \cdot n^{\omega - \epsilon})$
(resp.\ $\bigO(|G|^c \cdot n^{3 - \epsilon})$) time, where $\epsilon >
0$ and $c > 0$. Let $\delta = \tfrac{\epsilon}{3c}$. Consider any $u
\in \Sigma^{n}$ such that $|\Sigma| = \bigO(1)$ and $g^{*}(u) =
\bigO(n^{\delta})$. Run the following algorithm:

\begin{enumerate}
\item First, using any grammar compression algorithm with an
  approximation ratio of $\bigO(\log n)$ (such
  as~\cite{Rytter03,Charikar05,Jez16}), compute an SLG $G$ such that
  $L(G) = \{u\}$ and $|G| = \bigO(g^{*}(u) \log n) =
  \bigO(n^{\delta} \log n)$. Using for example~\cite{Jez16}, this
  takes $\bigO(n)$ time.
\item Second, run the above hypothetical algorithm algorithm for RNA
  folding. It takes $\bigO(|G|^c \cdot n^{\omega-\epsilon}) =
  \bigO((n^{\delta} \log n)^c \cdot n^{\omega-\epsilon}) =
  \bigO(n^{\omega - 2\epsilon/3} \log^{c} n) = \bigO(n^{\omega -
  \epsilon/3})$ (resp. $\bigO(|G|^{c} \cdot n^{3-\epsilon}) =
  \bigO(n^{3-\epsilon/3})$) time. We have thus obtained a fast RNA
  folding algorithm for highly compressible strings. By
  \cref{th:rna-main-lower-bound}, this implies that the $k$-Clique
  Conjecture (resp.\ Combinatorial $k$-Clique Conjecture) is false.
\end{enumerate}

\subsection{The Main Challenge}\label{sec:rna-challenge}

The above result shows that it is hard to solve RNA parsing in
compressed time assuming the input has been constructed using a
grammar compression algorithm with a small approximation factor.

Similarly as in \cref{sec:cfg-challenge}, this raises the question
about the role of the approximation ratio in the hardness. Let us thus
again redo the above analysis for an algorithm with the approximation
factor as a parameter.

\begin{lemma}[\cite{AbboudBBK17}]\label{lm:rna-small-slg}
  Let $k \geq 1$ be a constant. For every undirected graph $G = (V,
  E)$, there exists a string $u \in \Sigma^{*}$ $($where $|\Sigma| =
  48$$)$ satisfying $|u| = \bigO(|V|^{k+2})$ and $|u| =
  \Omega(|V|^{k})$, a weight function $w : \Sigma \rightarrow [1 \dd
  m]$ $($where $m = \bigO(|V|^2)$$)$, and an integer $\lambda \geq
  0$ such that $\WRNA(u) \geq \lambda$ holds if and only if $G$ has a
  $3k$-clique. Moreover, given $G$, in $\bigO(|V|^3)$ time we can
  compute the weight function $w$, the integer $\lambda$, and an SLG
  $H = (V_H, \Sigma, R_H, S_H)$ such that $L(H) = \{u\}$, $|H| =
  \bigO(|V|^3)$, and $\sum_{X \in V_H}|\expgen{H}{X}| =
  \bigO(|V|^{k+5})$.
\end{lemma}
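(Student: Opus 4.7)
The plan is to start from the reduction of Abboud, Backurs, Bringmann, and Künnemann~\cite{AbboudBBK17}, which already supplies the string $u$, alphabet size $|\Sigma|=48$, weight function $w$, threshold $\lambda$, and the equivalence "$\WRNA(u)\ge\lambda$ iff $G$ has a $3k$-clique,'' together with the length bounds $|u|=\bigO(|V|^{k+2})$ and $|u|=\Omega(|V|^{k})$ and weight range $w:\Sigma\to[1\dd m]$ with $m=\bigO(|V|^2)$. The only genuinely new content compared to~\cite{AbboudBBK17} is the simultaneous existence of a grammar $H$ satisfying $|H|=\bigO(|V|^3)$ and $\sum_{X\in V_H}|\expgen{H}{X}|=\bigO(|V|^{k+5})$, constructible in $\bigO(|V|^3)$ time; the rest of the claim is essentially a reformulation of what is already proved in that paper.

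To construct $H$, I would follow the hierarchical layout of the reduction string. In~\cite{AbboudBBK17}, $u$ is built as a concatenation of ``list gadgets'' enumerating tuples from $V^{j}$ for $1\le j\le k$, where each list at level $j$ is the concatenation of $|V|$ copies of the level-$(j-1)$ list interleaved with short vertex-specific fragments of constant length. This is precisely the kind of regular, self-similar structure that admits a small SLG. Concretely, I would introduce: (i) $\bigO(1)$ nonterminals for the fixed constant-size fragments of the alphabet; (ii) for each vertex $v\in V$ and each level $j\in[0\dd k]$, a nonterminal $X_{v,j}$ whose expansion is the vertex-$v$ specific portion of the level-$j$ gadget, defined recursively in terms of $\{X_{v',j-1}\}_{v'\in V}$; (iii) level-concatenation nonterminals $L_j$ expanding to the entire level-$j$ list; (iv) $\bigO(|V|^2)$ nonterminals encoding the edge-adjacency block of the reduction (which already requires $\Theta(|V|^2)$ space just to write down the adjacency data). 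Counting gives $\bigO(k\cdot|V|) + \bigO(|V|^2)=\bigO(|V|^2)$ secondary nonterminals, each of right-hand-side length $\bigO(|V|)$ in the worst case, so $|H|=\bigO(|V|^3)$.

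For the expansion-length bound, I would observe that at level $j$ we have $|V|$ nonterminals, each of expansion length $\bigO(|V|^{j+1})$ (since $|u|=\bigO(|V|^{k+2})$ and level $k$ is essentially the full string), contributing $\bigO(|V|^{j+2})$ to the sum; summing the geometric series over $j\in[0\dd k]$ gives $\bigO(|V|^{k+2})$ from these, the edge block contributes $\bigO(|V|^{2}\cdot|V|^2)=\bigO(|V|^4)$, and the top-level concatenation adds $\bigO(|V|^{k+2})$ more. A crude bound $\sum_{X}|\expgen{H}{X}|\le|V_H|\cdot|u|=\bigO(|V|^3\cdot|V|^{k+2})=\bigO(|V|^{k+5})$ also works and matches the target. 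The construction-time bound follows because building $H$ amounts to writing down $\bigO(|V|^3)$ symbols, and the weight function, $\lambda$, and the edge-block data are all produced in $\bigO(|V|^3)$ time exactly as in~\cite{AbboudBBK17}.

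The main obstacle, and the only part that is not a direct inspection of~\cite{AbboudBBK17}, is verifying that the edge-adjacency block (which is the least ``self-similar'' part of $u$ and depends on $E$) can be built into the SLG without inflating $|H|$ beyond $\bigO(|V|^3)$. Here I would use the fact that this block is essentially a $|V|\times|V|$ table of constant-size entries, which can be encoded row-by-row using $\bigO(|V|^2)$ nonterminals of right-hand side length $\bigO(|V|)$, or alternatively cell-by-cell using $\bigO(|V|^2)$ length-$\bigO(1)$ rules together with $|V|$ ``row'' rules of length $\bigO(|V|)$; either way the contribution is $\bigO(|V|^3)$, which is absorbed. Once this block is in place, the remaining levels are pure iterated concatenation and yield to the recursive scheme above without difficulty.
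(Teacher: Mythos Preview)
The paper does not provide its own proof of this lemma: it is stated with the citation \cite{AbboudBBK17} and used as a black box, exactly as the analogous \cref{lm:kclique-small-slg} is for the CFG parsing problem. There is therefore nothing in this paper to compare your argument against.

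That said, your sketch is broadly reasonable and the crude bound $\sum_{X\in V_H}|\expgen{H}{X}|\le |V_H|\cdot|u|=\bigO(|V|^3)\cdot\bigO(|V|^{k+2})=\bigO(|V|^{k+5})$ is indeed the cleanest way to obtain the expansion-sum claim once $|H|=\bigO(|V|^3)$ is established. For the structure of the SLG itself, note that the paper's \cref{obs:kclique-small-slg} (stated for the CFG case but reflecting the same underlying construction) describes a grammar where every nonterminal has right-hand side of length either $|V|$ or $\bigO(1)$, and expansions of length $|V|^i$ for some $i\in[0\dd k]$; this is somewhat different from your ``edge-adjacency block of $|V|^2$ nonterminals'' picture, though both lead to the same asymptotic bounds. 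Since the lemma is imported wholesale from \cite{AbboudBBK17}, the precise internal organization of $H$ is not something this paper re-derives.
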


Using the above lemma, we can prove the following result.

\begin{theorem}[Based
    on~\cite{AbboudBBK17}]\label{th:rna-lower-bound-simple-generalization}
  Consider a grammar compression algorithm \algname{Alg} with an
  approximation ratio $\bigO(n^{\alpha})$ (where $\alpha \in (0, 1)$).
  Let $\delta \in (\alpha, 1]$.  Assuming the $k$-Clique Conjecture
  (resp.\ Combinatorial $k$-Clique Conjecture), there is no algorithm
  (resp.\ combinatorial algorithm) that, given $u \in \Sigma^{n}$
  satisfying $|\Sigma| = \bigO(1)$, and $|\algname{Alg}(u)| =
  \bigO(n^{\delta})$, computes $\RNA(u)$ in
  $\bigO(n^{\omega-\epsilon})$ $($resp.\ $\bigO(n^{3 - \epsilon})$$)$
  time, for any $\epsilon > 0$.
\end{theorem}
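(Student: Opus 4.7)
The plan is to adapt the proof scheme of \cref{th:cfg-lower-bound-simple-generalization} to RNA folding, substituting \cref{lm:rna-small-slg} for \cref{lm:kclique-small-slg} and using \cref{lm:rna-reduction} to bridge the gap between the weighted instance produced by the former and the unweighted problem in the theorem statement. I would proceed by contraposition: suppose there exists $\epsilon>0$ and an algorithm (resp.\ combinatorial algorithm) that, for every $u\in\Sigma^{n}$ with $|\Sigma|=\bigO(1)$ and $|\algname{Alg}(u)|=\bigO(n^{\delta})$, computes $\RNA(u)$ in $\bigO(n^{\omega-\epsilon})$ (resp.\ $\bigO(n^{3-\epsilon})$) time. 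The goal is then to refute the $k$-Clique Conjecture (resp.\ its combinatorial analog) for some constant $k$ depending on $\delta$, $\alpha$, and $\epsilon$.

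Given an undirected graph $G=(V,E)$, I would first invoke \cref{lm:rna-small-slg} in $\bigO(|V|^{3})$ time to obtain a string $u$ with $|u|=\bigO(|V|^{k+2})$ and $|u|=\Omega(|V|^{k})$, a weight function $w:\Sigma\to[1\dd m]$ with $m=\bigO(|V|^{2})$, a threshold $\lambda$, and an SLG $H$ of size $\bigO(|V|^{3})$ encoding $u$, such that $\WRNA(u)\geq \lambda$ iff $G$ has a $3k$-clique. Next, let $u'=u[1]^{w(u[1])}\cdots u[|u|]^{w(u[|u|])}$; then $|u'|=\bigO(|V|^{k+4})$ and $|u'|=\Omega(|V|^{k})$, and by \cref{lm:rna-reduction} we have $\RNA(u')=\WRNA(u)$. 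The key technical step is to construct an SLG $H'$ for $u'$ of size $\bigO(|V|^{3})$: for each of the constantly many symbols $a\in\Sigma$, introduce a fresh nonterminal $Y_{a}$ whose balanced doubling-chain definition of length $\bigO(\log w(a))=\bigO(\log|V|)$ expands to $a^{w(a)}$, and then replace every occurrence of $a$ on the right-hand sides of $H$ with $Y_{a}$. Thus $g^{*}(u')=\bigO(|V|^{3})$.

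The approximation guarantee of \algname{Alg} combined with $|u'|=\Omega(|V|^{k})$ then gives $|\algname{Alg}(u')|=\bigO(g^{*}(u')\cdot|u'|^{\alpha})=\bigO(|u'|^{3/k+\alpha})$. Choosing $k\geq\lceil 3/(\delta-\alpha)\rceil$ (well-defined since $\delta>\alpha$), this becomes $\bigO(|u'|^{\delta})$, so $u'$ meets the input conditions of the hypothetical algorithm. Executing it and comparing the returned value to $\lambda$ decides whether $G$ has a $3k$-clique in time $\bigO(|u'|^{\omega-\epsilon})=\bigO(|V|^{(k+4)(\omega-\epsilon)})$ (resp.\ $\bigO(|V|^{(k+4)(3-\epsilon)})$). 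Picking $k$ also to satisfy $k\geq\lceil 8\omega/\epsilon\rceil$ (resp.\ $\lceil 24/\epsilon\rceil$) ensures the exponent collapses to $\omega k(1-\epsilon')$ (resp.\ $3k(1-\epsilon')$) for some constant $\epsilon'>0$, which, together with the $\bigO(|V|^{3})$ preprocessing cost, contradicts the relevant conjecture.

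The main obstacle is the SLG construction for $u'$: naively writing each block $a^{w(a)}$ explicitly into the right-hand sides of $H$ would inflate its size by a factor of $m=\bigO(|V|^{2})$ and push $g^{*}(u')$ past the $\bigO(|u'|^{\delta-\alpha})$ budget we need. The saving comes from sharing a single height-$\bigO(\log w(a))$ chain for all occurrences of $a$; since $|\Sigma|$ is a constant, the total added size is only $\bigO(\log|V|)$, which is negligible compared to the $\bigO(|V|^{3})$ baseline. Once this is in place, the remaining steps—controlling $|u'|$, applying the reduction, and balancing exponents in $k$—are a mechanical adaptation of the analogous CFG argument.
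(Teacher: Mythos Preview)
Your proposal is correct and follows essentially the same approach as the paper: invoke \cref{lm:rna-small-slg}, pass to the unweighted instance $u'$ via \cref{lm:rna-reduction}, bound $g^{*}(u')$ using per-symbol doubling chains, apply the approximation ratio of \algname{Alg}, and balance exponents in $k$. The only cosmetic difference is that the paper records $g^{*}(u')=\bigO(|V|^{3}\log|V|)$ and compensates with $k\ge\lceil 3/(\delta-\alpha)\rceil+1$, whereas your sharper $g^{*}(u')=\bigO(|V|^{3})$ (exploiting $|\Sigma|=\bigO(1)$) lets you drop the $+1$; both choices work.
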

\begin{proof}
  Suppose the above theorem is not true, and let $\epsilon > 0$ be
  such that the algorithm in question runs in
  $\bigO(n^{\omega-\epsilon})$ (resp.\ $\bigO(n^{3-\epsilon})$)
  time. Assume that we are given an undirected graph $G = (V, E)$. Let
  $k = \max(\lceil \tfrac{3}{\delta - \alpha} \rceil + 1, \lceil
  \tfrac{24}{\epsilon} \rceil)$.  We execute the following algorithm:
  \begin{enumerate}
  \item Using \cref{lm:rna-small-slg}, in $\bigO(|V|^3)$ time we
    compute a weight function $w : \Sigma \rightarrow [1 \dd m]$
    (where $m = \bigO(|V|^2)$), an integer $\lambda$, and an SLG $H =
    (V_{H}, \Sigma, R_{H}, S_{H})$ such that letting $L(H) = \{u\}$,
    it holds $|\Sigma| = \bigO(1)$, $|H| = \bigO(|V|^3)$, $|u| =
    \bigO(|V|^{k+2})$, $|u| = \Omega(|V|^{k})$, and $\WRNA(u) \geq
    \lambda$ holds if and only if $G$ has a $3k$-clique. Using $H$, we
    generate $u$ in $\bigO(|V|^{k+2})$ time. Using $u$ and $w$, we
    then compute the string $u'$ defined by $u' = u[1]^{w(u[1])}
    \cdots u[|u|]^{w(u[|u|])}$.  By $m = \bigO(|V|^2)$, we obtain
    $|u'| = \bigO(|u| \cdot |V|^2) = \bigO(|V|^{k+4})$. The
    construction of $u'$ is easily done in $\bigO(|u'|) =
    \bigO(|V|^{k+4})$ time.  Note that by \cref{lm:rna-reduction}, it
    holds $\RNA(u') = \WRNA(u)$.  Thus, $\RNA(u') \geq \lambda$ holds
    if and only if $G$ has a $3k$-clique.  Observe now that the
    existence of $H$ implies that $g^{*}(u) = \bigO(|V|^3)$. Since a
    unary string of length $k$ has an SLG of size $\bigO(\log k)$, we
    thus have $g^{*}(u') = \bigO(|V|^3 \log |V|)$.  Consequently,
    $|\algname{Alg}(u')| = \bigO(g^{*}(u') \cdot |u'|^{\alpha}) =
    \bigO(|V|^3 \log |V| \cdot |u'|^{\alpha}) = \bigO(|u'|^{3/k +
      \alpha} \log |V|) = \bigO(|u|^{\delta})$, where we used that
    $|u| = \Omega(|V|^k)$ (and hence $|u'| = \Omega(|V|^k)$) and
    $\tfrac{3}{\delta - \alpha} < k$ (which implies $3/k + \alpha <
    \delta$).

  \item We apply the above hypothetical RNA folding algorithm to $u'$.
    More precisely, we compute $\RNA(u')$ in
    $\bigO(|u'|^{\omega-\epsilon}) \,{=}\,
    \bigO(|V|^{(k+4)(\omega-\epsilon)}) =
    \bigO(|V|^{k(\omega-\epsilon) + 4\omega}) =\allowbreak
    \bigO(|V|^{\omega k(1 - \epsilon/(2\omega)) - (k\epsilon/2 -
    4\omega)}) = \bigO(|V|^{\omega k(1 - \epsilon/(2\omega))})$
    (resp.\ $\bigO(|u'|^{3 - \epsilon}) \,{=}\, \bigO(|V|^{(k+4)(3 -
    \epsilon)}) \allowbreak= \bigO(|V|^{k(3-\epsilon) + 12)})
    \allowbreak= \bigO(|V|^{3k(1 - \epsilon/6)})$) time, where in the
    last equality we used that $k \geq \lceil \tfrac{24}{\epsilon}
    \rceil$ (which implies $k\epsilon/2 \geq 12$).
  \end{enumerate}
  We have thus checked if $G$ contains a $3k$-clique in
  $\bigO(|V|^{k+4} + |V|^{\omega k(1 - \epsilon/(2\omega))}) =
  \bigO(|V|^{\omega k(1 - \epsilon')})$ (resp.\ $\bigO(|V|^{k+4} +
  |V|^{3k(1 - \epsilon/6)}) = \bigO(|V|^{3k(1 - \epsilon')})$) time,
  where $\epsilon' > 0$ is some constant (and we used that $k \geq
  5$). This implies that the $k$-Clique Conjecture
  (resp.\ Combinatorial $k$-Clique Conjecture) is false.
\end{proof}

\begin{corollary}[Based
    on~\cite{AbboudBBK17}]\label{cor:rna-lower-bound-simple-generalization}
  Consider a grammar compression algorithm \algname{Alg} that runs in
  $\bigO(n^q)$ time, where $q < \omega$ (resp.\ $q < 3$) is a constant,
  and has an approximation ratio $\bigO(n^{\alpha})$ (for a constant
  $\alpha \in (0, 1)$). Assuming the $k$-Clique Conjecture (resp.\
  Combinatorial $k$-Clique Conjecture), there is no algorithm (resp.\
  combinatorial algorithm) that, given the SLG $G = \algname{Alg}(u)$ of
  a string $u \in \Sigma^{n}$ $($where $|\Sigma| = \bigO(1)$$)$,
  computes $\RNA(u)$ in
  $\bigO(|G|^c \cdot n^{\omega-\epsilon})$ $($resp.\ $\bigO(|G|^c
  \cdot n^{3 - \epsilon})$$)$ time, for any constants $\epsilon > 0$
  and $c \in (0, \epsilon/\alpha)$.
\end{corollary}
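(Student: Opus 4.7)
The plan is to mirror the proof of \cref{cor:cfg-lower-bound-simple-generalization} almost verbatim, substituting \cref{th:rna-lower-bound-simple-generalization} in place of \cref{th:cfg-lower-bound-simple-generalization}. The argument is a straightforward reduction: if the hypothetical compressed RNA-folding algorithm existed, we would first run \algname{Alg} itself on the uncompressed input, and then feed the resulting SLG into the hypothetical algorithm; the combined runtime would beat the lower bound of \cref{th:rna-lower-bound-simple-generalization}.

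I will proceed by contradiction. Assume that for some constants $\epsilon > 0$ and $c \in (0, \epsilon/\alpha)$, there exists an algorithm (resp.\ combinatorial algorithm) that, given the SLG $G = \algname{Alg}(u)$ for any $u \in \Sigma^n$ with $|\Sigma| = \bigO(1)$, computes $\RNA(u)$ in time $\bigO(|G|^c \cdot n^{\omega-\epsilon})$ (resp.\ $\bigO(|G|^c \cdot n^{3-\epsilon})$). Set $\delta = \alpha/2 + \epsilon/(2c)$. As in the CFG proof, the inequality $c < \epsilon/\alpha$ gives $\epsilon/(2c) > \alpha/2$, so $\delta > \alpha$, while multiplying by $c$ yields $c\delta = c\alpha/2 + \epsilon/2 < \epsilon$. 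These are precisely the two arithmetic facts needed to close the argument.

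Now consider any $u \in \Sigma^{n}$ with $|\Sigma| = \bigO(1)$ and $|\algname{Alg}(u)| = \bigO(n^{\delta})$. The reduction consists of two steps. First, run \algname{Alg} on $u$ to obtain $G = \algname{Alg}(u)$ in $\bigO(n^q)$ time; by the assumption on the size of the output, $|G| = \bigO(n^\delta)$. Second, invoke the hypothetical compressed RNA-folding algorithm on $G$, taking
\[
\bigO(|G|^c \cdot n^{\omega-\epsilon}) = \bigO(n^{c\delta} \cdot n^{\omega-\epsilon}) = \bigO(n^{\omega - (\epsilon - c\delta)})
\]
time (and analogously $\bigO(n^{3 - (\epsilon - c\delta)})$ in the combinatorial case). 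Since $\epsilon - c\delta > 0$ and $q < \omega$ (resp.\ $q < 3$), the total runtime is $\bigO(n^{\omega - \epsilon'})$ (resp.\ $\bigO(n^{3 - \epsilon'})$) for some constant $\epsilon' > 0$. This contradicts \cref{th:rna-lower-bound-simple-generalization} (applied with parameter $\delta$, which is valid since $\delta > \alpha$), showing that the $k$-Clique Conjecture (resp.\ Combinatorial $k$-Clique Conjecture) must fail.

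There is no real obstacle: the only non-routine step is verifying that $\delta = \alpha/2 + \epsilon/(2c)$ simultaneously satisfies $\delta > \alpha$ (so that \cref{th:rna-lower-bound-simple-generalization} applies) and $c\delta < \epsilon$ (so that the combined exponent strictly beats $\omega$ or $3$). Both follow mechanically from the assumption $c \in (0, \epsilon/\alpha)$, exactly as in the CFG case.
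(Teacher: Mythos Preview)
Your proposal is correct and follows essentially the same approach as the paper's own proof: the same choice $\delta = \alpha/2 + \epsilon/(2c)$, the same two-step reduction (run \algname{Alg}, then the hypothetical algorithm), and the same appeal to \cref{th:rna-lower-bound-simple-generalization}.
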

\begin{proof}
  Suppose that such algorithm exists. Let $\delta = \alpha/2 +
  \epsilon/(2c)$.  Observe that $c \in (0, \epsilon/\alpha)$ implies
  $\epsilon/(2c) > \alpha/2$.  Thus, $\delta > \alpha$. On the other
  hand, we also have $c \delta = c\alpha/2 + \epsilon/2 < \epsilon$.
  Suppose now that we are given a string $u \in \Sigma$ such that
  $|\Sigma| = \bigO(1)$, and $|\algname{Alg}(u)| = \bigO(n^{\delta})$.
  We execute the following algorithm:
  \begin{enumerate}
  \item Compute $G = \algname{Alg}(u)$. This takes
    $\bigO(n^q)$ time.  By the assumption, we have $|G| =
    \bigO(n^{\delta})$.
  \item Apply the above hypothetical algorithm to $G$, i.e.,
    we compute $\RNA(u)$ in $\bigO(|G|^c n^{\omega-\epsilon}) =
    \bigO(n^{\delta c} n^{\omega-\epsilon}) =
    \bigO(n^{\omega - (\epsilon - \delta c)})$ (resp.\ $\bigO(|G|^c
    n^{3-\epsilon}) = \bigO(n^{\delta c} n^{3-\epsilon}) = \bigO(n^{3
    - (\epsilon - \delta c)})$) time.
  \end{enumerate}
  We have thus computed $\RNA(u)$ in $\bigO(n^q + n^{\omega -
  (\epsilon - \delta c)}) = \bigO(n^{\omega - \epsilon'})$
  (resp.\ $\bigO(n^q + n^{3 - (\epsilon - \delta c)}) = \bigO(n^{3 -
  \epsilon'})$) time, where $\epsilon' > 0$ is some constant.  By
  \cref{th:rna-lower-bound-simple-generalization}, this implies that
  the $k$-Clique Conjecture (resp.\ Combinatorial $k$-Clique
  Conjecture) is false. Note that
  \cref{th:rna-lower-bound-simple-generalization} requires that
  $\delta > \alpha$, which we showed above.
\end{proof}

\begin{remark}
  Similarly as in \cref{sec:cfg-parsing}, the above result thus shows
  that applying the techniques from~\cite{AbboudBBK17} prevents some
  but not all compressed algorithms due to the dependence of the lower
  bound on the approximation ratio. In the following sections, we
  describe methods for eliminating this dependence from the analysis.
\end{remark}

\subsection{Preliminaries}\label{sec:rna-prelim}

\begin{lemma}\label{lm:decomposition}
  Let $w : \Sigma \rightarrow \Zp$ be a weight function and let $u = x a
  y b z$, where $x, y, z \in \Sigma^{*}$ and $a, b \in \Sigma$. Assume
  $\match{a} = b$, and that $a$ and $b$ do not occur in $x$, $y$, and
  $z$. Then, $w(a) > \sum_{i=1}^{|y|} w(y[i])$ implies $\WRNA(u) =
  w(a) + \WRNA(xz) + \WRNA(y)$.
\end{lemma}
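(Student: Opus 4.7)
}

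Let $p = |x|+1$ and $q = |x|+|y|+2$ be the positions of $a$ and $b$ in $u$. Since $a,b$ do not occur in $x,y,z$ and $\match{a}=b$ (hence $\match{b}=a$), in $u$ the only symbol matching $a$ sits at position $q$, and the only symbol matching $b$ sits at position $p$. Thus in any feasible pair $(i,j)$ of a WRNA solution, if $p \in \{i,j\}$ then $\{i,j\} = \{p,q\}$, and symmetrically for $q$.

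For the lower bound, I would take optimal solutions $R^\star_{xz}$ for $xz$ and $R^\star_y$ for $y$, lift them to $u$ by the obvious index shifts (shifting $y$-positions by $|x|+1$ and $z$-positions by $|x|+|y|+2$), and add the pair $(p,q)$. The arcs from $R^\star_y$ are properly nested inside $(p,q)$, the arcs from $R^\star_{xz}$ either lie entirely to the left of $p$, entirely to the right of $q$, or cross both $p$ and $q$ (in which case they strictly enclose $(p,q)$); in every case they are non-crossing with $(p,q)$ and with each other. This yields $\WRNA(u) \ge w(a) + \WRNA(xz) + \WRNA(y)$.

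For the matching upper bound, I would take any feasible $R$ for $u$ and split into two cases. If $(p,q) \in R$, then by non-crossing every other pair $(i,j) \in R$ satisfies $j<p$, or $i>q$, or $p<i<j<q$; hence $R \setminus \{(p,q)\}$ partitions into a non-crossing matching on $xz$ (obtained by concatenating the $x$- and $z$-positions and relabeling) and a non-crossing matching on $y$, giving $W(R) \le w(a) + \WRNA(xz) + \WRNA(y)$. If $p$ is unmatched in $R$, then (by the uniqueness argument above) so is $q$, and I would partition the pairs of $R$ by the region of their endpoints into $R_x,R_y,R_z$ (both ends within one block) and $R_{xy},R_{yz},R_{xz}$ (crossing $p$ only, $q$ only, or both). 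The set $R_x \cup R_z \cup R_{xz}$ has all endpoints outside $[p,q]$, and contracting the positions in $[p,q]$ preserves the cyclic order of the arcs, so it yields a non-crossing matching on $xz$ of weight at most $\WRNA(xz)$. The remaining pairs use each $y$-position at most once, so $W(R_{xy})+W(R_{yz})+2W(R_y) \le \sum_{i=1}^{|y|} w(y[i])$, and hence $W(R_y)+W(R_{xy})+W(R_{yz}) \le \sum_{i=1}^{|y|} w(y[i]) < w(a)$ by hypothesis. Combining, $W(R) < \WRNA(xz) + w(a) \le w(a) + \WRNA(xz) + \WRNA(y)$, strictly worse than the lower bound, so this case cannot occur for an optimal $R$.

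The main technical obstacle is the second case: I must justify carefully that $R_x \cup R_z \cup R_{xz}$, after relabeling, is a genuine non-crossing matching on $xz$. The key observation is that all endpoints of these arcs lie in $[1 \dd p) \cup (q \dd |u|]$, and the relative order of these positions is preserved when we delete the block $[p \dd q]$ and concatenate the remainders; two arcs that were disjoint or nested in $u$ remain disjoint or nested after this contraction, since contracting an interval that contains no arc endpoints cannot create a new crossing. Once this observation is in place, combining the two cases with the lower bound yields the claimed equality.
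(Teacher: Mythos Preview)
Your proposal is correct and takes essentially the same approach as the paper. The paper likewise argues that the weight condition forces the pair $(p,q)$ into any optimal solution and then partitions the remaining arcs into an $xz$-matching and a $y$-matching; it phrases this as a single exchange argument on an optimal $R$ (remove all $x$--$y$ and $y$--$z$ pairs, insert $(p,q)$, contradict optimality) rather than your explicit lower-bound/upper-bound split with a case analysis, but the content is identical. One small omission in your Case~1: when $(p,q)\in R$, the non-crossing condition also permits the enclosing case $i<p<q<j$ (an $x$--$z$ arc), which you clearly intend to absorb into the $xz$-matching but forgot to list among the alternatives.
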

\begin{proof}
  In the proof of this lemma along with the next two lemmas, assume $w(X)$ for some $X \subset [1 \dd |u|]^2$ is defined to be $\sum_{(i,j) \in X}w(u[i])$.
  
  Let $R \subset [1 \dd |u|]^2$ be a an optimal solution to Weighted RNA Folding with value $\WRNA(u)$. We claim that there is no matching pair $(i,j) \in R, i < j$ which goes between $x$ and $y$ or $y$ and $z$, i.e. $|x| + 1 < j \leq |x| + 1 + |y| \implies i > |x| + 1 $, similarly $|x| + 1 < i \leq |x| + 1 + |y| \implies j \leq |x| + 1 + |y|$. This follows from the fact that any solution which has such a crossing pair between $x$ and $y$ or $y$ and $z$ prevents the matching of characters $a$ and $b$ with $i = |x| + 1, j = |x| + |y| + 2$ from appearing in in $R$. Let $R'$ be another solution obtained by removing all such $x,y$ and $y,z$ crossing pairs from $R$ and putting in $(i = |x| + 1, j = |x| + |y| + 2)$ in $R'$. Since $R' \setminus R$ consists of pairs with exactly one endpoint in $y$, $w(R' \setminus R) \leq \sum_{i=1}^{|y|} w(y[i])$, and $R' \setminus R = \{(i = |x| + 1, j = |x| + |y| + 2)\}$. Since $w(a) > \sum_{i=1}^{|y|} w(y[i])$, this follows that $w(R' \setminus R) - w(R \setminus R') \geq w(a) - \sum_{i=1}^{|y|} w(y[i]) > 0$, hence $w(R') > w(R)$ contradicting the optimality of $R$.
  
  Now, assuming there are no crossing pairs between $x$ and $y$ or $y$ and $z$, any optimal solution $R$ to weighted RNA folding on $u$ must satisfy the following properties for any pair $(i,j) \in R$ with $i<j$:
  
  \begin{enumerate}[itemsep=0pt,parsep=0pt]
  	\item $1 \leq i < j \leq |x|$ (matching within $x$),
  	\item $1 + |x| < i < j \leq 1 + |x| + |y|$ (matching within $y$),
  	\item $|x| + |y| + 2 < i < j \leq |u|$ (matching within $z$),
  	\item $i = |x| + 1, j = |x| + |y| + 2$ ($a$,$b$ matching),
  	\item $1 \leq i \leq |x|$ and $|x| + |y| + 2 < j \leq |u|$ (matching from $x$ to $y$).
  \end{enumerate}
	Therefore, we can partition the pairs of any optimal solution on $u$ into an RNA folding of $xz$ and of $y$ and the pair ($a,b$).
	Hence we obtain $\WRNA(u) = \WRNA(xz) + \WRNA(y) + w(a)$.

\end{proof}

\begin{lemma}\label{lm:rev}
  For every string $u$, it holds $\WRNA(u) = \WRNA(\revstr{u})$.
\end{lemma}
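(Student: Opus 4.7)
The plan is to exhibit a weight-preserving bijection between feasible solutions for $u$ and feasible solutions for $\revstr{u}$. Denote $n = |u|$. Given any set $R \subseteq [1 \dd n]^2$ satisfying the two constraints of Weighted RNA Folding for $u$, define
\[
  \phi(R) := \{(n+1-j,\ n+1-i) : (i,j) \in R\}.
\]
The bulk of the argument is to verify that $\phi$ maps feasible solutions for $u$ to feasible solutions for $\revstr{u}$ of the same total weight, and that the analogous map $\phi'$ defined using $\revstr{u}$ in place of $u$ is its inverse. Taking maxima on both sides then yields $\WRNA(u) = \WRNA(\revstr{u})$.

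For the verification, I would check four things, each of which is a one-line computation. First, if $(i,j) \in R$ with $i<j$, then $n+1-j < n+1-i$, so pairs stay ordered. Second, since $\revstr{u}[k] = u[n+1-k]$ for every $k \in [1 \dd n]$, we have $\revstr{u}[n+1-j] = u[j]$ and $\revstr{u}[n+1-i] = u[i]$; the matching condition $\match{u[i]} = u[j]$ together with $\match{\match{a}} = a$ (\cref{def:match}) yields $\match{\revstr{u}[n+1-j]} = \revstr{u}[n+1-i]$, so each image pair is a valid match in $\revstr{u}$. Third, the transformation $k \mapsto n+1-k$ is an order-reversing bijection on $[1 \dd n]$, so two intervals $[i \dd j]$ and $[i' \dd j']$ are disjoint (resp.\ properly nested) if and only if their images $[n+1-j \dd n+1-i]$ and $[n+1-j' \dd n+1-i']$ are disjoint (resp.\ properly nested); by \cref{def:crossing}, non-crossing is preserved. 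Fourth, the weight of the image pair is $w(\revstr{u}[n+1-j]) = w(u[j]) = w(u[i])$, where the last equality uses $w(a) = w(\match{a})$ together with $\match{u[i]} = u[j]$; hence $\sum_{(i',j') \in \phi(R)} w(\revstr{u}[i']) = \sum_{(i,j) \in R} w(u[i])$.

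Finally, applying the same construction with $\revstr{u}$ in place of $u$ gives a map $\phi'$ from feasible solutions for $\revstr{u} = \revstr{\revstr{u}}\ $ back to feasible solutions for $u$, and $\phi' \circ \phi$ is the identity since $(n+1) - (n+1-k) = k$. Thus $\phi$ is a weight-preserving bijection between feasible sets, and $\WRNA(u) = \WRNA(\revstr{u})$. There is no significant obstacle here; the only thing one must be careful about is tracking the reversal of order correctly in the non-crossing check and invoking $\match{\match{a}} = a$ and $w(a) = w(\match{a})$ in the right places.
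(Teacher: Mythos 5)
Your proof is correct and follows essentially the same approach as the paper: map each feasible solution for $u$ to one for $\revstr{u}$ by reflecting positions, and observe the map is weight-preserving, feasibility-preserving, and invertible. You are in fact more careful than the paper's own terse proof, which writes the reflected solution as $\{(i,j) : (j,i) \in R'\}$ without the index shift $k \mapsto n+1-k$; your explicit $\phi(R) = \{(n+1-j,\ n+1-i) : (i,j) \in R\}$ is the precise formulation.
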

\begin{proof}
  Let $R'$ be an optimal solution to Weighted RNA folding on $u$, $w(R') = \WRNA(u)$. Let $R'' = \{(i,j) | (j,i) \in R'\}$. Then $w(R'') = w(R')$ and $R''$ is a valid RNA folding solution of $u^R$ (this follows from the fact that if $(i,j)$ and $(i',j')$ are illegal crossing pairs in $R''$, then $(j,i)$ and $(j',i')$ are illegal crossing pairs in $R'$). Hence $\WRNA(u^R) \geq \WRNA(u)$. Similarly, applying the above argument to $u^R$ and $(u^R)^R = u$, we obtain $\WRNA(u) \geq \WRNA(u^R)$. Hence we conclude $\WRNA(u) = \WRNA(u^R)$. 
\end{proof}

\begin{lemma}\label{lm:match}
  For every string $u$, it holds $\WRNA(u) = \WRNA(u')$, where
  $u' = \match{u[1]} \cdots \match{u[n]}$.
\end{lemma}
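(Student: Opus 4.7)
The plan is to show that the validity of a pairing set and the weight it contributes are unchanged under the global substitution $u[i] \mapsto \match{u[i]}$, so the optimization problems defining $\WRNA(u)$ and $\WRNA(u')$ are identical.

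First I would check that the pairing condition is preserved. For any $i < j$, observe that $u'[i] = \match{u[i]}$ and $u'[j] = \match{u[j]}$. Using that the match operation is an involution (so $\match{\match{a}} = a$), we have $\match{u'[i]} = \match{\match{u[i]}} = u[i]$. Hence $\match{u'[i]} = u'[j]$ is equivalent to $u[i] = \match{u[j]}$, which by applying $\match{\cdot}$ to both sides is equivalent to $\match{u[i]} = u[j]$. Thus $(i,j)$ is a legal matching pair for $u$ if and only if it is a legal matching pair for $u'$.

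Next I would observe that the weight contribution is preserved: since $w(\match{a}) = w(a)$ holds for every $a \in \Sigma$ by assumption, we have $w(u'[i]) = w(\match{u[i]}) = w(u[i])$ for every $i$.

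Finally, I would combine the two observations. The non-crossing condition (\cref{def:crossing}) depends only on the indices $i,j$, not on the symbols, so it is identical for $u$ and $u'$. Therefore a set $R \subseteq [1 \dd n]^2$ satisfies the two constraints defining $\WRNA(u)$ exactly when it satisfies those for $\WRNA(u')$, and in that case $\sum_{(i,j) \in R} w(u[i]) = \sum_{(i,j) \in R} w(u'[i])$. The two optimization problems therefore have the same feasible region and the same objective, so $\WRNA(u) = \WRNA(u')$. There is no real obstacle here; the lemma is essentially a bookkeeping consequence of the involution property $\match{\match{a}} = a$ and the symmetry $w(a) = w(\match{a})$ built into the definitions.
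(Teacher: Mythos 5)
Your proof is correct and takes essentially the same approach as the paper: both rest on the involution $\match{\match{a}} = a$ to show that legal matching pairs transfer between $u$ and $u'$. The paper argues via a double inequality ($R$ feasible for $u$ implies $R$ feasible for $u'$, then apply the argument to $u'$ using $(u')' = u$), whereas you show the two feasible regions coincide outright; the two formulations are interchangeable. One small improvement in yours: you explicitly invoke the weight symmetry $w(a) = w(\match{a})$ to conclude that the objective values agree, a step the paper's proof uses implicitly but never states.
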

\begin{proof}
  Let $R$ be an optimal solution to Weighted RNA folding on $u$, $w(R) = \WRNA(u)$, then $R$ is also a valid solution for RNA folding on $u'$. This follows from the fact that if $(i,j) \in R$, then $\match{u_i} = u_j$ and $\match{u_j} = u_i$, hence $u'_i = \match{u_i} = u_j$ and $u'_j = \match{u_j} = u_i$, therefore $\match{u'_i} = \match{u_j} = u_i = u'_j$. Hence $\WRNA(u') \geq \WRNA(u)$. But since for any $a \in \Sigma, \match{\match{a}} = a$, we must have $(u')' = u$, hence applying the above argument to $u'$, we obtain $\WRNA(u) = \WRNA((u')') \geq \WRNA(u')$. Therefore, we conclude $\WRNA(u) = \WRNA(u')$ 
\end{proof}

\subsection{Analysis of Global Algorithms}\label{sec:rna-global}

\begin{definition}\label{def:rna-alpha}
  Let $G = (V, \Sigma, R, S)$ be an admissible SLG and assume that
  $\Sigma$ is augmented with a match operation (\cref{def:match}).
  Let also $w : \Sigma \rightarrow \Zp$ be a weight function.
  Let $G' = (V, \Sigma, R', S)$ be an SLG defined by replacing every
  occurrence of $a \in \Sigma$ on the right-hand of side of $G$ with
  $\match{a}$, for every $a \in \Sigma$. Assume that the sets $\Sigma$,
  $\{\dol_i\}_{i \in [1 \dd |V|]}$,
  $\{\dol'_i\}_{i \in [1 \dd |V|]}$,
  $\{\hash_i\}_{i \in [1 \dd 2|V|]}$, and
  $\{\hash'_i\}_{i \in [1 \dd 2|V|]}$ are
  pairwise disjoint. Denote
  \begin{itemize}[itemsep=0pt,parsep=0pt]
  \item $\Sigma_{\rm aux} = \Sigma \cup \{\dol_i\}_{i \in [1 \dd |V|]}$,
  \item $\Sigma'_{\rm aux} = \Sigma \cup \{\dol'_i\}_{i \in [1 \dd |V|]}$,
  \item $\Sigma_{\rm all} =
    \Sigma \cup
    \{\dol_i, \dol'_i\}_{i \in [1 \dd |V|]} \cup
    \{\hash_i, \hash'_i\}_{i \in [1 \dd 2|V|]}$.
  \end{itemize}
  By $\alpha(G)$, we denote the subset of $\Sigma_{\rm all}^{*}$ such
  that for every $v \in \Sigma_{\rm all}^{*}$, $v \in \alpha(G)$ holds
  if and only if there exists a sequence $(N_i)_{i \in [1 \dd |V|]}$
  such that:
  \begin{itemize}[itemsep=0pt,parsep=0pt]
    \item $\{N_i : i \in [1 \dd |V|]\} = V$,
    \item $|\expgen{G}{N_i}| \leq |\expgen{G}{N_{i+1}}|$ holds for $i
      \in [1 \dd |V|)$, and
    \item $v = v'_{\rm aux} \cdot v_{\rm aux}$, where
    \begin{align*}
      v_{\rm aux}
        &= \textstyle\bigodot_{i=1,\dots,|V|}
           \expgen{G_{\rm aux}}{N_i} \cdot
           \hash_{2i-1} \cdot
           \expgen{G_{\rm aux}}{N_i} \cdot
           \hash_{2i},\\
      v'_{\rm aux}
        &= \textstyle\bigodot_{i=|V|,\dots,i}
           \hash'_{2i} \cdot
           \expgen{G'_{\rm aux}}{N_i} \cdot
           \hash'_{2i-1} \cdot
          \expgen{G'_{\rm aux}}{N_i},
    \end{align*}
  \end{itemize}
  where $G_{\rm aux} = (V, \Sigma_{\rm aux}, R_{\rm aux}, S)$
  (resp.\ $G'_{\rm aux} = (V, \Sigma'_{\rm aux}, R'_{\rm aux}, S)$) is
  defined so that for every $i \in [1 \dd |V|]$, it holds
  $\rhsgen{G_{\rm aux}}{N_i} = A \cdot \dol_{i} \cdot B$
  (resp.\ $\rhsgen{G'_{\rm aux}}{N_i} = B \cdot \dol'_{i} \cdot A$),
  where $A, B \in V \cup \Sigma$ are such that $\rhsgen{G}{N_i} = AB$
  (resp.\ $\rhsgen{G'}{N_i} = AB$).

  We also extend the match operation from $\Sigma$ to $\Sigma_{\rm all}$
  so that:
  \begin{itemize}[itemsep=0pt,parsep=0pt]
  \item For every $i \in [1 \dd |V|]$, $\match{\dol_i} = \dol'_i$,
  \item For every $i \in [1 \dd 2|V| - 2]$, $\match{\hash_i} =
    \hash'_i$,
  \item $\match{\hash_{2|V|-1}} = \hash_{2|V|}$ and
    $\match{\hash'_{2|V|-1}} = \hash'_{2|V|}$.
  \end{itemize}

  Finally, we extend the weight function $w$ from $\Sigma$ to
  $\Sigma_{\rm all}$, so that:
  \begin{itemize}[itemsep=0pt,parsep=0pt]
  \item For every $i \in [1 \dd |V|]$, $w(\dol_i) = w(\dol'_i) = 1$,
  \item For every $i \in [1 \dd 2|V|]$, $w(\hash_i) = w(\hash'_i) =
    (2|u| - 1) + 2\sum_{j \in [1 \dd |u|]} w(u[j])$, where
    $u = \expgen{G}{S}$.
  \end{itemize}
\end{definition}

\begin{lemma}\label{lm:global-concat}
  Let $x, y \in \Sigma^{+}$. Assume that there exist integers $m_x$
  and $m_y$ such that for every global algorithm \algname{Alg}, it
  holds $|\algname{Alg}(x)| \leq m_x$ and $|\algname{Alg}(y)| \leq
  m_y$.  Assume also that if some $s \in \Sigma^{+}$ satisfying
  $|s| \geq 2$ has two
  nonoverlapping occurrences in $xy$, then either both
  these occurrences are contained in $x$, or both are contained in
  $y$. Then, for every global algorithm
  \algname{Alg}, it holds $|\algname{Alg}(xy)| \leq m_x + m_y$.
\end{lemma}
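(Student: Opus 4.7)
I plan to prove this by induction on the number of steps taken by a global algorithm $\algname{Alg}$ on $xy$, maintaining the decomposition invariant that the current grammar $G$ has its nonterminals partitioning as $\{S\} \cup V_x \cup V_y$, with $\rhsgen{G}{S} = \gamma_x \cdot \gamma_y$ for $\gamma_x \in (V_x \cup \Sigma)^{*}$ and $\gamma_y \in (V_y \cup \Sigma)^{*}$, every $V_x$-definition lying in $(V_x \cup \Sigma)^{*}$, and symmetrically for $V_y$. This forces the parse tree of $G$ to have every non-start nonterminal's appearances entirely within the $x$-portion or entirely within the $y$-portion of $xy$, so $G$ canonically decomposes into sub-grammars $G_x$ (with start rule $\gamma_x$ and nonterminal set $V_x$) and $G_y$ (with $\gamma_y$ and $V_y$), both valid intermediate states of some global algorithms running on $x$ and $y$ respectively. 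Under the invariant $|G| = |G_x| + |G_y|$, and the invariant trivially holds initially with $\gamma_x = x$, $\gamma_y = y$, $V_x = V_y = \emptyset$.

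The crux of the inductive step is the auxiliary claim that the maximal string $s$ that $\algname{Alg}$ picks at this step has all its (greedy-)nonoverlapping RHS-occurrences on a single side of the decomposition, so the replacement can be simulated entirely within $G_x$ or entirely within $G_y$. If $s$ contains a symbol from $V_x$ (resp.\ $V_y$), this is immediate since such a symbol only appears in $V_x$-definitions and in $\gamma_x$ (resp.\ on the $y$-side). The substantive case is $s \in \Sigma^{+}$, which I would handle by lifting each nonoverlapping RHS-occurrence of $s$ to a character-level occurrence of $s$ in $xy$: a non-straddling RHS-occurrence on the $x$-side (resp.\ $y$-side) lifts to occurrences entirely in the $x$-portion (resp.\ $y$-portion) of $xy$, whereas a RHS-occurrence straddling the $\gamma_x \gamma_y$ boundary in $\rhsgen{G}{S}$ lifts to a unique character-level occurrence crossing position $|x|$ (and at most one such straddling occurrence can sit among any nonoverlapping collection, since any two straddling occurrences share the boundary symbol). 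Pairing any two picked occurrences drawn from different sides, or pairing a straddling lift with any other nonoverlapping one, produces two nonoverlapping character-level occurrences of $s$ in $xy$ that are not both contained in $x$ and not both contained in $y$, contradicting the lemma's hypothesis (valid since $|s| \geq 2$).

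Given the auxiliary claim, the step of $\algname{Alg}$ corresponds to a valid step of some global algorithm on the relevant side: $s$ remains maximal in $G_x$ (say), because any strictly longer competitor in $G_x$ would also be a competitor in $G$ with at least as many nonoverlapping occurrences, contradicting maximality of $s$ in $G$; hence the invariant is preserved with $G_x$ updated accordingly and $G_y$ unchanged. At termination, $G$ admits no maximal string, which forces the same for both $G_x$ and $G_y$ (any maximal string in $G_x$ would retain its $\geq 2$ nonoverlapping occurrences in $G$'s RHS and induce a maximal string in $G$). Invoking the hypothesis on the terminated induced executions gives $|G_x| \leq m_x$ and $|G_y| \leq m_y$, and summing yields $|\algname{Alg}(xy)| = |G| \leq m_x + m_y$. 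The main obstacle I expect is making the lifting fully rigorous in the case that a picked RHS-occurrence sits inside a $V_x$-definition, since such an occurrence generally lifts to multiple character-level occurrences in $xy$ (one per parse-tree appearance of that nonterminal): I will have to use the parse-tree structure, together with the uniqueness of the straddling lift, to always exhibit a pair of lifts that are genuinely nonoverlapping as character ranges whenever the hypothesis is to be invoked for a cross-side or straddling situation.
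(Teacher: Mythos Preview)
Your proposal is correct and takes essentially the same approach as the paper: both maintain the invariant that every intermediate grammar for $xy$ decomposes into an $x$-part and a $y$-part with $|G| = |G_x| + |G_y|$, and both conclude by observing that the induced subsequences $(G_i^x)$ and $(G_i^y)$ (after removing duplicate steps) are valid executions of some global algorithm on $x$ and on $y$, whence $|G_k^x| \leq m_x$ and $|G_k^y| \leq m_y$. The paper phrases the invariant via the newly created nonterminal's expansion (which has two disjoint occurrences in $xy$, hence both in $x$ or both in $y$) rather than via the maximal string's RHS-occurrences, and your anticipated ``lifting'' obstacle is milder than you fear: in the cross-side case any $x$-side lift and any $y$-side lift lie in $x$ and $y$ respectively and are automatically disjoint, while in the straddling case $s \in \Sigma^{+}$ forces the symbols of the straddling occurrence in $\rhs{S}$ to be terminals (leaves), so every parse-tree copy of a $V_x$-nonterminal sits under a child of $S$ strictly to the left of the straddle, giving disjoint character ranges.
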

\begin{proof}
  First, observe that by definition of a global algorithm (see
  \cref{sec:algs-global}), for every global algorithm \algname{Alg}
  running on a string $u$, there exists a finite sequence $(G_0, G_1,
  \dots, G_k)$ of SLGs such that:
  \begin{itemize}[itemsep=0pt,parsep=0pt]
  \item $G_0$ has only a single nonterminal whose definition is the
    string $u$,
  \item For every $i > 0$, $G_i$ is obtained from $G_{i-1}$ by first
    selecting a maximal string $s$ (\cref{def:maximal-string}) with
    respect to $G_{i-1}$, and then replacing all occurrences of $s$ on
    the right-hand side of $G_{i-1}$ (by scanning the definition of
    every nonterminal left-to-right) with a new nonterminal $N$, and
    then adding the nonterminals $N$ into $G_{i-1}$ (with $s$ as its
    definition),
  \item There is no maximal string with respect to $G_k$.
  \end{itemize}
  Note that the converse also holds, i.e., every sequence satisfying
  the above three conditions corresponds to the execution of some
  global algorithm \algname{Alg} on $u$.

  Let $(G_0^{xy}, G_1^{xy}, \ldots, G_k^{xy})$ be the sequence
  corresponding to the execution of a global algorithm \algname{Alg}
  on $xy$. For every $i \in [0 \dd k]$, denote $G_i^{xy} = (V_i^{xy},
  \Sigma, R_i^{xy}, S)$. Observe that for every $i \in [0 \dd k]$,
  \begin{itemize}[itemsep=0pt,parsep=0pt]
  \item We can write $V_i^{xy} \setminus \{S\} = V_i^{x} \cup
    V_i^{y}$, where $V_i^{x} \cap V_i^{y} = \emptyset$, and for every
    $N \in V_i^{x}$ (resp.\ $N \in V_i^{y}$), $\expgen{G_i^{xy}}{N}$
    has length at least two, has at least two disjoint occurrences in
    $x$ (resp.\ $y$), and no occurrences in $y$ (resp.\ $x$). To see
    this, note that once $N$ is created, it by definition has two
    occurrences on the right-hand side of the current grammar. This
    corresponds to two disjoint occurrences of its expansion (which is
    of length at least two) in the string $xy$.  On the other hand,
    note that once a nonterminal is created by a global algorithm, it
    is never deleted and its expansion never changes (though its
    definition can). Consequently, by the assumption about $x$ and $y$
    in the claim, it follows that the expansion of every nonterminal
    (except $S$) must have occurrences only in either $x$ or $y$.
  \item We can write $\rhsgen{G_i^{xy}}{S} = s_i^{x} \cdot s_i^{y}$,
    so that $\expgen{G_i^{xy}}{s_i^{x}} = x$ and
    $\expgen{G_i^{xy}}{s_i^{y}} = y$. This is because every maximal
    string selected during the execution of \algname{Alg} occurs at
    least twice in $xy$ and is of length at least two.
    Thus, a maximal string that falsified the existence of the
    above partition for the first time would contradict the assumption
    from the claim.
  \end{itemize}

  For every $i \in [0 \dd k]$, let $G_i^{x} = (V_i^{x} \cup \{S\},
  \Sigma, R_i^{x}, S)$ (resp.\ $G_i^{y} = (V_i^{y} \cup \{S\}, \Sigma,
  R_i^{y}, S)$) be an SLG defined such that:
  \begin{itemize}[itemsep=0pt,parsep=0pt]
  \item $\rhsgen{G_i^{x}}{S} = s_i^{x}$ (resp.\ $\rhsgen{G_i^{y}}{S} =
    s_i^{y}$),
  \item For every $N \in V_i^{x}$ (resp.\ $N \in V_i^{y}$),
    $\rhsgen{G_i^{x}}{N} = \rhsgen{G_i^{xy}}{N}$
    (resp.\ $\rhsgen{G_i^{y}}{N} = \rhsgen{G_i^{xy}}{N}$).
  \end{itemize}

  Observe that by the above discussion, for every $i \in [0 \dd k]$,
  it holds $|G_i^{xy}| = |G_i^{x}| + |G_i^{y}|$. On the other hand,
  note that removing duplicates from each sequence $(G_0^{x}, \dots,
  G_k^{x})$ (resp.\ $(G_0^{y}, \dots, G_k^{y})$) results in a valid
  sequence of SLGs corresponding to an execution of some global
  algorithm on $x$ (resp.\ $y$).  In particular, $|G_k^{x}| \leq m_x$
  and $|G_k^{y}| \leq m_y$, and hence:
  \[
    |\algname{Alg}(xy)| = |G_k^{xy}| = |G_k^{x}| + |G_k^{y}| \leq m_x
    + m_y.
    \qedhere
  \]
\end{proof}

\begin{lemma}\label{lm:rna-length-alpha}
  Let $G = (V, \Sigma, R, S)$
  be an admissible SLG. Assume that $\Sigma$ is augmented with a match
  operation (\cref{def:match}).  For every $v \in \alpha(G)$
  (\cref{def:rna-alpha}), it holds $|v| = 8\sum_{X \in V}|\expgen{G}{X}|$.
\end{lemma}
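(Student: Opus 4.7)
The plan is to mimic the calculation in \cref{lm:alpha-length}, handling $v_{\rm aux}$ and $v'_{\rm aux}$ separately and showing each contributes $4\sum_{X \in V}|\expgen{G}{X}|$ to $|v|$.

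First, I would verify the analogue of \cref{ob:interleave} in both the decorated grammars of \cref{def:rna-alpha}. The grammar $G_{\rm aux}$ is constructed from $G$ in exactly the same way as $G'$ is constructed in \cref{def:alpha}: each rule $N_i \to AB$ of $G$ is replaced by $N_i \to A\cdot\dol_i\cdot B$. Hence \cref{ob:interleave} applies verbatim to $G_{\rm aux}$, giving $|\expgen{G_{\rm aux}}{X}| = 2|\expgen{G}{X}| - 1$ for every $X \in V$. For $G'_{\rm aux}$, the underlying grammar $G'$ is obtained from $G$ by applying the match operation pointwise on right-hand sides, which does not change expansion lengths, i.e., $|\expgen{G'}{X}| = |\expgen{G}{X}|$ for every $X \in V \cup \Sigma$. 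Then $G'_{\rm aux}$ is obtained from $G'$ by the same sentinel-insertion construction (with $A$ and $B$ swapped and using $\dol'_i$), so the same inductive argument gives $|\expgen{G'_{\rm aux}}{X}| = 2|\expgen{G'}{X}| - 1 = 2|\expgen{G}{X}| - 1$.

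Next, I would sum up the length of $v_{\rm aux}$ exactly as in the proof of \cref{lm:alpha-length}: for each $i \in [1 \dd |V|]$ the $i$th block contributes $2 + 2|\expgen{G_{\rm aux}}{N_i}| = 2 + 2(2|\expgen{G}{N_i}|-1) = 4|\expgen{G}{N_i}|$, so that
\[
  |v_{\rm aux}| \;=\; \sum_{i=1}^{|V|} 4|\expgen{G}{N_i}| \;=\; 4\sum_{X \in V}|\expgen{G}{X}|.
\]
The analogous computation for $v'_{\rm aux}$ yields the same value, since each of its $|V|$ blocks contributes $2 + 2|\expgen{G'_{\rm aux}}{N_i}| = 4|\expgen{G}{N_i}|$ as well.

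Finally, since $v = v'_{\rm aux}\cdot v_{\rm aux}$, adding the two contributions gives $|v| = 8\sum_{X \in V}|\expgen{G}{X}|$. There is no real obstacle here; the only subtlety is making sure that the match-based grammar $G'$ really has the same expansion lengths as $G$, which is immediate because the map $a \mapsto \match a$ is a length-preserving bijection on $\Sigma$.
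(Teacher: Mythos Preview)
Your proposal is correct and follows essentially the same approach as the paper: decompose $v$ into $v'_{\rm aux}$ and $v_{\rm aux}$, observe that each has the same length $4\sum_{X\in V}|\expgen{G}{X}|$ by the computation of \cref{lm:alpha-length}, and sum. The paper's proof simply cites \cref{lm:alpha-length} directly for both halves, whereas you take the extra (but harmless) step of explicitly re-verifying that $G_{\rm aux}$ and $G'_{\rm aux}$ satisfy the length identity $|\expgen{\cdot}{X}| = 2|\expgen{G}{X}|-1$; this is a reasonable thing to spell out, since the grammars are defined slightly differently than in \cref{def:alpha}.
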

\begin{proof}
  Letting $v_{\rm aux}$ and $v'_{\rm aux}$ be as in \cref{def:rna-alpha}
  (i.e., so that $v = v'_{\rm aux} \cdot v_{\rm aux}$), it follows
  by \cref{lm:alpha-length}, that $|v_{\rm aux}| = |v'_{\rm aux}| =
  4\sum_{X \in V}|\expgen{G}{X}|$, and hence we have $|v| = |v_{\rm aux}|
  + |v'_{\rm aux}| = 8\sum_{X \in V}|\expgen{G}{X}|$.
\end{proof}

\begin{lemma}\label{lm:rna-construction-alpha}
  Let $G = (V, \Sigma, R, S)$ be an admissible SLG. Assume that
  $\Sigma$ is augmented with a match operation (\cref{def:match}) and
  that $8\sum_{X \in V}|\expgen{G}{X}|$ fits into a machine word.
  Let also $w : \Sigma \rightarrow \Zp$ be a weight function.
  Given $G$, we can compute some $v \in \alpha(G)$ and augment the
  weight function $w$ as described by \cref{def:rna-alpha} in
  $\bigO(\sum_{X \in V}|\expgen{G}{X}|)$ time.
\end{lemma}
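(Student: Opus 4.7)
The plan is to essentially mirror the proofs of \cref{lm:construction-alpha,lm:construction-beta}, with additional bookkeeping for the mirrored grammar $G'$, the two layers of sentinels, the extended match operation, and the weight function. Assume, as in those earlier proofs, that nonterminals of $G$ are encoded as consecutive positive integers, so symbols over $\Sigma_{\rm all}$ can also be encoded in $\bigO(1)$ machine words.

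First, in $\bigO(|G|)$ time I would topologically sort the nonterminals of the grammar DAG of $G$ and, traversing the order bottom-up, compute two values for every $X \in V$: the expansion length $\ell(X) := |\expgen{G}{X}|$, using $\ell(X) = \ell(A) + \ell(B)$ whenever $\rhsgen{G}{X} = AB$; and the weighted length $W(X) := \sum_{j=1}^{\ell(X)} w(\expgen{G}{X}[j])$, using the analogous recurrence $W(X) = W(A) + W(B)$ with base case $W(c) = w(c)$ for $c \in \Sigma$. By assumption $8\sum_X \ell(X)$ fits in a machine word, so every intermediate $\ell(X)$ and $W(X)$ does too, and all these values are computed in $\bigO(|G|) = \bigO(\sum_X \ell(X))$ time. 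I would then sort $V$ by $\ell(\cdot)$ using radix sort, exactly as in \cref{lm:construction-alpha}, obtaining the ordering $(N_i)_{i \in [1 \dd |V|]}$ required by \cref{def:rna-alpha}; since the keys are bounded by $\sum_X \ell(X)$ this takes $\bigO(\sum_X \ell(X))$ time.

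Next I would explicitly build the four auxiliary grammars from \cref{def:rna-alpha}. The grammar $G' = (V, \Sigma, R', S)$ is produced from $G$ by walking the right-hand sides in $\bigO(|G|)$ time and replacing every terminal $a$ by $\match{a}$. From $G$ and $G'$, the grammars $G_{\rm aux}$ and $G'_{\rm aux}$ are obtained in $\bigO(|G|)$ time by inserting the appropriate $\dol_i$ or $\dol'_i$ between the two symbols of each rule (and, for $G'_{\rm aux}$, swapping the order). I would then generate the string $v = v'_{\rm aux} \cdot v_{\rm aux}$ by walking the ordering $(N_i)_{i \in [1 \dd |V|]}$ (in reverse order for the prefix $v'_{\rm aux}$, and in forward order for the suffix $v_{\rm aux}$), emitting each $\expgen{G_{\rm aux}}{N_i}$ or $\expgen{G'_{\rm aux}}{N_i}$ twice along with the surrounding $\hash$/$\hash'$ sentinels. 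By \cref{lm:rna-length-alpha} the total output length is $8\sum_X \ell(X)$, so this expansion phase runs in $\bigO(\sum_X \ell(X))$ time.

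Finally, I would install the match extension and the augmented weight function. For every $i \in [1 \dd |V|]$ set $\match{\dol_i} = \dol'_i$ and $\match{\dol'_i} = \dol_i$; for $i \in [1 \dd 2|V|-2]$ set $\match{\hash_i} = \hash'_i$ and $\match{\hash'_i} = \hash_i$; and set $\match{\hash_{2|V|-1}} = \hash_{2|V|}$, $\match{\hash_{2|V|}} = \hash_{2|V|-1}$, and symmetrically for the primed versions. The new weight function takes value $1$ on every $\dol_i$ and $\dol'_i$, agrees with $w$ on $\Sigma$, and assigns $(2 \ell(S) - 1) + 2 W(S)$ to every $\hash_i$ and $\hash'_i$; the latter value is precomputed from $\ell(S)$ and $W(S)$, which are already known from the first step. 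All of this bookkeeping fits in $\bigO(|V|)$ time, giving the overall bound of $\bigO(\sum_{X \in V}|\expgen{G}{X}|)$. There is no real obstacle here beyond making sure the ordering, the double expansion, and the sentinel insertion are implemented in a streaming fashion so that the output is produced in time linear in its length; this is exactly the template already used in \cref{lm:construction-alpha,lm:construction-beta}.
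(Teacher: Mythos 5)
Your proposal follows essentially the same approach as the paper's own proof: topologically sort the grammar DAG to compute expansion lengths and weighted lengths, radix-sort the nonterminals by expansion length, construct $G'$, $G_{\rm aux}$, $G'_{\rm aux}$ in $\bigO(|G|)$ time, and emit $v = v'_{\rm aux} \cdot v_{\rm aux}$ in time linear in its length, then set the new weights and match pairs from the precomputed quantities. The only cosmetic difference is that you compute the weighted lengths $W(X)$ eagerly in the same bottom-up pass that computes $\ell(X)$, whereas the paper defers this to a separate pass at the end; both are $\bigO(|G|)$ and the argument is the same.
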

\begin{proof}
  Assume that $G$ is given using an encoding in which nonterminals are
  identified with consecutive positive integers. Assume also that for
  every $c \in \Sigma$, we can compute $\match{c}$ in $\bigO(1)$ time.
  The construction of $v \in \alpha(G)$ proceeds as follows:
  \begin{enumerate}[itemsep=0pt,parsep=0pt]
  \item Construct the grammar $G'$ as in \cref{def:rna-alpha}.  Given
    $G$, this is easily done in $\bigO(|G|)$ time.
  \item In $\bigO(|G|)$ time we sort the nonterminals of the implicit
    grammar DAG of $G$ (defined so that there is an edge connecting
    nonterminals $X$ and $Y$ when $Y$ appears in $\rhsgen{G}{X}$)
    topologically. In $\bigO(|G|)$ time we then compute
    $|\expgen{G}{X}|$ for every $X \in V$. We then sort all $X \in V$
    using $|\expgen{G}{X}|$ as the key (with ties resolved
    arbitrarily).  Using radix sort, this can be done in
    $\bigO(\sum_{X \in V}|\expgen{G}{X}|)$ time. Let $\{N_i\}_{i \in
    [1 \dd |V|]}$ denote the resulting sequence.
  \item Construct $G_{\rm aux}$ and $G'_{\rm aux}$ as in
    \cref{def:rna-alpha}. This is easily done in $\bigO(|G|)$ time.
  \item Compute $v_{\rm aux} = \bigodot_{i=1,\dots,|V|} \expgen{G_{\rm
    aux}}{N_i} \cdot \hash_{2i-1} \cdot \expgen{G_{\rm aux}}{N_i}
    \cdot \hash_{2i}$ and $v'_{\rm aux}$ (defined symmetrically; see
    \cref{def:rna-alpha}).  Using $G_{\rm aux}$ and $G'_{\rm aux}$
    this is easily done in $\bigO(|v_{\rm aux}| + |v'_{\rm aux}|)$
    time, which is $\bigO(\sum_{X \in V}|\expgen{G}{X}|)$ by
    \cref{lm:rna-length-alpha}.
  \item Output $v = v'_{\rm aux} \cdot v_{\rm aux}$.
  \end{enumerate}
  In total, the construction takes $\bigO(\sum_{X \in
  V}|\expgen{G}{X}|)$ time.

  To augment $w$ as in \cref{def:rna-alpha}, for every $X \in V$, we
  compute $|x|$ and $\sum_{i \in [1 \dd |x|]} w(x[i])$, where $x =
  \expgen{G}{X}$. Given the ordering of nonterminals $X \in V$ by
  $|\expgen{G}{X}|$ (computed above), this is easily done in
  $\bigO(|G|)$ time. We then compute $(2|\expgen{G}{S}| - 1) +
  2\sum_{i \in [1 \dd |u|]}w(u[i])$, which is the value needed to
  augment $w$ as in \cref{def:rna-alpha}.
\end{proof}

\begin{lemma}\label{lm:rna-boosting-global}
  Let \algname{Alg} be a global algorithm.  Let $G = (V, \Sigma, R,
  S)$ be an admissible SLG. Assume that $\Sigma$ is augmented with a
  match operation (\cref{def:match}).  For every $v \in \alpha(G)$
  (\cref{def:rna-alpha}), it holds $|\algname{Alg}(v)| = \bigO(|G|)$.
\end{lemma}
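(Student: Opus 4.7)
The plan is to decompose $v = v'_{\rm aux} \cdot v_{\rm aux}$, bound the compression of each half separately, and then glue them together via \cref{lm:global-concat}. First, observe that $v_{\rm aux}$ is precisely an element of $\alpha(G)$ in the sense of \cref{def:alpha}: the sentinels $\{\dol_i\}_i$ and $\{\hash_i\}_i$ of \cref{def:rna-alpha} serve exactly the role they play in \cref{def:alpha}, and the grammar $G_{\rm aux}$ coincides with the auxiliary grammar $G'$ used there. Consequently, \cref{lm:global-output-size} immediately yields $|\algname{Alg}(v_{\rm aux})| = \tfrac{7}{2}|G|$ for every global algorithm \algname{Alg}.

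For $v'_{\rm aux}$, the structure is a mirror image of the $\alpha$-construction: the outer concatenation runs from $|V|$ down to $1$, each block places the $\hash'$-sentinel in front of the repeated copy, and the augmented rule $\rhsgen{G'_{\rm aux}}{N_i} = B \dol'_i A$ puts the right child first. I would redo the analysis of \cref{sec:global} in this mirrored setting: define a function $\mathrm{bexp}'_{\mathcal{I}}$ with $\mathrm{bexp}'_{\mathcal{I}}(N_i) = \mathrm{bexp}'_{\mathcal{I}}(Y') \cdot \dol'_i \cdot \mathrm{bexp}'_{\mathcal{I}}(X')$ whenever $\rhsgen{G'}{N_i} = X'Y'$ and $i \notin \mathcal{I}$, and introduce the corresponding mirrored family $\{G'_{\mathcal{I}}\}$ of intermediate grammars. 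Every structural lemma in \cref{sec:global} (\cref{lm:exp-bigG,lm:dollars,lm:one-dollar,lm:dollar-on-rhs,lm:bexp-occ,lm:maximal-string,lm:nonov-s,lm:sub,lm:maximal-substring-existence,lm:one-step,lm:iso,lm:global-output-grammar,lm:global-output-size}) carries over verbatim after the symmetric relabelings, because each argument depends only on the alternation of base and sentinel symbols, on the uniqueness of each sentinel on the right-hand side, and on the monotone expansion lengths along $(N_i)$ -- all of which survive the mirroring. Since $|G'| = |G|$, this yields $|\algname{Alg}(v'_{\rm aux})| = \tfrac{7}{2}|G|$ as well.

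To apply \cref{lm:global-concat} with $x = v'_{\rm aux}$ and $y = v_{\rm aux}$, I need to verify that every substring $s$ with $|s| \geq 2$ that has two non-overlapping occurrences in $v$ must be contained entirely in $x$ or entirely in $y$. The sentinel alphabets of the two halves are disjoint -- $\{\dol_i, \hash_i\}_i$ versus $\{\dol'_i, \hash'_i\}_i$ -- and the mirrored analog of \cref{lm:dollars} guarantees that inside each half every two adjacent base symbols are separated by a sentinel. Hence, if an occurrence of $s$ lies strictly inside $v_{\rm aux}$ (respectively $v'_{\rm aux}$), then $s$ contains some $\dol_i$ or $\hash_i$ (respectively $\dol'_i$ or $\hash'_i$), which rules out any strict occurrence of $s$ in the opposite half. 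The remaining case is a spanning occurrence coexisting with a strict one. A direct inspection of the local pattern $\ldots \dol'_1 \match{A} \,|\, A \dol_1 \ldots$ at the boundary shows that if the strict occurrence sits in $v_{\rm aux}$ (respectively $v'_{\rm aux}$), the $v'_{\rm aux}$-portion (respectively $v_{\rm aux}$-portion) of the spanning occurrence must avoid primed (respectively unprimed) sentinels and hence collapses to a single base character at the tail (respectively head) of its half. The spanning $s$ is therefore forced to equal $\match{A} A$, and this two-character pattern has no strict occurrence in either half precisely because adjacent base symbols never appear inside $v_{\rm aux}$ or $v'_{\rm aux}$.

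The hypothesis of \cref{lm:global-concat} is thus satisfied, and applying it with $m_x = m_y = \tfrac{7}{2}|G|$ produces $|\algname{Alg}(v)| \leq 7|G| = \bigO(|G|)$. The principal obstacle is the mirrored redevelopment of \cref{sec:global} for $v'_{\rm aux}$: every lemma must be checked to confirm that the reversed definition of $\mathrm{bexp}'$, the descending index order, and the front-loaded $\hash'$-sentinels do not invalidate any argument. This is mechanical but lengthy, while the boundary analysis feeding into \cref{lm:global-concat} is short once the alternation property of \cref{lm:dollars} (and its mirror) is in hand.
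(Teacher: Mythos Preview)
Your approach is the same as the paper's: split $v = v'_{\rm aux} \cdot v_{\rm aux}$, bound each half via \cref{lm:global-output-size}, and combine with \cref{lm:global-concat}. The paper is terser than you---it invokes \cref{lm:global-output-size} for $v'_{\rm aux}$ directly (applied to $G'$, tacitly using that the analysis of \cref{sec:global} is invariant under the mirroring you spell out) and reduces the separation hypothesis to the single observation that every second symbol of each half lies in that half's private sentinel alphabet, so any $|s|\geq 2$ contains a sentinel identifying its half and the boundary pair of two adjacent base symbols cannot recur elsewhere.
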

\begin{proof}
  Let $v_{\rm aux}$ and $v'_{\rm aux}$ be as in \cref{def:rna-alpha},
  i.e., such that it holds $v = v'_{\rm aux} \cdot v_{\rm aux}$.
  By \cref{lm:global-output-size}, $|\algname{Alg}(v_{\rm aux})|
  = \tfrac{7}{2}|G|$ and $|\algname{Alg}(v'_{\rm aux})| =
  \tfrac{7}{2}|G'|$ (where $G'$ is as in \cref{def:rna-alpha}). Note
  also that $|G| = |G'|$. Finally, observe that every second symbol in
  $v_{\rm aux}$ (resp.\ $v'_{\rm aux}$) belongs to $\{\hash_i\}_{i \in
  [1 \dd 2|V|]} \cup \{\dol_i\}_{i \in [1 \dd |V|]}$
  (resp.\ $\{\hash'_{i}\}_{i \in [1 \dd 2|V|]} \cup \{\dol'_i\}_{i \in
  [1 \dd |V|]}$).  This implies that if a substring $s$ satisfying
  $|s| \geq 2$ has two nonoverlapping occurrences in $v'_{\rm aux}
  \cdot v_{\rm aux}$, then they are both either in $v'_{\rm aux}$ or
  $v_{\rm aux}$.  Consequently, it follows by \cref{lm:global-concat},
  that
  \[
    |\algname{Alg}(v)| = |\algname{Alg}(v'_{\rm aux} \cdot v_{\rm
    aux})| \leq \tfrac{7}{2}(|G| + |G'|) = \bigO(|G|).  \qedhere
  \]
\end{proof}

\begin{lemma}\label{lm:wrna-alpha}
  Let $G = (V, \Sigma, R, S)$ be an admissible SLG and assume that
  $\Sigma$ is augmented with a match operation (\cref{def:match}).
  Let also $w : \Sigma \rightarrow \Zp$ be a weight function. Denote
  $u = \expgen{G}{S}$. For every $X \in V$, denote $q_X = |x| - 1 +
  \sum_{i \in [1 \dd |x|]} w(x[i])$, where $x = \expgen{G}{X}$.
  For
  every $v \in \alpha(G)$ (\cref{def:rna-alpha}), it holds
  \begin{align*}
    \WRNA(v) = 2\WRNA(u) + \delta,
  \end{align*}
  where $\delta = 2|V| \cdot (2q_S + 1) + q_S + 2\sum_{X \in V
  \setminus \{S\}} q_X$, and the match operation of $\Sigma$ and the
  weight function $w$ have been extended to the alphabet of $v$ as in
  \cref{def:rna-alpha}. Moreover, given $G$ and the weight function $w$,
  we can compute $\delta$ in $\bigO(|G|)$ time.
\end{lemma}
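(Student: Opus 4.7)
The plan is to peel off the two ``corner'' heavy $\hash$-pairs using \cref{lm:decomposition}, reduce the remainder to a palindromic string, and evaluate its folding by a direct counting argument.

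First I would observe that $v$ begins with $\hash'_{2|V|} \cdot \expgen{G'_{\rm aux}}{N_{|V|}} \cdot \hash'_{2|V|-1}$. Both sentinels occur exactly once in $v$, and $w(\hash'_{2|V|}) = 2q_S + 1$ strictly exceeds $q_S$, which one verifies is precisely the total weight of $\expgen{G'_{\rm aux}}{N_{|V|}}$ (an easy induction on the rules of $G_{\rm aux}$ shows $\mathrm{weight}(\expgen{G_{\rm aux}}{X}) = q_X$ for every $X$, and the reverse-match defining $G'_{\rm aux}$ preserves total weight). Thus \cref{lm:decomposition} applies and yields
\[
  \WRNA(v) = (2q_S + 1) + \WRNA(\expgen{G'_{\rm aux}}{N_{|V|}}) + \WRNA(v_1),
\]
where $v_1$ is $v$ with this prefix removed. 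Combining \cref{lm:rev} and \cref{lm:match} with the observation that each $\dol_i$ appearing in $\expgen{G_{\rm aux}}{N_{|V|}}$ can only be matched to $\dol'_i$, which is absent from the string, gives $\WRNA(\expgen{G'_{\rm aux}}{N_{|V|}}) = \WRNA(u)$. A symmetric peeling applied to the suffix $\hash_{2|V|-1} \cdot \expgen{G_{\rm aux}}{N_{|V|}} \cdot \hash_{2|V|}$ of $v_1$ produces
\[
  \WRNA(v) = 2(2q_S + 1) + 2\WRNA(u) + \WRNA(v^{*}),
\]
where $v^{*} = v'_{**} \cdot v_{**}$, with $v_{**}$, $v'_{**}$ the ``middle'' substrings of $v_{\rm aux}$, $v'_{\rm aux}$ left behind by the two peelings.

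The heart of the proof is to recognise that $v^{*}$ has a palindromic form. A direct inspection using the definitions of $G'_{\rm aux}$ and of the extended match on $\{\hash_i, \hash'_i, \dol_i, \dol'_i\}$ shows $v'_{**} = \overline{v_{**}^{R}}$, so $v^{*} = \overline{v_{**}^{R}} \cdot v_{**}$. For any string $s$ over a match-augmented alphabet, $\WRNA(\overline{s^R} \cdot s) = \mathrm{weight}(s)$: the upper bound holds because each RNA pair matches two symbols of equal weight, so $2\WRNA$ cannot exceed the total weight $2\,\mathrm{weight}(s)$; the matching lower bound is realised by the nested palindromic fold pairing position $i$ of $\overline{s^R}$ with position $2|s| + 1 - i$, which always match by $\match{\match{a}} = a$. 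Applying this identity to $v_{**}$ and expanding its total weight using $\mathrm{weight}(\expgen{G_{\rm aux}}{X}) = q_X$ and $w(\hash) = 2q_S + 1$ gives
\[
  \WRNA(v^{*}) = (2|V|-2)(2q_S + 1) + q_S + 2\textstyle\sum_{X \in V \setminus \{S\}} q_X,
\]
and plugging this back yields $\WRNA(v) = 2\WRNA(u) + \delta$ exactly as claimed. For the $\bigO(|G|)$-time computation of $\delta$, I would sort the nonterminals of $G$ topologically and use bottom-up DP (every rule has length two, since $G$ is admissible) to compute $|\expgen{G}{X}|$ and $\sum_k w(\expgen{G}{X}[k])$, then assemble $q_X$ and the final sum.

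The main obstacle I anticipate is the bookkeeping identity $v'_{**} = \overline{v_{**}^{R}}$. It holds essentially by design: $G'_{\rm aux}$ is precisely the reverse-match of $G_{\rm aux}$, and the extended match function was tuned so that $\match{\hash_i} = \hash'_i$ exactly for those $i \leq 2|V|-2$ appearing in the middle blocks, while the two ``corner'' indices $2|V|-1, 2|V|$ are self-paired and get removed by the two peelings above. The formal verification requires a small induction on the expansion of each $N_i$; apart from this, the two applications of \cref{lm:decomposition} and the tight formula for $\WRNA(\overline{s^R} \cdot s)$ are short and mostly mechanical.
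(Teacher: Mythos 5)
Your proof is correct, and it takes a genuinely different route from the paper's. The paper also peels off the four corner sentinels via \cref{lm:decomposition} but then handles the middle $v'_{{\rm aux},1}\cdot v_{{\rm aux},1}$ by an induction on $i$ that applies \cref{lm:decomposition} another $2(|V|-1)$ times, shedding one $\hash$-pair at a time. You instead observe that after the two corner peelings the remaining string is a \emph{matched palindrome} $\overline{s^R}\cdot s$ with $s=v_{{\rm aux},1}$, and close the gap with the one-line identity $\WRNA(\overline{s^R}\cdot s)=\mathrm{weight}(s)$ (upper bound from the generic half-total-weight bound using $w(a)=w(\match{a})$; lower bound from the fully nested fold $(i,\,2|s|+1-i)$, which is legal by $\match{\match{a}}=a$). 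The prerequisite $\expgen{G'_{\rm aux}}{X}=\overline{\expgen{G_{\rm aux}}{X}^R}$ is exactly what the paper establishes in step~1 of its proof, so no new heavy lifting is required, and the $\match{\hash_i}=\hash'_i$ convention for $i\le 2|V|-2$ makes the palindromic identity $v'_{{\rm aux},1}=\overline{(v_{{\rm aux},1})^R}$ hold symbol-by-symbol. Your argument replaces $\Theta(|V|)$ decomposition steps with $2$ plus a direct count; it is shorter and makes explicit that the construction of $\alpha(G)$ in \cref{def:rna-alpha} is engineered as a matched palindrome with the two self-matched corner $\hash$-pairs breaking the symmetry precisely at the ends. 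The computation of $\delta$ and the $\bigO(|G|)$ runtime argument are essentially identical to the paper's.
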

\begin{proof}
  Let $G' = (V, \Sigma, R', S)$, $\{\dol_i\}_{i \in [1 \dd |V|]}$,
  $\{\dol'_i\}_{i \in [1 \dd |V|]}$, $\{\hash_i\}_{i \in [1 \dd
  2|V|]}$, $\{\hash'_i\}_{i \in [1 \dd 2|V|]}$, $\Sigma_{\rm
  aux}$, $\Sigma'_{\rm aux}$, and $\Sigma_{\rm all}$ be as in
  \cref{def:rna-alpha}. Let $(N_i)_{i \in [1 \dd |V|]}$ be a sequence
  corresponding to $v$ in \cref{def:rna-alpha}. Then, let $v_{\rm
  aux}$, $v'_{\rm aux}$, $G_{\rm aux} = (V, \Sigma_{\rm aux}, R_{\rm
  aux}, S)$ and $G'_{\rm aux} = (V, \Sigma'_{\rm aux}, R'_{\rm aux},
  S)$ be as in \cref{def:rna-alpha} and correspond to $(N_i)_{i \in [1
  \dd |V|]}$.  For every $i \in [1 \dd |V|]$, denote
  \begin{align*}
    v_{{\rm aux},i}
      &= (\textstyle\bigodot_{j=i,\dots,|V|-1}
            \expgen{G_{\rm aux}}{N_j} \cdot
            \hash_{2j-1} \cdot
            \expgen{G_{\rm aux}}{N_j} \cdot
            \hash_{2j}
         ) \cdot \expgen{G_{\rm aux}}{N_{|V|}},\\
    v'_{{\rm aux},i}
      &= \expgen{G'_{\rm aux}}{N_{|V|}} \cdot
         (\textstyle\bigodot_{j=|V|-1,\dots,i}
            \hash'_{2j} \cdot
            \expgen{G'_{\rm aux}}{N_j} \cdot
            \hash'_{2j-1} \cdot
            \expgen{G'_{\rm aux}}{N_j}
         ).
  \end{align*}
  Note that $v_{\rm aux} = v_{{\rm aux},1} \cdot \hash_{2|V|-1} \cdot
  \expgen{G_{\rm aux}}{N_{|V|}} \cdot \hash_{2|V|}$, and a symmetric
  equality holds for $v'_{\rm aux}$. Thus, $v = \hash'_{2|V|} \cdot
  \expgen{G'_{\rm aux}}{N_{|V|}} \cdot \hash'_{2|V|-1} \cdot v'_{{\rm
  aux},1} \cdot v_{{\rm aux},1} \cdot \hash_{2|V|-1} \cdot
  \expgen{G_{\rm aux}}{N_{|V|}} \cdot \hash_{2|V|}$.

  The proof proceeds in three steps:

  1. First, we prove that for every $i \in [1 \dd |V|]$, it holds $
  \WRNA(\expgen{G'_{\rm aux}}{N_i} \cdot \expgen{G_{\rm aux}}{N_i}) =
  q_{N_i} $.  Denote $t = \expgen{G_{\rm aux}}{N_i}$. Observe the
  three differences between $G_{\rm aux}$ and $G'_{\rm aux}$:
  \begin{itemize}[itemsep=0pt,parsep=0pt]
  \item The order of nonterminals in the definition of $G'_{\rm aux}$
    is reversed (compared to $G_{\rm aux}$),
  \item $\dol_{i}$ in $G_{\rm aux}$ is replaced by $\dol'_i$ in
    $G'_{\rm aux}$,
  \item $G'_{\rm aux}$ uses $G'$ instead of $G$, which replaces
    symbols from $\Sigma$ with their matching symbols.
  \end{itemize}
  Combined, these three facts imply that, letting $t' =
  \expgen{G'_{\rm aux}}{N_i}$, we have $|t'| = |t|$ and for every $j
  \in [1 \dd |t|]$, it holds $t'[j] = \match{t[|t|-j+1]}$ (note that
  we used the fact that for every $i \in [1 \dd |V|]$, we defined
  $\match{\dol_i} = \dol'_i$).  Consequently, there exists a set of
  matching pairs for the string $t' \cdot t$ such that every symbol is
  matched. This immediately implies that $\WRNA(t' \cdot t) = \sum_{j
  \in [1 \dd |t|]} w(t[j])$. Observe that letting $x =
  \expgen{G}{N_i}$, for every $j \in [1 \dd |x|]$, we have $x[j] =
  t[2j-1]$.  On the other hand, every symbol at an even position in
  $t$ is from the set $\{\dol_i\}_{i \in [1 \dd |V|]}$. Consequently,
  since for every $i \in [1 \dd |V|]$, we defined $w(\dol_i) = 1$, it
  follows that $\sum_{i \in [1 \dd |t|]} w(t[i]) = |x| - 1 + \sum_{i
    \in [1 \dd |x|]} w(x[i]) = q_{N_i}$. We have thus proved that
  $\WRNA(t' \cdot t) = q_{N_i}$, i.e., the claim.

  2. Next, we prove by induction on $|V|-i$, that for $i \in [1 \dd
  |V|]$, it holds $\WRNA(v'_{{\rm aux},i} \cdot v_{{\rm aux},i}) =
  2(|V|-i) \cdot (2q_S + 1) + q_S + 2\sum_{j=i}^{|V|-1}q_{N_j}$.

  Let $i = |V|$. Then, $v_{{\rm aux},i} = \expgen{G_{\rm
  aux}}{N_{|V|}}$ and $v'_{{\rm aux},i} = \expgen{G'_{\rm
  aux}}{N_{|V|}}$. By the above, we thus have $\WRNA(v'_{{\rm
  aux},i} \cdot v_{{\rm aux},i}) = \WRNA(\expgen{G'_{\rm
  aux}}{N_{|V|}} \cdot \expgen{G_{\rm aux}}{N_{|V|}}) =
  q_{N_{|V|}}$. Since every nonterminal in $V$ occurs in the parse
  tree of $G$, it follows that for every $i \in [1 \dd |V|)$, it holds
  $|\expgen{G}{N_i}| < |\expgen{G}{S}|$.  Thus, by
  $|\expgen{G}{N_1}| \leq \cdots \leq |\expgen{G}{N_{|V|}}|$ we must
  have $N_{|V|} = S$. Hence, $\WRNA(v'_{{\rm aux},i} \cdot v_{{\rm
  aux},i}) = q_{S}$, i.e., we have proved the induction base.

  Let $i < |V|$. Denote $t = \expgen{G_{\rm aux}}{N_i}$ and $t' =
  \expgen{G'_{\rm aux}}{N_i}$, and observe that then $v'_{{\rm aux},i}
  \cdot v_{{\rm aux},i} = v'_{{\rm aux},i+1} \cdot \hash'_{2i} \cdot
  t' \cdot \hash'_{2i-1} \cdot t' \cdot t \cdot \hash_{2i-1} \cdot t
  \cdot \hash_{2i} \cdot v_{{\rm aux},i+1}$. Recall now from the
  above, that it holds $|t| = |t'|$ and for every $j \in [1 \dd |t|]$,
  we have $t'[j] = \match{t[|t| - j + 1]}$. Thus, $\sum_{j \in [1 \dd
  |t|]}w(t'[j]) = \sum_{j \in [1 \dd |t|]} w(t[j])$.  On the other
  hand, above we also proved that $\sum_{j \in [1 \dd |t|]} w(t[i]) =
  q_{N_i}$.  Finally, note that since $t$ is a substring of
  $\expgen{G_{\rm aux}}{S}$, we have $q_{N_i} \leq q_S$. Thus, the sum
  of weights for all symbols in $t' \cdot t$ is at most $2q_{S}$.
  Recall now that by \cref{def:rna-alpha}, we have $w(\hash_{2i-1}) =
  w(\hash_{2i}) = w(\hash'_{2i-1}) = w(\hash'_{2i}) = 2q_{S} + 1$,
  $\match{\hash_{2i-1}} = \hash'_{2i-1}$, and $\match{\hash_{2i}} =
  \hash'_{2i}$.  By applying \cref{lm:decomposition} twice, utilizing
  the inductive assumption, and noting that $\WRNA(t' \cdot t) =
  q_{N_i}$ (proved above), we thus obtain
  \begin{align*}
    \WRNA(v'_{{\rm aux},i} \cdot v_{{\rm aux},i})
      &= \WRNA(v'_{{\rm aux},i+1} \cdot
               \hash'_{2i} \cdot
               t' \cdot
               \hash'_{2i-1} \cdot
               t' \cdot
               t \cdot
               \hash_{2i-1} \cdot
               t \cdot
               \hash_{2i} \cdot
               v_{{\rm aux},i+1})\\
      &= w(\hash'_{2i-1}) +
         \WRNA(v'_{{\rm aux},i+1} \cdot
               \hash'_{2i} \cdot
               t' \cdot
               t \cdot
               \hash_{2i} \cdot
               v_{{\rm aux},i+1}) +
         \WRNA(t' \cdot t)\\
      &= (2q_S + 1) +
         \WRNA(v'_{{\rm aux},i+1} \cdot
               \hash'_{2i} \cdot
               t' \cdot
               t \cdot
               \hash_{2i} \cdot
               v_{{\rm aux},i+1}) +
         q_{N_i}\\
      &= 2(2q_s + 1) +
         \WRNA(v'_{{\rm aux},i+1} \cdot
               v_{{\rm aux},i+1}) +
         2q_{N_i}\\
      &= 2(2q_S + 1) +
         2(|V| - (i+1)) \cdot (2q_S + 1) +
         q_S +
         2\textstyle\sum_{j=i+1}^{|V|-1} q_{N_j} +
         2q_{N_i}\\
      &= 2(|V|-i) \cdot (2q_S + 1) +
         q_S +
         2\textstyle\sum_{j=i}^{|V|-1} q_{N_j}.
  \end{align*}

  By recalling that $N_{|V|} = S$ and applying the above for $i = 1$,
  we thus obtain:
  \[
    \WRNA(v'_{{\rm aux},1} \cdot v_{{\rm aux},1}) = 2(|V| - 1)(2q_S +
    1) + q_S + 2\textstyle\sum_{X \in V \setminus \{S\}} q_X.
  \]

  3. Denote $t = \expgen{G_{\rm aux}}{S}$ and $t' = \expgen{G'_{\rm
  aux}}{S}$.  Recall that $v = \hash'_{2|V|} \cdot t' \cdot
  \hash'_{2|V|-1} \cdot v'_{{\rm aux},1} \cdot v_{{\rm aux},1} \cdot
  \hash_{2|V|-1} \cdot t \cdot \hash_{2|V|}$. To complete the proof,
  we recall that $|t| = |t'|$ and for every $j \in [1 \dd |t|]$, it
  holds $t'[j] = \match{t[|t| - j + 1]}$. By \cref{lm:rev,lm:match},
  we thus have $\WRNA(t) = \WRNA(t')$. Moreover, note that by
  \cref{def:rna-alpha}, removing all symbols from $t$ that belong to
  the set $\{\dol_i\}_{i \in [1 \dd |V|]}$ results in the string
  $\expgen{G}{S} = u$.  Since none of these symbols have matching
  characters in $t$, we thus obtain $\WRNA(t) = \WRNA(u)$.  Recall now
  that $\sum_{j \in [1 \dd |t|]} w(t[j]) = q_{N_{|V|}} = q_S$.  On the
  other hand, by \cref{def:rna-alpha}, we have $w(\hash'_{2|V|}) =
  w(\hash'_{2|V|-1}) = w(\hash_{2|V|-1}) = w(\hash_{2|V|}) = 2q_S +
  1$.  Note also that $\match{\hash'_{2|V|-1}} = \hash'_{2|V|}$ and
  $\match{\hash_{2|V|-1}} = \hash_{2|V|}$. By applying
  \cref{lm:decomposition} and combining with the above observations,
  we thus obtain:
  \begin{align*}
    \WRNA(v)
      &= \WRNA(\hash'_{2|V|} \cdot
               t' \cdot
               \hash'_{2|V|-1} \cdot
               v'_{{\rm aux},1} \cdot
               v_{{\rm aux},1} \cdot
               \hash_{2|V|-1} \cdot
               t \cdot
               \hash_{2|V|}
              )\\
      &= w(\hash'_{2|V|}) +
         \WRNA(v'_{{\rm aux},1} \cdot
               v_{{\rm aux},1} \cdot
               \hash_{2|V|-1} \cdot
               t \cdot
               \hash_{2|V|}
              ) +
          \WRNA(t')\\
      &= w(\hash'_{2|V|}) +
         w(\hash_{2|V|}) +
         \WRNA(v'_{{\rm aux},1} \cdot
               v_{{\rm aux},1}) +
         \WRNA(t') + \WRNA(t)\\
      &= 2(2q_S + 1) +
         \WRNA(v'_{{\rm aux},1} \cdot
               v_{{\rm aux},1}) +
         2\WRNA(u)\\
      &= 2(2q_S + 1) +
         2(|V| - 1)(2q_S + 1) +
         q_S +
         2\textstyle\sum_{X \in V \setminus \{S\}} q_X +
         2\WRNA(u)\\
      &= \delta + 2\WRNA(u).
  \end{align*}

  We now explain how to compute $\delta$ in $\bigO(|G|)$ time.  First,
  in $\bigO(|G|)$ time we sort the nonterminals of the implicit
  grammar DAG of $G$ (defined so that there is an edge connecting
  nonterminals $X$ and $Y$ when $Y$ appears in $\rhsgen{G}{X}$)
  topologically. For every $X \in V$, we then compute $|x|$ and
  $\sum_{j \in [1 \dd |x|]} w(x[j])$, where $x = \expgen{G}{X}$.
  Given the above ordering, this is easily done in $\bigO(|G|)$ time.
  Given these values, we can immediately determine $q_X$ for every $X
  \in V$.  The value of $\delta$ is then easily deduced in
  $\bigO(|G|)$ time.
\end{proof}

\begin{theorem}\label{th:rna-global}
  Let \algname{Alg} be a global algorithm (e.g., \algname{RePair},
  \algname{Greedy}, or \algname{LongestMatch}). Let $\delta \in (0,
  1]$.  Assuming the $k$-Clique Conjecture (resp.\ Combinatorial
  $k$-Clique Conjecture), there is no algorithm (resp.\ combinatorial
  algorithm) that, given $x \in \Sigma^{n}$ (where $\Sigma$ is
  augmented with a match operation) and a weight function $w : \Sigma
  \rightarrow [0 \dd m]$ such that $|\Sigma| = \bigO(n^{\delta})$, $m
  = \bigO(\poly(n))$, and $|\algname{Alg}(x)| = \bigO(n^{\delta})$,
  computes $\WRNA(x)$ in $\bigO(n^{\omega - \epsilon})$
  (resp.\ $\bigO(n^{3 - \epsilon})$) time, for any $\epsilon > 0$.
\end{theorem}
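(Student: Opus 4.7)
I would mirror the structure of Theorem~\ref{th:cfg-sequential}, replacing the CFG-parsing reduction with a weighted-RNA reduction, and substituting the CFG-specific boosting machinery with the RNA-specific constructions of Definition~\ref{def:rna-alpha} and Lemmas~\ref{lm:rna-length-alpha}, \ref{lm:rna-construction-alpha}, \ref{lm:rna-boosting-global}, and \ref{lm:wrna-alpha}. The plan is to argue by contraposition: assume that for some $\epsilon>0$ the hypothesised algorithm exists and runs in $\bigO(n^{\omega-\epsilon})$ (resp.\ $\bigO(n^{3-\epsilon})$), and use it to detect a $3k$-clique in an arbitrary undirected graph $G=(V,E)$ in time $\bigO(|V|^{\omega k(1-\epsilon')})$ (resp.\ $\bigO(|V|^{3k(1-\epsilon')})$) for a suitable constant $\epsilon'>0$, contradicting the relevant $k$-Clique Conjecture once $k$ is chosen large enough (for instance $k=\max(\lceil 3/\delta\rceil,\lceil 30/\epsilon\rceil)$, as in the CFG case).

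Concretely, given $G$, I would (1) invoke Lemma~\ref{lm:rna-small-slg} to produce, in $\bigO(|V|^3)$ time, a string $u$, a weight function $w:\Sigma\to[1\dd \bigO(|V|^2)]$, a threshold $\lambda$, and an SLG $H=(V_H,\Sigma,R_H,S_H)$ with $L(H)=\{u\}$, $|H|=\bigO(|V|^3)$, $|u|=\Theta(|V|^{k+2})$, $\sum_{X}|\expgen{H}{X}|=\bigO(|V|^{k+5})$, and $\WRNA(u)\ge\lambda$ iff $G$ has a $3k$-clique; (2) apply Lemma~\ref{lm:admissible} to obtain an admissible SLG $H'$ with $L(H')=\{u\}$, $|H'|=\bigO(|V|^3)$, and $\sum_{X\in V_{H'}}|\expgen{H'}{X}|=\bigO(|V|^{k+5}\log|V|)$; (3) use Lemma~\ref{lm:rna-construction-alpha} to construct some $v\in\alpha(H')$ and extend $w$ to $\Sigma_{\mathrm{all}}$ as prescribed by Definition~\ref{def:rna-alpha} in $\bigO(|V|^{k+5}\log|V|)$ time; (4) apply Lemma~\ref{lm:wrna-alpha} to compute, in $\bigO(|V|^3)$ time, the shift $\delta_0$ such that $\WRNA(v)=2\WRNA(u)+\delta_0$; and finally (5) invoke the hypothetical algorithm on $(v,w)$ and test whether $\WRNA(v)\ge 2\lambda+\delta_0$, returning YES iff this inequality holds.

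Correctness of the reduction is immediate from the exact identity in Lemma~\ref{lm:wrna-alpha} together with the characterisation of $\lambda$ from Lemma~\ref{lm:rna-small-slg}. For the parameter bookkeeping I would verify that $(v,w)$ meets the hypotheses of the assumed algorithm: the alphabet of $v$ has size $\bigO(|\Sigma|+|V_{H'}|)=\bigO(|V|)=\bigO(|v|^{\delta})$ provided $k$ is chosen so that $1/(k+5)\le\delta$; the extended weights are bounded by $2|u|\cdot\max w=\bigO(|V|^{k+4})=\bigO(\poly(|v|))$; and, crucially, Lemma~\ref{lm:rna-boosting-global} gives $|\algname{Alg}(v)|=\bigO(|H'|)=\bigO(|V|^3)=\bigO(|v|^{\delta})$ whenever $3/(k+5)\le\delta$. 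The total runtime is then dominated by the final call, which costs $\bigO(|v|^{\omega-\epsilon})=\bigO(|V|^{(k+5)(\omega-\epsilon)}\log^{\omega-\epsilon}|V|)=\bigO(|V|^{\omega k(1-\epsilon/(2\omega))})$ (and analogously $\bigO(|V|^{3k(1-\epsilon/6)})$ in the combinatorial case) by the same arithmetic as in the proof of Theorem~\ref{th:cfg-sequential}.

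The main obstacle is not any individual step, since all the RNA-specific analogues of the CFG tools (boosting bound, length bound, constructive lemma, and exact $\WRNA$-identity) are already in place; rather, it is making sure that the extended weight function remains polynomially bounded and that the ``shift'' $\delta_0$ is computable in time linear in $|H'|$ so that it does not inflate the runtime. Both are handled cleanly by Lemma~\ref{lm:wrna-alpha}, which is why the proof essentially reduces to a careful parameter-tracking exercise that parallels the CFG-parsing argument.
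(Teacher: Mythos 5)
Your proposal matches the paper's proof of Theorem~\ref{th:rna-global} essentially step by step: the same choice $k=\max(\lceil 3/\delta\rceil,\lceil 30/\epsilon\rceil)$, the same chain of Lemmas~\ref{lm:rna-small-slg}, \ref{lm:admissible}, \ref{lm:rna-construction-alpha}, \ref{lm:rna-boosting-global}, and \ref{lm:wrna-alpha} (including the correct form $\WRNA(v)=2\WRNA(u)+\delta_0$), and the same runtime arithmetic carried over from the CFG case. One minor slip in your bookkeeping: to conclude $|\Sigma_{\rm all}|=\bigO(|V|)=\bigO(|v|^{\delta})$ and likewise $|\algname{Alg}(v)|=\bigO(|V|^3)=\bigO(|v|^{\delta})$ you need a \emph{lower} bound on $|v|$, namely $|v|\ge|u|=\Omega(|V|^k)$, which gives the requirements $1/k\le\delta$ and $3/k\le\delta$; the conditions you wrote ($1/(k+5)\le\delta$ and $3/(k+5)\le\delta$) would follow from the \emph{upper} bound $|v|=\bigO(|V|^{k+5}\log|V|)$, which points the wrong way. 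Since your chosen $k\ge\lceil 3/\delta\rceil$ already satisfies $3/k\le\delta$, this does not affect the validity of the argument.
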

\begin{proof}
  We prove the claim by contraposition. Assume that there exists some
  $\epsilon > 0$ such that for every $x \in \Sigma^{n}$ (where
  $\Sigma$ is augmented with a match operation) and for every weight
  function $w : \Sigma \rightarrow [1 \dd m]$ satisfying $|\Sigma| =
  \bigO(n^{\delta})$, $m = \bigO(\poly(n))$, and $|\algname{Alg}(x)| =
  \bigO(n^{\delta})$, we can in $\bigO(n^{\omega - \epsilon})$
  (resp.\ $\bigO(n^{3 - \epsilon})$) time compute $\WRNA(x)$. We will
  prove that this implies that the $k$-Clique Conjecture
  (resp.\ Combinatorial $k$-Clique Conjecture) is false.

  Assume that we are given an undirected graph $G = (V, E)$. Let $k =
  \max(\lceil \tfrac{3}{\delta} \rceil, \lceil \tfrac{30}{\epsilon}
  \rceil)$.  we execute the following algorithm:
  \begin{enumerate}
  \item Using \cref{lm:rna-small-slg}, in $\bigO(|V|^3)$ time we
    compute an SLG $H = (V_H, \Sigma, R_H, S_H)$, a weight function
    $w' : \Sigma \rightarrow [1 \dd m']$, and an integer $\lambda$,
    such that letting $L(H) = \{u\}$, it holds:
    \begin{itemize}[itemsep=0pt,parsep=0pt]
    \item $|\Sigma| = \bigO(1)$,
    \item $|V_H| = \bigO(|V|^3)$,
    \item $|u| = \bigO(|V|^{k+2})$ and $|u| = \Omega(|V|^k)$,
    \item $m' = \bigO(|V|^2)$,
    \item $\sum_{X \in V_H}|\expgen{H}{X}| = \bigO(|V|^{k+5})$,
    \item $\WRNA(u) \geq \lambda$ holds if and only if $G$ contains a
      $3k$-clique.
    \end{itemize}
  \item Using \cref{lm:admissible}, in $\bigO(|H|) =
    \bigO(|V|^3)$ time we construct an admissible SLG $H' = (V_{H'},
    \Sigma, R_{H'}, S_{H'})$ such that:
    \begin{itemize}[itemsep=0pt,parsep=0pt]
    \item $L(H') = L(H) = \{u\}$,
    \item $|H'| = \bigO(|H|) = \bigO(|V|^3)$,
    \item
      $\sum_{X \in V_{H'}}|\expgen{H'}{X}|
        = \bigO(\log |H| \cdot \sum_{X \in V_H}|\expgen{H}{X}|)
        = \bigO(|V|^{k+5} \log |V|)$.
    \end{itemize}
  \item Using \cref{lm:rna-construction-alpha}, in $\bigO(\sum_{X \in
    V_{H'}}|\expgen{H'}{X}|) = \bigO(|V|^{k+5} \log |V|)$ time we
    construct $v \in \alpha(H')$, and augment $w'$ into a weight
    function $w : \Sigma_{\rm all} \rightarrow [0 \dd m]$ described in
    \cref{def:rna-alpha}. Observe that the following properties hold.
    \begin{itemize}[itemsep=0pt,parsep=0pt]
    \item First, we show that $|\Sigma_{\rm all}| =
      \bigO(|v|^{\delta})$.  To this end, first note that $|v| \geq
      |u| = \Omega(|V|^{k})$.  On the other hand, by $3/\delta \leq k$
      we have $1/k \leq \delta/3$.  Consequently, $|\Sigma_{\rm all}|
      = |\Sigma| + 6|V| = \bigO(|V|) = \bigO(|v|^{1/k}) =
      \bigO(|v|^{\delta/3}) = \bigO(|v|^{\delta})$.
    \item Next, we prove that $m = \bigO(\poly(|v|))$. For this, note
      that by \cref{def:rna-alpha}, we have $m \leq 2|u| + 2\sum_{i
        \in [1 \dd |u|]}w'(u[i])$ By $|u| \leq |v|$ and $m' =
      \bigO(|V|^2)$, we thus have $m \leq 2|u| + m'|u| = \bigO(|V|^2
      \cdot |u|) = \bigO(\poly(|u|)) = \bigO(\poly(|v|))$.
    \item Finally, we show that $|\algname{Alg}(v)| =
      \bigO(|v|^{\delta})$.  By \cref{lm:rna-boosting-global}, it
      holds $|\algname{Alg}(v)| = \bigO(|H'|) = \bigO(|V|^3)$.  Since
      above we observed that $|V| = \bigO(|v|^{\delta/3})$, it follows
      that $|\algname{Alg}(v)| = \bigO(|V|^3) = \bigO(|v|^{\delta})$.
    \end{itemize}
  \item Using \cref{lm:wrna-alpha}, in $\bigO(|H'|)$ time we compute
    $\delta$ satisfying $\WRNA(v) = \WRNA(u) + \delta$.
  \item We apply the hypothetical algorithm for RNA folding to
    $v$. More precisely, we compute $\WRNA(v)$ in
    \begin{align*}
      \bigO(|v|^{\omega - \epsilon})
        &= \bigO(|V|^{(k+5)(\omega - \epsilon)}
                 \log^{\omega-\epsilon} |V|)\\
        &= \bigO(|V|^{k(\omega - \epsilon) + 5\omega})\\
        &= \bigO(|V|^{k\omega(1 - \epsilon/(2\omega)) +
                 (k\epsilon/2 - 5\omega)})\\
        &= \bigO(|V|^{k\omega(1 - \epsilon/(2\omega))})
    \end{align*}
    (resp.\
    $\bigO(|v|^{3 - \epsilon})
      =\allowbreak \bigO(|V|^{(k+5)(3 - \epsilon)} \log^{3-\epsilon} |V|)
      =\allowbreak \bigO(|V|^{k(3 - \epsilon) + 15})
      =\allowbreak \bigO(|V|^{3k(1 - \epsilon/6) + (k\epsilon/2 - 15)})
      =\allowbreak \bigO(|V|^{3k(1 - \epsilon/6)})$)
    time. Note that we used that $k \geq \tfrac{30}{\epsilon}$, which
    implies $k\epsilon/2 \geq 15$. By \cref{lm:wrna-alpha}, we thus
    obtain $\WRNA(u) = \WRNA(v) - \delta$. Recall that above we noted
    that $\WRNA(u) \geq \lambda$ holds if and only if $G$ contains a
    $3k$-clique.
  \end{enumerate}
  We have thus checked if $G$ contains a $3k$-clique in
  $\bigO(|V|^{k+5} \log |V| + |V|^{k\omega(1 - \epsilon/(2\omega))})
  = \bigO(|V|^{k\omega(1 - \epsilon')})$ (resp.\
  $\bigO(|V|^{k+5} \log |V| + |V|^{3k(1 - \epsilon/6)})
  = \bigO(|V|^{3k(1 - \epsilon')})$) time, where $\epsilon' > 0$ is
  some constant (note that we used that $k \geq 6$). This implies
  that the $k$-Clique Conjecture (resp.\ Combinatorial $k$-Clique
  Conjecture) is false.
\end{proof}

\subsection{Analysis of \algname{Sequential}}\label{sec:rna-sequential}

\begin{definition}\label{def:rna-beta}
  Let $G = (V, \Sigma, R, S)$ be an admissible SLG and assume that
  $\Sigma$ is augmented with a match operation (\cref{def:match}).
  Let also $w : \Sigma \rightarrow \Zp$ be a weight function.
  Let $G' = (V, \Sigma, R', S)$ be an SLG defined by replacing every
  occurrence of $a \in \Sigma$ on the right-hand of side of $G$ with
  $\match{a}$, for every $a \in \Sigma$. Assume that the sets $\Sigma$,
  $\{\dol_{i}\}_{i \in [1 \dd |V|]}$,
  $\{\dol'_{i}\}_{i \in [1 \dd |V|]}$,
  $\{\hash_{L,i}\}_{i \in [1 \dd 2|V|]}$,
  $\{\hash_{R,i}\}_{i \in [1 \dd 2|V|]}$,
  $\{\hash'_{L,i}\}_{i \in [1 \dd 2|V|]}$, and
  $\{\hash'_{R,i}\}_{i \in [1 \dd 2|V|]}$ are
  pairwise disjoint. Denote
  \begin{itemize}[itemsep=0pt,parsep=0pt]
  \item $\Sigma_{\rm aux} = \Sigma \cup \{\dol_{i}\}_{i \in [1 \dd |V|]}$,
  \item $\Sigma'_{\rm aux} = \Sigma \cup \{\dol'_{i}\}_{i \in [1 \dd |V|]}$,
  \item $\Sigma_{\rm all} =
    \Sigma \cup
    \{\dol_i, \dol'_i\}_{i \in [1 \dd |V|]} \cup
    \{\hash_{L,i}, \hash_{R,i},
      \hash'_{L,i}, \hash'_{R,i}\}_{i \in [1 \dd 2|V|]}$.
  \end{itemize}
  By $\beta(G)$, we denote the subset of $\Sigma_{\rm all}^{*}$ such
  that for every $v \in \Sigma_{\rm all}^{*}$, $v \in \beta(G)$ holds
  if and only if there exists a sequence $(N_i)_{i \in [1 \dd |V|]}$
  such that:
  \begin{itemize}[itemsep=0pt,parsep=0pt]
    \item $\{N_i : i \in [1 \dd |V|]\} = V$,
    \item $|\expgen{G}{N_i}| \leq |\expgen{G}{N_{i+1}}|$ holds for $i
      \in [1 \dd |V|)$, and
    \item $v = v_{\rm L} \cdot v_{\rm R} \cdot
               v'_{\rm L} \cdot v'_{\rm R}$, where
    \begin{align*}
      v_{\rm L}
        &= \textstyle\bigodot_{i=1,\dots,|V|}
           \expgen{G_{\rm aux}}{N_i} \cdot
           \hash_{L,2i-1} \cdot
           \expgen{G_{\rm aux}}{N_i} \cdot
           \hash_{L,2i},\\
      v_{\rm R}
        &= \textstyle\bigodot_{i=|V|,\dots,1}
           \expgen{G_{\rm aux}}{N_i} \cdot
           \hash_{R,2i-1} \cdot
           \expgen{G_{\rm aux}}{N_i} \cdot
           \hash_{R,2i},\\
      v'_{\rm L}
        &= \textstyle\bigodot_{i=1,\dots,|V|}
           \hash'_{L,2i} \cdot
           \expgen{G'_{\rm aux}}{N_i} \cdot
           \hash'_{L,2i-1} \cdot
          \expgen{G'_{\rm aux}}{N_i},\\
      v'_{\rm R}
        &= \textstyle\bigodot_{i=|V|,\dots,i}
           \hash'_{R,2i} \cdot
           \expgen{G'_{\rm aux}}{N_i} \cdot
           \hash'_{R,2i-1} \cdot
          \expgen{G'_{\rm aux}}{N_i},
    \end{align*}
  \end{itemize}
  and $G_{\rm aux} = (V, \Sigma_{\rm aux}, R_{\rm aux}, S)$
  (resp.\ $G'_{\rm aux} = (V, \Sigma'_{\rm aux}, R'_{\rm aux}, S)$) is
  defined so that for every $i \in [1 \dd |V|]$, it holds
  $\rhsgen{G_{\rm aux}}{N_i} = A \cdot \dol_{i} \cdot B$
  (resp.\ $\rhsgen{G'_{\rm aux}}{N_i} = B \cdot \dol'_{i} \cdot A$),
  where $A, B \in V \cup \Sigma$ are such that $\rhsgen{G}{N_i} = AB$
  (resp.\ $\rhsgen{G'}{N_i} = AB$).

  We also extend the match operation from $\Sigma$ to $\Sigma_{\rm all}$
  so that:
  \begin{itemize}[itemsep=0pt,parsep=0pt]
  \item For every $i \in [1 \dd |V|]$, $\match{\dol_i} = \dol'_i$,
  \item For every $i \in [1 \dd 2|V| - 2]$, $\match{\hash_{L,i}} =
    \hash'_{R,i}$ and $\match{\hash_{R,i}} = \hash'_{L,i}$,
  \item $\match{\hash_{L,2|V|-1}} = \hash_{L,2|V|}$,
    $\match{\hash_{R,2|V|-1}} = \hash_{R,2|V|}$,
    $\match{\hash'_{L,2|V|-1}} = \hash'_{L,2|V|}$,
    $\match{\hash'_{R,2|V|-1}} = \hash'_{R,2|V|}$.
  \end{itemize}

  Finally, we extend the weight function $w$ from $\Sigma$ to
  $\Sigma_{\rm all}$, so that:
  \begin{itemize}[itemsep=0pt,parsep=0pt]
  \item For every $i \in [1 \dd |V|]$, $w(\dol_i) = w(\dol'_i) = 1$,
  \item For every $i \in [1 \dd 2|V|]$, we define
    $w(\hash_{L,i}) = w(\hash_{R,i}) =
    w(\hash'_{L,i}) = w(\hash'_{R,i}) =
    (4|u| - 3) +\allowbreak 4\sum_{j \in [1 \dd |u|]} w(u[j])$, where
    $u = \expgen{G}{S}$.
  \end{itemize}
\end{definition}

\begin{lemma}\label{lm:sequential-concat}
  Let $x, y \in \Sigma^{+}$. Assume that if some $s \in \Sigma^{+}$
  satisfying $|s| \geq 2$ has two nonoverlapping occurrences in $xy$,
  then either both these occurrences are contained in $x$, or both are
  contained in $y$.  Then, it holds $|\algname{Sequential}(xy)| =
  |\algname{Sequential}(x)| + |\algname{Sequential}(y)|$.
\end{lemma}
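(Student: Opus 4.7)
The plan is to mirror the approach of \cref{lm:global-concat}: I will show that \algname{Sequential} on $xy$ processes the prefix $x$ exactly as if running on $x$ alone, and then processes $y$ as a completely independent extension, so that the final grammar is essentially the disjoint union of $G^x := \algname{Sequential}(x)$ and $G^y := \algname{Sequential}(y)$ with their start-rule right-hand sides concatenated. I would first argue that during the steps consuming positions $[1 \dd |x|]$, \algname{Sequential} on $xy$ and \algname{Sequential} on $x$ perform identical operations. The only way the two could diverge is if at some step the longest-prefix-match stage of \algname{Sequential} on $xy$ selected a secondary nonterminal whose expansion extends past position $|x|$ into $y$. Since every such expansion has length $\geq 2$, lies inside $x$ (it was created while processing only $x$ so far), and would under such a match also occur straddling the boundary, the hypothesis of the lemma forbids this. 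Hence the two algorithms are in lockstep, and at some step $s_0$ the state of \algname{Sequential} on $xy$ is isomorphic to $G^x$ with exactly $x$ consumed.

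Next I would establish by induction on $r \geq 0$ the following invariant: after $r$ additional steps beyond $s_0$, the current grammar's nonterminals partition into an $x$-part (in bijection with the secondaries of $G^x$, with identical definitions) and a $y$-part (in bijection with the secondaries of the state of \algname{Sequential} on $y$ alone after $r$ steps, with matching definitions), and the start rule has right-hand side equal to $\rhs{S^x}$ concatenated with the start-rule right-hand side of that state. The inductive step amounts to checking, for each of the three operations comprising a \algname{Sequential} step, that it acts on the combined grammar exactly as if only the $y$-part were present. (i) The longest-prefix match on the remaining $y$-suffix cannot involve any $x$-secondary, since its expansion is a length-$\geq 2$ substring of $x$ and therefore, by the hypothesis, is not a substring of $y$. (ii) The digram-repetition reduction triggers on exactly the same pair as in \algname{Sequential} on $y$ alone: $x$-nonterminals cannot be appended during the $y$-phase by (i), so they never appear in the $y$-part, and $y$-nonterminals do not appear in the $x$-part; hence any cross-part coincidence of digrams must be between pairs of terminals, but such a coincidence would force a length-two character substring to occur in both $x$ and $y$, contradicting the hypothesis. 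The initial boundary pair, consisting of the last symbol of $\rhs{S^x}$ followed by the first appended $y$-terminal, is similarly ruled out by applying the same argument to its expansion. (iii) No $x$-secondary ever has its occurrence count altered during the $y$-phase, since modifications only touch the growing $y$-part, so each retains the $\geq 2$ occurrences guaranteed by \algname{Sequential}'s invariant on $G^x$ and is never removed by the single-use cleanup.

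Applying the invariant at $r$ equal to the final number of $y$-steps yields a grammar whose start-rule right-hand side is $\rhs{S^x} \cdot \rhs{S^y}$ and whose secondaries form the disjoint union of those of $G^x$ and $G^y$ with unchanged definitions, so summing sizes gives $|\algname{Sequential}(xy)| = |G^x| + |G^y| = |\algname{Sequential}(x)| + |\algname{Sequential}(y)|$. The main obstacle I expect is check~(ii): digram coincidences live at the symbol level of the grammar while the hypothesis speaks about substring occurrences in the character string $xy$, so one must translate between the two levels via the observation that two adjacent symbols $(A,B)$ on any right-hand side expand to adjacent copies of $\exp{A}\exp{B}$ inside $xy$. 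The trickiest subcase is the boundary pair obtained when the last symbol of $\rhs{S^x}$ is an $x$-nonterminal $N$ and the first appended $y$-symbol is a terminal $c$: ruling out a duplicate of $(N,c)$ in the $x$-part requires arguing instead on the length-$(|\exp{N}|+1)$ string $\exp{N} \cdot c$, whose two hypothetical occurrences (one inside $x$, one straddling the boundary) are incompatible with the hypothesis.
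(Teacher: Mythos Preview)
Your proposal is correct and follows essentially the same approach as the paper: both show that \algname{Sequential} on $xy$ first processes $x$ identically to \algname{Sequential} on $x$ alone (since no longest-prefix match can straddle the $x$/$y$ boundary without contradicting the hypothesis), and then prove by induction on the remaining steps that the combined grammar is the disjoint union of $G^x$ and the current state $G^y_i$ of \algname{Sequential} on $y$, glued along the start rule. The paper formalizes this via an explicit bijection $f$ and auxiliary grammars $H_i$, and handles your ``main obstacle'' (ii) by the same expansion-level translation you describe; your explicit treatment of the boundary digram and the single-use cleanup (iii) is slightly more detailed than the paper's, but the substance is the same.
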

\begin{proof}
  Denote $\algname{Sequential}(xy) = G^{xy} = (V^{xy}, \Sigma, R^{xy}, S)$.
  Let $k$ denote the number of steps performed by \algname{Sequential}
  when processing $xy$, and let $G^{xy}_i = (V^{xy}_i, \Sigma,
  R^{xy}_i, S)$ (where $i \in [1 \dd k]$) denote the intermediate SLG
  computed after $i$ steps.

  By definition of \algname{Sequential} (\cref{sec:algs-nonglobal}),
  for every $i \in [1 \dd k]$, $\expgen{G^{xy}_i}{S}$ is a prefix of
  $xy$.  Let $k_1 = \min\{i \in [1 \dd k] : |\expgen{G^{xy}_i}{S}|
  \geq |x|\}$.  Observe that it holds $|\expgen{G^{xy}_{k_1}}{S}| =
  |x|$. Otherwise, the last symbol $X$ of $\rhsgen{G^{xy}_{k_1}}{S}$
  would be a nonterminal from $V^{xy}_{k_1}$ occurring twice on the
  right-hand side of $G^{xy}_{k_1}$. These two occurrences would
  correspond to two disjoint occurrences of $s =
  \expgen{G^{xy}_{k_1}}{X}$ in $xy$.  By $|s| \geq 2$, this would
  contradict the assumption from the claim.  We thus obtain that
  $G^{xy}_{k_1}$ is isomorphic with $G^{x}$, where
  $\algname{Sequential}(x) = G^{x} = (V^{x}, \Sigma, R^{x}, S)$.
 
  Denote $k_2 = k - k_1$. Let $k_y$ denote the number of steps
  performed by \algname{Sequential} when processing $y$, and let
  $G^{y}_{i} = (V^{y}_i, \Sigma, R^{y}_i, S)$ (where $i \in [1 \dd
  k_y]$) denote the intermediate SLG computed after the first $i$
  steps. Let also $\algname{Sequential}(y) = G^{y} = (V^{y}, \Sigma,
  R^{y}, S)$. Assume that for every $i \in [1 \dd k_y]$, it holds
  $V^{y}_i \cap V^{x} = \{S\}$.  We prove by the induction on $i$
  that for every $i \in [1 \dd k_2]$, $G^{xy}_{k_1+i}$ is isomorphic
  with $H_{i} = (V_{i}, \Sigma, R_{i}, S)$ defined as follows:
  \begin{itemize}[itemsep=0pt,parsep=0pt]
  \item $V_{i} = V^{x} \cup V^{y}_{i}$,
  \item $\rhsgen{H_i}{S} = \rhsgen{G^{x}}{S} \cdot
    \rhsgen{G^{y}_{i}}{S}$,
  \item For every $X \in V^{x} \setminus \{S\}$, $\rhsgen{H_i}{X} =
    \rhsgen{G^{x}}{X}$,
  \item For every $X \in V^{y}_i \setminus \{S\}$, $\rhsgen{H_i}{X} =
    \rhsgen{G^{y}_i}{X}$.
  \end{itemize}
  We also denote $H = (V, \Sigma, R, S) = H_{k_2}$.

  Let $i = 1$. To compute $G^{xy}_{k_1+1}$, we first determine the
  longest prefix of the remaining suffix of the input string (which in
  this case is $y$) that is equal to the expansion of some secondary
  nonterminal existing in the current grammar (i.e.,
  $G^{xy}_{k_1}$). No such nonterminal can exist in this case, since
  for every $X \in V^{xy}_{k_1} \setminus \{S\}$, we have
  $|\expgen{G^{xy}_{k_1}}{X}| \geq 2$.  Thus, such nonterminal would
  contradict the assumption from the claim. Consequently, the
  algorithm simply appends $y[1]$ to the definition of the current
  starting nonterminal.  Since we also have $\rhsgen{G^{y}_1}{S} =
  y[1]$, we thus obtain that $G^{xy}_{k_1+1}$ is isomorphic to $H_1$.

  Let us now assume $i > 1$. By the inductive assumption,
  $G^{xy}_{k_1+i-1}$ is isomorphic with $H_{i-1}$. Let $f : V_{i-1}
  \cup \Sigma \rightarrow V^{xy}_{k_1+i-1} \cup \Sigma$ be the
  corresponding bijection (see \cref{sec:prelim}). We extend $f$
  so that for every $s \in (V_{i-1} \cup \Sigma)^{+}$, it holds
  $f(s) = \bigodot_{j=1,\dots,|s|} f(s[j])$. Let $w$ be the
  remaining suffix of $y$ to be processed, i.e., such that
  $\expgen{G^{xy}_{k_1+i-1}}{S} \cdot w = xy$. Let $A$ (resp.\ $B$) be
  the last symbol of $\rhsgen{G^{y}_{i-1}}{S}$
  (resp.\ $\rhsgen{G^{xy}_{k_1+i-1}}{S}$). Note, that $f(A) = B$.  We
  prove that $G^{xy}_{k_1+i}$ is isomorphic with $H_i$ in two steps:
  \begin{enumerate}

  \item Let $A' \in (V^{y}_{i-1} \setminus \{S\}) \cup \Sigma$
    (resp.\ $B' \in (V^{xy}_{k_1+i-1} \setminus \{S\}) \cup \Sigma$)
    be the symbol initially appended to the definition of the starting
    nonterminal in $G^{y}_{i-1}$ (resp.\ $G^{xy}_{k_1+i-1}$) when
    executing the $i$th (resp.\ $(k_1+i)$th) step of
    \algname{Sequential} for $y$ (resp.\ $xy$), and let $G^{y}_{\rm
    tmp} = (V^{y}_{i-1}, \Sigma, R^{y}_{\rm tmp}, S)$
    (resp.\ $G^{xy}_{\rm tmp} = (V^{xy}_{k_1+i-1}, \Sigma, R^{xy}_{\rm
    tmp}, S)$) be the resulting SLG.  We will prove that $B' =
    f(A')$ (recall that $V^{y}_{i-1} \cup \Sigma \subseteq V_{i-1}
    \cup \Sigma$, and hence $f(A')$ is well-defined).  Consider two
    cases:
    \begin{itemize}
    \item Assume $A' \in \Sigma$, i.e., $A' = w[1]$. By definition of
      $A'$, this implies that for every prefix $w'$ of $w$, it holds
      $w' \not\in \{\expgen{G^{y}_{i-1}}{X} : X \in V^{y}_{i-1}
      \setminus \{S\}\}$.  On the other hand, for every prefix $w'$ of
      $w$, we also have $w' \not\in \{\expgen{G^{x}}{X} : X \in V^{x}
      \setminus \{S\}\}$, since otherwise, we would have $|w'| \geq 2$
      and $w'$ would occur both in $x$ and $y$, contradicting the main
      assumption from the claim.  Combining the above observations
      with $V_{i-1} = V^{x} \cup V^{y}_{i-1}$, we have thus proved
      that for every prefix $w'$ of $w$, it holds $w' \not\in
      \{\expgen{H_{i-1}}{X} : X \in V_{i-1} \setminus \{S\}\}$.  Since
      $\{f(X) : X \in V_{i-1} \setminus \{S\}\} = V^{xy}_{k_1+i-1}
      \setminus \{S\}$, and since isomorphism preserves the
      nonterminal expansion (see \cref{sec:prelim}), this is
      equivalent to $w' \not\in \{\expgen{G^{xy}_{k_1+i-1}}{X} : X \in
      V^{xy}_{k_1+i-1} \setminus \{S\}\}$.  Hence, $B' = w[1] = A' =
      f(A')$.
    \item Assume $A' \in V^{y}_{i-1} \setminus \{S\}$.  Denote $s =
      \expgen{G^{y}_{i-1}}{A'}$, $s' = \expgen{G^{xy}_{k_1+i-1}}{B'}$,
      and $A'' = f^{-1}(B')$.  Note that $|s| \geq 2$.  By $A' \in
      V^{y}_{i-1} \subseteq V_{i-1}$, we thus have $f(A') \in
      V^{xy}_{k_1+i-1} \setminus \{S\}$. Consequently, by definition
      of \algname{Sequential}, $s$ is a prefix of $s'$. Thus, $B'
      \not\in \Sigma$. Using the same argument as above, $s' \not\in
      \{\expgen{G^{x}}{X} : X \in V^{x} \setminus \{S\}\}$. Thus, $A''
      \in V^{y}_{i-1} \setminus \{S\}$. Suppose now that $|s'| >
      |s|$. Since isomorphism preserves the expansion, and since
      $f(A'') = B'$, we have $\expgen{G^{y}_{i-1}}{A''} =
      \expgen{G^{xy}_{k_1+i-1}}{B'}$. Therefore, $A''$ is a strictly
      better candidate (compared to $A'$) to be chosen by
      \algname{Sequential} when processing $w$, a contradiction.
      Thus, $s = s'$, i.e., $\expgen{G^{y}_{i-1}}{A'} =
      \expgen{G^{y}_{i-1}}{A''}$. By \cref{lm:seq-irreducible}, this
      implies $A' = A''$, and hence $B' = f(A'') = f(A')$.
    \end{itemize}
    In both cases, we obtain $B' = f(A')$.

  \item Next, observe that for every $s \in (V^{y}_{i-1} \cup
    \Sigma)^{+}$ that occurs on the right-hand size of $G^{y}_{\rm
    tmp}$ and satisfies $|\expgen{G^{y}_{\rm tmp}}{s}| \geq 2$, the
    number of occurrences of $s$ on the right-hand side of $G^{y}_{\rm
    tmp}$ is equal to the number of occurrences of $f(s)$ on the
    right-hand side of $G^{xy}_{\rm tmp}$, and moreover, any two
    nonoverlapping occurrences of $s$ on the right-hand side of
    $G^{y}_{\rm tmp}$ correspond to two nonoverlapping occurrences of
    $f(s)$ on the right-hand side of $G^{xy}_{\rm tmp}$. Otherwise, we
    would obtain a contradiction with the main assumption in the
    claim. By applying this observation to $AA'$, and recalling that
    $f(AA') = BB'$, we obtain that the pair $AA'$ has at least two
    nonoverlapping occurrences on the right-hand side of $G^{y}_{\rm
    tmp}$ if and only if $BB'$ has at least two nonoverlapping
    occurrences on the right-hand side of $G^{xy}_{\rm tmp}$.
    \begin{itemize}
    \item If no such repetition occurs, then the $i$th step of
      \algname{Sequential} is completed, i.e., $G^{y}_{i} = G^{y}_{\rm
      tmp}$ and $G^{xy}_{k_1+i} = G^{xy}_{\rm tmp}$. It then follows
      by the inductive assumption, and the definition of $H_{i-1}$
      that indeed $G^{xy}_{k_1+i}$ is isomorphic with $H_{i}$, since
      we now see that $H_{i}$ (resp.\ $G^{xy}_{k_1+i}$) is obtained by
      appending $A'$ (resp.\ $B'$) to the definition of the starting
      nonterminal in $H_{i-1}$ (resp.\ $G^{xy}_{k_1+i-1}$).
    \item Let us assume that there exist at least two nonoverlapping
      occurrences of $AA'$ (resp.\ $BB'$) on the right-hand side of
      $G^{y}_{\rm tmp}$ (resp.\ $G^{xy}_{\rm tmp}$). Let $H^{y}_{\rm
      tmp}$ (resp.\ $H^{xy}_{\rm tmp}$) be the SLG obtained by
      replacing all occurrences of $AA'$ (resp.\ $BB'$) on the
      right-hand side of $G^{y}_{\rm tmp}$ (resp.\ $G^{xy}_{\rm
      tmp}$), and let $C$ (resp.\ $C'$) be the newly created
      nonterminal with the definition $AA'$ (resp.\ $BB'$).  By the
      above discussion, there is a one-to-one correspondence between
      occurrences of $AA'$ in $G^{x}_{\rm tmp}$ and $BB'$ in
      $G^{xy}_{\rm tmp}$. And hence the number of occurrences of $C$
      after the replacement is equal to the number of occurrences of
      $C'$. Moreover, these occurrences are matching such that
      combining $H^{y}_{\rm tmp}$ with $G^{x}$, analogously to how
      $G^{y}_{i-1}$ is combined with $G^{x}$ in the definition of
      $H_{i-1}$, yields an SLG isomorphic with $H^{xy}_{\rm
      tmp}$. Finally, observe that after the replacement, a
      nonterminal $X$ in $G^{y}_{\rm tmp}$ occurs only once on the
      right-hand side of $H^{y}_{\rm tmp}$ if and only if $f(X)$
      occurs only once on the right-hand side of $H^{xy}_{\rm tmp}$.
      Since all these changes are done symmetrically to $H^{y}_{\rm
      tmp}$ and $H^{xy}_{\rm tmp}$, $G^{xy}_{k_1+i-1}$ after the
      modifications is isomorphic to the modified $G^{y}_{\rm tmp}$
      combined with $G^{x}$ (as in the definition of $H_i$). In other
      words, $G^{xy}_{k_1+i}$ is isomorphic with $H_i$.
    \end{itemize}
  \end{enumerate}

  By the above, it holds $k_2 = k_y$ and the final grammar
  $G^{xy}$ is isomorphic with $H$. Consequently,
  \begin{align*}
    |\algname{Sequential}(xy)|
      &= \textstyle\sum_{X \in V}|\rhsgen{H}{X}|\\
      &= \textstyle\sum_{X \in V^{x} \setminus \{S\}}|\rhsgen{H}{X}| +
         \textstyle\sum_{X \in V^{y} \setminus \{S\}}|\rhsgen{H}{X}| +
         |\rhsgen{H}{S}|\\
      &= \textstyle\sum_{X \in V^{x} \setminus \{S\}}|\rhsgen{G^{x}}{X}| +
         \textstyle\sum_{X \in V^{y} \setminus \{S\}}|\rhsgen{G^{y}}{X}| +
         |\rhsgen{G^{x}}{S}| + |\rhsgen{G^{y}}{S}|\\
      &= \textstyle\sum_{X \in V^{x}}|\rhsgen{G^{x}}{X}| +
         \textstyle\sum_{X \in V^{y}}|\rhsgen{G^{y}}{X}|\\
      &= |\algname{Sequential}(x)| + |\algname{Sequential}(y)|.
      \qedhere
  \end{align*}
\end{proof}

\begin{lemma}\label{lm:rna-length-beta}
  Let $G = (V, \Sigma, R, S)$
  be an admissible SLG. Assume that $\Sigma$ is augmented with a match
  operation (\cref{def:match}).  For every $v \in \beta(G)$
  (\cref{def:rna-beta}), it holds $|v| = 16\sum_{X \in V}|\expgen{G}{X}|$.
\end{lemma}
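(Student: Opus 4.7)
}
My plan is to reduce this to the already established length formula of \cref{lm:alpha-length}, applied four times, once per component of the decomposition $v = v_L \cdot v_R \cdot v'_L \cdot v'_R$. Since $v \in \beta(G)$, the definition (\cref{def:rna-beta}) fixes these four strings in terms of the auxiliary grammars $G_{\rm aux}$ and $G'_{\rm aux}$, each of which has the same shape as the grammar $G'$ used in $\alpha(G)$: every rule inserts a unique sentinel $\dol_i$ (respectively $\dol'_i$) between the two children of the corresponding nonterminal. Hence the counting argument behind \cref{ob:interleave} applies identically to both auxiliary grammars, yielding $|\expgen{G_{\rm aux}}{N_i}| = |\expgen{G'_{\rm aux}}{N_i}| = 2|\expgen{G}{N_i}| - 1$ for every $i \in [1 \dd |V|]$.

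Next, I would observe that each of $v_L, v_R, v'_L, v'_R$ is built by concatenating, for every $i \in [1 \dd |V|]$, two copies of $\expgen{G_{\rm aux}}{N_i}$ (or of $\expgen{G'_{\rm aux}}{N_i}$), separated and terminated by two sentinels from the $\hash$-alphabet. In other words, each of the four strings fits the template used in \cref{def:alpha}, only with different choices of sentinels. The computation in \cref{lm:alpha-length} therefore applies verbatim to each of them, giving
\[
    |v_L| = |v_R| = |v'_L| = |v'_R| = 4 \sum_{X \in V} |\expgen{G}{X}|.
\]
The only superficial difference is that $v_R$ and $v'_R$ enumerate the nonterminals in reverse order, and $v'_L, v'_R$ place the $\hash'$-sentinels before each expansion rather than after; neither change affects the total length.

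Summing these four equal contributions and using $|v| = |v_L| + |v_R| + |v'_L| + |v'_R|$ yields the claim $|v| = 16 \sum_{X \in V} |\expgen{G}{X}|$. I do not anticipate any serious obstacle here: the entire content of the lemma is bookkeeping, and the only thing to be careful about is confirming that the interleaved-length identity $|\expgen{G_{\rm aux}}{N_i}| = 2|\expgen{G}{N_i}| - 1$ (the analog of \cref{ob:interleave}) transfers to $G'_{\rm aux}$, which is immediate because $G'$ differs from $G$ only by applying the match operation to terminal symbols, and match preserves the length of expansions.
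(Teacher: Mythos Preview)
Your proposal is correct and follows essentially the same approach as the paper: decompose $v$ into its four components $v_L, v_R, v'_L, v'_R$, invoke \cref{lm:alpha-length} to conclude each has length $4\sum_{X \in V}|\expgen{G}{X}|$, and sum. The paper's proof is terser but identical in substance; your additional remarks about why the length formula transfers to $G'_{\rm aux}$ and why the reversed enumeration and sentinel placement are immaterial are all correct and make the argument more self-contained.
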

\begin{proof}
  Letting $v_{L}$, $v_{R}$, $v'_{L}$, and $v'_{R}$ be as in
  \cref{def:rna-beta} (i.e., so that $v = v_{L} \cdot v_{R} \cdot
  v'_{L} \cdot v'_{R}$), it follows
  by \cref{lm:alpha-length}
  that $|v_{L}| = |v_{R}| = |v'_{L}| = |v'_{R}| =
  4\sum_{X \in V}|\expgen{G}{X}|$, and hence we have $|v| =
  16\sum_{X \in V}|\expgen{G}{X}|$.
\end{proof}

\begin{lemma}\label{lm:rna-construction-beta}
  Let $G = (V, \Sigma, R, S)$ be an admissible SLG. Assume that
  $\Sigma$ is augmented with a match operation (\cref{def:match}) and
  that $16\sum_{X \in V}|\expgen{G}{X}|$ fits into a machine word.
  Let also $w : \Sigma \rightarrow \Zp$ be a weight function.  Given
  $G$, we can compute some $v \in \beta(G)$ and augment the weight
  function $w$ as described by \cref{def:rna-beta} in $\bigO(\sum_{X
  \in V}|\expgen{G}{X}|)$ time.
\end{lemma}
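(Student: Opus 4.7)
The plan is to mirror the construction used in Lemma \ref{lm:rna-construction-alpha}, adapted to produce an element of $\beta(G)$ rather than $\alpha(G)$. The proof will be essentially a direct procedure, checked against the size bound via Lemma \ref{lm:rna-length-beta}; I do not expect any significant obstacle, only bookkeeping of the four components $v_{L}, v_{R}, v'_{L}, v'_{R}$ promised by \cref{def:rna-beta}.

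The procedure will proceed in the following five steps. First, given the encoding of $G$ (with nonterminals identified with consecutive positive integers, and the match operation on $\Sigma$ computable in $\bigO(1)$ time per symbol), I would build the auxiliary grammar $G' = (V, \Sigma, R', S)$ from \cref{def:rna-beta} by scanning every right-hand side of $G$ and replacing each terminal $a$ with $\match{a}$; this takes $\bigO(|G|)$ time. Second, I would topologically sort the nonterminals of $G$ (possible since $G$ is an SLG), compute $|\expgen{G}{X}|$ for each $X \in V$ in $\bigO(|G|)$ time by a bottom-up sweep, and then radix-sort the nonterminals by expansion length in $\bigO(\sum_{X \in V}|\expgen{G}{X}|)$ time. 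This produces the sequence $(N_i)_{i \in [1 \dd |V|]}$ required by \cref{def:rna-beta}. Third, I would construct $G_{\rm aux}$ and $G'_{\rm aux}$ in $\bigO(|G|)$ time by inserting the sentinels $\dol_i$ (respectively $\dol'_i$) into the right-hand sides as dictated by the definition.

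Fourth, I would generate the four substrings of $v$ by direct traversals of $G_{\rm aux}$ and $G'_{\rm aux}$. Concretely, I would compute
\[
v_{L} = \bigodot_{i=1,\dots,|V|} \expgen{G_{\rm aux}}{N_i} \cdot \hash_{L,2i-1} \cdot \expgen{G_{\rm aux}}{N_i} \cdot \hash_{L,2i},
\]
and analogously $v_{R}$, $v'_{L}$, $v'_{R}$ by iterating $i$ in the orders prescribed by \cref{def:rna-beta}. Each expansion can be emitted in time linear in its length using the topological order already computed, so the total work here is $\bigO(\sum_{X \in V}|\expgen{G}{X}|)$, which by \cref{lm:rna-length-beta} equals $\bigO(|v|)$. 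I would then concatenate $v = v_{L} \cdot v_{R} \cdot v'_{L} \cdot v'_{R}$. Finally, I would augment the weight function: for each $X \in V$ I compute $|\expgen{G}{X}|$ (already known) and $\Sigma_{j \in [1 \dd |\expgen{G}{X}|]} w(\expgen{G}{X}[j])$ by a bottom-up sweep along the topological order in $\bigO(|G|)$ time. For $X = S$, these give $|u|$ and $\sum_{j \in [1 \dd |u|]} w(u[j])$, from which the common $\hash$-weight $(4|u|-3) + 4 \sum_{j}w(u[j])$ of \cref{def:rna-beta} is computed in constant time (the machine-word assumption guarantees no overflow), and the values $w(\dol_i) = w(\dol'_i) = 1$ are set directly.

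Summing the costs, every step is either $\bigO(|G|)$ or $\bigO(\sum_{X \in V}|\expgen{G}{X}|)$, and since $|G| \leq 2\sum_{X \in V}|\expgen{G}{X}|$ by admissibility, the total running time is $\bigO(\sum_{X \in V}|\expgen{G}{X}|)$. The only point worth double-checking is that the radix sort in step two fits within the claimed budget and that the four traversals in step four are each linear in the length of the corresponding output, both of which are routine. No conceptual obstacle arises beyond faithfully reproducing the four-part concatenation from \cref{def:rna-beta}.
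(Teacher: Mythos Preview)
Your proposal is correct and takes essentially the same approach as the paper: the paper's proof simply states that the construction proceeds analogously to \cref{lm:rna-construction-alpha}, computing $v_{L}, v_{R}, v'_{L}, v'_{R}$ in place of $v_{\rm aux}, v'_{\rm aux}$, and augmenting $w$ by first computing $|\expgen{G}{X}|$ and $\sum_{j} w(\expgen{G}{X}[j])$ for every $X \in V$. Your write-up spells out exactly these steps in more detail, with the same time accounting.
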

\begin{proof}
  The construction proceeds analogously as in
  \cref{lm:rna-construction-alpha}, except rather than $v_{\rm aux}$
  and $v'_{\rm aux}$, we compute $v_{L}$, $v_{R}$, $v'_{L}$, and
  $v'_{R}$ (see \cref{def:rna-beta}), and then output $v = v_{L} \cdot
  v_{R} \cdot v'_{L} \cdot v'_{R}$. The weight function $w$ is also
  augmented similarly: first, for every $X \in V$, we compute $|x|$
  and $\sum_{i \in [1 \dd |x|]} w(x[i])$, where $x = \expgen{G}{X}$,
  which is used to calculate the value $(4|\expgen{G}{S}| - 3) +
  4\sum_{i \in [1 \dd |u|]}w(u[i])$, and then we assign the weights.
\end{proof}

\begin{lemma}\label{lm:rna-boosting-sequential}
  Let $G = (V, \Sigma, R, S)$ be an admissible SLG. Assume that
  $\Sigma$ is augmented with a match operation (\cref{def:match}).
  For every $v \in \beta(G)$ (\cref{def:rna-beta}), it holds
  $|\algname{Sequential}(v)| = \bigO(|G|)$.
\end{lemma}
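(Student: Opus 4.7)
The plan is to mirror the decomposition used for global algorithms (\cref{lm:rna-boosting-global}) by splitting $v = v_L\cdot v_R\cdot v'_L\cdot v'_R$ (\cref{def:rna-beta}) into four pieces, bounding $|\algname{Sequential}(\cdot)|$ on each, and combining via \cref{lm:sequential-concat}. Applying the latter three times (once at each boundary) gives
\[
  |\algname{Sequential}(v)| = |\algname{Sequential}(v_L)| + |\algname{Sequential}(v_R)| + |\algname{Sequential}(v'_L)| + |\algname{Sequential}(v'_R)|.
\]
The non-overlap hypothesis at each boundary is immediate: the four sentinel families $\{\hash_{L,*}\}$, $\{\hash_{R,*}\}$, $\{\hash'_{L,*}\}$, $\{\hash'_{R,*}\}$ and the two dollar families $\{\dol_*\}$, $\{\dol'_*\}$ are mutually disjoint, each appears in only one of the four pieces (or, for the dollars, on only one of the two sides), and every other position of $v$ carries such a marker, so any length-$\geq 2$ repeated substring is forced to lie within a single piece.

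For $v_L$, the string is literally an element of $\alpha(G)$ after the renaming $\hash_{L,i}\mapsto\hash_i$, so \cref{lm:sequential} directly gives $|\algname{Sequential}(v_L)| = \tfrac{7}{2}|G|$. For $v'_L$, an analogous argument applied to $G'$ (with $|G'|=|G|$) gives $\bigO(|G|)$; the positional shift of the $\hash'_{L,*}$ sentinels from trailing to leading is cosmetic, because the inductive invariant in the proof of \cref{lm:sequential} depends only on (i) each secondary expansion being separated from the next by unique sentinels, and (ii) the nonterminals appearing in non-decreasing order of expansion length, both of which still hold.

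The main obstacle is bounding $|\algname{Sequential}(v_R)|$ (and symmetrically $|\algname{Sequential}(v'_R)|$), where the enumeration runs by \emph{non-increasing} expansion length. The induction in \cref{lm:sequential} critically exploits the forward ordering so that when the $k$-th block is processed, every sub-nonterminal of $N_k$ is already present as a secondary nonterminal; reversing the order breaks this, because at the start of $v_R$ the algorithm holds no secondary nonterminals and is first exposed to the largest expansion $\expgen{G}{N_{|V|}}$. I would overcome this with a new inductive analysis that replaces the invariant of \cref{lm:sequential} by a richer one: after processing the first $k$ blocks of $v_R$, the current grammar is isomorphic to a specific ``closure'' grammar whose secondary nonterminals are in bijection with $\{N_{|V|},\dots,N_{|V|-k+1}\}$ together with all of their descendants in $G$. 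The inductive step requires tracking, inside the second copy of $\expgen{G}{N_{|V|-k+1}}$, how \algname{Sequential}'s digram rule together with its prefix-matching rule either re-uses an already-present secondary nonterminal or carves out exactly one new nonterminal per rule of the $G'$-parse of $N_{|V|-k+1}$, and how the single-use deletion rule keeps the total count proportional to $|V|$. This bookkeeping is the technical heart and will parallel the long case analysis of \cref{lm:sequential}, but adapted to the opposite processing order. Once $|\algname{Sequential}(v_R)|,|\algname{Sequential}(v'_R)| = \bigO(|G|)$ is established, summing the four pieces gives the desired bound $|\algname{Sequential}(v)| = \bigO(|G|)$.
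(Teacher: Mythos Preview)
Your four-way decomposition fails at the very first step: the hypothesis of \cref{lm:sequential-concat} does \emph{not} hold at the boundary $v_L \mid v_R$ (nor at $v'_L \mid v'_R$). Both $v_L$ and $v_R$ are built from expansions in $G_{\rm aux}$, so they share the same $\{\dol_i\}$ markers and the same $\Sigma$-symbols; only the $\hash$-families differ. Concretely, $\expgen{G_{\rm aux}}{N_1}$ (a length-$\geq 3$ string) occurs in the first block of $v_L$ and again in the last block of $v_R$, so a length-$\geq 2$ substring has one occurrence in each piece. Your sentence ``every other position of $v$ carries such a marker'' is true, but the marker at those positions can be a $\dol_j$, which is shared between $v_L$ and $v_R$; only the boundary $v_Lv_R \mid v'_Lv'_R$ separates the $\dol$- from the $\dol'$-family. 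Hence \cref{lm:sequential-concat} can be applied once, at that middle boundary, but not three times.

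Independently, your proposed invariant for $v_R$ in isolation is wrong. Take $G$ with $S=N_3\to N_1N_2$, $N_1\to ab$, $N_2\to cd$. The first block of $v_R$ is $\expgen{G_{\rm aux}}{S}\,\hash_{R,5}\,\expgen{G_{\rm aux}}{S}\,\hash_{R,6}$ with $\expgen{G_{\rm aux}}{S}=a\dol_1 b\,\dol_3\,c\dol_2 d$. Tracing \algname{Sequential} on this block yields secondary nonterminals with expansions $a\dol_1 b$ and $a\dol_1 b\,\dol_3\,c\dol_2 d$, but \emph{no} nonterminal for $c\dol_2 d=\expgen{G_{\rm aux}}{N_2}$; the ``closure of $\{N_{|V|}\}$'' description is therefore false.

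The paper's proof avoids both problems by never analysing $v_R$ alone. It applies \cref{lm:sequential-concat} once, at $v_Lv_R \mid v'_Lv'_R$, and then bounds $|\algname{Sequential}(v_L\cdot v_R)|$ directly: after \algname{Sequential} has consumed $v_L$ (size $\tfrac{7}{2}|G|$ by \cref{lm:sequential}), it already holds a secondary nonterminal for every $\expgen{G_{\rm aux}}{X}$, $X\in V$. Each of the $2|V|$ blocks of $v_R$ is then processed in exactly two steps (prefix-match that nonterminal, then append the unique $\hash_{R,i}$), with no digram repetition and hence no new secondary nonterminals; this adds only $4|V|$ to the start rule. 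The reversed enumeration order in $v_R$ is thus completely harmless \emph{provided} $v_L$ has been processed first, and the hard case you identified never arises.
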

\begin{proof}
  Let $v_{L}$, $v_{R}$, $v'_{L}$, and $v'_{R}$ be as in
  \cref{def:rna-beta}.

  By \cref{lm:sequential}, it holds
  $|\algname{Sequential}(v_{L})| = \tfrac{7}{2}|G|$, and for
  every $X \in V$, $\algname{Sequential}(v_{L})$ contains a
  nonterminal expanding to $\expgen{G_{\rm aux}}{X}$ (where $G_{\rm
  aux}$ is as in \cref{def:rna-beta}). Consider now the computation
  of $\algname{Sequential}(v_{L} \cdot v_{R})$ and observe that the
  last symbol of $v_{L}$ is unique in $v_{L}$, and does not occur in
  $v_{R}$. Thus, when processing $v_{L} \cdot v_{R}$,
  $\algname{Sequential}$ will at some point be left with $v_{R}$ as
  the remaining suffix. Note also that $v_{R}$ is composed of $2|V|$
  substrings, each of which is a concatenation of $\expgen{G_{\rm
  aux}}{X}$ for some $X \in V$, and a symbol from
  $\{\hash_{R,i}\}_{i \in [1 \dd 2|V|]}$. Since each of the symbols in
  the latter set has only a single occurrence in $v_{R}$, and does not
  occur in $v_{L}$, any of the above $2|V|$ substrings will take
  exactly two steps to be processed by \algname{Sequential}. Moreover,
  this will only extend the definition of the starting nonterminal by
  two, and will not create or remove any nonterminals.  We thus obtain
  $|\algname{Sequential}(v_{L} \cdot v_{R})| =
  |\algname{Sequential}(v_{L})| + 4|V| = \tfrac{15}{2}|G|$.

  Let us now consider the computation of
  $\algname{Sequential}(v'_{L})$.  Observe that the structure of
  $v'_{L}$ is nearly identical to $v_{L}$, except the last unique
  symbol is moved at the beginning. This does not change the size of
  the output grammar, or the invariant that for every $X \in V$,
  $\algname{Sequential}(v'_{L})$ contains a nonterminal expanding to
  $\expgen{G'_{\rm aux}}{X}$ (where $G'_{\rm aux}$ is as in
  \cref{def:rna-beta}). Thus, $|\algname{Sequential}(v'_{L})| =
  \tfrac{7}{2}|G|$. Let us now consider the computation of
  $\algname{Sequential}(v'_{L} \cdot v'_{R})$. Observe that the first
  symbol of $v'_{R}$ is unique in $v'_{R}$ and does not occur in
  $v'_{L}$. Thus, when processing $v'_{L} \cdot v'_{R}$,
  $\algname{Sequential}$ will at some point be left with $v'_{R}$ as
  the remaining suffix. It remains to note that, similarly as above,
  $v'_{R}$ is composed of $2|V|$ substrings, each of which is a
  concatenation of $\expgen{G'_{\rm aux}}{X}$ for some $X \in V$, and
  a symbol from $\{\hash'_{R,i}\}_{i \in [1 \dd 2|V|]}$.  We thus
  obtain $|\algname{Sequential}(v'_{L} \cdot v'_{R})| =
  |\algname{Sequential}(v'_{L})| + 4|V| = \tfrac{15}{2}|G|$.

  Observe now that all symbols at even positions in the string $v_{L}
  \cdot v_{R}$ are from the set $\{\dol_i\}_{i \in [1 \dd |V|]} \cup
  \{\hash_{L,i}, \hash_{R,i}\}_{i \in [1 \dd 2|V|]}$, and all symbols
  at odd positions in the string $v'_{L} \cdot v'_{R}$ are from the
  set $\{\dol'_i\}_{i \in [1 \dd |V|]} \cup \{\hash'_{L,i},
  \hash'_{R,i}\}_{i \in [1 \dd 2|V|]}$. Consequently, if a substring
  $s$ satisfying $|s| \geq 2$ has at least two nonoverlapping
  occurrences in $v$, they are either both contained in $v_{L} \cdot
  v_{R}$, or both are contained in $v'_{L} \cdot v'_{R}$. By
  \cref{lm:sequential-concat} and the above, we thus obtain
  \begin{align*}
    |\algname{Sequential}(v)|
      &= |\algname{Sequential}(v_{L} \cdot v_{R} \cdot
                               v'_{L} \cdot v'_{R})|\\
      &= |\algname{Sequential}(v_{L} \cdot v_{R})| +
         |\algname{Sequential}(v'_{L} \cdot v'_{R})|\\
      &\leq 15|V| = \bigO(|G|).
      \qedhere
  \end{align*}
\end{proof}

\begin{lemma}\label{lm:wrna-beta}
  Let $G = (V, \Sigma, R, S)$ be an admissible SLG and assume that
  $\Sigma$ is augmented with a match operation (\cref{def:match}).
  Let also $w : \Sigma \rightarrow \Zp$ be a weight function. Denote
  $u = \expgen{G}{S}$. For every $X \in V$, denote $q_X = |x| - 1 +
  \sum_{i \in [1 \dd |x|]} w(x[i])$, where $x = \expgen{G}{X}$.
  For
  every $v \in \beta(G)$ (\cref{def:rna-beta}), it holds
  \begin{align*}
    \WRNA(v) = 4\WRNA(u) + \delta,
  \end{align*}
  where $\delta = 4|V| \cdot (4q_S + 1) + 2q_S + 4\sum_{X \in V
  \setminus \{S\}} q_X$, and the match operation of $\Sigma$ and the
  weight function $w$ have been extended to the alphabet of $v$ as in
  \cref{def:rna-beta}. Moreover, given $G$ and the weight function $w$,
  we can compute $\delta$ in $\bigO(|G|)$ time.
\end{lemma}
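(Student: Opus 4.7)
The plan is to mirror the proof of \cref{lm:wrna-alpha}, adapted to the four-component structure $v = v_L \cdot v_R \cdot v'_L \cdot v'_R$. First I would establish, analogously to Step~1 of \cref{lm:wrna-alpha}, the base identity
\[
  \WRNA(\expgen{G_{\rm aux}}{N_i} \cdot \expgen{G'_{\rm aux}}{N_i}) = \WRNA(\expgen{G'_{\rm aux}}{N_i} \cdot \expgen{G_{\rm aux}}{N_i}) = q_{N_i}
\]
for every $i \in [1 \dd |V|]$, using the reverse-complement relationship $\expgen{G'_{\rm aux}}{N_i}[j] = \match{\expgen{G_{\rm aux}}{N_i}[|\expgen{G_{\rm aux}}{N_i}| - j + 1]}$ (directly checkable from \cref{def:rna-beta}), so a nested palindrome-style matching pairs every character with its complement. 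Next I would peel the four self-matching $S$-block pairs $(\hash_{L,2|V|-1}, \hash_{L,2|V|})$, $(\hash_{R,2|V|-1}, \hash_{R,2|V|})$, $(\hash'_{L,2|V|-1}, \hash'_{L,2|V|})$, $(\hash'_{R,2|V|-1}, \hash'_{R,2|V|})$: each is unique in $v$ and each other's match, the substring between each pair has weight exactly $q_S < 4q_S + 1 = w(\hash)$, so \cref{lm:decomposition} forces all four peels and yields a combined contribution of $4 w(\hash) + 4 \WRNA(u)$ (using $\WRNA(\expgen{G_{\rm aux}}{S}) = \WRNA(\expgen{G'_{\rm aux}}{S}) = \WRNA(u)$ by the same argument as in alpha's Step~3).

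The crux of the argument is to establish the decomposition identity $\WRNA(v) = \WRNA(v_L \cdot v'_R) + \WRNA(v_R \cdot v'_L)$. The lower bound is a direct construction: optimal matchings of $v_L \cdot v'_R$ and $v_R \cdot v'_L$ combine into a valid matching of $v$ in which the first forms outer brackets and the second forms properly nested inner brackets, with no crossings by the linear arrangement $[v_L][v_R][v'_L][v'_R]$. For the upper bound I would use an exchange argument: since every $\Sigma$-character $c$ appearing in $v_L$ (inside some $\expgen{G_{\rm aux}}{N_i}$) has its complement $\match{c}$ at a structurally corresponding position both in $v'_L$ and in $v'_R$ (through $\expgen{G'_{\rm aux}}{N_i}$), any diagonal matched pair in an optimal matching, such as $v_L \to v'_L$, can be rerouted into two aligned pairs (one $v_L \to v'_R$ and one $v_R \to v'_L$) of at least equal total weight while preserving non-crossing; iterating this local exchange eliminates all diagonal pairs and gives $\WRNA(v) \leq \WRNA(v_L \cdot v'_R) + \WRNA(v_R \cdot v'_L)$.

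With the identity in hand, I would note that $v_L \cdot v'_R$ has exactly the block structure of \cref{def:alpha} (up to renaming of sentinels, with beta's $w(\hash) = 4q_S+1$ in place of alpha's $2q_S+1$), so the Step~1--3 analysis of \cref{lm:wrna-alpha} transfers verbatim, giving
\[
  \WRNA(v_L \cdot v'_R) = 2|V|w(\hash) + q_S + 2\sum_{X \in V \setminus \{S\}} q_X + 2\WRNA(u),
\]
and a symmetric argument yields the same value for $\WRNA(v_R \cdot v'_L)$. Summing via the identity produces $\WRNA(v) = 4|V|w(\hash) + 2q_S + 4\sum_{X \neq S} q_X + 4\WRNA(u) = 4\WRNA(u) + \delta$. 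The $\bigO(|G|)$-time computation of $\delta$ proceeds as in \cref{lm:wrna-alpha}: topologically sort the nonterminals, compute $|\expgen{G}{X}|$ and $\sum_j w(\expgen{G}{X}[j])$ for each $X \in V$, derive $q_X$, and assemble $\delta$.

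The main obstacle will be the exchange argument in Step~3, which requires a careful analysis of how diagonal pairs interact with the non-crossing constraint and with the forced $\hash$-matches. A tempting alternative would be to try to peel $v$ directly via iterated \cref{lm:decomposition}, but this approach breaks down: after the four initial self-match peels, the outer $v_L \to v'_R$ $\hash$-pairs have a middle containing $v_R^{(-)}$, $v'_L^{(-)}$, and the $S$-block leftovers $E_{|V|}^2$ and $(E'_{|V|})^2$, whose total weight exceeds $w(\hash)$. Even after first reducing the inner part via the $v_R$-to-$v'_L$ peels (which \emph{are} forceable since the pairs $\hash_{R,j} \leftrightarrow \hash'_{L,j}$ are adjacent across the $v_R$-$v'_L$ boundary and have middle weight at most $2q_S$), the residual middle is $E_{|V|}^2 \cdot E_{|V|-1} \cdot E'_{|V|-1} \cdot (E'_{|V|})^2$ of weight $4q_S + 2q_{N_{|V|-1}} > 4q_S + 1 = w(\hash)$, so the decomposition lemma's hypothesis still fails for the outer peels. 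The identity-via-exchange route bypasses this impasse by separating the analysis of the two halves before any outer peel is attempted.
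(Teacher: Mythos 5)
Your proposed route — establish $\WRNA(v) = \WRNA(v_L v'_R) + \WRNA(v_R v'_L)$ and then apply the $\alpha$-style analysis to each half — is genuinely different from the paper's, and your final arithmetic ($2\cdot[2|V|(4q_S+1)+q_S+2\sum_{X\neq S}q_X+2\WRNA(u)] = \delta + 4\WRNA(u)$) is correct \emph{granted} the identity. However, the upper bound $\WRNA(v) \leq \WRNA(v_L v'_R) + \WRNA(v_R v'_L)$ is the entire content of the lemma, and you only sketch the exchange argument. Eliminating a diagonal pair $c \leftrightarrow \match{c}$ with $c \in v_L$, $\match{c} \in v'_L$ requires showing that the rerouted targets (a copy of $\match{c}$ in $v'_R$ for the outer reroute, and copies of $c$, $\match{c}$ straddling the $v_R$--$v'_L$ boundary for the inner reroute) are \emph{available} — i.e., not already consumed by other pairs in the optimal matching — and that the rerouted pairs do not cross the forced $\hash$-brackets or other aligned pairs. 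That simultaneous feasibility claim is the crux, is not obviously true without a positional case analysis, and is left entirely open in your sketch. As it stands the proposal has a real gap at exactly the step you flag as the ``main obstacle.''

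A secondary remark on your dismissal of iterated peeling: the paper \emph{does} prove the lemma by repeated application of \cref{lm:decomposition} (Steps 1, 3, 4 of its proof are all peels). Your computation that, after the four self-pair peels and all inner cross-peels, the innermost outer pair is still blocked by a residual middle $t\,t\,E_{|V|-1}\,E'_{|V|-1}\,t'\,t'$ of weight $4q_S+2q_{N_{|V|-1}}$ is \emph{correct for the symbol order written in} \cref{def:rna-beta}, where $v_R$ lists $E_i\,\hash_{R,2i-1}\,E_i\,\hash_{R,2i}$. But the paper's inductive identity in Step~3, namely $v_{R,i}\cdot v'_{L,i} = v_{R,i+1}\cdot\hash_{R,2i}\cdot s\cdot\hash_{R,2i-1}\cdot s\cdot s'\cdot\hash'_{L,2i-1}\cdot s'\cdot\hash'_{L,2i}\cdot v'_{L,i+1}$, silently uses the mirrored order $\hash_{R,2i}\,E_i\,\hash_{R,2i-1}\,E_i$ for $v_R$, under which the inner peels absorb both leftover copies of $E_{|V|-1}$, $E'_{|V|-1}$, the center reduces to $t\cdot t'$, and the first outer peel has middle $t\,t\,t'\,t'$ of weight $4q_S < 4q_S+1$. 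So you spotted a genuine inconsistency between \cref{def:rna-beta} and the proof of \cref{lm:wrna-beta}, but the conclusion should be that the definition needs the mirrored ordering the proof already assumes, not that the iterated peeling strategy is dead.
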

\begin{proof}
  Let
  $G' = (V, \Sigma, R', S)$ and the sets
  $\{\dol_i\}_{i \in [1 \dd |V|]}$,
  $\{\dol'_i\}_{i \in [1 \dd |V|]}$,
  $\{\hash_{L,i}\}_{i \in [1 \dd 2|V|]}$,
  $\{\hash_{R,i}\}_{i \in [1 \dd 2|V|]}$,
  $\{\hash'_{L,i}\}_{i \in [1 \dd 2|V|]}$,
  $\{\hash'_{R,i}\}_{i \in [1 \dd 2|V|]}$,
  $\Sigma_{\rm aux}$,
  $\Sigma'_{\rm aux}$, and
  $\Sigma_{\rm all}$
  be as in
  \cref{def:rna-beta}. Let $(N_i)_{i \in [1 \dd |V|]}$ be a sequence
  corresponding to $v$ in \cref{def:rna-beta}. Then, let
  $v_{L}$,
  $v_{R}$,
  $v'_{L}$,
  $v'_{R}$, 
  $G_{\rm aux} = (V, \Sigma_{\rm aux}, R_{\rm aux}, S)$, and
  $G'_{\rm aux} = (V, \Sigma'_{\rm aux}, R'_{\rm aux}, S)$
  be as in \cref{def:rna-beta} and correspond to $(N_i)_{i \in [1 \dd
  |V|]}$.  Since every nonterminal in $V$ occurs in the parse tree
  of $G$, it follows that for every $i \in [1 \dd |V|)$, it holds
  $|\expgen{G}{N_i}| < |\expgen{G}{S}|$.  Thus, by
  $|\expgen{G}{N_1}| \leq \cdots \leq |\expgen{G}{N_{|V|}}|$ we must
  have $N_{|V|} = S$ For every $i \in [1 \dd |V|]$, denote
  \begin{align*}
    v_{L,i}
      &= (\textstyle\bigodot_{j=i,\dots,|V|-1}
            \expgen{G_{\rm aux}}{N_j} \cdot
            \hash_{L,2j-1} \cdot
            \expgen{G_{\rm aux}}{N_j} \cdot
            \hash_{L,2j}
         ) \cdot \expgen{G_{\rm aux}}{N_{|V|}},\\
    v_{R,i}
      &= (\textstyle\bigodot_{j=|V|-1,\dots,i}
            \expgen{G_{\rm aux}}{N_j} \cdot
            \hash_{R,2j-1} \cdot
            \expgen{G_{\rm aux}}{N_j} \cdot
            \hash_{R,2j}
         ) \cdot \expgen{G_{\rm aux}}{N_{|V|}},\\
    v'_{L,i}
      &= \expgen{G'_{\rm aux}}{N_{|V|}} \cdot
         (\textstyle\bigodot_{j=i,\dots,|V|-1}
            \hash'_{L,2j} \cdot
            \expgen{G'_{\rm aux}}{N_j} \cdot
            \hash'_{L,2j-1} \cdot
            \expgen{G'_{\rm aux}}{N_j}
         )\\
    v'_{R,i}
      &= \expgen{G'_{\rm aux}}{N_{|V|}} \cdot
         (\textstyle\bigodot_{j=|V|-1,\dots,i}
            \hash'_{R,2j} \cdot
            \expgen{G'_{\rm aux}}{N_j} \cdot
            \hash'_{R,2j-1} \cdot
            \expgen{G'_{\rm aux}}{N_j}
         ).
  \end{align*}
  Note that $v_L = v_{L,1} \cdot \hash_{L,2|V|-1} \cdot \expgen{G_{\rm
  aux}}{N_{|V|}} \cdot \hash_{L,2|V|}$, and analogous or symmetric
  equalities hold for $v_R$, $v'_L$, and $v'_R$. Thus,
  \begin{align*}
    v &=
        v_{L,1} \cdot \hash_{L,2|V|-1} \cdot \expgen{G_{\rm aux}}{S} \cdot \hash_{L,2|V|} \cdot
        \hash_{R,2|V|} \cdot \expgen{G_{\rm aux}}{S} \cdot \hash_{R,2|V|-1} \cdot v_{R,1} \cdot\\
      & \hspace{0.5cm}
        v'_{L,1} \cdot \hash'_{L,2|V|-1} \cdot \expgen{G'_{\rm aux}}{S} \cdot \hash'_{L,2|V|} \cdot
        \hash'_{R,2|V|} \cdot \expgen{G'_{\rm aux}}{S} \cdot \hash'_{R,2|V|-1} \cdot v'_{R,1}.
  \end{align*}
  The proof proceeds in five steps:

  1. Denote $t = \expgen{G_{\rm aux}}{S}$ and $t' = \expgen{G'_{\rm
  aux}}{S}$. Observe that $|t| = |t'|$ and for every $j \in [1 \dd
  |t|]$, it holds $t'[j] = \match{t[|t| - j + 1]}$. By
  \cref{lm:rev,lm:match}, we thus have $\WRNA(t) =
  \WRNA(t')$. Moreover, note that by \cref{def:rna-beta}, removing all
  symbols from $t$ that belong to the set $\{\dol_i\}_{i \in [1 \dd
  |V|]}$ results in the string $\expgen{G}{S} = u$.  Since none of
  these symbols have matching characters in $t$, we thus obtain
  $\WRNA(t) = \WRNA(u)$. Recall now that $\sum_{j \in [1 \dd |t|]}
  w(t[j]) = q_{N_{|V|}} = q_S$.  On the other hand, by
  \cref{def:rna-beta}, we have
  $w(\hash_{L,2|V|-1}) = w(\hash_{L,2|V|}) =
  w(\hash_{R,2|V|-1}) = w(\hash_{R,2|V|}) =
  w(\hash'_{L,2|V|-1}) = w(\hash'_{L,2|V|}) =
  w(\hash'_{R,2|V|-1}) = w(\hash'_{R,2|V|}) =
  4q_S + 1$.
  Note also that
  $\match{\hash_{L,2|V|-1}} = \hash_{L,2|V|}$,
  $\match{\hash_{R,2|V|-1}} = \hash_{R,2|V|}$,
  $\match{\hash'_{L,2|V|-1}} = \hash'_{L,2|V|}$, and
  $\match{\hash'_{R,2|V|-1}} = \hash'_{R,2|V|}$.
  By applying
  \cref{lm:decomposition} four times, we thus obtain:
  \begin{align*}
    \WRNA(v)
      &= 4w(\hash_{L,2|V|}) +
         2\WRNA(t) +
         2\WRNA(t') +
         \WRNA(v_{L,1} \cdot
               v_{R,1} \cdot
               v'_{L,1} \cdot
               v'_{R,1})\\
      &= 4(4q_S + 1) +
         4\WRNA(u) +
         \WRNA(v_{L,1} \cdot
               v_{R,1} \cdot
               v'_{L,1} \cdot
               v'_{R,1}).
  \end{align*}

  2. Next, we observe that for every $i \in [1 \dd |V|]$, it holds $
  \WRNA(\expgen{G_{\rm aux}}{N_i} \cdot \expgen{G'_{\rm aux}}{N_i}) =
  q_{N_i} $. The proof of this fact is analogous as in
  \cref{lm:wrna-alpha}.

  3. Next, we prove by induction on $|V|-i$, that for $i \in [1 \dd
  |V|]$, it holds $\WRNA(v_{L,1} \cdot v_{R,i} \cdot v'_{L,i} \cdot
  v'_{R,1}) = \WRNA(v_{L,1} \cdot t \cdot t' \cdot v'_{R,1}) +
  2(|V|-i) \cdot (4q_S + 1) + 2\sum_{j=i}^{|V|-1}q_{N_j}$, where $t =
  \expgen{G_{\rm aux}}{S}$ and $t' = \expgen{G'_{\rm aux}}{S}$.

  Let $i = |V|$. Then, $v_{R,i} = t$ and $v'_{L,i} = t'$. By the
  above, we thus have $\WRNA(v_{L,1} \cdot v_{R,i} \cdot v'_{L,i}
  \cdot v'_{R,1}) = \WRNA(v_{L,1} \cdot t \cdot t' \cdot v_{R,1})$,
  i.e., we have proved the induction base.

  Let $i < |V|$. Denote $s = \expgen{G_{\rm aux}}{N_i}$ and $s' =
  \expgen{G'_{\rm aux}}{N_i}$, and observe that then $v_{R,i} \cdot
  v'_{L,i} = v_{R,i+1} \cdot \hash_{R,2i} \cdot s \cdot \hash_{R,2i-1}
  \cdot s \cdot s' \cdot \hash'_{L,2i-1} \cdot s \cdot \hash'_{L,2i}
  \cdot v'_{L,i+1}$. Recall now from the above, that it holds $|s| =
  |s'|$ and for every $j \in [1 \dd |s|]$, we have $s'[j] =
  \match{s[|s| - j + 1]}$. Thus, $\sum_{j \in [1 \dd |s|]}w(s'[j]) =
  \sum_{j \in [1 \dd |s|]} w(s[j])$.  On the other hand, note that
  $\sum_{j \in [1 \dd |s|]} w(s[i]) = q_{N_i}$.  Finally, note that
  since $s$ is a substring of $\expgen{G_{\rm aux}}{S}$, we have
  $q_{N_i} \leq q_S$. Thus, the sum of weights for all symbols in $s
  \cdot s'$ is at most $2q_{S}$.  Recall now that by
  \cref{def:rna-beta}, we have $w(\hash_{R,2i-1}) = w(\hash_{R,2i}) =
  w(\hash'_{L,2i-1}) = w(\hash'_{L,2i}) = 4q_{S} + 1$,
  $\match{\hash_{L,2i-1}} = \hash'_{R,2i-1}$, and
  $\match{\hash_{R,2i}} = \hash'_{L,2i}$.  By applying
  \cref{lm:decomposition} twice, utilizing the inductive assumption,
  we thus obtain
  \begin{align*}
      &\hspace{-0.5cm}
         \WRNA(v_{L} \cdot v_{R,i} \cdot v'_{L,i} \cdot v'_{R})\\
      &= \WRNA(v_{L} \cdot
               v_{R,i+1} \cdot
               \hash_{R,2i} \cdot
               s \cdot
               \hash_{R,2i-1} \cdot
               s \cdot
               s' \cdot
               \hash'_{L,2i-1} \cdot
               s' \cdot
               \hash'_{L,2i} \cdot
               v'_{L,i+1} \cdot
               v_{R})\\
      &= w(\hash_{R,2i-1}) +
         \WRNA(v_{L} \cdot
               v_{R,i+1} \cdot
               \hash_{R,2i} \cdot
               s \cdot
               s' \cdot
               \hash'_{L,2i} \cdot
               v'_{L,i+1} \cdot
               v_{R}) +
         \WRNA(s \cdot s')\\
      &= w(\hash_{R,2i-1}) +
         w(\hash_{R,2i}) +
         \WRNA(v_{L} \cdot
               v_{R,i+1} \cdot
               v'_{L,i+1} \cdot
               v_{R}) +
         2\WRNA(s \cdot s')\\
      &= 2(4q_s + 1) +
         \WRNA(v_{L} \cdot
               v_{R,i+1} \cdot
               v'_{L,i+1} \cdot
               v_{R}) +
         2q_{N_i}\\
      &= 2(4q_S + 1) +
         \WRNA(v_{L,1} \cdot t \cdot t' \cdot v'_{R,1}) +
         2(|V| - (i+1)) \cdot (4q_S + 1) +
         2\textstyle\sum_{j=i+1}^{|V|-1} q_{N_j} +
         2q_{N_i}\\
      &= \WRNA(v_{L,1} \cdot t \cdot t' \cdot v'_{R,1}) +
         2(|V|-i) \cdot (4q_S + 1) +
         2\textstyle\sum_{j=i}^{|V|-1} q_{N_j}.
  \end{align*}

  Applying the above for $i = 1$, we thus obtain:
  \[
    \WRNA(v_{L} \cdot v_{R,1} \cdot v'_{L,1} \cdot v'_{R}) =
    \WRNA(v_{L,1} \cdot t \cdot t' \cdot v'_{R,1}) +
    2(|V| - 1)(4q_S + 1) +
    2\textstyle\sum_{X \in V \setminus \{S\}} q_X.
  \]

  4. Next, we observe that by analogous induction as above,
  \[
    \WRNA(v_{L,1} \cdot t \cdot t' \cdot v'_{R,1}) =
    2(|V| - 1)(4q_S + 1) +
    2q_S +
    2\textstyle\sum_{X \in V \setminus \{S\}} q_X.
  \]

  By plugging this into the earlier formula, we thus have
  \[
    \WRNA(v_{L} \cdot v_{R,1} \cdot v'_{L,1} \cdot v'_{R}) =
    4(|V| - 1)(4q_S + 1) +
    2q_S +
    4\textstyle\sum_{X \in V \setminus \{S\}} q_X.
  \]

  5. By putting everything together, we thus obtain
  \begin{align*}
    \WRNA(v)
      &= 4(4q_S + 1) +
         4\WRNA(u) +
         \WRNA(v_{L,1} \cdot
               v_{R,1} \cdot
               v'_{L,1} \cdot
               v'_{R,1})\\
      &= 4(4q_S + 1) +
         4\WRNA(u) +
         4(|V| - 1)(4q_S + 1) +
         2q_S +
         4\textstyle\sum_{X \in V \setminus \{S\}} q_X\\
     &=  4\WRNA(u) +
         4|V|(4q_S + 1) +
         2q_S +
         4\textstyle\sum_{X \in V \setminus \{S\}} q_X\\
      &= 4\WRNA(u) + \delta.
  \end{align*}

  The value $\delta$ is computed analogously as in
  \cref{lm:wrna-alpha}, i.e., in $\bigO(|G|)$ time we first compute
  $q_X$ for every $X \in V$, and then easily deduce $\delta$.
\end{proof}

\begin{theorem}\label{th:rna-sequential}
  Assuming the $k$-Clique Conjecture (resp.\ Combinatorial
  $k$-Clique Conjecture), there is no algorithm (resp.\ combinatorial
  algorithm) that, given $x \in \Sigma^{n}$ (where $\Sigma$ is
  augmented with a match operation) and a weight function $w : \Sigma
  \rightarrow [0 \dd m]$ such that $|\Sigma| = \bigO(n^{\delta})$, $m
  = \bigO(\poly(n))$, and $|\algname{Sequential}(x)| = \bigO(n^{\delta})$,
  computes $\WRNA(x)$ in $\bigO(n^{\omega - \epsilon})$
  (resp.\ $\bigO(n^{3 - \epsilon})$) time, for any $\epsilon > 0$.
\end{theorem}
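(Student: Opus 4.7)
The plan is to mirror the structure of the proof of \cref{th:rna-global}, replacing the global-algorithm machinery ($\alpha(G)$, \cref{lm:rna-boosting-global}, \cref{lm:wrna-alpha}) by the \algname{Sequential}-specific analogues that have already been set up in \cref{def:rna-beta,lm:rna-length-beta,lm:rna-construction-beta,lm:rna-boosting-sequential,lm:wrna-beta}. So I would argue by contraposition: suppose, for some $\epsilon > 0$, there is a (combinatorial) algorithm running in $\bigO(n^{\omega - \epsilon})$ (resp.\ $\bigO(n^{3-\epsilon})$) time that computes $\WRNA(x)$ whenever $|\Sigma| = \bigO(n^{\delta})$, $m = \bigO(\poly(n))$, and $|\algname{Sequential}(x)| = \bigO(n^{\delta})$. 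I would then build, from an input graph $G = (V, E)$, a string $v$ on which this algorithm would solve $3k$-clique too fast.

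Concretely, pick $k = \max(\lceil 3/\delta \rceil, \lceil 30/\epsilon \rceil)$ and perform the following steps. First, invoke \cref{lm:rna-small-slg} to obtain, in $\bigO(|V|^3)$ time, a weight function $w'$, an integer $\lambda$, and an SLG $H$ of size $\bigO(|V|^3)$ with $L(H) = \{u\}$, $|u| = \bigO(|V|^{k+2})$, $|u| = \Omega(|V|^k)$, $\sum_{X \in V_H}|\expgen{H}{X}| = \bigO(|V|^{k+5})$, and $\WRNA(u) \geq \lambda$ iff $G$ contains a $3k$-clique. Second, apply \cref{lm:admissible} to turn $H$ into an admissible SLG $H'$ of comparable size with $\sum_{X \in V_{H'}}|\expgen{H'}{X}| = \bigO(|V|^{k+5} \log |V|)$. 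Third, apply \cref{lm:rna-construction-beta} to build some $v \in \beta(H')$ and to extend $w'$ to a weight function $w$ on $\Sigma_{\rm all}$ as in \cref{def:rna-beta}; by \cref{lm:rna-length-beta} we have $|v| = \bigO(|V|^{k+5} \log |V|)$. Fourth, compute the offset $\delta$ of \cref{lm:wrna-beta} in $\bigO(|H'|)$ time. Fifth, run the hypothetical algorithm on $v$ to obtain $\WRNA(v)$, and recover $\WRNA(u) = (\WRNA(v) - \delta)/4$ by \cref{lm:wrna-beta}, which decides whether $G$ has a $3k$-clique.

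The verification that $v$ falls into the hypothesis of the assumed algorithm is the only step that needs any work beyond bookkeeping, and it is completely analogous to the verification done in \cref{th:rna-global}: (a) $|\Sigma_{\rm all}| = \bigO(|V|) = \bigO(|v|^\delta)$, using $|v| \geq |u| = \Omega(|V|^k)$ together with $1/k \leq \delta/3$; (b) $m = \bigO(\poly(|v|))$, by combining $m \leq 4|u| + 4\sum_i w'(u[i])$ (from \cref{def:rna-beta}) with $m' = \bigO(|V|^2)$; and (c) $|\algname{Sequential}(v)| = \bigO(|H'|) = \bigO(|V|^3) = \bigO(|v|^\delta)$, which is exactly the content of \cref{lm:rna-boosting-sequential} combined with $|V|^3 = \bigO(|v|^{\delta})$.

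The total running time of the reduction is $\bigO(|V|^{k+5}\log|V|) + \bigO(|v|^{\omega - \epsilon})$ (resp.\ $\bigO(|v|^{3-\epsilon})$), and a standard calculation, identical to the one performed at the end of \cref{th:rna-global}, simplifies this to $\bigO(|V|^{k\omega(1-\epsilon')})$ (resp.\ $\bigO(|V|^{3k(1-\epsilon')})$) for some constant $\epsilon' > 0$, using $k \geq 30/\epsilon$ to absorb the $\log|V|$ factor and the additive $|V|^{5\omega}$ (resp.\ $|V|^{15}$) overhead into the saving. This contradicts the $k$-Clique Conjecture (resp.\ Combinatorial $k$-Clique Conjecture) for $3k$. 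The only nontrivial technical content has already been established in \cref{lm:rna-boosting-sequential} (the compression bound) and \cref{lm:wrna-beta} (the exact $\WRNA$ decomposition); everything else is assembly, so I anticipate no serious obstacle beyond carefully matching parameters.
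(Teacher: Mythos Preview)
Your proposal is correct and matches the paper's proof exactly: the paper simply states that the argument is analogous to that of \cref{th:rna-global}, with \cref{lm:rna-construction-alpha,lm:rna-boosting-global,lm:wrna-alpha} replaced by \cref{lm:rna-construction-beta,lm:rna-boosting-sequential,lm:wrna-beta}, respectively. Your write-up just spells out this substitution in detail.
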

\begin{proof}
  The proof is analogous to the proof of \cref{th:rna-global}, except
  instead of
  \cref{lm:rna-construction-alpha,lm:rna-boosting-global,lm:wrna-alpha},
  we use
  \cref{lm:rna-construction-beta,lm:rna-boosting-sequential,lm:wrna-beta},
  respectively.
\end{proof}

\begin{remark}
  Note that the above lower bound for compressed computation on
  grammars obtained using \algname{Sequential} can be quite easily
  generalized to the unweighted case (\cref{sec:rna-problem}). To
  achieve this it suffices to note that the weights in our reduction
  are sufficiently small, and then use \cref{lm:rna-reduction}. We
  choose to state our results for the weighted case, however, to keep
  it consistent with the results established, e.g., in
  \cref{sec:rna-global}, where establishing the unweighted
  case is difficult without first understanding the behavior of
  global algorithms on unary strings.
\end{remark}

\subsection{Analysis of \algname{LZD}}\label{sec:rna-lzd}

\begin{definition}\label{def:lzd-rna}
	Let $G = (V, \Sigma, R, S)$ be an admissible SLG generating a string $u$. Assume $\Sigma$ is augmented with a match operation (\cref{def:match}) and let $w : \Sigma \rightarrow \Zp$ be a weight function. Let $\Sigma' = \Sigma \cup \{\dol_i : i \in [1 \dd 2|V|]\} \cup \{\dol'_i : i \in [1 \dd 2|V|]\} \cup \{\hash_1, \hash_2, \hash_3, \hash_4\}$ (assume $\dol_i \notin \Sigma$, $\dol'_i \notin \Sigma$ for every $i \in [1 \dd 2|V|], $ and $\hash_i \notin \Sigma$ for $i \in [1 \dd 4]$).
	
	By $\gamma(G)$, we denote the subset of $\Sigma'^*$ such that for every $v \in \Sigma'^*, v \in \gamma(G)$ holds if and only if there exists a sequence $(N_i)_{i \in [1 \dd |V|]}$ such that:
	\begin{itemize}
		\item $\{N_i : i \in [1 \dd |V|]\} = V$,
		\item $|\expgen{G}{N_i}| \leq |\expgen{G}{N_{i+1}}|$ holds for $i \in [1 \dd |V|)$, and
		\item $v = \hash_1 \hash_2 \bigodot_{i \in [1 \dd |V|)} \expgen{G'}{N_{i,0}}\expgen{G'}{N_{i,0}}\expgen{G''}{N'_{i,0}}\expgen{G''}{N'_{i,0}} \cdot \hash_3 \hash_4 \expgen{G'}{N_{|V|,0}}$.
	\end{itemize}
	
	Where $G' = (V',\Sigma', R', S')$ and $G'' = (V'', \Sigma', R'', S'')$ are defined as:
	
	\begin{itemize}
		\item $V' = \bigcup_{i \in [1 \dd |V|]} \{N_{i,0}, N_{i,1}, N_{i,2}\}$ is a set of $3|V|$ variables,
		\item $V'' = \bigcup_{i \in [1 \dd |V|]} \{N'_{i,0}, N'_{i,1}, N'_{i,2}\}$ is a set of $3|V|$ variables,
		\item for every $i \in [1 \dd |V|]$,
		\begin{align*}
			\rhsgen{G'}{N_{i,1}} &=
			\begin{cases}
				N_{j,0} \dol_{2i-1}
				& \text{if }A = N_j\text{ for }j \in [1 \dd |V|],\\
				A \dol_{2i-1}
				& \text{otherwise},\\
			\end{cases}\\
			\rhsgen{G'}{N_{i,2}} &=
			\begin{cases}
				N_{k,0} \dol_{2i}
				& \text{if }B = N_k\text{ for }k \in [1 \dd |V|],\\
				B \dol_{2i}
				& \text{otherwise},\\
			\end{cases}\\[3ex]
			\rhsgen{G'}{N_{i,0}} &= N_{i,1} N_{i,2},
		\end{align*}
		\begin{align*}
			\rhsgen{G''}{N'_{i,2}} &=
			\begin{cases}
				\dol'_{2i} N'_{k,0}
				& \text{if }B = N_k\text{ for }k \in [1 \dd |V|],\\
				\dol'_{2i} \match{B}
				& \text{otherwise},\\
			\end{cases}\\
			\rhsgen{G''}{N'_{i,1}} &=
			\begin{cases}
				\dol'_{2i - 1} N'_{j,0}
				& \text{if }A = N_j\text{ for }j \in [1 \dd |V|],\\
				\dol'_{2i - 1} \match{A}
				& \text{otherwise},\\
			\end{cases}\\[3ex]
			\rhsgen{G''}{N'_{i,0}} &= N'_{i,2} N'_{i,1},
		\end{align*}
		where $A,B \in V \cup \Sigma$ are such that $\rhsgen{G}{N_i} = AB$.
		
	\end{itemize}
	
	We also extend the match operation from $\Sigma$ to $\Sigma'$ such that:
	\begin{itemize}
		\item For every $i \in [1 \dd |V|], \match{\dol_i} = \dol'_i$,
		\item $\match{\hash_1} = \hash_4$, and
		\item $\match{\hash_2} = \hash_3$
	\end{itemize}
	We also extend the weight function $w$ from $\Sigma$ to $\Sigma'$ such that:
	\begin{itemize}
		\item For every $i \in [1 \dd |V|], w(\dol_i) = w(\dol'_i) = 1$,
		\item $w(\hash_1) = w(\hash_4) = 1$, and
		\item $w(\hash_2) = w(\hash_3) = c_0$
	\end{itemize}
	where $c_0 = 1 + \sum_{i \in [1 \dd |V|)} \sum_{j \in [1 \dd |x_i|]} 2w(x_i[j])$ where $x_i = \expgen{G'}{N_{i,0}}$.
\end{definition}

\begin{remark}\label{rm:lzdrna0}
  Note that by the next lemma, the value of $c_0$ is 1 more than the sum of all weights in the string
  $\bigodot_{i \in [1 \dd |V|)} \expgen{G'}{N_{i,0}}\expgen{G'}{N_{i,0}}\expgen{G''}{N'_{i,0}}\expgen{G''}{N'_{i,0}}$
  divided by 2.
\end{remark}

\begin{lemma}\label{lm:lzdstructure0}
	Given an admissible SLG $G = (V, \Sigma, R, S)$ generating a string $u$, let $v \in \gamma(G)$. Let $x_i = \expgen{G'}{N_{i,0}}$ and $y_i = \expgen{G''}{N'_{i,0}}$, then $|x_i| = |y_i|$ and $\match{x_i[j]} = y_i[n_i+1-j]$ for every $j \in [1 \dd n_i]$ where $n_i = |x_i|$ for every $i \in [1 \dd |V|)$, i.e. $y_i$ is obtained by reversing $x_i$ and replacing each character with its matching counterpart. 
\end{lemma}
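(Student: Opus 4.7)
The plan is to prove the claim by induction on the index $i \in [1 \dd |V|]$, exploiting the carefully symmetric way in which $G'$ and $G''$ are defined: $G''$ is essentially the ``reverse-match'' image of $G'$, since the order of children is swapped ($N'_{i,0} \to N'_{i,2}N'_{i,1}$ versus $N_{i,0} \to N_{i,1}N_{i,2}$), the sentinel position inside each definition is swapped from the right end to the left end, each sentinel $\dol_i$ is replaced by its match $\dol'_i$, and each terminal $A$ is replaced by $\match{A}$.

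For the base case $i = 1$, recall that the ordering satisfies $|\expgen{G}{N_1}| \leq |\expgen{G}{N_j}|$ for all $j$, and $G$ is admissible so $\rhsgen{G}{N_1} = AB$ with $A,B \in V \cup \Sigma$ and $|\expgen{G}{A}|, |\expgen{G}{B}| < |\expgen{G}{N_1}|$. Minimality forces $A, B \in \Sigma$. A direct computation then gives $x_1 = A\,\dol_1\,B\,\dol_2$ and $y_1 = \dol'_2\,\match{B}\,\dol'_1\,\match{A}$, and the required identity is verified character by character using $\match{\dol_i} = \dol'_i$ and $\match{\match{c}} = c$.

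For the inductive step, let $\rhsgen{G}{N_i} = AB$ and define $p_A = A$ if $A \in \Sigma$ or $p_A = x_j$ if $A = N_j$ (and analogously $p_B$, $q_A$, $q_B$ using the $y$-strings or $\match{\cdot}$ on terminals). Since $|\expgen{G}{A}|, |\expgen{G}{B}| < |\expgen{G}{N_i}|$, any nonterminal appearing has index $< i$, so by the inductive hypothesis (or trivially for terminals) we have $|p_A| = |q_A|$ with $\match{p_A[k]} = q_A[|p_A|+1-k]$, and the same for $B$. Unfolding the definitions yields
\[
  x_i = p_A \cdot \dol_{2i-1} \cdot p_B \cdot \dol_{2i}, \qquad
  y_i = \dol'_{2i} \cdot q_B \cdot \dol'_{2i-1} \cdot q_A.
\]
This immediately gives $|x_i| = |p_A| + |p_B| + 2 = |q_A| + |q_B| + 2 = |y_i|$, and a straightforward case split on whether position $j$ falls in the $p_A$-block, hits $\dol_{2i-1}$, falls in the $p_B$-block, or hits $\dol_{2i}$ verifies $\match{x_i[j]} = y_i[n_i+1-j]$ in each case, using only the inductive hypothesis together with $\match{\dol_{2i-1}} = \dol'_{2i-1}$ and $\match{\dol_{2i}} = \dol'_{2i}$.

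There is no real obstacle here beyond careful index bookkeeping in the case split of the inductive step; the lemma is a structural symmetry statement and the definitions of $G'$ and $G''$ are engineered precisely so the induction goes through cleanly.
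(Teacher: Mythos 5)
Your proposal is correct and follows essentially the same induction-on-$i$ argument the paper uses: unfold the definitions of $G'$ and $G''$ to get $x_i = p_A \cdot \dol_{2i-1} \cdot p_B \cdot \dol_{2i}$ and $y_i = \dol'_{2i} \cdot q_B \cdot \dol'_{2i-1} \cdot q_A$, and then apply the inductive hypothesis to the $p/q$ blocks together with $\match{\dol_t} = \dol'_t$; your $p_A, p_B, q_A, q_B$ notation just unifies the four terminal/nonterminal sub-cases that the paper spells out (or defers as ``similar'') explicitly.
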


\begin{proof}
	We prove this by induction on $i$, let $\rhsgen{G}{N_i} = AB$, if $A,B \in \Sigma$, then $x_i = A\dol_{2i-1}B\dol_{2i}$ and $y_i = \dol'_{2i}\match{B}\dol'_{2i-1}\match{A}$ and hence this follows trivially. If both $A,B$ are nonterminals with $A = N_j, B=N_k$ where $ j,k < i$, then $x_i = x_j \dol_{2i-1} x_k \dol_{2i}$ and $y_i = \dol'_{2i}y_k\dol'_{2i-1}y_j$. Here by induction hypothesis, we have:
	\begin{itemize}
		\item $\match{\dol'_{2i}} = \dol_{2i}$,
		\item $|x_k| = |y_k|$ and $\match{x_k[l]} = y_k[n_k + 1 - l]$ for $l \in [1 \dd n_k]$,
		\item $\match{\dol'_{2i-1}} = \dol_{2i-1}$, and
		\item $|x_j| = |y_j|$ and $\match{x_j[l]} = y_j[n_j + 1 - l]$ for $l \in [1 \dd n_j]$.
	\end{itemize}
	Hence we obtain $|x_i| = |y_i|$ and $\match{x_i[l]} = y_i[n_i+1-l]$ for every $l \in [1 \dd n_i]$.
	In case when $A \in \Sigma$ and $B = N_k$ or $A = N_j$ and $B \in \Sigma$ a similar argument as above follows.
	
	Therefore we conclude that $y_i$ is obtained from $x_i$ by first reversing $x_i$ and then replacing each character by its matching counterpart.
\end{proof}
\begin{lemma}\label{lm:lzdstructure1}
	Given an admissible SLG $G = (V, \Sigma, R, S)$ generating a string $u$, let $v \in \gamma(G)$, then $\WRNA(v) = 1 + c_0 + \WRNA(w') + \WRNA(\expgen{G'}{N_{|V|,0}})$ where
	
	$w' = \bigodot_{i \in [1 \dd |V|)} \expgen{G'}{N_{i,0}}\expgen{G'}{N_{i,0}}\expgen{G''}{N'_{i,0}}\expgen{G''}{N'_{i,0}}$.
\end{lemma}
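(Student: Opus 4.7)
The plan is to prove the equality by establishing the two inequalities separately, with the upper bound being the main technical challenge.

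For the lower bound $\WRNA(v) \geq 1 + c_0 + \WRNA(w') + \WRNA(\expgen{G'}{N_{|V|,0}})$, I exhibit a concrete non-crossing matching of $v$ achieving this value. Since $v = \hash_1 \hash_2 \cdot w' \cdot \hash_3 \hash_4 \cdot \expgen{G'}{N_{|V|,0}}$ and the extended match operation of \cref{def:lzd-rna} enforces $\match{\hash_1} = \hash_4$ and $\match{\hash_2} = \hash_3$, I pair $\hash_1$ with $\hash_4$ (contributing $1$), nest $(\hash_2, \hash_3)$ inside (contributing $c_0$), use an optimal folding of $w'$ within the innermost interval (contributing $\WRNA(w')$), and use an optimal folding of $\expgen{G'}{N_{|V|,0}}$ to the right of $\hash_4$ (contributing $\WRNA(\expgen{G'}{N_{|V|,0}})$); all these pairs are mutually non-crossing by construction.

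For the upper bound, I fix an optimal folding $R^*$ of $v$ and case-analyze which $\hash$-pairs it contains. Each of $\hash_1, \hash_2, \hash_3, \hash_4$ occurs exactly once in $v$ and only matches its designated counterpart, so each is either paired with its counterpart or unmatched. If $(\hash_2, \hash_3) \in R^*$, this pair forces every other pair to be nested inside it (contained in $w'$, contributing at most $\WRNA(w')$) or outside it (a possible $(\hash_1, \hash_4)$ pair contributing at most $1$, plus pairs in $\expgen{G'}{N_{|V|,0}}$ contributing at most $\WRNA(\expgen{G'}{N_{|V|,0}})$), yielding the target bound. If $(\hash_2, \hash_3) \notin R^*$ but $(\hash_1, \hash_4) \in R^*$, the outer pair confines the remaining pairs to the inner interval (where $\hash_2, \hash_3$ are inert, so pairs effectively fold only $w'$) or to $\expgen{G'}{N_{|V|,0}}$, giving value at most $1 + \WRNA(w') + \WRNA(\expgen{G'}{N_{|V|,0}})$.

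The main obstacle is the remaining case where all four $\hash_i$ are unmatched, so pairings between $w'$ and $\expgen{G'}{N_{|V|,0}}$ become admissible. Since $w(\hash_2) = c_0 < \sum_k w(w'[k]) = 2(c_0 - 1)$ whenever $c_0 \geq 2$, \cref{lm:decomposition} does not apply directly to force $(\hash_2, \hash_3)$ into $R^*$. I instead partition $R^*$ into pairs internal to $w'$, pairs internal to $\expgen{G'}{N_{|V|,0}}$, and crossing pairs. Letting $E$ denote the set of $w'$-positions used in crossing pairs, the half-weight bound (every pair uses two positions of equal weight) yields value at most $\tfrac{1}{2}(\sum_k w(w'[k]) - \sum_{p \in E} w(v[p]))$ for internal $w'$-pairs, the crossing pairs contribute $\sum_{p \in E} w(v[p])$, and internal $\expgen{G'}{N_{|V|,0}}$-pairs contribute at most $\WRNA(\expgen{G'}{N_{|V|,0}})$. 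Summing and using $\sum_{p \in E} w(v[p]) \leq \sum_k w(w'[k])$ gives an overall bound of $2c_0 - 2 + \WRNA(\expgen{G'}{N_{|V|,0}})$, which is at most the target provided $\WRNA(w') \geq c_0 - 3$. I establish this via an independent construction: by \cref{lm:lzdstructure0}, each $y_i$ is the reverse-complement of $x_i$, so within each block $x_i x_i y_i y_i$ of $w'$ the pairing $p \leftrightarrow 4 n_i + 1 - p$ for $p \in [1, 2 n_i]$ is valid and non-crossing, attaining weight $2 \sum_j w(x_i[j])$ per block and total $c_0 - 1$ across $w'$. Combining the matching bounds yields the claimed equality.
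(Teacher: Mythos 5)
Your proof is correct, and it actually takes a more careful route than the paper's own argument. The paper proves this lemma by directly invoking \cref{lm:decomposition} with $a = \hash_2$, $y = w'$, but \cref{lm:decomposition}'s hypothesis is $w(a) > \sum_i w(y[i])$, whereas the paper only verifies $c_0 > \tfrac{1}{2}\sum_i w(w'[i])$; since $c_0 = 1 + \tfrac{1}{2}\sum_i w(w'[i])$, the actual hypothesis $c_0 > \sum_i w(w'[i]) = 2(c_0-1)$ fails whenever $c_0 \ge 2$. You explicitly identify this and replace the black-box invocation with a from-scratch case analysis on which $\hash$-pairs the optimal folding $R^*$ contains. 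The two easy cases ($(\hash_2,\hash_3)\in R^*$, or only $(\hash_1,\hash_4)\in R^*$) are handled by standard nesting arguments, and the hard ``all $\hash$'s unmatched'' case is resolved by the half-weight accounting together with the explicit reverse-complement pairing showing $\WRNA(w') = c_0 - 1$. This is precisely the structural fact that makes the decomposition go through despite $c_0$ not exceeding the full weight of $w'$: your argument in effect proves a refined version of \cref{lm:decomposition} that works when $y$ admits a perfect folding and $w(a) \ge \tfrac{1}{2}\sum_i w(y[i])$. So what your route buys is a proof that is actually valid as written, whereas the paper's cited sufficient condition is not met under the paper's own definition of $c_0$.
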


\begin{proof}
	By definition of $c_0$, we have $w(\hash_2) = w(\hash_3) = c_0 > \frac{\sum_{i \in [1 \dd |w'|]}w(w'[i])}{2}$, we can apply \cref{lm:decomposition} to $v$ with $x = \hash_1, a = \hash_2, y = w', b = \hash_3$, and $z = \hash_4 \cdot \expgen{G'}{N_{|V|,0}}$. This gives us $\WRNA(v) = c_0 + \WRNA(w') + \WRNA(\hash_1 \hash_4 \cdot \expgen{G'}{N_{|V|,0}})$, now applying \cref{lm:decomposition} again to $\hash_1 \hash_4 \cdot \expgen{G'}{N_{|V|,0}}$ with $x = y = \epsilon, a = \hash_1, b = \hash_4$ and $z = \expgen{G'}{N_{|V|,0}}$, we obtain $\WRNA(v) = c_0 + \WRNA(w') + 1 + \WRNA(\expgen{G'}{N_{|V|,0}})$.
\end{proof}

\begin{lemma}\label{lm:lzdstructure2}
	Given an admissible SLG $G = (V, \Sigma, R, S)$ generating a string $u$, let $v \in \gamma(G)$ and let $w = \bigodot_{i \in [1 \dd |V|)} \expgen{G'}{N_{i,0}}\expgen{G'}{N_{i,0}}\expgen{G''}{N'_{i,0}}\expgen{G''}{N'_{i,0}}$. Then we must have $\WRNA(w') = c_0 - 1$.
\end{lemma}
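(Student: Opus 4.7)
My plan is to prove the claim by establishing matching upper and lower bounds of $c_0 - 1$ on $\WRNA(w')$, leveraging the palindrome-like structure of each block $\expgen{G'}{N_{i,0}}\expgen{G'}{N_{i,0}}\expgen{G''}{N'_{i,0}}\expgen{G''}{N'_{i,0}}$ guaranteed by \cref{lm:lzdstructure0}. Throughout I write $x_i = \expgen{G'}{N_{i,0}}$, $y_i = \expgen{G''}{N'_{i,0}}$, and $n_i = |x_i| = |y_i|$, and I denote by $W(s)$ the total weight $\sum_{j=1}^{|s|} w(s[j])$ of a string $s$.

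First I would establish the general upper bound $\WRNA(s) \leq \tfrac{1}{2} W(s)$ for any string $s$: every valid pair $(i,j)$ requires $\match{s[i]} = s[j]$, and since $w$ is match-invariant we have $w(s[i]) = w(s[j])$, so each pair's contribution $w(s[i])$ equals half the weight of the two positions it consumes. Applied to $w'$, using that $w(y_i[k]) = w(\match{x_i[n_i+1-k]}) = w(x_i[n_i+1-k])$ gives $W(y_i) = W(x_i)$, hence $W(w') = \sum_{i \in [1 \dd |V|)} \bigl(2W(x_i) + 2W(y_i)\bigr) = 4\sum_{i \in [1 \dd |V|)} W(x_i) = 2(c_0 - 1)$ by \cref{rm:lzdrna0}. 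This yields $\WRNA(w') \leq c_0 - 1$.

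Next I would exhibit an explicit non-crossing matching achieving this bound. For each block $B_i = x_i x_i y_i y_i$ (located at positions $[p_i + 1 \dd p_i + 4n_i]$ within $w'$, for an appropriate offset $p_i$), I would define two families of pairs: an \emph{outer} family matching the $k$-th character of the first $x_i$ with the $(n_i{+}1{-}k)$-th character of the second $y_i$, and an \emph{inner} family matching the $k$-th character of the second $x_i$ with the $(n_i{+}1{-}k)$-th character of the first $y_i$, for $k \in [1 \dd n_i]$. By \cref{lm:lzdstructure0}, $y_i[n_i{+}1{-}k] = \match{x_i[k]}$, so every pair is a legal match. Translating to global positions in $B_i$, the outer pairs are $(p_i + k,\, p_i + 4n_i + 1 - k)$ and the inner pairs are $(p_i + n_i + k,\, p_i + 3n_i + 1 - k)$; both families are perfectly nested palindromic matchings, and the inner family lies strictly inside the outer, so no two pairs in $B_i$ cross. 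Since each block is a contiguous factor of $w'$ and all its pairs are confined to that block, pairs from different blocks cannot cross either, so the union over $i$ is a valid RNA folding of $w'$.

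The total contribution of this matching is $\sum_{i \in [1 \dd |V|)} 2W(x_i)$, because each block contributes $\sum_k w(x_i[k])$ from the outer family (the first $x_i$) plus $\sum_k w(x_i[k])$ from the inner family (the second $x_i$). By \cref{rm:lzdrna0} this sum equals $c_0 - 1$, matching the upper bound. Combining the two directions yields $\WRNA(w') = c_0 - 1$. I do not expect a serious obstacle here: the only subtlety is careful bookkeeping of positions when translating the palindromic pair definitions into global indices and verifying nesting, but this is purely mechanical given the reverse-and-match relationship supplied by \cref{lm:lzdstructure0}.
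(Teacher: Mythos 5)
Your proposal is correct and follows the same strategy as the paper: bound $\WRNA(w')$ above by half the total weight and then exhibit the canonical palindromic matching of each block $x_ix_iy_iy_i$ (pairing position $j$ with position $4n_i+1-j$, which is exactly the union of your outer and inner families) to show every character is matched. The paper states this more tersely via the identity $\match{w'_i[j]} = w'_i[|w'_i|+1-j]$, but the underlying matching and accounting are identical.
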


\begin{proof}
	Firstly, we notice that for any string, the maximum value of Weighted RNA Folding must be at-most half the sum of all weights in the string. Hence $\WRNA(w') \leq \frac{1}{2} \sum_{i \in [1 \dd |V|)} \sum_{j \in [1 \dd |w'_i|]}w(w'_i[j])$ where $w'_i = \expgen{G'}{N_{i,0}}\expgen{G'}{N_{i,0}}\expgen{G''}{N'_{i,0}}\expgen{G''}{N'_{i,0}}$. Now we show that this inequality as actually tight. To show that, we prove that there is a RNA folding for $w'$ where every character is matched, moreover, we show that there exists a solution where every character within $w'_i$ is matched within $w'_i$ itself. This follows from the fact that $w'_i = x_i \cdot x_i \cdot y_i \cdot y_i$ and hence $\match{w'_i[j]} = w'_i[|w'_i| + 1 - j]$ for every $j \in [1 \dd |w'_i|]$ where $x_i,y_i$ are defined as in \cref{lm:lzdstructure0}. 
	
	Therefore $\WRNA(w') = (\sum_{i \in [1 \dd |V| - 1)} \sum_{j \in [1 \dd n_i]}2w(x_i[j])) = c_0 - 1$.
\end{proof}

\begin{lemma}\label{lm:lzdstructure3}
	Given an admissible SLG $G = (V, \Sigma, R, S)$ generating a string $u$, let $v \in \gamma(G)$, then $\WRNA(v) = 2c_0 + \WRNA(u)$.
\end{lemma}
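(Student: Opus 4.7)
The plan is to combine the two structural lemmas that immediately precede the statement, together with a short argument showing that $\WRNA(\expgen{G'}{N_{|V|,0}}) = \WRNA(u)$. Applying \cref{lm:lzdstructure1} first, I would rewrite
\[
  \WRNA(v) = 1 + c_0 + \WRNA(w') + \WRNA(\expgen{G'}{N_{|V|,0}}),
\]
where $w' = \bigodot_{i \in [1 \dd |V|)} \expgen{G'}{N_{i,0}}\expgen{G'}{N_{i,0}}\expgen{G''}{N'_{i,0}}\expgen{G''}{N'_{i,0}}$. Then by \cref{lm:lzdstructure2}, $\WRNA(w') = c_0 - 1$, and substituting gives
\[
  \WRNA(v) = 1 + c_0 + (c_0 - 1) + \WRNA(\expgen{G'}{N_{|V|,0}}) = 2c_0 + \WRNA(\expgen{G'}{N_{|V|,0}}).
\]
So the whole task reduces to showing that $\WRNA(\expgen{G'}{N_{|V|,0}}) = \WRNA(u)$.

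Next I would verify that $N_{|V|} = S$. Since $G$ is admissible, every nonterminal occurs in the parse tree of $S$, so for every $X \in V \setminus \{S\}$ we have $|\expgen{G}{X}| < |\expgen{G}{S}|$; combined with the ordering assumption $|\expgen{G}{N_1}| \leq \dots \leq |\expgen{G}{N_{|V|}}|$ from \cref{def:lzd-rna}, this forces $N_{|V|} = S$. Consequently $\expgen{G'}{N_{|V|,0}}$ is obtained from $u = \expgen{G}{S}$ by inserting sentinel symbols from $\{\dol_i\}_{i \in [1 \dd 2|V|]}$ (analogously to \cref{lm:beta-length} and \cref{ob:beta}, which can be invoked or mirrored here by induction on the recursive definition of $G'$).

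Finally I would observe that none of the inserted sentinels can participate in a matching pair inside $\expgen{G'}{N_{|V|,0}}$: by \cref{def:lzd-rna}, the match of $\dol_i$ is $\dol'_i$, and symbols of the form $\dol'_i$ do not appear in $\expgen{G'}{N_{|V|,0}}$ at all (they live only in the $G''$-derived part of $v$). Hence any optimal RNA folding of $\expgen{G'}{N_{|V|,0}}$ only matches positions corresponding to symbols of $u$, and conversely any folding of $u$ lifts to a non-crossing folding of $\expgen{G'}{N_{|V|,0}}$ by mapping positions through the inclusion $u \hookrightarrow \expgen{G'}{N_{|V|,0}}$; both directions preserve weights since the extra symbols are irrelevant. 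Therefore $\WRNA(\expgen{G'}{N_{|V|,0}}) = \WRNA(u)$, which plugged back yields $\WRNA(v) = 2c_0 + \WRNA(u)$.

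The only delicate step is the claim that $\dol'_i$ does not appear in $\expgen{G'}{N_{|V|,0}}$ and that the inserted $\dol_i$'s in $\expgen{G'}{N_{|V|,0}}$ therefore cannot be matched; the rest is routine. I expect this to be straightforward to justify by an easy induction on $i$ mirroring the argument used for \cref{lm:dollars} and \cref{lm:one-dollar}, noting that the grammar $G'$ of \cref{def:lzd-rna} uses only $\dol_i$'s on its right-hand sides (never $\dol'_i$'s).
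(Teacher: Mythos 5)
Your proof is correct and follows essentially the same route as the paper: combine \cref{lm:lzdstructure1} and \cref{lm:lzdstructure2} to reduce the claim to $\WRNA(\expgen{G'}{N_{|V|,0}}) = \WRNA(u)$, identify $N_{|V|} = S$ via admissibility and the ordering, and then argue that the inserted $\dol$-sentinels cannot be matched inside $\expgen{G'}{N_{|V|,0}}$ because their match partners $\dol'_i$ never occur there. The paper's own proof is just slightly terser on the final step, stating directly that the $\dol$-symbols go unmatched and that deleting them yields $u$ by induction, which is exactly the argument you spell out.
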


\begin{proof}
	Combining \cref{lm:lzdstructure1} and \cref{lm:lzdstructure2}, we have $\WRNA(v) = 1 + c_0 + c_0 - 1 + \WRNA(z)$ where $z = \expgen{G'}{N_{|V|,0}}$. The string $z$ only consists of characters from $\Sigma$ and $\{\dol_i | i \in [1 \dd |V|]\}$, in particular, there is no $\dol'_i$ in $z$. Hence every $\dol_i$ in $z$ goes unmatched in any RNA folding of $z$. let $z'$ be the subsequence of $z$ obtained by deleting every $\dol_i, i \in [1 \dd |V|]$. Now $\WRNA(z) = \WRNA(z')$ since the only characters matched in an optimal RNA folding for $z$ are the non $\dol$ symbols. Next we prove that $z' = u$. This follows by an induction argument, i.e. for any $i \in [1 \dd |V|]$, the subsequence of $\expgen{G'}{N_{i,0}}$ obtained by deleting all $\dol$ symbols is identical to $\expgen{G}{N_i}$. $N_{|V|}$ is the largest nonterminal in $G$, hence it must be the starting nonterminal, therefore we can conclude $z' = u$ and hence $\WRNA(v) = 2c_0 + \WRNA(u)$.
\end{proof}

Thus we have proved how to calculate $\WRNA(u)$ if we can calculate $\WRNA(v)$. Now we prove that the grammar \algname{LZD} produces for $v \in \gamma(G)$ has size $O(|G|)$ and hence we can also calculate values like $c_0$ fast enough.

\begin{lemma}\label{lm:lzdrnasize}
	Given an admissible SLG $G = (V, \Sigma, R, S)$ generating a string $u$, let $v \in \gamma(G)$. \algname{LZD} outputs a grammar of size at-most $9|G|$.
\end{lemma}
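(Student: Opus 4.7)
The plan is to analyze \algname{LZD}'s execution on $v \in \gamma(G)$ step-by-step, extending the factor-counting argument developed in the proof of \cref{lm:lzd}. The string $v$ decomposes into four clearly delimited segments: the prefix $\hash_1\hash_2$, the middle $\bigodot_{i \in [1 \dd |V|)} \expgen{G'}{N_{i,0}}\expgen{G'}{N_{i,0}}\expgen{G''}{N'_{i,0}}\expgen{G''}{N'_{i,0}}$, the interlude $\hash_3\hash_4$, and the suffix $\expgen{G'}{N_{|V|,0}}$. I will show that \algname{LZD} produces a predictable number of factors in each segment, with the fresh sentinel symbols ensuring clean boundaries between segments.

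After fixing the ordering $(N_i)_{i \in [1 \dd |V|]}$ corresponding to $v$, I will argue that $\hash_1\hash_2$ yields a single factor (both symbols being fresh terminals, \algname{LZD} takes $f_{i_1}=\hash_1$ and $f_{i_2}=\hash_2$). Then by induction on $k \in [0 \dd |V|-1]$, I will show that after processing $\hash_1\hash_2 \cdot \bigodot_{i \in [1 \dd k]}\expgen{G'}{N_{i,0}}\expgen{G'}{N_{i,0}}\expgen{G''}{N'_{i,0}}\expgen{G''}{N'_{i,0}}$, exactly $1+6k$ factors have been produced, the last $6k$ corresponding bijectively to $\{N_{i,0}, N_{i,1}, N_{i,2}, N'_{i,0}, N'_{i,1}, N'_{i,2}\}_{i \in [1 \dd k]}$. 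The $G'$-side of the induction step reproduces the analysis in \cref{lm:lzd} verbatim (three factors $\expgen{G'}{N_{i,1}}$, $\expgen{G'}{N_{i,2}}$, $\expgen{G'}{N_{i,0}}$); the $G''$-side is symmetric, since $\rhsgen{G''}{N'_{i,0}} = N'_{i,2} N'_{i,1}$ and the defining sentinels $\dol'_{2i}, \dol'_{2i-1}$ merely appear at the leading rather than trailing end of each rule, so three further factors $\expgen{G''}{N'_{i,2}}$, $\expgen{G''}{N'_{i,1}}$, $\expgen{G''}{N'_{i,0}}$ are added.

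The main technical claim underpinning both halves is that \algname{LZD}'s greedy longest-match parsing never merges a factor across a block boundary or across the $G'/G''$ boundary. This reduces to the observation that every previously produced factor contains a unique sentinel drawn from one of the disjoint sets $\{\dol_j : j \leq 2i\}$ or $\{\dol'_j : j \leq 2i\}$; when \algname{LZD} begins parsing the $(i{+}1)$st $G'$-block (or the $G''$-half of block $i$), the longest existing factor matching a prefix cannot extend past the first fresh sentinel, because any attempted extension would require a yet-unseen $\dol$ or $\dol'$ symbol. Establishing this disjointness carefully---including the corner cases where $A$ or $B$ in $\rhsgen{G}{N_i}$ is a terminal rather than a nonterminal---will be the main obstacle, but should follow the same structural argument already used in \cref{lm:lzd}.

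Finally, $\hash_3\hash_4$ contributes one more factor by the same reasoning as $\hash_1\hash_2$, and processing $\expgen{G'}{N_{|V|,0}} = \expgen{G'}{N_{|V|,1}}\expgen{G'}{N_{|V|,2}}$ yields exactly two factors: each is an existing $\expgen{G'}{N_{\cdot,0}}$ (or a single terminal, when the corresponding symbol in $\rhsgen{G}{N_{|V|}}$ is terminal) concatenated with the fresh sentinel $\dol_{2|V|-1}$ or $\dol_{2|V|}$, and no further factors follow since $v$ terminates there. Summing gives $m = 1 + 6(|V|-1) + 1 + 2 = 6|V|-2$ factors. Since an \algname{LZD} factorization with $m$ factors yields a grammar of size $3m$ (\cref{sec:algs-nonglobal}) and admissibility of $G$ gives $|G| = 2|V|$, we conclude $|\algname{LZD}(v)| \leq 3(6|V|-2) < 18|V| = 9|G|$, as required.
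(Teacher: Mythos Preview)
Your proposal is correct and follows essentially the same approach as the paper's proof: both analyze the \algname{LZD} factorization of $v$ segment by segment, invoke the analysis of \cref{lm:lzd} for the $G'$-blocks, treat the $G''$-blocks symmetrically, and arrive at the same factor count $6|V|-2$ and grammar size $18|V|-6 = 9|G|-6 < 9|G|$. The paper likewise handles $\hash_1\hash_2$ and $\hash_3\hash_4$ as single factors and the trailing $\expgen{G'}{N_{|V|,0}}$ as two factors, matching your breakdown exactly.
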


\begin{proof}
	This proof proceeds exactly like the proof for \cref{lm:lzd}, hence extremely formal details and edge-cases are omitted. %

	The first step of the \algname{LZD} algorithm looks at $\hash_1\hash_2$ and creates one nonterminal corresponding to it.
	
	Now we show that after every $1 + 6i$ steps for $i \leq |V|-1$, the grammar produced has size $3 + 18i$. In particular, lets denote $G_i$ as the first $6i$ nonterminals in the combination of $G'$ and $G''$ defined in \cref{def:lzd-rna}, i.e. $G_i = (V_i,\Sigma',R_i,S_i)$ where:
	\begin{itemize}
		\item $V_i = \{S_i\} \cup \{N_{i,0},N_{i,1},N_{i,2},N'_{i,0},N'_{i,1},N'_{i,2}\} \cup N_0$,
		\item for every $j \in [1 \dd i]$ and $b \in \{0,1,2\}$, it holds that $\rhsgen{G_i}{N_{j,b}} = \rhsgen{G'}{N_{j,b}}$ and $\rhsgen{G_i}{N'_{j,b}} = \rhsgen{G''}{N'_{j,b}}$,
		\item $\rhsgen{G_i}{N_0} = \hash_1\hash_2$
		\item $\rhsgen{G_i}{S_i} = N_0 \bigodot_{j \in [1 \dd i]} N_{j,1}N_{j,2}N_{j,0}N'_{j,2}N'_{j,1}N'_{j,0}$
	\end{itemize}
	We show using induction that after $1 + 6i$ steps, \algname{LZD} produces a grammar isomorphic to $G_i$ for every $1 \leq i \leq |V|-1$.
	
	After $1$ step, we have a grammar $G_0$ which generates $\hash_1\hash_2$.
	Assume that after $1 + 6(i-1)$ steps, \algname{LZD} produced a grammar isomorphic to $G_{i-1}$ generating the string:
	
	$\hash_1 \hash_2 \bigodot_{j \in [1 \dd i-1]} \expgen{G'}{N_{j,0}}\expgen{G'}{N_{j,0}}\expgen{G''}{N'_{j,0}}\expgen{G''}{N'_{j,0}}$.
	
	Now a prefix of the unprocessed string looks like $\expgen{G'}{N_{i,0}}\expgen{G'}{N_{i,0}}\expgen{G''}{N'_{i,0}}\expgen{G'}{N'_{i,0}} \dd$
	
	Recall that \algname{LZD} at any step picks the largest factor $f_l$ as a which is a prefix of the unprocessed string and then the largest factor $f_r$ on the remaining prefix and introduces a nonterminal $f = f_lf_r$ to the grammar. Also assume $\rhsgen{G}{N_i} = N_jN_k$ (The case when $\rhsgen{G}{N_i} = AB, A,B \in \Sigma$ is analogous).
	
	Hence the next $6$ steps of \algname{LZD} are as follows:
	\begin{itemize}
		\item New nonterminal $f_1$ introduced with $\rhs{f_1} = N_{j,0}\dol_{2i-1}$,
		\item New nonterminal $f_2$ introduced with $\rhs{f_2} = N_{k,0}\dol_{2i}$,
		\item New nonterminal $f_3$ introduced with $\rhs{f_3} = f_1f_2$,
		\item New nonterminal $f_4$ introduced with $\rhs{f_4} = \dol'_{2i}N'_{k,0}$,
		\item New nonterminal $f_5$ introduced with $\rhs{f_5} = \dol'{2i-1}N'_{j,0}$, and
		\item New nonterminal $f_6$ introduced with $\rhs{f_6} = f_4f_5$.
	\end{itemize}
	
	Mapping the nonterminals $f_1,f_2,f_3,f_4,f_5,f_6$ to $N_{i,1},N_{i,2},N_{i,0},N'_{i,2},N'_{i,1},N'_{i,0}$ respectively now gives us the grammar $G_i$. A total of 6 nonterminals were added to the right-hand side of the starting nonterminal, since each of them has a definition of size 2, a total increment of 18 in size of the grammar was observed.
	
	Now at the end of $1 + 6(|V|-1)$, the remaining prefix of the string which is unprocessed is $\hash_3\hash_4 \cdot \expgen{G'}{N_{|V|,0}}$. Assume $\rhsgen{G}{N_{|V|}} = N_jN_k$.
	
	This is processed as follows:
	\begin{itemize}
		\item New nonterminal $f_1$ introduced with $\rhs{f_1} = \hash_3\hash_4$,
		\item New nonterminal $f_2$ introduced with $\rhs{f_2} = N_{j,0}\dol_{2|V|-1}$, and
		\item New nonterminal $f_3$ introduced with $\rhs{f_3} = N_{k,0}\dol_{2|V|}$.
	\end{itemize}

	Hence the whole algorithm terminated in $6|V| - 2$ steps producing a grammar of size $18|V|-6 = 9|G|-6$.
\end{proof}

\begin{lemma}\label{lm:lzdrnatime}
	The value $c_0$ from \cref{lm:lzdstructure3} can be calculated in $O(|G|)$ time.
\end{lemma}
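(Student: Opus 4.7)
The plan is to compute $c_0 = 1 + 2\sum_{i \in [1 \dd |V|)} W(N_{i,0})$, where for any nonterminal $X$ of $G'$ we denote $W(X) = \sum_{j \in [1 \dd |\expgen{G'}{X}|]} w(\expgen{G'}{X}[j])$. Since $G'$ (as defined in \cref{def:lzd-rna}) has $3|V|$ nonterminals, each with a right-hand side of length exactly two, the total size of $G'$ is $O(|V|) = O(|G|)$, so any standard bottom-up computation on $G'$ fits in our time budget.

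First, I would construct $G'$ explicitly from $G$ in $O(|G|)$ time; this is immediate from \cref{def:lzd-rna} since for each of the $|V|$ nonterminals $N_i$ of $G$ we emit a constant number of rules. Next, I would topologically order the nonterminals of $G'$ consistent with the linear order of the underlying SLG (the ordering induced by $|\expgen{G}{N_i}|$ already works, with $N_{i,1}, N_{i,2}$ placed just before $N_{i,0}$); this is done in $O(|G'|) = O(|G|)$ time by a standard SLG traversal. Then, processing nonterminals in this order, I compute $W(X)$ using the recurrences $W(N_{i,1}) = W(N_{j,0}) + w(\dol_{2i-1})$ (or $w(A) + w(\dol_{2i-1})$ when $A \in \Sigma$), $W(N_{i,2}) = W(N_{k,0}) + w(\dol_{2i})$ (resp.\ $w(B) + w(\dol_{2i})$), and $W(N_{i,0}) = W(N_{i,1}) + W(N_{i,2})$, where $A, B$ are such that $\rhsgen{G}{N_i} = AB$. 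Each step uses values already computed (since the corresponding $N_{j,0}, N_{k,0}$ come earlier in the order), and runs in $O(1)$ time, giving $O(|G|)$ total.

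Finally, I sum $2W(N_{i,0})$ over $i \in [1 \dd |V|)$ and add $1$, which takes $O(|V|) = O(|G|)$ additional time. By \cref{rm:lzdrna0}, this is exactly $c_0$. I would also note that $c_0$ fits into $O(1)$ machine words under the standard assumption that the total weight of $v$ does (which is implicit throughout the section, as otherwise we cannot even write down the alphabet symbols $\hash_2, \hash_3$).

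The only mild obstacle is making sure the recurrence for $W$ handles both the ``$A = N_j$'' and ``$A \in \Sigma$'' cases uniformly; this is purely bookkeeping and poses no real difficulty, since each rule of $G'$ has a known fixed form dictated by \cref{def:lzd-rna}.
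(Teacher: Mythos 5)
Your proposal takes essentially the same approach as the paper: both compute, for each $N_{i,0}$, the total weight of its expansion via a bottom-up dynamic program over the nonterminals ordered by expansion size, exploiting the constant-size rules of $G'$ to get $O(1)$ work per nonterminal. Your factorization through $W(N_{i,1})$ and $W(N_{i,2})$ is a cosmetic refinement of the paper's direct recurrence $q_i = q_j + q_k + 2$, and your word-size remark plays the role of the paper's appeal to $|\expgen{G'}{N_{i,0}}| \leq 3|\expgen{G}{N_i}|$.
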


\begin{proof}
	Recall that $c_0 = 1 + 2\times \sum_{i \in [1 \dd |V|)} \sum_{j \in [1 \dd |x_i|]} w(x_i[j])$ where $x_i = \expgen{G'}{N_{i,0}}$. Lets denote $q_i = \sum_{j \in [1 \dd |x_i|]}w(x_i[j])$ where $x_i = \expgen{G'}{N_{i,0}}$. To calculate $q_i$, lets assume $\rhsgen{G}{N_i} = N_jN_k$ (In case $\rhsgen{G}{N_i} = AB, A,B \in \Sigma$, $q_i$ can be calculated in $O(1)$ time by just looking at $w(A)$ and $w(B)$). Now since $\expgen{G'}{N_{i,0}} = \expgen{G'}{N_{j,0}} \dol_{2i-1} \expgen{G'}{N_{k,0}} \dol_{2i}$, we must have $q_i = q_j + 1 + q_k + 1$ (Recall $w(\dol_{2i}) = w(\dol_{2i-1}) = 1$). Hence processing the nonterminals in increasing order of size and maintaining the $q_i$ values along with the fact that $|\expgen{G'}{N_{i,0}}| \leq 3|\expgen{G}{N_i}|$(\cref{lm:beta-length}) implying the summation doesn't blow up lets us compute $c_0$ in $O(|G|)$ time.
	
\end{proof}

\begin{lemma}\label{lm:lzdrnasize2}
	Given an admissible SLG $G = (V, \Sigma, R, S)$ generating a string $u$, $\Sigma$ is augmented with a match operation along with a weight function $w : \Sigma \rightarrow \Zp$. We can compute a string $v \in \gamma(G)$ and extend the weight function $w$ as described by \cref{def:lzd-rna} in $O(\sum_{X \in V}(\expgen{G}{X}))$ time.
\end{lemma}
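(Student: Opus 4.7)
The plan is to follow the blueprint already used for the $\alpha$ and $\beta$ variants (cf.\ \cref{lm:construction-alpha,lm:construction-beta,lm:rna-construction-alpha,lm:rna-construction-beta}), adapting it to the more elaborate structure of $\gamma(G)$ in \cref{def:lzd-rna}.

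First, I would preprocess $G$ so that all nonterminals are identified with consecutive positive integers. A single pass yields a topological order of the implicit grammar DAG in $\bigO(|G|)$ time, after which I compute $|\expgen{G}{X}|$ for every $X\in V$ bottom-up, again in $\bigO(|G|)$. I then sort the nonterminals by their expansion length using radix sort; since the keys are bounded by $\sum_{X\in V}|\expgen{G}{X}|$, which by assumption fits in a machine word, this step runs in $\bigO(\sum_{X\in V}|\expgen{G}{X}|)$ time and produces the required ordering $(N_i)_{i\in[1\dd|V|]}$.

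Next, I build the auxiliary grammars $G'$ and $G''$ from \cref{def:lzd-rna} by a straightforward linear scan of the rules of $G$, introducing the fresh sentinels $\dol_i,\dol'_i$ and, for each $i$, the three new variables $N_{i,0},N_{i,1},N_{i,2}$ (resp.\ $N'_{i,0},N'_{i,1},N'_{i,2}$) with the definitions prescribed in \cref{def:lzd-rna}. This takes $\bigO(|G|)$ time. I would then materialize each of the strings $\expgen{G'}{N_{i,0}}$ and $\expgen{G''}{N'_{i,0}}$ by a single bottom-up pass over these auxiliary grammars; by \cref{lm:beta-length}\eqref{lm:beta-length-it-1}, $|\expgen{G'}{N_{i,0}}|=3|\expgen{G}{N_i}|-2$, and an analogous identity holds for $\expgen{G''}{N'_{i,0}}$ (this is essentially \cref{lm:lzdstructure0}), so the total length of all expansions I materialize is $\bigO(\sum_{X\in V}|\expgen{G}{X}|)$. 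Finally I concatenate these pieces together with the sentinels $\hash_1,\hash_2,\hash_3,\hash_4$ exactly as in the formula
\[
v=\hash_1\hash_2\bigodot_{i\in[1\dd|V|)}\expgen{G'}{N_{i,0}}\expgen{G'}{N_{i,0}}\expgen{G''}{N'_{i,0}}\expgen{G''}{N'_{i,0}}\cdot\hash_3\hash_4\cdot\expgen{G'}{N_{|V|,0}},
\]
which produces $v\in\gamma(G)$ in $\bigO(|v|)=\bigO(\sum_{X\in V}|\expgen{G}{X}|)$ time.

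To extend $w$, the assignments $w(\dol_i)=w(\dol'_i)=1$ and $w(\hash_1)=w(\hash_4)=1$ are immediate. The only nontrivial value is $w(\hash_2)=w(\hash_3)=c_0$, where $c_0=1+2\sum_{i\in[1\dd|V|)}\sum_{j\in[1\dd|x_i|]}w(x_i[j])$ with $x_i=\expgen{G'}{N_{i,0}}$. By \cref{lm:lzdrnatime}, $c_0$ can be computed in $\bigO(|G|)$ time using the recurrence $q_i=q_j+q_k+2$ derived from $\rhsgen{G}{N_i}=N_jN_k$ (with analogous base cases when either child is terminal), processed in the topological order already computed. Adding these $\bigO(|G|)$ and $\bigO(\sum_{X\in V}|\expgen{G}{X}|)$ pieces gives the claimed runtime.

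The only mildly delicate point is the radix-sort step: I need to justify that the keys $|\expgen{G}{X}|$ can be sorted in time linear in $\sum_{X\in V}|\expgen{G}{X}|$ rather than, say, $\bigO(|V|\log|V|)$, but this is standard given that all keys fit in a single machine word and that their sum is available as an upper bound. Beyond that, the construction is essentially a routine adaptation of the constructions in \cref{lm:construction-alpha,lm:construction-beta,lm:rna-construction-alpha,lm:rna-construction-beta}, so I do not anticipate any other serious obstacle.
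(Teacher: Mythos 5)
Your proposal is correct and follows essentially the same approach as the paper, which likewise bounds $|v|$ by $\bigO(\sum_{X\in V}|\expgen{G}{X}|)$ using $|\expgen{G'}{N_{i,0}}|\leq 3|\expgen{G}{N_i}|$ and $|\expgen{G''}{N'_{i,0}}|=|\expgen{G'}{N_{i,0}}|$, and then invokes \cref{lm:lzdrnatime} to compute $c_0$ in $\bigO(|G|)$ time. You simply spell out in more detail the bookkeeping (topological sort, radix sort by expansion length, bottom-up materialization) that the paper's one-paragraph proof leaves implicit, mirroring the blueprint of \cref{lm:construction-alpha,lm:construction-beta,lm:rna-construction-alpha,lm:rna-construction-beta}.
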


\begin{proof}
	Since $|\expgen{G'}{N_{i,0}}| \leq 3|\expgen{G}{N_i}|$ and $|\expgen{G''}{N'_{i,0}}| = |\expgen{G'}{N_{i,0}}|$, we conclude $|v| \leq 12\times \sum_{X \in V}|\expgen{G}{X}|$. The value $c_0$ was calculated in $O(|G|)$ time hence extending $w$ and the matching to $\Sigma'$ also took $O(|G|)$ time.
\end{proof}

\begin{theorem}\label{th:rna-lzd}
	Assuming the $k$-Clique Conjecture (resp.\ Combinatorial
	$k$-Clique Conjecture), there is no algorithm (resp.\ combinatorial
	algorithm) that, given $x \in \Sigma^{n}$ (where $\Sigma$ is
	augmented with a match operation) and a weight function $w : \Sigma
	\rightarrow [0 \dd m]$ such that $|\Sigma| = \bigO(n^{\delta})$, $m
	= \bigO(\poly(n))$, and $|\algname{LZD}(x)| = \bigO(n^{\delta})$,
	computes $\WRNA(x)$ in $\bigO(n^{\omega - \epsilon})$
	(resp.\ $\bigO(n^{3 - \epsilon})$) time, for any $\epsilon > 0$.
\end{theorem}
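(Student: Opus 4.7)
The plan is to follow the same template that was used for \cref{th:rna-global} and \cref{th:rna-sequential}, substituting the \algname{LZD}-specific lemmas established in \cref{sec:rna-lzd} for the corresponding \algname{Sequential}/global ones. Concretely, I would argue by contraposition: suppose some $\epsilon > 0$ and an algorithm that, for every $x \in \Sigma^n$ with $|\Sigma| = \bigO(n^\delta)$, $m = \bigO(\poly(n))$, and $|\algname{LZD}(x)| = \bigO(n^\delta)$, computes $\WRNA(x)$ in $\bigO(n^{\omega-\epsilon})$ (resp.\ $\bigO(n^{3-\epsilon})$) time, and I would show this refutes the $k$-Clique (resp.\ Combinatorial $k$-Clique) Conjecture for $k = \max(\lceil 3/\delta \rceil, \lceil 30/\epsilon \rceil)$.

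Given an undirected graph $G = (V, E)$, the first steps are identical to those in the proof of \cref{th:rna-global}: invoke \cref{lm:rna-small-slg} to build in $\bigO(|V|^3)$ time an SLG $H$, a weight function $w'$, and a threshold $\lambda$ with $L(H) = \{u\}$, $|u| = \Theta(|V|^{k+2})$, $|u| = \Omega(|V|^k)$, $\sum_{X\in V_H}|\expgen{H}{X}| = \bigO(|V|^{k+5})$, and $\WRNA(u) \geq \lambda$ iff $G$ contains a $3k$-clique; then apply \cref{lm:admissible} to obtain an admissible $H'$ with $|H'| = \bigO(|V|^3)$ and $\sum_{X\in V_{H'}}|\expgen{H'}{X}| = \bigO(|V|^{k+5}\log|V|)$. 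Next I would invoke \cref{lm:lzdrnasize2} to build some $v \in \gamma(H')$ together with the extended weight function in $\bigO(|V|^{k+5}\log|V|)$ time, and \cref{lm:lzdrnatime} to compute the offset $c_0$ in $\bigO(|H'|)$ time.

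At this point I need to verify the three hypotheses of the assumed oracle on input $v$. The bound $|v| = \bigO(|V|^{k+5}\log|V|)$ follows from \cref{lm:lzdrnasize2}; together with $|v| \geq |u| = \Omega(|V|^k)$ and the choice $k \geq \lceil 3/\delta\rceil$, the inflated alphabet $\Sigma'$ of \cref{def:lzd-rna}, whose size is $|\Sigma| + 4|V| + 4 = \bigO(|V|)$, satisfies $|\Sigma'| = \bigO(|v|^\delta)$. The weights are bounded by $c_0 = \bigO(|V|^2 \cdot |u|) = \poly(|v|)$. Crucially, by \cref{lm:lzdrnasize} we have $|\algname{LZD}(v)| \leq 9|H'| = \bigO(|V|^3) = \bigO(|v|^\delta)$, again using $3/(k+2) \leq \delta$. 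Feeding $v$ to the hypothetical oracle then yields $\WRNA(v)$ in $\bigO(|v|^{\omega-\epsilon}) = \bigO(|V|^{k\omega(1-\epsilon/(2\omega))})$ (resp.\ $\bigO(|V|^{3k(1-\epsilon/6)})$) time, where the absorption of the $5\omega$ (resp.\ $15$) overhead exploits $k \geq \lceil 30/\epsilon\rceil$. Finally, \cref{lm:lzdstructure3} gives $\WRNA(u) = \WRNA(v) - 2c_0$, so we can test $\WRNA(u) \geq \lambda$ and decide whether $G$ has a $3k$-clique, producing the desired contradiction.

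Since all structural work (the construction of $\gamma(G)$, the size analysis of $\algname{LZD}(v)$, and the $\WRNA$ decomposition) has already been done in \cref{lm:lzdstructure3,lm:lzdrnasize,lm:lzdrnatime,lm:lzdrnasize2}, no substantial new combinatorial obstacle remains; the only care required is bookkeeping to ensure that the exponent of $|V|$ in $|v|$ (namely $k+5$, matching the $\alpha$/$\beta$ cases) together with the choice $k \geq \lceil 30/\epsilon\rceil$ is enough to absorb the $\omega(k+5) - k\omega = 5\omega$ (resp.\ $3(k+5) - 3k = 15$) slack into an effective exponent strictly below $\omega k$ (resp.\ $3k$), which is precisely what the inequalities $k\epsilon/2 \geq 5\omega$ and $k\epsilon/2 \geq 15$ guarantee.
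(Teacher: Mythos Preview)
Your proposal is correct and matches the paper's approach exactly: the paper's proof simply says it is analogous to \cref{th:rna-sequential} (hence to \cref{th:rna-global}), substituting \cref{lm:lzdstructure3,lm:lzdrnasize,lm:lzdrnatime,lm:lzdrnasize2} for the corresponding \algname{Sequential} lemmas, which is precisely what you have unpacked. The only cosmetic slip is the bound $c_0 = \bigO(|V|^2 \cdot |u|)$---in the \algname{LZD} construction $c_0$ is a weighted sum over \emph{all} nonterminal expansions of $H'$, so the honest bound is $\bigO(m' \cdot \sum_{X\in V_{H'}}|\expgen{H'}{X}|) = \bigO(|V|^{k+7}\log|V|)$, but this is still $\poly(|v|)$ and the argument goes through unchanged.
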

\begin{proof}
	The proof is analogous to the proof of \cref{th:rna-sequential}, except
	instead of
	\cref{lm:rna-construction-beta,lm:rna-boosting-sequential,lm:wrna-beta},
	we use
	\cref{lm:lzdstructure3,lm:lzdrnasize,lm:lzdrnatime,lm:lzdrnasize2},
	respectively.
\end{proof}

\appendix

\section{Conversion to an Admissible Grammar}\label{sec:convert-admissible}

\begin{lemma}\label{lm:admissible}
  For every SLG $G = (V, \Sigma, R, S)$ such that $|\expgen{G}{S}|
  \geq 2$, there exists an admissible SLG $G' = (V', \Sigma, R', S')$
  satisfying $L(G') = L(G)$, $\sum_{X \in V'} |\expgen{G'}{X}| =
  \bigO(\log |G| \cdot \sum_{X \in V} |\expgen{G}{X}|)$, and $|G'| \leq
  2|G|$. Moreover, given $G$, we can compute $G'$ in $\bigO(|G|)$
  time.
\end{lemma}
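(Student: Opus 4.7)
The plan is to transform $G$ in three stages while carefully tracking grammar size and the sum of expansion lengths. First I would remove from $V$ every nonterminal that does not appear in $\mathcal{T}(S)$: a single DFS from $S$ along edges $X \to Y$ (whenever $Y \in V$ occurs in $\rhs{X}$) identifies the reachable set in $O(|G|)$ time, and deleting the rest can only decrease both $|G|$ and $\sum_X |\exp{X}|$ while preserving $L(G)$.

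Second, I would eliminate $\emptystring$-rules and unit rules by scanning nonterminals in reverse topological order (which exists by the SLG definition, via the partial order $\prec$). I maintain a substitution table $\mu$ mapping each eliminated nonterminal to either $\emptystring$ or a single symbol; for each $X$, I apply $\mu$ in one left-to-right pass over $\rhs{X}$, then inspect the resulting length. If it is $0$ or $1$, I insert $X$ into $\mu$ and drop $X$; otherwise $X$ survives with the updated right-hand side. Because each original symbol of $G$ is touched a constant number of times, this phase runs in $O(|G|)$. The only subtlety is $S$ itself: since $|\exp{S}| \geq 2$, a collapse to length $0$ is impossible, but $\rhs{S}$ might shrink to a single nonterminal $Y$; in that case the bottom-up processing has already ensured $|\rhs{Y}| \geq 2$, so I simply set $\rhs{S} := \rhs{Y}$ (and drop $Y$ if now unused).

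Third, I binarize. For each surviving rule $\rhs{X} = A_1 A_2 \cdots A_k$ with $k \geq 3$, I create a balanced binary tree rooted at $X$ with leaves $A_1, \ldots, A_k$, introducing $k - 2$ fresh internal nonterminals, each with a length-$2$ right-hand side; this costs $O(k)$ per rule and $O(|G|)$ overall. For size, each original rule of length $k \geq 2$ contributes exactly $2(k - 1)$ symbols to $|G'|$, so $|G'| = \sum_X 2(|\rhs{X}|-1) \leq 2|G|$. For expansions, each internal node of the tree built for $X$ expands to a contiguous segment of $\exp{X}$, and the nodes at each of the $O(\log k)$ levels partition $\exp{X}$; hence the new nonterminals contribute $O(\log k \cdot |\exp{X}|)$, and summing yields $\sum_{X \in V'}|\exp{X}| = O(\log|G| \cdot \sum_{X \in V}|\exp{X}|)$. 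Admissibility follows since every rule has length $2$ and every new nonterminal sits under a reachable rule.

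The main obstacle is the $O(|G|)$ time bound for the second stage: chains of unit rules must not cause repeated rescanning, and the substitution applied to $\rhs{X}$ must yield a finalized string whose nonterminal symbols are either already-survived (and hence length-$\geq 2$) or already-eliminated (and hence in $\mu$). This is exactly what the reverse topological ordering guarantees, so a single pass with the replacement table suffices; the rest of the argument is straightforward accounting.
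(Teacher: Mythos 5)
Your proposal is correct and takes essentially the same approach as the paper: prune unreachable nonterminals, collapse short (unit/empty) rules in reverse-topological order, then binarize with balanced trees, with the same size and expansion-length accounting. The only cosmetic difference is that the paper realizes the second stage as in-place edge redirection in a grammar multigraph while you use a resolved substitution table; these are two implementations of the same bottom-up collapse, and both run in $\bigO(|G|)$ time. Your proposal is also slightly more careful in that it explicitly handles $\emptystring$-rules (the paper's pruning step only removes vertices of out-degree exactly $1$ and then asserts $|u|\ge 2$, which implicitly assumes no nonterminal expands to the empty string).
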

\begin{proof}
  Let us assume that $G$ is given using an encoding in which
  nonterminals are identified with consecutive positive integers. The
  construction of $G'$ consists of three steps:
  \begin{enumerate}
  \item Let $V_s \subseteq V$ be the set of nonterminals occurring in
    the parse tree $\mathcal{T}_{G}(S)$ and let $\Sigma_s \subseteq
    \Sigma$ be the set of symbols occurring in $\expgen{G}{S}$. We
    begin by constructing a directed multigraph $D$ containing as
    vertices the set $V_s \cup \Sigma_s$. The set of edges in $D$ is
    defined as follows. For every $X \in V_s$ we add $|\rhsgen{G}{X}|$
    edges into $D$: the $j$th edge connects vertex $X$ to vertex
    $\rhsgen{G}{X}[j]$ and stores the position $j$ as an auxiliary
    attribute. Construction of $D$ takes $\bigO(|G|)$ time. In
    $\bigO(|G|)$ time we then construct its transpose $D'$. Every edge
    in $D$ is linked with its symmetric edge in $D'$.

  \item In the second step, we prune $D$ so that it does not contain
    vertices with out-degree $1$. First, in $\bigO(|G|)$ time we sort
    $D$ topologically. We then scan the list of vertices in
    reverse-topological order. Let $v$ be the current vertex. If the
    out-degree of $v$ is different than $1$, we move onto another
    vertex. Let us thus assume the out-degree of $v$ is 1 and let $v'$
    be the target of the edge from $v$. First, with the help in $D'$,
    we redirect all edges ending in $v$ into $v'$ ($D'$ is updated
    symmetrically). We then delete $v$ from both $D$ and $D'$. During
    the pruning we keep track of the vertex corresponding to
    nonterminal $S$. Whenever such vertex $v$ gets deleted, its target
    $v'$ becomes the new ``starting vertex''. Observe that deleting a
    vertex does not change the out-degree of the remaining vertices,
    and hence cannot create new vertices to be deleted. Since an edge
    is always redirected into a vertex further than the current vertex
    in topological order, no edge will ever be redirected twice. Thus,
    altogether this pruning takes time proportional to the number of
    edges in $D$ which initially is $\bigO(|G|)$.

  \item In the third step, we create the final grammar $G'$. Let $V'_s
    \subseteq V_s$ be the set of nonterminals remaining in $D$ after
    pruning, and let $S' \in V'_s$ be the nonterminal corresponding to
    the ``starting vertex''. Note that since we assumed
    $|\expgen{G}{S}| \geq 2$, it holds $|V'_s| \geq 1$. Consider any
    $X \in V'_s$ and let $u$ be the string obtained obtained by
    enumerating the outgoing edges of $X$ in $D$ in the ascending
    order of their auxiliary ``position'' attribute. Note that $|u|
    \geq 2$. We run the following procedure:
    \begin{itemize}[itemsep=0pt,parsep=0pt]
    \item If $|u| = 2$, then we add the nonterminal $X$ to $G'$, set
      $\rhsgen{G'}{X} = u$, and stop.
    \item Otherwise, we add $k := \lfloor \tfrac{|u|}{2} \rfloor$
      fresh nonterminals $X_1, \dots, X_k$ to $G'$, setting
      $\rhsgen{G'}{X_i} = u[2i-1] u[2i]$, replace the length-$2k$
      prefix of $u$ with $X_1 \dots X_k$, and repeat the procedure.
    \end{itemize}
    Observe that this procedure performs $\lceil \log |u| \rceil$
    rounds and adds exactly $|u| - 1$ new nonterminals into $G'$, each
    with the definition of length $2$. Since the total length of
    strings $u$ over all $X \in V'_s$ is bounded by $|G|$, we thus
    obtain $|G'| \leq 2|G|$. Thus, the final step in the construction
    of $G'$ takes $\bigO(|G|)$ time. Observe also that during the
    above procedure, each update does not change the total expansion
    length of symbols in $u$.  Thus, if the total expansion length of
    symbols in the initial $u$ is $\ell$, then the total expansion
    length of nonterminals created during the whole procedure is
    bounded by $\lceil \log |u| \rceil \cdot \ell$.  Consequently,
    $\sum_{X \in V'}|\expgen{G'}{X}| = \bigO(\log |G| \cdot \sum_{X \in
    V}|\expgen{G}{X}|)$. \qedhere
  \end{enumerate}
\end{proof}

\section{Efficient Implementation of
  \algname{Sequential}}\label{sec:sequential-impl}

\begin{definition}\label{def:irreducible}
  An SLG is \emph{irreducible} if it satisfies the following three
  properties:
  \begin{enumerate}[itemsep=0pt,parsep=0pt]
  \item All non-overlapping pairs of adjacent symbols of the grammar
    are distinct,
  \item Every secondary nonterminal appears twice on the right-hand
    side of the grammar,
  \item No two nonterminals in the grammar have the same expansion.
  \end{enumerate}
\end{definition}

\begin{lemma}[Theorem~1 in~\cite{sequential}]\label{lm:seq-irreducible}
  Every intermediate SLG computed during the execution
  of \algname{Sequential} is irreducible.
\end{lemma}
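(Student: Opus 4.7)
The plan is to prove the three irreducibility conditions simultaneously by induction on the number of steps executed by \algname{Sequential}. The base case is the initial grammar, which contains only the starting nonterminal $S$ with empty definition $\emptystring$; all three conditions hold vacuously (no pairs, no secondary nonterminals, no two nonterminals).

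For the inductive step, I would fix an irreducible grammar $G_{\rm cur}$ and analyze the three operations that constitute one step of \algname{Sequential} (as described in \cref{sec:algs-nonglobal}), showing that each operation preserves the irreducibility invariant when applied to an irreducible grammar. The three sub-steps are: (i) appending to $\rhs{S}$ either a symbol or a nonterminal $N$ matching the longest prefix of the remaining input with $\exp{N}$; (ii) if a length-2 pattern $AB$ now has two non-overlapping occurrences on the right-hand side of the grammar, creating a fresh nonterminal $M$ with $\rhs{M}=AB$ and replacing both occurrences; and (iii) if any nonterminal is used only once, inlining its definition and deleting it. The bulk of the argument is a case analysis showing these moves preserve each of the three properties, with careful attention to the interaction between them.

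The main obstacles I expect are the following. First, for property~1, when sub-step~(ii) creates $M$ and replaces both occurrences of $AB$ by $M$, I must verify that this does not create \emph{new} non-overlapping repeated pairs; the key observation is that the characters surrounding the two occurrences of $AB$ were previously parts of distinct pairs by the inductive hypothesis, and replacing the interior with a fresh symbol $M$ cannot cause two surrounding pairs to coincide (because $M$ was fresh). Sub-step~(iii) inlining is trickier: if we replace a single occurrence of $N$ by $\rhs{N} = CD$, we must check that the newly adjacent pairs at the boundary are not already present elsewhere. This is where the fact that $N$ occurred only once, together with the prior irreducibility, needs to be used. Second, for property~3 (distinct expansions), the critical step is~(i): the fact that we match the longest prefix with some $\exp{N}$, combined with the inductive assumption that all expansions are distinct, prevents us from introducing two nonterminals with equal expansions when creating $M$ in sub-step~(ii)---one would need to argue that if $\exp{M} = \exp{N}$ for an existing $N$, then the prefix-match step~(i) would have chosen $N$ rather than appending symbols, contradicting that the pair $AB$ was just created.

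Since the lemma is credited as Theorem~1 of~\cite{sequential}, the proof essentially reproduces the original argument of Nevill-Manning and Witten, so I would structure it as the above induction, handling the three sub-steps in order and, within each sub-step, verifying the three properties one at a time. The claim about non-overlapping occurrences---namely, that a repeated pair produced after step~(i) has at most two non-overlapping occurrences---follows because a third occurrence would already have violated property~1 in $G_{\rm cur}$ before the update, since appending at most one symbol to $\rhs{S}$ can introduce at most one new pair. This last observation is the place where the inductive hypothesis is used most sharply.
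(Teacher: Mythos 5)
The paper does not prove this lemma; it is stated with a bracket ``Theorem~1 in~\cite{sequential}'' and used as an imported fact (both in the description of \algname{Sequential} in Section~3.2 and in the implementation note of Appendix~B). There is therefore no in-paper proof to compare against. Your inductive plan---base case the trivial grammar, then show each of the three sub-operations (append, abstract a repeated pair, inline a singleton) preserves the three irreducibility conditions---is the right shape and is essentially what the cited source does.

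Two concrete issues. First, a small but worth-fixing attribution error: \algname{Sequential} is due to Kieffer and Yang, not Nevill--Manning and Witten; the latter authored \algname{Sequitur}, which is a different algorithm (treated separately in Section~4.3 of this paper). Second, and more substantively, the argument you gesture at for property~3 does not go through as stated. You say that if $\exp{M}=\exp{N}$ for an existing $N$, then ``the prefix-match step~(i) would have chosen $N$.'' But the pair $AB$ at the end of $\rhs{S}$ straddles the processing boundary: $A$ was already in $\rhs{S}$ before the current step, and step~(i) scans for a prefix of the \emph{unprocessed} suffix, which begins \emph{after} $\exp{A}$. So step~(i) of the \emph{current} step never gets a chance to match $\exp{N}=\exp{A}\exp{B}$. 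A correct argument has to reach back to the earlier step in which $A$ was appended and reason about which nonterminals existed \emph{then}---which in turn requires tracking when $N$ could have been created, making the induction genuinely delicate. Similarly, your sketch flags the boundary-pair problem after inlining in sub-step~(iii) but does not resolve it; one needs to argue that a nonterminal can drop to one occurrence only if it is $A$ or $B$, that its sole remaining occurrence is inside $\rhs{M}$, and that inlining there cannot re-create a pair present elsewhere. These two points are where all the content of the proof lives, and the sketch leaves both open. Since the paper treats the lemma as cited folklore, the gaps are forgivable in context, but the proposal as written would not constitute a self-contained proof.
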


\begin{proposition}\label{pr:sequential-implementation}
  For any $x \in \Sigma^n$, \algname{Sequential} can be implemented
  in $\bigO(n \log n)$ time.
\end{proposition}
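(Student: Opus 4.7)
The plan is to simulate \algname{Sequential} symbol-by-symbol using four data structures maintained on-line:
\begin{itemize}[itemsep=0pt,parsep=0pt]
\item a doubly-linked list $L$ storing the current right-hand side of the start rule, supporting $\bigO(1)$ insertion, deletion, and inspection of adjacent symbols;
\item a hash table $H$ that, for every pair $AB$ of currently adjacent grammar symbols on any right-hand side, stores its non-overlapping occurrences (by \cref{lm:seq-irreducible}, there are at most two);
\item the suffix tree $\mathcal{T}$ of the entire input $x$, constructed once in $\bigO(n)$ time by Ukkonen's algorithm and augmented with a level-ancestor structure supporting $\bigO(1)$ navigation by string depth;
\item a dynamic ``marked ancestor'' structure on $\mathcal{T}$ in which every currently existing secondary nonterminal $N$ corresponds to a mark placed at the (possibly implicit) locus of $\exp{N}$ in $\mathcal{T}$, supporting mark insertions, mark deletions, and deepest-marked-ancestor queries, each in $\bigO(\log n)$ time (for instance, via an Euler-tour representation combined with a balanced BST keyed by preorder position).
\end{itemize}

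With these structures in place, I would implement each step of \algname{Sequential} as follows. The longest-prefix-match step at input position $i$ reduces to a single deepest-marked-ancestor query from the leaf of $\mathcal{T}$ corresponding to the suffix $x[i \dd n]$: the marked ancestor of greatest string depth identifies both the nonterminal $N$ (if any) and $|\exp{N}|$ in $\bigO(\log n)$ time, which determines whether to append a symbol or $N$ to $L$ and by how much to advance the input pointer. Appending and pair-replacement operations via $H$ cost $\bigO(1)$ each. When a new nonterminal $M$ with $\rhs{M}=AB$ is created, we insert its mark by first identifying one occurrence of $\exp{M}$ in $x[1 \dd i-1]$ (maintained by keeping, for every grammar symbol on any right-hand side, a pointer to a chosen position of one of its occurrences in $x$) and then using a level-ancestor query to ascend from the corresponding leaf of $\mathcal{T}$ to string depth $|\exp{M}|$, splitting an edge with a fresh explicit node if the locus is implicit. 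Nonterminal deletion (when a variable drops to a single occurrence) is handled symmetrically by removing the corresponding mark.

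For the running-time analysis, each step of \algname{Sequential} consumes at least one symbol of $x$, so there are at most $n$ steps. By \cref{lm:seq-irreducible}, every intermediate grammar is irreducible, so its size—and hence the total number of nonterminal creations and deletions, together with the total number of pair updates in $H$—is $\bigO(n)$ over the entire execution. Each such operation performs $\bigO(1)$ hash-table lookups, $\bigO(1)$ linked-list updates, and $\bigO(1)$ operations on the marked-ancestor structure, the last being the dominant cost at $\bigO(\log n)$ per operation. Summed over the $\bigO(n)$ steps, this gives the claimed $\bigO(n \log n)$ bound.

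The main obstacle is designing and correctly maintaining the marked-ancestor structure under insertions of possibly implicit loci of $\mathcal{T}$—a new nonterminal may need to be placed strictly interior to an edge, forcing an edge split and a local update of the Euler tour—while simultaneously keeping, for every grammar symbol currently in $L$ or in some right-hand side, a consistent pointer to a $\mathcal{T}$-leaf so that $\exp{M}$ of each new nonterminal can be located in $\bigO(\log n)$ time. Since the total number of implicit-to-explicit conversions is bounded by the number of nonterminal creations ($\bigO(n)$), and the pointer propagation when pairs are merged (creating $M$) or inlined (deleting $N$) touches only $\bigO(1)$ pointers per operation, the book-keeping overhead stays $\bigO(n \log n)$ in total, matching the bound.
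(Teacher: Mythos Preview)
Your proposal is correct and follows essentially the same approach as the paper: build the suffix tree of $x$ once, mark the locus of each secondary nonterminal's expansion, answer the longest-prefix-match step with a deepest-marked-ancestor query, and maintain adjacent pairs via a dictionary/hash table over doubly-linked right-hand sides. The paper cites the marked-ancestor structure of Alstrup--Husfeldt--Rauhe directly, whereas you sketch an Euler-tour plus balanced BST implementation, but the resulting bounds are identical. One small expository point: your sentence ``every intermediate grammar is irreducible, so its size---and hence the total number of nonterminal creations and deletions---is $\bigO(n)$'' does not quite justify the claim (bounding the size at each moment does not by itself bound the total number of create/delete events); the clean argument, which you already implicitly have, is simply that each of the $\le n$ steps creates at most one nonterminal and deletes at most one.
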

\begin{proof}
  Let $n_i$ denote the length of the prefix of $x$ processed
  by \algname{Sequential} after the first $i$ steps. Let $G_i =
  (V_i,\Sigma,R_i,S_i)$ be the grammar for $x[1 \dd n_i]$ produced
  by \algname{Sequential}. In step $i+1$, the algorithm finds the
  longest prefix of $x[n_i + 1 \dd n]$ such that there is a
  nonterminal in $G_i$ with the same expansion and then appends that
  nonterminal to the definition of $S_i$, i.e. $n_{i+1}$ is the
  largest number such that $x[n_i + 1 \dd n_{i+1}] = \expgen{G_i}{N}$
  for some $N \in V_i$ and $N$ is appended to the definition of
  $S_i$. If no such prefix exists, $x[n_i + 1]$ is appended to
  $\rhsgen{G_i}{S_i}$. To find $N$, we maintain a suffix tree of
  $x$. Each time a new nonterminal is introduced, we mark that
  (potentially implicit) node of the suffix tree, i.e. if $N$ is a new
  nonterminal formed with $\expgen{G_i}{N} = x[l \dd r]$ where $l =
  n_{i-1} + 1$ and $r = n_i$, we mark a node in the tree that
  corresponds to the substring $x[l \dd r]$. Note that $x[l \dd r]$
  might not correspond to an explicit node in the suffix tree. We thus
  employ the following strategy. With each explicit node $v$ of the
  suffix tree, we keep a pair $(N_v, d_v)$, where $N_v \in V_i$ and
  $d_v \in \Zp$ such that the node corresponding to
  $\expgen{G_i}{N_v}$ is either $v$ or an implicit node between $v$
  and its parent, $d_v = |\expgen{G_i}{N_v}|$, and $d_v$ is
  maximized. Then, the longest prefix of $x[n_i + 1 \dd n]$ which has
  a nonterminal expanding to it is $\expgen{G_i}{N_v}$, where $v$ is
  the closest marked ancestor of the leaf node corresponding to the
  suffix $x[n_i + 1 \dd n]$ in the suffix tree. The marked ancestor
  problem (both queries and updates) can be solved in $\bigO(\log n)$
  time per operation~\cite{markedAncestor}.
  
  This solves finding the longest prefix subproblem. Now moving on to
  next part of the algorithm, we need to check if after appending a
  new symbol at the end of $S_i$ introduced some non-overlapping pair
  of consecutive nonterminals appears twice in the right hand side of
  the grammar. Since \cref{lm:seq-irreducible} states that no
  non-overlapping pairs of characters appear twice, we can maintain a
  dictionary with all consecutive pairs of characters on the
  right-hand side of $G_i$ as keys and store their unique address as
  well. Formally, let $xy$ be the last 2 characters in $S_i$ ($y$ is
  the character we just appended). If $xy$ has another non-overlapping
  occurrence (we can use a dictionary to lookup this in $\bigO(\log
  n)$ time), we introduce a new nonterminal $N \rightarrow xy$ and
  replace both of the occurrences of $xy$ with $N$. We then remove
  $xy$ and the pairs intersecting them from the dictionary and add $N$
  and its neighbors in the dictionary along with their new
  address. When removing a nonterminal since it has only one
  occurrence on the right-hand side of the grammar, we remove pair
  corresponding to it from the dictionary and add any newly introduced
  pairs. All of these operations on the dictionary can be performed in
  $\bigO(\log n)$ time. To implement deletions and insertions, we
  store every right-hand side of a nonterminal as a doubly-linked
  list. The dictionary stores a pointer to a linked list node for
  every key in it. This let us perform modifications on the grammar in
  $\bigO(1)$ time.
  
  In total, we spend $\bigO(n \log n)$ time.
\end{proof}

\bibliographystyle{plainurl}
\bibliography{paper}

\end{document}